\let\mc\mathcal
\let\mathcal\mc
\let\frak\mathfrak
\def\>{\relax\ifmmode\mskip.666667\thinmuskip\relax\else\kern.111111em\fi}
\def\:{\relax\ifmmode\mskip.333333\thinmuskip\relax\else\kern.0555556em\fi}
\def\?{\relax\ifmmode\mskip-.666667\thinmuskip\relax\else\kern-.111111em\fi}
\def\<{\relax\ifmmode\mskip-.333333\thinmuskip\relax\else\kern-.0555556em\fi}
\def\vsk#1>{\vskip#1\baselineskip}
\def\vv#1>{\vadjust{\vsk#1>}\ignorespaces}
\def\vvn#1>{\vadjust{\nobreak\vsk#1>\nobreak}\ignorespaces}
\def\vvgood{\vadjust{\penalty-500}} \let\alb\allowbreak
\def\fratop{\genfrac{}{}{0pt}1}
\def\satop#1#2{\fratop{\scriptstyle#1}{\scriptstyle#2}}
\let\dsize\displaystyle  \let\ssize\scriptstyle
\let\sssize\scriptscriptstyle 
\let\phan\phantom \let\vp\vphantom \let\hp\hphantom
\def\stackrel#1#2{\mathrel{\mathop{\kern 0pt#2}\limits^{#1}}}
\newbox\tsbox
\def\itemflat#1{\par\hangindent2\parindent\indent\kern\parindent
\llap{#1\enspace}\ignorespaces}
\font\cyr=wncyr10 scaled 1200
\font\scyr=wncyr7 scaled 1200
\font\sscyr=wncyr5 scaled 1200
\def\Bcyr{{\mathchoice{\hbox{\cyr B}}{\hbox{\cyr B}}{\hbox{\scyr B}}
{\hbox{\sscyr B}}}}
\let\Medskip\medskip
\def\medskip{\par\Medskip}
\let\Bigskip\bigskip
\def\bigskip{\par\Bigskip}
\let\Maketitle\maketitle
\def\maketitle{\Maketitle\thispagestyle{empty}\let\maketitle\empty}
\let\mc\mathcal
\let\nc\newcommand
\newtheorem{thm}{Theorem}[section]
\newtheorem{cor}[thm]{Corollary}
\newtheorem{lem}[thm]{Lemma}
\newtheorem{prop}[thm]{Proposition}
\newtheorem{conj}[thm]{Conjecture}
\newtheorem{defn}[thm]{Definition}
\numberwithin{equation}{section}
\theoremstyle{definition}
\newtheorem{rem}[thm]{Remark}
\newtheorem*{example}{Example}
\def\beq{\begin{equation}}
\def\eeq{\end{equation}}
\def\be{\begin{equation*}}
\def\ee{\end{equation*}}
\nc{\bea}{\begin{eqnarray*}}
\nc{\eea}{\end{eqnarray*}}
\nc{\bean}{\begin{eqnarray}}
\nc{\eean}{\end{eqnarray}}
\let\al\alpha
\let\bt\beta
\let\gm\gamma
\let\Gm\Gamma
\let\dl\delta
\let\Dl\Delta
\let\ka\kappa
\let\la\lambda
\let\pho\phi
\let\phi\varphi
\let\si\sigma
\let\Si\Sigma
\let\Ups\Upsilon
\let\Om\Omega
\let\der\partial
\let\Hat\widehat
\let\ox\otimes
\let\Tilde\widetilde
\let\bra\langle
\let\ket\rangle
\let\ge\geqslant
\let\le\leqslant
\let\on\operatorname
\let\bi\bibitem
\let\bs\boldsymbol
\def\C{{\mathbb C}}
\def\Z{{\mathbb Z}}
\def\R{{\mathbb R}}
\def\MM{{\mathbb M}}
\def\PP{{\mathbb P}}
\def\CP{\C\:\PP}
\def\Bc{{\mc B}}
\def\Ec{{\mc E}}
\def\F{{\mc F}}
\def\Hc{{\mc H}}
\def\Ic{{\mc I}}
\def\Jc{{\mc J}}
\def\Kc{{\mc K}}
\def\Mc{{\mc M}}
\def\Oc{{\mc O}}
\def\Zc{{\mc Z}}
\def\Sg{\mathfrak S}
\def\+#1{^{\{#1\}}}
\def\lsym#1{#1\alb\dots\relax#1\alb}
\def\lc{\lsym,}
\def\lox{\lsym\ox}
\def\End{\on{End}}
\def\Res{\on{Res}}
\def\tr{\on{tr}}
\def\cirs{{\raise.2ex\hbox{$\sssize\circ$}}}
\def\buls{{\raise.2ex\hbox{$\sssize\bullet$}}}
\def\zeno{{\raise.1ex\hbox{$\sssize\varnothing$}}}
\def\zenoo{{\raise.1ex\hbox{$\sssize\varnothing\<,0$}}}
\def\Hss{{\raise.2ex\hbox{$\sssize H$}}}
\def\pls{{\raise.2ex\hbox{$\sssize+$}}}
\def\dvs{{\sssize\div}}
\def\lzs{{\sssize\lozenge}}
\def\dmd{{\raise.4ex\hbox{$\sssize\diamond$}}}
\def\trg{{\raise.3ex\hbox{$\sssize\vartriangle$}}}
\def\rst{{\mathchoice{\star}{\star}{\raise.2ex\hbox{$\ssize\star$}}
{\raise.2ex\hbox{$\sssize\star$}}}}
\def\Kon{\Kc^{\raise.4ex\hbox{$\sssize\:\zz=\bs1$}}}
\def\pnd{{\raise.2ex\hbox{$\sssize\#\<$}}}
\def\0{{\mathchoice{\raise.16ex\rlap{$\>\wr$}0}{\raise.16ex\rlap{$\>\wr$}0}
{\raise.1282ex\rlap{$\ssize\>\wr$}0}{\raise.09ex\rlap{$\sssize\>\wr$}0}}}
\def\1{\text{\slshape\bfseries1}}
\def\Srs{\mathscr S}
\def\Srsl{\Srs_\bla}
\def\Srskl{\Srs^{\sssize K}_{\?H_\bla}}
\def\Srsol{\Srs^{\:\cirs}_{\?\bla}}
\def\Hrs{\mathscr H}
\def\Hrsl{\Hrs^{}_\bla}
\def\Hco{\Hc^{\raise.2ex\hbox{$\sssize\oplus$}}}
\def\mmk{{\raise.2ex\hbox{$\sssize\MM_\ka$}}}
\def\Hcs{{\raise.2ex\hbox{$\sssize\Hc$}}}
\def\Kcs{{\raise.2ex\hbox{$\sssize\Kc$}}}
\def\Ocs{{\raise.2ex\hbox{$\sssize\Oc$}}}
\def\shat#1{{\lower.#1ex\hbox{$\sssize\Hat{\hp v}$}}}
\def\cirsh{\sssize\Hat\circ}
\def\Srhl{\Srs_{\!\<\bla}^{\:\smash{\shat2}\<}}
\def\Srhol{\Srs_{\!\<\bla}^{\>\smash{\cirsh}\?}}
\def\Srol{\Srs_{\!\<\bla}^{\:\Ocs}}
\def\Srolh{\Srs_{\!H_\bla}^{\:\Ocs}}
\def\Sroll{\Srs_{\!\<\bla}^{\:\Ocs_{\!\<L}}}
\def\aa{a,\:a}
\def\ab{a,\:b}
\def\ba{b,\:a}
\def\ii{i,\:i}
\def\ij{i,\:j}
\def\ik{i,\:k}
\def\ji{j,\:i}
\def\jj{j,\:j}
\def\jl{j,\>l}
\def\ki{k,\>i}
\def\kl{k,\>l}
\def\IJ{I\<,\:J}
\def\JK{J,\:K}
\let\ga\gamma
\let\Ga\Gamma
\def\Cxs{\C^\times}
\def\bla{{\bs\la}}
\def\Il{{\Ic_{\bla}}}
\def\Fla{\F_\bla}
\def\tfl{{T^*\?\Fla}}
\def\Upsl{\Ups_{\!\bla}}
\def\et{\tilde e}
\def\acuv#1{\acute{#1}\vp{#1}}
\def\gmd{\,\acute{\!\gm}\vp\gm}
\def\GGd{\acute\GG\vp\GG}
\def\fdd{\acuv f}
\def\hdd{\acuv h}
\def\Pdd{\acuv P}
\def\xdd{\acuv x}
\def\zdd{\acuv z}
\def\zzd{\>\acute{\?\zz}\vp\zz}
\def\mkh{\mu^{\sssize K}_{H_\bla}}
\def\mko{\mu^{\sssize K}_{\sssize\!\Hrs^{}_{\<\bla}}}
\def\mho{\mu^{\sssize H}_{\sssize\!\Hrs^{}_{\<\bla}}}
\def\muk{\mu^{\sssize\:\Kc}_\bla}
\def\muko{\mu^{\sssize\:\Kc^\circ}_\bla}
\def\muht{\mu^{\:\smash{\shat5}}_\bla}
\def\muho{\mu^{\:\smash{\cirsh}}_\bla}
\def\muo{\mu^{\sssize\Oc}_\bla}
\def\muoh{\mu^{\sssize\Oc}_{H_\bla}}
\def\muol{\mu^{\sssize\Oc_{\?L}}_\bla}
\def\pii{\pi\sqrt{\<-1}}
\def\piit{\pi\:\sqrt{\<-1}}
\def\xxx{x_1\lc x_n}
\def\yyy{y_1\lc y_n}
\def\zzz{z_1\lc z_n}
\def\Imi{I^{\:\min}}
\def\Imil{\Imi_\bla}
\def\Sla{S_{\la_1}\!\<\lsym\times S_{\la_N}}
\def\Bt{\Tilde B}
\def\Ct{\Tilde C}
\def\Gt{\Tilde G}
\def\Kt{\Tilde K}
\def\Rb{\bar R}
\def\Qb{\bar Q}
\def\Ao{A^{\?\cirs}}
\def\Aob{\hbox{$\Ao$}}
\def\Co{C^{\:\cirs}}
\def\Cto{\Ct^{\:\cirs}}
\def\CH{C^{\sssize H}}
\def\FH{F^{\sssize H}}
\def\Go{G^{\:\cirs}}
\def\Gp{G^{\:\pls}}
\def\Gtp{\Gt^{\:\pls}}
\def\GH{G^{\sssize H}}
\def\Jco{\Jc^{\:\cirs}}
\def\Jch{\Jc^{\<\sssize H}}
\def\Jchh{\hat\Jc{}^{\<\sssize H}}
\def\Jcht{\tilde{\Jc\;\,}\!\!\?{}^{\<\sssize H}}
\def\Ko{K^\cirs}
\def\Kco{\Kc^{\:\cirs}}
\def\KH{K^{\sssize H}}
\def\Lo{L^{\?\cirs}}
\def\Mo{M^{\:\cirs}}
\def\PHd{\Pdd^{\sssize H}}
\def\Ro{R^{\:\cirs}}
\def\Uo{U^\cirs}
\def\Wo{W^\cirs}
\def\WWo{\WW^\cirs}
\def\Xo{X^\cirs}
\def\Xt{\Tilde X}
\def\sh{\hat s}
\def\Mt{{\,\Tilde{\!M}}}
\def\Wto{\:\Tilde{\<W\<}\:^\cirs}
\def\Xt{\Tilde X}
\def\Hd{\dsize H}
\def\Wh{{\:\Hat{\<W\<}\:}}
\def\Who{\Wh{\vp W}^\cirs}
\def\Psh{\Hat\Psi}
\def\Psho{\Hat\Pso}
\def\Pshoi{\Psho{}^{\raise.5ex\hbox{$\sssize\<-1\?$}}}
\def\Pht{\Tilde\Phi}
\def\Omo{\Om^\cirs}
\def\Pho{\Phi^\cirs}
\def\Pst{\Tilde\Psi}
\def\Pso{\Psi^\cirs}
\def\Psho{\Psh^\cirs}
\def\Psb{\bar\Psi}
\def\Psbo{\Psb^\cirs}
\def\Psp{\Psi^\pnd}
\def\Pspt{\Tilde\Psi^\pnd}
\def\Pspo{\Psi^\buls}
\def\Psd{\Psi^\dmd}
\def\Psdo{\Psi^\trg}
\def\Psz{\Psi^{\:\0}}
\def\Pszo{\Psi^{\zeno\:}}
\def\Pszoo{\Psi^{\zenoo\:}}
\def\Wb{\,\,\smash{\overline{\!\!W\!\?}}\;\vp W}
\def\nobla{\nabla^{\:\cirs}}
\def\Fppm_#1{F_{\!{\sssize\#},#1}}
\def\Bcyo{\Bcyr^{\<\cirs}}
\def\Bcyh{\Bcyr^{\<\sssize H}}
\def\Bcyho{\Bcyr^{\<\sssize H\!,0}}
\def\cv/{\raise.6ex\hbox{\tiny CV}}
\def\bul{\mathbin{\raise.2ex\hbox{$\sssize\bullet$}}}
\def\intt{\mathchoice
{\mathop{\raise.2ex\rlap{$\,\,\ssize\backslash$}{\intop}}\nolimits}
{\mathop{\raise.3ex\rlap{$\,\sssize\backslash$}{\intop}}\nolimits}
{\mathop{\raise.1ex\rlap{$\sssize\>\backslash$}{\intop}}\nolimits}
{\mathop{\rlap{$\sssize\:\backslash$}{\intop}}\nolimits}}
\def\GZ/{Gelfand-Zetlin}
\def\KZ/{{\slshape KZ\/}}
\def\qKZ/{{\slshape qKZ\/}}
\def\qKZB/{{\slshape qKZB\/}}
\def\XXX/{{\slshape XXX\/}}
\def\XXZ/{{\slshape XXZ\/}}
\def\EE{{\bs E}}
\def\FF{{\bs F}}
\def\kk{{\bs k}}
\def\lb{{\bs l}}
\def\mb{{\bs m}}
\def\pp{{\bs p}}
\def\qq{{\bs q}}
\def\rr{{\bs r}}
\def\ss{{\bs s}}
\def\TT{{\bs t}}
\def\xx{{\bs x}}
\def\yy{{\bs y}}
\def\zz{{\bs z}} 
\def\GG{{\bs\Ga}}
\def\glN{{\frak{gl}_N}}
\def\Sym{\on{Sym}}
\def\St{\mathrm{Stab}}
\def\Stop{\St^{\sssize\<\mathrm{o\<p}}}
\def\Dlv#1{\Dl^{\!\raise.14ex\hbox{$\ssize#1$}}}
\def\Dlx{\Dlv{\xx}}
\def\Dly{\Dlv{\yy}}
\def\Dlz{\Dlv{\zz}}
\def\Cnn{(\C^N)^{\ox\:n}}
\def\Cnnl{\Cnn_{\:\bla}}
\def\Ctw{\C^{\:2}}
\def\Ctn{(\Ctw)^{\ox\:n}}
\def\Ctnl{\Ctn_{\:\bla}}
\def\WW{{\check W}}
\begin{document}

\hrule width0pt
\vsk->

\title[Landau--\:Ginzburg mirror for a partial flag variety]
{Landau--\:Ginzburg mirror, quantum differential
\\[2pt]
equations and \qKZ/ difference equations
\\[2pt]
for a partial flag variety}

\author[Vitaly Tarasov and Alexander Varchenko]
{Vitaly Tarasov$\>^\circ$ and Alexander Varchenko$\>^\star$}

\maketitle

\begin{center}
{\it $^{\star}\<$Department of Mathematics, University
of North Carolina at Chapel Hill\\ Chapel Hill, NC 27599-3250, USA\/}



\vsk.5>
{\it $\kern-.4em^\circ\<$Department of Mathematical Sciences,
Indiana University\,--\>Purdue University Indianapolis\kern-.4em\\
402 North Blackford St, Indianapolis, IN 46202-3216, USA\/}

\end{center}

{\let\thefootnote\relax
\footnotetext{\vsk-.8>\noindent
$^\circ\<${\sl E\>-mail}:\enspace vtarasov@iupui.edu\>, 
supported in part by Simons Foundation grants \rlap{430235, 852996}
\\
$^\star\<${\sl E\>-mail}:\enspace anv@email.unc.edu\>,
supported in part by NSF grants DMS-1665239, DMS-1954266}}

\vsk>
{\leftskip3pc \rightskip\leftskip \parindent0pt \Small
{\it Key words\/}:
Dynamical differential equations, \qKZ/ difference equations,
quantum differential equations,
$q$-
hyper\-geometric solutions
\vsk.6>
{\it 2010 Mathematics Subject Classification\/}: 14N35, 53D45, 14D05, 33C70
\par}

\begin{abstract}
We consider the system of quantum differential equations for a partial flag
variety and construct a basis of solutions in the form of multidimensional
hypergeometric functions, that is, we construct a Landau\:--\:Ginzburg mirror
for that partial flag variety. In our construction, the solutions are labeled
by elements of the $\>K\?$-theory algebra of the partial flag variety.

To establish these facts we consider the equivariant quantum differential
equations for a partial flag variety and introduce a compatible system of
difference equations, which we call the \qKZ/ equations. We construct a basis of
solutions of the joint system of the equivariant quantum differential equations
and \qKZ/ difference equations in the form of multidimensional hypergeometric
functions. Then the facts about the non-equivariant quantum differential
equations are obtained from the facts about the equivariant quantum differential
equations by a suitable limit.

Analyzing these constructions we obtain a formula for the fundamental Levelt
solution of the quantum differential equations for a partial flag variety.
\end{abstract}

{\small\tableofcontents\par}

\setcounter{footnote}{0}
\renewcommand{\thefootnote}{\arabic{footnote}}

\section{Introduction}
\label{intro}

The genus zero Gromov\:--\<Witten invariants of a partial flag variety $\,V\<$
answer enumerative questions about rational curves in $\,V\<$ and are put
together in various ways to define rich mathematical structures such as the
quantum cohomology algebra and the system of quantum differential equations.
Those structures are part of the so\:-called \>`$\!A$-model' of $V$.
Landau\:--\:Ginzburg mirror symmetry seeks to describe these structures in terms
of a mirror dual \>`$\<B$-model' associated with $\>V$. The way originated by
A.\,Givental in \cite{Gi1,Gi2} seeks to encode the data from the $\<A$-model by
oscillating integrals of a superpotential function $\>W$, which is defined on
a `mirror dual' affine variety $\>V^o$. In particular, these oscillating
integrals are supposed to give all solutions of the system of quantum
differential equations.

\vsk.2>
This Landau\:--\:Ginzburg mirror symmetry has been established by A.\,Givental
for full flag varieties and projective spaces in \cite{Gi1, Gi2, GKi}\:.
The construction for Grassmannians has been performed by R.\>J.\,Marsh and
K.\,Rietsch in \cite{MR}\:. A standard problem, related to generalizations of
Givental's approach, is to check that the number of critical points of
the superpotential equals the dimension of the cohomology algebra of $V$, and
then check that the oscillating integrals, which are intrinsically labeled
by the critical points, indeed generate a basis of solutions of the system of
quantum differential equations. While the first part is of algebraic nature,
the second part is of global topological nature (to determine integration cycles
exiting critical points)\:. For example, in \cite{MR} a superpotential for
a Grassmannian was constructed, and it was shown that the number of its critical
points is the correct one, but the part that the oscillating integrals indeed
give a basis of solutions was not addressed.

\vsk.2>
In this paper, we suggest a new `hypergeometric Landau\:--\:Ginzburg mirror
symmetry model' for a partial flag variety $\>V\<$ and construct the full set of
solutions of the system of quantum differential equation of $\>V\<$ in the form
of multidimensional hypergeometric functions. In these functions the role of
Givental's exponential of the superpotential is played by what we call the
master function, which is the product of Gamma functions multiplied by the
exponential of a linear function, and the role of the mirror dual $\>V^o\<$ is
played by the complement in an affine space to the set of all poles of the
product of Gamma functions composing the master function. In particular, our
$\>V^o\<$ {\it is not\/} an algebraic variety, but a complex analytic variety.

\vsk.2>
It is interesting to note that in our construction, the solutions of the system
of quantum differential equations of $\>V\<$ are naturally labeled by elements
of the $\>K\?$-theory algebra of $\>V$. In particular, that observation
suggests that the monodromy and Stokes phenomenon of the system of quantum
differential equations of $\>V\<$ may be described in terms of
the $\>K\?$-theory algebra of $\>V$, in accordance with the philosophy
of B.\,Dubrovin, see \cite{D1, D2, CDG, TV6, TV7, TV2, CV}.

\vsk.2>
Consider the example of the $\,m\:$-\:dimensional complex projective space
$\>V\?=\CP^{\>m}$. In the body of the paper, this example corresponds
\vvn.1>
to the case $\,n=m+1\,$, $\>N=2\,$, $\,\bla=(1,m)\,$. The cohomology algebra
is $\,\Hd^*(V;\C)=\C[x]\big/\bra\:x^{m+1}\:\ket\,$, where $\,x\,$ is the first
Chern class of the tautological line bundle. The quantum multiplication $\,*_p$
on $\,\Hd^*(V;\C)\>$ depends on a parameter $\,p\,$,
\vvn-.1>
\be
x^i\?*_px^j=\:x^{i+j}\quad \text{for}\;\;i+j\le m\,,\qquad
x^i\?*_px^j=\:p\>x^{i+j-m-1}\quad \text{for}\;\;i+j>m\,,\kern-2em
\ee
$i,j=0\lc m\,$. The parameter $\,p\,$ corresponds to $\,p_2/\<p_1$ in the body
\vvgood
of the paper. The quantum differential equation is
\vvn-.4>
\beq
\label{DE1}
-\:\ka\:p\>\frac{\der I}{\der p}\,=\,x*_p\<I\,,
\vv.2>
\eeq
where $\,\ka\,$ is the parameter of the differential equation and $\,I\,$ is
\vvn.1>
the unknown $\,\Hd^*(V;\C)\:$-valued function. The $\>K\?$-theory algebra is
$\,K(V;\C)=\C[X,X^{-1}]\big/\<\bigl\bra(X\?-1)^{m+1}\:\bigr\ket\,$, where
$\>X$ is the class of the tautological line bundle.

\vsk.3>
The $\,\Hd^*(V;\C)\:$-valued weight function is
$\;W(t,x)=\sum_{\:i=0}^{\:m}\:t^{m-i}\:x^i\:$. The master function is
\vvn.3>
\be
\Phi(t,p,\ka)\,=\,(\ka^{m+1}\!/p\:)^{t/\?\ka}\>
\bigl(\:\Gm(\:t/\<\ka)\bigr)^{m+1}\:.\kern-1em
\vv.3>
\ee
For a univariate Laurent polynomial $\,P(X)\,$, define
\vvn.2>
\be
\Psi_P(p,\ka)\,=\,\ka^{-1}
\sum_{r=0}^\infty\,\Res_{\>t\>=\:-\:r\ka\>}\bigl( 
\Phi(t,p,\ka)\>P(e^{\:2\:\pii\;t/\?\ka})\>W(t,x)\bigr)\,.\kern-1.4em
\vvgood
\ee
The function $\,\Psi_P(p,\ka)\,$ depends only on the class of $\,P\:$ in
$\,K(V;\C)\>$. Alternatively, $\,\Psi_P(p,\ka)\,$ can be written as
an integral over an appropriate contour $\,C\>$ encircling the poles of
$\,\Phi(t,p,\ka)\,$ in the positive direction,
\vvn.1>
\be
\Psi_P(p,\ka)\,=\,\frac1{2\:\piit}\,\int_{\:C}
\Phi(t,p,\ka)\>P(e^{\:2\:\pii\;t/\?\ka})\>W(t,x)\,d\:t\,.\kern-1em
\vv.3>
\ee
For instance, one can take the parabola
$\,C=\,\{\,\ka\>\bigl(1-s^2\?+s\>\sqrt{\<-1}\,\bigr)\ \,|\ \,s\in\R\,\}\,$.

\begin{thm}
\label{th1}
For any Laurent polynomial $\,P$, the $\:\Hd^*(V)$-\:valued function
$\,\Psi_P(p,\ka)\:$ is an entire function of $\,\:\log\:p\>$ that solves
\vv.06>
the quantum differential equation \eqref{DE1}\:. If the classes of Laurent
polynomials $\,P_0\lc P_m$ give a basis of $\;K(V;\C)\>$, then the functions
$\,\Psi_{P_0}(p,\ka)\lc\Psi_{P_m}(p,\ka)\>$ give a basis of solutions of
the quantum differential equation.
\end{thm}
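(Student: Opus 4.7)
I would verify in turn the three claims: (i) $\Psi_P$ is entire in $\log p$, (ii) $\Psi_P$ solves \eqref{DE1}, and (iii) $[P]\mapsto\Psi_P$ descends to a linear bijection from $K(V;\C)$ onto the solution space of \eqref{DE1}. For (i), write $\Phi(t,p,\ka)=e^{Lt/\ka}\Gm(t/\ka)^{m+1}$ with $L=\log(\ka^{m+1}/p)$. The residue at $t=-r\ka$ is bounded by a constant times $(p/\ka^{m+1})^r/(r!)^{m+1}$ times a polynomial in $L$ of degree at most $m$; the factorial decay yields uniform convergence on compact sets in $L$, and hence $\Psi_P$ is entire in $\log p$.

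For (ii), I differentiate under the integral sign. A direct computation shows $-\ka p\:\der_p\Phi=t\,\Phi$, so
\be
-\ka p\:\der_p\Psi_P\,=\,\frac1{2\piit}\int_C t\,\Phi(t,p,\ka)\,P(e^{2\pii t/\ka})\,W(t,x)\,dt\,.
\ee
Expanding $W$ and applying the quantum product rules yields the polynomial identity $tW(t,x)=x*_p W(t,x)+t^{m+1}-p$ in $H^*(V;\C)[t]$, so it remains to show that $\int_C\Phi\,P(e^{2\pii t/\ka})(t^{m+1}-p)\,dt=0$. The key shift identity
\be
p\,\Phi(t+\ka,p,\ka)\,=\,t^{m+1}\,\Phi(t,p,\ka)
\ee
follows from $\Gm(u+1)=u\Gm(u)$; combined with the $\ka$-periodicity $P(e^{2\pii(t+\ka)/\ka})=P(e^{2\pii t/\ka})$ and the substitution $s=t+\ka$, it converts the $t^{m+1}$-integral over $C$ into $p$ times the same integrand over the shifted contour $C+\ka$. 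Since the crescent strip between $C$ and $C+\ka$ contains no pole of $\Phi$ (all poles lie on the non-positive real axis, away from the strip), the two contours are homologous and the two pieces cancel.

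For (iii), observe that the residue at the order-$(m+1)$ pole $t=-r\ka$ depends on $P(e^{2\pii t/\ka})$ only through its Taylor coefficients there up to order $m$; since $e^{-2\pii r}=1$, these are the Taylor coefficients of $P(X)$ at $X=1$ up to order $m$, which encode the class of $P$ in $\C[X,X^{-1}]/\bra(X-1)^{m+1}\ket=K(V;\C)$. Hence $[P]\mapsto\Psi_P$ is a well-defined linear map between two $(m+1)$-dimensional spaces, and injectivity suffices. Taking the basis $P_j(X)=(X-1)^j$, the function $P_j(e^{2\pii t/\ka})$ vanishes to order $j$ at each $t=-r\ka$; a residue computation at $t=0$ shows that the coefficient of $x^m$ in $\Psi_{P_j}(p,\ka)$ is asymptotic to $(2\pii)^j L^{m-j}/(m-j)!$ as $L\to\infty$, with errors of lower order in $L$ plus an $O(e^{-L})$ contribution from the $r\ge1$ tail. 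Since the leading polynomials $L^{m-j}/(m-j)!$ are linearly independent for $j=0,\ldots,m$, the functions $\Psi_{P_j}$ are linearly independent and therefore form a basis of the solution space.

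The main obstacle is the asymptotic bookkeeping in (iii): one must carefully execute the multinomial expansion of the $(m-j)$-th derivative at $s=0$ of the analytic factor $e^{Ls/\ka}\Gm(s/\ka+1)^{m+1}(P_j(e^{2\pii s/\ka})/(2\pii s/\ka)^j)W(s,x)$ and isolate the leading $L^{m-j}x^m$ term. Once this triangular pattern is in hand, Vandermonde linear independence follows at once and identifies the hypergeometric basis with a $K$-theoretic basis of the Levelt fundamental solution at $p=0$.
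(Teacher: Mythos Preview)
Your argument is correct and takes a genuinely different, more elementary route than the paper. The paper does not prove Theorem~\ref{th1} directly: it deduces it from Proposition~\ref{PsiPsolo} (entire solutions) and Proposition~\ref{mkoiso} (the basis statement), which in turn rest on the full machinery developed in the body --- the joint \qKZ/--dynamical system for arbitrary partial flag varieties, obtained as an $h\to\infty$ limit from the cotangent\:-bundle case (Theorems~\ref{thm cy}, \ref{thmcyo}), together with the exact determinant formula of Theorem~\ref{detYo} (i.e.\ Theorem~\ref{th2}) specialized at $\zz=0$. Your approach is by contrast self\:-contained and specific to $\CP^{\>m}$: direct factorial bounds for (i), the one\:-line shift identity $t^{m+1}\Phi(t)=p\,\Phi(t+\ka)$ for (ii), and leading\:-order asymptotics as $L\to\infty$ for (iii). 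What the paper's route buys is uniformity over all $\Fla$ and the exact determinant, which feeds into the later monodromy and Stokes analysis; what your route buys is transparency and independence from the Yangian limit.

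One small point on (ii): the contour\:-shift step ``$C$ and $C+\ka$ are homologous'' also requires the integrand to decay at the ends of the parabola, which you do not address. A cleaner way to close this is to work with the residue\:-sum definition rather than the integral: since $t^{m+1}\Phi(t)=p\,\Phi(t+\ka)$ is analytic at $t=0$, the $r=0$ residue of $t^{m+1}\Phi P$ vanishes, and for $r\ge1$ the substitution $s=t+\ka$ together with $\ka$\:-periodicity of $P(e^{2\pii\,t/\ka})$ gives $\Res_{t=-r\ka}\bigl(t^{m+1}\Phi P\bigr)=p\,\Res_{s=-(r-1)\ka}\bigl(\Phi P\bigr)$; the sum over $r\ge1$ then telescopes against $p\sum_{r\ge0}\Res_{t=-r\ka}\Phi P$, giving the desired cancellation without any appeal to tail decay.
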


Theorem \ref{th1} follows from Proposition \ref{mkoiso},
see also Proposition \ref{PsiPsolo}. The fact that solutions
$\,\Psi_{P_0}(p,\ka)\lc\Psi_{P_m}(p,\ka)\>$ of Theorem \ref{th1}
form a basis follows from the following determinant formula.

\begin{thm}
\label{th2}
We have
\vvn.3>
\be
\det\:\biggl(\>\sum_{r=0}^\infty\,\Res_{\>t\:=\:-\:r\ka}\bigl(t^{\:i}\:
e^{-\:
2\:\pii\,\:j\:t/\?\ka}\,\Phi(t,p,\ka)\bigr)\!\biggr)_{\!\ij=\:0}^{\!m}\!=\,
\bigl(2\:\piit\,\bigr)^{m\:(m+1)\</2}\>\ka^{\:(m+1)(m+2)\</2}\:. 
\kern-.6em
\vv.2>
\ee
\end{thm}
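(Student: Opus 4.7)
The plan is to establish that the determinant is independent of $p$ and then evaluate it in the limit $p\to 0^+$, where only the residue at $t=0$ contributes and the resulting polynomial matrix has an anti-triangular/Vandermonde factorization with determinant $(2\:\pii)^{m(m+1)/2}$.

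For the $p$-independence: by Theorem~\ref{th1}, each of the $m+1$ functions $\Psi_{P_j}$ with $P_j(X)=X^{-j}$, $j=0,\dots,m$, solves \eqref{DE1}. Writing $\Psi_{P_j}=\sum_i C_{ij}x^i$, the columns of $C=(C_{ij})$ satisfy the vector form $-\ka p\,\der_p\vec C_{\cdot j}=M(p)\vec C_{\cdot j}$, where $M(p)$ is the matrix of $x*_p$ in the basis $(1,x,\dots,x^m)$. Since $x*_p x^i=x^{i+1}$ for $i<m$ and $x*_p x^m=p$, the diagonal of $M(p)$ is zero, so $\tr M(p)=0$ and the Abel--Liouville identity forces $\det C$ to be $p$-constant. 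Comparison of residue sums gives $C_{ij}=\ka^{-1}A_{m-i,j}$ where $A_{ij}$ denotes the $(i,j)$-entry in the claim; a row reversal yields $\det A=(-1)^{m(m+1)/2}\ka^{m+1}\det C$, which is $p$-independent.

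For the asymptotic: substituting $u=t/\ka$ gives $A_{ij}=\ka^{i+1}B_{ij}(q)$ with $q=\ka^{m+1}/p$ and
\be
B_{ij}(q)\,=\,\sum_{r=0}^\infty\Res_{u=-r}\bigl(u^i e^{-2\:\pii\:j\:u}q^u\Gm(u)^{m+1}\bigr)\,.
\ee
Factoring $\ka^{i+1}$ out of row $i$ contributes $\prod_{i=0}^m\ka^{i+1}=\ka^{(m+1)(m+2)/2}$, which matches the $\ka$-factor in the claim. It remains to show $\det B(q)=(2\:\pii)^{m(m+1)/2}$. As $q\to+\infty$, the residue at $u=-r$ for $r\ge 1$ equals $q^{-r}$ times a polynomial in $\log q$ and vanishes; only the $u=0$ residue survives:
\be
\Res_{u=0}\bigl(u^i e^{-2\:\pii\:j\:u}q^u\Gm(u)^{m+1}\bigr)\,=\,\bigl[u^{m-i}\bigr]\bigl(e^{b_j u}\Gm(u+1)^{m+1}\bigr)\,,\quad b_j=\log q-2\:\pii\:j\,.
\ee

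This is a polynomial of degree $m-i$ in $b_j$. Factor the resulting matrix as $LV$ with $V_{lj}=b_j^l$ and $L_{il}=G_{m-i-l}/l!$ for $l\le m-i$ (zero otherwise), where $G_k=[u^k]\Gm(u+1)^{m+1}$ and $G_0=1$. Then $L$ is anti-triangular with anti-diagonal entries $1/(m-i)!$, so $\det L=(-1)^{m(m+1)/2}/\prod_{k=0}^m k!$; the Vandermonde $V$ has $\det V=\prod_{i<j}(b_j-b_i)=(-2\:\pii)^{m(m+1)/2}\prod_{k=0}^m k!$, using $b_j-b_i=-2\:\pii\:(j-i)$ and $\prod_{i<j}(j-i)=\prod_{k=0}^m k!$. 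Their product is $\det(LV)=(2\:\pii)^{m(m+1)/2}$, independent of $q$. Combined with the $p$-constancy of $\det A$, this proves the claim. The main technical steps are the Abel--Liouville application (relying on $\tr(x*_p)=0$ for $\CP^m$) and the precise Vandermonde calculation with the $2\:\pii$ factor; the vanishing of the $r\ge 1$ residues in the limit is a straightforward growth estimate.
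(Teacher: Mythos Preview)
Your proof is correct and follows essentially the same strategy as the paper: the paper derives Theorem~\ref{th2} as the $\zz=0$ specialization of formula~\eqref{detPso2} (itself the $\bla=(1,n-1)$ case of Theorem~\ref{detYo}/\ref{thmdeto}), whose proof is precisely the Abel--Liouville argument on the trace of the dynamical Hamiltonians followed by evaluation in the limit $p_{i+1}/p_i\to 0$. Your explicit $LV$ factorization of the leading residue matrix plays the role that the triangularity of the weight functions (Lemma~\ref{WIzo}) and the Vandermonde step before~\eqref{detPso2} play in the paper; the only point worth a remark is that since $B^0_{ij}$ grows like $(\log q)^{m-i}$, the passage from $\det(B^0+R)$ to $\det B^0$ relies on each cross term carrying a factor $q^{-1}(\log q)^M$, which indeed goes to zero.
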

Theorem \ref{th2} follows from Theorem \ref{detYo},
see also formula \eqref{detPso2}\;.

\vsk.3>
Notice that in \cite{Gu}\:, D.\,Guzzetti considers a scalar linear differential
equation of order $m+1$ equivalent to the quantum differential equation
\eqref{DE1} and constructs a basis of solutions in the form of similar
integrals.

\vsk.2>
Our construction of the basis of solutions of the system of quantum differential
equations for a partial flag variety is done in several steps and is based on
constructions from representation theory.
\vsk.2>

We consider the joint compatible system of rational \qKZ/ difference equations
and dynamical differential equations for sections of the trivial bundle
$\pi\,:\,\Cnn\times\C^n\times \C^N \,\to\, \C^n\times \C^N$. This system is
defined in terms of the Yangian $Y_h(\frak{gl}_N)$ action on $\Cnn$ and depends
on the Yangian deformation parameter $h$. In this paper, we describe the
$h\to\infty$ limit of this system of difference and differential equations.
This is our first main result, see Section \ref{sec h to inf}. We call
the obtained equations the limiting \qKZ/ difference equations and limiting
dynamical differential equations.

\vsk.2>
In \cite{TV1,TV6} we constructed solutions of the joint system of initial \qKZ/
difference equations and dynamical differential equations in the form of
multidimensional hypergeometric functions, see Theorem \ref{thm cy}.
In this paper, we calculate the $h\to\infty$ limit of these solutions and obtain
solutions of the joint system of limiting equations, see Theorem \ref{thmcyo}.
This is our second main result.

\vsk.2>
According to the general theory in \cite{MO} the initial system of \qKZ/
difference equations and dynamical differential equations is identified with
the system of \qKZ/ difference equations and equivariant quantum differential
equations in the equivariant cohomology of cotangent bundles of partial flag
varieties. The precise formulae for that identification can be found in
\cite{GRTV, RTV1}. It was expected that a suitable $\,{h\to\infty}\,$ limit
of these equations gives a system of difference and differential equations
related to the equivariant quantum cohomology of partial flag varieties
themselves instead of their cotangent bundles. In particular, it was shown
in \cite{BMO} that the $\,{h\to\infty}\,$ limit of the equivariant quantum
differential equations for the cotangent bundle of the full flag variety gives
the equivariant quantum differential equations of the corresponding full
flag variety. Also, the appropriate $\,{h\to\infty}\,$ limit of the Yangian
$R\:$-matrix associated with the quantum cohomology of the cotangent bundle
of a Grassmannian, was used in \cite{GK, GKS} to calculate the quantum
multiplication in the cohomology of the Grassmannian itself.

\vsk.2>
In this paper, we identify our limiting dynamical differential equations for
sections of the bundle $\,\pi\,$ with the equivariant quantum differential
equations for partial flag varieties. Under this identification, our limiting
\qKZ/ difference equations become a system of difference equations in
the equivariant cohomology of partial flag varieties, hence, a new system
of difference equations compatible with the equivariant quantum differential
equations. At the same time, the multidimensional hypergeometric solutions of
the limiting equations for sections of the bundle $\,\pi\,$ become solutions of
the equivariant quantum differential equation and \qKZ/ difference equations
for partial flag varieties. This is our third main result,
see Theorem \ref{thm main}.

\vsk.2>
The particular case of a projective space was considered in \cite{TV7,CV},
where the joint system of the equivariant quantum differential equation
and compatible \qKZ/ difference equations, together with
hypergeometric solutions, was used to analyze the Stokes phenomenon for
the equivariant quantum differential equation of a projective space.
We expect similar applications of our compatible \qKZ/ equations for partial
flag varieties.

\vsk.2>
The paper is organized as follows. In Section \ref{DQKZ} we define the initial
joint system of \qKZ/ and dynamical equations. In Section \ref{sec h to inf}
we describe the $\,{h\to\infty}\,$ limit of these equations. In Section
\ref{sec solns} we describe multidimensional
hypergeometric solutions of the initial \qKZ/ and dynamical equations.
In Section \ref{sec lim solns} we obtain solutions of the limiting equations by
taking the $\,{h\to\infty}\,$ limit of the solutions of the initial equations.
In Section \ref{sQde} we identify the limiting equations with the equation in
equivariant cohomology of partial flag varieties.
Appendices \ref{AppA}\>--\>\ref{AppC} contain technical information.

\vsk.2>
In Appendix \ref{AppE} we show that the monodromy of the system of dynamical
differential equations \eqref{DEQ} is abelian under a certain resonance
condition for the equivariant parameters. This is analogous to the corresponding
property of the quantum differential equation of the Hilbert scheme of points
in the plane, see \cite{OP}\:.

\vsk.2>
Notice that Proposition \ref{SPsiPpro} provides an equivariant Gamma theorem
for a partial flag variety $\,\Fla\,$, cf.~Theorem~B.2 and formula (11.19)
in \cite{TV6}\:. Notice also that in Theorems \ref{thmdet}, \ref{detY},
\ref{thmdeto}, \ref{detYo} we prove different determinant formulas which
imply that the functions entering the determinants form bases in the spaces
of solutions of the corresponding differential and difference equations.
The Levelt fundamental solutions are discussed in Sections \ref{secLev} and
\ref{secLevo}.

\vsk.2>
The authors thank G.\,Cotti, A.\,Givental, V.\,Gorbounov, L.\,Mihalcea,
K.\,Rietsch, R.\,Rim\'anyi for useful discussions. The second author thanks
Max Planck Institute for Mathematics in Bonn for hospitality in May--June of
2022.

\section{Dynamical and \qKZ/ equations}
\label{DQKZ}

\subsection{Notations}
\label{Nts}
Fix $\,N, n\in\Z_{>0}\,$.
Let $\,\bla\in\Z^N_{\ge 0}\,$, $\,|\:\bla\:|=\la_1+\ldots+\la_N=n\,$.
Let $\,I=(I_1\lc I_N)\,$ be a partition of $\,\{1\lc n\}\,$ into disjoint
subsets $\,I_1\lc I_N\,$. Denote by $\,\Il\>$ the set of all partitions
$\,I\>$ with $\,|I_j|=\la_j\,$, $\,\:j=1\lc N\>$.

\vsk.2>
Consider the space $\,\C^N\:$ with the standard basis
$\,v_i=(0\lc0,1_i,0\lc0)\,$, $\;i=1\lc N$, and
the tensor product $\,\Cnn\>$ with the basis $\,(v_I)_I\>$,
\vvn.2>
\be
v_I\>=\,v_{i_1}\lox v_{i_n}\,,\kern-2em
\ee
where $\,i_j=m\,$ if $\,j\in I_m\,$.

\vsk.2>
The space $\,\Cnn$ is a module over the Lie algebra $\glN$ with basis
\vv.2>
$\,e_{\ij}\,$, $i,j=1\lc N$. The $\glN$-module $\Cnn$ has weight decomposition
$\,\Cnn=\sum_{|\bla|=n}\Cnnl$, where $\,\Cnnl$ is the subspace with basis
$\,(v_I)_{I\in\Il}\:$.

\subsection{Difference \qKZ/ equations}
\label{sec qkz}

Define the $\,R$-\:matrix acting on $\:(\C^N)^{\ox 2}$,
\vvn.2>
\be
R(u;h)\,=\,\frac{u-h\:P}{u-h}\,,
\vv.3>
\ee
where $P$ is the permutation of factors of $(\C^N)^{\ox 2}$ and
$u,h\in\C\>$.
The $R\:$-matrix satisfies the Yang-Baxter and unitarity equations,
\vvn.4>
\begin{gather}
\label{YB}
R^{(12)}(u-v;h)R^{(13)}(u;h)R^{(23)}(v;h)\,=\,
R^{(23)}(v;h)R^{(13)}(u;h)R^{(12)}(u-v;h)\,,
\\[2pt]
\label{unit}
R^{(12)}(u;h)R^{(21)}(-u;h)\,=\,1\,.
\\[-14pt]
\notag
\end{gather}
The first equation is an equation in $\End((\C^N)^{\ox 3})$.
The superscript indicates the factors of $\:(\C^N)^{\ox 3}$
on which the corresponding operators act.

\vsk.2>
Let $\,\zz=(z_1\lc z_n)\in\C^n$, $\,\qq=(q_1\lc q_N)\in\C^N$, and $\ka\in\C\>$.
Define the \qKZ/ operators $\,K_1\lc K_n\>$ acting on $\Cnn$:
\vvn.3>
\begin{align}
\label{K}
\kern1.2em
K_a(\zz\:;h;\qq\:;\ka)\>&{}=\,
R^{(\aa-1)}(z_a\<-z_{a-1}+\ka;h)\,\dots\,R^{(a,1)}(z_a\<-z_1+\ka;h)\,\times{}
\\[3pt]
\notag
& {}\>\times\<\;q_1^{e_{1,1}^{(a)}}\!\dots\,q_N^{e_{N,N}^{(a)}}\,
R^{(a,n)}(z_a\<-z_n\:;h)\,\dots\,R^{(\aa+1)}(z_a\<-z_{a+1}\:;h)\,.
\kern-1.1em
\\[-14pt]
\notag
\end{align}
The \qKZ/ operators preserve the weight decomposition of $\,\Cnn$ and
have the property
\vvn.3>
\begin{align}
\label{flatK}
K_a(z_1\lc z_b+\ka\lc z_n;h;\qq\:;\ka)\,K_b(\zz\:;h;\qq\:;\ka) & {}\,={}
\\[3pt]
\notag
{}=\,K_b(z_1\lc z_a+\ka\lc z_n;h;\qq\:;\ka) & \,K_a(\zz\:;h;\qq\:;\ka)
\kern-1.4em
\\[-15pt]
\notag
\end{align}
for all $a,b=1\lc n$, see \cite{FR}. We say that the collection of operators
$\,K_1\lc K_n\>$ with this property define a {\it discrete flat connection\/},
see \cite{TV2}\:.

\vsk.2>
The system of difference equations with step $\ka$,
\vvn.2>
\beq
\label{Ki}
f(z_1\lc z_a+\ka\lc z_n;h;\qq\:;\ka)\,=\,
K_a(\zz\:;h;\qq\:;\ka)\,f(\zz\:;h;\qq\:;\ka)\,,\qquad a=1\lc n\>,
\vv.3>
\eeq
for a \,$ (\C^N)^{\ox n}\<$-valued
function $f(\zz,h,\qq\:;\ka)$ is called the \qKZ/ {\it equations\/}.

\begin{rem}
\end{rem}

\subsection{Differential dynamical equations}
\label{Dde}
Define the linear operators $\,X_1\lc X_N$ \,acting on \,$\Cnn$,
called the dynamical Hamiltonians:
\begin{align*}
X_i(\zz\:;h;\qq)\,=\,\sum_{a=1}^n\,z_a\:e^{(a)}_{\ii}-\>
h\>\Bigl(\>\frac{\et_{i,i}(1-\et_{\ii})}2\,&{}+\!
\sum_{1\le a<b\le n}\,\sum_{k=1}^N\,e_{\ik}^{(a)}\>e_{\ki}^{(b)}\>+{}
\\[3pt]
&{}+\>\sum_{\satop{j=1}{j\ne i}}^N\,\frac{q_j}{q_i\<-q_j}\,
(\et_{\ij}\>\et_{\ji}\<-\et_{\ii})\<\Bigr)\>,\kern-1.4em
\notag
\\[-28pt]
\notag
\end{align*}
where \,$\et_{s,\:t} = \sum_{a=1}^n e_{s,\:t}^{(a)}$.
The differential operators
\beq
\label{nady}
\nabla_{\qq,\ka,i}\,=\,\ka\>q_i\:\frac\der{\der\:q_i} - X_i(\zz\:;h;\qq)\,,
\qquad i=1\lc N\>,
\eeq
preserve the weight decomposition of $\Cnn$ and pairwise commute,
see \cite{TV3}, also \cite[Section 3.4]{GRTV}, \cite[Section 7.1]{RTV1},
\cite{MTV}. The operators $\nabla_{\qq,\ka,i}$ define the $\Cnn$-valued
{\it dynamical connection\/}.

\vsk.2>
The system of differential equations
\vvn.3>
\beq
\label{DEQ}
\ka\>q_i\:\frac{\der f}{\der\:q_i}\,=\,X_i(\zz\:;h;\qq)\>f\,,\qquad i=1\lc N\>,
\kern-1em
\vv.1>
\eeq
for a $\,\Cnn$-valued function $\,f(\zz\:;h;\qq\:;\ka)\,$ is called
the {\it dynamical equations\/}.

\begin{thm} [{\cite{TV3}}]
\label{thm qkz}
The joint system of dynamical and \qKZ/ equations with the same parameter $\ka$
is compatible.
\end{thm}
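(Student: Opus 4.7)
Compatibility of the joint system of \qKZ/ and dynamical equations amounts to three independent conditions: (a) the flatness relation \eqref{flatK} of the \qKZ/ operators, which is already recorded; (b) the pairwise commutativity of the dynamical operators $\nabla_{\qq,\ka,1}\lc\nabla_{\qq,\ka,N}\>$, which is cited in Section \ref{Dde}; and (c) the intertwining of the \qKZ/ shifts $z_a\mapsto z_a\<+\ka$ with the dynamical flow. The only statement not already in hand is (c), and writing $f(z_1\lc z_a\<+\ka\lc z_n)=K_a\>f(\zz)$ and differentiating both sides in $q_i\>$, one sees that (c) is equivalent to the operator identity
\be
\ka\>q_i\>\frac{\der K_a}{\der q_i}\,=\,X_i(z_1\lc z_a\<+\ka\lc z_n\:;h;\qq)\>K_a\,-\,K_a\>X_i(\zz\:;h;\qq)\>,
\ee
for all $a\in\{1\lc n\}$ and $i\in\{1\lc N\}\>$. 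Once this identity is established for all $a$ and $i\:$, Theorem \ref{thm qkz} follows.

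\medskip

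\textbf{Approach.} I would write $K_a=L_a\>Q_a\>R_a\>$, where $Q_a=q_1^{e^{(a)}_{1,1}}\!\dots q_N^{e^{(a)}_{N,N}}$ is the toric factor and $L_a\>,R_a$ are products of $R\:$-matrices that do not depend on $\qq\>$. Since $\ka\>q_i\>\der_{q_i}Q_a=\ka\>e^{(a)}_{\ii}\>Q_a\>$, the left-hand side of the identity inserts a single factor $\ka\>e^{(a)}_{\ii}$ in the middle of $K_a\>$. On the right-hand side, the shift $z_a\mapsto z_a\<+\ka$ in $X_i$ affects only the diagonal linear term $\sum_b z_b\:e^{(b)}_{\ii}\>$, producing exactly $\ka\>e^{(a)}_{\ii}\>K_a$ once the weight-preserving $R\:$-matrix factors $L_a\:,R_a$ are commuted past $e^{(a)}_{\ii}\>$. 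Subtracting this common contribution, the verification reduces to checking that the $h\:$-linear pieces of $X_i$ intertwine $K_a$ in the appropriate sense. By the Yang-Baxter equation \eqref{YB}, the resulting residual identity reduces to the two-site case $n=2\>$, where it becomes a rational identity in $u=z_1\<-z_2\>$, $h\>$, and the pair $(q_i\<,q_j)$ on each weight subspace of $(\C^N)^{\ox2}\>$.

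\medskip

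\textbf{Main obstacle.} The delicate point is the dynamical rational term $\sum_{j\ne i}\:q_j(q_i\<-q_j)^{-1}(\et_{\ij}\>\et_{\ji}\<-\et_{\ii})\:$ inside $X_i\>$. Its rational dependence on $\qq$ is not produced by $Q_a\:$ directly, but rather emerges through the conjugation of the off-diagonal generators $\et_{\ij}\>,\,\et_{\ji}$ by $Q_a\>$, which contributes factors $q_j/q_i\:$ that must recombine with the ambient $q_j/(q_i\<-q_j)\:$ through the elementary identity $q_i/(q_i\<-q_j)+q_j/(q_j\<-q_i)=1\>$. Tracking this cancellation, together with the matching of the two-site $h$-\:linear term $\sum_{a<b,\,k}\:e^{(a)}_{\ik}\>e^{(b)}_{\ki}\:$ against the intertwining of Yangian generators by $R\:$-matrices through the unitarity relation \eqref{unit}, is the computational heart of the argument. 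The identity displayed above is the content of \cite{TV3}, and Theorem \ref{thm qkz} follows directly from it.
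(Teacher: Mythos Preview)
The paper does not prove Theorem~\ref{thm qkz} at all: it is stated with the citation \cite{TV3} and used as a black box throughout. There is no argument in the paper to compare your proposal against.

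Your outline is a correct formulation of what compatibility means and of the key mixed identity
\be
\ka\>q_i\>\frac{\der K_a}{\der q_i}\,=\,X_i(z_1\lc z_a\<+\ka\lc z_n\:;h;\qq)\>K_a\,-\,K_a\>X_i(\zz\:;h;\qq)
\ee
that must be verified. The decomposition $K_a=L_a\>Q_a\>R_a$ and the observation that differentiating $Q_a$ produces $\ka\>e^{(a)}_{\ii}$ are right, as is the remark that the shift $z_a\mapsto z_a+\ka$ in $X_i$ only touches the linear diagonal term. Your claim that the residual $h$-linear identity reduces via Yang--Baxter to the two-site case is the standard strategy and is what \cite{TV3} carries out, though you have not actually done the two-site computation here; as written, this is an outline rather than a proof. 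Since the paper itself only quotes the result, your sketch already goes further than the paper does.
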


\begin{lem}
\label{lem new X}
The dynamical Hamiltonians $\,\:X_1\lc X_N\:$ acting on $\:\,\Cnn\<$
can be written in the form
\vvn-.4>
\beq
\label{Xi}
X_i(\zz\:;h;\qq)\,=\,\sum_{a=1}^n\,z_a\:e^{(a)}_{\ii}-\>
h\!\sum_{1\le a<b\le n}\,\sum_{\satop{j=1}{j\ne i}}^N\,
\Bigl(\,\frac{e_{\ij}^{(a)}\:e_{\ji}^{(b)}}{1-q_j/q_i}\,-\,
\frac{e_{\ji}^{(a)}\:e_{\ij}^{(b)}}{1-q_i/q_j}\,\Bigr)\>.
\kern-2.4em
\eeq
\end{lem}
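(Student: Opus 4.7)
The plan is to prove Lemma \ref{lem new X} by direct algebraic manipulation of the defining expression for $X_i$ from Section \ref{Dde}, showing term-by-term that it agrees with \eqref{Xi}. The first summand $\sum_a z_a\>e^{(a)}_{\ii}$ is literally identical in both expressions, so the task reduces to matching the coefficient of $\,-h\,$ in the two formulas, that is, to rewriting
\be
\frac{\et_{\ii}(1-\et_{\ii})}2\,+\sum_{1\le a<b\le n}\sum_{k=1}^N e_{\ik}^{(a)}\>e_{\ki}^{(b)}\,+\sum_{\satop{j=1}{j\ne i}}^N\frac{q_j}{q_i-q_j}\,\bigl(\et_{\ij}\>\et_{\ji}-\et_{\ii}\bigr)
\ee
as the double sum appearing in \eqref{Xi}.

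The first step is to expand $\,\et_{\ii}^{\:2}=\bigl(\sum_a e_{\ii}^{(a)}\bigr)^{\<2}\,$ and use the single-factor identity $(e_{\ii})^2=e_{\ii}$, which gives $\,\et_{\ii}(1-\et_{\ii})=-\:2\sum_{a<b} e_{\ii}^{(a)}e_{\ii}^{(b)}\,$. When added to the $k=i$ summand of $\,\sum_k e_{\ik}^{(a)}e_{\ki}^{(b)}\,$, this contribution cancels exactly, reducing the first two terms to $\,\sum_{a<b}\sum_{j\ne i} e_{\ij}^{(a)}e_{\ji}^{(b)}\,$.

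For the third term, I expand $\,\et_{\ij}\et_{\ji}=\sum_{a,b} e_{\ij}^{(a)}e_{\ji}^{(b)}\,$, split according to $a=b$, $a<b$, $a>b$, and use the single-factor identity $e_{\ij}e_{\ji}=e_{\ii}$ (valid for $j\ne i$) to see that the diagonal piece equals $\et_{\ii}\,$, canceling the subtracted $\et_{\ii}\,$. Since operators on different tensor factors commute, what remains is $\,\sum_{a<b}\bigl(e_{\ij}^{(a)}e_{\ji}^{(b)}+e_{\ji}^{(a)}e_{\ij}^{(b)}\bigr)$. Assembling all contributions, the coefficient of $\,e_{\ij}^{(a)}e_{\ji}^{(b)}\,$ becomes $\,1+\frac{q_j}{q_i-q_j}=\frac{q_i}{q_i-q_j}=\frac1{1-q_j/q_i}\,$, while the coefficient of $\,e_{\ji}^{(a)}e_{\ij}^{(b)}\,$ is $\,\frac{q_j}{q_i-q_j}=-\frac1{1-q_i/q_j}\,$. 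These are precisely the coefficients in \eqref{Xi}, completing the proof.

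There is no serious obstacle: the argument is a careful identity check relying only on two elementary single-factor facts about the matrix units and on the telescoping of $\et_{\ii}$ contributions. The only place where one must be attentive is in the bookkeeping of the $a<b$ versus $a>b$ parts when expanding $\et_{\ij}\et_{\ji}\,$, and in keeping the indices $j\ne i$ consistent throughout.
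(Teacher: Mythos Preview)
Your proof is correct and is exactly the direct verification the paper alludes to, carried out in full detail. The cancellations and coefficient computations are all accurate, so there is nothing to add.
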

\begin{proof}
The proof is by direct verification.
\end{proof}

\section{Limit $\,{h\to\infty}\,$}
\label{sec h to inf}

\subsection{Limit of \qKZ/ operators}
Introduce the $R\:$-matrix $\Ro\<(u)$ acting on $(\C^N)^{\ox 2}$:
\vvn.3>
\be
\Ro\<(u)\,=\,P+\>u\!\sum_{1\le i<j\le N}\!e_{\ii}\ox e_{\jj}\,.
\vv-.2>
\ee
For example, for $\,N=2\,\:$ the matrix is
\vvn-.1>
\be
\Ro\<(u) =
\left(\,
\begin{matrix}
1 & 0 & 0 &0
\\
0 & u & 1 &0
\\
0 & 1 & 0 & 0
\\
0 & 0 & 0 & 1
\end{matrix}
\,\right).
\ee

\begin{lem}
\label{lem 3.1}
We have
\beq
\label{RR0}
\Ro\<(u)\,=\,\lim_{h\to\infty}\bigl(
(\<-\:h)^{\,\sum_{i<j}e^{(1)}_{i,i}e^{(2)}_{j,j}}\,R(u;h)\;
(\<-\:h)^{\:-\!\sum_{i<j}e^{(1)}_{j,j}e^{(2)}_{i,i}}\,\bigr)\,.
\vv.2>
\eeq
Moreover, $\Ro\<(u)$ satisfies the Yang-Baxter and unitarity equations,
\vvn.2>
\begin{gather}
\label{YB0}
\bigl(\Ro\<(u-v)\bigr)^{(12)}\bigl(\Ro\<(u)\bigr)^{(13)}
\bigl(\Ro\<(v)\bigr)^{(23)}=\,
\bigl(\Ro\<(v)\bigr)^{(23)}\bigl(\Ro\<(u)\bigr)^{(13)}
\bigl(\Ro\<(u-v)\bigr)^{(12)},
\\[2pt]
\label{unit0}
\bigl(\Ro\<(u)\bigr)^{(12)}\bigl(\Ro\<(-u)\bigr)^{(21)}=\,1\,.
\end{gather}
\end{lem}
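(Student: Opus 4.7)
The plan is to prove the limit identity \eqref{RR0} by direct computation and then to deduce the unitarity and Yang--Baxter equations for $\Ro(u)$ as $\,h\to\infty\,$ limits of the corresponding identities \eqref{unit}\<, \eqref{YB} for $R(u;h)$.

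For \eqref{RR0}, I would compute on basis vectors. The operator $(-h)^{\sum_{i<j}e^{(1)}_{ii}e^{(2)}_{jj}}$ is diagonal in the standard basis and multiplies $v_a\ox v_b$ by the scalar $(-h)^{[a<b]}$, where $[\,\cdot\,]$ is the Iverson bracket; symmetrically, $(-h)^{-\sum_{i<j}e^{(1)}_{jj}e^{(2)}_{ii}}$ multiplies $v_a\ox v_b$ by $(-h)^{-[a>b]}$. Since $R(u;h)(v_a\ox v_b)=v_a\ox v_b$ for $a=b$ and equals $\bigl(u\:v_a\ox v_b-h\:v_b\ox v_a\bigr)/(u-h)$ for $a\ne b$, applying these diagonal factors merely rescales each $2\times2$ off-diagonal block of $R(u;h)$. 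Taking $h\to\infty$ in the three cases $a<b$, $a=b$, $a>b$ would then yield $u\:v_a\ox v_b+v_b\ox v_a$, $v_a\ox v_a$, and $v_b\ox v_a$ respectively, matching the action of $\Ro(u)$ on $v_a\ox v_b$.

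For \eqref{unit0}, I would set $L=(-h)^{\sum_{i<j}e^{(1)}_{ii}e^{(2)}_{jj}}$ and $L'=(-h)^{\sum_{i<j}e^{(1)}_{jj}e^{(2)}_{ii}}$, insert $L'^{-1}L'=1$ between the two factors in the identity $R^{(12)}(u;h)R^{(21)}(-u;h)=1$, and then conjugate by $L$. By \eqref{RR0} the first factor $L\>R^{(12)}(u;h)L'^{-1}$ tends to $\Ro^{(12)}(u)$, and a parallel computation with the roles of the two factors swapped shows $L'\<R^{(21)}(-u;h)L^{-1}\to\Ro^{(21)}(-u)$, so the product equals $1$ in the limit.

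For \eqref{YB0} I would apply the same philosophy on $(\C^N)^{\ox 3}$: introduce diagonal operators $L_{ab}=(-h)^{\sum_{i<j}e^{(a)}_{ii}e^{(b)}_{jj}}$ for each ordered pair $(a,b)$ of distinct elements of $\{1,2,3\}$ (they all commute pairwise), insert identities $L_{ba}L_{ba}^{-1}=1$ between consecutive $R$-matrices in \eqref{YB}, and multiply on the outside so as to bring each factor to the form $L_{ab}R^{(ab)}(\cdot;h)L_{ba}^{-1}$, whose limit is $\Ro^{(ab)}$ by \eqref{RR0}. \emph{The main obstacle} is that $L_{cd}$ does not commute with $R^{(ab)}$ when $\{c,d\}\cap\{a,b\}\ne\emptyset$, so rearranging the factors produces commutator corrections, and one must verify that those corrections are subdominant in $h$ and vanish in the limit. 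A robust fallback would be direct verification: writing $\Ro(u)=P+u\:r$ with $r=\sum_{i<j}e_{ii}\ox e_{jj}$, both sides of \eqref{YB0} are polynomials in $u,v$ of degree at most $3$, and the identity reduces to a finite check on basis vectors $v_a\ox v_b\ox v_c$ using $P^{(12)}P^{(13)}P^{(23)}=P^{(23)}P^{(13)}P^{(12)}$, the braiding relations $P^{(ab)}r^{(ac)}P^{(ab)}=r^{(bc)}$, and the pairwise commutativity of the various $r^{(ab)}$.
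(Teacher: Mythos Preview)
Your approach matches the paper's: verify \eqref{RR0} by direct computation, then deduce \eqref{unit0} and \eqref{YB0} from \eqref{unit} and \eqref{YB} by conjugation and taking $h\to\infty$. Your computation for \eqref{RR0} and your unitarity argument are correct.

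However, your ``main obstacle'' for the Yang--Baxter deduction is not an obstacle at all, and the direct-verification fallback is unnecessary. The point you are missing is that $R^{(ab)}(u;h)$ commutes with $e^{(a)}_{kk}+e^{(b)}_{kk}$ for every $k$ (because $P^{(ab)}$ does). With $D=L_{12}L_{13}L_{23}$ and $D'=L_{21}L_{31}L_{32}$, this weight-zero property gives exactly the commutations needed: $L_{13}L_{23}$ commutes with $R^{(12)}$ since its exponent is $\sum_{i<j}(e^{(1)}_{ii}+e^{(2)}_{ii})e^{(3)}_{jj}$; likewise $L_{21}L_{31}$ commutes with $R^{(23)}$ and $L_{21}L_{23}$ commutes with $R^{(13)}$. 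Using only these three commutations one rewrites
\[
D\,R^{(12)}R^{(13)}R^{(23)}\,D'^{-1}
\,=\,\bigl(L_{12}R^{(12)}L_{21}^{-1}\bigr)\bigl(L_{13}R^{(13)}L_{31}^{-1}\bigr)\bigl(L_{23}R^{(23)}L_{32}^{-1}\bigr),
\]
and the same holds for the right-hand side of \eqref{YB} with the factors in the opposite order. Taking $h\to\infty$ via \eqref{RR0} on each factor then yields \eqref{YB0} with no corrections to control.
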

\begin{proof}
The proof of formula \eqref{RR0} is straightforward. Equations \eqref{YB0},
\eqref{unit0} follow from \eqref{RR0} and \eqref{YB}, \eqref{unit}, respectively.
\end{proof}

Let $\,\pp=(p_1\lc p_N)\in\C^N$.
Define the \qKZ/ operators $\,\Ko_1\lc\Ko_n\:$ acting on $\,\Cnn\,$:
\vvn.4>
\begin{align}
\label{Ko}
\kern2em
\Ko_a(\zz\:;\pp\:;\ka)\>&{}=\,\bigl(\Ro\<(z_a\<-z_{a-1}+\ka)\bigr)^{(\aa-1)}
\<\dots\,\bigl(\Ro\<(z_a\<-z_1+\ka)\bigr)^{(a,1)}\times{}
\\
\notag
& {}\>\times\,p_1^{e_{1,1}^{(a)}}\!\dots\,p_N^{e_{N,N}^{(a)}}\,
\bigl(\Ro\<(z_i\<-z_n)\bigr)^{(a,n)}\<\dots\,
\bigl(\Ro\<(z_a\<-z_{a+1})\bigr)^{(\aa+1)}\,.\kern-2em
\end{align}

\begin{lem}
\label{lem K-lim}
For $a=1\lc n$, we have
\vvn.1>
\be
\Ko_a(\zz\:;\pp\:;\ka)\,=\lim_{h\to\infty}
\bigl(\<(\<-\:h)^{\,\sum_{b<c,\,i<j}e^{(b)}_{j,j}e^{(c)}_{i,i}}\,
\Kt_i(\zz\:;h;\pp\:;\ka)\,
(\<-\:h)^{\:-\sum_{b<c,\,i<j}e^{(b)}_{j,j}e^{(c)}_{i,i}}\,\bigr)
\,,\kern-2em
\vv.1>
\ee
where the operators $\,\Kt_1\lc\Kt_n$ are obtained from the \qKZ/ operators
$\,K_1\lc K_n$, see \eqref{K}, by the substitution
\vvn-.3>
\beq
\label{q->tq}
q_i\>\mapsto\,p_i\,(\<-\:h)^{\,\sum_{b=1}^n
\left(\sum_{j>i}e^{(b)}_{j,j}\,-\,\sum_{j<i}e^{(b)}_{j,j}\>\right)}\,,\qquad
i=1\lc N\,.\kern-2em
\eeq
\end{lem}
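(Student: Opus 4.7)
The plan is to apply Lemma \ref{lem 3.1} to each of the $n-1$ $R$-matrices appearing in the definition \eqref{K} of $K_a$, tracking the residual diagonal $(-h)^\bullet$ factors arising from the conjugation by $(-h)^{\pm S}$ and from the substitution $q_i\mapsto\tilde q_i$. The crucial simplification is that every $(-h)^\bullet$ appearing in the argument is diagonal in the standard basis $(v_I)$ of $\Cnn$ (it is a polynomial in the Cartan projectors $e^{(b)}_{ii}$), so all these operators commute pairwise and with the diagonal factor $\prod_i q_i^{e^{(a)}_{ii}}$ of $K_a$.

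\medskip

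Introduce the shorthand $E_{d,d'}:=\sum_{i<j}e^{(d)}_{ii}e^{(d')}_{jj}$, so that $S:=\sum_{b<c,\,i<j}e^{(b)}_{jj}e^{(c)}_{ii}=\sum_{b<c}E_{c,b}$. Lemma \ref{lem 3.1} then asserts
\begin{equation*}
\Ro(u)^{(a,c)}=\lim_{h\to\infty}\bigl((-h)^{E_{a,c}}R(u;h)^{(a,c)}(-h)^{-E_{c,a}}\bigr)
\end{equation*}
for every pair $(a,c)$. The substitution $\tilde q_i=p_i(-h)^{M_i}$ turns the diagonal factor $\prod_i q_i^{e^{(a)}_{ii}}$ of $K_a$ into $\prod_i p_i^{e^{(a)}_{ii}}\cdot(-h)^{\Theta_a}$, where $\Theta_a:=\sum_i M_i\,e^{(a)}_{ii}$. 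Using $e^{(a)}_{ii}e^{(a)}_{jj}=\delta_{ij}\,e^{(a)}_{ii}$ to eliminate the $b=a$ contribution in each $M_i e^{(a)}_{ii}$, and then swapping $i\leftrightarrow j$ in the appropriate sum, one obtains the key identity
\begin{equation*}
\Theta_a=\sum_{b\neq a}\bigl(E_{a,b}-E_{b,a}\bigr).
\end{equation*}

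\medskip

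One then expands $(-h)^S\Kt_a(-h)^{-S}$ by distributing the outer conjugators across the product via the insertions $(-h)^{-S}(-h)^S=1$ between successive factors. Since all the diagonal $(-h)^\bullet$ operators commute, the resulting expression can be reorganized so that each $R^{(a,c)}$ is flanked by exactly the conjugators $(-h)^{E_{a,c}}$ on the left and $(-h)^{-E_{c,a}}$ on the right required by Lemma \ref{lem 3.1}. The identity above is precisely what makes this reorganization possible: the factor $(-h)^{\Theta_a}$ supplied by the substitution absorbs the mismatch between the outer conjugator $(-h)^{\pm S}$ and the prescribed conjugators of Lemma \ref{lem 3.1}. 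Applying Lemma \ref{lem 3.1} in the limit $h\to\infty$ to each R-matrix and noting that the remaining diagonal part reduces to $\prod_i p_i^{e^{(a)}_{ii}}$, one recovers $\Ko_a(\zz;\pp;\ka)$.

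\medskip

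The main obstacle lies in the combinatorial bookkeeping of the reorganization step: splitting $(-h)^S$ into pieces and checking, for each of the $n-1$ R-matrices individually, that the correct left and right conjugators are assembled from the appropriate portions of $(-h)^{\pm S}$ together with $(-h)^{\Theta_a}$, with the remainders cancelling by commutativity. The identity $\Theta_a=\sum_{b\neq a}(E_{a,b}-E_{b,a})$ is the algebraic essence of why this accounting closes; once it is established, the rest of the argument is a direct verification.
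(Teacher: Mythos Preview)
Your approach is the paper's own ``direct verification using Lemma~\ref{lem 3.1}'' fleshed out, and the identity $\Theta_a=\sum_{b\neq a}(E_{a,b}-E_{b,a})$ is indeed the heart of the bookkeeping. One point, however, is under-justified. You write that the reorganization works ``since all the diagonal $(-h)^\bullet$ operators commute,'' but commuting diagonal factors with each other is not enough: to move the leftover pieces of $(-h)^{\pm S}$ from one $R$-matrix to the next you must push them \emph{through} the neighbouring $R^{(a,d)}$, and a single term such as $E_{c,a}$ with $c\neq a,d$ does \emph{not} commute with $R^{(a,d)}$.

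What saves the argument is the $\gl_N$-invariance of $R$: since $[R^{(a,d)},\,e^{(a)}_{ii}+e^{(d)}_{ii}]=0$, the combination $E_{c,a}+E_{c,d}=\sum_{i<j}e^{(c)}_{ii}(e^{(a)}_{jj}+e^{(d)}_{jj})$ (and likewise $E_{a,c}+E_{d,c}$) \emph{does} commute with $R^{(a,d)}$. Processing the $R$-matrices in the order they appear in $K_a$ and tracking the running conjugator, one checks inductively that at each step the non-$E_{a,d_k}$ part of the exponent pairs up into exactly such $\gl_N$-invariant combinations (the correction $E_{d_{k-1},a}-E_{a,d_{k-1}}$ produced at the previous step is precisely what repairs the ``mixed'' terms for the next $R^{(a,d_k)}$). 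Once this is made explicit your sketch becomes a complete proof; without it the reorganization step does not follow from diagonal commutativity alone.
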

\begin{proof}
The lemma is proved by direct verification using Lemma \ref{lem 3.1}.
\end{proof}

\begin{cor}
\label{cor qKZ}
The \qKZ/ operators $\,\Ko_1\lc\Ko_n\>$ preserve the weight decomposition of
$\,\Cnn$ and define a discrete flat connection,
\vvn.1>
\be
\Ko_a(z_1\lc z_b+\ka\lc z_n;\pp\:;\ka)\,\Ko_b(\zz\:;\pp\:;\ka)\,=\,
\Ko_b(z_1\lc z_a+\ka\lc z_n;\pp\:;\ka)\,\Ko_a(\zz\:;\pp\:;\ka)
\vv.1>
\ee
for all $\,a,b=1\lc n\,$, cf.~\eqref{flatK}\:.
\end{cor}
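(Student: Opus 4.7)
The plan is to derive both statements of the corollary directly from Lemma \ref{lem K-lim}, together with the corresponding known properties of the original \qKZ/ operators $K_a$ recorded in Section \ref{sec qkz}. Write $D_h$ for the diagonal conjugating operator $(-h)^{\sum_{b<c,\,i<j}e^{(b)}_{j,j}e^{(c)}_{i,i}}$ from Lemma \ref{lem K-lim}, and let $\Kt_a$ denote the operator obtained from $K_a$ by the substitution \eqref{q->tq}, so that $\Ko_a=\lim_{h\to\infty}D_h\,\Kt_a\,D_h^{-1}$. For the weight-preservation claim I would note that $D_h$ is a function of the Cartan elements $e^{(b)}_{j,j}$ alone, hence preserves the weight decomposition of $\Cnn$. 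The substitution \eqref{q->tq} replaces each $q_i$ by $p_i$ times another Cartan exponential, and since $K_a$ depends on $q_i$ only through the factor $q_i^{e_{i,i}^{(a)}}$, which is itself a commuting Cartan power, $\Kt_a$ is unambiguously defined and preserves the weight decomposition just as the $K_a$ do. Therefore $D_h\,\Kt_a\,D_h^{-1}$ preserves weights for every $h$, and so does its $h\to\infty$ limit $\Ko_a$.

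For the flat-connection identity, I would start from the known relation \eqref{flatK} for $K_a$ and $K_b$ and apply the substitution \eqref{q->tq} on both sides. Because the scalars inserted by the substitution are built from the diagonal $e^{(c)}_{j,j}$, which commute with every factor $q_j^{e_{j,j}^{(c)}}$ appearing in each $K_c$, the substitution lifts from a scalar replacement to a bona fide operator identity, yielding the flat-connection identity for the $\Kt$-operators with $\qq$ replaced by $\pp$. Since $D_h$ is independent of $\zz$, it passes unchanged through the shifts $z_b\mapsto z_b+\ka$, and conjugation by $D_h$ preserves the identity; applying Lemma \ref{lem K-lim} to both sides and letting $h\to\infty$ then delivers the desired relation for $\Ko_a,\Ko_b$.

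The only subtle point I foresee is the verification that the substitution \eqref{q->tq} really commutes through every factor of each $K_c$ without producing correction terms, and that the two conjugating $D_h$'s on the two sides of the identity match after the shifts $z_b\mapsto z_b+\ka$. Both reduce entirely to the commutativity of the Cartan subalgebra with itself and the $\zz$-independence of $D_h$, so the main obstacle is bookkeeping rather than anything structurally deep; no input beyond Lemma \ref{lem K-lim} and the flatness \eqref{flatK} of the undeformed operators is needed.
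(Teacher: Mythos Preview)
Your proposal is correct and follows exactly the approach implicit in the paper: the corollary is stated immediately after Lemma~\ref{lem K-lim} with no further proof, and you have spelled out precisely the intended argument---restrict to each weight subspace so that the substitution \eqref{q->tq} becomes the scalar substitution \eqref{q->tqt}, invoke the flatness \eqref{flatK} of the original operators for these particular parameter values, conjugate by the $\zz$-independent $D_h$, and pass to the limit. The bookkeeping you flag (that $D_h$ is unchanged by the shifts $z_b\mapsto z_b+\ka$, and that the Cartan nature of the substitution keeps everything well-defined on weight subspaces) is exactly what makes the passage work.
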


\begin{rem}
The $R\:$-matrix $\,\Ro\<(u)\,$ for $\,N=2\,$ is analogous to the $R\:$-matrix
of the five-vertex model and the $q\:$-boson model, and to the $R\:$-matrices
considered in \cite{GK,GKS} in relation to
the quantum cohomology theory for Grassmannians.
\end{rem}

\begin{rem}
Since $\,\:\det\:\Ro\<(u)=1\,$, the \qKZ/ operators $\,\Ko_1\lc\Ko_n\:$ are
\vvn.07>
invertible for all $\,\zz,\pp,\ka\,$ provided $\,p_i\<\ne 0\,$ for all
\vvn.06>
$\,i=1\lc N\>$. That is, the discrete flat connection on $\,\Cnn\:$ defined
by $\,\Ko_1\lc\Ko_n\>$ is regular for all $\,\zz\,$, unlike the discrete flat
\vvn.05>
connection defined by the \qKZ/ operators $\,K_1\lc K_n\>$, see \eqref{K}\:.
\end{rem}

\subsection{Dynamical operators $\,\Xo_1\lc\Xo_{\?N}\,$}
\label{secXo}

Recall \,$\bla=(\la_1\lc\la_N)\,$. Let
\,$\la^{(i)}\<=\sum_{j=1}^i\,\la_i\,$, $\;i=1\lc N\,$, so that
\vvn.06>
$\,\la^{(N)}\<=n\,$. For $\,1\le i<j\le n\,$, set
$\;\la_{\bra\ij\ket}=\:
\sum_{k=i}^{j-1}\,(\la_k\<+\la_{k+1})\:=\:
\la_i\<+\la_j\<+2\sum_{k=i+1}^{j-1}\la_k\,$.

\vsk.2>
Recall $\;I=(I_1\lc I_N)\,$ and the vectors $v_I\<\in\<\Cnn$.
For $\,\si\in S_n\>$ and $\,I=(I_1\lc I_N)\,$, set
$\,\si(I)=\bigl(\si(I_1)\lc\si(I_N)\bigr)\,$. For \,$a,b=1\lc n\,$,
let $\,s_{\ab}\,$ be the transposition of $\,a\:,b\,$.

\vsk.2>
Given $\,I\<\in\Il\,$, let $\,\:a\in I_i\,$ and $\,\:b\in I_j\,$.
A pair $\,(a,b)\,$ is called {\it $\,I\?$-disordered\/}
if either $\,a<b\,$, $\,i>j\,$, or $\,a>b\,$, $\,i<j\,$.
A pair $\,(a,b)\,$ is called {\it $\,I\?$-ordered\/}
\vvn.06>
if either $\,a<b\,$, $\,i<j\,$, or $\,a>b\,$, $\,i>j\,$.
A pair $\,(a,b)\,$ is called {\it $\,I\?$-flat\/} if $\,i=j\,$.

\vsk.2>
Denote $\,M=\{\:\min\:(a,b)+1\lc\max\:(a,b)-1\:\}\,$,
\vvn.1>
$\,k=\min\:(i,j)\,$, $\,l=\max\:(i,j)\,$. If $\,i\ne j\,$, so that $\,k<l\,$,
set \,$m_{\ab,\:I}=\la_{\bra\kl\:\ket}\,$ and
\vvn-.6>
\be
r_{\ab,\:I}\>=\,|\:M\cap I_i\>|+|\:M\cap I_j\>|+
2\>\Bigl|\>M\cap\!\bigcup_{r=k+1}^{l-1}\! I_r\,\Bigr|\,.
\vv-.2>
\ee
Clearly, $\,m_{\ab.\:I}\ge 2\;$ and
$\,\:0\le r_{\ab,\:I}\le m_{\ab,\:I}-2\;$ if $\,\:i\ne j\,$.
Set $\,I_{\:[\ab\:]}=\>\bigcup_{\:r=k}^{\,l} I_r\,$.

\vsk.3>
Set
\vvn-.3>
\beq
\label{Imil}
\Imil\>=\,\bigl(\<(\la_1\lc\la^{(1)})\,,(\la^{(1)}\<+1\lc\la^{(2)})\,,
\,\dots\,(\la^{(N-1)}\<+1\lc n)\bigr)\,.
\vv.3>
\eeq
For $\,I\<\in\Il\,$, let $\,\si_I\in S_n\,$ be the element of minimal length
such that $\,\si_I(\Imil)=I\,$. Notice that
\vvn-.2>
\beq
\label{siI}
|\:\si_I\:|\,=\,
\bigl|\:\{\:(a,b)\ |\ a\in I_i\,,\;b\in I_j\,,\;a<b\,,\;i>j\>\}\bigr|\,.
\vv.2>
\eeq

\begin{lem}
\label{lemapp}
We have
\vsk.2>
\itemflat{\rm a)}
If the pair $\,(a,b)\:$ is $\,I\?$-disordered, then
$\,|\:\si_{s_{a\<,b}(I)}\:|=|\:\si_I\:|-r_{\ab,\:I}-1\,$.
\vsk.2>
\itemflat{\rm b)}
If the pair $\,(a,b)\:$ is $\,I\?$-ordered, then
$\,|\:\si_{s_{a\<,b}(I)}\:|=|\:\si_I\:|+r_{\ab,\:I}+1\,$.
\vsk.2>
\itemflat{\rm c)}
If the pair $\,(a,b)\:$ is $\,I\?$-flat, then $\,s_{\ab}(I)=I\,$.
\end{lem}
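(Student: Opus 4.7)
Part (c) is immediate from the definitions: if \,$a, b \in I_i$\:, then the transposition $s_{a,b}$ permutes elements within $I_i$ while leaving every other $I_r$ pointwise fixed, so $s_{a,b}(I) = I$ as an ordered tuple of sets.

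For parts (a) and (b), the plan is to compute $\Delta := |\sigma_{s_{a,b}(I)}| - |\sigma_I|$ directly from formula \eqref{siI}, which expresses $|\sigma_I|$ as the number of pairs $(c,d)$ with $c<d$, $c \in I_p$, $d \in I_q$, and $p > q$. Since $s_{a,b} = s_{b,a}$ and the quantities $r_{a,b,I}$, $I$-ordered/disordered are symmetric under exchanging $a$ and $b$, I will assume without loss of generality that $a < b$, so $M = \{a+1,\ldots,b-1\}$. I will then partition all pairs $(c,d)$ with $c < d$ into three classes according to how they meet $\{a,b\}$: pairs disjoint from $\{a,b\}$, the pair $(a,b)$ itself, and pairs meeting $\{a,b\}$ in exactly one element. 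The first class contributes $0$ to $\Delta$, and the pair $(a,b)$ contributes $-1$ in the disordered case (where it is an inversion of $I$ but not of $s_{a,b}(I)$) and $+1$ in the ordered case.

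The content of the argument is the third class. For any $c \notin \{a,b\}$ with $c \in I_p$, exactly two relevant pairs involve $c$: the pair of $c$ with $a$ and the pair of $c$ with $b$. In each case one computes the indicator change by comparing whether $p > i$ (or $p < i$) holds against whether $p > j$ (or $p < j$) holds, reflecting the fact that $a$ has moved from $I_i$ to $I_j$ and $b$ has moved from $I_j$ to $I_i$. The key observation, to be verified by direct case analysis on the three ranges $c < a$, $c \in M$, and $c > b$, is:
\begin{itemize}
\item if $c \notin M$, the changes from the two pairs involving $c$ cancel (the ranges of $p$ giving a nonzero change are identical, but the signs are opposite);
\item if $c \in M$ and $c \in I_p$, the two changes combine to give $\pm 2$ when $p$ lies strictly between $k$ and $l$, $\pm 1$ when $p \in \{k,l\}$, and $0$ otherwise, with $+$ in the ordered case and $-$ in the disordered case.
\end{itemize}
Summing the $c \in M$ contributions and recognizing the result as $\pm\bigl(|M \cap I_i| + |M \cap I_j| + 2\:|M \cap \bigcup_{r=k+1}^{l-1} I_r|\bigr) = \pm\: r_{a,b,I}$, and adding the $\pm 1$ from the pair $(a,b)$, yields $\Delta = \pm(r_{a,b,I} + 1)$ as required.

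The main, essentially only, obstacle is the bookkeeping: one must carefully list the six types of pair $(c,a), (a,c), (c,b), (b,c)$ across the three positional regimes for $c$ and check that the cancellations outside $M$ and the precise coefficients inside $M$ come out as claimed. The formula for $r_{a,b,I}$ is visibly tailored to match this count, with the factor $2$ on the "middle" layers $I_{k+1},\ldots,I_{l-1}$ reflecting that $c \in M$ in such a layer contributes nontrivially to both pairs $(a,c)$ and $(c,b)$, while layers $I_k$ and $I_l$ contribute to only one of the two.
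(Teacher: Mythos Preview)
Your argument is correct: the direct count of inversions via formula \eqref{siI}, split according to how each pair meets $\{a,b\}$, is exactly the right computation, and your case analysis (cancellation for $c\notin M$, contributions $\pm1$ or $\pm2$ for $c\in M$ according to whether $p\in\{k,l\}$ or $k<p<l$) checks out. The paper states this lemma without proof, so there is nothing further to compare.
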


\begin{cor}
\label{corapp}
If $\,|\:\si_{s_{a\<,b}(I)}\:|<|\:\si_I\:|\,$, then
the pair $\,(a,b)\:$ is $\,I\?$-disordered.
\vvn.1>
If $\,|\:\si_{s_{a\<,b}(I)}\:|>|\:\si_I\:|\,$, then
the pair $\,(a,b)$ is $\,I\?$-ordered.
\end{cor}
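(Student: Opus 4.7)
The plan is to derive Corollary \ref{corapp} as an immediate contrapositive consequence of Lemma \ref{lemapp}. The key observation is that for any pair $(a,b)$ with $a\ne b$, writing $a\in I_i$ and $b\in I_j$, exactly one of the three conditions in the definitions above holds: either $i=j$ (so $(a,b)$ is $I$-flat), or $i\ne j$ and the relative orders of $a,b$ and $i,j$ agree ($I$-ordered), or $i\ne j$ and these orders disagree ($I$-disordered). Thus the three cases of Lemma \ref{lemapp} partition all possibilities.

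Next, I would record the sign of the length change in each case. In case (a), since $r_{a,b,I}\ge 0$, the length drops by $r_{a,b,I}+1\ge 1$, i.e.\ strictly decreases. In case (b), it strictly increases by the same amount. In case (c), we have $s_{a,b}(I)=I$, hence $\sigma_{s_{a,b}(I)}=\sigma_I$ and the length is unchanged. These three behaviors — strictly smaller, strictly larger, equal — are mutually exclusive.

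From this trichotomy the corollary is immediate. If $|\sigma_{s_{a,b}(I)}|<|\sigma_I|$, then cases (b) and (c) are ruled out, so $(a,b)$ must be $I$-disordered by (a). Symmetrically, if $|\sigma_{s_{a,b}(I)}|>|\sigma_I|$, then cases (a) and (c) are ruled out, so $(a,b)$ must be $I$-ordered by (b).

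There is no substantive obstacle here once Lemma \ref{lemapp} is in hand: the only minor point to verify is the trichotomy itself, which follows directly from the definitions of $I$-disordered, $I$-ordered, and $I$-flat pairs given just above the lemma. Consequently the proof should be a two- or three-line argument invoking Lemma \ref{lemapp} and the nonnegativity of $r_{a,b,I}$.
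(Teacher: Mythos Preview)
Your proposal is correct and is exactly the intended argument: the paper states Corollary~\ref{corapp} without proof, as it follows immediately from Lemma~\ref{lemapp} via the trichotomy you describe and the nonnegativity of $r_{a,b,I}$.
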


A pair $\,(a,b)\,$ is called {\it $\,I\?$-admissible of the first kind\/}
\vvn.1>
if ${\,|\:\si_{s_{a\<,b}(I)}\:|=|\:\si_I\:|-1\,}$,
and {\it $\,I\?$-admis\-sible of the second kind\/} if
$\;|\:\si_{s_{a\<,b}(I)}\:|=|\:\si_I\:|+m_{\ab,\:I}-1\,$.

\begin{lem}
\label{ladm1}
A pair $\,(a,b)$ is $\,I\?$-admissible of the first kind if and only if
\vvn.1>
it is $\,I\?$-disordered and the intersection
$\;\{\:\min\:(a,b)+1\lc\max\:(a,b)-1\:\}\cap I_{\:[\ab\>]}$ \,is empty.
\end{lem}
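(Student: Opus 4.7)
The plan is to unpack the definition of admissibility of the first kind and compare it directly with the explicit formula for $r_{a,b,I}$ provided before the statement of Lemma \ref{lemapp}. By definition, $(a,b)$ is $I$-admissible of the first kind precisely when $|\sigma_{s_{a,b}(I)}| = |\sigma_I| - 1$, which is a decrease in the length. Hence Corollary \ref{corapp} immediately forces the pair $(a,b)$ to be $I$-disordered (the $I$-flat case is ruled out by part (c) of Lemma \ref{lemapp}, since then $s_{a,b}(I) = I$ and the lengths coincide).

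Assuming $(a,b)$ is $I$-disordered, I would then invoke Lemma \ref{lemapp}(a) to rewrite
\[
|\sigma_{s_{a,b}(I)}| \,=\, |\sigma_I| - r_{a,b,I} - 1\,.
\]
Thus the admissibility condition $|\sigma_{s_{a,b}(I)}| = |\sigma_I| - 1$ is equivalent to $r_{a,b,I} = 0$. I would then recall the explicit expression
\[
r_{a,b,I}\,=\,|M\cap I_i| + |M\cap I_j| + 2\,\Bigl|\,M\cap\!\bigcup_{r=k+1}^{l-1}\! I_r\,\Bigr|\,,
\]
with $M=\{\min(a,b)+1,\ldots,\max(a,b)-1\}$, $k=\min(i,j)$, $l=\max(i,j)$.

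Since this is a sum of non-negative integers, it vanishes if and only if each of the three terms vanishes, i.e., if and only if $M$ is disjoint from $I_i$, from $I_j$, and from $\bigcup_{r=k+1}^{l-1} I_r$. Because $I_{[a,b]} = \bigcup_{r=k}^{l} I_r = I_i \cup I_j \cup \bigcup_{r=k+1}^{l-1} I_r$ (using $k=\min(i,j)$, $l=\max(i,j)$ and that $i\ne j$ for a disordered pair), this is exactly the condition $M\cap I_{[a,b]}=\emptyset$. Combining the two equivalences yields the lemma.

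The argument is really just a rewriting of definitions, so there is no serious obstacle; the only point requiring a modicum of care is checking that the $I$-flat case is correctly excluded and that the identification $I_i \cup I_j \cup \bigcup_{r=k+1}^{l-1} I_r = I_{[a,b]}$ matches the notation fixed before Lemma \ref{lemapp}. Once that bookkeeping is in place, the equivalence is immediate.
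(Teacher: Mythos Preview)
Your argument is correct and is exactly the intended one: the paper states Lemma~\ref{ladm1} without proof, treating it as an immediate consequence of Lemma~\ref{lemapp}(a) and Corollary~\ref{corapp}, together with the observation that $r_{a,b,I}=0$ is equivalent to $M\cap I_{[a,b]}=\varnothing$. Your write-up of that bookkeeping is complete and needs no change.
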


\begin{lem}
\label{ladm2}
A pair $\,(a,b)$ is $\,I\?$-admissible of the second kind if and only if
\vvn.1>
it is $\,I\?$-ordered and the intersection
$\;\{\:1\lc\min\:(a,b)-1,\,\max\:(a,b)+1\lc n\:\}\cap I_{\:[\ab\>]}$
\,is empty.
\end{lem}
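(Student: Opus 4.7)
The plan is to reduce the claim to Lemma \ref{lemapp}(b) and then carry out an elementary count, in three steps.

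First, I would dispose of the flat case. If $(a,b)$ is $I$-flat then $s_{a,b}(I)=I$ by Lemma \ref{lemapp}(c), so $|\sigma_{s_{a,b}(I)}|=|\sigma_I|$, while $m_{a,b,I}$ is defined only when $i\ne j$ and then satisfies $m_{a,b,I}\ge 2$; in particular second-kind admissibility cannot hold in the flat case. For $i\ne j$ the hypothesis $|\sigma_{s_{a,b}(I)}|=|\sigma_I|+m_{a,b,I}-1$ strictly increases the length, so Corollary \ref{corapp} forces $(a,b)$ to be $I$-ordered. This yields the ``only if'' half of the equivalence.

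Second, for an $I$-ordered pair $(a,b)$ Lemma \ref{lemapp}(b) rewrites the defining equality of second-kind admissibility as $r_{a,b,I}+1=m_{a,b,I}-1$, i.e.\ $r_{a,b,I}=m_{a,b,I}-2$. So the remaining task is to show that, for $I$-ordered pairs, this equality holds if and only if $I_{[a,b]}\cap(\{1,\dots,\min(a,b)-1\}\cup\{\max(a,b)+1,\dots,n\})=\emptyset$.

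Third, I would carry out the count. With $k=\min(i,j)$ and $l=\max(i,j)$, the definitions expand to $m_{a,b,I}=\lambda_k+\lambda_l+2\sum_{r=k+1}^{l-1}\lambda_r$ and $r_{a,b,I}=|M\cap I_i|+|M\cap I_j|+2\bigl|M\cap\bigcup_{r=k+1}^{l-1}I_r\bigr|$, where $M=\{\min(a,b)+1,\dots,\max(a,b)-1\}$. Since $a$ and $b$ are the endpoints of the closed interval $[\min(a,b),\max(a,b)]$, neither lies in the open interval $M$; consequently $|M\cap I_i|\le\lambda_i-1$, $|M\cap I_j|\le\lambda_j-1$, and trivially $|M\cap I_r|\le\lambda_r$ for indices $k<r<l$. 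Summing, $r_{a,b,I}\le m_{a,b,I}-2$, which is the upper bound already noted in the text. Equality holds iff each of the three inequalities is saturated, equivalently $I_{[a,b]}\setminus\{a,b\}\subseteq M$, equivalently every element of $I_{[a,b]}$ lies in $[\min(a,b),\max(a,b)]$; this is exactly the stated emptiness of the intersection with the complement.

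I do not expect any real obstacle: the argument mirrors the proof of Lemma \ref{ladm1} with ``ordered/disordered'' swapped and $r_{a,b,I}$ maximized instead of minimized. The only mild points of care are to eliminate the flat case up front and to derive the $m_{a,b,I}-2$ upper bound from the observation $a,b\notin M$.
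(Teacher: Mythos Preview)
Your argument is correct. The paper states Lemma~\ref{ladm2} without proof, so there is no written proof to compare against; your route via Lemma~\ref{lemapp}(b) and Corollary~\ref{corapp}, reducing second\:-kind admissibility for ordered pairs to the saturation condition $r_{\ab,\:I}=m_{\ab,\:I}-2$ and then reading that off as $I_{\:[\ab\>]}\setminus\{a,b\}\subset M$, is exactly the natural argument implicit in the setup of Section~\ref{secXo}. One cosmetic remark: the sentence ``This yields the `only if' half of the equivalence'' at the end of step~1 slightly overstates what has been shown at that point (only the ordered\-ness, not yet the emptiness), but since step~3 supplies the rest the overall logic is sound.
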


\begin{example}
Let $\,N=4\,$, $\,n=5\,$, $\,\bla=(2,1,1,1)\,$,
$\,I=\bigl(\{\:1,3\:\},\{4\},\{2\},\{5\}\bigr)\,$.
Then the $\,I\<$-ad\-mis\-sible pairs of the first kind are $\,(3,2)\,$,
$(2,3)\,$, $(4,2)\,$, $(2,4)\,$, and the $\,I\<$-admissible pairs of the second
kind are $\,(1,4)\,$, $(4,1)\,$, $(1,5)\,$, $(5,1)\,$, $(2,5)\,$, $(5,2)\,$.
\end{example}

For all $\,i,j,a,b\,$, define linear operators $\,Q_{\ij}^{\>\ab}\>$
acting on $\,\Cnn$ by the rule
\begin{alignat*}2
Q_{\ij}^{\>\ab}\>v_I\> &{}=\,v_{s_{a\<,b}(I)}\,,\quad &&
\text{if $\,a\in I_i\,$, $\,b\in I_j\,$, and
the pair $\,(a,b)\,$ is $\,I\<$-admissible,}
\\[1pt]
Q_{\ij}^{\>\ab}\>v_I\> &{}=\,0\,, &&\text{otherwise.}
\end{alignat*}
Recall the linear operators $\,e_{\ii}^{(a)}\>$ acting on $\,\Cnn\>$.

\vsk.2>
The dynamical operators $\,\Xo_1\lc\Xo_{\?N}\,$, acting on $\,\Cnn$ are
given by the formula
\vvn.3>
\begin{align}
\label{Xo}
\Xo_i(\zz\:;\pp)\,=\sum_{a=1}^n z_a\,e_{\ii}^{(a)}\>
&{}+\!\sum_{1\le b<a\le n}\<\biggl(\>\sum_{j=i+1}^N\?Q_{\ij}^{\>\ab}\:-
\sum_{j=1}^{i-1}\>\frac{p_i}{p_j}\,Q_{\ij}^{\>\ab}\>\biggr)+{}
\\[3pt]
\notag
&{}+\!\sum_{1\le a<b\le n}\<\biggl(\>
\sum_{j=i+1}^N\>\frac{p_j}{p_i}\>Q_{\ij}^{\>\ab}\:-
\sum_{j=1}^{i-1}\>Q_{\ij}^{\>\ab}\>\biggr)\>.
\end{align}
\vsk.2>\noindent
Notice that the operators $\,\Xo_1\lc\Xo_{\?N}\>$ preserve the weight decomposition
of $\Cnn$.

\vsk.3>
\begin{example}
Let $\,N=2\,$, $\,n=3\,$. The operators $\,\Xo_1\>,\:\Xo_2\>$
preserve the subspace spanned by the vectors
\vvn.1>
\be
v_{(\{1\},\{2,3\})}\:=\>v_1\<\ox v_2\<\ox v_2\,,\qquad
v_{(\{2\},\{1,3\})}\:=\>v_2\<\ox v_1\<\ox v_2\,,\qquad
v_{(\{3\},\{1,2\})}\:=\>v_2\<\ox v_2\<\ox v_1\,.
\vv.4>
\ee
of weight $\,\bla=(1,2)\,$.
The matrices of $\,\Xo_1\>,\:\Xo_2\>$ in this basis are
\vvn.3>
\be
\quad
\Xo_1=\left(\,
\begin{matrix}
z_1 & 1 & 0
\\
0 & z_2 & 1
\\
p_2/p_1 & 0 & z_3
\end{matrix}
\,\right), \qquad
\Xo_2=\left(\,
\begin{matrix}
z_2\:+z_3 & -\:1 & 0
\\
0 & z_1\:+z_3 & -\:1
\\
-\>p_2/p_1 & 0 & z_1\:+z_2
\end{matrix}
\,\right).
\vv.5>
\ee
\end{example}

\begin{example} Let $\,N=5\,$, $\,n=6\,$,
$\,I=\bigl(\{\:2,5\},\{6\},\{3\},\{1\},\{4\}\bigr)\,$.
Then
\vvn.4>
\begin{align*}
\Xo_1(\zz\:;\pp)\,v_I\,=\,{}& (z_2+z_5)\,v_I\:+\:
v_{(\{\:1,5\},\{6\},\{3\},\{2\},\{4\})}+\:
v_{(\{\:2,4\},\{6\},\{3\},\{1\},\{5\})}+{}
\\[6pt]
&\hp{(z_2+z_5)\,v_I}\:+\,v_{(\{\:2,3\},\{6\},\{5\},\{1\},\{4\})}+\>
\frac{p_2}{p_1}\,v_{(\{\:5,6\},\{2\},\{3\},\{1\},\{4\})}\,,
\\[4pt]
\Xo_2(\zz\:;\pp)\,v_I\,=\,{}& z_5\,v_I\:+\:
v_{(\{\:2,5\},\{4\},\{3\},\{1\},\{6\})}+\:
v_{(\{\:2,5\},\{3\},\{6\},\{1\},\{4\})}-
\,\frac{p_2}{p_1}\,v_{(\{\:5,6\},\{2\},\{3\},\{1\},\{4\})}\,.
\end{align*}
\end{example}

\subsection{Limit of dynamical Hamiltonians}
Let the operators $\,\Xt_1\lc\Xt_N$ be obtained from the dynamical Hamiltonians
$X_1\lc X_N$, see \eqref{Xi}, by substitution \eqref{q->tq}\,. In more detail,
the operators $\,\Xt_1\lc\Xt_N$ preserve the weight decomposition of $\Cnn$
and the action of $\,\Xt_i(\zz\:;h;\pp)$ on $\,\Cnnl$ coincides with the action
of $\,X_i(\zz\:;h;\qq)\,$ for $\,\qq=(q_1\lc q_N)$,
\beq
\label{q->tqt}
q_j\>=\,p_j\,(\<-\:h)^{\,\sum_{k>j}\la_k\,-\,\sum_{k<j}\la_k}\,,\qquad j=1\lc N\,.
\kern-2em
\eeq

\begin{lem}
\label{lem X-lim}
For $i=1\lc N$, we have
\beq
\Xo_i(\zz\:;\pp)\,=\lim_{h\to\infty}
\bigl(\<(\<-\:h)^{\,\sum_{b<c,\,j<k}e^{(b)}_{k,k}e^{(c)}_{j,j}}\,\Xt_i(\zz\:;h;\pp)\,
(\<-\:h)^{\,-\sum_{b<c,\,j<k}e^{(b)}_{k,k}e^{(c)}_{j,j}}\,\bigr)
\,.\kern-2em
\eeq
\end{lem}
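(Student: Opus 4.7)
The plan is to apply Lemma \ref{lem new X} to rewrite $X_i(\zz;h;\qq)$ in its simpler form, substitute \eqref{q->tqt} to produce $\Xt_i(\zz;h;\pp)$, and then compute the limit matrix-element by matrix-element in the basis $(v_I)_{I\in\Il}$ of $\Cnnl$. The starting observation is that the conjugating operator
\be
U_h\,:=\,(-h)^{\sum_{b<c,\,j<k}e^{(b)}_{k,k}e^{(c)}_{j,j}}
\ee
is diagonal in this basis with eigenvalue $(-h)^{|\si_I|}$ on $v_I$: indeed, by \eqref{siI}, $|\si_I|$ is exactly the number of pairs $(b,c)$ with $b<c$, $b\in I_k$, $c\in I_j$, $j<k$, which is the eigenvalue of the exponent. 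Hence the matrix element of $U_h\:\Xt_i\>U_h^{-1}$ from $v_I$ to $v_{I'}$ equals the matrix element of $\Xt_i$ multiplied by $(-h)^{|\si_{I'}|-|\si_I|}$. The diagonal part $\sum_a z_a e^{(a)}_{\ii}$ of $\Xt_i$ is unaffected by both the conjugation and the limit and contributes the first summand of \eqref{Xo} directly.

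The heart of the argument is a case analysis of the off-diagonal matrix elements $v_I\mapsto v_{I'}$ with $I'=s_{\ab}(I)$. Each such element comes from a single summand of \eqref{Xi} after \eqref{q->tqt}, and there are four cases: the summand is $e^{(a)}_{\ij}e^{(b)}_{\ji}$ (requiring $a\in I_j$, $b\in I_i$) or $e^{(a)}_{\ji}e^{(b)}_{\ij}$ (requiring $a\in I_i$, $b\in I_j$), and $j>i$ or $j<i$. Using that, on $\Cnnl$, $q_j/q_i=(p_j/p_i)(-h)^{-\la_{\bra ij\ket}}$ for $j>i$ and $q_j/q_i=(p_j/p_i)(-h)^{\la_{\bra ji\ket}}$ for $j<i$, one extracts an explicit power of $h$ from the prefactor $-h/(1-q_j/q_i)$ (or $h/(1-q_i/q_j)$). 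Combined with $|\si_{I'}|-|\si_I|=\pm(r_{\ab,I}+1)$ from Lemma \ref{lemapp} (the sign tracking whether $(a,b)$ is $I$-ordered or $I$-disordered), this produces the total power of $h$ in the matrix element.

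The key observation is then that the total power of $h$ vanishes in the limit precisely when either $r_{\ab,I}=0$ in the disordered cases or $r_{\ab,I}=m_{\ab,I}-2$ in the ordered cases; by Lemmas \ref{ladm1} and \ref{ladm2}, these are exactly the $I$-admissible pairs of the first and second kind. A short sign count in each surviving case yields one of the coefficients $1$, $-1$, $p_j/p_i$, $-p_i/p_j$; after relabeling $a\leftrightarrow b$ in the summands of the $e^{(a)}_{\ij}e^{(b)}_{\ji}$-type to convert $\sum_{a<b}$ into $\sum_{b<a}$, the four contributions assemble precisely into the four sums of \eqref{Xo}. I expect the main obstacle to be the careful bookkeeping of signs and of which admissibility condition applies in each case --- in particular, verifying that the combinatorial admissibility from Lemmas \ref{ladm1} and \ref{ladm2} is equivalent to the algebraic vanishing of the total power of $h$ in the corresponding matrix element, and that the four surviving coefficients match those in \eqref{Xo} exactly.
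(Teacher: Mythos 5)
Your proposal is correct and follows exactly the route of the paper's own (one-line) proof: direct verification via Lemma \ref{lem new X}, the observation that the conjugating operator acts on $v_I$ by $(-h)^{|\si_I|}$ (from \eqref{siI}), and Lemma \ref{lemapp} to track $|\si_{s_{a,b}(I)}|-|\si_I|=\pm(r_{\ab,I}+1)$ against the powers of $h$ coming from \eqref{q->tqt}. The only simplification worth noting is that Lemmas \ref{ladm1} and \ref{ladm2} are not needed: admissibility is defined directly by the length condition, so Lemma \ref{lemapp} alone identifies the surviving matrix elements with $r_{\ab,I}=0$ (disordered) or $r_{\ab,I}=m_{\ab,I}-2$ (ordered).
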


\begin{proof}
The lemma is proved by direct verification using Lemma \ref{lem new X},
the equality
\vvn.3>
\be
\sum_{b<c,\,j<k}\!e^{(b)}_{k,k}\,e^{(c)}_{j,j}\,v_I\,=\,
|\:\si_I\:|\,v_I\,,
\ee
following from \eqref{siI}, and Lemma \ref{lemapp}.
\end{proof}

\begin{cor} The differential operators
\beq
\label{nadyn}
\nobla_{\<\pp,\ka,i}\,=\,
\ka\>p_i\:\frac\der{\der\:p_i} - \Xo_i(\zz\:;\pp)\,,
\qquad i=1\lc N\>,
\eeq
preserve the weight decomposition of $\,\Cnn\!$ and pairwise commute.
\end{cor}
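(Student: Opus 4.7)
The weight-preservation claim is immediate from formula \eqref{Xo}: each operator $Q_{\ij}^{\>\ab}$ either sends $v_I$ to $v_{s_{a,b}(I)}$, which lies in the same weight subspace $\Cnnl$, or to zero; the operators $e_{\ii}^{(a)}$ are diagonal in the basis $(v_I)$; and $p_i\>\der/\der p_i$ is a scalar on each weight subspace. So each $\nobla_{\<\pp,\ka,i}$ preserves the weight decomposition of $\Cnn$.

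For commutativity, the plan is to transfer the known commutativity of $\{\nabla_{\qq,\ka,i}\}$, quoted after \eqref{nady}, through the substitution \eqref{q->tqt} and then through the conjugation that appears in Lemma~\ref{lem X-lim}. On the weight subspace $\Cnnl$, substitution \eqref{q->tqt} has the form $q_j=c_j(h,\bla)\:p_j$ with $c_j(h,\bla)=(-h)^{\sum_{k>j}\la_k-\sum_{k<j}\la_k}$ a nonzero scalar independent of $\pp$. Hence $q_j\>\der/\der q_j=p_j\>\der/\der p_j$ on $\Cnnl$, and the pulled-back operators $\Tilde\nabla_{\pp,\ka,i,h}:=\ka\:p_i\>\der/\der p_i-\Xt_i(\zz;h;\pp)$ still pairwise commute for every $h$.

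Next, set $U_h=(-h)^{\sum_{b<c,\,j<k}e^{(b)}_{k,k}e^{(c)}_{j,j}}$. Since $U_h$ is diagonal in the basis $(v_I)$ and independent of $\pp$, it commutes with $p_i\>\der/\der p_i$; therefore
\be
U_h\Tilde\nabla_{\pp,\ka,i,h}U_h^{-1}\,=\,\ka\:p_i\>\frac\der{\der p_i}\,-\,U_h\:\Xt_i(\zz;h;\pp)\:U_h^{-1}
\ee
still pairwise commute for every $h$. By Lemma~\ref{lem X-lim}, the second summand converges entry-wise in the finite basis $(v_I)_{I\in\Il}$ to $\Xo_i(\zz;\pp)$ as $h\to\infty$, while the first summand is independent of $h$. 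Passing to the limit $h\to\infty$ in the commutator identity $[\,U_h\Tilde\nabla_i U_h^{-1},\,U_h\Tilde\nabla_j U_h^{-1}\,]=0$ yields $[\nobla_{\<\pp,\ka,i},\nobla_{\<\pp,\ka,j}]=0$, which is the desired commutativity. Note that the limit and commutator bracket may be freely interchanged because, on each fixed weight subspace, the matrix entries are rational functions of $\pp$ whose coefficients stabilize in $h$.

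The only subtle point is ensuring that both intermediate moves --- the substitution and the conjugation --- actually preserve commutativity; this is the case precisely because $c_j(h,\bla)$ and $U_h$ are independent of $\pp$, so they do not interfere with $p_i\>\der/\der p_i$. An alternative, more computational route would be to verify the commutation relations $[\Xo_i,\Xo_j]+p_i\>\der\Xo_j/\der p_i-p_j\>\der\Xo_i/\der p_j=0$ directly from \eqref{Xo} by a case analysis on the pairs $(a,b)$ and the indices $(i,j)$; this would be feasible but significantly more tedious, and I would therefore avoid it in favor of the limit argument above.
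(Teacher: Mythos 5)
Your proof is correct and follows essentially the same route the paper intends: the corollary is stated immediately after Lemma~\ref{lem X-lim} precisely because commutativity of the limiting operators is meant to be inherited from the commutativity of the original dynamical connection \eqref{nady} via the substitution \eqref{q->tq} and the conjugation/limit of Lemma~\ref{lem X-lim} (compare the proof of Theorem~\ref{thm qkz n}, which is the same argument for the joint system). Your explicit justification that the substitution and the diagonal conjugation do not disturb $p_i\>\der/\der p_i$, and that the limit commutes with taking commutators on each finite-dimensional weight subspace, supplies the details the paper leaves implicit.
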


\subsection{Limiting equations}
The system of difference equations with step $\ka$,
\vvn.2>
\beq
\label{Kio}
f(z_1\lc z_a+\ka\lc z_n;\pp)\,=\,\Ko_a(\zz\:;\pp\:;\ka)\,f(\zz\:;\pp)\,,
\qquad a=1\lc n\>,\kern-2em
\vv.2>
\eeq
for a $\:(\C^N)^{\ox n}\<$-valued function $f(\zz,\pp)$ will be called
the {\it limiting \qKZ/ equations\/}.

\vsk.2>
The system of differential equations with parameter $\ka$,
\vvn.2>
\beq
\label{DEQo}
\ka\>p_i\:\frac{\der f}{\der\:p_i}\,=\,\Xo_i(\zz\:;\pp)\>f\,,\qquad i=1\lc N\>,
\eeq
for a \,$\Cnn$-valued function $f(\zz\:;\pp)$ will be called the
{\it limiting dynamical equations\/}.

\begin{thm}
\label{thm qkz n}
The joint systems of limiting dynamical and limiting \qKZ/ equations
with the same parameter $\ka$ is compatible.
\end{thm}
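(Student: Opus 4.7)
The plan is to deduce compatibility of the limiting system from the already-established compatibility of the initial system (Theorem \ref{thm qkz}) by the same conjugate-and-take-the-limit procedure used in Lemmas \ref{lem K-lim} and \ref{lem X-lim}. Compatibility of a joint dynamical/difference system amounts to three families of operator identities: (i) the dynamical operators pairwise commute; (ii) the \qKZ/ operators form a discrete flat connection, as in \eqref{flatK}; and (iii) for each $a$ and $i$ the cross-compatibility
\begin{equation*}
\bigl(\ka\:q_i\>\partial_{q_i}-X_i(\zz+\ka e_a;h;\qq)\bigr)\,K_a(\zz;h;\qq;\ka)\,=\,K_a(\zz;h;\qq;\ka)\,\bigl(\ka\:q_i\>\partial_{q_i}-X_i(\zz;h;\qq)\bigr).
\end{equation*}
Theorem \ref{thm qkz} provides all three for the initial system. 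The analogues of (i) and (ii) for the limiting system have already been established in the corollary after Lemma \ref{lem X-lim} and in Corollary \ref{cor qKZ} respectively, so only the cross-compatibility for $\Ko_a$ and $\Xo_i$ remains.

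First I would apply the substitution $\qq\mapsto\qq(\pp,h)$ from \eqref{q->tqt} to (iii). On each weight subspace $\Cnn_{\:\bla}$ this substitution multiplies each $q_j$ by a non-zero $h$-dependent scalar, so $q_i\partial_{q_i}$ transforms into $p_i\partial_{p_i}$ and the identity becomes
\begin{equation*}
\bigl(\ka\:p_i\>\partial_{p_i}-\Xt_i(\zz+\ka e_a;h;\pp)\bigr)\,\Kt_a(\zz;h;\pp;\ka)\,=\,\Kt_a(\zz;h;\pp;\ka)\,\bigl(\ka\:p_i\>\partial_{p_i}-\Xt_i(\zz;h;\pp)\bigr).
\end{equation*}

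Next I would conjugate both sides by the operator $C(h)=(-h)^{\sum_{b<c,\,j<k}e^{(b)}_{k,k}e^{(c)}_{j,j}}$ that appears in Lemmas \ref{lem K-lim} and \ref{lem X-lim}. Since $C(h)$ is diagonal in the weight basis and is independent of $\zz$, $\pp$, and $\ka$, it commutes with $\partial_{p_i}$ and with the shift $\zz\mapsto\zz+\ka e_a$; hence the conjugated identity retains the same form with $\Kt_a$ and $\Xt_i$ replaced by their $C(h)$-conjugates. Taking $h\to\infty$ in this conjugated identity, Lemmas \ref{lem K-lim} and \ref{lem X-lim} say that the conjugates converge entrywise in the basis $(v_I)_{I\in\Il}$ to $\Ko_a(\zz;\pp;\ka)$ and $\Xo_i(\zz;\pp)$, and the matrix entries involved are Laurent polynomials in $\pp$ whose $h$-dependence comprises only finitely many rational terms with well-defined limits, so $p_i\partial_{p_i}$ commutes with the limit. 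The resulting limit identity is exactly the cross-compatibility for $\Ko_a$ and $\Xo_i$, which together with (i) and (ii) proves the theorem.

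The main obstacle is purely bookkeeping: one must check that the $(-h)$-factors produced by \eqref{q->tqt} cancel uniformly in $\pp$ against the conjugation by $C(h)$, so that the $h\to\infty$ limit of the cross-compatibility identity is well-posed and can be differentiated term-by-term. These cancellations are precisely the ones already carried out in the proofs of Lemmas \ref{lem K-lim} and \ref{lem X-lim}, so no new analysis beyond what is already in the paper is required.
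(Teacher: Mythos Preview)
Your proof is correct and takes essentially the same approach as the paper, which simply cites Theorem~\ref{thm qkz} together with Lemmas~\ref{lem K-lim} and~\ref{lem X-lim}; you have just spelled out the underlying mechanism (substitution \eqref{q->tqt}, conjugation by $C(h)$, and passage to the limit) and isolated the cross-compatibility as the only identity not already recorded in the corollaries. Your remark that the interchange of $p_i\partial_{p_i}$ with the $h\to\infty$ limit is harmless because the relevant matrix entries are rational in $h$ with well-defined limits (locally uniformly in $\pp$) is the right justification for the one step the paper leaves implicit.
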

\begin{proof}
The theorem follows from Theorem \ref{thm qkz} and
Lemmas \ref{lem K-lim}, \ref{lem X-lim}.
\end{proof}

\section{
Hypergeometric solutions of the dynamical and \qKZ/ equations}
\label{sec solns}

\subsection{Weight functions $W_I(\TT\:;\zz\:;h)$}
\label{dwf}
For $I\in \Il$, we define the weight functions
\vvn.1>
$W_I(\TT\:;\zz\:;h)$, cf.~\cite{TV1,TV4,RTV1,TV6}.
The functions \,$W_I(\TT\:;\zz\:;h)$ \,here coincide with those defined \,in
\cite[Section~3.4]{TV6}.

\vsk.2>
Recall \,$\bla=(\la_1\lc\la_N)\,$ and $\,I=(I_1\lc I_N)$. Set
\;$\bigcup_{\>k=1}^{\,j}I_k=\>\{\>i^{(j)}_1\!\lsym<i^{(j)}_{\la^{(j)}}\}$\>.
Consider the variables \,$t^{(j)}_a$, \,$j=1\lc N-1$, \,$a=1\lc\la^{(j)}$.
\,Set \,$t^{(N)}_a\!=z_a$, \,$a=1\lc n$\>.
Denote $\,\TT^{(j)}=(t^{(j)}_1\lc t^{(j)}_{\la^{(j)}})\,$ and
$\,\TT=(\TT^{(1)}\lc\TT^{(N-1)})\,$.

\vsk.2>
For $\,I\<\in\Il\,$, define
\beq
\label{SiI}
\Si_I\>=\,(z_{i^{(1)}_1}\lc z_{i^{(1)}_{\la^{\vp1\smash{(1)}}}}\?,
z_{i^{(2)}_1}\lc z_{i^{(2)}_{\la^{\vp1\smash{(2)}}}}\?,\,\ldots\,,\,
z_{i^{(N-1)}_1}\lc z_{i^{(N-1)}_{\la^{\vp1\smash{(N-1)}}}})\,,
\eeq
so that, $\,\TT=\Si_I\,$ reads in detail as
\be
t^{(k)}_a\<=\>z_{i^{(k)}_a}, \qquad k=1\lc N-1\,,\quad i=a\lc\la^{(k)}\>.
\ee

\vsk.2>
For a function $\,f(x_1\lc x_k)$ of some variables, denote
\vvn.2>
\be
\Sym_{\:x_1\lc x_k}\<f(x_1\lc x_k)\,=\,
\sum_{\si\in S_k}f(x_{\si(1)}\lc x_{\si(k)})\,.
\ee

\vsk-.4>
The weight functions are
\vvn.3>
\beq
\label{hWI}
W_I(\TT\:;\zz\:;h)\,=\,
\Sym_{\>t^{(1)}_1\!\lc\,t^{(1)}_{\la^{(1)}}}\,\ldots\;
\Sym_{\>t^{(N-1)}_1\!\lc\,t^{(N-1)}_{\la^{(N-1)}}}
U_I(\TT\:;\zz\:;h)\,,
\vv-.4>
\eeq
\begin{align*}
& U_I(\TT\:;\zz\:;h)\,={}
\\[4pt]
\notag
&{}\!=\,\prod_{j=1}^{N-1}\,\prod_{a=1}^{\la^{(j)}}\,\biggl(
\prod_{\satop{c=1}{i^{(j+1)}_c\?<\>i^{(j)}_a}}^{\la^{(j+1)}}
\!\!(t^{(j)}_a\?-t^{(j+1)}_c)
\prod_{\satop{d=1}{i^{(j+1)}_d>\>i^{(j)}_a}}^{\la^{(j+1)}}
\!\!(t^{(j)}_a\?-t^{(j+1)}_d-h )\,\prod_{b=a+1}^{\la^{(j)}}
\frac{t^{(j)}_b\?-t^{(j)}_a\?-h}{t^{(j)}_b\?-t^{(j)}_a}\,\biggr)\,.
\kern-1em
\end{align*}

\begin{example}
Let $N=2$, $n=2$, $\bla=(1,1)$, $I=(\{1\},\{2\})$, $J=(\{2\},\{1\})$. Then
\vvn.1>
\be
W_I(\TT\:;\zz\:;h)\,=\,t^{(1)}_1\?-z_2-h\,, \qquad
W_J(\TT\:;\zz\:;h)\,=\,t^{(1)}_1\?-z_1\,.
\vv.2>
\ee
\end{example}

\begin{lem}
[\cite{RTV1,TV6}]
\label{c3t}
For any $\,I\in \Il\,$, $\,i=1\lc N-1\,$, and $\,a=1\lc n-1\,$, we have
\begin{align}
\label{m3t}
& W_{s_{a\<,a+1}(I)}(\TT\:;\zz\:;h)\,={}
\\[4pt]
\notag
\!&{}=\,\frac{z_a\<-z_{a+1}\<-h}{z_a\<-z_{a+1}}\,
W_I(\TT\:;z_1\lc z_{a+1},z_a\lc z_n;h)\>+\>\frac{h}{z_a\<-z_{a+1}}\,
W_I(\TT\:;\zz\:;h)\,.
\end{align}
\end{lem}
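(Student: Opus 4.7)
The plan is to verify \eqref{m3t} by direct combinatorial analysis of $U_I$ from its definition, exploiting that $z_a, z_{a+1}$ enter $U_I$ only through the level-$(N-1)\to N$ factors, and that $I$ enters only through the sorted index lists $\bigcup_{k\le j}I_k=\{i^{(j)}_1<\cdots<i^{(j)}_{\la^{(j)}}\}$. Since $s_{a,a+1}$ swaps two consecutive integers, the lists for $s_{a,a+1}(I)$ differ from those for $I$ only at a single position, and only at levels $j$ strictly between the blocks containing $a$ and $a+1$.

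I would split into two cases. In Case A, $a$ and $a+1$ lie in a common block $I_p$, so $s_{a,a+1}(I) = I$; since the two coefficients on the right-hand side of \eqref{m3t} sum to $1$, the identity reduces to the symmetry of $W_I$ in $(z_a, z_{a+1})$. If $p=N$ this is immediate. If $p<N$, then $a, a+1$ occupy consecutive positions $\alpha,\alpha+1$ in every list $i^{(j)}_\cdot$ with $j\ge p$, the only $(z_a,z_{a+1})$-asymmetric factors in $U_I$ are $(t^{(N-1)}_\alpha-z_{a+1}-h)$ and $(t^{(N-1)}_{\alpha+1}-z_a)$, and the partial symmetrization $t^{(N-1)}_\alpha\leftrightarrow t^{(N-1)}_{\alpha+1}$, combined with the Yang--Baxter factor $\tfrac{t^{(N-1)}_{\alpha+1}-t^{(N-1)}_\alpha-h}{t^{(N-1)}_{\alpha+1}-t^{(N-1)}_\alpha}$ built into $U_I$, restores symmetry.

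Case B has $a\in I_p$ and $a+1\in I_q$ with $p\ne q$; by symmetry suppose $p<q$. I would induct on $q-p$. In the base case $q=p+1$ only one level-$(N-1)\to N$ factor is altered, and the ratio $U_{s_{a,a+1}(I)}/U_I$ becomes an explicit rational function in $z_a,z_{a+1},h$; the identity \eqref{m3t} then boils down to the algebraic identity $(u-h)f(u,v)+h\,f(v,u)=u\,g(u,v)$ with $u=z_a-z_{a+1}$, verified by substitution. The inductive step factors the swap $s_{a,a+1}$ through an auxiliary partition in which $a+1$ has been moved one block closer to $I_p$, reducing the general case to repeated base-case applications. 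The main obstacle is the combinatorial bookkeeping in Case B when $q-p>1$: several intermediate levels have altered index lists, so a direct expansion is unwieldy. The inductive reduction sketched above sidesteps this by localizing each change to a single level at a time, where the Yang--Baxter-type factors in $U_I$ make the algebra collapse cleanly.
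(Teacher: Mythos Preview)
The paper does not prove this lemma; it is quoted from the references \cite{RTV1,TV6} without argument. So there is no paper's proof to compare with, and your sketch must stand on its own.

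The decisive gap is in Case B. You work with the summand $U_I$ and speak of the ratio $U_{s_{a,a+1}(I)}/U_I$, but the three-term relation \eqref{m3t} does \emph{not} hold before symmetrization whenever $a+1\in I_q$ with $q<N$. Take $N=3$, $n=3$, $\bla=(1,1,1)$, $I=(\{1\},\{2\},\{3\})$, $a=1$ (so $p=1$, $q=2$). The list $i^{(2)}=\{1,2\}$ is the same for $I$ and $s_{1,2}(I)$, hence the level\:-$2$ factors---the only place $z_1,z_2$ enter---are identical; what differs is the level\:-$1$ factor: $U_I$ contains $(t^{(1)}_1-t^{(2)}_2-h)$ while $U_{s_{1,2}(I)}$ contains $(t^{(1)}_1-t^{(2)}_1)$. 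Setting $t^{(1)}_1=t^{(2)}_2+h$ kills both terms on the right of \eqref{m3t} (they share the factor $t^{(1)}_1-t^{(2)}_2-h$) but not the left. Only the $\mathrm{Sym}$ over $t^{(2)}_1,t^{(2)}_2$ restores the identity. Thus ``only one level\:-$(N{-}1)\to N$ factor is altered'' holds only when $q=N$; in general the altered factor lives at level $q-1$ and involves the $t^{(q)}$-variables rather than $z_a,z_{a+1}$, so your ratio reduction and the $(u-h)f(u,v)+h\,f(v,u)=u\,g(u,v)$ identity do not apply.

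Your inductive step is also ill-posed: moving $a+1$ from $I_q$ to $I_{q-1}$ changes the type $\bla$ unless another element is moved out, and since $a,a+1$ are consecutive integers there is no intermediate element to transpose with via an $s_{c,c+1}$. A workable argument must confront the symmetrization directly---tracking how $\mathrm{Sym}_{t^{(q-1)}}$ interacts with the single altered comparison at level $q-1$---which is essentially what the $R$-matrix/exchange-relation proofs in \cite{RTV1,TV6} do. Your Case A has a smaller version of the same issue when $p<N-1$: swapping $t^{(N-1)}_\alpha\leftrightarrow t^{(N-1)}_{\alpha+1}$ also disturbs the level $(N{-}2)\to(N{-}1)$ factors at the positions where $i^{(N-2)}_\beta\in\{a,a+1\}$, so one must recurse down to level $p$ rather than stop at level $N-1$.
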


\vsk0>
\begin{rem}
Define the operators \,$\sh_1\lc\sh_{n-1}$ acting on functions of $\,\zzz\,$:
\vvn.2>
\beq
\label{Sa}
\sh_a\:f(\zz)\,=\,\frac{z_a\<-z_{a+1}\<-h}{z_a\<-z_{a+1}}\,
f(z_1\lc z_{a+1},z_a\lc z_n)\>+\>\frac h{z_a\<-z_{a+1}}\,f(\zz)\,.
\kern-1.6em
\vv.3>
\eeq
The assignment \,$s_{\aa+1}\mapsto\sh_a\,,\;\;a=1\lc n-1\,$, yields
a representation of the symmetric group $\,S_n\,$.
\end{rem}

Set
\vvn->
\beq
\label{cla}
c_\bla(\TT\:;h)\,=\>\prod_{i=1}^{N-1}\,\prod_{a=1}^{\la^{(i)}}\>
\prod_{\satop{b=1}{b\ne a}}^{\la^{(i)}}\,(t^{(i)}_a\<-t^{(i)}_b\<-h)\,.
\vv.2>
\eeq
Recall the permutations $\,\si_I\,$, $\,I\<\in\Il\,$, defined in Section
\ref{secXo}.

\begin{lem}[\cite{RTV1}] 
\label{WIz}
We have $\;W_J(\Si_I\:;\zz\:;h)=0\>$ unless $\,I=J$ or
$\,|\:\si_I\:|>|\:\si_J\:|\,$, and
\vvn.4>
\beq
\label{WII}
W_I(\Si_I\:;\zz\:;h)\,=\,c_\bla(\Si_I\:;h)\;
\prod_{j=1}^{N-1}\>\prod_{k=j+1}^N\,\prod_{a\in I_j}\>\biggl(\,
\prod_{\satop{b\in I_k}{b<a}}\:(z_a\<-z_b)\,
\prod_{\satop{b\in I_k}{b>a}}\:(z_a\<-z_b\<-h)\<\biggr)\>.\kern-1.4em
\vv.3>
\eeq
\end{lem}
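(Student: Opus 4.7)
The plan is to expand the symmetrization \eqref{hWI} defining $W_J(\TT\:;\zz\:;h)$ as a sum over permutations and to analyze each summand under the specialization $\TT=\Si_I$. Writing
\be
W_J(\Si_I\:;\zz\:;h)\,=\,\sum_{\tau}\,U_J(\tau(\Si_I)\:;\zz\:;h)\,,
\ee
with $\tau=(\tau^{(1)}\lc\tau^{(N-1)})\in S_{\la^{(1)}}\lsym\times S_{\la^{(N-1)}}$ acting by permuting each block $\TT^{(j)}$, the task reduces to identifying which summands vanish after specialization and combining the survivors.

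\emph{Step 1 (triangularity).} The product $U_J$ contains polynomial factors $(t^{(j)}_a-t^{(j+1)}_c)$ for each pair $(a,c)$ with $i^{(j+1)}_c<i^{(j)}_a$, indexing taken from $J$. After substitution, such a factor becomes a difference of two $z$-variables and vanishes precisely when the two indices coincide. I would show, by a combinatorial argument in the spirit of \cite{RTV1}, that whenever $J\ne I$ and $|\:\si_J\:|\ge|\:\si_I\:|$, every $\tau$ forces at least one such collision. A clean execution is induction on $|\:\si_I\:|$ using Lemma \ref{c3t}: the operator $\sh_a$ of \eqref{Sa} transports $W_I$ and the corresponding triangularity statement from $I$ to $s_{a,a+1}(I)$. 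The base case $I=\Imil$ can be verified by direct substitution, since $\Si_{\Imil}$ has the repetitive form $(z_1\lc z_{\la^{(1)}},z_1\lc z_{\la^{(2)}},\,\ldots\,,z_1\lc z_{\la^{(N-1)}})$, so the possible collisions are easy to enumerate.

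\emph{Step 2 (diagonal formula).} For $J=I$, the identity permutation $\tau=\on{id}$ is the unique $\tau$ that avoids forced vanishing of the factors $(t^{(j)}_a-t^{(j+1)}_c)$ with $i^{(j+1)}_c<i^{(j)}_a$. Collecting these contributions produces the product $\prod(z_a-z_b)$ over $a\in I_j$, $b\in I_k$, $j<k$, $b<a$, and the factors $(t^{(j)}_a-t^{(j+1)}_d-h)$ with $i^{(j+1)}_d>i^{(j)}_a$ produce the complementary product $\prod(z_a-z_b-h)$. The internal factors $(t^{(j)}_b-t^{(j)}_a-h)/(t^{(j)}_b-t^{(j)}_a)$, after the symmetrization over each $S_{\la^{(j)}}$, assemble into the polynomial $c_\bla(\Si_I\:;h)$ of \eqref{cla}: the apparent poles $1/(t^{(j)}_b-t^{(j)}_a)$ cancel upon symmetrization, and the numerators combine into \eqref{cla}. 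All remaining $\tau$ vanish by Step 1. Alternatively, the same induction on $|\:\si_I\:|$ propagates the diagonal formula in parallel with the triangularity.

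The main obstacle is the combinatorial verification in Step 1: showing that for every $J\ne I$ with $|\:\si_J\:|\ge|\:\si_I\:|$, and every permutation $\tau$, some specialized polynomial factor of $U_J$ vanishes. The inductive approach via \eqref{m3t} avoids a full enumeration by advancing one adjacent transposition at a time, but one still needs to check that $\sh_a$ preserves both the asserted vanishing range and the explicit diagonal product in \eqref{WII}. I expect this bookkeeping, balancing the rational factors from \eqref{Sa} against the structure of admissible pairs from Lemmas \ref{ladm1} and \ref{ladm2}, to be the technical heart of the argument.
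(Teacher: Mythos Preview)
The paper does not prove this lemma; it simply cites \cite{RTV1}. So there is no in-paper argument to compare against, and your sketch is an independent attempt.

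Your overall strategy---show that only $\tau=\on{id}$ survives in the symmetrization at $\TT=\Si_I$, and read off the factors---is correct and is essentially how the result is established in \cite{RTV1}. The survival of only $\tau=\on{id}$ can be justified cleanly by peeling levels from $j=N-1$ down: for any $\tau^{(j)}\ne\on{id}$, pick the largest $a$ not fixed by $\tau^{(j)}$, so $\tau^{(j)}(a)<a$; then the polynomial factor $(t^{(j)}_a-t^{(j+1)}_c)$ with $i^{(j+1)}_c=i^{(j)}_{\tau^{(j)}(a)}<i^{(j)}_a$ specializes to zero.

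However, your explanation of the diagonal value in Step~2 is internally inconsistent and the bookkeeping is off. Once you have established that only $\tau=\on{id}$ contributes, there is no ``symmetrization'' left to perform, so the poles $1/(t^{(j)}_b-t^{(j)}_a)$ cannot ``cancel upon symmetrization''. What actually happens for the single surviving term is: each denominator $(t^{(j)}_b-t^{(j)}_a)$ with $a<b$ cancels against a polynomial factor $(t^{(j)}_b-t^{(j+1)}_c)$ at position $b$ with $i^{(j+1)}_c=i^{(j)}_a$, since both equal $z_{i^{(j)}_b}-z_{i^{(j)}_a}$ at $\Si_I$. The internal numerators $(t^{(j)}_b-t^{(j)}_a-h)$ supply only half of $c_\bla(\Si_I;h)$; the other half comes from the factors $(t^{(j)}_a-t^{(j+1)}_d-h)$ with $i^{(j+1)}_d=i^{(j)}_b$ for $b>a$. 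After removing these contributions to $c_\bla$ and the cancelled denominators, the remaining factors at level $j$ are exactly those with $i^{(j+1)}_c,i^{(j+1)}_d\in I_{j+1}$, and a telescoping over $j$ (since position $a$ at level $j$ ranges over $\bigcup_{l\le j}I_l$, not just $I_j$) is needed to produce the product in \eqref{WII} indexed by $a\in I_j$, $b\in I_k$, $j<k$. Your sketch conflates these mechanisms; writing out this cancellation and telescoping carefully is what actually yields \eqref{WII}.
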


\begin{lem}
\label{WIJ}
For any $\,J\<\in\Il\>$, the polynomial $\;W_J(\Si_I\:;\zz\:;h)\>$
is divisible by
\vvn.41>
\be
c_\bla(\Si_I\:;h)\;\prod_{j=1}^{N-1}\>\prod_{k=j+1}^N\,
\prod_{a\in I_j}\>\prod_{\satop{b\in I_k}{b>a}}\,(z_a\<-z_b\<-h)\,.\kern-1.4em
\vv.2>
\ee
\end{lem}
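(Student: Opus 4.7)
My plan is to prove the divisibility by analyzing the symmetrization defining $W_J(\TT;\zz;h)$ at $\TT=\Si_I$. Expanding,
\[
W_J(\Si_I;\zz;h)\,=\sum_{\sigma\in\prod_j S_{\la^{(j)}}}U_J(\sigma\cdot\Si_I;\zz;h),
\]
each summand is an explicit rational function of $\zz$ and $h$: the numerator is a product of linear factors $(z_\alpha-z_\beta)$ and $(z_\alpha-z_\beta-h)$ determined by $J$ and $\sigma$, while the denominator is the specialization at $\sigma\cdot\Si_I$ of $\prod_j\prod_{a<b}(t^{(j)}_b-t^{(j)}_a)$.

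For the factors $(z_a-z_b-h)$ with $a\in I_j,\,b\in I_k,\,j<k,\,a<b$, I would argue that for every $\sigma$, the product of cross-level numerator factors $(t^{(\ell)}_p-t^{(\ell+1)}_q-h)$ in $U_J(\sigma\cdot\Si_I;\zz;h)$ always contains $(z_a-z_b-h)$. The point is that the position of $a$ in $I^{(j)}$ and of $b$ in $I^{(k)}$ under the specialization $\TT=\sigma\cdot\Si_I$ forces at least one cross-level factor at some intermediate level between $j$ and $k$ to evaluate to $(z_a-z_b-h)$, regardless of the choice of $\sigma$. Since this factor appears in every summand, it divides the sum.

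For the factor $c_\bla(\Si_I;h)=\prod_i\prod_{a\ne b}(z_{I^{(i)}_a}-z_{I^{(i)}_b}-h)$, the argument is more subtle because a single summand only contributes the ``ordered'' half $\prod_{a<b}(t^{(i)}_b-t^{(i)}_a-h)|_{\TT=\sigma\cdot\Si_I}$. I would apply the identity $\Sym_t(f/D)=\on{Alt}_t(f)/D$ for the Vandermonde $D=\prod_{a<b}(t_b-t_a)$ level by level, rewriting the level-$i$ symmetrization as an alternator divided by $D$. Expanding the alternator combines $\prod_{a<b}(t^{(i)}_b-t^{(i)}_a-h)$ with its sign-twisted transposed images, which one checks yields a polynomial divisible by the full symmetric product $\prod_{a\ne b}(t^{(i)}_a-t^{(i)}_b-h)$. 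Specialization at $\Si_I$ then produces the desired factor $c_\bla(\Si_I;h)$.

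The main obstacle is making the second step rigorous, because the cross-level polynomial part $P_J(\TT;\zz;h)$ of $U_J$ is not invariant under the $S_{\la^{(i)}}$ action, so the level-$i$ alternator does not cleanly decouple from the rest. I would handle this by induction on $N$: perform the symmetrization over $S_{\la^{(N-1)}}$ first, extract the level-$(N-1)$ contribution to $c_\bla$, and identify the remaining expression with the weight-function structure for the coarser partition $(\la^{(1)}\lc\la^{(N-2)},\la^{(N-1)}+\la_N)$, to which the inductive hypothesis applies. The base case $N=2$ reduces to a direct computation, since there is only one level of $t$-variables and no cross-level interaction to worry about.
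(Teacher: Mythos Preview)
The paper does not give a self-contained proof of this lemma; it simply states that it is a version of \cite[Lemma~3.1,~item~(I)]{RTV2}, and later (for the analogous Lemma~4.19) says ``the proof is by inspection''. So your term-by-term analysis of the symmetrization is exactly the intended approach. However, your write-up leaves the actual content of that inspection as assertions, and at least two of them do not go through as stated.

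First, the two ``parts'' of the divisor are not coprime. If $a\in I_j$ and $b\in I_k$ with $j<k\le N-1$, then $a,b\in I^{(\ell)}$ for every $\ell=k,\ldots,N-1$, so the factor $(z_a-z_b-h)$ already occurs $N-k$ times inside $c_\bla(\Si_I;h)$ and once more in your cross-level product. Proving divisibility by each part separately is therefore not enough; you must track multiplicities and show that the intra-level factors $\prod_{a<b}(t^{(i)}_b-t^{(i)}_a-h)$ and the cross-level $-h$ factors supply \emph{distinct} copies of $(z_a-z_b-h)$ in each nonvanishing summand. This is doable --- the factors come from disjoint pieces of $U_J$ --- but it has to be said.

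Second, your claim that for every $\sigma$ the summand $U_J(\sigma\cdot\Si_I;\zz;h)$ contains the factor $(z_a-z_b-h)$ is the entire content of the lemma for those factors, and you have only asserted it. The correct mechanism is this: if the summand is \emph{nonzero}, then at each transition $\ell\to\ell+1$ every ``repeated value'' pair (position $p$ with $t^{(\ell)}_p=z_\alpha$ and position $q$ with $t^{(\ell+1)}_q=z_\alpha$) must satisfy $i^{(\ell+1)}_q\ge i^{(\ell)}_p$, or else a factor $(z_\alpha-z_\alpha)=0$ kills the term. This monotonicity constraint on the nonzero summands is what forces the ``new value'' $z_b$ at level $k$ (for $b\in I_k$) to sit in a slot whose $J$-index exceeds that of the slot carrying $z_a$ at level $k-1$, hence giving a $-h$ factor. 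You need to state and prove this, not just invoke the positions of $a$ and $b$.

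Third, your inductive step for $c_\bla$ does not work as described. After performing only the $S_{\la^{(N-1)}}$ symmetrization, the cross-level factors between levels $N-2$ and $N-1$ are still entangled with the level-$(N-1)$ variables being summed over, so what remains is \emph{not} of the form $W_{J'}$ for a coarser partition $(\la_1,\ldots,\la_{N-2},\la_{N-1}+\la_N)$; the $J$-dependent pattern of $-h$ shifts in $U_J$ does not collapse to one for a shorter flag. The honest argument for $c_\bla$-divisibility is again term-by-term at $\TT=\Si_I$: for a nonvanishing summand, the same monotonicity constraint between levels forces the intra-level numerator $\prod_{a<b}(t^{(i)}_b-t^{(i)}_a-h)$, together with certain cross-level $-h$ factors that happen to equal $(z_\alpha-z_\beta-h)$ for $\alpha,\beta\in I^{(i)}$, to supply all of $c_\bla(\Si_I;h)$. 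This is the ``inspection'' the paper alludes to; your alternator/coarsening detour is not needed and does not close.
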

\noindent
Lemma \ref{WIJ} is a version of \cite[\:Lemma 3.1, item (I)\:]{RTV2}\:.
A stronger more technical analogue of Lemma \ref{WIJ} is given by
Lemma \ref{lemW} in Section \ref{details}.

\vsk.4>
Let $\,\si_0\,$ be the longest permutation,
$\,\si_0(i)=n+1-i\,$, $\;i-1\lc n\,$.
\vvn.4>
For $\,I\<\in\Il\,$, define
\beq
\label{WcI}
\WW_I(\TT\:;\zz\:;h)\,=\,W_{\si_0(I)}(\TT\:;z_n\lc z_1\:;h)\,.
\eeq
Set
\vvn-.3>
\beq
\label{RQ}
R_\bla(\zz)\,=\:\prod_{i=1}^{N-1}\,\prod_{a=1}^{\la^{(i)}}\,
\prod_{b=\la^{(i)}+1}^{\la^{(i+1)}}\!\?(z_a\<-z_b)\,,\kern1.9em
Q_\bla(\zz\:;h)\,=\:\prod_{i=1}^{N-1}\,\prod_{a=1}^{\la^{(i)}}\,
\prod_{b=\la^{(i)}+1}^{\la^{(i+1)}}\!\?(z_a\<-z_b\<-h)\,,\kern-1.2em
\vv.6>
\eeq
For $\,\si\<\in\<S_n\,$, denote $\,\zz_\si\<=(z_{\si(1)}\lc z_{\si(n)})\,$.

\begin{prop}[{\cite[Lemma~3.4\:]{RTV1}}]
\label{lemorth}
The functions $\,W_I(\TT\:;\zz\:;h)\>$ and $\,\WW_J(\TT\:;\zz\:;h)\>$
are biorthogonal\:,
\vvn-.4>
\beq
\label{orth}
\sum_{I\in\Il}\,\frac{W_J(\Si_I\:;\zz\:;h)\,\WW_K(\Si_I\:;\zz\:;h)}
{c_\bla^{\>2}(\Si_I\:;h)\>R_\bla(\zz_{\si_I}\<)\>Q_\bla(\zz_{\si_I};h)}
\,=\,\dl_{\JK}\,.
\kern-2em
\vvn.4>
\eeq
\end{prop}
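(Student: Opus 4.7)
\emph{Proof plan.} The strategy is to identify the left-hand side of \eqref{orth} as the $(J,K)$-entry of a matrix product that becomes triangular after ordering $\Il$ by the length of $\si_I$, and then to handle diagonal and off-diagonal entries separately. Let $A$ and $B$ be the square matrices indexed by $\Il$ with entries $A_{I,J} = W_J(\Si_I;\zz;h)$ and $B_{I,K} = \WW_K(\Si_I;\zz;h)$, and write $D = \diag(D_I)$ with $D_I = c_\bla^2(\Si_I;h)\,R_\bla(\zz_{\si_I})\,Q_\bla(\zz_{\si_I};h)$. The proposition asserts $(A^T D^{-1} B)_{J,K} = \dl_{J,K}$.

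Order $\Il$ so that $|\si_I|$ is non-decreasing. By Lemma \ref{WIz}, $A$ is lower triangular with explicit nonzero diagonal entries given by \eqref{WII}. For the dual side, use \eqref{WcI} to write $\WW_K(\Si_I;\zz;h) = W_{\si_0(K)}(\Si_I;\zz_{\si_0};h)$ and apply Lemma \ref{WIz} to this reversed version. Tracking the length under the involution $I \mapsto \si_0(I)$ (using that $|\si_I|$ and $|\si_{\si_0(I)}|$ are complementary, up to a correction coming from $I$-flat pairs, within a fixed constant depending only on $\bla$) shows that $B$ is upper triangular in the same ordering. Consequently $A^T D^{-1} B$ is upper triangular, so its entries automatically vanish when $|\si_J| > |\si_K|$. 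The remaining case $|\si_J| = |\si_K|$ with $J \ne K$ also vanishes immediately: any $I$ contributing nontrivially would require $|\si_J| \le |\si_I| \le |\si_K|$, hence $|\si_I| = |\si_J| = |\si_K|$, and the strict form of the triangularity in Lemma \ref{WIz} then forces $I = J = K$, a contradiction.

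Next, the diagonal case $J = K$: only $I = J$ contributes, and substituting the explicit formula \eqref{WII} for $W_J(\Si_J;\zz;h)$ together with its analogue for $\WW_J(\Si_J;\zz;h)$ (obtained from \eqref{WII} after the variable reversal encoded in \eqref{WcI}), one matches the $(z_a - z_b)$- and $(z_a - z_b - h)$-factors in $W_J(\Si_J)\,\WW_J(\Si_J)$ against the denominator $c_\bla^2(\Si_J;h)\,R_\bla(\zz_{\si_J})\,Q_\bla(\zz_{\si_J};h)$, where the explicit product form \eqref{RQ} is used. A bookkeeping of the index sets $I_j$ shows the two sides agree, yielding $(A^T D^{-1} B)_{J,J} = 1$.

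The most delicate step is showing that $(A^T D^{-1} B)_{J,K} = 0$ for $|\si_J| < |\si_K|$, where triangularity alone does not suffice because the sum over $I$ with $|\si_J| \le |\si_I| \le |\si_K|$ genuinely has many nonzero terms. The plan here is to interpret the summand as a multidimensional discrete residue of a meromorphic form on the $\TT$-space whose possible poles are located precisely at the lattice points $\TT = \Si_I$; the divisibility statement of Lemma \ref{WIJ}, applied to both $W_J$ and $\WW_K$ (in its strengthened form mentioned in \emph{Section \ref{details}}), ensures that the form has no extraneous residues and no residue at infinity, whence the total residue theorem forces the sum to vanish. The main obstacle is the accompanying degree bound: verifying that $\deg W_J + \deg \WW_K$ is strictly less than what the combined denominator $c_\bla^2 R_\bla Q_\bla$ would permit precisely when $|\si_J| < |\si_K|$ requires careful combinatorial bookkeeping, in which Lemma \ref{WIJ} and the explicit form \eqref{RQ} of $R_\bla, Q_\bla$ play an essential role.
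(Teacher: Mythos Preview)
The paper does not give its own proof of this proposition; it simply cites \cite[Lemma~3.4]{RTV1}. So there is no in-paper argument to compare against, and I assess your attempt on its own.

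Your triangularity analysis is correct and cleanly isolates the difficulty. Ordering $\Il$ by $|\si_I|$, Lemma~\ref{WIz} makes $A$ lower triangular, and the reversal \eqref{WcI} together with the identity $|\si_I|+|\si_{\si_0(I)}|=\sum_{i<j}\la_i\la_j$ makes $B$ upper triangular, so $A^TD^{-1}B$ is block upper triangular. The cases $|\si_J|>|\si_K|$ and $|\si_J|=|\si_K|,\,J\ne K$ fall out exactly as you say, and the diagonal $J=K$ is a direct (if tedious) matching of the factors in \eqref{WII} and its $\WW$-analogue against $c_\bla^{\,2}\,R_\bla\,Q_\bla$ via \eqref{RQ}, \eqref{cla}.

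The genuine gap is the case $|\si_J|<|\si_K|$. Your proposed residue argument rests on a degree bound, ``$\deg W_J+\deg\WW_K$ is strictly less than what the denominator permits \emph{precisely when} $|\si_J|<|\si_K|$''. This cannot hold: from \eqref{hWI} one sees that every $W_I$ has the \emph{same} degree $\sum_{j=1}^{N-1}\la^{(j)}(\la^{(j+1)}-1)$ in $\TT$, independent of $I$ (check the example after \eqref{hWI}, or \eqref{Wa}). Lemma~\ref{WIJ} does not help either: the divisor there depends on the evaluation point $I$, not on $J$ or $K$, so it lowers the effective degree uniformly. Hence no degree-at-infinity criterion can distinguish the off-diagonal case $|\si_J|<|\si_K|$ from the diagonal case $J=K$; were the residue at infinity to vanish, the diagonal sum would vanish too, contradicting your own computation. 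The total-residue scheme as you have framed it therefore cannot close.

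A route that does work is to exploit the three-term relation of Lemma~\ref{c3t} (and its $\WW$-counterpart via \eqref{WcI}): one checks that the pairing $\langle J,K\rangle=\sum_I W_J(\Si_I)\,\WW_K(\Si_I)/D_I$ is compatible with the operators $\sh_a$ of \eqref{Sa} acting on $J$ and on $K$, which lets you move $K$ down to $\Imil$. For $K=\Imil$ the length $|\si_K|$ is minimal, so the triangularity you already proved forces $\langle J,\Imil\rangle=\dl_{J,\Imil}$, completing the argument. This recursion-to-a-base-case mechanism is the standard way such orthogonalities are established in the weight-function literature.
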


\begin{cor}
\label{cororth}
We have
\vvn.4>
\beq
\label{corth}
\sum_{I\in\Il}\,W_I(\Si_J\:;\zz\:;h)\,\WW_I(\Si_K;\zz\:;h)\,=\,\dl_{\JK}\,
c_\bla^{\>2}(\Si_J\:;h)\>R_\bla(\zz_{\si_J}\<)\>Q_\bla(\zz_{\si_J};h)\,.
\kern-2em
\vvn.2>
\eeq
\end{cor}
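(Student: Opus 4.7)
My plan is to reformulate Proposition~\ref{lemorth} as a matrix identity of the form $A\>D^{-1}\>\tilde A^T = \mathrm{Id}$, and then to invert this relation using the fact that for finite-dimensional square matrices a right inverse is automatically a left inverse.

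First I would introduce square matrices $A$ and $\tilde A$ and a diagonal matrix $D$, all indexed by $\Il$, via
\[
A_{J,I} = W_J(\Si_I;\zz;h)\,,\qquad \tilde A_{K,I} = \WW_K(\Si_I;\zz;h)\,,
\]
\[
D_{I,I} = c_\bla^{\>2}(\Si_I;h)\>R_\bla(\zz_{\si_I})\>Q_\bla(\zz_{\si_I};h)\,.
\]
In this notation, formula \eqref{orth} reads exactly as $A\>D^{-1}\>\tilde A^T = \mathrm{Id}$. The sought identity \eqref{corth} is the $(J,K)$-entry of the dual identity $\tilde A^T\?A = D$, so the problem reduces to showing that the right inverse $D^{-1}\tilde A^T$ of $A$ is also a left inverse.

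The main step, and the only place where anything needs to be verified, is the invertibility of $A$. By Lemma~\ref{WIz} the entry $A_{J,I}$ vanishes unless $I = J$ or $|\si_I| > |\si_J|$. Ordering $\Il$ by non-decreasing $|\si_I|$ (breaking ties arbitrarily) therefore turns $A$ into an upper triangular matrix, whose diagonal entries are the explicit nonzero expressions given in~\eqref{WII}. Hence $A$ is invertible over the field of rational functions in $\zz$ and $h$.

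Once invertibility of $A$ is in hand, its left inverse coincides with its right inverse, so $\tilde A^T A = D$; reading off the $(J,K)$-entry yields \eqref{corth}. The main ``obstacle'' here is thus purely linear-algebraic and minor: one only has to recognize that \eqref{orth} and \eqref{corth} are the two faces of a single matrix inversion, and to certify invertibility of $A$ via Lemma~\ref{WIz} and~\eqref{WII}.
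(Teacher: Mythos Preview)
Your proof is correct and is essentially the argument the paper has in mind: the corollary is stated without proof precisely because \eqref{orth} and \eqref{corth} are the two sides of the same matrix inversion, and Lemma~\ref{WIz} supplies the triangularity needed to guarantee invertibility.
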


\subsection{Master function}
\label{sMF}
Let $\,\pho(x,h,\ka)=\Ga(x/\ka)\,\Ga\bigl(\<(h-x)/\ka\bigr)\,$.
\vvn.3>
Define the {\it master function\/}:
\begin{align}
\label{Phi}
\Phi_\bla(\TT\:;\zz\:;h;\qq\:;\ka)\,=\,
(e^{\:\pii\,(n\:-\<\la_N)}q_N)^{\>\sum_{a=1}^nz_a\</\<\ka}\,
\prod_{i=1}^{N-1}\>\Bigl(\>e^{\:\pii\,(\la_{i+1}-\la_i)}\>
\frac{q_i}{q_{i+1}}\>\Bigr)^{\sum_{a=1}^{\la^{(i)}}\:t^{(i)}_a\!\</\ka}
\times{}&
\kern-1em
\\
\notag
{}\times\<{}\prod_{i=1}^{N-1}\,\prod_{a=1}^{\la^{(i)}}\,\biggl(\,
\prod_{\satop{b=1}{b\ne a}}^{\la^{(i)}}\,
\frac1{(t_a^{(i)}\?-t_b^{(i)}\?-h)\,\pho(t_a^{(i)}\?-t_b^{(i)}\<,h,\ka)}\,
\prod_{c=1}^{\la^{(i+1)}}\<\pho(t_a^{(i)}\?-t_c^{(i+1)}\<,h,\ka)\<\biggr)\,&,
\kern-1em
\\[-20pt]
\notag
\end{align}
where $\,\la^{(N)}\?=n\,$ and $\,t^{(N)}_a\?=z_a\,$, $\;a=1\lc n\,$.
\vvn.1>
The function $\,\:\Phi_\bla(\TT\:;\zz\:;h;\qq\:;\ka)\,$ is symmetric in
the variables $\;t^{(i)}_1\?\lc t^{(i)}_{\la^{(i)}}\:$ for each \,$i=1\lc N-1$.

\subsection{Jackson integrals}
\label{secJ}
Let $\,\la\+1\<=\sum_{i=1}^{N-1}\la^{(i)}\>$. Consider the space
$\,\C^{\la\+1}\!\?\times\C^n\!\times\alb\C\times\alb\C^N$
with coordinates \,$\TT,\zz,h,\qq$\,. The lattice \,$\ka\>\Z^{\la\+1}\!$
naturally acts on this space by shifting the \,$\TT\:$-coordinates.

\vsk.2>
For a function of $\,f(\TT)$ \,and a point \,$\ss\<\in\C^{\la\+1}\!$,
define $\;\Res_{\>\TT=\ss}f(\TT)\,$ to be the iterated
\vvn.4>
residue,
\be
\Res_{\>\TT=\ss}f(\TT)\>=\,\Res_{t^{(1)}_1\<=\>s^{(1)}_1}\ldots\>
\Res_{t^{(1)}_{\la^{(1)}}\<=\>s^{(1)}_{\la^{(1)}}}\ldots\;
\Res_{t^{(N-1)}_1\<=\>s^{(N-1)}_1}\ldots\>
\Res_{t^{(N-1)}_{\la^{(N-1)}}\<=\>s^{(N-1)}_{\la^{(N-1)}}}f(\TT)\,.\kern-.5em
\ee

\vsk.4>
Let $\,L'\:$ be the complement in $\,\C^n\!\times\C\,$ of the union of
the hyperplanes
\vvn.3>
\beq
\label{zzh}
h\>=\>m\:\ka\,,\qquad z_a\<-z_b\>=\>m\:\ka\,,\qquad
z_a\<-z_b\<+h\>=\>m\:\ka\,,
\vv.3>
\eeq
for all \,$a,b=1\lc n$\,, \,$a\ne b$\,, and all $\,m\in\Z_{\le 0}\,$.
Let \,$L''\!\subset\C^N$ be the domain
\vvn.3>
\beq
\label{q/q}
|\:q_{i+1}/q_i|<1\,,\qquad i=1\lc N-1\,,
\vv.3>
\eeq
with additional cuts fixing a branch of $\,\:\log\:q_i\:$ for all \,$i=1\lc N$.
Set \,$L=L'\!\times L''\!\subset\<\C^n\!\times\C\times\C^N$.

\vsk.2>
Let $\,\lb=\bigl(\>l^{\:(1)}_{\>1}\!\lc l^{\:(1)}_{\:\la^{(1)}},\,\ldots\,,
l^{\:(N-1)}_{\>1}\!\lc l^{\:(N-1)}_{\:\la^{(N-1)}}\bigr)\in\Z^{\la\+1}\?$.
By convention, set $\,l^{\:(i)}_a\!=0\,$ for $\,a>\<\la^{(i)}$ or $\,i=N\>$.

\vsk.2>
Let $\,f(\:\TT\:;\zz\:;h;\qq)$ be a polynomial in \,$\TT$ and a holomorphic
\vv.1>
function of \,$\zz,h,\qq\,$ in \,$L$\,. For \,$(\zz\:;h;\qq)\in L$\,, define
\vvn.2>
\beq
\label{MSi}
\Mc_J(\Phi_\bla\:f)(\zz\:;h;\qq\:;\ka)\,=\!
\sum_{\lb\in\Z^{\:\la\+1}\!\!\!\!}\Res_{\>\TT\>=\>\Si_J-\:\lb\ka\>}\bigl(
\Phi_\bla(\TT\:;\zz\:;h;\qq\:;\ka)\>f(\TT\:;\zz\:;h;\qq)\bigr)\,.
\eeq
This sum is called the {\it Jackson integral over the discrete cycle\/}
$\;\Hat\Si_J\<\subset
\C^{\la\+1}\!\?\times\C^n\!\times\alb\C\times\alb\C^N\?$,
\vvn.4>
\be
\Hat\Si_J\>=\,\{(\:\Si_J-\lb\ka\:;\zz\:;h\:;\qq\:)\;\;|\,\,\,\,
\lb\<\in\Z^{\la\+1}\!,\;\;(\zz\:;h\:;\qq\:)\in L\>\}\,.
\vv.4>
\ee
Notice that the master function $\,\Phi_\bla(\TT\:;\zz\:;h;\qq\:;\ka)\,$
has only simple poles, and
\vvn.2>
\begin{align*}
& \Res_{\>\TT\>=\>\Si_J-\:\lb\ka\>}\bigl(
\Phi_\bla(\TT\:;\zz\:;h;\qq\:;\ka)\>f(\TT\:;\zz\:;h;\qq)\bigr)\,={}
\\[5pt]
&\;{}=\,f(\Si_J\<-\lb\ka\:;\zz\:;h;\qq)\,
\Res_{\>\TT\>=\>\Si_J-\:\lb\ka\>}\Phi_\bla(\TT\:;\zz\:;h;\qq\:;\ka)\,.
\\[-17pt]
\end{align*}
A closed expression for the residue
$\,\Res_{\>\TT\>=\>\Si_J-\:\lb\ka\>}\Phi_\bla(\TT\:;\zz\:;h;\qq\:;\ka)\,$
is given by Lemma \ref{ResPh} below.

\vsk.3>
Recall $\,c_\bla(\TT\:;h)\,$, see \eqref{cla}\:. Set
\vvn-.5>
\beq
\label{Mla}
M_\bla(\zz\:;\ka)\,=\,\prod_{i=1}^{N-1}\>\prod_{a=1}^{\la^{(i)}}\,
\prod_{b=\la^{(i)}+1}^{\la^{(i+1)}}\!
\frac{\sin\:\bigl(\pi\>(z_a\?-z_b)/\ka\bigr)}
{\pi\>e^{\:\pii\>(z_a+\:z_b)/\ka}}\;,\kern-2em
\vv-.6>
\eeq
and
\vvn-.3>
\beq
\label{Ath}
A_\bla(\TT\:;\zz\:;h;\ka)\,=\,\prod_{i=1}^{N-1}\,\prod_{a=1}^{\la^{(i)}}\;
\biggl(\,\prod_{\satop{b=1}{b\ne a}}^{\la^{(i)}}\>
\frac{\Gm\bigl(1+(t^{(i)}_b\!-t^{(i)}_a)/\ka\bigr)}
{\Gm\bigl(\<(t^{(i)}_b\!-t^{(i)}_a\!+h)/\ka\bigr)}\;
\prod_{c=1}^{\la^{(i+1)}}\:
\frac{\Gm\bigl(\<(t^{(i+1)}_c\!-t^{(i)}_a\!+h)/\ka\bigr)}
{\Gm\bigl(1+(t^{(i+1)}_c\!-t^{(i)}_a)/\ka\bigr)}\,\biggr)\>.\kern-.7em
\eeq

\begin{lem}
\label{ResPh}
If $\;\lb\not\in\<\Z_{\ge0}^{\la\+1}\?$, then
$\;\Res_{\>\TT\>=\>\Si_J-\:\lb\ka\>}\Phi_\bla(\TT\:;\zz\:;h;\qq\:;\ka)=0\,$.
For $\;\lb\in\<\Z_{\ge0}^{\la\+1}\?$,
\vvn.2>
\begin{align}
\label{ResPhlb}
& \Res_{\>\TT\>=\>\Si_J-\:\lb\ka}\Phi_\bla(\TT\:;\zz\:;h;\qq\:;\ka)\,={}
\\[7pt]
\notag
&{}=\,\frac{\ka^{\:\la\+1}\?A_\bla(\Si_J\<-\lb\ka\:;\zz\:;h;\ka)}
{M_\bla(\zz_{\si_J}\:;\ka)\>c_\bla(\Si_J\<-\lb\ka;h)}\;
\prod_{i=1}^N\>q_i^{\>\sum_{a\in J_i}\?z_a\</\<\ka}\,
\prod_{i=1}^{N-1}\,(q_{i+1}/q_i)^{\>\sum_{a=1}^{\la^{(i)}}l_a^{(i)}}\!.
\\[-20pt]
\notag
\end{align}
In particular,
\vvn-.6>
\begin{align}
\label{ResPhi}
\Res_{\>\TT\>=\>\Si_J}\?\Phi_\bla(\TT\:;\zz\:;h;\qq\:;\ka)\, &{}=\,
\frac{\bigl(\ka\>\Gm(h/\<\ka)\bigr)^{\<\la\+1}}{c_\bla(\Si_J\:;h)}\;
\prod_{i=1}^N\>\bigl(\>e^{\:\pii\,(n-\la_i)}\>q_i\:\bigr)
^{\>\sum_{a\in J_i}\?z_a\</\<\ka}\:\times{}
\\[2pt]
\notag
&\>{}\times\,\,
\prod_{i=1}^{N-1}\prod_{j=i+1}^N\,\:\prod_{a\in J_i}\,
\prod_{b\in J_j}\,\Ga\bigl(\<(z_a\<-z_b)/\ka\bigr)\,
\Ga\bigl(\<(z_b\<-z_a\<+h)/\ka\bigr)\,.\kern-1em
\\[-22pt]
\notag
\end{align}
\end{lem}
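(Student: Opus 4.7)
The plan is to compute the iterated residue of $\Phi_\bla$ at $\TT = \Si_J - \lb\ka$ level by level, starting from $i = N-1$ and working down to $i = 1$. The only $\TT$-dependent sources of poles in $\Phi_\bla$ are the factors $\phi(t^{(i)}_a - t^{(i+1)}_c, h, \ka) = \Gm\bigl((t^{(i)}_a - t^{(i+1)}_c)/\ka\bigr)\Gm\bigl((h - t^{(i)}_a + t^{(i+1)}_c)/\ka\bigr)$, with simple poles at $t^{(i)}_a = t^{(i+1)}_c - r\ka$ and $t^{(i)}_a = t^{(i+1)}_c + h + r\ka$ for $r \in \Z_{\ge 0}$. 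The denominator $(t^{(i)}_a - t^{(i)}_b - h)$ and the inner ratio $1/\phi(t^{(i)}_a - t^{(i)}_b, h, \ka)$ at each level contribute no $\TT$-dependent poles at our points for generic $\zz, h$.

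First I would establish the vanishing claim. For $\TT = \Si_J - \lb\ka$ to be a pole location of $\Phi_\bla$, each specialization $t^{(i)}_a = z_{j^{(i)}_a} - l^{(i)}_a\ka$ (where $\{j^{(i)}_1 < \dots < j^{(i)}_{\la^{(i)}}\} = \bigcup_{k\le i} J_k$) must meet a pole of some $\phi$-factor. The nesting $\bigcup_{k\le i} J_k \subset \bigcup_{k\le i+1} J_k$ forces the match to come from taking $t^{(i+1)}_{c^*} = z_{j^{(i)}_a}$ in the first Gamma of some $\phi(t^{(i)}_a - t^{(i+1)}_{c^*}, h, \ka)$, which is a genuine pole precisely when $l^{(i)}_a \ge 0$; the $h + r\ka$ branch cannot contribute for generic $h$. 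Hence the iterated residue vanishes whenever any $l^{(i)}_a < 0$.

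For $\lb \in \Z_{\ge 0}^{\la\+1}$, I would apply $\Res_{x = -r\ka} \Gm(x/\ka) = \ka\,(-1)^r/r!$ at each level, yielding the overall factor $\ka^{\la\+1}$. The surviving $\Gm$-factors from the cross-level $\phi$'s directly match the numerators and denominators of $A_\bla$ (the second Gamma factor of $\phi$ becoming $\Gm\bigl((t^{(i+1)}_c - t^{(i)}_a + h)/\ka\bigr)$), while the same-level inner ratios $1/\phi(t^{(i)}_a - t^{(i)}_b, h, \ka)$ are rewritten using $\Gm(x)\Gm(1-x) = \pi/\sin(\pi x)$ to produce both the ratios $\Gm\bigl(1 + (t^{(i)}_b - t^{(i)}_a)/\ka\bigr)/\Gm\bigl((t^{(i)}_b - t^{(i)}_a + h)/\ka\bigr)$ in $A_\bla$ and an auxiliary product of $\sin$-factors. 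The denominator factor $c_\bla(\Si_J - \lb\ka; h)$ comes directly from the $\prod 1/(t^{(i)}_a - t^{(i)}_b - h)$ factor of $\Phi_\bla$ evaluated at the residue point. Substituting the specialization in the exponential prefactors of \eqref{Phi} and telescoping the product over $i$ produces $\prod_i q_i^{\sum_{a\in J_i} z_a/\ka}$ and $\prod_i (q_{i+1}/q_i)^{\sum_a l^{(i)}_a}$.

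The main obstacle is to verify that the residual phase factors $e^{\pi i(\la_{i+1}-\la_i)\,\sum t^{(i)}_a/\ka}$ together with the signs $(-1)^{l^{(i)}_a}$ from the Gamma residues and the $\sin$-factors produced by the reflection formula combine to give exactly $M_\bla(\zz_{\si_J}; \ka)^{-1}$ as in \eqref{Mla}. This requires careful bookkeeping of the index pairs surviving after all Gamma cancellations; one checks that the leftover $\sin$-factors are exactly those indexed by $a \in \la^{(i)}$, $b \in \la^{(i)}\!+\!1,\ldots,\la^{(i+1)}$, matching the range in \eqref{Mla}. Specializing $\lb = 0$, the product $A_\bla(\Si_J; \zz; h; \ka)$ simplifies, since most ratios become $\Gm(h/\ka)/\Gm(1)$ at the diagonal, so that formula \eqref{ResPhi} follows by regrouping the remaining Gamma factors over the ordered pairs $(a,b)$ with $a\in J_i$, $b\in J_j$, $j>i$, and combining the telescoped phase factors into $\bigl(e^{\pi i(n-\la_i)} q_i\bigr)^{\sum_{a\in J_i} z_a/\ka}$.
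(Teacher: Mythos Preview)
The paper states this lemma without proof, treating it as a direct (if bookkeeping-heavy) computation of the iterated residue; your outline is the natural and correct way to carry it out.

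One small inaccuracy worth flagging: in the vanishing argument you write that at level $i$ the matching $\phi$-factor gives a genuine pole ``precisely when $l^{(i)}_a\ge 0$''. In fact, after the level $i{+}1$ variables have been specialized to $z_{j^{(i+1)}_c}-l^{(i+1)}_c\ka$, the pole condition at level $i$ reads $l^{(i)}_a\ge l^{(i+1)}_{c^*}$ (with $c^*$ the index for which $j^{(i+1)}_{c^*}=j^{(i)}_a$), so the inductive argument actually yields the stronger constraint $\lb\in\Zc_\bla$ of \eqref{Zc}, not merely $\lb\in\Z_{\ge0}^{\la\+1}$. This is harmless for the lemma as stated, since the right-hand side of \eqref{ResPhlb} itself vanishes on $\Z_{\ge0}^{\la\+1}\setminus\Zc_\bla$ via the factor $A_\bla$ (this is exactly Lemma~\ref{Blem}), so both claims are consistent. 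Your handling of the exponential prefactor, the $c_\bla$ denominator, and the use of the reflection formula to extract the $M_\bla$ and $A_\bla$ pieces is otherwise on target.
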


By Lemma \ref{ResPh}, the actual summation in formula \eqref{MSi} is only
over the positive cone of the lattice,
\vvn-.2>
\beq
\label{MSi<}
\Mc_J(\Phi_\bla\:f)(\zz\:;h;\qq\:;\ka)\,=\>
\sum_{\lb\in\Z_{\ge0}^{\:\la\+1}\!}f(\Si_J\<-\lb\ka\:;\zz\:;h;\qq)\,
\Res_{\>\TT\>=\>\Si_J-\:\lb\ka}\Phi_\bla(\TT\:;\zz\:;h;\qq\:;\ka)\,.
\kern-2em
\eeq

\subsection{Solutions of the dynamical and \qKZ/ equations}
For \,$J\<\in\Il$\,, define
\vvn.4>
\beq
\label{mcF}
\Psi_{\?J}(\zz\:;h;\qq\:;\ka)\,=\,
\bigl(\ka\>\Gm(h/\<\ka)\bigr)^{\<-\:\la\+1}\>\Om_\bla(h;\qq\:;\ka)\,
\sum_{I\in\Il}\,\Mc_J(\Phi_\bla W_I)(\zz\:;h;\qq\:;\ka)\,v_I\,,
\vv-.7>
\eeq
where
\vvn-.4>
\beq
\label{Oml}
\Om_\bla(h;\qq\:;\ka)\,=\,
\prod_{i=1}^{N-1}\prod_{j=i+1}^N (1-q_j/q_i)^{h\:\la_i/\<\ka}\>.\kern-1em
\vv.2>
\eeq

\begin{defn}
\label{hdef}
Say that a function $\,f(\qq)\,$ is {\it holomorphic in the unit polydisk
\vvn.06>
around\/} $\,\qq=\0\,$ if $\,f(\qq)=g(q_2/q_1\lc q_N/q_{N-1})\,$ for
\vvn.1>
a function $\,g(s_1\lc s_{N-1})\,$ holomorphic in $\,s_1\lc s_{N-1}\,$,
provided $\,|\:s_i\:|<1\,$ for all $\,i=1\lc N-1\,$.
Denote $\,f(\0)=g(0\lc 0)\,$.
\end{defn}

\noindent
In the given definition, we described homogeneous functions of $\,\qq\,$
that have behave regularly as the ratios of the subsequent coordinates
approaches zero. The symbol $\,\0\:$ used in the definition is formal
and does not represent any point of $\,\C^n$.

\begin{thm}
\label{thm cy}
The $\,\Cnnl$-valued function $\,\Psi_{\?J}(\zz\:;h;\qq\:;\ka)\:$ \,is
a solution of the joint system of dynamical differential equations \eqref{DEQ}
\vvn.3>
and \qKZ/ difference equations \eqref{Ki}. It has the form
\begin{align}
\label{PsiPsd}
\Psi_{\?J}(\zz\:;h;\qq\:;\ka)\, &{}=\,\Psd_{\?J}(\zz\:;h;\qq\:;\ka)\;
\prod_{i=1}^N\,\bigl(\>e^{\:\pii\,(n-\la_i)}\>q_i\:\bigr)
^{\>\sum_{a\in J_i}\?z_a\</\<\ka}\:\times{}
\\[3pt]
\notag
&\>{}\times\,\,\prod_{i=1}^{N-1}\prod_{j=i+1}^N\,\:\prod_{a\in J_i}\>
\biggl(\,\prod_{b\in J_j}\>\frac{\Ga\bigl(1+(z_b\<-z_a\<+h)/\ka\bigr)}
{\sin\bigl(\pi\>(z_a\<-z_b)/\ka\bigr)}\;
\prod_{\satop{c\in J_j}{c<a}}\>\frac1{z_a\<-z_c\<-h}\:\biggr)\>,\kern-.8em
\end{align}
where the function $\,\Psd_{\?J}(\zz\:;h;\qq\:;\ka)\>$ is entire in $\,\zz,h\>$
and holomorphic in $\,\qq\>$ in the unit polydisk around $\,\qq=\0\,$.
In more detail,
\vvn.2>
\begin{align}
\label{Psd}
\Psd_{\?J}(\zz\:;h;\qq\>;\ka)\,&{}=\,
\prod_{i=1}^{N-1}\prod_{j=i+1}^N\,\:\prod_{a\in J_i}\>\prod_{b\in J_j}\>
\frac{-\>\pi\>\ka}{\Gm\bigl(1+(z_b\<-z_a)/\ka\bigr)}\,\times{}
\\[3pt]
\notag
&\>{}\times\,\biggl(\Psd_{\<J,\:0}\:(\zz\:;h)\>+\!\!
\sum_{\satop{\,\mb\in\Z_{\ge0}^{N-1}\!}{\mb\ne0}}\!\?
\Psd_{\<J,\:\mb}(\zz\:;h;\ka)\,
\prod_{i=1}^{N\<-1}\>(q_{i+1}/q_i)^{m_i}\<\biggr)\:,\kern-2em
\\[-22pt]
\notag
\end{align}
where
\vvn-.4>
\begin{align}
\label{Psd0}
\Psd_{\<J,\:0}\:(\zz\:;h)\,=\,\frac1{c_\bla(\Si_J\:;h)}\;
\prod_{i=1}^{N-1}\prod_{j=i+1}^N\,\:\prod_{a\in J_i}\>
\prod_{\satop{b\in J_j}{b>a}}\>\frac1{z_a\<-z_b\<-h}\,\times{}\!\kern-2em &
\\[5pt]
\notag
{}\times\Bigl(\:W_J(\Si_J\:;\zz\:;h)\,v_J\,+\!
\sum_{\satop{I\<\in\Il}{|\si_I\<|<|\si_J\<|}}\!\!W_I(\Si_J\:;\zz\:;h)\,v_I\>\Bigr)
\kern-2em &
\end{align}
is a polynomial in $\>\zz,h\,$, and
$\,\Psd_{\<J,\:\mb}(\zz\:;h;\ka)$ for $\,\mb\ne0\>$ are rational functions
of $\,\:\zz,h,\ka\>$ with at most simple poles
\vvn.1>
on the hyperplanes $\,z_a\<-z_b\<\in\<\ka\>\Z_{>0}\>$ for $\,a\in\<J_i\,$,
$\,b\in\<J_j\,$, $\,1\le i<j\le N\>$. Furthermore, for any transposition
\,$s_{\ab}\<\in\<S_n\,$,
\vvn.3>
\beq
\label{Psdab}
\Psd_{\?J}(\zz\:;h;\qq\:;\ka)\big|_{\:z_a=z_b}=\,
\Psd_{\<s_{a\<,b}(J)}(\zz\:;h;\qq\:;\ka)\big|_{\:z_a=z_b}\:.
\vv.2>
\eeq
\end{thm}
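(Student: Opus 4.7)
The plan is to obtain Theorem \ref{thm cy} by combining the general hypergeometric-solutions machinery of \cite{TV1,TV6} with a direct residue computation using Lemma \ref{ResPh}. The fact that $\,\Psi_{\?J}\,$ solves the joint system \eqref{Ki}--\eqref{DEQ} I would take essentially from \cite{TV1,TV6}: one verifies that the vector-valued integrand $\,\sum_{I}\Phi_\bla(\TT;\zz;h;\qq;\ka)\,W_I(\TT;\zz;h)\,v_I\,$ satisfies covariance identities under $\,z_a\mapsto z_a+\ka\,$ and under $\,\ka\:q_i\:\der/\der q_i\,$ that match the actions of $\,K_a\,$ and $\,X_i\,$ respectively, modulo $\,\ka\,$-shifts of functions supported on the residue lattice; passing to the Jackson sum \eqref{MSi<} preserves these identities, and the normalizing factor $\,(\ka\>\Gm(h/\<\ka))^{-\la\+1}\Om_\bla(h;\qq\:;\ka)\,$ is precisely what is needed to convert the resulting equations into their stated form.

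Next, I would derive the explicit form \eqref{PsiPsd} by isolating the $\,\lb=0\,$ term of \eqref{MSi<} and applying Lemma \ref{ResPh}, formula \eqref{ResPhi}. The reflection identity $\,\Gm(x)\:\Gm(1-x)=\pi/\!\sin(\pi x)\,$ converts each factor $\,\Gm((z_a-z_b)/\ka)\,$ into $\,\pi/[\sin(\pi(z_a-z_b)/\ka)\,\Gm(1+(z_b-z_a)/\ka)]\,$, which, combined with $\,\Gm((z_b-z_a+h)/\ka)=\Gm(1+(z_b-z_a+h)/\ka)/((z_b-z_a+h)/\ka)\,$, produces the explicit prefactor in \eqref{PsiPsd} and the overall $\,-\pi\:\ka/\Gm(1+(z_b-z_a)/\ka)\,$ appearing in \eqref{Psd}. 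The factors $\,1/(z_a-z_c-h)\,$ for $\,c<a\,$, $\,c\in J_j\,$, are extracted from $\,c_\bla(\Si_J\:;h)\,$ in the denominator. Then \eqref{Psd0} is read off as the $\,\lb=0\,$ summand: Lemma \ref{WIz} kills the contributions from $\,I\,$ with $\,|\si_I|>|\si_J|\,$, and Lemma \ref{WIJ} provides the divisibility that makes $\,\Psd_{\<J,\:0}(\zz\:;h)\,$ a polynomial after the stated common denominator is cancelled.

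For the higher coefficients $\,\Psd_{\<J,\:\mb}(\zz\:;h;\ka)\,$, formula \eqref{ResPhlb} shows that the only $\,\qq\,$-dependence in the $\,\lb\,$-summand is $\,\prod_i(q_{i+1}/q_i)^{\sum_a l_a^{(i)}}\,$, so grouping by the total shift vector $\,\mb\in\Z_{\ge0}^{N-1}\,$ produces \eqref{Psd}. The remaining input is the pole analysis of $\,A_\bla(\Si_J-\lb\ka\:;\zz\:;h;\ka)\,$ as defined in \eqref{Ath}: the quotients of Gamma functions contribute at most simple poles precisely on $\,z_a-z_b\in\ka\>\Z_{>0}\,$ for $\,a\in J_i\,$, $\,b\in J_j\,$, $\,i<j\,$, and Stirling asymptotics on these quotients yield convergence of the series $\,\sum_\mb\Psd_{\<J,\:\mb}\prod(q_{i+1}/q_i)^{m_i}\,$ inside the unit polydisk $\,|q_{i+1}/q_i|<1\,$, establishing holomorphy in $\,\qq\,$. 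Finally, for the symmetry \eqref{Psdab} I would use that at $\,z_a=z_b\,$ the residue cycles $\,\Hat\Si_J\,$ and $\,\Hat\Si_{s_{\ab}(J)}\,$ collapse onto the same set, and Lemma \ref{c3t} (or its operator form with $\,\sh_a\,$ in \eqref{Sa}) translates $\,W_{s_{\ab}(I)}\,$ into the appropriate linear combination of $\,W_I\,$ terms so that, after the explicit prefactors of \eqref{PsiPsd} are stripped, the equality of $\,\Psd_{\?J}\,$ and $\,\Psd_{\<s_{\ab}(J)}\,$ at $\,z_a=z_b\,$ becomes manifest.

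The main obstacle will be the pole-and-holomorphy analysis of the individual $\,\Psd_{\<J,\:\mb}\,$. Lemma \ref{WIJ} as stated is insufficient to rule out spurious singularities when $\,\lb\ne 0\,$, so one needs the sharper divisibility statement referenced as Lemma \ref{lemW} in Section \ref{details}, applied uniformly in $\,\lb\,$. Controlling the growth of $\,A_\bla(\Si_J-\lb\ka\:;\zz\:;h;\ka)\,$ as $\,|\lb|\to\infty\,$ — in particular, producing bounds that are polynomial in $\,\lb\,$ and uniform on compacts in $\,(\zz,h)\,$ away from the stated polar locus — is the key technical step that underwrites both the convergence of the Jackson series in the unit polydisk and the precise identification of the singular set.
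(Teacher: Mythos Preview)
Your overall plan is close to the paper's, and your identification of Lemma \ref{lemW} as the needed strengthening of Lemma \ref{WIJ} is correct. However, there is a real gap in your argument for the claim that $\,\Psd_{\?J}(\zz\:;h;\qq\:;\ka)\,$ is \emph{entire} in $\,\zz,h\,$. The direct residue/Stirling analysis you outline --- essentially the content of Proposition \ref{Bfhol} --- only shows that $\,\Mt_\bla(\zz\:;h)\>\Bc_{\IJ}(\zz\:;h;\rr)\,$ is entire, where $\,\Mt_\bla\,$ carries factors $\,\sin(\pi h)\,$, $\,\sin(\pi(z_a-z_b))/(z_a-z_b)\,$, and $\,\sin(\pi(z_a-z_c-h))\,$. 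When you pass from $\,\Bc_{\IJ}\,$ to $\,\Psd_{\?J}\,$ via formula \eqref{yaPsd}\:, you divide by $\,(\Gm(h))^{\la\+1}\,$ and multiply by factors $\,(z_a-z_c-h)\,$, and the result still has potential poles along the hyperplanes $\,z_a-z_b\in\Z_{\ne 0}\,$, $\,z_a-z_b-h\in\Z_{\ge 0}\,$ (for $\,a\in J_i\,$, $\,b\in J_j\,$, $\,i<j\,$), and $\,h\in\Z_{>0}\,$. Lemma \ref{lemW} alone does not rule these out.

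The paper handles these residual singularities by a bootstrapping argument that uses the equations themselves. For the hyperplanes $\,z_a-z_b\in\Z_{\ne 0}\,$: one first checks regularity at $\,z_a=z_b\,$ directly, and then observes that $\,M_\bla(\zz_{\si_J})\Psi_{\?J}\,$ solves the \qKZ/ system \eqref{Ki}\:, whose operators $\,K_a\,$ and their inverses are regular on all of these hyperplanes; this propagates regularity from $\,z_a=z_b\,$ to every integer shift. For the hyperplanes in $\,h\,$: one fixes generic $\,\zz\,$ and uses that $\,\Psd_{\?J}\,$, written as a function of $\,\rr=(q_2/q_1,\ldots,q_N/q_{N-1})\,$, satisfies a first-order linear system (a rewriting of the dynamical equations \eqref{DEQ}) whose coefficients are entire in $\,h\,$ and regular near $\,\rr=0\,$. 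A holomorphic solution of this system is uniquely determined by its value at $\,\rr=0\,$ (the eigenvalues of $\,X_{(i)}(h;0)\,$ avoid nonzero integers for generic $\,\zz\,$), and that initial value is the polynomial $\,\Psd_{\<J,0}(\zz\:;h)\,$; hence $\,\Psd_{\?J}\,$ is entire in $\,h\,$. Your proposal does not mention either of these steps, and without them the asserted pole locus for $\,\Psd_{\<J,\:\mb}\,$ (which in the paper is \emph{deduced} from the entireness of $\,\Psd_{\?J}\,$, not established beforehand) is unjustified.
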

\begin{proof}
The fact that $\,\Psi_{\?J}(\zz\:;h;\qq\:;\ka)\,$ solves the dynamical equations
\eqref{DEQ} is proved in \cite[Theorem 8.4]{TV6}\:, and the fact that
$\,\Psi_{\?J}(\zz\:;h;\qq\:;\ka)$ solves the \qKZ/ equations \eqref{Ki} is proved
in \cite[Theorem 1.5.2]{TV1}\:, cf.~\cite{TV4}\:.

\vsk.2>
Analytic properties of $\,\Psd_{\?J}(\zz\:;h;\qq\:;\ka)\,$ are proved
in Section \ref{details}. The fact that the right\:-hand side of
formula \eqref{Psd0} is a polynomial in $\,\zz,h\,$ follows from
Lemmas \ref{WIz}, \ref{WIJ}.
\end{proof}

The functions $\,\Psi_{\?J}(\zz\:;h;\qq\:;\ka)$ \,are called
the {\it multidimensional
hypergeometric solutions\/} of the dynamical equations. In \cite{TV5},
we constructed another type of solutions of the dynamical equations
using
multidimensional hypergeometric integrals.

\vsk.2>
The next theorem computes the determinant of coordinates of
solutions $\,\Psi_{\?J}(\zz\:;h;\qq\:;\ka)\,$ and is analogous
to \cite[Theorem~11.3]{TV6}\:.

\begin{thm}
\label{thmdet}
Let \,$n\ge 2$\,. Then
\vvn.7>
\begin{align}
\label{detPsi}
\det\:\bigl( &\>\Om_\bla(\qq\:,\ka)\,
\Mc_J(\Phi_\bla W_I)(\zz\:;h;\qq\:;\ka)\bigr)_{\IJ\in\>\Il}\<={}
\\[9pt]
\notag
&\!\<{}=\,\bigl(\ka\>\Gm(h/\<\ka)\bigr)^{\<\la\+1d_\bla}\>
\Bigl(\:e^{\:2\pii\,(n-1)\>d^{(2)}_\bla}\>\prod_{i=1}^N\,
q_i^{\:\smash{d^{(1)}_{\bla,i}}\vp|}\>\Bigr)^
{\sum_{a=1}^n z_a\</\?\ka}\times{}
\\[4pt]
\notag
&\?\times\,\prod_{a=1}^{n-1}\,\prod_{b=a+1}^n
\biggl(\:\frac{\pi\:\ka^2\>
\Gm\bigl(\<(z_a\<-z_b\<+h)/\ka\bigr)\>\Gm\bigl(1+(z_b\<-z_a\<+h)/\ka\bigr)}
{\sin\bigl(\pi\>(z_a\<-z_b)/\ka\bigr)}\>\biggr)^{\!d^{(2)}_\bla}\!\<,
\\[-12pt]
\notag
\end{align}
where $\;\la\+1\<=\sum_{i=1}^{N-1}\:(N\?-i)\>\la_i\>$,
\beq
\label{dla12}
d_\bla\>=\,\frac{n\:!}{\la_1\:!\ldots\la_N\:!}\;,\qquad
d^{(1)}_{\bla\:,\:i}\>=\,
\frac{\la_i\>(n-1)\:!}{\la_1\:!\ldots\la_N\:!}\;,\qquad
d^{(2)}_\bla\>=\,\frac{(n-2)\:!}{\la_1\:!\ldots\la_N\:!}\, 
\sum_{i=1}^{N-1}\sum_{j=i+1}^N\la_i\>\la_j\,.\kern-2em
\vv.1>
\eeq
\end{thm}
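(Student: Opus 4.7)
The strategy is to view the matrix $M$ with entries $M_{\IJ}=\Om_\bla(\qq\:;\ka)\,\Mc_J(\Phi_\bla W_I)$ as the fundamental solution matrix, up to a scalar prefactor, of the compatible system \eqref{Ki}\,--\,\eqref{DEQ}, derive the scalar equations its determinant $D=\det M$ must satisfy, and then fix the proportionality constant from the $\qq\to\0$ asymptotics supplied by Theorem~\ref{thm cy}.

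Since the $J\:$-th column of $M$ equals $\bigl(\ka\>\Gm(h/\<\ka)\bigr)^{\la\+1}\Psi_{\?J}$ and each $\Psi_{\?J}$ solves \eqref{Ki} and \eqref{DEQ}, the full matrix $M$ obeys $M(z_a\<+\ka)=K_a\:M$ and $\ka\>q_i\>\der M/\der\:q_i=X_i\:M$. Taking determinants yields
\begin{equation*}
D(z_1\lc z_a+\ka\lc z_n\:;h\:;\qq\:;\ka)\,=\,\det\bigl(K_a|_{\Cnnl}\bigr)\cdot D\,,\qquad
\ka\>q_i\:\frac{\der\log D}{\der\:q_i}\,=\,\tr\bigl(X_i|_{\Cnnl}\bigr)\,.
\end{equation*}
From Lemma~\ref{lem new X} the off-diagonal $e^{(a)}_{\ij}\:e^{(b)}_{\ji}$ pieces are trace-free on a weight space, while $\tr\bigl(\sum_a z_a\:e^{(a)}_{\ii}\bigr)|_{\Cnnl}=d^{(1)}_{\bla,i}\sum_a z_a$ by counting $|\{\:I\<\in\Il\,:\,a\in I_i\:\}|=d^{(1)}_{\bla,i}$; integrating reproduces the $\bigl(\prod_i q_i^{\>d^{(1)}_{\bla,i}}\bigr)^{\sum_a z_a/\ka}$ factor of \eqref{detPsi}. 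For the \qKZ/ part, the diagonal block $\prod_i q_i^{e^{(a)}_{\ii}}$ of \eqref{K} restricts to $\prod_i q_i^{\>d^{(1)}_{\bla,i}}$ on $\Cnnl$ (consistent with the $\qq\:$-derivative), and each $R\:$-matrix factor $R(u;h)$, whose eigenvalues on $(\C^N)^{\ox 2}$ are $1$ on symmetric and $(u+h)/(u-h)$ on antisymmetric vectors, has $\det R^{(a,b)}(u;h)|_{\Cnnl}=\bigl((u+h)/(u-h)\bigr)^{d^{(2)}_\bla}$ after the combinatorial count that $|\{\:I\<\in\Il\,:\,a\in I_i,\,b\in I_j,\,i\ne j\:\}|=d^{(2)}_\bla$. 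A direct computation using $\Gm(x+1)/\Gm(x)=x$ and $\sin(\pi(x+1))=-\sin(\pi x)$ shows that the $a<b$ product on the right-hand side of \eqref{detPsi}, raised to $d^{(2)}_\bla$, has exactly the same ratio under $z_a\mapsto z_a+\ka$, so the right-hand side of \eqref{detPsi} solves the same scalar functional-\:plus-differential system as $D$.

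Consequently the ratio of $D$ to the right-hand side of \eqref{detPsi} is $\ka\:$-periodic in each $z_a$ and independent of $\qq\,$; by the unit-polydisk holomorphy of the $\Psd_{\?J}$ in Theorem~\ref{thm cy}, it is bounded as $\qq\to\0\,$, and a polynomial growth estimate in $\zz$ along the chamber $|q_{i+1}/q_i|<1$ then forces it to be a constant $C(h,\ka)$. To pin $C$ down, I send $\qq\to\0\,$ and read off the leading monomial of column $J$: it is the $\lb=\0$ term of \eqref{MSi<}, namely $W_I(\Si_J\:;\zz\:;h)\>\Res_{\>\TT=\Si_J}\<\Phi_\bla$, with the residue computed by \eqref{ResPhi}. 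By Lemma~\ref{WIz} the matrix $\bigl(W_I(\Si_J\:;\zz\:;h)\bigr)_{\IJ}$ is lower-triangular in the length order on $\{\si_I\}$, so its determinant collapses to $\prod_J W_J(\Si_J\:;\zz\:;h)$; the explicit evaluation \eqref{WII} then cancels the $c_\bla(\Si_J\:;h)$ denominators of the residue, aggregates the $\bigl(\ka\>\Gm(h/\<\ka)\bigr)^{\la\+1}$ across $d_\bla$ columns into $\bigl(\ka\>\Gm(h/\<\ka)\bigr)^{\la\+1\>d_\bla}$, and combines the phases $\prod_J\prod_i e^{\:\pii\,(n-\la_i)\sum_{a\in J_i}z_a/\ka}$ into $e^{\:2\pii\,(n-1)\>d^{(2)}_\bla\sum_a z_a/\ka}$ via $\sum_i(n-\la_i)\:d^{(1)}_{\bla,i}=2(n-1)\:d^{(2)}_\bla$. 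Matching this leading term against the $\qq=\0$ value of the right-hand side of \eqref{detPsi} forces $C=1$.

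The main obstacle is the explicit evaluation of $\det\bigl(K_a|_{\Cnnl}\bigr)$: the \qKZ/ operator is an ordered product of $R\:$-matrices on nested tensor factors together with the diagonal $\qq$-weight, and one has to aggregate the weight-subspace determinants of the individual factors, then verify that their telescoping along $z_a\mapsto z_a+\ka$ reassembles into the gamma-and-sine combination of \eqref{detPsi}, each pair $(a,b)$ appearing with the correct multiplicity $d^{(2)}_\bla$. A secondary, more routine obstacle is the combinatorial bookkeeping in the $\qq\to\0$ asymptotics, which rests on the triangularity of $\bigl(W_I(\Si_J)\bigr)_{\IJ}$ and on matching the counts to $d_\bla$, $d^{(1)}_{\bla,i}$, $d^{(2)}_\bla$ of \eqref{dla12}.
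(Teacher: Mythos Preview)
Your approach contains the right core computation but takes an unnecessary detour that introduces a gap. The paper's proof uses \emph{only} the dynamical equations: since $\tr X_i|_{\Cnnl}=d^{(1)}_{\bla,i}\sum_a z_a$, the determinant factors as $G(\zz\:;h;\ka)\cdot\prod_i q_i^{d^{(1)}_{\bla,i}\sum_a z_a/\ka}$ with $G$ independent of $\qq$; then $G$ is read off \emph{exactly} from the $\qq\to\0$ limit via Theorem~\ref{thm cy}. Your final paragraph performs precisely this limit computation, and it already yields the full $\zz$-dependence of $G$, not merely a constant.

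The gap is your intermediate step: you invoke the \qKZ/ equations to conclude the ratio is $\ka$-periodic in each $z_a$, then assert ``a polynomial growth estimate in $\zz$ \ldots\ forces it to be a constant $C(h,\ka)$.'' You never establish this growth estimate, and it is not obvious: the entries of both $D$ and the right-hand side involve infinite sums of residues and products of Gamma functions, so a priori the ratio could behave like $e^{2\pi i z_a/\ka}$ in imaginary directions. The paper's Remark after the theorem makes exactly this point: the \qKZ/ equations determine the $\zz$-factor only up to a $\ka$-periodic function. Fortunately this entire detour is unnecessary---once you know the ratio is $\qq$-independent, your $\qq\to\0$ matching identifies it for \emph{every} $\zz$, not just at a single point, so you should simply drop the periodicity/growth argument and let the asymptotic computation do all the work, as the paper does.

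Your \qKZ/ determinant computation ($\det R^{(a,b)}(u;h)|_{\Cnnl}=\bigl((u+h)/(u-h)\bigr)^{d^{(2)}_\bla}$ via the symmetric/antisymmetric eigenspace count) is correct and is a nice consistency check, but it is not needed for the proof.
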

\begin{proof}
Denote by $\,F(\zz\:;\qq)\,$ the determinant in the left\:-hand side of
formula \eqref{detPsi}\:. By Theorem \ref{thm cy}, it solves the differential
equations
\vvn.5>
\be
\Bigl(\:\ka\>q_i\:\frac{\der}{\der q_i}\>-\>
\tr\:X_i(\zz\:;h;\qq)|_{\:\Cnnl}\Bigr)\>F(\zz\:;\qq)\,=\,0\,, \qquad
i=1\lc N\>,\kern-2em
\vv.4>
\ee
where $\,X_i(\zz\:;h;\qq)|_{\:\Cnnl}\,$ are the restrictions of dynamical
Hamiltonians \eqref{Xi} to the invariant subspace $\,\Cnnl$. Since
\vv.06>
$\;\tr\:X_i(\zz\:;h;\qq)|_{\:\Cnnl}=d^{(1)}_{\bla\:,\:i}\>\sum_{a=1}^n z_a\,$,
the function $\,F(\zz\:;\qq)\,$ equals the product of powers of $\,q_1\lc q_n\:$
in the right\:-hand side of formula \eqref{detPsi} multiplied by a factor that
does not depend on $\,\qq\,$. This factor can be found by taking the limit
\,$q_{i+1}/q_i\to 0\,$ for all $\,i=1\lc N-1\,$, using Theorem \ref{thm cy}.
\end{proof}

\begin{rem}
By Theorem \ref{thm cy}, the determinant $\,F(\zz\:;\qq)\,$ in
Theorem \ref{thmdet} solves the difference equations
\vvn.2>
\beq
\label{FdetK}
F(z_1\lc z_a+\ka\lc z_n;\qq)\,=\,
\det K_a(\zz\:;h;\qq\:;\ka)|_{\:\Cnnl}\,F(\zz\:;\qq)\,,\qquad a=1\lc n\>,
\vv.2>
\eeq
where $\,K_a(\zz\:;h;\qq\:;\ka)|_{\:\Cnnl}$ are the restrictions of the \qKZ/
operators \eqref{K} to the invariant subspace $\,\Cnnl$. Equations
\vvn.1>
\eqref{FdetK} determine the product of Gamma functions in the right\:-hand side
of formula \eqref{detPsi} up to a $\:\ka\:$-periodic function of $\,\zzz\,$.
\end{rem}

\subsection{Proof of Theorem \ref{thm cy}}
\label{details}
Recall $\,\la\+1\<=\sum_{i=1}^{N-1}\la^{(i)}$.
\vvn.1>
Set $\,\la\+2=\,\sum_{i=1}^{N-1}\,\bigl(\la^{(i)}\bigr)^2$ and
$\,\la_{\{2\}}=\sum_{1\le i<j\le N}\>\la_i\>\la_j\,$. Notice the homogeneity
properties
\vvn.2>
\begin{gather}
\label{homogen}
\Phi_\bla(\TT\:;\zz\:;h;\qq\:;\ka)\,=\,\ka^{\:\la\+1\<-\:\la\+2}
\Phi_\bla(\TT\:/\ka;\zz/\ka;h/\ka;\qq\:;\:1)\,,\kern-1em
\\[2pt]
\notag
\Psi_{\?J}(\zz\:;h;\qq\:;\ka)\,=\,\ka^{\:\la\+1\<+\:\la_{\{2\}}}
\Psi_{\?J}(\zz/\ka;h/\ka;\qq\:;\:1)\,.\kern-2em
\\[-15pt]
\notag
\end{gather}
To simplify writing, we assume in this section that $\,\ka=1\,$ and omit
the corresponding argument in all functions. The general case can be recovered
by the homogeneity.

\vsk.3>
Recall \,$\Imil,\,A_\bla(\TT\:;\zz\:;h)$, see \eqref{Imil}\:, \eqref{Ath}\:.
For $\,\lb\<\in\<\Z_{\ge0}^{\la\+1}\?$, define
\vvn.4>
\beq
\label{BAc}
B_{\:\lb}(\zz\:;h)\,=\,
\frac{A_\bla(\Si_{\Imil}\<-\lb\:;\zz\:;h)}{c_\bla(\Si_{\Imil}\<-\lb;h)}\;.
\vv-.5>
\eeq
Set
\vvn-.1>
\beq
\label{Zc}
\Zc_\bla\,=\,\{\>\lb\<\in\Z^{\la\+1}_{\ge 0}\,\:|\ \,
l^{\:(i)}_{\:a}\?\ge l^{\:(i+1)}_{\:a}\<,\ \;i=1\lc N-1\,,
\ \;a=1\lc\la^{(i)}\:\}\,.\kern-2em
\vv.6>
\eeq

\begin{lem}
\label{Blem}
If $\;\lb\<\not\in\Zc_\bla\,$, then $\,B_{\:\lb}(\zz\:;h)=0\,$.
For $\;\lb\<\in\Zc_\bla\,$,
\vvn.4>
\begin{align}
\label{Blz}
B_{\:\lb}(\zz\:;h)\,=\,\prod_{i=1}^{N-1}\,
\prod_{a=1}^{\la^{(i)}}\;\biggl(\,\prod_{\satop{b=1}{b\ne a}}^{\la^{(i)}}\,
\frac{\Gm(z_b-z_a\?+l^{\:(i)}_{\:a}\!\<-l^{\:(i)}_{\:b}\?+1)\,
\Gm(z_b-z_a\?+l^{\:(i)}_{\:a}\!\<-l^{\:(i+1)}_{\:b}\?+h)\,}
{\Gm(z_b-z_a\?+l^{\:(i)}_{\:a}\!\<-l^{\:(i)}_{\:b}\?+h+1)\,
\Gm(z_b-z_a\?+l^{\:(i)}_{\:a}\!\<-l^{\:(i+1)}_{\:b}\?+1)} &{}\,\times{}
\kern-.8em
\\[-2pt]
\notag
{}\times\,\frac{\Gm(\:l^{\:(i)}_{\:a}\!\<-l^{\:(i+1)}_{\:a}\?+h)}
{(\:l^{\:(i)}_{\:a}\!\<-l^{\:(i+1)}_{\:a})\:!}\;
\prod_{c=\la^{(i)}+1}^{\la^{(i+1)}}
\frac{\Gm(z_c\<-z_a\?+l^{\:(i)}_{\:a}\!-l^{\:(i+1)}_{\:c}\?+h)}
{\Gm(z_c\<-z_a\?+l^{\:(i)}_{\:a}\!-l^{\:(i+1)}_{\:c}\?+1)} &\,\biggr)\>.
\kern-.8em
\end{align}
\end{lem}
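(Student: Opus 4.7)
The plan is to prove the identity by direct substitution into the defining formulas, using that Section \ref{details} operates under the normalization $\ka=1$. First I would observe that for $I=\Imil$ one has $i^{(j)}_a=a$ for all $a\le\la^{(j)}$, so the specialization $\Si_{\Imil}-\lb$ simply reads $t^{(i)}_a=z_a-l^{(i)}_a$. Plugging this into the definitions \eqref{cla} and \eqref{Ath} rewrites $B_{\:\lb}(\zz;h)$ as an explicit product of Gamma functions and elementary factors in the differences $z_b-z_a+l^{(i)}_a-l^{(i)}_b+h$ and $z_c-z_a+l^{(i)}_a-l^{(i+1)}_c+h$.

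Next I would merge the $c_\bla$ denominator into the same-level factors of $A_\bla$. Writing each $c_\bla$-factor as $(z_a-z_b+l^{(i)}_b-l^{(i)}_a-h)=-(z_b-z_a+l^{(i)}_a-l^{(i)}_b+h)$ and combining it with the corresponding factor $\Gm(z_b-z_a+l^{(i)}_a-l^{(i)}_b+h)^{-1}$ from $A_\bla$ via $x\>\Gm(x)=\Gm(x+1)$ produces $-\>\Gm(z_b-z_a+l^{(i)}_a-l^{(i)}_b+h+1)^{-1}$; the overall sign $(-1)^{\sum_i\la^{(i)}(\la^{(i)}-1)}$ equals $+1$ since $\la^{(i)}(\la^{(i)}-1)$ is even. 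Then, for each fixed $(i,a)$, I would split the inter-level product $\prod_{c=1}^{\la^{(i+1)}}$ in $A_\bla$ into three disjoint pieces: the term $c=a$, which is in range since $a\le\la^{(i)}\le\la^{(i+1)}$, yielding the isolated factor $\Gm(l^{(i)}_a-l^{(i+1)}_a+h)/\Gm(1+l^{(i)}_a-l^{(i+1)}_a)$; the terms $c\in\{1,\lc\la^{(i)}\}\minus\{a\}$, which after renaming $c\to b$ combine with the same-level contributions to give the first product over $b\ne a$ in \eqref{Blz}; and the terms $c\in\{\la^{(i)}+1,\lc\la^{(i+1)}\}$, which produce the last product in \eqref{Blz}. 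Using $\Gm(n+1)=n!$ for $n\in\Z_{\ge0}$ matches the statement verbatim.

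For the vanishing assertion, if $\lb\notin\Zc_\bla$ then by \eqref{Zc} there is some pair $(i,a)$ with $l^{(i)}_a<l^{(i+1)}_a$, equivalently $1+l^{(i)}_a-l^{(i+1)}_a\le0$; hence $\Gm(1+l^{(i)}_a-l^{(i+1)}_a)$ sits at a pole of the Gamma function, so the reciprocal $1/(l^{(i)}_a-l^{(i+1)}_a)!$ in the derived formula is zero. Since all other Gamma factors remain finite for generic $\zz,h$, no cancellation occurs and $B_{\:\lb}(\zz;h)=0$. The main obstacle is purely combinatorial bookkeeping in the three-way split of the inter-level product and in tracking the signs coming from $c_\bla$; once these are set up correctly, the identity follows algebraically from the functional equation of the Gamma function.
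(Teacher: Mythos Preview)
Your proposal is correct and is precisely the direct substitution argument the paper has in mind; the paper states Lemma~\ref{Blem} without proof, treating it as an immediate computation from definitions \eqref{BAc}, \eqref{Ath}, \eqref{cla}, and your bookkeeping of the three-way split of the inter-level product, the sign from $c_\bla$, and the vanishing via $1/\Gm(1+l^{(i)}_a-l^{(i+1)}_a)=0$ for $l^{(i)}_a-l^{(i+1)}_a<0$ is exactly what is required.
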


\vsk.4>
Recall $\,\zz_\si\<=(z_{\si(1)}\lc z_{\si(n)})\,$ and the permutations
\vv.1>
$\,\si_I\,$, $\,I\<\in\Il\,$, defined in Section \ref{secXo}.
Let $\,\rr=(r_1\lc r_{N-1})\,$. For $\,I,\<J\<\in\Il\,$, define
\vvn-.2>
\beq
\label{Bcf}
\Bc_{\IJ}(\zz\:;h;\rr)\,=\>\sum_{\lb\in\Zc_\bla\!}\,
B_{\:\lb}(\zz\:;h)\>W_I(\Si_{\Imil}\<-\lb\:;\zz_{\si_J^{-1}};h)\,
\prod_{i=1}^{N-1}\:r_i^{\>\sum_{a=1}^{\la^{(i)}}l_a^{(i)}}\!.\kern-1em
\vv-.4>
\eeq
Set
\vvn-.4>
\beq
\label{Mt}
\Mt_\bla(\zz\:;h)\,=\,\prod_{i=1}^{N-1}\,\prod_{a=1}^{\la^{(i)}}\,
\biggl(\sin\:(\pi\:h)\prod_{b=a+1}^{\la^{(i)}}\:
\frac{\sin\:\bigl(\pi\>(z_a\<-z_b)\bigr)}{z_a\<-z_b}
\prod_{c=\la^{(i)}+1}^{\la^{(i+1)}}\:
\sin\:\bigl(\pi\>(z_a\<-z_c\<-h)\bigr)\!\biggr)\>,\kern-.2em
\vv.3>
\eeq

\begin{prop}
\label{Bfhol}
For any $\,I,\<J\<\in\Il\,$, the function
\vvn.13>
$\,\Mt_\bla(\zz\:;h)\>\Bc_{\IJ}(\zz\:;h;\rr)\>$ is entire in $\,\zz,h\,$
and holomorphic in $\,\rr\:$ provided $\,|\:r_i\:|<1\>$ for all
$\,i=1\lc N-1\>$.
\end{prop}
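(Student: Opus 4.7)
The plan is to analyze the series defining $\Bc_{IJ}$ term-by-term, showing that each summand becomes entire in $(\zz,h)$ after multiplication by $\Mt_\bla(\zz;h)$, and that the resulting sum converges absolutely and uniformly on compact subsets of the polydisk $|r_i|<1$. Since a uniformly convergent sum of entire functions is entire, and since $W_I$ is a polynomial in its first argument, the main work lies in checking these two properties for the factor $\Mt_\bla(\zz;h)\>B_\lb(\zz;h)$.

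For the singularity analysis, I would work from the explicit Gamma product \eqref{Blz} and pair off its four families of numerator Gammas against the sine factors in $\Mt_\bla$. The Gammas $\Gm(z_b-z_a+l_a^{(i)}-l_b^{(i)}+1)$ indexed by pairs $(a,b)$ with $a,b\in\{1,\dots,\la^{(i)}\}$ can be grouped with their partners $\Gm(z_a-z_b+l_b^{(i)}-l_a^{(i)}+1)$; Euler reflection $\Gm(1+x)\Gm(1-x)=\pi x/\sin(\pi x)$ rewrites each pair as an entire function divided by $\sin(\pi(z_a-z_b))$, which is then killed by the factor $\sin(\pi(z_a-z_b))/(z_a-z_b)$ in $\Mt_\bla$. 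Analogously, the factors $\Gm(z_c-z_a+l_a^{(i)}-l_c^{(i+1)}+h)$ for $c>\la^{(i)}$ are handled by the $\sin(\pi(z_a-z_c-h))$ factors of $\Mt_\bla$ via $1/\Gm(x)=\sin(\pi x)\Gm(1-x)/\pi$, while $\Gm(l_a^{(i)}-l_a^{(i+1)}+h)$ (whose argument is at least $h$ because $\lb\in\Zc_\bla$) has poles only at non-positive integer values of $h$, which are cancelled by $\sin(\pi h)$. The bookkeeping shows that the sine factors in $\Mt_\bla$ exactly match the Gamma factors of $B_\lb$.

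For the convergence, Stirling's asymptotic applied to \eqref{Blz} gives the bound $|B_\lb(\zz;h)|\le C(1+|\lb|)^M$ on compact subsets of $(\zz,h)$-space (outside the polar divisor), with $C,M$ independent of $\lb$, since all Gamma arguments are linear in $\lb$. The weight function $W_I(\Si_{\Imil}-\lb;\zz_{\si_J^{-1}};h)$ is polynomial in $\lb$, so the summand of $\Bc_{IJ}$ is bounded by $C'(1+|\lb|)^{M'}\prod_i|r_i|^{\sum_a l_a^{(i)}}$. Polynomial growth in $|\lb|$ is dominated by geometric decay when $|r_i|<1$, giving absolute and uniform convergence on compacts in the polydisk, and thus holomorphicity in $\rr$.

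The principal obstacle I foresee is the careful verification that the residual factors produced by the reflection identity (such as the numerator $(z_b-z_a+l_a^{(i)}-l_b^{(i)})$ appearing from $\Gm(1+x)\Gm(1-x)=\pi x/\sin(\pi x)$) combine with the denominator ratios $\Gm(z_b-z_a+l_a^{(i)}-l_b^{(i)}+h+1)$ to produce only entire factors, without reintroducing hidden poles. A parallel subtlety is the treatment of the mixed Gammas $\Gm(z_b-z_a+l_a^{(i)}-l_b^{(i+1)}+h)$ in the numerator, whose poles mix the $\zz$ and $h$ variables; I expect that pairing them with the companion denominator Gammas of the shifted argument in \eqref{Blz} yields Pochhammer-type polynomial factors that are manifestly entire, but making this explicit requires the full combinatorial bookkeeping.
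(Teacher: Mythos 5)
Your overall architecture (term\:-by\:-term entirety plus uniform convergence via Stirling) matches the paper's, and your convergence estimate is essentially the argument used there. But there is a genuine gap in the singularity analysis: the product $\,\Mt_\bla(\zz\:;h)\>B_{\:\lb}(\zz\:;h)\,$ is \emph{not} entire, so the "bookkeeping" you describe cannot close. Look at the pairing you propose for the mixed Gammas: the ratio
$\Gm(z_b-z_a+l^{\:(i)}_{\:a}\!-l^{\:(i+1)}_{\:b}\!+h)\big/
\Gm(z_b-z_a+l^{\:(i)}_{\:a}\!-l^{\:(i)}_{\:b}\!+h+1)$
is a Pochhammer polynomial only when $\,l^{\:(i)}_{\:b}\!>l^{\:(i+1)}_{\:b}$. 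The cone $\,\Zc_\bla\,$ only forces $\,l^{\:(i)}_{\:b}\!\ge l^{\:(i+1)}_{\:b}$, and in the equality case the ratio degenerates to
$1\big/(z_b-z_a+l^{\:(i)}_{\:a}\!-l^{\:(i)}_{\:b}\!+h)$,
a genuine simple pole. No sine in $\,\Mt_\bla\,$ cancels it: the $h$-dependent sines $\,\sin\bigl(\pi\>(z_a\<-z_c\<-h)\bigr)\,$ run only over pairs with $\,a\le\la^{(i)}\!<c\,$, whereas these poles live on hyperplanes indexed by pairs $\,a,b\,$ both in $\,\{1\lc\la^{(i)}\}\,$.

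These leftover poles are killed only by the weight function $\,W_I(\Si_{\Imil}\<-\lb\:;\zz_{\si_J^{-1}};h)\,$, which you treat as an inert polynomial factor. The paper isolates the offending linear forms into the product $\,F_\lb(\zz\:;h)\,$ (taken exactly over the indices with $\,l^{\:(i)}_{\:b}\!=l^{\:(i+1)}_{\:b}$), proves that $\,\Mt_\bla\>B_{\:\lb}\>F_\lb\,$ is entire, and then invokes a nontrivial divisibility statement (Lemma \ref{lemW}, in the spirit of Lemma \ref{WIJ}) asserting that $\,W_I(\Si_{\Imil}\<-\lb\:;\zz_{\si_J^{-1}};h)/F_\lb(\zz\:;h)\,$ is a polynomial. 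Without this vanishing property of the weight functions at the shifted evaluation points, the term\:-by\:-term entirety claim is false, and your proof does not go through. To repair it you must either prove this divisibility or find a substitute cancellation mechanism; the convergence half of your argument can then stand as written.
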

\begin{proof}
By Stirling's formula, ratios of Gamma functions appearing in
formula \eqref{Blz} have the following asymptotics as $\,k\to+\:\infty\,$
over integers,
\vvn.5>
\beq
\label{Stirk}
\frac{\Gm(\al+k)}{\Gm(\bt+k)}\,=\,k^{\>\al-\bt}\>\bigl(1+o(1)\bigr)\,,\qquad
\frac{\Gm(\al-k)}{\Gm(\bt-k)}\,=\,k^{\>\al-\bt}\,
\frac{\sin\:(\pi\:\al)}{\sin\:(\pi\:\bt)}\>\bigl(1+o(1)\bigr)\,,\kern-2em
\vv.4>
\eeq
provided $\,\al\,$ and $\,\bt\,$ are not integers in the second case, and these
asymptotics can be differentiated with respect to $\,\al\,$ and $\,\bt\,$.

\vsk.2>
The right\:-hand side of formula \eqref{Bcf} allows one to present
$\,\Mt_\bla(\zz\:;h)\>\Bc_{\IJ}(\zz\:;h;\rr)\,$ as a power series in $\,\rr\,$:
\vvn-.9>
\be
\Mt_\bla(\zz\:;h)\>\Bc_I(\zz_{\si_J}\:;\zz\:;h;\rr)\,=\?
\sum_{\kk\in\Z^{N-1}_{\ge 0}}\!
\Bt_{\IJ,\:\kk}(\zz\:;h)\,\prod_{i=1}^{N-1}\,r_i^{\>k_i}\:.
\vv-.1>
\ee
Formulae \eqref{Blz}\:, \eqref{Stirk} show that for given $\,\zz,h\,$,
this series and its formal derivatives with respect to $\,\zz,h\,$ converge
provided $\,|\:r_i\:|<1\>$ for all $\,i=1\lc N-1\,$. Therefore, the function
$\,\Mt_\bla(\zz\:;h)\>\Bc_{\IJ}(\zz\:;h;\rr)\,$ is holomorphic
in $\,\rr\>$ provided $\,|\:r_i\:|<1\>$ for all $\,i=1\lc N-1\,$,
and is holomorphic in $\,\zz,h\,$ whenever all the coefficients
$\,\Bt_{\IJ,\:\kk}(\zz\:;h)\,$ are holomorphic in $\,\zz,h\,$.

\vsk.2>
It remains to show that all the coefficients $\,\Bt_{\IJ,\:\kk}(\zz\:;h)\,$
\vv.06>
are entire in $\,\zz,h\,$. To this end, it suffices to show that
for any $\,\lb\<\in\<\Zc_\bla\,$, the product $\,\Mt_\bla(\zz\:;h)\>
B_{\:\lb}(\zz\:;h)\>W_I(\Si_{\Imil}\<-\lb\:;\zz_{\si_J^{-1}};h)\,$
is an entire function of $\,\zz,h\,$.

\vsk.2>
Take $\,\lb\<\in\<\Zc_\bla\,$ and $\,a\ne b\,$.
Since $\,l^{\:(i)}_{\:a}\?\ge l^{\:(i+1)}_{\:a}$ and
$\,l^{\:(i)}_{\:b}\?\ge l^{\:(i+1)}_{\:b}$, the ratio
\vvn.2>
\be
\frac{\sin\:\bigl(\pi\>(z_a\?-z_b)\bigr)\,
\Gm(z_a\<-z_b+l^{\:(i)}_{\:b}\!\<-l^{\:(i)}_{\:a}\?+1)\>
\Gm(z_b-z_a\<+l^{\:(i)}_{\:a}\!\<-l^{\:(i)}_{\:b}\?+1)}
{(z_a\<-z_b)\,\Gm(z_a-z_b\?+l^{\:(i)}_{\:b}\!\<-l^{\:(i+1)}_{\:a}\?+1)\>
\Gm(z_b-z_a\<+l^{\:(i)}_{\:a}\!\<-l^{\:(i+1)}_{\:b}\?+1)}\kern-2em
\vv.3>
\ee
is an entire function of $\,z_a\:,z_b\,$,
and if $\,\:l^{\:(i)}_{\:b}\?>l^{\:(i+1)}_{\:b}$, then the ratio
\be
\frac{\Gm(z_b-z_a\?+l^{\:(i)}_{\:a}\!\<-l^{\:(i+1)}_{\:b}\?+h)}
{\Gm(z_b-z_a\?+l^{\:(i)}_{\:a}\!\<-l^{\:(i)}_{\:b}\?+h+1)}\kern-2em
\vv.2>
\ee
is a polynomial in $\,z_a\:,z_b\:,h\,$. Set
\vvn-.5>
\beq
\label{Fzh}
F_\lb(\zz\:;h)\,=\,\prod_{i=1}^{N-1}\,
\prod_{a=1}^{\la^{(i)}}\,\:\prod_{\satop{b=1}
{b\ne a\rlap{$\ssize{},\,l^{\:(i)}_{\:b}\?=\>l^{\:(i+1)}_{\:b}$}}}^{\la^{(i)}}
\,(z_b-z_a\<+l^{\:(i)}_{\:a}\!\<-l^{\:(i)}_{\:b}\?+h)\,.\kern-2em
\vv-.2>
\eeq
Then by formulae \eqref{Blz}\:, \eqref{Mt}\:, the product
\vv.1>
$\,\Mt_\bla(\zz\:;h)\alb\>B_{\:\lb}(\zz\:;h)\>F_\lb(\zz\:;h)\,$ is
an entire function of $\,\zz,h\,$. Since by Lemma \ref{lemW} below, the ratio
$\,W_I({\Si_{\Imil}\<-\lb\:;{}}\alb\zz_{\si_J^{-1}};h)/F_\lb(\zz\:;h)\,$
is a polynomial in $\,\zz,h\,$, the product $\,\Mt_\bla(\zz\:;h)\>
B_{\:\lb}(\zz\:;h)\>W_I(\Si_{\Imil}\<-\lb\:;\zz_{\si_J^{-1}};h)\,$
is an entire function of $\,\zz,h\,$ too. Proposition \ref{Bfhol} is proved.
\end{proof}

\begin{lem}
\label{lemW}
Let $\,F_\lb(\zz\:;h)\>$ be given by \eqref{Fzh}\:. Then the ratio
$\,W_I({\Si_{\Imil}\<-\lb\:;{}}\alb\zz_{\si_J^{-1}};h)/F_\lb(\zz\:;h)\,$
is a polynomial in $\,\zz,h\,$.
\end{lem}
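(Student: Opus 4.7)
The plan is to establish the divisibility one linear factor at a time. For each factor
\[
\ell_{(i,a,b)}(\zz, h) \,=\, z_b - z_a + l_a^{(i)} - l_b^{(i)} + h
\]
appearing in $F_\lb(\zz;h)$ (so $a, b \in \{1, \dots, \la^{(i)}\}$, $a \ne b$, $l_b^{(i)} = l_b^{(i+1)}$), I would show that the polynomial $W_I(\Si_{\Imil} - \lb; \zz_{\si_J^{-1}}; h)$ vanishes on the hyperplane $\ell_{(i,a,b)} = 0$, with multiplicity at least equal to that of $\ell_{(i,a,b)}$ in $F_\lb$; divisibility in $\C[\zz, h]$ follows.

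The key observation is that on $\ell_{(i,a,b)} = 0$ the specialized $\TT$-values obey $t_a^{(i)} - t_b^{(i)} - h = 0$ as well as $t_a^{(i)} - t_b^{(i+1)} - h = 0$, the second identity using $l_b^{(i)} = l_b^{(i+1)}$. I would then expand $W_I = \Sym U_I$ over tuples of level permutations $\si = (\si^{(1)}, \dots, \si^{(N-1)})$ and exhibit, in each relabeled summand of $U_I$, a factor that vanishes on the hyperplane. When $\si^{(i)}$ places the label $a$ after the label $b$, the same-level numerator $t_{b'}^{(i)} - t_{a'}^{(i)} - h$, with $b' = (\si^{(i)})^{-1}(a) > a' = (\si^{(i)})^{-1}(b)$, coming from the product $\prod_{b' > a'} \frac{t_{b'}^{(i)} - t_{a'}^{(i)} - h}{t_{b'}^{(i)} - t_{a'}^{(i)}}$ in $U_I$, specializes to zero. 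In the opposite order, the vanishing should come from a cross-level factor $(t^{(i)}_{\cdot} - t^{(i+1)}_{\cdot} - h)$ of $U_I$, or its $z$-analog at level $i = N - 1$.

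The hard part is this second case, especially at level $i = N - 1$: the cross-level factor reads $(t_{(\si^{(N-1)})^{-1}(a)}^{(N-1)} - z_d - h)$ for some $d$, and the substitution $\zz \to \zz_{\si_J^{-1}}$ converts the positional $z_d$ into $z_{\si_J^{-1}(d)}$. For the numerator to vanish one needs $d = \si_J(b)$, and for the factor to appear in $U_I$ at all one needs the combinatorial condition $i_{\si_J(b)}^{(N)} > i_{(\si^{(N-1)})^{-1}(a)}^{(N-1)}$, which is not guaranteed for every $\si$. I expect the resolution is either a cancellation argument pooling several $\si$-summands, or an inductive reduction based on Lemma \ref{c3t} that replaces $\si_J$ by a shorter permutation through adjacent transpositions and corresponding shifts of $\zz$, until the term-by-term vanishing becomes transparent.
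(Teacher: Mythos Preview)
The paper's own proof is one line --- ``by inspection, similarly to the proof of Lemma~\ref{WIJ}'' --- so your outline already contains more than what appears there. Your term-by-term strategy is correct, and the case you flagged as hard does in fact close termwise, without any cancellation or induction. You were looking only at the cross-level factor attached to position $p=(\si^{(i)})^{-1}(a)$; the missing step is to look simultaneously at position $q=(\si^{(i)})^{-1}(b)$.

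In your case (b) one has $p<q$, hence $i_p^{(i)}<i_q^{(i)}$. On the hyperplane $\ell_{(i,a,b)}=0$ the value at position $p$ is $t_a^{(i)}=t_b^{(i)}+h$ and at position $q$ it is $t_b^{(i)}$. For each index $c$ at level $i{+}1$, the cross-level factor in $U_I(\si\cdot\TT;\,\cdot\,;h)$ coming from position $p$ equals $(t_b^{(i)}-t_{\si^{(i+1)}(c)}^{(i+1)})$ whenever $i_c^{(i+1)}>i_p^{(i)}$, while the one from position $q$ equals $(t_b^{(i)}-t_{\si^{(i+1)}(c)}^{(i+1)})$ whenever $i_c^{(i+1)}<i_q^{(i)}$. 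Since $i_p^{(i)}<i_q^{(i)}$, every $c$ satisfies at least one of these inequalities, so the product of cross-level factors from positions $p$ and $q$ together is divisible by $\prod_{d=1}^{\la^{(i+1)}}(t_b^{(i)}-t_d^{(i+1)})$. For $i<N-1$ the hypothesis $l_b^{(i)}=l_b^{(i+1)}$ gives $t_b^{(i)}=t_b^{(i+1)}$ at the specialization, so the $d=b$ factor vanishes. For $i=N-1$ the level-$N$ entries are the permuted list $\zz_{\si_J^{-1}}$ and $t_b^{(N-1)}=z_b$ (from $l_b^{(N-1)}=0$), so $\prod_d(z_b-z_{\si_J^{-1}(d)})=\prod_e(z_b-z_e)=0$. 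Thus every case-(b) summand vanishes as well. Distinct triples $(i,a,b)$ supply vanishing factors located at different slots of $U_I$, which takes care of repeated linear factors in $F_\lb$.
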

\begin{proof}
The proof is by inspection, similarly to the proof of Lemma \ref{WIJ}.
\end{proof}

\begin{proof}[Proof of Theorem \ref{thm cy}]
Recall that we assume $\,\ka=1\,$ and skip the corresponding argument
in all functions. By formulae \eqref{MSi<}\:, \eqref{BAc}\,--\,\eqref{Bcf}\:,
and Lemmas \ref{ResPh}, \ref{Blem},
\vvn.5>
\begin{align}
\label{MJfB}
\Mc_J(\Phi_\bla\:W_I){}&{}(\zz\:;h;\qq)\,=\,
\Bc_{\IJ}(\zz_{\si_J};h;q_2/q_1\lc q_N/q_{N-1})\>\times{}
\\[5pt]
\notag
& {}\,\times\,\prod_{i=1}^N\>\bigl(\>e^{\:\pii\,(n-\la_i)}\>q_i\:\bigr)
^{\>\sum_{a\in J_i}\?z_a}\:\prod_{i=1}^{N-1}\prod_{j=i+1}^N\,\prod_{a\in J_i}
\,\prod_{b\in J_j}\>\frac{\pi}{\sin\bigl(\pi\>(z_a\<-z_b)\bigr)}\;.
\\[-13pt]
\notag
\end{align}
Then by formulae \eqref{mcF}\:, \eqref{PsiPsd}\:, \eqref{MJfB}\:, we have
\vvn.6>
\begin{align}
\label{yaPsd}
\Psd_{\?J}(\zz\:;h;\qq)\,&{}=\>\sum_{I\in\Il}\:
\Bc_{\IJ}(\zz_{\si_J};h;q_2/q_1\lc q_N/q_{N-1})\,v_I\,\times{}
\\
\notag
&{}\>\times\,\frac{\Om_\bla(\qq)}{\bigl(\Gm(h)\bigr)^{\la\+1}}\,
\prod_{i=1}^{N-1}\prod_{j=i+1}^N\,\:\prod_{a\in J_i}\>
\biggl(\,\prod_{b\in J_j}\>\frac\pi{\Ga(z_b\<-z_a\<+h+1)}\;
\prod_{\satop{c\in J_j}{c<a}}(z_a\<-z_c\<-h)\?\biggr).\kern-.6em
\\[-14pt]
\notag
\end{align}
By Proposition \ref{Bfhol}, the function in the right-hand side of formula
\eqref{yaPsd} is holomorphic in $\,\qq\,$ in the unit polydisk around $\,\0\,$,
and may have poles in $\,\zz,h\,$ at most at the hyperplanes
$\,z_a\<-z_b\<\in\Z_{\:\ne\:0}\,$ for $\,a\ne b\,$,
$\,z_a\<-z_b\<-h\<\in\Z_{\ge 0}\,$ for $\,a\in J_i\,$, $\,b\in J_j\,$,
$\,i<j\,$, and $\,h\in\Z_{>0}\,$. To complete the proof of
Theorem \ref{thm cy}, it remains to show that those poles
do not actually occur.

\vsk.2>
For the hyperplanes $\,z_a\<-z_b\<\in\Z_{\:\ne\:0}\,$, $\,a\ne b\,$,
\vvn.1>
we will show the regularity of the function
$\,M_\bla(\zz_{\si_J}\<)\>\Psi_{\?J}(\zz\:;h;\qq)\,$.
\vv.1>
Observe that $\,M_\bla(\zz_{\si_J}\<)\>\Psi_{\?J}(\zz\:;h;\qq)\,$ is regular
at the hyperplanes $\,z_a\<=z_b\,$ for all $\,a\ne b\,$, and solves \qKZ/
difference equations \eqref{Ki}\:. Since all the \qKZ/ operators
\vv.1>
$\,K_1(\zz\:;h;\qq)\lc K_n(\zz\:;h;\qq)\,$ and their inverses are regular
at the hyperplanes $\,z_a\<-z_b\<\in\Z\,$ for $\,a\ne b\,$, the function
$\,M_\bla(\zz_{\si_J}\<)\>\Psi_{\?J}(\zz\:;h;\qq)\,$ is regular at all hyperplanes
$\,z_a\<-z_b\<\in\Z\,$ for $\,a\ne b\,$.

\vsk.2>
To deal with the hyperplanes $\,z_a\<-z_b\<-h\<\in\Z_{\ge 0}\,$ and
$\,h\in\Z_{>0}\,$, we will show that for given $\,\zz\,$, the function
$\,\Psd_{\?J}(\zz\:;h;\qq)\,$ is entire in $\,h\,$, and it suffices to do it
for generic $\,\zz\,$.

\vsk.2>
To simplify writing, we will omit the argument $\,\zz\,$
in all functions for a while. Let $\,\rr=(r_1\lc r_{N-1})$ and
$\,\rr_{\!*}=\:(1,r_1\:,\:r_1r_2\>\lc\:r_1\ldots r_{N-1})\,$.
\vvn.1>
Denote $\,F(h;\rr)=\Psd_{\?J}(h;\rr_{\!*})\,$, so that
$\,\Psd_{\?J}(h;\qq)=F(h;q_2/q_1\lc q_N/q_{N-1})\,$.

\vsk.2>
Recall the dynamical operators $\,X_1\lc X_N\,$, see \eqref{Xi}\:. Set
\vvn-.1>
\be
X_{(i)}(h;\rr)\,=\,\sum_{j=i+1}^N\Bigl(X_j(h;\rr_{\!*})\:-
\sum_{a\in J_j\!}\,z_a\:\Bigr)\Big|_{\:\Cnnl}\,,\qquad i=1\lc N-1\,.\kern-2em
\vv.1>
\ee
The dynamical differential equations \eqref{DEQ} for $\,\Psi_{\?J}(h;\qq)\,$
are equivalent to the following equations for $F(h;\rr)$,
\vvn-.3>
\beq
\label{Feq}
r_i\:\frac\der{\der\:r_i}\>F(h;\rr)\,=\,X_{(i)}(h;\rr)\>F(h;\rr)\,,\qquad
i=1\lc N-1\,.\kern-2em
\eeq

\vsk.4>
The operators $\,X_{(1)}(h;\rr)\lc X_{(N-1)}(h;\rr)\,$ are linear functions in
\vvn.1>
$\,h\,$ and rational functions in $\,\rr\,$, regular provided $|\:r_i\:|<1\,$
for all $\,i=1\lc N-1\,$.
\vv.08>
The eigenvalues of the operator $\,X_{(i)}(h;0\:)\,$ for a given $\,i\,$ are
\vv.08>
$\,\sum_{j=i+1}^N\>\bigl(\>\sum_{a\in I_j\!}z_a-
\sum_{a\in J_j\!}z_a\:\bigr)\,$, $\;I\<\in\Il\,$. Hence, one of the eigenvalues
of $\,X_{(i)}(h;0\:)\,$ equals zero and all other eigenvalues are not integers
\vv.05>
for generic $\,\zz\,$. Therefore, a solution $\,F(h;\rr)\,$ of equations
\eqref{Feq} holomorphic $|\:r_i\:|<1\,$ for all $\,i\,$, is uniquely determined
by the value $\,F(h;0)\,$, and $\,F(h,\rr)\,$ is holomorphic in $\,h\,$
whenever $\,F(h;0)\,$ is. The value $\,F(h;0)=\Psd(h;\0)\,$ can be found from
formulae \eqref{mcF}\:, \eqref{ResPhi}\:, \eqref{Blz}\:, \eqref{Bcf}\:,
\eqref{MJfB}\:, \eqref{yaPsd}\:. It is a polynomial in $\,h\,$
by Lemmas \ref{WIz}, \ref{WIJ}. Therefore, $\,F(h,\rr)\,$ is
an entire function of $\,h\,$, and so is $\,\Psd(h;\qq)\,$.

\vsk.2>
Formulae \eqref{Psd}\:, \eqref{Psd0}, \eqref{Psdab}\:, and the rationality
of $\,\Psd_{\<J,\:\mb}(\zz\:;h)\,$ follow from formulae \eqref{mcF}\:,
\eqref{BAc}\,--\,\eqref{Bcf}\:, \eqref{MJfB}\:, \eqref{yaPsd}\:,
and Lemma \ref{WIz}. The properties of poles of
$\,\Psd_{\<J,\:\mb}(\zz\:;h)\,$ are determined by the analytic properties of
\,$\Psd_{\?J}(\zz\:;h;\qq)\,$. The fact that $\,\Psd_{\<J,\:0}\:(\zz\:;h)\,$
is a polynomial in $\,\zz,h\,$ follows from Lemma \ref{WIJ}.
\end{proof}

\subsection{Solutions of the dynamical and \qKZ/ equations parametrized
by Laurent polynomials}
\label{seclaur}
In the sequel, we will use the following notation. For any variable $\,x\,$,
there is the companion $\,\xdd\,$, and for any function $\,f(\xdd)\,$,
we set $\,\fdd(x;\ka)=f(e^{\:2\:\pii\,x/\<\ka})\,$.
The convention for collections of variables is similar. For instance,
\vv.1>
$\,\zzd=(\zdd_1\lc\zdd_n)\,$ and $\,\fdd(\zz\:;\ka)=
f(e^{\:2\:\pii\,z_1\</\<\ka}\lc e^{\:2\:\pii\,z_n\</\<\ka})\,$.

\vsk.2>
Introduce the variables $\gm_{\ij}\,$, $\,i=1\lc N$, $\;j=1\lc\la_i\,$. Denote
\vvn.2>
\beq
\label{GG}
\GG\>=\,(\gm_{1,1}\lc\gm_{1,\>\la_1}\:\lc\gm_{N,1}\lc\gm_{N\?,\>\la_N})\,,\qquad
\GGd^{\pm1}=\,(\gmd_{1,1}^{\pm1}\lc\gmd_{N\?,\>\la_N}^{\pm1})\,.\kern-2em
\vv.2>
\eeq
Recall $\,\zz_\si\<=(z_{\si(1)}\lc z_{\si(n)})\,$, and the permutations
$\,\si_I\,$, $\,I\<\in\Il\,$, defined in Section \ref{secXo}.
For a Laurent polynomial $\,P(\:\GGd\:;\zzd;\hdd)\,$, set
\vvn.4>
\beq
\label{PPI}
\Psi_P(\zz\:;h;\qq\:;\ka)\,=\,
\sum_{J\in\Il}\,\Pdd(\zz_{\si_J};\zz\:;h;\ka)\,\Psi_{\?J}(\zz\:;h;\qq\:;\ka)\,,
\kern-2em
\vv.1>
\eeq
where the functions $\,\Psi_{\?J}(\zz\:;h;\qq\:;\ka)\,$,
are given by \eqref{mcF}.
\begin{prop}
\label{PsiPsol1}
The function $\,\Psi_P(\zz\:;h;\qq\:;\ka)\>$ is a solution of the joint system
of dynamical differential equations \eqref{DEQ} and \qKZ/ difference equations
\eqref{Ki}\:.
\end{prop}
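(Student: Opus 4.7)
The plan is to verify the two families of equations separately, reducing each case to Theorem \ref{thm cy} applied term by term. Two structural features of the coefficients $\Pdd(\zz_{\si_J};\zz;h;\ka)$ will carry the whole argument: they are scalar-valued and $\qq$-independent, and they are $\ka$-periodic in every $z_a$.

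First I would handle the dynamical equations \eqref{DEQ}. Since each $\Pdd(\zz_{\si_J};\zz;h;\ka)$ does not depend on $\qq$, it commutes with both the differential operator $\ka\:q_i\,\partial/\partial q_i$ and the endomorphism $X_i(\zz;h;\qq)$ of $\Cnn$. Applying Theorem \ref{thm cy} to each $\Psi_{\?J}$ inside the sum \eqref{PPI} will then give, term by term,
$$\ka\:q_i\,\frac{\partial \Psi_P}{\partial q_i}\,=\sum_{J\in\Il}\Pdd(\zz_{\si_J};\zz;h;\ka)\,\ka\:q_i\,\frac{\partial \Psi_{\?J}}{\partial q_i}\,=\sum_{J\in\Il}\Pdd(\zz_{\si_J};\zz;h;\ka)\,X_i\:\Psi_{\?J}\,=\,X_i\:\Psi_P.$$

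For the qKZ equations \eqref{Ki}, the key observation is that $P$ is a Laurent polynomial in the companion variables $\GGd,\zzd,\hdd$, and these enter $\Pdd$ only through the substitution $x\mapsto e^{\:2\:\pii\,x/\ka}$. Because $e^{\:2\:\pii\,(z_a+\ka)/\ka}=e^{\:2\:\pii\,z_a/\ka}$, the coefficient $\Pdd(\zz_{\si_J};\zz;h;\ka)$ is invariant under every shift $z_a\mapsto z_a+\ka$, uniformly in $J$. Hence the shift on $\Psi_P$ passes entirely onto each factor $\Psi_{\?J}$; Theorem \ref{thm cy} converts every such shift into an application of $K_a(\zz;h;\qq;\ka)$; and since $K_a$ is independent of $J$, it pulls out of the sum to produce $K_a(\zz;h;\qq;\ka)\,\Psi_P(\zz;h;\qq;\ka)$.

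I do not anticipate a real obstacle: the proposition is essentially a bookkeeping statement to the effect that the joint system is preserved by linear combinations with $\qq$-independent, $\ka$-periodic scalar coefficients, which is exactly the shape of the factors produced by the substitution defining $\Pdd$. The only mild subtlety worth pausing on is checking that the shift $z_a\mapsto z_a+\ka$ really does leave $\Pdd(\zz_{\si_J};\zz;h;\ka)$ fixed for every $J$; this follows because the relevant shift touches the $\si_J^{-1}(a)$-th entry of $\zz_{\si_J}$ and the $a$-th entry of $\zz$, both of which enter only through $e^{\:2\:\pii\,(\cdot)/\ka}$.
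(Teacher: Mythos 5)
Your proposal is correct and is exactly the paper's own argument: the paper's proof invokes Theorem \ref{thm cy} together with the two observations that $\Pdd(\zz_{\si_J};\zz\:;h;\ka)$ is independent of $\qq$ and is $\ka$-periodic in each $z_a$, which is precisely the bookkeeping you carry out term by term. No gaps.
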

\begin{proof}
By Theorem~\ref{thm cy}, the function $\,\Psi_P(\zz\:;h;\qq\:;\ka)\,$
solves the system of equations \eqref{DEQ} and \eqref{Ki} since
$\,\Pdd(\zz_{\si_J};\zz\:;h;\ka)\,$ does not depend on $\,\qq\,$ and is
$\,\ka\:$-periodic function in each of the variables $\,\zzz\,$.
\end{proof}

Let $\,\C[\:\GGd^{\pm1}]^{\:S_\bla}$ be the space of Laurent polynomials in
$\,\GGd$ symmetric in $\,\gmd_{i,1}\lc\gmd_{i,\>\la_i}$ for each $\,i=1\lc N\>$.

\begin{prop}
\label{PsiPsol2}
For any
$\>P\?\in\C[\:\GGd^{\pm1}]^{\:S_\bla}\?\ox\C[\:\zzd^{\pm1}\?,\hdd^{\pm1}]\,$,
the function $\,\Psi_P(\zz\:;h;\qq\:;\ka)\>$ is holomorphic in $\,\zz,h,\qq\,$
provided $\,z_a\<-z_b\<+h\not\in\<\ka\>\Z_{\le0}\,$ for all $\,a,b=1\lc n\,$,
$\,a\ne b\,$, and $\,\:|\:q_{i+1}/q_i|<1\,$ for all $\,i=1\lc N-1\,$,
\vvn.1>
with a branch of $\,\:\log\:q_i\:$ fixed for each \,$i=1\lc N$.
The singularities of $\,\Psi_P(\zz\:;h;\qq\:;\ka)\>$ at the hyperplanes
$\;z_a\<-z_b\<+h\in\<\ka\>\Z_{\le0}\,$ are simple poles.
\end{prop}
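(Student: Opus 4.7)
The plan is to insert the factorization $\Psi_{\?J}=\Psd_{\?J}\,\Tht_{\?J}$ from Theorem \ref{thm cy} (where $\Tht_{\?J}$ abbreviates the explicit prefactor displayed in \eqref{PsiPsd}) into the definition \eqref{PPI} and track where singularities can remain in the sum. By Theorem \ref{thm cy} each $\Psd_{\?J}$ is entire in $\zz,h$ and holomorphic in $\qq$ on the polydisk $|q_{i+1}/q_i|<1$; each $\Pdd(\zz_{\si_J};\zz;h;\ka)$ is entire in $\zz,h$ and independent of $\qq$; hence all singularities of $\Psi_P$ come from the $\Tht_{\?J}$'s. Inspection of \eqref{PsiPsd} shows that $\Psd_{\?J}\Tht_{\?J}$ has at most simple poles on (i)~the hyperplanes $z_a-z_b\in\ka\Z$ with $a,b$ in distinct parts of $J$ (from $1/\sin(\pi(z_a-z_b)/\ka)$), and (ii)~the hyperplanes $z_a-z_b-h\in\ka\Z_{\ge0}$ with $a\in J_i$, $b\in J_j$, $i<j$ (the $k>0$ case from the Gamma poles, and the $k=0$ factor $1/(z_a-z_b-h)$ obtained by multiplying the $b<a$ part inside $\Tht_{\?J}$ with the $b>a$ part coming from $\Psd_{\?J,0}$ in \eqref{Psd0}). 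Type~(ii) hyperplanes lie in the excluded set $\{z_a-z_b+h\in\ka\Z_{\le0}\}$ after renaming $a\leftrightarrow b$, so the simple-pole statement follows directly from the fact that every summand has simple poles; the remaining task is to show that the type~(i) poles cancel in $\Psi_P$.

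For a fixed $a\ne b$, regularity of $\Psi_P$ at $z_a=z_b$ is established by pairing each $J\in\Il$ with $J'=s_{a,b}(J)$. If $a,b$ lie in the same part of $J$, then $J=J'$ and no pole is present. Otherwise, for $a\in J_i$, $b\in J_j$ with $i<j$, the residues of $\Psd_{\?J}\Tht_{\?J}$ and $\Psd_{\?J'}\Tht_{\?J'}$ at $z_a=z_b$ come solely from the sine factors $1/\sin(\pi(z_a-z_b)/\ka)$ and $1/\sin(\pi(z_b-z_a)/\ka)$, whose residues $\pm\ka/\pi$ have opposite sign. Property \eqref{Psdab} gives $\Psd_{\?J}|_{z_a=z_b}=\Psd_{\?J'}|_{z_a=z_b}$, and the $S_\bla$-symmetry of $P$ gives $\Pdd(\zz_{\si_J};\zz;h;\ka)|_{z_a=z_b}=\Pdd(\zz_{\si_{J'}};\zz;h;\ka)|_{z_a=z_b}$, since the multisets $\{e^{2\pii z_c/\ka}:c\in J_k\}$ and $\{e^{2\pii z_c/\ka}:c\in J'_k\}$ coincide at $z_a=z_b$ for every $k$. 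A case analysis on the position of the part of each $c\ne a,b$ relative to $i$ and $j$ then shows that all remaining Gamma, sine, and rational factors in $\Tht_{\?J}$ and $\Tht_{\?J'}$ match at $z_a=z_b$; here the combined rational factors of $\Psd_{\?J,0}\Tht_{\?J}$ assemble into the symmetric product $\prod_{i'<j'}\prod_{a'\in J_{i'},\,b'\in J_{j'}}1/(z_{a'}-z_{b'}-h)$, which depends on $J$ only through the partition of indices into parts. Hence
\be
\Res_{z_a=z_b}\bigl(\Pdd(\zz_{\si_J})\Psd_{\?J}\Tht_{\?J}+\Pdd(\zz_{\si_{J'}})\Psd_{\?J'}\Tht_{\?J'}\bigr)=0,
\ee
and summing over unordered pairs yields regularity of $\Psi_P$ at $z_a=z_b$.

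To propagate regularity from $z_a=z_b$ to all $z_a-z_b\in\ka\Z$, use the \qKZ/ equations \eqref{Ki} satisfied by $\Psi_P$ (Proposition \ref{PsiPsol1}). The operators $K_a$ and their inverses are rational in $\zz,h$ with singular loci supported on finitely many hyperplanes whose equations are linear in $h$ and $\ka$; for generic $h$ these do not meet $\ka\Z$ in the variable $z_a-z_b$, so iterating the shifts $z_a\mapsto z_a\pm\ka$ starting from regularity at $z_a=z_b$ yields regularity at $z_a-z_b=m\ka$ for every $m\in\Z$ on a dense open subset of the complement of the excluded set. Because the residue of $\Psi_P$ on each hyperplane $z_a-z_b=m\ka$ is analytic in the remaining variables, vanishing on a dense subset implies vanishing throughout that complement. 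Holomorphy in $\qq$ on the polydisk $|q_{i+1}/q_i|<1$ is inherited term-wise from Theorem \ref{thm cy}. The main obstacle will be the factor-matching computation in the second paragraph, which requires careful bookkeeping of how the $1/(z_a-z_b-h)$ contributions split between $\Psd_{\?J,0}$ and $\Tht_{\?J}$ and of how every Gamma, sine, and rational factor involving $a$ or $b$ paired with some $c$ behaves at $z_a=z_b$ under $J\mapsto s_{a,b}(J)$.
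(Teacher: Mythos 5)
Your proposal follows the same route as the paper's proof: reduce via Theorem \ref{thm cy} to showing regularity on the hyperplanes $z_a-z_b\in\ka\Z$, establish regularity at $z_a=z_b$ by pairing $J$ with $s_{a,b}(J)$ and invoking \eqref{Psdab} together with the $S_\bla$-symmetry of $P$, and then propagate to all of $z_a-z_b\in\ka\Z$ through the \qKZ/ equations \eqref{Ki}, using that the operators $K_a$ and their inverses are regular there. The bookkeeping you flag at the end is precisely what the paper compresses into the single reference to formula \eqref{Psdab}, and your observation that the factors $1/(z_{a'}-z_{b'}-h)$ become symmetric in $J$ only after combining the ones appearing in the prefactor of \eqref{PsiPsd} with those inside \eqref{Psd0} is the correct way to make that step precise.
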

\begin{proof}
By Theorem~\ref{thm cy}, we need only to show that
$\,\Psi_P(\zz\:;h;\qq\:;\ka)\,$ is regular at the hyperplanes
$\,z_a\<-z_b\<\in\<\ka\>\Z\,$, $\,a\ne b\,$, where it might have simple poles.

\vsk.2>
The regularity of $\,\Psi_P(\zz\:;h;\qq\:;\ka)\,$ at the hyperplanes
$\,z_a\<=z_b\,$ for all $a\ne b$ follows from formula \eqref{Psdab}\:.
Since $\,\Psi_P(\zz\:;h;\qq\:;\ka)\,$ solves \qKZ/ difference
equations \eqref{Ki} and all the \qKZ/ operators
$\,K_1(\zz\:;h;\qq\:;\ka)\lc K_n(\zz\:;h;\qq\:;\ka)\,$ and their inverses
are regular at the hyperplanes $\,z_a\<-z_b\<\in\<\ka\>\Z\,$, the function
$\,\Psi_P(\zz\:;h;\qq\:;\ka)\,$ is regular at all hyperplanes
$\,z_a\<-z_b\<\in\<\ka\>\Z\,$, $\,a\ne b\,$.
\end{proof}

Denote by $\>\Srsl\:$ the space of solutions of the system of dynamical
\vv.1>
differential equations \eqref{DEQ} and \qKZ/ difference equations \eqref{Ki}
\vv.06>
spanned over $\,\C\,$ by the functions $\,\Psi_P(\zz\:;h;\qq\:;\ka)\,$,
$\>P\?\in\C[\:\GGd^{\pm1}]^{\:S_\bla}\?\ox\C[\:\zzd^{\pm1}\?,\hdd^{\pm1}]\,$.
The space $\>\Srsl\:$ is a $\,\C[\:\zzd^{\pm1}\?,\hdd^{\pm1}]\:$-\:module
with $f(\zzd\:;\hdd)\,$ acting as multiplication by $\fdd(\zz\:;h;\ka)\,$.

\vsk.3>
Define the algebra
\vvn-.4>
\beq
\label{Krel}
\Kc_\bla\:=\,\C[\:\GGd^{\pm1}]^{\:S_\bla}\?\ox
\C[\:\zzd^{\pm1}\?,\hdd^{\pm1}]\>\Big/\Bigl\bra
\,\prod_{i=1}^N\prod_{j=1}^{\la_i}\,(u-\gmd_{\ij})\,=\,
\prod_{a=1}^n\,(u-\zdd_a)\Bigr\ket\,,\kern-1.6em
\vv.1>
\eeq
where $\,u\,$ is a formal variable. By \eqref{PPI}\:,
\vvn.2>
the assignment $\,P\mapsto\Psi_P\>$ defines a homomorphism
\beq
\label{muk}
\muk:\:\Kc_\bla\to\:\Srsl\,,\qquad Y\<\mapsto\Psi_Y\>,
\eeq
of $\,\C[\:\zzd^{\pm1}\?,\hdd^{\pm1}]\:$-\:modules.

\vsk.2>
By Propositions \ref{PA1}, \ref{PA2}, the algebra $\,\Kc_\bla\,$ is
a free $\,\C[\:\zzd^{\pm1}\?,\hdd^{\pm1}]\:$-\:module generated
by the classes
\vvn-.4>
\beq
\label{YI}
Y_I(\:\GG)\,=\>V_I(\gmd_{1,1}^{-1}\lc\gmd_{1,\>\la_1}^{-1}\:\lc
\gmd_{N\?,1}^{-1}\lc\gmd_{N\?,\>\la_N}^{-1})\,,
\qquad I\<\in\Il\,,\kern-2em
\vv.2>
\eeq
where the polynomials $\,V_{\<I}\>$ are defined by formula \eqref{VIx}\:.
Introduce the coordinates of solutions $\,\Psi_{Y_I}\>$:
\beq
\label{PsiYI}
\Psi_{Y_I}\<(\zz\:;h;\qq\:;\ka)\,=\>
\sum_{J\in\Il}\,\Psb_{\IJ}(\zz\:;h;\qq\:;\ka)\,v_J\,.
\vv-.2>
\eeq
\begin{thm}
\label{detY}
Let \,$n\ge 2$\,. Then
\begin{align}
\label{detPsiY}
\,\det\:\bigl(\>\Psb_{\IJ}(\zz\:;h;\qq\:;\ka)\bigr)_{\IJ\in\>\Il}\<=
\,\Bigl(\:e^{\:\pii\,(n-1)\>d^{(2)}_\bla}\>\prod_{i=1}^N\,
q_i^{\:\smash{d^{(1)}_{\bla,i}}\vp|}\>\Bigr)^{\sum_{a=1}^n z_a\</\<\ka}\,\,
\prod_{j=2}^{n-1}\,j^{\:(n-j)\>d^{(2)}_\bla}\times{}\!\? &
\\
\notag
{}\times\,\prod_{a=1}^{n-1}\,\prod_{b=a+1}^n\bigl(\:2\:\piit\;\ka^2\,
\Gm\bigl(\<(z_a\<-z_b\<+h)/\ka\bigr)\>\Gm\bigl(1+(z_b\<-z_a\<+h)/\ka\bigr)
\bigr)^{\>d^{(2)}_\bla}\!\<, &
\\[-15pt]
\notag
\end{align}
where $\;d^{(1)}_{\bla\:,\:i}\,,\,d^{(2)}_\bla$ are given by
formulae \,\eqref{dla12}\:.
\end{thm}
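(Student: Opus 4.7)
The plan is to factorize the matrix $\Psb=(\Psb_{I,J})_{I,J\in\Il}$ as a product $\Psb=SM$ of two matrices whose determinants can be computed separately, and then use $\det\Psb=\det S\cdot\det M$. Combining \eqref{mcF} and \eqref{PPI}, I get
\begin{equation*}
\Psi_{Y_I}\,=\,\sum_{K\in\Il}\Ydd_I(\zz_{\si_K};\zz;h;\ka)\,\Psi_K\,=\,\sum_{J\in\Il}\Bigl(\sum_{K\in\Il}S_{I,K}\,M_{K,J}\Bigr)v_J,
\end{equation*}
where $S_{I,K}=\Ydd_I(\zz_{\si_K};\zz;h;\ka)$ is the matrix of evaluations of the basis $\{Y_I\}$ and $M_{K,J}=(\ka\Gm(h/\ka))^{-\la\+1}\Om_\bla(h;\qq;\ka)\Mc_K(\Phi_\bla W_J)(\zz;h;\qq;\ka)$ is the coefficient matrix of the solutions $\Psi_K$ in the basis $\{v_J\}$.

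For $\det M$, I would apply Theorem~\ref{thmdet} (with rows and columns interchanged; this does not affect the determinant). The scalar $(\ka\Gm(h/\ka))^{-\la\+1}$ present in every entry of $M$ contributes $(\ka\Gm(h/\ka))^{-\la\+1 d_\bla}$ to $\det M$, cancelling exactly the $(\ka\Gm(h/\ka))^{\la\+1 d_\bla}$ factor on the right-hand side of \eqref{detPsi}. What remains of $\det M$ is the $q$-exponential $\bigl(e^{2\pii(n-1)d^{(2)}_\bla}\prod_i q_i^{d^{(1)}_{\bla,i}}\bigr)^{\sum_a z_a/\ka}$ together with the product $\prod_{a<b}\bigl(\pi\ka^2\>\Gm((z_a-z_b+h)/\ka)\>\Gm(1+(z_b-z_a+h)/\ka)/\sin(\pi(z_a-z_b)/\ka)\bigr)^{d^{(2)}_\bla}$.

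The remaining task is to evaluate $\det S=\det\bigl(\Ydd_I(\zz_{\si_K})\bigr)_{I,K\in\Il}$. Using the identity $\zdd_a-\zdd_b=e^{\pii(z_a+z_b)/\ka}\cdot 2i\sin(\pi(z_a-z_b)/\ka)$ to convert sines to Laurent-polynomial differences, a direct comparison between \eqref{detPsi} and \eqref{detPsiY} shows that the required identity is
\begin{equation*}
\det S\,=\,\Bigl(\prod_{a=1}^n\zdd_a\Bigr)^{-(n-1)d^{(2)}_\bla}\,\prod_{j=2}^{n-1}j^{(n-j)d^{(2)}_\bla}\prod_{a<b}(\zdd_a-\zdd_b)^{d^{(2)}_\bla}.
\end{equation*}
This is a ``generalized Vandermonde'' identity for the Schubert-like basis $\{Y_I\}$ of $\Kc_\bla$, and I would prove it using the explicit formula \eqref{VIx} for the polynomials $V_I$ and the free-module structure from Propositions~\ref{PA1},~\ref{PA2}: after grouping the columns by the partition structure of $K$ and factoring out the $S_{\la_i}$-symmetrizations, the determinant decomposes into a product of smaller block Vandermondes in the $\zdd$-variables. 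The main obstacle is the integer factor $\prod_{j=2}^{n-1}j^{(n-j)d^{(2)}_\bla}$; I expect it to emerge from the multiplicities with which evaluation points coincide across the symmetrization blocks, via a Schur-polynomial normalization identity. Once $\det S$ is established, multiplying by $\det M$ and applying the sine-to-Laurent conversion reassembles the right-hand side of \eqref{detPsiY}, since $\sum_{a<b}(z_a+z_b)=(n-1)\sum_a z_a$ converts the exponential contribution from $\det S$ into the shift from $e^{2\pii(n-1)d^{(2)}_\bla}$ to $e^{\pii(n-1)d^{(2)}_\bla}$ in the base of the $q$-exponential.
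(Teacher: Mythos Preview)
Your approach is correct and essentially the same as the paper's: the paper's one-line proof cites Theorem~\ref{thmdet} and formula~\eqref{detVI}, which is exactly your factorization $\det\Psb=\det S\cdot\det M$ with $\det M$ coming from Theorem~\ref{thmdet} and $\det S$ from~\eqref{detVI}. The ``generalized Vandermonde'' identity you identify as the main obstacle is precisely formula~\eqref{detVI} in Appendix~\ref{AppA}; there it is derived from Proposition~\ref{PA3} by a short multiplicative argument (the matrix $M(\xx;\yy)=\bigl(V_\bla(\xx_{\si_I};\yy_{\si_J})\bigr)$ satisfies a cocycle relation $M(\xx;\zz)=M(\xx;\yy)M(\yy;\yy)^{-1}M(\yy;\zz)$ via Proposition~\ref{PA1}, forcing $\det M(\xx;\yy)=G(\xx)G(\yy)$, and the diagonal case $M(\xx;\xx)$ is explicit), rather than by the block-Vandermonde decomposition you sketch.
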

\begin{proof}
The statement follows from Theorem \ref{thmdet} and formula \eqref{detVI}\:.
\end{proof}

\begin{cor}
\label{muk=}
The map $\,\muk:\:\Kc_\bla\<\to\Srsl\>$ is an isomorphism of
$\;\C[\:\zzd^{\pm1}\?,\hdd^{\pm1}]\:$-\:modules.
\end{cor}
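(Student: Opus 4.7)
The plan is to combine the module-theoretic structure of $\Kc_\bla$ from Propositions \ref{PA1} and \ref{PA2} with the determinant computation of Theorem \ref{detY}. By those propositions, $\Kc_\bla$ is a free $\C[\zzd^{\pm1},\hdd^{\pm1}]$-module with basis $\{Y_I\}_{I\in\Il}$, so it suffices to prove that the images $\{\Psi_{Y_I}\}_{I\in\Il}$ form a $\C[\zzd^{\pm1},\hdd^{\pm1}]$-basis of $\Srsl$.

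First, I would check that $\muk$ is well-defined on the quotient and respects the module structure. The defining ideal of $\Kc_\bla$ is generated, as coefficients of powers of $u$, by the differences $e_k(\GGd)-e_k(\zzd)$, $k=1,\ldots,n$, of elementary symmetric functions; in the $J$-th summand of \eqref{PPI} the Laurent polynomial $\Pdd$ is evaluated at $\GGd=\zz_{\si_J}$, a permutation of $\zz$, so each $e_k(\GGd)$ specializes to $e_k(\zdd)$ and every generator evaluates to $0$. The same formula shows that $\muk$ is a $\C[\zzd^{\pm1},\hdd^{\pm1}]$-module map: a factor of $P$ depending only on $\zzd,\hdd$ is unaffected by the $\GGd$-substitution and pulls out of \eqref{PPI} as the exponential $\fdd(\zz;h;\ka)$. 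Surjectivity then follows because any $\Psi_P$ in the $\C$-spanning set of $\Srsl$ can be written, via the decomposition $P\equiv\sum_I c_I\,Y_I$ modulo the ideal, as $\muk\bigl(\sum_I c_I\,Y_I\bigr)$.

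For injectivity, I would assume that $\sum_I c_I\,Y_I\in\ker\muk$, i.e.\ $\sum_I \acuv c_I(\zz;h;\ka)\,\Psi_{Y_I}(\zz;h;\qq;\ka)=0$. Expanding in the basis $\{v_J\}_{J\in\Il}$ of $\Cnnl$ yields $\sum_I \acuv c_I\,\Psb_{\IJ}=0$ for every $J$, so the row vector $(\acuv c_I)_I$ annihilates the matrix $(\Psb_{\IJ})_{I,J}$. By Theorem \ref{detY}, this matrix has a nonvanishing meromorphic determinant, forcing $\acuv c_I\equiv 0$ for each $I$; since $e^{\:2\:\pii\,z_1/\ka},\ldots,e^{\:2\:\pii\,z_n/\ka},e^{\:2\:\pii\,h/\ka}$ are algebraically independent over $\C$, each Laurent polynomial $c_I$ must vanish. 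The essential ingredient of the argument is the nonvanishing determinant of Theorem \ref{detY}; without it, the linear independence of the $\Psi_{Y_I}$ could not be extracted.
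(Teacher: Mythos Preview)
Your proof is correct and follows essentially the same approach the paper intends: the corollary is placed immediately after Theorem~\ref{detY} with no separate argument given, the freeness of $\Kc_\bla$ with basis $\{Y_I\}$ is recorded just before via Propositions~\ref{PA1} and~\ref{PA2}, and the nonvanishing determinant of Theorem~\ref{detY} is exactly what furnishes injectivity. You have simply made explicit the well-definedness and surjectivity checks that the paper leaves implicit.
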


\begin{rem}
The algebra $\,\Kc_\bla\:$ is the equivariant $\>K\?$-theory algebra
$\,K_{T\<\times\Cxs}\<(\tfl\>;\C)\>$ of the cotangent bundle of the partial
flag variety $\,\Fla\>$, see the notation in Section \ref{sQde}.
\end{rem}

\subsection{Levelt fundamental solution}
\label{secLev}

Recall Definition \ref{hdef} of a function $f(\qq)\,$ holomorphic in the unit
polydisk around $\,\0\,$.
The dynamical Hamiltonians $\,X_1(\zz\:;h;\qq)\lc X_n(\zz\:;h;\qq)\,$ given by
\eqref{Xi} are holomorphic in $\,\qq\,$ in the unit polydisk around $\,\0\,$ and
\vvn.2>
\beq
\label{Xi0}
X_i(\zz\:;h;\0)\,=\,
\sum_{a=1}^n\,z_a\:e^{(a)}_{\ii}-\>h\?\sum_{1\le a<b\le n}\?\biggl(\,
\sum_{j=1}^{i-1}\,e_{\ji}^{(a)}\:e_{\ij}^{(b)}\,-\?
\sum_{j=i+1}^N\?e_{\ij}^{(a)}\:e_{\ji}^{(b)}\>\biggr)\:.
\kern-.2em
\eeq
Notice that for $\,I\<\in\Il\,$,
\vvn.1>
\be
X_i(\zz\:;h;\0)\,v_I\>=\>\sum_{\:a\in I_i}z_a\:v_I\:+\!\<
\sum_{\satop{J\in\:\Il}{|\si_J\<|<|\si_I\<|}}\!\xi_{\:i,\:\IJ}\,v_J\,,
\ee
where the coefficients $\,\xi_{\:i,\:\IJ}\>$ take values $\;0\:,\pm\>h\,$.
\vv.1>
Therefore, the eigenvalues of the restriction of the operator
$\,X_i(\zz\:;h;\0)\,$ on $\,\Cnnl\,$ are $\,\sum_{\:a\in I_i}\<z_a\,$,
$\,I\<\in\Il\,$. A more detailed statement is given by Proposition \ref{egvX0}
below.

\vsk.2>
Recall the function $\,\Psd_{\<I\<,\:0}\:(\zz\:;h)\,$, $\,I\<\in\Il\,$,
given by \eqref{Psd0}\:.

\begin{prop}
\label{egvX0}
Given $\,I\<\in\Il\,$, we have
\vvn.16>
$\;X_i(\zz\:;h;\0)\,\Psd_{\<I\<,\:0}\:(\zz\:;h)\:=\:
\sum_{\:a\in I_i}\?z_a\,\Psd_{\<I\<,\:0}\:(\zz\:;h)\,$, and
$\,\Psd_{\<I\<,\:0}\:(\zz\:;h)\ne 0\>$ provided $\,z_a\<\ne z_b\,$ for all pairs
$\,a,b\>$ such that $\,a<b\>$ and $\,\si_I^{-1}(a)>\si_I^{-1}(b)\,$.
\end{prop}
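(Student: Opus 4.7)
The plan is to derive the eigenvalue identity from the $\qq\to\0$ limit of the dynamical equations for the hypergeometric solution $\Psi_{\?I}$ constructed in Theorem \ref{thm cy}, and then to extract the coefficient of $v_I$ in $\Psd_{\<I,\:0}$ explicitly via Lemma \ref{WIz} to establish non-vanishing.

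For the eigenvalue identity, I would factor $\Psi_{\?I}(\zz;h;\qq;\ka)=G_I(\zz;h;\qq;\ka)\,\Psd_{\?I}(\zz;h;\qq;\ka)$ as in formula \eqref{PsiPsd}, where $G_I$ is a scalar-valued function containing the factor $\prod_j(e^{\pii(n-\la_j)}q_j)^{\sum_{a\in I_j}z_a/\ka}$ together with quantities independent of $\qq$. Since $G_I$ commutes with $X_i$ (the latter acts on $\Cnn$-valued functions, while $G_I$ is scalar), the dynamical equation \eqref{DEQ} applied to $\Psi_{\?I}$ becomes
\be
\ka\>q_i\,\frac{\der\Psd_{\?I}}{\der q_i}\,+\,\Bigl(\,\sum_{a\in I_i}\<z_a\Bigr)\Psd_{\?I}\,=\,X_i(\zz;h;\qq)\,\Psd_{\?I}\,.
\ee
From \eqref{Psd}--\eqref{Psd0}, $\Psd_{\?I}$ is a holomorphic function of the variables $u_j=q_{j+1}/q_j$ in the unit polydisk around $\0$, with value $C_I(\zz;h;\ka)\,\Psd_{\<I,\:0}(\zz;h)$ at $\qq=\0$, where $C_I$ is the $\qq$-independent scalar prefactor of \eqref{Psd}. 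A direct computation gives $q_i\>\partial_{q_i}u^\mb=(m_{i-1}-m_i)\,u^\mb$ (with $m_0=m_N=0$), so the left-hand derivative $q_i\>\partial_{q_i}\Psd_{\?I}$ vanishes in the limit $\qq\to\0$. Passing to that limit and cancelling $C_I$ yields the eigenvalue identity.

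For the non-vanishing, I would compute the coefficient of $v_I$ in $\Psd_{\<I,\:0}$ using \eqref{Psd0} and the formula for $W_I(\Si_I;\zz;h)$ given by Lemma \ref{WIz}. Substituting Lemma \ref{WIz} into \eqref{Psd0} cancels $c_\bla(\Si_I;h)$ outright and also cancels the product of factors $(z_a\<-z_b\<-h)$ for $a\in I_j$, $b\in I_k$, $j<k$, $b>a$, leaving the coefficient of $v_I$ equal to
\be
\prod_{j=1}^{N-1}\,\prod_{k=j+1}^N\,\prod_{a\in I_j}\,\prod_{\satop{b\in I_k}{b<a}}\,(z_a\<-z_b)\,.
\ee
Relabeling $(a,b)\mapsto(\beta,\alpha)$, this is $\prod(z_\beta-z_\alpha)$ over pairs with $\alpha<\beta$, $\alpha\in I_k$, $\beta\in I_j$, $k>j$. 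Since $\si_I$ has minimal length and therefore preserves the natural order inside each block of $\Imil$, such pairs are exactly the pairs $(\alpha,\beta)$ with $\alpha<\beta$ and $\si_I^{-1}(\alpha)>\si_I^{-1}(\beta)$. The coefficient is thus non-zero precisely under the stated hypothesis, which suffices to conclude $\Psd_{\<I,\:0}(\zz;h)\neq 0$.

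The main point requiring care is the limiting argument in the first step: one must verify that every monomial in the $u_j$ beyond $\mb=0$ is annihilated by taking $\qq\to\0$ after applying $q_i\>\partial_{q_i}$. This is immediate from the explicit series structure in \eqref{Psd}, so no genuine obstacle arises beyond careful bookkeeping, and the combinatorial identification of disordered pairs with descents of $\si_I^{-1}$ in the second step.
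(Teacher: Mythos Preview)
Your proof is correct and follows essentially the same approach as the paper: the eigenvalue identity is obtained by specializing the dynamical equation for $\Psi_{\?I}$ (Theorem~\ref{thm cy}) at $\qq=\0$, and the non\:-vanishing is read off from the $v_I$-coefficient in formula~\eqref{Psd0} via Lemma~\ref{WIz}. The paper's proof is terser, citing only Theorem~\ref{thm cy} for the first part and formula~\eqref{Psd0} with Lemmas~\ref{WIz}, \ref{WIJ} for the second, but your explicit unpacking of the limit $\qq\to\0$ and the identification of the inversion pairs of $\si_I$ is exactly what those references encode.
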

\begin{proof}
The first part of the statement follows from Theorem \ref{thm cy}.
The nonvanishing of $\,\Psd_{\<I\<,\:0}\:(\zz\:;h)\,$ is implied by
formula \eqref{Psd0} and Lemmas \ref{WIz}, \ref{WIJ}.
\end{proof}

For $\,I\<\in\Il\,$, set
$\,\EE_I(\zz)=\bigl(E^{(1)}_I(\zz)\lc E^{(N\<-1)}_I(\zz)\bigr)\,$,
\vv.06>
where $\,E^{(i)}_I\<(\zz)=\sum_{j=1}^i\:\sum_{\:a\in I_j}\<z_a\,$
is the eigenvalue of
$\,X_1(\zz\:;h;\0)\lsym+X_i(\zz\:;h;\0)\,$ on $\,\:\Psd_{\<I\<,\:0}\:(\zz\:;h)\,$.
For $\,I\<,J\<\in\Il\,$, denote by \,$D_{\IJ}$ the set of points $\,\zz\,$
such that $\EE_I(\zz)-\EE_J(\zz)\in\Z^{N\<-1}_{\ge 0}$ and
\vvn.07>
$\,\EE_I(\zz)\ne\EE_J(\zz)\,$.
Set $\,D_\bla\<=\:\bigcup_{\:\IJ\in\:\Il}D_{\IJ}\,$.

\begin{thm}
\label{Bthm}
{\rm(\:i\:)}\enspace
For any $\,\zz,h\>$ such that $\,z_a\<-z_b\not\in\<\ka\>\Z_{\:\ne\:0}\>$
\vvn.1>
for all $\,\:1\le a<b\le n\,$, there exists
an $\,\:\End\:\bigl(\Cnnl\bigr)$-\:valued function $\,\:\Psp(\zz\:;h;\qq)\,$,
\vvn.1>
holomorphic in $\,\qq\:$ in the unit polydisk around $\,\0\,$, such that
$\,\:\Psp(\zz\:;h;\0)\>$ is the identity operator and the function
\vvn.2>
\beq
\label{Psipnd}
\Psh(\zz\:;h;\qq\:;\ka)\,=\,\Psp(\zz\</\<\ka\:;h/\ka\:;\qq)
\,\:\prod_{i=1}^N\,q_i^{\>\smash{X_i(\zz\</\<\ka\:;\:h\</\<\ka\:;\>\0)}}\>,
\kern-1.2em
\vv.1>
\eeq
solves dynamical differential equations \eqref{DEQ}.
\vv.1>
For given $\,\zz,h\,$, the function $\,\:\Psp(\zz\:;h;\qq)\:$ with the specified
properties is unique if and only if $\,\zz\not\in\<\ka\:D_\bla\,$.
Furthermore, $\,\,\det\:\Psp(\zz\:;h;\qq)=1\,$ and
\vvn-.3>
\beq
\label{detPsh}
\det\:\Psh(\zz\:;h;\qq\:;\ka)\,=\,\prod_{i=1}^N\,
q_i^{\:\smash{d^{(1)}_{\bla,i}}\:\sum_{a=1}^n z_a\</\<\ka}\,,\kern-1.4em
\eeq
where $\;d^{(1)}_{\bla\:,\:1}\:\lc d^{(1)}_{\bla\:,\:N}$ are given
by formula \,\eqref{dla12}\:.
\vsk.3>
\noindent
{\rm(\:ii\:)}\enspace
Define the function $\,\Psp(\zz\:;h;\qq)$ for generic $\,\zz,h\>$ as in item
\vvn.1>
\>{\rm(\:i\:)}\:. Then $\,\Psp(\zz\:;h;\qq)$ is holomorphic in $\,\zz\>$
if $\,{z_a\<-z_b\not\in\Z_{\:\ne\:0}}\>$ for all $\,\:{1\le a<b\le n}\,$,
\vvn.1>
and is entire in $\,h\,$. The singularities of $\;\Psp(\zz\:;h;\qq)$
at the hyperplanes $\,z_a\<-z_b\in\Z_{\:\ne\:0}\>$ are simple poles.
\end{thm}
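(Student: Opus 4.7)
The plan is to realize $\,\Psp\,$ as a formal power series in the variables $\,r_i=q_{i+1}/q_i\,$ determined by the dynamical equations, and then to derive convergence and analytic properties by comparison with the hypergeometric basis $\,\{\Psi_{\?J}\}_{J\in\Il}\,$ of Theorem~\ref{thm cy}. Using the homogeneity~\eqref{homogen}, I first reduce to $\,\ka=1\,$ by rescaling $\,(\zz,h)\mapsto(\zz/\ka,h/\ka)\,$. Substituting the Ansatz $\,\Psh=\Psp\cdot\prod_i q_i^{X_i(\zz;h;\0)}\,$ into~\eqref{DEQ} and using commutativity of the $\,X_i(\zz;h;\0)\,$ yields
\begin{equation*}
q_i\>\der_{q_i}\Psp\,=\,X_i(\zz\:;h;\qq)\>\Psp\,-\,\Psp\>X_i(\zz\:;h;\0)\>,\qquad i=1\lc N\>.
\end{equation*}
Expanding $\,\Psp=\1+\sum_{\mb\ne 0}\Psp_\mb(\zz;h)\>\prod_j r_j^{m_j}\,$ and $\,X_i(\zz;h;\qq)=\sum_{\mb'\ge 0}X_{i,\:\mb'}(\zz;h)\>\prod_j r_j^{m'_j}\,$ (with $\,X_{i,\:0}=X_i(\zz;h;\0)\,$), and using $\,q_i\der_{q_i}\bigl(\prod_j r_j^{m_j}\bigr)=(m_{i-1}-m_i)\prod_j r_j^{m_j}\,$ with convention $\,m_0=m_N:=0\,$, matching coefficients produces the recursion
\begin{equation*}
[X_i(\zz;h;\0)\:,\>\Psp_\mb]-(m_{i-1}-m_i)\>\Psp_\mb\,=\,-\!\sum_{\mb'+\mb''=\mb,\,\mb'\ne 0}\!\?X_{i,\:\mb'}(\zz;h)\>\Psp_{\mb''}\>,
\end{equation*}
which determines $\,\Psp_\mb\,$ from $\,\{\Psp_{\mb''}\}_{\mb''<\mb}\,$ modulo the kernel of the operator on the left.

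Summing the recursion over $\,i=1\lc k\,$ for $\,k=1\lc N-1\,$ produces the equivalent form $\,[Y_k(\zz;h;\0),\Psp_\mb]+m_k\>\Psp_\mb=\text{(lower terms)}\,$ with $\,Y_k=X_1\lsym+X_k\,$. By Proposition~\ref{egvX0} the commuting operators $\,Y_1(\zz;h;\0)\lc Y_{N-1}(\zz;h;\0)\,$ are simultaneously triangularizable, with joint generalized eigenvalues $\,\EE_I(\zz)\,$ indexed by $\,I\in\Il\,$; in the triangularizing basis the operator $\,[Y_k,\cdot\,]+m_k\,$ acts on the $\,(I,J)\,$-matrix unit by the scalar $\,E^{(k)}_I(\zz)-E^{(k)}_J(\zz)+m_k\,$. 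The recursion therefore has a unique solution iff, for all $\,I\ne J\,$ and all nonzero $\,\mb\in\Z^{N-1}_{\ge 0}\,$, the vector $\,\EE_J(\zz)-\EE_I(\zz)\,$ is not equal to $\,\mb\,$, i.e., iff $\,\zz\notin D_\bla\,$. Under the weaker hypothesis $\,z_a\<-z_b\notin\Z_{\ne 0}\,$, compatibility of the dynamical system (Theorem~\ref{thm qkz}) forces the right-hand side of the recursion to lie in the image of $\,[Y_k,\cdot\,]+m_k\,$ even at resonances, so existence persists; what fails on $\,D_\bla\,$ is only uniqueness.

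For convergence of the formal series, I compare $\,\Psh\,$ with the hypergeometric solutions $\,\Psi_J\,$. By Proposition~\ref{egvX0}, each $\,\Psd_{\<J,\:0}(\zz;h)\,$ is a joint generalized eigenvector of $\,Y_1(\zz;h;\0)\lc Y_{N-1}(\zz;h;\0)\,$ with eigenvalue tuple $\,\EE_J(\zz)\,$; formula~\eqref{Psd0} together with Lemma~\ref{WIJ} shows that $\,\{\Psd_{\<J,\:0}\}_{J\in\Il}\,$ is a basis of $\,\Cnnl\,$ triangular in the vectors $\,v_J\,$ ordered by $\,|\:\si_J\:|\,$. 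Extracting the leading asymptotic of $\,\Psi_J(\zz;h;\qq;\ka)\,$ as $\,\qq\to\0\,$ from \eqref{PsiPsd}\,--\,\eqref{Psd} gives $\,\Psd_{\<J,\:0}\>\prod_i q_i^{E^{(i)}_J/\ka}\,$ times an explicit $\,\qq\,$-\:independent factor. Inverting this triangular system realizes each column of $\,\Psh\,$ as a specific linear combination of the $\,\Psi_J\,$, and convergence in the unit polydisk around $\,\0\,$ follows from Theorem~\ref{thm cy}. The determinant identity is then obtained by tracing~\eqref{DEQ}: since $\,\tr X_i(\zz;h;\qq)|_{\Cnnl}=d^{(1)}_{\bla,i}\sum_a z_a\,$ is independent of $\,\qq\,$ and $\,h\,$ (as in the proof of Theorem~\ref{thmdet}), $\,\det\Psh\,$ is precisely the monomial in~\eqref{detPsh}; the factor $\,\det\bigl(\prod_i q_i^{X_i(\zz;h;\0)}\bigr)=\prod_i q_i^{d^{(1)}_{\bla,i}\sum_a z_a}\,$ already accounts for all of it, so $\,\det\Psp=1\,$.

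Part~(ii) is read off from the same description. Entirity in $\,h\,$ is visible directly from the recursion, since the $\,X_{i,\:\mb'}(\zz;h)\,$ are polynomial in $\,h\,$ while the inverse of $\,[Y_k,\cdot\,]+m_k\,$ depends only on $\,\zz\,$ and $\,\mb\,$. Holomorphy in $\,\zz\,$ away from the hyperplanes $\,z_a\<-z_b\in\Z_{\ne 0}\,$, together with the simple-pole nature along them, is extracted from the representation of $\,\Psh\,$ as a linear combination of $\,\Psi_J\,$: by Theorem~\ref{thm cy}, the only singularities of $\,\Psi_J\,$ at $\,z_a\<-z_b\in\Z_{\ne 0}\,$ are simple poles coming from the $\,\sin\bigl(\pi(z_a\<-z_b)/\ka\bigr)\,$ factors in~\eqref{PsiPsd}. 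The main obstacle is controlling this analytic behavior in $\,\zz\,$: \emph{a priori}, the recursion could produce poles of arbitrarily high order along $\,z_a\<-z_b\in\Z\,$ as $\,\mb\,$ grows and resonances accumulate, and the explicit hypergeometric realization of $\,\Psh\,$ through the $\,\Psi_J\,$ is precisely what rules this out.
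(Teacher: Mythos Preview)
Your overall strategy---Frobenius recursion for uniqueness, hypergeometric comparison for convergence and analytic control---is the right picture, and the determinant computation is fine. But there are two genuine gaps, and they are precisely where the paper's explicit construction does real work.

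\textbf{Existence at resonances.} Your claim that ``compatibility of the dynamical system forces the right-hand side of the recursion to lie in the image of $[Y_k,\cdot]+m_k$'' is not valid. Flatness of the connection guarantees a local fundamental solution, but at a resonance that solution generally has the form $P(\qq)\prod_i q_i^{X_i(\0)+N_i}$ with nilpotent corrections $N_i$ (equivalently, logarithmic terms); the absence of logarithms is a special feature of this particular system, not a consequence of compatibility. The condition $z_a-z_b\notin\Z_{\ne 0}$ does \emph{not} exclude $\zz\in D_\bla$ (e.g.\ with $N=2$, $\bla=(2,2)$, one can have $E_I^{(1)}(\zz)-E_J^{(1)}(\zz)=1$ while all pairwise $z_a-z_b$ are nonintegers). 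The paper bypasses this entirely by writing down $\Psp$ explicitly: formula~\eqref{PspI} defines $\Psp\,v_I$ as a linear combination of the $\Psd_J$ with coefficients built from the dual weight functions $\WW_I(\Si_J;\zz;h)$, and then verifies $\Psp(\0)=\1$ directly via the biorthogonality relation~\eqref{orth}. Your ``inverting this triangular system'' in paragraph~3 is morally the same step, but without the orthogonality relations you have not actually exhibited the inverse, and you have not checked that the resulting $\Psp$ has value $\1$ at $\0$.

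\textbf{Regularity at $z_a=z_b$.} The hypothesis $z_a-z_b\notin\Z_{\ne 0}$ allows $z_a=z_b$. There your triangular system degenerates: by~\eqref{Psdab} the vectors $\Psd_{J,0}$ and $\Psd_{s_{a,b}(J),0}$ coincide, so the basis collapses and the inversion is undefined. Likewise each individual $\Psi_J$ carries a simple pole from the $\sin\bigl(\pi(z_a-z_b)/\ka\bigr)$ factor in~\eqref{PsiPsd}, and you have not explained why these poles cancel in your linear combination. The paper's explicit formula~\eqref{PspI} makes this cancellation visible: the coefficients involve $\Gm(z_b-z_a)\,\WW_I(\Si_J;\zz;h)$, and the symmetry~\eqref{Psdab} together with the pairing of $J$ and $s_{a,b}(J)$ shows the sum is regular at $z_a=z_b$.

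\textbf{Minor point.} Your argument for entirety in $h$ asserts that the inverse of $[Y_k,\cdot]+m_k$ ``depends only on $\zz$ and $\mb$''. That is not literally true: $X_i(\zz;h;\0)$ is linear in $h$ by~\eqref{Xi0}, so the operator depends on $h$. What is true is that its \emph{eigenvalues} $E_I^{(k)}(\zz)-E_J^{(k)}(\zz)+m_k$ are independent of $h$, hence so is the determinant, and the inverse is therefore polynomial in $h$. The conclusion survives, but the stated reason does not.
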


The theorem is proved in Section \ref{pfBthm}. An explicit expression for
the function $\,\:\Psp(\zz\:;h;\qq)\,$ is given by formula \eqref{PspIJ}\:.

\vsk.2>
Following \cite[Chapter 2]{AB}\:, we will call
\vvn.1>
$\,\Psh(\zz\:;h;\qq\:;\ka)\,$ the {\it Levelt fundamental solution\/} of
dynamical differential equations~\eqref{DEQ} on $\:\Cnnl\:$, see also
\cite[Section~6.2]{CV}\:.

\vsk.3>
For a $\,\Cnnl$-valued solution $\,\Psi(\zz\:;h;\qq\:;\ka)\,$ of dynamical
differential equations \eqref{DEQ}\:, define its {\it principal term\/}
\vvn-.1>
\beq
\label{princ}
\Psz(\zz\:;h;\ka)\,=\,\Psh(\zz\:;h;\qq\:;\ka)^{-1}\,\Psi(\zz\:;h;\qq\:;\ka)\,.
\kern-1em
\vv.5>
\eeq
By Theorem \ref{Bthm}, the principal term does not depend on $\,\qq\,$.

\vsk.3>
Set
\vvn-.7>
\beq
\label{CGG}
C_\bla(\zz\:;\ka)\,=\,\prod_{i=1}^N\>e^{\:\pii\,(n-\la_i)
\sum_{a=\smash{\la^{\?(i-1)}}\<+1}^{\la^{(i)}}z_a\</\<\ka}\kern-2em
\vv-.5>
\eeq
and
\vvn-.2>
\beq
\label{GGG}
G_\bla(\zz\:;h;\ka)\,=\,\prod_{i=1}^{N-1}\>\prod_{a=1}^{\la^{(i)}}\,
\prod_{b=\la^{(i)}+1}^{\la^{(i+1)}}\!
\Gm\bigl(\<(z_a\<-z_b)/\ka\bigr)\,\Gm\bigl(\<(z_b\<-z_a\<+h)/\ka\bigr)\,.\kern-.5em
\vv.3>
\eeq

\begin{prop}
\label{PsiPpr}
For a Laurent polynomial $\,P(\:\GGd\:;\zzd;\hdd)\,$, the principal term of
the solution $\,\Psi_P(\zz\:;h;\qq\:;\ka)\,$, given by \eqref{PPI}\:, equals
\vvn.2>
\beq
\label{PsiP0}
\Psz_{\?P}(\zz\:;h;\ka)\,=\?\sum_{\IJ\in\:\Il}\!\Pdd(\zz_{\si_J};\zz\:;h;\ka)
\,C_\bla(\zz_{\si_J};\ka)\,G_\bla(\zz_{\si_J};h;\ka)\,
\frac{W_I(\Si_J\:;\zz\:;h)}{c_\bla(\Si_J\:;h)}\;v_I\,.
\kern-1em
\vv.3>
\eeq
Here $\,c_\bla(\Si_J\:;h)\,$ is given by formula \eqref{cla}\:.
\end{prop}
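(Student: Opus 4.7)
The plan is to use Theorem~\ref{Bthm}, which guarantees that the principal term $\Psz_P=\Psh^{-1}\,\Psi_P$ is independent of $\qq$, and compute it by extracting the leading behavior of $\Psi_P$ as $\qq\to\0$ in the sense of Definition~\ref{hdef}. By linearity of \eqref{PPI} it suffices to identify the principal term $\Psz_J$ of each $\Psi_J$ individually and then sum against $\Pdd(\zz_{\si_J};\zz;h;\ka)$.

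For the $\qq\to\0$ asymptotics of $\Psi_J$, formulas \eqref{PsiPsd}\,--\,\eqref{Psd0} present $\Psi_J$ as the product of $\prod_i(e^{\pii(n-\la_i)}q_i)^{\sum_{a\in J_i}z_a/\ka}$, a function holomorphic in $\qq$ around $\0$, and a $\qq$-independent factor. The value at $\qq=\0$ (keeping only the $\mb=0$ summand in \eqref{Psd}) is $B_J(\zz;h;\ka)\,\Psd_{J,0}(\zz;h)$, where the scalar $B_J$ collects all the $\qq$-independent factors. For $\Psh$, Theorem~\ref{Bthm} gives $\Psh=\Psp(\zz/\ka;h/\ka;\qq)\,\prod_i q_i^{X_i(\zz/\ka;h/\ka;\0)}$ with $\Psp(\zz/\ka;h/\ka;\0)=\mathrm{id}$. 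Since \eqref{Xi0} exhibits $X_i(\zz;h;\0)$ as linear in $(\zz;h)$, Proposition~\ref{egvX0} implies that $\Psd_{J,0}(\zz;h)$ is an eigenvector of $X_i(\zz/\ka;h/\ka;\0)$ with eigenvalue $\sum_{a\in J_i}z_a/\ka$. The operators $X_i(\zz;h;\0)$ pairwise commute: flatness $[\nabla_{\qq,\ka,i},\nabla_{\qq,\ka,j}]=0$ yields $[X_i(\qq),X_j(\qq)]=\ka(q_i\partial_{q_i}X_j-q_j\partial_{q_j}X_i)$, whose right-hand side vanishes at $\qq=\0$ because $q_i\partial_{q_i}$ annihilates anything holomorphic in $q_2/q_1,\ldots,q_N/q_{N-1}$ when evaluated at $\0$. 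Consequently $\Psh\,\Psd_{J,0}(\zz;h)$ has leading behavior $\prod_i q_i^{\sum_{a\in J_i}z_a/\ka}\,\Psd_{J,0}(\zz;h)$; the $q_i$-powers cancel between $\Psi_J$ and $\Psh$, and the surviving phases combine into $C_\bla(\zz_{\si_J};\ka)$ because $\si_J$ sends $\{\la^{(i-1)}+1,\ldots,\la^{(i)}\}$ onto $J_i$. This yields $\Psz_J=C_\bla(\zz_{\si_J};\ka)\,B_J(\zz;h;\ka)\,\Psd_{J,0}(\zz;h)$.

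The remaining step is to rewrite $B_J\,\Psd_{J,0}$ into the explicit form displayed in \eqref{PsiP0}. For each pair $(a,b)\in J_i\times J_j$ with $i<j$, the reflection formula $\Gm(x)\Gm(1-x)=\pi/\sin(\pi x)$ together with $\Gm(1+x)=x\,\Gm(x)$ collapses the Gamma/sine ratio inside $B_J$ to $(z_a-z_b-h)\,\Gm((z_a-z_b)/\ka)\,\Gm((z_b-z_a+h)/\ka)$. Splitting $\prod_{b\in J_j}(z_a-z_b-h)$ into the $b>a$ and $b<a$ pieces, the latter cancels the factors $(z_a-z_c-h)^{-1}$ already present in $B_J$, leaving $\prod_{b>a}(z_a-z_b-h)$. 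The remaining Gamma product reassembles into $G_\bla(\zz_{\si_J};h;\ka)$ via \eqref{GGG}, and \eqref{Psd0} converts $\prod_{b>a}(z_a-z_b-h)\,\Psd_{J,0}(\zz;h)$ into $c_\bla(\Si_J;h)^{-1}\sum_I W_I(\Si_J;\zz;h)\,v_I$, producing \eqref{PsiP0}. The main obstacle is the careful gamma-function bookkeeping in this last step; everything else amounts to extracting limits from Theorems~\ref{thm cy} and \ref{Bthm}.
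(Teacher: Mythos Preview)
Your overall strategy is reasonable and your Gamma\:-function bookkeeping in the last paragraph is correct, but there is a real gap in the middle step. You show that $\Psh\bigl(C_\bla(\zz_{\si_J};\ka)\,B_J\,\Psd_{J,0}\bigr)$ and $\Psi_J$ have the same leading behaviour $\prod_i q_i^{\sum_{a\in J_i}z_a/\ka}\,C_\bla B_J\,\Psd_{J,0}$ as $\qq\to\0$, and then assert ``This yields $\Psz_J=C_\bla B_J\,\Psd_{J,0}$''. Matching leading behaviours does \emph{not} by itself determine the constant vector $\Psz_J$. What is missing is the observation that, for generic $\zz$, the eigenvalue tuples $\bigl(\sum_{a\in K_i}z_a\bigr)_i$, $K\in\Il$, are pairwise distinct modulo $\ka\,\Z^{N-1}$; hence a solution of the form $q^{\mu^J}\cdot(\text{holomorphic, vanishing at }\0)$ must be identically zero (write $\Psz_J=\sum_K a_K\Psd_{K,0}$, note that $q^{-\mu^J}q^X\Psz_J=\Psp^{-1}q^{-\mu^J}\Psi_J$ is holomorphic in $q_{i+1}/q_i$, and conclude $a_K=0$ for $K\ne J$ since $q^{\mu^K-\mu^J}$ is then not holomorphic). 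One then passes to all $\zz$ by analyticity. Without this, a second solution with the same leading order could in principle differ from $\Psi_J$.

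The paper's proof takes a different route: instead of asymptotic matching, it verifies the exact identity $\Psh\,\Psb_P=\Psi_P$ directly. This is possible because the proof of Theorem~\ref{Bthm} constructs $\Psp$ explicitly via formula \eqref{PspI}, and then derives a closed expression for $\Psh\,v_I$ as a linear combination of the $\Psi_J$ with coefficients involving $\WW_I(\Si_J;\zz;h)$. Applying this to the components of $\Psb_P$ and invoking the orthogonality relation \eqref{corth} collapses the sum to $\sum_J\Pdd(\zz_{\si_J};\zz;h;\ka)\,\Psi_J=\Psi_P$. Your approach avoids the explicit $\Psp$ and the orthogonality relations, which is conceptually cleaner, but the price is the uniqueness argument above that you omitted.
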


The proposition is proved in Section \ref{pfBthm}.

\subsection{Proofs of Theorem \ref{Bthm} and Proposition \ref{PsiPpr}}
\label{pfBthm}
To simplify writing, we assume in this section that $\,\ka=1\,$, similarly
to Section \ref{details}, and omit the corresponding argument in all functions.
The general case can be recovered by the homogeneity properties
\eqref{homogen}\:.

\begin{lem}
\label{lemA}
Given $\,A\in\End\:\bigl(\Cnnl\bigr)\,$, assume that the function
\be
F_A(\zz\:;h;\qq)\,=\,\prod_{i=1}^N\,q_i^{\>\smash{X_i\<(\zz;h;\:\0)}}\,\:A\;
\prod_{i=1}^N\,q_i^{\smash{-\<X_i\<(\zz;h;\:\0)}}
\ee
is holomorphic in $\,\qq\,$ in the unit polydisk around $\,\0\,$,
and $\,F_A(\zz\:;h;\0)=1\,$, the identity operator. Then $\,A=1\>$ and
$\,F_A(\zz\:;h;\qq)=1\,$, provided $\,\zz\not\in D_\bla\,$. On the other hand,
if $\,\zz\in D_\bla\,$, then there exists $\,A\ne 1\>$ such that the function
$\,F_A(\zz\:;h;\qq)$ has the requested properties and is not a constant function
of $\,\qq\,$.
\end{lem}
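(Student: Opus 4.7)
The plan is to reduce the lemma to a linear-algebraic question about the commuting family $\{X_i(\zz;h;\0)\}_{i=1}^N$ acting on $\Cnnl$. A direct computation from \eqref{Xi0} yields $\sum_{i=1}^N X_i(\zz;h;\0)=(z_1+\dots+z_n)\,\id$. Decompose $\Cnnl=\bigoplus_\mu V_\mu$ into the common generalized eigenspaces of this family; by Proposition~\ref{egvX0}, the eigenvalue tuples that appear are of the form $\mu=\bigl(\sum_{a\in I_i}z_a\bigr)_{i=1}^N$ for $I\in\Il$. Write $X_i|_{V_\mu}=\mu_i\,\id+N_i^\mu$ with commuting nilpotent $N_i^\mu$; the scalar identity above forces $\sum_i N_i^\mu=0$. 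Setting $r_j=q_{j+1}/q_j$, $L_j^\circ=\log r_j$, and expanding $\log q_i=\log q_1+\sum_{k<i}L_k^\circ$, the cancellation $\sum_iN_i^\mu=0$ eliminates every $\log q_1$ contribution and gives
\be
\prod_{i=1}^N q_i^{X_i}\big|_{V_\mu}=\prod_{i=1}^N q_i^{\mu_i}\cdot\exp\Bigl(\sum_{j=1}^{N-1}L_j^\circ M_j^\mu\Bigr),\qquad M_j^\mu:=\sum_{i>j}N_i^\mu,
\ee
where $M_1^\mu,\dots,M_{N-1}^\mu$ are commuting nilpotents on $V_\mu$.

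Writing $A$ in block form as $A=(A_{\mu\nu})$ with $A_{\mu\nu}\colon V_\nu\to V_\mu$, the corresponding block of the conjugation is
\be
(F_A)_{\mu\nu}=\prod_{j=1}^{N-1}r_j^{E^{(j)}(\nu)-E^{(j)}(\mu)}\cdot\exp\Bigl(\sum_{j=1}^{N-1}L_j^\circ M_j^\mu\Bigr)A_{\mu\nu}\exp\Bigl(-\sum_{j=1}^{N-1}L_j^\circ M_j^\nu\Bigr),
\ee
where $E^{(j)}(\mu)=\mu_1+\dots+\mu_j$ and the rightmost factor is a polynomial in the $L_j^\circ$ with $\Hom(V_\nu,V_\mu)$-valued coefficients whose constant term is $A_{\mu\nu}$. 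Since $r^\alpha(\log r)^k$ admits no single-valued holomorphic extension to $r=0$ unless $k=0$ and $\alpha\in\Z_{\ge0}$, holomorphicity of $(F_A)_{\mu\nu}$ at $\rr=\0$ forces, whenever $A_{\mu\nu}\ne0$, two rigid conditions: the polynomial in the $L_j^\circ$ collapses to its constant term $A_{\mu\nu}$, equivalent to the intertwining identities $M_j^\mu A_{\mu\nu}=A_{\mu\nu}M_j^\nu$ for all $j$; and $E^{(j)}(\nu)-E^{(j)}(\mu)\in\Z_{\ge0}$ for every $j$. The normalization $F_A(\zz;h;\0)=1$ then imposes $A_{\mu\mu}=\id_{V_\mu}$ on each diagonal block.

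If $\zz\notin D_\bla$, then for every pair of distinct eigenvalue tuples $\mu\ne\nu$ the nonnegativity condition fails (otherwise, writing $\mu=\mu_J$, $\nu=\mu_I$ would place $\zz$ in $D_{IJ}$), so $A_{\mu\nu}=0$; together with $A_{\mu\mu}=\id$ this gives $A=\id$ and $F_A\equiv\id$. Conversely, if $\zz\in D_\bla$ choose $(I,J)$ with $\zz\in D_{IJ}$ and set $\nu=\mu_I$, $\mu=\mu_J$, so that $E^{(j)}(\nu)-E^{(j)}(\mu)\in\Z_{\ge0}$ with at least one strict inequality. The operators $T_j\colon X\mapsto M_j^\mu X-XM_j^\nu$ on the nonzero space $\Hom(V_\nu,V_\mu)$ form a commuting family of nilpotents, so iteratively restricting to $\ker T_1$, then to $\ker T_1\cap\ker T_2$, and so on yields a nonzero simultaneous kernel. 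Any nonzero $A_{\mu\nu}$ therein, extended by the identity on the diagonal, gives an $A\ne\id$ for which $F_A$ is holomorphic at $\qq=\0$ with value $\id$, yet depends nontrivially on $\qq$, since the $(\mu,\nu)$-block carries a nonzero prefactor $\prod_j r_j^{E^{(j)}(\nu)-E^{(j)}(\mu)}$ with some strictly positive exponent.

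The main obstacle is the holomorphicity analysis of the middle paragraph: disentangling the mixed fractional-power and logarithmic factors in $(F_A)_{\mu\nu}$ into the rigid pair of conditions (integrality of the exponents and absence of $\log$-dependence). The key structural input that makes this clean is the identity $\sum_iN_i^\mu=0$, which removes an otherwise troublesome $\log q_1$ from the conjugation formula and guarantees that all dependence is channeled through the ratio coordinates $r_j$ of Definition~\ref{hdef}.
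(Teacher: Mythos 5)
Your proof is correct and supplies in full the ``basic linear algebra reasoning'' that the paper's proof of Lemma \ref{lemA} merely invokes without detail: the block decomposition by joint generalized eigenspaces, the cancellation of $\log q_1$ via $\sum_i X_i(\zz;h;\0)=(z_1\lsym+z_n)\,\id$, the single-valuedness/holomorphy rigidity at $\rr=\0$, and the construction of a nontrivial $A$ from the common kernel of the commuting nilpotent intertwiners in the resonant case. The only step you use silently is that $X_1(\zz;h;\0)\lc X_N(\zz;h;\0)$ pairwise commute (needed for the joint eigenspace decomposition and for writing the ordered product as a single exponential), which holds because the dynamical connection is flat and the $\qq$-derivative terms vanish at $\qq=\0$.
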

\begin{proof}
The statement follows from a basic linear algebra reasoning.
\end{proof}

\begin{proof}[Proof of Theorem \ref{Bthm}]
For the uniqueness statement, assume that the function $\,\Psp(\zz\:;h;\qq)\,$
is as required in Theorem \ref{Bthm}. Then the function
$\,\Psp_1(\zz\:;h;\qq)\,$ has the same requested properties
if and only if the product
\vvn.1>
\be
E(\zz\:;h)\,=\,\prod_{i=1}^N\,q_i^{\smash{-\<X_i\<(\zz;h;\:\0)}}\,
\Psp(\zz\:;h;\qq)^{-1}\>\Psp_1(\zz\:;h;\qq)\,\:
\prod_{i=1}^N\,q_i^{\>\smash{X_i\<(\zz;h;\:\0)}}
\vv.1>
\ee
does not depend on $\,\qq\,$, whilst the function
\vvn.2>
\be
F(\zz\:;h;\qq)\,=\,\prod_{i=1}^N\,q_i^{\>\smash{X_i\<(\zz;h;\:\0)}}\>
E(\zz\:;h)\,\prod_{i=1}^N\,q_i^{\smash{-\<X_i\<(\zz;h;\:\0)}}\>=\,
\Psp(\zz\:;h;\qq)^{-1}\>\Psp_1(\zz\:;h;\qq)\,.\kern-.6em
\vv.2>
\ee
is holomorphic in $\,\qq\,$ in the unit polydisk around $\,\0\,$,
and $\,F(\zz\:;h;\0)\,$ is the identity operator.
Thus the uniqueness statement follows from Lemma \ref{lemA}.

\vsk.2>
Recall the functions $\,\Psd_{\!J}(\zz\:;h;\qq)\,$ and
$\,\Psd_{\<J,\:0}\:(\zz\:;h)\,$, see \eqref{Psd}\:, \eqref{Psd0}\:.
By Lemma \ref{WIz},
\vvn.2>
\beq
\label{Psd0=}
\Psd_{\<J,\:0}\:(\zz\:;h)\,=\,\prod_{i=1}^{N-1}\prod_{j=i+1}^N\,\:
\prod_{a\in J_i}\>\prod_{\satop{b\in J_j}{b>a}}\:\frac1{z_a\<-z_b\<-h}\;
\sum_{I\<\in\Il}\>\frac{W_I(\Si_J\:;\zz\:;h)}{c_\bla(\Si_J\:;h)}\;v_I\,.
\kern-2em
\vv.1>
\eeq
Let $\,\Psp(\zz\:;h;\qq)\,$ to be the operator such that for any
$\,I\<\in\Il\,$,
\vvn.2>
\begin{align}
\label{PspI}
\Psp(\zz\:;h;\qq)\::\:v_I\,\mapsto\>\sum_{J\in\Il}\,&\Psd_{\!J}(\zz\:;h;\qq)
\,\prod_{i=1}^{N-1}\prod_{j=i+1}^N\,\:\prod_{a\in J_i}\>\prod_{b\in J_j}\>
\frac{\Gm(z_b\<-z_a)}\pi\,\times{}
\\[4pt]
\notag
{}\times\,{}&\frac{\WW_I(\Si_J\:;\zz\:;h)}{c_\bla(\Si_J\:;h)}\;
\prod_{i=1}^{N-1}\prod_{j=i+1}^N\,\:\prod_{a\in J_i}\>
\prod_{\satop{b\in J_j}{b<a}}\>\frac1{z_a\<-z_b\<-h}\kern-1.5em
\\[-16pt]
\notag
\end{align}
We will verify below that the function $\,\Psp(\zz\:;h;\qq)\,$ is as required
in Theorem \ref{Bthm}.

\vsk.3>
By Theorem \ref{thm cy}, the functions $\,\Psd_{\!J}(\zz\:;h;\qq)\,$ are
\vv.1> holomorphic in $\,\qq\,$ in the unit polydisk around $\,\0\,$, hence
\vv.04>
so does $\,\Psp(\zz\:;h;\qq)\,$. Then $\,\Psp(\zz\:;h;\0)\,$ is the identity
operator by formulae \eqref{Psd}\:, \eqref{Psd0=}\:, and orthogonality relation
\eqref{orth}\:.

\vsk.2>
By formulae \eqref{orth}\:, \eqref{Psd0=}\:, and Proposition \ref{egvX0},
\vvn,1>
\be
\prod_{i=1}^N\:q_i^{\>\smash{X_i(\zz\:;h;\:\0)}}\:v_I\>=\?
\sum_{\JK\<\in\:\Il}\<\frac{\WW_I(\Si_J\:;\zz\:;h)\,W_K(\Si_J\:;\zz\:;h)}
{c_\bla^{\>2}(\Si_J\:;h)\>R_\bla(\zz_{\si_J}\<)\>Q_\bla(\zz_{\si_J};h)}\,\,
\prod_{i=1}^N\,q_i^{\>\sum_{\smash{a\in J_i}}\?z_a}\>v_K\,.\kern-1em
\vv.4>
\ee
Then by formulae \eqref{PspI}\:, \eqref{corth}\:, \eqref{PsiPsd}\:,
the functions
\begin{align*}
\Psp(\zz\:;h;\qq)\>\prod_{i=1}^N\:q_i^{\>\smash{X_i(\zz\:;h;\:\0)}}\:v_I
\>=\>\sum_{J\in\Il}\Psi_{\?J}(\zz\:;h;\qq)\;
\frac{\WW_I(\Si_J\:;\zz\:;h)}{c_\bla(\Si_J\:;h)}\;
\prod_{i=1}^N\,e^{\:\pii\,(\la_i-n)\>\sum_{a\in J_i}\?z_a}\:\times{}
\kern-1.2em&
\\[3pt]
\notag
{}\times\>\prod_{i=1}^{N-1}\prod_{j=i+1}^N\,\:\prod_{a\in J_i}\>
\prod_{b\in J_j}\>\frac{-\>1}{\Ga(1+z_a\<-z_b)\>\Ga(1-z_a\<+z_b\<+h)}
\;\kern-1.2em&\kern1.2em\!\!\kern-1.2em
\end{align*}
are linear combinations of the solutions $\,\Psi_{\?J}(\zz\:;h;\qq)\,$ of
\vv.1>
dynamical differential equations \eqref{DEQ} with coefficients independent
\vv.1>
of $\,\qq\,$. Hence, $\,\Psh(\zz\:;h;\qq)=\Psp(\zz\:;h;\qq)\>
\prod_{i=1}^N\,q_i^{\>\smash{X_i(\zz\:;h;\:\0)}}\>$ solves dynamical
differential equations \eqref{DEQ}\:.

\vsk.2>
The determinant $\;\det\:\Psh(\zz\:;h;\qq)\,$ satisfies the equations
\vvn.2>
\be
\Bigl(\:q_i\:\frac{\der}{\der q_i}\>-\>
\tr\:X_i(\zz\:;h;\qq)|_{\:\Cnnl}\Bigr)\>
\det\:\Psh(\zz\:;h;\qq)\,=\,0\,, \qquad i=1\lc N\>,\kern-2em
\vv.1>
\ee
Since $\,\tr\:X_i(\zz\:;h;\qq)|_{\>\Cnnl}\<=\:\tr\:X_i(\zz\:;h;\0)|_{\>\Cnnl}
\<=\:d^{(1)}_{\bla,i}\,(z_1\<\lsym+ z_n)\,$,
\vv.1>
where $\,d^{(1)}_{\bla\:,\:i}\>$ are given by \eqref{dla12}\:,
the determinant $\;\det\:\Psp(\zz\:;h;\qq)\,$ does not depend on \,$\qq\,$.
Therefore,
\vvn.1>
\be
\det\:\Psp(\zz\:;h;\qq)\>=\>\det\:\Psp(\zz\:;h;\0)\>=\>1
\kern1.6em\text{and}\kern1.6em
\det\:\Psh(\zz\:;h;\qq)\,=\,\prod_{i=1}^N\,
q_i^{\:\smash{d^{(1)}_{\bla,i}}\:\sum_{a=1}^n z_a}.\kern-.4em
\ee

\vsk.1>
The functions $\,\Psd_{\!J}(\zz\:;h;\qq)\,$ in formula \eqref{PspI} are entire
in $\,\zz,h\,$ by Theorem \ref{thm cy}, and the expressions
\vvn-.1>
\be
\frac{\WW_I(\Si_J\:;\zz\:;h)}{c_\bla(\Si_J\:;h)}\;
\prod_{i=1}^{N-1}\prod_{j=i+1}^N\,\:\prod_{a\in J_i}\>
\prod_{\satop{b\in J_j}{b<a}}\>\frac1{z_a\<-z_b\<-h}\kern-1.5em
\ee
are polynomials by Lemma \ref{WIJ} applied to the functions
\vv.1>
$\,\WW_I(\TT\:;\zz\:;h)\,$. Hence, $\,\Psp(\zz\:;h;\qq)\,$ is holomorphic
in $\,\zz,h\,$ provided $\,z_a\<-z_b\<\not\in\Z\,$ for all $\,a\ne b\,$.
Moreover, $\,\Psp(\zz\:;h;\qq)\,$ is regular for $\,z_a\<=z_b\,$
by formula \eqref{Psdab}\:, and the singularities at the hyperplanes
$\;z_a\<-z_b\<\in\Z_{\:\ne\:0}\,$ are simple poles.
Theorem \ref{Bthm} is proved.
\end{proof}

\begin{proof}[Proof of Proposition \ref{PsiPpr}]
Denote by $\,\Psb_P(\zz\:;h)\,$ the right-hand side of
\vvn.1>
formula \eqref{PsiP0}\:. Then formula \eqref{PPI}\:,
Proposition \ref{egvX0}, and Theorem \ref{thm cy} yield
$\,\Psh(\zz\:;h;\qq)\,\Psb_P(\zz\:;h)=\Psi_P(\zz\:;h;\qq)\,$.
\vvn.1>
Hence by definition \eqref{princ} of the principal term,
$\,\Psz_{\?P}(\zz\:;h)=\,\Psb_P(\zz\:;h)\,$.
\end{proof}

Consider the entries of $\,\Psp(\zz\:;h;\qq)\,$
\vvn.3>
in the standard basis $\,\{\:v_I\,,\alb\,I\<\in\Il\:\}\,$ of $\,\Cnnl$,
\beq
\label{entrij}
\Psp(\zz\:;h;\qq)\::\:v_J\,\mapsto\>
\sum_{I\in\Il}\,\Psp_{\<\IJ}(\zz\:;h;\qq)\,v_I\,.\kern-2em
\eeq
Recall the function $\,A(\TT\:;\zz\:;h;\ka)\,$ at $\,\ka=1\,$,
see \eqref{Ath}\:,
\vvn.3>
\beq
\label{A}
A(\TT\:;\zz\:;h)\,=\,\prod_{i=1}^{N-1}\,\prod_{a=1}^{\la^{(i)}}\;\biggl(\,
\prod_{\satop{b=1}{b\ne a}}^{\la^{(i)}}\>
\frac{\Gm(t^{(i)}_b\!-t^{(i)}_a\!+1)}{\Gm(t^{(i)}_b\!-t^{(i)}_a\!+h)}\;
\prod_{c=1}^{\la^{(i+1)}}\:\frac{\Gm(t^{(i+1)}_c\!-t^{(i)}_a\!+h)}
{\Gm(t^{(i+1)}_c\!-t^{(i)}_a\!+1)}\,\biggr)\>,\kern-1.4em
\vv-.2>
\eeq
where $\,\la^{(N)}\?=n\,$ and $\,t^{(N)}_a\?=z_a\,$,
$\;a=1\lc n\,$. Notice that
\vvn.2>
\be
A(\Si_I\:;\zz\:;h)\,=\,\bigl(\Gm(h)\bigr)^{\la\+1}\>
\prod_{i=1}^{N-1}\prod_{j=i+1}^N\,\:\prod_{a\in I_i}\>\prod_{b\in I_j}
\,\frac{\Gm(z_b\<-z_a\<+h)}{\Gm(z_b\<-z_a\<+1)}\;,\kern-1.2em
\ee
where $\,\la\+1\<=\sum_{i=1}^{N-1}\la^{(i)}\>$. Recall the function
$\,\Om_\bla(h;\qq\:;\ka)\,$ at $\,\ka=1\,$, see \eqref{Oml}\:,
\vvn.2>
\beq
\label{Omh}
\Om_\bla(h;\qq)\,=\,\prod_{i=1}^{N-1}\prod_{j=i+1}^N (1-q_j/q_i)^{h\:\la_i}\>.
\vv-.3>
\eeq
For $\,\lb\<\in\Z^{\:\la\+1}\!$, set
\beq
\label{Jc}
\Jc_{\IJ,\,\lb}(\zz\:;h)\,=\>\sum_{K\in\Il}\>
\frac{A(\Si_K\?-\lb\:;\zz\:;h)\,W_I(\Si_K\?-\lb;\zz\:;h)\,\WW_J(\Si_K;\zz\:;h)}
{R_\bla(\zz_{\si_K}\<)\,Q_\bla(\zz_{\si_K};h)\,A(\Si_K;\zz\:;h)\,
c_\bla(\Si_K\?-\lb\:;h)\,c_\bla(\Si_K;h)}\;.\kern-1em
\vv.4>
\eeq
\begin{prop}
\label{lemPsp}
We have
\vvn-.3>
\beq
\label{PspIJ}
\Psp_{\<\IJ}(\zz\:;h;\qq)\,=\,\Om_\bla(h;\qq)\!
\sum_{\lb\in\Z_{\ge0}^{\:\la\+1}\!\!}\Jc_{\IJ,\,\lb}(\zz\:;h)\,
\prod_{i=1}^{N-1}\,(q_{i+1}/q_i)^{\>\sum_{a=1}^{\la^{(i)}}l_a^{(i)}}.\kern-1em
\eeq
\end{prop}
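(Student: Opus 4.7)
The proof is by direct computation. First I would extract the matrix element $\Psp_{IJ}(\zz;h;\qq)$ from the definition \eqref{PspI}: apply $\Psp$ to $v_J$, expand each $\Psd_K$ in the standard basis via formula \eqref{yaPsd}, and read off the coefficient of $v_I$. The products $\prod(z_a-z_c-h)$ arising from \eqref{yaPsd} cancel exactly against $\prod 1/(z_a-z_b-h)$ from \eqref{PspI}, and the factors of $\pi$ from the two sources cancel pair by pair, reducing the expression to
\be
\Psp_{IJ}(\zz;h;\qq)\,=\,\frac{\Om_\bla(h;\qq)}{(\Gm(h))^{\la\+1}}\>\sum_{K\in\Il}\Bc_{IK}(\zz_{\si_K};h;q_2/q_1\lc q_N/q_{N-1})\,\frac{\WW_J(\Si_K;\zz;h)}{c_\bla(\Si_K;h)}\,E_K(\zz;h)\,,
\ee
where $E_K(\zz;h)=\prod\Gm(z_b-z_a)/\Gm(z_b-z_a+h+1)$, the product taken over pairs $(a,b)$ with $a\in K_i,\,b\in K_j,\,i<j$.

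The key step is to re-express each $\Bc_{IK}(\zz_{\si_K};h;r)$, which by definition \eqref{Bcf} is a Jackson series based at $\Si_{\Imil}$, as a Jackson series based at $\Si_K$ itself and summed over the full cone $\Z^{\la\+1}_{\ge0}$. To this end I would compute $\Mc_K(\Phi_\bla W_I)(\zz;h;\qq)$ in two ways: on one hand via \eqref{MJfB}, which presents it as $\Bc_{IK}(\zz_{\si_K};h;r)$ multiplied by $\prod_i(e^{\pii(n-\la_i)}q_i)^{\sum_{a\in K_i}z_a}$ and the product of $\pi/\sin\pi(z_a-z_b)$ over $K$-pairs; on the other hand via direct residue evaluation combining \eqref{MSi<} and \eqref{ResPhlb}, which presents it as $M_\bla(\zz_{\si_K})^{-1}$ times $\sum_{\lb\in\Z^{\la\+1}_{\ge0}}A(\Si_K-\lb;\zz;h)\,W_I(\Si_K-\lb;\zz;h)/c_\bla(\Si_K-\lb;h)$ together with $\prod_i q_i^{\sum_{a\in K_i}z_a}\prod_i(q_{i+1}/q_i)^{\sum_a l_a^{(i)}}$. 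Equating these two expressions for $\Mc_K(\Phi_\bla W_I)$ and canceling the common $q$-dependent prefactors yields the desired rewriting of $\Bc_{IK}(\zz_{\si_K};h;r)$.

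Substituting this rewriting into the formula of the first paragraph and interchanging summation converts $\Psp_{IJ}$ into a double sum over $K\in\Il$ and $\lb\in\Z^{\la\+1}_{\ge0}$ of precisely the form \eqref{PspIJ}; what remains is to check that the scalar $K$-dependent prefactor simplifies to $1/[R_\bla(\zz_{\si_K})\,Q_\bla(\zz_{\si_K};h)\,A(\Si_K;\zz;h)]$, the prefactor appearing in $\Jc_{IJ,\lb}$. This is a direct calculation using the explicit evaluation $A(\Si_K;\zz;h)=(\Gm(h))^{\la\+1}\prod\Gm(z_b-z_a+h)/\Gm(z_b-z_a+1)$ (product over $K$-pairs) obtained from \eqref{A}, the forms \eqref{RQ}, \eqref{Mla} of $R_\bla,Q_\bla,M_\bla$, and the functional equation $\Gm(x+1)=x\,\Gm(x)$. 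The main obstacle is the bookkeeping: one must first verify the combinatorial identity $\sum(z_a+z_b)=\sum_{i=1}^N(n-\la_i)\sum_{a\in K_i}z_a$ (the left sum taken over $K$-pairs) in order to cancel the exponential factors between $M_\bla(\zz_{\si_K})^{-1}$ and the $\prod e^{\pii(n-\la_i)\sum_{a\in K_i}z_a}$ arising from \eqref{MJfB}, and then confirm that $E_K$ combines with $R_\bla\cdot Q_\bla\cdot A(\Si_K;\zz;h)$ factor by factor. Once the indexing of all pair products is matched, every factor collapses cleanly via the functional equation of $\Gm$.
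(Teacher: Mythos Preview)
Your proposal is correct and follows essentially the same route as the paper's proof, which is stated in one line: ``The statement follows from formula \eqref{PspI} and Theorem \ref{thm cy}.'' You have simply unpacked that reference, expanding $\Psd_K$ via \eqref{yaPsd} (which is how Theorem \ref{thm cy} is proved) and then rewriting $\Bc_{IK}$ through the two presentations \eqref{MJfB} and \eqref{MSi<}+\eqref{ResPhlb} of the Jackson integral $\Mc_K(\Phi_\bla W_I)$; the prefactor bookkeeping you outline is exactly what is needed and is routine once the pair-indexing is matched.
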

\begin{proof}
The statement follows from formula \eqref{PspI} and Theorem \ref{thm cy}.
\end{proof}

\subsection{The map $\,\Bcyr_\bla\,$}
\label{secB}

In Section \ref{seclaur}, we introduced the space $\>\Srsl$ of
$\,\Cnnl\<$-valued solutions of the joint system of dynamical differential
equations \eqref{DEQ} and \qKZ/ difference equations \eqref{Ki} spanned over
$\,\C\,$ by the functions $\,\Psi_P(\zz\:;h;\qq\:;\ka)\,$ labeled by Laurent
polynomials in $\,\:\GGd\<,\zzd,\hdd\,$; we also defined the map
\vvn.1>
\be
\muk:\:\Kc_\bla\to\:\Srsl\,,\qquad Y\<\mapsto\Psi_Y\>,
\vv.1>
\ee
see \eqref{muk}\:. In Section \ref{secLev} we introduced the Levelt fundamental
solution $\,\:\Psh(\zz\:;h;\qq\:;\ka)\,$ of dynamical differential equations
\eqref{DEQ}\:, see \eqref{Psipnd}\:, \eqref{PspI}\:. Denote by $\>\Srhl\>$
the space of $\,\Cnnl\<$-valued solutions of dynamical differential equations
\eqref{DEQ} spanned over $\,\C\,$ by the functions
\vvn.06>
$\,\:\Psh(\zz\:;h;\qq\:;\ka)\>v\,$, $\,v\in\Cnnl$.
Since $\,\:\det\:\Psh(\zz\:;h;\qq\:;\ka)\ne 0\,$, see \eqref{detPsh}\:,
there is an isomorphism
\vvn-.3>
\be
\muht:\Cnnl\to\>\Srhl\>,\qquad v\:\mapsto\Psh(\zz\:;h;\qq\:;\ka)\>v\,.
\kern-1.6em
\ee

\vsk.5>
Let $\,L\<\subset\<\C^n\!\times\C\,$ be the complement of the union of
the hyperplanes
\vvn.4>
\beq
\label{zzhZ}
z_a\<-z_b\<+h\in\<\ka\>\Z_{\le0}\,,\qquad z_a\<-z_b\<\in\<\ka\>\Z_{\ne0}\,,
\qquad a,b=1\lc n\,,\quad a\ne b\,.\kern-2em
\eeq

\vsk.4>
Denote by $\,\Oc_L\:$ the ring of functions of $\,\zz,h\,$ holomorphic
\vvn.07>
in $\,L\,$. Let $\>\Sroll$ be the space of $\,\Cnnl\<$-valued solutions
\vv.06>
of dynamical differential equations \eqref{DEQ}
holomorphic in $\,\qq\,$ provided $\,\:|\:q_{i+1}/q_i|<1\,$ for all
\vv.06>
$\,i=1\lc N-1\,$, with a branch of $\,\:\log\>q_i\>$ fixed for each
\,$i=1\lc N$, and holomorphic in $\,\zz,h\,$ in $\,L\,$.
Both spaces $\>\Srsl\>$ and $\>\Srhl\>$ are subspaces of $\>\Sroll$,
see Proposition \ref{PsiPsol2} and Theorem \ref{Bthm}.
Let
\vvn.4>
\beq
\label{muol}
\muol\<:\:\Cnnl\?\ox_{\:\C}\Oc_L\to\>\Sroll,\qquad
v\:\mapsto\Psh(\zz\:;h;\qq\:;\ka)\>v\,,
\vv.3>
\eeq
be the $\>\Oc_L\:$-\:linear extension of the map $\,\muht\,$.

\vsk.3>
Recall the functions
\vvn-.2>
\be
C_\bla(\zz\:;\ka)\,=\,\prod_{i=1}^N\>e^{\:\pii\,(n-\la_i)
\sum_{a=\smash{\la^{\?(i-1)}}\<+1}^{\la_i}z_a\</\<\ka}\:\kern-2em
\vv-.3>
\ee
and
\vvn.1>
\be
G_\bla(\zz\:;h;\ka)\,=\,\prod_{i=1}^{N-1}\>\prod_{a=1}^{\la^{(i)}}\,
\prod_{b=\la^{(i)}+1}^{\la^{(i+1)}}\!
\Gm\bigl(\<(z_a\<-z_b)/\ka\bigr)\,\Gm\bigl(\<(z_b\<-z_a\<+h)/\ka\bigr)\,,\kern-.5em
\vv.5>
\ee
see \eqref{CGG}\:, \eqref{GGG}\:. Define a map
\vvn.2>
\begin{gather}
\notag
\Bcyr_\bla\::\:\Kc_\bla\to\:\Cnnl\?\ox_{\:\C}\Oc_L\,,\kern-.6em
\\[4pt]
\label{BcyrP}
[P\:]\,\mapsto\?
\sum_{\IJ\in\:\Il}\?\Pdd(\zz_{\si_J};\zz\:;h;\ka)\,
C_\bla(\zz_{\si_J};\ka)\,G_\bla(\zz_{\si_J};h;\ka)\,
\frac{W_I(\Si_J\:;\zz\:;h)}{c_\bla(\Si_J\:;h)}\;v_I\,,\kern-1em
\\[-14pt]
\notag
\end{gather}
where $\,[P\:]\in\Kc_\bla\,$ stands for the class of the Laurent polynomial
\vv.06>
$\,P(\:\GGd\:;\zzd\:;\hdd)\,$. By Proposition \ref{PsiPpr}, the map
$\,\:\Bcyr_\bla\>$ sends the class $\,Y\!\in\Kc_\bla\,$ to the principal term
of the solution $\,\:\Psi_Y\:$ of the joint system of dynamical differential
equations \eqref{DEQ} and \qKZ/ difference equations \eqref{Ki}.

\begin{prop}
\label{triangle}
The map $\;\Bcyr_\bla\::\:\Kc_\bla\to\:\Cnnl\?\ox_{\:\C}\Oc_L\:$
is well-defined and the following diagram is commutative,
\vvn-.4>
\beq
\label{cd}
\xymatrix{\Kc_\bla\ar^-{\;\Bcyr_\bla}[rr]
\ar_{\smash{\lower.8ex\llap{$\ssize\muk\;\;\,$}}}[dr]&&
\rlap{$\Cnnl\?\ox_{\:\C}\Oc_L$}\phan{\Kc_\bla}
\ar^{\smash{\lower.8ex\rlap{$\;\ssize\muol$}}}[dl]\\
&\!\Sroll\:&}
\eeq
\end{prop}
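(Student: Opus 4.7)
The plan is to deduce Proposition \ref{triangle} essentially tautologically from Proposition \ref{PsiPpr}, combined with Corollary \ref{muk=} for well-definedness on classes and Theorem \ref{Bthm} for the analytic properties. First I will observe that the defining formula \eqref{BcyrP} for $\Bcyr_\bla([P])$ agrees term-by-term with the expression \eqref{PsiP0} for the principal term $\Psz_P$ of $\Psi_P=\muk([P])$. Hence
\begin{equation*}
\Bcyr_\bla([P])\,=\,\Psz_P(\zz;h;\ka)\,=\,\Psh(\zz;h;\qq;\ka)^{-1}\,\Psi_P(\zz;h;\qq;\ka)
\end{equation*}
by the definition \eqref{princ} of the principal term, which immediately yields commutativity of the diagram \eqref{cd} since $\muol(\Bcyr_\bla([P]))=\Psh\cdot\Psz_P=\Psi_P=\muk([P])$.

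For well-definedness of $\Bcyr_\bla$ as a map on classes in $\Kc_\bla$ (rather than on individual Laurent polynomial representatives), I will use that $\muk$ is well-defined by Corollary \ref{muk=} and that $\muol$ is injective, since by \eqref{detPsh} the determinant $\det\Psh\ne 0$ in the domain of definition. The identity $\muol\circ\Bcyr_\bla=\muk$ from the previous paragraph then forces $\Bcyr_\bla([P])$ to depend only on the class $[P]\in\Kc_\bla$.

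For the codomain assertion $\Bcyr_\bla([P])\in\Cnnl\ox_\C\Oc_L$, I will exploit the same identity $\Bcyr_\bla([P])=\Psh^{-1}\Psi_P$. By Proposition \ref{PsiPsol2}, the matrix coefficients of $\Psi_P$ are holomorphic in $(\zz,h)\in L$ (with $\qq$ in the polydisk $|q_{i+1}/q_i|<1$ and fixed branches of $\log q_i$). By Theorem \ref{Bthm}, $\Psh=\Psp\cdot\prod_i q_i^{X_i(\zz/\ka;h/\ka;\0)}$ with $\det\Psp=1$, so $\Psp^{-1}$ inherits from $\Psp$ holomorphy for $z_a-z_b\not\in\ka\,\Z_{\ne 0}$ and entire dependence in $h$. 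Consequently $\Psh^{-1}\Psi_P$ has matrix coefficients holomorphic on $L$, and since the principal term is $\qq$-independent the $\qq$-dependence must cancel, placing $\Bcyr_\bla([P])$ in $\Cnnl\ox_\C\Oc_L$. The main obstacle, were one to attempt the codomain claim directly from \eqref{BcyrP}, would be the delicate cancellation between Gamma-function poles of $G_\bla(\zz_{\si_J};h;\ka)$ and the vanishing properties of $W_I(\Si_J;\zz;h)$ from Lemmas \ref{WIz} and \ref{WIJ}; routing through $\Psh^{-1}\Psi_P$ bypasses this bookkeeping by invoking the analytic work already done in Section \ref{details} and in the proof of Theorem \ref{Bthm}.
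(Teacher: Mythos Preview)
Your proof is correct, and for the commutativity of diagram \eqref{cd} it coincides with the paper's: both simply invoke Proposition \ref{PsiPpr}.

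The difference lies in how the codomain assertion $\Bcyr_\bla([P])\in\Cnnl\ox_\C\Oc_L$ is established. The paper argues directly from formula \eqref{BcyrP}: Lemma \ref{WIJ} bounds the poles of each summand by those of $G_\bla(\zz_{\si_J};h;\ka)$, so the only possible poles outside the hyperplanes \eqref{zzhZ} are at the diagonals $z_a=z_b$; these are then shown to cancel by the ``standard reasoning'' (pairing the terms for $J$ and $s_{a,b}(J)$). You instead write $\Bcyr_\bla([P])=\Psh^{-1}\Psi_P$ and invoke Proposition \ref{PsiPsol2} for the holomorphy of $\Psi_P$ on $L$ and Theorem \ref{Bthm} (together with $\det\Psp=1$) for the holomorphy of $\Psh^{-1}$ on $L$. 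Your route is cleaner and avoids the bookkeeping you correctly flag, at the cost of appealing to the heavier analytic input of Theorem \ref{Bthm}; the paper's route is more self-contained. It is worth noting that the Remark immediately following the paper's proof records precisely your observation that $\Psh^{-1}$ is holomorphic on $L$, so the paper implicitly acknowledges your line of argument as well. One minor point: for well-definedness on classes you cite Corollary \ref{muk=}, but all you need is that $\muk$ is well-defined on $\Kc_\bla$, which is stated just before \eqref{muk} and follows directly from the fact that \eqref{PPI} involves only the values $\Pdd(\zz_{\si_J};\zz;h;\ka)$.
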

\begin{proof}
By Lemma \ref{WIJ}, poles of the sum in the right-hand side of \eqref{BcyrP}
are at most those of the function $\,G_\bla(\zz\:;h;\ka)\,$ and, therefore,
in addition to hyperplanes \eqref{zzhZ} can occur only at the hyperplanes
$\,z_a\<=z_b\>$, $\,a\ne b$. However, the sums
\vvn.1>
\be
\sum_{J\in\Il}\Pdd(\zz_{\si_J};\zz\:;h;\ka)\,
C_\bla(\zz_{\si_J};\ka)\,G_\bla(\zz_{\si_J};h;\ka)\,
\frac{W_I(\Si_J\:;\zz\:;h)}{c_\bla(\Si_J\:;h)}
\vv.2>
\ee
are regular at the hyperplanes $\,z_a\<=z_b\>$ for all $\,a,b\,$,
by the standard reasoning. Hence, the map $\,\:\Bcyr_\bla\>$ is well-defined.

\vsk.2>
The commutativity of diagram \eqref{cd} follows from
Proposition \ref{PsiPpr}.
\end{proof}

\begin{rem}
Since $\,\:\Psh(\zz\:;h;\qq\:;\ka)\,$ is holomorphic in $\,\zz,h\,$ in $\,L\,$,
and $\,\bigl(\det\:\Psh(\zz\:;h;\qq\:;\ka)\bigr)^{\?-1}\:,$
is entire in $\,\zz,h\,$, see \eqref{detPsh}\:, the inverse matrix
\vv.03>
$\,\:\Psh(\zz\:;h;\qq\:;\ka)^{-1}\:$ is holomorphic in $\,\zz,h\,$ in $\,L\,$.
Therefore, for every $\,\:\Psi\in\Sroll$, its principal term
$\,\:\Psz\<=\Psh^{-1}\>\Psi\,$, defined by \eqref{princ}\:,
\vv.06>
belongs to $\,\Cnnl\?\ox_{\:\C}\Oc_L\,$, and there is an isomorphism
\vv.06>
$\,\Sroll\:\!\to\Cnnl\?\ox_{\:\C}\Oc_L\,$, $\;\Psi\mapsto\Psz\:$.
The inverse map equals $\,\muol\,$, see \eqref{muol}\:, so that,
$\,\Sroll\?=\Srhl\?\ox_{\:\C}\Oc_L\,$.
\end{rem}

\subsection{Example $\,\bla=(1,n-1)\,$}
\label{example}
Throughout this section, let $\,N=2\,$ and $\,\bla=(1,n-1)\,$.
Denote by $\,[\:a\:]$ the element
$\bigl(\{a\},\{\:1\lc a-1,a+1\lc n\:\}\bigr)\<\in\Il$\,.
\vv-.04>
The space $\,\Ctnl\:$ has a basis $\,v_{[\:1\:]}\lc v_{[\:n\:]}\:$,
where $\,v_{[a]}\<=v_2^{\ox(a-1)}\<\ox v_1\<\ox v_2^{\ox(n-a)}\>$.
Clearly $\,e_{1,1}^{(a)}v_{[\:b\:]}=\:\dl_{\ab}\,v_{[\:b\:]}\,$ and
$\,e_{2,2}^{(a)}v_{[\:b\:]}=(1-\dl_{\ab})\,v_{[\:b\:]}\,$.

\vsk.3>
The \qKZ/ operators $\,K_1\lc K_n\>$, see \eqref{K}\:, are
\vvn.4>
\begin{align}
\label{K1}
\kern1.2em
K_a(\zz\:;h;\qq\:;\ka)\,={}&\,
R^{(\aa-1)}(z_a\<-z_{a-1}+\ka;h)\,\dots\,R^{(a,1)}(z_a\<-z_1+\ka;h)\>\times{}
\\[3pt]
\notag
&\?{}\times\<\;q_1^{e_{1,1}^{(a)}}\:q_2^{e_{2,2}^{(a)}}\,
R^{(a,n)}(z_a\<-z_n\:;h)\,\dots\,R^{(\aa+1)}(z_i\<-z_{i+1}\:;h)\,.
\\[-12pt]
\notag
\end{align}
The $\:R\:$-matrices in the right-hand side 
preserve the subspace $\Ctnl\?\subset\<\Ctn$, acting there as follows,
\vvn.1>
\begin{gather*}
R^{(\ab)}(z\:;h)\,v_{[a]}\,=\,
\frac z{z-h}\,v_{[a]}\:-\:\frac h{z-h}\,v_{[\:b\:]}\,,\kern1.6em
R^{(\ab)}(z\:;h)\,v_{[\:b\:]}\,=\,
\frac z{z-h}\,v_{[\:b\:]}\:-\:\frac h{z-h}\,v_{[a]}\,,\kern-.2em
\\[9pt]
R^{(\ab)}(z\:;h)\,v_{[c]}\>=\>v_{[c]}\,,\qquad c\ne a,b\,.\kern-1em
\\[-12pt]
\end{gather*}
The \qKZ/ difference equations \eqref{Ki} are
\vvn.4>
\beq
\label{Ki1}
f(z_1\lc z_a+\ka\lc z_n;h;\qq\:;\ka)\,=\,
K_a(\zz\:;h;\qq\:;\ka)\,f(\zz\:;h;\qq\:;\ka)\,,\qquad a=1\lc n\>.\kern-1em
\eeq

\vsk.6>
The dynamical Hamiltonians $\,X_1\:,\:X_2\>$, see \eqref{Xi}\:,
act on $\,\Ctnl\:$ as follows
\vvn.2>
\begin{align}
\label{Xi1}
&X_1(\zz\:;h;\qq)\>v_{[\:a\:]}\>=\,z_a\>v_{[\:a\:]}\:-\>
h\>\sum_{b=1}^{a-1}\:v_{[\:b\:]}\:-\>
\frac{h\:q_2}{q_1\<-q_2}\,\sum_{\satop{b=1}{b\ne a}}^n\>v_{[\:b\:]}\>,
\\[-8pt]
\notag
&X_2(\zz\:;h;\qq)\>v_{[\:a\:]}\>=\>
\Bigl(\<-\>X_1(\zz\:;h;\qq)\>+\sum_{b=1}^n\,z_b\Bigr)\,v_{[\:a\:]}\,,
\\[-20pt]
\notag
\end{align}
and the dynamical differential equations \eqref{DEQ} are
\vvn.5>
\begin{align}
\label{DEQ1}
\ka\>q_1\:\frac\der{\der\:q_1}\>\Psi(\zz\:;h;\qq\:;\ka)\,&{}=\>
X_1(\zz\:;h;\qq)\>\Psi(\zz\:;h;\qq\:;\ka)\,,\kern-1em
\\[7pt]
\notag
\ka\>q_2\:\frac\der{\der\:q_2}\>\Psi(\zz\:;h;\qq\:;\ka)\,&{}=\>
X_2(\zz\:;h;\qq)\>\Psi(\zz\:;h;\qq\:;\ka)\,.\kern-1em
\end{align}

\vsk.2>
In this section, we use the variable $\,t=t^{(1)}_1\:$.
The substitution $\,\:\TT=\Si_{\:[a]}\>$ reads as $\,t\:=z_a\,$.
The weight functions are
\vvn-.1>
\beq
\label{Wa}
W_{[a]}(t\:;\zz\:;h)\,=\,
\prod_{b=1}^{a-1}\,(t-z_b)\prod_{b=a+1}^n\?(t-z_b\<-h)\,,
\qquad a=1\lc n\,.\kern-2em
\vv.2>
\eeq
We have $\,c_\bla(t\:;h)=1\,$, see \eqref{cla}\:.
The permutations $\,\:\si_{[1]}\lc\si_{[n]}\,$ are
\vvn.3>
\be
\si_{[a]}(1)=a\,,\qquad \si_{[a]}(b)=b-1\,,\quad b=1\lc a-1\,,
\qquad\si_{[a]}(b)=b\,,\quad b=a+1\lc n\,,
\vv.3>
\ee
and $\,\:|\:\si_{[a]}\:|=a-1\,$. We have
\vvn-.4>
\begin{gather}
\label{Wab}
W_{[a]}(z_a;\zz\:;h)\,=\,
\prod_{b=1}^{a-1}\,(z_a\<-z_b)\prod_{b=a+1}^n\?(z_a\<-z_b\<-h)\,,
\\[6pt]
\notag
W_{[a]}(z_b;\zz\:;h)\>=\>0\,,\qquad b=1\lc a-1\,,
\\[-14pt]
\notag
\end{gather}
and $\;W_{[a]}(z_b;\zz\:;h)\,$ is divisible by the product
\vvn.07>
$\;\prod_{b=a+1}^n(z_a\<-z_b\<-h)\,$ for any $\,b=1\lc n\,$,
cf.~Lemmas \ref{WIz}, \ref{WIJ}

\vsk.2>
The functions $\,\WW_{[a]}(t\:;\zz\:;h)\,$, see \eqref{WcI}\:, are
\vvn.2>
\beq
\label{Wca}
\WW_{[a]}(t\:;\zz\:;h)\,=\,
\prod_{b=1}^{a-1}\,(t-z_b\<-h)\prod_{c=a+1}^n\?(t-z_b)\,,
\qquad a=1\lc n\,.\kern-2em
\vv-.3>
\eeq
Set
\vvn-.5>
\be
\Rb_a(\zz)\,=\,\prod_{\satop{b=1}{b\ne a}}^n\,(z_a\<-z_b)\,,\qquad
\Qb_a(\zz\:;h)\,=\,\prod_{\satop{b=1}{b\ne a}}^n\,(z_a\<-z_b\<-h)\,,
\qquad a=1\lc n\,.\kern-1em
\ee
Then $\,\Rb_a(\zz)=R_\bla(\zz_{\si_{[a]}})\,,$ and
$\,\Qb_a(\zz\:;h)=Q_\bla(\zz_{\si_{[a]}};h)\,$, where the functions
\vvn.1>
$\,R_\bla(\zz)\,$, $\,Q_\bla(\zz\:;h)\,$, are given by \eqref{RQ}\:.
Biorthogonality relations \eqref{orth}\,, \eqref{corth} become
\vvn.3>
\begin{gather*}
\sum_{c=1}^n\,\frac{W_{[a]}(z_c;\zz\:;h)\,\WW_{[b]}(z_c;\zz\:;h)}
{\Rb_c(\zz)\>\Qb_c(\zz\:;h)}\,=\,\dl_{\ab}\,,\kern-1.8em
\\
\sum_{c=1}^n\,W_{[c]}(z_a;\zz\:;h)\,\WW_{[c]}(z_b;\zz\:;h)
\,=\,\dl_{\ab}\,\Rb_a(\zz)\>\Qb_a(\zz\:;h)\,.\kern-1em
\end{gather*}

\vsk.1>
The master function, see \eqref{Phi}\:, is
\vvn.3>
\begin{align*}
& \Phi_\bla(t\:;\zz\:;h;\qq\:;\ka)\,={}
\\[4pt]
& \!\<{}=\,(e^{\:\pii}q_2)^{\>\sum_{a=1}^n\<z_a\</\?\ka}\,
(e^{\:\pii\,(n-2)}q_1/q_2)^{\:t/\?\ka}\,\prod_{a=1}^n\,
\Gm\bigl(\<(t-z_a)/\ka\bigr)\>\Gm\bigl(\<(z_a\<-t+h)/\ka\bigr)\,.\kern-.9em
\\[-20pt]
\end{align*}
The
hypergeometric solutions \eqref{mcF} of the joint system of
differential equations \eqref{DEQ1} and
difference equations \eqref{Ki1} have the form
\vvn.1>
\be
\Psi_{[a]}(\zz\:;h;\qq\:;\ka)\,=\,
\frac{(1-q_2/q_1)^{h\</\?\ka}}{\ka\>\Gm(h/\<\ka)}\,
\sum_{b=1}^n\,\sum_{l=0}^\infty\,
\Res_{\>t\:=\;z_a-\:l\ka}\Phi_\bla(t\:;\zz\:;h;\qq\:;\ka)\;
W_{[\:b\:]}(z_a\<-l\ka\:;h)\>v_{[\:b\:]}\,,
\vv.2>
\ee
where the residues of the master function are
\vvn.1>
\begin{align*}
\Res_{\>t\:=\;z_a-\:l\ka}\Phi_\bla(t\:;\zz\:;h;\qq\:;\ka)\,&{}=\,
q_1^{\:z_a\</\<\ka}\:q_2^{\>\sum_{c=1,\>c\ne a}^n\<z_c\</\?\ka}\,
\prod_{\satop{c=1}{c\ne a}}^n\,\frac{\pi\>e^{\:\pii\>(z_a+\:z_c)/\?\ka}}
{\sin\:\bigl(\pi\>(z_a\?-z_c)/\<\ka\bigr)}\,\times{}
\\[3pt]
&{}\>\times\,\frac{\ka\>(q_2/q_1)^{\:l}\,\Gm(l+h/\<\ka)}{l\:!}\,
\prod_{\satop{c=1}{c\ne a}}^n\,\frac{\Gm\bigl(l+(z_c\<-z_a\<+h)/\ka\bigr)}
{\Gm\bigl(l+1+(z_c\<-z_a)/\ka\bigr)}\;.
\\[-16pt]
\end{align*}
Determinant formula for coordinates of the
hypergeometric solutions, see \eqref{detPsi}\:, is
\vvn.5>
\begin{align}
\label{detPsi1}
\det\:\biggl(\>\sum_{l=0}^\infty\,
\Res_{\>t\:=\;z_a-\:l\ka}\Phi_\bla &{}(t\:;\zz\:;h;\qq\:;\ka)\,
W_{[\:b\:]}(z_a\<-l\ka\:;h)\?\biggr)_{\!\?\ab\:=1}^{\!\<n}\<={}
\\[10pt]
\notag
{}=\,\:\frac{\<\bigl(\:q_1\>q_2^{\:n-1}\bigr)^{\sum_{a=1}^n z_a\</\?\ka}\>
\bigl(\ka\>\Gm(h/\<\ka)\bigr)^n\?}{(1-q_2/q_1)^{n\:h\</\?\ka}}\;
& \prod_{a=1}^{n-1}\,\prod_{b=a+1}^n\!\frac{\pi\:\ka^2\>
\Gm\bigl(\<(z_a\<-z_b\<+h)/\ka\bigr)\>\Gm\bigl(1+(z_b\<-z_a\<+h)/\ka\bigr)}
{e^{-\pii\>(z_a+\:z_b)/\?\ka}\:\sin\bigl(\pi\>(z_a\<-z_b)/\ka\bigr)}\;.
\kern-.4em
\\[-11pt]
\notag
\end{align}
By formulae \eqref{Wa}, \eqref{Wab}, and the Vandermonde determinant
formula, equality \eqref{detPsi1} transforms to
\begin{align}
\label{detPsi2}
\det\:\biggl(\>\sum_{l=0}^\infty\,\sum_{c=1}^n\,
\Res_{\>t\:=\;z_c-\:l\ka}\bigl(\:
t^{\:a-1}\:e^{-\:2\:\pii\,\:(b-1)\>t/\?\ka}\,
\Phi_\bla(t\:;\zz\:;h;\qq\:;\ka)\bigr)\?\biggr)_{\!\?\ab\:=1}^{\!\<n}\<={}
\hp{=\sum\:} &
\\[8pt]
\notag
{}=\,\:\frac{\<\bigl(\:q_1\>q_2^{\:n-1}\bigr)^{\sum_{a=1}^n z_a\</\?\ka}\>
\bigl(\<-\:2\:\piit\,\:\bigr)^{\<n\:(n-1)/2}\:\ka^{\:n\:(n+1)/2}\>
\bigl(\:\Gm(h/\<\ka)\bigr)^n\?}{(1-q_2/q_1)^{n\:h\</\?\ka}}\,\times{} &
\\[1pt]
\notag
{}\times\>\prod_{a=1}^{n-1}\,\prod_{b=a+1}^n\!
\Gm\bigl(\<(z_a\<-z_b\<+h)/\ka\bigr)\>\Gm\bigl(\<(z_b\<-z_a\<+h)/\ka\bigr)\>&\:.
\end{align}

\vsk.3>
Let $\,\gmd=\gmd_{1,1}\,$.
For a Laurent polynomial $\,P(\gmd\:;\zzd\:;\hdd)\,$, we have
\vvn.4>
\be
\Pdd(\gm\:;\zz\:;h;\ka)\,=\,P(e^{\:2\:\pii\,\gm\</\<\ka}\:;\:
e^{\:2\:\pii\,z_1\</\<\ka}\lc e^{\:2\:\pii\,z_n\</\<\ka}\:;\:
e^{\:2\:\pii\,h\</\<\ka})\,.\kern-1em
\vv.3>
\ee
The solution $\,\Psi_P\>$ of differential equations \eqref{DEQ1} and difference
equations \eqref{Ki1} corresponding to $\,P(\gmd\:;\zzd\:;\hdd)\,$ is
\vvn-.5>
\beq
\label{PsiP1}
\Psi_P(\zz\:;h;\qq\:;\ka)\,=\,\sum_{a=1}^n\,
\Pdd(z_a;\zz\:;h;\ka)\,\Psi_{[a]}(\zz\:;h;\qq\:;\ka)\,.\kern-1.2em
\vv.1>
\eeq
By Proposition \ref{PsiPsol2}, 
$\,\Psi_P(\zz\:;h;\qq\:;\ka)\,$ is holomorphic in $\,\zz,h,\qq\,$
\vv.07>
provided $\,z_a\<-z_b\<+h\not\in\<\ka\>\Z_{\le0}\,$ for all $\,a,b=1\lc n\,$,
$\,a\ne b\,$, and $\,\:|\:q_2/q_1|<1\,$ with branches of $\,\:\log\:q_1\:$ and
\vv.06>
$\,\:\log\:q_2\:$ fixed. The singularities of $\,\Psi_P(\zz\:;h;\qq\:;\ka)\>$
at the hyperplanes $\;z_a\<-z_b\<+h\in\<\ka\>\Z_{\le0}\,$ are simple poles.

\vsk.3>
The solution $\,\Psi_P(\zz\:;h;\qq\:;\ka)\,$ can be written as an integral
\vv.16>
over a suitable contour $\,C\>$ encircling the poles of the product
$\,\prod_{a=1}^n\<\Gm\bigl({(t-z_a)/\<\ka}\bigr)\,$ counterclockwise
\vv.14>
and separating them from the poles of the product
$\,\prod_{a=1}^n\<\Gm\bigl(\<(z_a\<-t+h)/\<\ka\bigr)\,$,
\vvn.5>
\begin{align}
\label{PsiPint}
& \Psi_P(\zz\:;h;\qq\:;\ka)\,={}
\\[4pt]
\notag
&\!{}=\,\frac{(1-q_2/q_1)^{h\</\?\ka}}{2\:\piit\,\ka\>\Gm(h/\<\ka)}\;
\int\limits_{\!\!C\,}\<\Pdd(t\:;\zz\:;h;\ka)\,\Phi_\bla(t\:;\zz\:;h;\qq\:;\ka)\,
\sum_{a=1}^n\>W_{[a]}(t\:;\zz\:;h)\,v_{[a]}\,d\:t\,.
\\[-14pt]
\notag
\end{align}
For instance, if $\,h/\ka\,$ is sufficiently large positive real,
the integral can be taken over the parabola
\vvn-.1>
\be
C\,=\,\{\,h/2-\ka\>\bigl(\:s^2\?-s\>\sqrt{\<-1}\,\bigr)\ \,|\ \,s\in\R\,\}\,.
\vv.5>
\ee
Formula \eqref{PsiPint} can be used to give an alternative proof of
analytic properties of the function $\,\:\Psi_P(\zz\:;h;\qq\:;\ka)\,$.

\vsk.3>
Let $\>\Srsl\:$ be the space of solutions of the joint system of dynamical
\vv.1>
differential equations \eqref{DEQ1} and \qKZ/ difference equations \eqref{Ki1}
\vv.06>
spanned over $\,\C\,$ by the functions $\,\Psi_P(\zz\:;h;\qq\:;\ka)\,$
corresponding to Laurent polynomials $\,P(\gmd\:;\zzd\:;\hdd)\,$.
\vvn.1>
The space $\>\Srsl\:$ is a $\,\C[\:\zzd^{\pm1}\?,\hdd^{\pm1}]\:$-\:module
with $\,f(\zzd\:;\hdd)\,$ acting as multiplication by
$\,f(e^{\:2\:\pii\,z_1\</\<\ka}\lc e^{\:2\:\pii\,z_n\</\<\ka}\:;\:
e^{\:2\:\pii\,h\</\<\ka})\,$.

\vsk.2>
For $\,\bla=(1,n-1)\,$, the algebra $\,\Kc_\bla\,$, see \eqref{Krel},
can be presented as follows
\vvn.3>
\beq
\label{Krel11}
\Kc_\bla\:=\,\C[\:\gmd^{\pm1}\?,\zzd^{\pm1}\?,\hdd^{\pm1}]\>\>\Big/\Bigl\bra
\,\prod_{a=1}^n\,(\gmd-\zdd_a)\>=\>0\,\Bigr\ket\,.\kern-1.6em
\vv.3>
\eeq
The function $\,\Psi_P(\zz\:;h;\qq\:;\ka)\,$ depends only on the class
\vv.1>
of the Laurent polynomial $\,P\:$ in $\>\Kc_\bla\,$.
The assignment $\,P\mapsto\Psi_P\>$ defines the homomorphism
\vvn.3>
\be
\muk\<:\Kc_\bla\to\:\Srsl\,,\qquad Y\<\mapsto\Psi_Y\>,\kern-1em
\vv.3>
\ee
of $\,\C[\:\zzd^{\pm1}\?,\hdd^{\pm1}]\:$-\:modules. Formula \eqref{detPsi2}
implies that the homomorphism $\,\muk\,$ is an isomorphism.

\vsk.3>
The algebra $\,\Kc_\bla\:$ is the equivariant $\>K\?$-theory algebra
\vvn.1>
$\,K_{T\<\times\Cxs}\<(T^*\CP^{\>n-1}\<;\C)\>$ of the cotangent bundle of
the projective space $\,\CP^{\>n-1}$, see the notation in Section \ref{sQde}.

\vsk.3>
Recall Definition \ref{hdef} of a function $\,f(\qq)\,$ holomorphic in the unit
\vv.07>
polydisk around $\,\qq=\0\,$.
For example,
\vv.07>
the dynamical Hamiltonians $\,X_1\:,\:X_2\>$, see \eqref{Xi1}\:, are holomorphic
in the unit polydisk around $\,\qq=\0\,$, and $\,X_1(\zz\:;h;\0)\,$,
$\,X_2(\zz\:;h;\0)\,$ act on $\,\Ctnl\,$ as follows
\vvn.16>
\be
X_1(\zz\:;h;\0)\>v_{[\:a\:]}\>=\,z_a\>v_{[\:a\:]}\:-\>
h\>\sum_{b=1}^{a-1}\:v_{[\:b\:]}\,,\kern1.4em
X_2(\zz\:;h;\0)\>v_{[\:a\:]}\>=\>
\Bigl(\<-\>X_1(\zz\:;h;\0)\>+\sum_{b=1}^n\,z_b\Bigr)\,v_{[\:a\:]}\,,
\ee

\vsk.16>
The Levelt fundamental solution $\,\:\Psh(\zz\:;h;\qq\:;\ka)\,$,
see \eqref{Psipnd}\:, of differential equations \eqref{DEQ1} is
\vvn.3>
\be
\Psh(\zz\:;h;\qq\:;\ka)\,=\,\Psp(\zz/\ka;h/\ka;\qq)\;
q_1^{\>\smash{X_1(\zz\:;h;\:\0)}/\ka}\:q_2^{\>\smash{X_2(\zz\:;h;\:\0)}/\ka}\>,
\kern-1.2em
\vv.9>
\ee
where the $\,\:\End\:\bigl(\Cnnl\bigr)$-\:valued function
$\,\Psp(\zz\:;h;\qq)\,$ is as follows,
\vvn.3>
\begin{gather*}
\Psp(\zz\:;h;\qq)\::\:v_{[\:b\:]}\,\mapsto\>
\sum_{a=1}^n\,\Psp_{\ab}(\zz\:;h;\qq)\,v_{[a]}\,,\kern-1em
\\[-2pt]
\begin{aligned}
& \Psp_{\ab}(\zz\:;h;\qq)\,={}
\\
&{}\!\?=\,(1-q_2/q_1)^h\,\sum_{l=0}^\infty\,(q_2/q_1)^l\,
\sum_{c=1}^n\>\frac{W_{[a]}(z_c\<-l\:;\zz\:;h)\,\WW_{[b]}(z_c;\zz\:;h)}
{\Rb_c(\zz)\>\Qb_c(\zz\:;h)}\,\prod_{d=1}^n\,\prod_{m=0}^{l-1}\,
\frac{z_d\<-z_c\<+h+m}{z_d\<-z_c\<+1+m}\;.\kern-1.4em
\end{aligned}
\\[-19pt]
\end{gather*}
Furthermore, one has $\;\det\:\Psp(\zz\:;h;\qq)=1\,\:$ and
$\;\det\:\Psh(\zz\:;h;\qq\:;\ka)\,=\,
\bigl(\:q_1\>q_2^{\:n-1}\bigr)^{\sum_{a=1}^n z_a\</\?\ka}\:$.

\vsk.3>
For a solution $\,\Psi(\zz\:;h;\qq\:;\ka)\,$ of dynamical differential
equations \eqref{DEQ1}\:, its principal term, see \eqref{princ}\:, is
\be
\Psz(\zz\:;h;\ka)\,=\,\Psh(\zz\:;h;\qq\:;\ka)^{-1}\,\Psi(\zz\:;h;\qq\:;\ka)\,.
\vv.6>
\ee
The principal term of the solution $\,\Psi_P(\zz\:;h;\qq\:;\ka)\,$,
corresponding to a Laurent polynomial $\,P(\gmd\:;\zzd\:;\hdd)\,$,
see \eqref{PsiP0}\:, equals
\vvn.3>
\begin{align}
\label{PsiP10}
\Psz_{\?P}(\zz\:;h;\ka)\,=\,\sum_{a=1}^n\>v_{[a]}\,
\sum_{b=1}^n\>W_{[a]}(z_b;\zz\:;h)\,\Pdd(z_b;\zz\:;h;\ka)
\;e^{\:\pii\>\left(\<(n-2)\:z_b\:+\sum_{c=1}^nz_c\<\right)}\:\times{}&
\\[-3pt]
\notag
{}\times\>\prod_{\satop{c=1}{c\ne b}}^n\,
\Gm\bigl(\<(z_b\<-z_c)/\ka\bigr)\,\Gm\bigl(\<(z_c\<-z_b\<+h)/\ka\bigr)\>&\:.
\end{align}

\vsk.3>
Let $\,L\,$ be the complement of the union of the hyperplanes
\vvn.5>
\be
z_a\<-z_b\<\in\<\ka\>\Z_{\ne0}\,,\qquad z_a\<-z_b\<+h\in\<\ka\>\Z_{\le0}\,,\qquad
a,b=1\lc n\,,\quad a\ne b\,,\kern-.6em
\vv.5>
\ee
cf.~\eqref{zzhZ}\:. Denote by $\,\Oc_L\:$ the ring of
functions of $\,\zz,h\,$ holomorphic in $\,L\,$. The map
\vvn.6>
\begin{gather}
\label{Bcyr1}
\Bcyr_\bla\::\:\Kc_\bla\to\:\Cnnl\?\ox_{\:\C}\Oc_L\,,\kern-1em
\\[4pt]
\notag
\begin{aligned}
{[P\:]}\,\mapsto\:\sum_{a=1}^n\>v_{[a]}\,
\sum_{b=1}^n\>W_{[a]}(z_b;\zz\:;h)\,\Pdd(z_b;\zz\:;h;\ka)
\;e^{\:\pii\>\left.\left(\<(n-2)\:z_b\:+\sum_{c=1}^nz_c\<\right)
\<\right/\<\ka}\:\times{}&
\kern-1em
\\[-6pt]
\notag
{}\times\>\prod_{\satop{c=1}{c\ne b}}^n\,
\Gm\bigl(\<(z_b\<-z_c)/\ka\bigr)\,\Gm\bigl(\<(z_c\<-z_b\<+h)/\ka\bigr)\>&\:,
\kern-1em
\end{aligned}
\end{gather}
sends the class $\,[P\:]\in\Kc_\bla\,$ of the Laurent polynomial
$\,P(\gmd\:;\zzd\:;\hdd)\,$ to the principal term of the solution
$\,\:\Psi_P\:$ of the joint system of differential equations \eqref{DEQ1}
and difference equations \eqref{Ki1}.

\vsk.2>
Let $\>\Sroll$ be the space
of $\,\Cnnl\<$-valued solutions of dynamical differential equations \eqref{DEQ1}
\vv.06>
holomorphic in $\,\qq\,$ provided $\,\:|\:q_2/q_1|<1\,$ with branches of
\vvn.04>
$\,\:\log\:q_1\:$ and $\,\:\log\:q_2\:$ fixed, and holomorphic in $\,\zz,h\,$
in $\,L\,$. The space $\>\Srsl\>$ is a subspace of $\>\Sroll$.

\vsk.3>
Since the matrix $\,\:\Psh(\zz\:;h;\qq\:;\ka)\,$ is holomorphic in $\,\zz,h\,$
in $\,L\,$, and $\,\bigl(\det\:\Psh(\zz\:;h;\qq\:;\ka)\bigr)^{\?-1}\:$ is entire
in $\,\zz,h\,$, the inverse matrix $\,\:\Psh(\zz\:;h;\qq\:;\ka)^{-1}\:$ is also
holomorphic in $\,\zz,h\,$ in $\,L\,$. Thus the map
\vvn-.1>
\be
\muol\<:\:\Cnnl\?\ox_{\:\C}\Oc_L\to\>\Sroll,\qquad
v\:\mapsto\Psh(\zz\:;h;\qq\:;\ka)\>v\,,
\vv.7>
\ee
gives an isomorphism of $\,\Oc_L\:$-modules.
Furthermore, the following diagram is commutative,
\be
\xymatrix{\Kc_\bla\ar^-{\;\Bcyr_\bla}[rr]
\ar_{\smash{\lower.8ex\llap{$\ssize\muk\;\;\,$}}}[dr]&&
\rlap{$\Cnnl\?\ox_{\:\C}\Oc_L$}\phan{\Kc_\bla}
\ar^{\smash{\lower.8ex\rlap{$\;\ssize\muol$}}}[dl]\\
&\!\Sroll\:&}
\vv-.7>
\ee
see Proposition \ref{triangle}.

\section{Limit $\,{h\to\infty}\,$ for solutions of the dynamical and \qKZ/
equations}
\label{sec lim solns}
\subsection{Limiting weight functions $\Wo_{\?I}(\TT\:;\zz)$}
\label{sec WC}
For $\,I\<\in\Il\,$, define the {\it limiting weight functions\/}
\begin{gather}
\notag
\\[-12pt]
\label{WoI}
\Wo_{\?I}(\TT\:;\zz)\,=\,
\Sym_{\>t^{(1)}_1\!\lc\,t^{(1)}_{\la^{(1)}}}\,\ldots\;
\Sym_{\>t^{(N-1)}_1\!\lc\,t^{(N-1)}_{\la^{(N-1)}}}\Uo_I(\TT\:;\zz)\,,
\\[4pt]
\notag
\Uo_I(\TT\:;\zz)\,=\,\prod_{j=1}^{N-1}\,\prod_{a=1}^{\la^{(j)}}\,\biggl(
\prod_{\satop{c=1}{i^{(j+1)}_c\?<\>i^{(j)}_a}}^{\la^{(j+1)}}
\!\!(t^{(j)}_a\?-t^{(j+1)}_c)\,\prod_{b=a+1}^{\la^{(j)}}
\frac{1}{t^{(j)}_b\?-t^{(j)}_a}\,\biggr)\,.
\end{gather}

\vsk.2>
Recall the element \,$\Imil\>$ and the permutation $\,\si_I\,$. Set
\vvn.2>
\be
d_I\>=\:\sum_{j=1}^{N-1}\la^{(j)}(\la^{(j+1)}-1) - |\:\si_I\:|\,.
\ee

\begin{lem}
\label{lemWWo}
For $\,I\<\in\Il\,$, we have
\vvn.2>
\be
\Wo_{\?I}(\TT\:;\zz)\,=\,\lim_{h\to\infty}(-h)^{-d_I}\,W_I(\TT\:;\zz\:;h)\,.
\ee
\end{lem}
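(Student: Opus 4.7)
The plan is to analyse the leading behaviour of $W_I(\TT\:;\zz\:;h)$ as $h\to\infty$ by direct inspection of formula \eqref{hWI}. First I would observe that in $U_I(\TT\:;\zz\:;h)$ only two kinds of factors depend on $h$: the linear factors $(t^{(j)}_a\?-t^{(j+1)}_d\?-h)$ (indexed by $d$ with $i^{(j+1)}_d>i^{(j)}_a$), each of which behaves like $-h$, and the ratios $(t^{(j)}_b\?-t^{(j)}_a\?-h)/(t^{(j)}_b\?-t^{(j)}_a)$ for $b>a$, each behaving like $-h/(t^{(j)}_b\?-t^{(j)}_a)$. Factoring these $-h$ asymptotics out, the surviving pieces match term by term with those appearing in $\Uo_I(\TT\:;\zz)$. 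Setting
\[
E_I\,=\,\sum_{j=1}^{N-1}\binom{\la^{(j)}}{2}\,+\,\sum_{j=1}^{N-1}\,\sum_{a=1}^{\la^{(j)}}\bigl|\{d:i^{(j+1)}_d>i^{(j)}_a\}\bigr|,
\]
one then has $U_I(\TT\:;\zz\:;h)=(-h)^{E_I}\:\Uo_I(\TT\:;\zz)+O(h^{E_I-1})$ for fixed generic $\TT,\zz$. Since every summand of the symmetrisation in \eqref{hWI} contains the same number $E_I$ of $h$-dependent factors, linearity of the limit yields $W_I(\TT\:;\zz\:;h)=(-h)^{E_I}\,\Wo_{\?I}(\TT\:;\zz)+O(h^{E_I-1})$.

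The remaining task is then the combinatorial identity $E_I=d_I$. Writing $S_j=\bigcup_{k=1}^j I_k$ and using the decomposition
\[
\bigl|\{d:i^{(j+1)}_d>i^{(j)}_a\}\bigr|\,=\,(\la^{(j)}-a)\,+\,\bigl|\{m\in I_{j+1}:m>i^{(j)}_a\}\bigr|,
\]
summing over $a$ and $j$ rearranges $E_I$ into
\[
E_I\,=\,2\sum_j\binom{\la^{(j)}}{2}\,+\,\sum_j\bigl|\{(k,m):k\in S_j,\;m\in I_{j+1},\;k<m\}\bigr|.
\]
On the other hand, the identity $\la^{(j)}(\la^{(j+1)}-1)=2\binom{\la^{(j)}}{2}+\la^{(j)}\la_{j+1}$ combined with $\la^{(j)}\la_{j+1}=|\{(k,m):k\in S_j,\,m\in I_{j+1}\}|$ gives $d_I=2\sum_j\binom{\la^{(j)}}{2}+\sum_j\la^{(j)}\la_{j+1}-|\si_I|$, so the required identity reduces to
\[
|\si_I|\,=\,\sum_j\bigl|\{(k,m):k\in S_j,\;m\in I_{j+1},\;k>m\}\bigr|.
\]
The right-hand side counts pairs $(k,m)$ with $k\in I_i$ for some $i\le j$, $m\in I_{j+1}$, and $k>m$; after the relabelling $(k,m,i,j+1)\mapsto(b,a,j',i')$ this set becomes $\{(a,b):a\in I_{i'},\,b\in I_{j'},\,i'>j',\,a<b\}$, whose cardinality equals $|\si_I|$ by \eqref{siI}.

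The main obstacle is this last combinatorial identity; the asymptotic analysis of $U_I$ is routine, but matching the count $E_I$ to the inversion count $|\si_I|$ requires careful bookkeeping in terms of ordered pairs drawn from the blocks $I_1,\dots,I_N$. Once this identification is in hand the lemma follows immediately.
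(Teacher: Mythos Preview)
Your argument is correct. The asymptotic analysis of each summand in the symmetrisation is routine, and your count $E_I$ is indeed the degree in $h$ of every term; the combinatorial identity $E_I=d_I$ is verified correctly via the bijection with inversion pairs in \eqref{siI}.

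The paper's own proof is different in style: it proceeds by induction on $|\si_I|$, presumably using the three\:-term relation (Lemma~\ref{c3t}) to pass from $W_{I'}$ to $W_{s_{a,a+1}(I')}$ and tracking how the leading power of $-h$ shifts by one at each step. Your approach is more direct --- you read off the $h$\:-degree from the explicit product formula for $U_I$ and reduce the problem to a single combinatorial identity, rather than building it up recursively. The inductive route has the advantage that it simultaneously yields the limiting recurrence of Lemma~\ref{DlW} as a byproduct, whereas your direct count treats each $I$ in isolation; on the other hand, your method avoids having to control subleading terms in $h$ when the leading contributions in the three\:-term relation cancel (the case $|\si_{s_{a,a+1}(I)}|>|\si_I|$), which makes the induction slightly delicate to execute cleanly.
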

\begin{proof}
The statement follows from formulae \eqref{hWI}, \eqref{WoI} by induction
on the length of \,$\si_I\,$.
\end{proof}

\begin{example}
Let $N=2$, $n=2$, $\bla=(1,1)$, $I=(\{1\},\{2\})$, $J=(\{2\},\{1\})$. Then
\vvn.2>
\be
\Wo_{\?I}(\TT\:;\zz)\,=\,1\,,\qquad \Wo_{\!\<J}(\TT\:;\zz)\,=\,t^{(1)}_1\?-z_1\,.
\vv.4>
\ee
\end{example}

\begin{lem}
\label{Womax}
We have $\;\Wo_{\Imil}(\TT\:;\zz)=1\,$ and
\vvn-.3>
\beq
\label{WoIma}
\Wo_{\si_0(\Imil)}(\TT\:;\zz)\,=\,\prod_{j=1}^{N-1}\,\prod_{a=1}^{\la^{(j)}}\,
\prod_{b=\smash{\la^{(j)}}+1}^{\la^{(j+1)}}\!(t_a^{(j)}\?-z_{n-b+1})\,,
\vv.2>
\eeq
where $\,\si_0\<\in S_n\>$ is the longest permutation,
$\,\si_0(a)=n+1-a\,$, $\;a=1\lc n\,$.
\end{lem}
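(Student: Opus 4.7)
Both identities follow by direct substitution into definition \eqref{WoI} followed by a Jacobi--Vandermonde-type evaluation of the symmetrizations. For part one, with $i^{(j)}_a=a$, the expression $\Uo_{\Imil}(\TT;\zz)$ factorizes across levels as
\[
\Uo_{\Imil}(\TT;\zz) \,=\, \prod_{j=1}^{N-1}\frac{\prod_{a=1}^{\la^{(j)}}\prod_{c=1}^{a-1}(t^{(j)}_a-t^{(j+1)}_c)}{\prod_{1\le a<b\le \la^{(j)}}(t^{(j)}_b-t^{(j)}_a)}.
\]
At each level $j$ the numerator has the form $\prod_a p^{(j)}_a(t^{(j)}_a)$ with $p^{(j)}_a(x)=\prod_{c=1}^{a-1}(x-t^{(j+1)}_c)$ monic of degree $a-1$. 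Applying $\Sym_{\TT^{(j)}}$ turns the fraction into $\det(p^{(j)}_a(t^{(j)}_b))/\prod_{a<b}(t^{(j)}_b-t^{(j)}_a)$, and row reduction brings this to the Vandermonde ratio, equal to $1$. Since each level's symmetrization produces a constant independent of $\TT^{(j+1)}$, performing the $\Sym_{\TT^{(j)}}$'s sequentially from $j=1$ upward gives $\Wo_{\Imil}=1$.

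For part two, with $i^{(j)}_a=n-\la^{(j)}+a$, the level-$j$ numerator factor at position $a$ becomes $\prod_{c=1}^{a+\la_{j+1}-1}(t^{(j)}_a-t^{(j+1)}_c)$. I would write this as $Q^{(j)}(t^{(j)}_a)\cdot\tilde p^{(j)}_a(t^{(j)}_a)$, with common factor $Q^{(j)}(x)=\prod_{c=1}^{\la_{j+1}}(x-t^{(j+1)}_c)$ and cofactor $\tilde p^{(j)}_a(x)=\prod_{c=\la_{j+1}+1}^{a+\la_{j+1}-1}(x-t^{(j+1)}_c)$ monic of degree $a-1$. The same Jacobi determinant argument then shows that the isolated level-$j$ symmetrization yields $\prod_a Q^{(j)}(t^{(j)}_a)=\prod_a\prod_{c=1}^{\la_{j+1}}(t^{(j)}_a-t^{(j+1)}_c)$. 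Unlike part one, this depends on $\TT^{(j+1)}$, so the level-wise symmetrizations do not decouple.

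I would complete the proof by induction on $N$. The base case $N=2$ is the single-level identity just established, and it matches the claim under the reindexing $c=n-b+1$, $b=\la^{(1)}+1,\ldots,n$. For the inductive step, I would compute $\Sym_{\TT^{(N-1)}}$ of the product of the level-$(N-2)$ and level-$(N-1)$ factors of $\Uo_{\si_0(\Imil)}$ and show that it factorizes as the level-$(N-1)$ contribution $\prod_a\prod_{c=1}^{\la_N}(t^{(N-1)}_a-z_c)$ of the claim times an expression of the same form as $\Uo_{\si_0(\Imil)}$ for the reduced partition $\bla'=(\la_1,\ldots,\la_{N-2},\la_{N-1}+\la_N)$, at which point the inductive hypothesis applies. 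The main obstacle is this two-level merging identity: it is a generalized Jacobi determinant computation that tracks how the $z$-dependence entering through the top-level factor propagates through the symmetrization into the lower-level factors. The combinatorics $\{i^{(j)}_a\}=\{n-\la^{(j)}+1,\ldots,n\}$ being the top $\la^{(j)}$ integers is precisely what makes the indexing $z_{n-b+1}$ with $b=\la^{(j)}+1,\ldots,\la^{(j+1)}$ come out correctly.
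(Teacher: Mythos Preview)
Your argument for part one is correct and gives a genuine alternative to the paper's proof. The paper argues by induction on $N$: it observes that $\Wo_{\Imil}$, being symmetric in $\TT^{(N-1)}$ and of degree zero in each $t^{(N-1)}_a$, is constant in $\TT^{(N-1)}$, then evaluates at $\TT^{(N-1)}=(z_1,\ldots,z_{\la^{(N-1)}})$ to reduce to the $(N{-}1)$-case. Your level-by-level Vandermonde computation is more direct and bypasses the degree bookkeeping entirely.

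For part two, however, your proposed inductive step is wrong as stated. Take $N=3$, $\bla=(1,1,1)$. A direct computation gives
\[
\Sym_{\TT^{(2)}}\Uo_{\si_0(\Imil)}\,=\,(t^{(2)}_1-z_1)(t^{(2)}_2-z_1)\,(t^{(1)}_1-z_2)\,.
\]
The first two factors are your $g_{N-1}$, but the remaining factor $(t^{(1)}_1-z_2)$ is \emph{not} $\Uo_{\si_0'(\Imil')}$ for the merged partition $\bla'=(1,2)$ with the same $\zz$: that would be $(t^{(1)}_1-z_1)(t^{(1)}_1-z_2)$. So the two-level merging identity you set out to prove is false. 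The reduction that actually works drops the last part rather than merging it: one goes to $\bla''=(\la_1,\ldots,\la_{N-1})$ with $n''=\la^{(N-1)}$ and the shifted variables $\zz''=(z_{\la_N+1},\ldots,z_n)$; in the example this gives $(t^{(1)}_1-z''_1)=(t^{(1)}_1-z_2)$, matching the computation.

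The paper's proof of part two avoids any such symmetrization identity. It argues that the fully symmetrized $\Wo_{\si_0(\Imil)}$ has degree exactly $\la_N$ in each $t^{(N-1)}_a$ and is divisible by $g_{N-1}=\prod_a\prod_{c=1}^{\la_N}(t^{(N-1)}_a-z_c)$ (the divisibility you already noted). Since $g_{N-1}$ also has degree $\la_N$ in each variable, the quotient $r_N$ is independent of $\TT^{(N-1)}$; evaluating at $\TT^{(N-1)}=(z_{\la_N+1},\ldots,z_n)$ then identifies $r_N$ with $f_{N-1}$ for the dropped-part reduction above, and the induction closes. This degree-and-evaluate argument is considerably shorter than proving a two-level Jacobi identity, and does not run into the obstacle you flagged.
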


\vsk.1>
Let $\;\Upsl=\{\,i\ |\ \,\la_{i+1}\<\ne 0\,\}\subset\{\:1\lc N-1\:\}\,$.

\begin{lem}
\label{WImis}
For the transpositions $\;s_{\la^{(i)}\?,\,\la^{(i)}\<+1}\,,\,\;i\in\Upsl$,
we have
\beq
\label{Wos}
\Wo_{\?s_{\la^{(i)}\!,\>\la^{(i)}\<+1}(\Imil)}(\TT\:;\zz)\,=\,
\sum_{j=1}^{\la^{(i)}}\;(\:t^{(i)}_j\?-z_j)\,.
\eeq
\end{lem}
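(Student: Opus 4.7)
The plan is to compute $\Wo_I$ by direct level-by-level symmetrization, with the nontrivial steps handled by three bialternant (Jacobi--Trudi style) identities. Comparing \eqref{WoI} for $I = s_{\la^{(i)},\la^{(i)}+1}(\Imil)$ against $\Imil$, the index data $\{i^{(j)}_a\}$ agree except at the single slot $(j,a)=(i,\la^{(i)})$, where $i^{(i)}_{\la^{(i)}}$ jumps from $\la^{(i)}$ to $\la^{(i)}+1$. This introduces exactly one extra linear factor in the product, yielding
$$
\Uo_I(\TT;\zz)\,=\,\Uo_{\Imil}(\TT;\zz)\cdot(t^{(i)}_{\la^{(i)}}-t^{(i+1)}_{\la^{(i)}}),
$$
with the convention $t^{(N)}_a:=z_a$. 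Factor $\Uo_{\Imil}=\prod_{j=1}^{N-1}P_j(\TT^{(j)},\TT^{(j+1)})$ where $P_j=\prod_{a=1}^{\la^{(j)}}f_a(t^{(j)}_a)\big/V(\TT^{(j)})$ with $f_a(x)=\prod_{c=1}^{a-1}(x-t^{(j+1)}_c)$ monic of degree $a-1$ and $V$ the Vandermonde in $\TT^{(j)}$.

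The three identities, each proved by the Leibniz expansion $\Sym_{\TT^{(j)}}[\,\cdot\,P_j\,]=\det(g_a(t^{(j)}_b))\big/V(\TT^{(j)})$ followed by distinct-exponent analysis, are:
\textbf{(I)} $\Sym_{\TT^{(j)}}P_j=1$ (the bialternant of $\rho$, with $g_a=f_a$);
\textbf{(II)} $\Sym_{\TT^{(j)}}[t^{(j)}_c\cdot P_j]=t^{(j+1)}_c$ for $c<\la^{(j)}$, where $g_c=xf_c$ is promoted to degree $c$ and exactly two distinct-exponent assignments contribute, with subleading-coefficient contributions $-\sum_{b=1}^{c-1}t^{(j+1)}_b$ and $+\sum_{b=1}^{c}t^{(j+1)}_b$ combining to $t^{(j+1)}_c$;
\textbf{(III)} $\Sym_{\TT^{(j)}}[t^{(j)}_{\la^{(j)}}\cdot P_j]=\sum_{a=1}^{\la^{(j)}}t^{(j)}_a-\sum_{b=1}^{\la^{(j)}-1}t^{(j+1)}_b$, where the two distinct-exponent cases produce the Schur factor $s_{(1,0,\ldots,0)}(\TT^{(j)})=\sum_a t^{(j)}_a$ and the subleading coefficient $-\sum_{b<\la^{(j)}}t^{(j+1)}_b$.

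The final computation iterates these identities in the sequence $\Sym_{\TT^{(1)}}\cdots\Sym_{\TT^{(N-1)}}$. Since the extra factor is independent of $\TT^{(1)},\ldots,\TT^{(i-1)}$, identity (I) collapses $P_1,\ldots,P_{i-1}$ to $1$. At level $i$, combining (III) for $t^{(i)}_{\la^{(i)}}\cdot P_i$ with $(-t^{(i+1)}_{\la^{(i)}})\cdot$(I) for the remaining summand produces the factor $\sum_{a=1}^{\la^{(i)}}t^{(i)}_a-\sum_{b=1}^{\la^{(i)}}t^{(i+1)}_b$, paired against $P_{i+1}\cdots P_{N-1}$. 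For each subsequent $j=i+1,\ldots,N-1$ the expression has the form $\bigl[\sum_a t^{(i)}_a-\sum_{b=1}^{\la^{(i)}}t^{(j)}_b\bigr]\cdot P_j\cdots P_{N-1}$; identity (II) (applicable since $i\in\Upsl$ forces $\la^{(i)}<\la^{(j)}$ for every $j>i$) propagates each $t^{(j)}_b$ to $t^{(j+1)}_b$ while $P_j$ collapses to $1$. After the level $N-1$ symmetrization, $t^{(N)}_b=z_b$, giving $\Wo_I=\sum_{j=1}^{\la^{(i)}}(t^{(i)}_j-z_j)$. The main technical obstacle is verifying (II) and (III): the distinct-exponent constraint $j_a\le\deg g_a$ leaves only two admissible arrangements, whose signs of rearrangement and subleading coefficients $-e_1$ of the underlying $t^{(j+1)}$-variables must be balanced to give the claimed identity.
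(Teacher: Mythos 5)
Your argument is correct, but it takes a genuinely different route from the paper's. The paper proves Lemma \ref{WImis} by induction on $N$: for $i=N-1$ it observes that $\Wo_{s_{\la^{(N-1)}\!,\,\la^{(N-1)}+1}(\Imil)}$ is a symmetric affine function of $t^{(N-1)}_1\lc t^{(N-1)}_{\la^{(N-1)}}$ vanishing at $\TT^{(N-1)}=(z_1\lc z_{\la^{(N-1)}})$, hence a scalar multiple of $\sum_j(t^{(N-1)}_j-z_j)$, and fixes the scalar by one further evaluation using Lemma \ref{Womax}; for $i<N-1$ it notes the function has degree zero in each top-level variable, so it is constant in $\TT^{(N-1)}$, and the substitution $\TT^{(N-1)}=(z_1\lc z_{\la^{(N-1)}})$ reduces to the case $N-1$. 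You instead compute the iterated symmetrization directly: you correctly identify that passing from $\Imil$ to $s_{\la^{(i)}\!,\,\la^{(i)}+1}(\Imil)$ changes only the index $i^{(i)}_{\la^{(i)}}$ and hence multiplies $\Uo_{\Imil}$ by the single factor $(t^{(i)}_{\la^{(i)}}-t^{(i+1)}_{\la^{(i)}})$, and you push this factor through level by level with your three bialternant identities. I checked the distinct-exponent analysis behind (II) and (III) --- the forced identity assignment on rows below $c$, the two surviving arrangements with the sign from the row swap and the subleading coefficient $-e_1$ of the $t^{(j+1)}$-variables --- as well as the propagation $t^{(j)}_b\mapsto t^{(j+1)}_b$, including the point where $i\in\Upsl$ guarantees $\la^{(i)}<\la^{(j)}$ for all $j>i$ so that (II) rather than (III) applies at every subsequent level; all of this is sound. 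The paper's proof is shorter and leans on structural facts about the weight functions (degree bounds and vanishing) already in circulation; yours is self-contained and fully explicit, re-derives $\Wo_{\Imil}=1$ along the way, and the transfer identity $\Sym_{\TT^{(j)}}[\:t^{(j)}_c\,P_j\:]=t^{(j+1)}_c$ is a reusable computational tool that makes the mechanism behind the answer visible rather than inferring it from degree counting.
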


\noindent
Lemmas \ref{Womax} and \ref{WImis} are proved in Appendix \ref{AppB}

\vsk.4>
The following lemma describes the $\,h\to\infty\,$ limit of the three-term
relations of Lemma \ref{c3t} for functions $\,\Wo_{\?I}(\TT\:;\zz\:;h)\,$.

\begin{lem}
\label{DlW}
For any $\,I\in \Il\,$, $\,i=1\lc N-1\,$, and $\,a=1\lc n-1\,$, we have
\vvn.3>
\be
\Wo_{s_{a\<,a+1}(I)}(\TT\:;\zz)\,=\,
\frac{\Wo_{\?I}(\TT\:;\zz)-\Wo_{\?I}(\TT\:;z_1\lc z_{a+1},z_a\lc z_n)}{z_{a+1}\<-z_a}\;,
\vv.3>
\ee
if $\,|\:s_{\aa+1}\:\si_I\:|<|\si_I|\,$, and
$\;\Wo_{\?I}(\TT\:;z_1\lc z_{a+1},z_a\lc z_n)\,=\,\Wo_{\?I}(\TT\:;\zz)\,$, otherwise.
\end{lem}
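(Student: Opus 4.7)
The plan is to derive this three-term limit from Lemma \ref{c3t} by extracting the leading coefficient in $h$, using Lemma \ref{lemWWo}. The latter implies that $W_I(\TT;\zz;h)$ is a polynomial in $h$ of degree $d_I$ with top coefficient $(-1)^{d_I}\Wo_I(\TT;\zz)$. I would first rewrite Lemma \ref{c3t} in the equivalent form
\begin{equation*}
W_{s_{a,a+1}(I)}(\TT;\zz;h)\,=\,W_I(\TT;\zz';h)\,+\,\frac{h}{z_a-z_{a+1}}\bigl[W_I(\TT;\zz;h)-W_I(\TT;\zz';h)\bigr],
\end{equation*}
where $\zz'=(z_1,\ldots,z_{a+1},z_a,\ldots,z_n)$. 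For the adjacent pair $(a,a+1)$ one has $r_{a,a+1,I}=0$, so Lemma \ref{lemapp} splits the analysis into three cases depending on whether $(a,a+1)$ is $I$-disordered, $I$-flat, or $I$-ordered; the corresponding values of $d_{s_{a,a+1}(I)}$ are $d_I+1$, $d_I$, and $d_I-1$, respectively.

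Next I would verify that the hypothesis $|s_{a,a+1}\si_I|<|\si_I|$ corresponds exactly to the disordered case. By minimality, $\si_I$ is increasing on each block $\Imil_k$, so $\si_I^{-1}(a)>\si_I^{-1}(a+1)$ if and only if $a$ and $a+1$ lie in blocks $I_i$ and $I_j$ with $i>j$. Comparing the coefficients of $h^{d_I+1}$ on the two sides of the rewritten three-term relation, the right-hand side contributes
\begin{equation*}
\frac{(-1)^{d_I}}{z_a-z_{a+1}}\bigl[\Wo_I(\TT;\zz)-\Wo_I(\TT;\zz')\bigr].
\end{equation*}
In the disordered case, the left-hand side has top coefficient $(-1)^{d_I+1}\Wo_{s_{a,a+1}(I)}(\TT;\zz)$, and equating the two recovers the stated formula. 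In the flat and ordered cases, $d_{s_{a,a+1}(I)}\le d_I$, so the $h^{d_I+1}$-coefficient on the left-hand side vanishes, forcing $\Wo_I(\TT;\zz)=\Wo_I(\TT;\zz')$, which is precisely the claimed symmetry in $z_a,z_{a+1}$. (In the flat subcase, one can alternatively observe that $W_{s_{a,a+1}(I)}=W_I$ reduces the three-term relation directly to $(z_a-z_{a+1}-h)[W_I(\TT;\zz;h)-W_I(\TT;\zz';h)]=0$, giving the same symmetry for $W_I$ itself.)

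The main obstacle is the degree bookkeeping: one must ensure that $W_I$ has degree in $h$ exactly $d_I$ and not lower, which follows from Lemma \ref{lemWWo} together with the non-vanishing of $\Wo_I$ visible from its definition \eqref{WoI}. The combinatorial identification of the length condition with the disordered configuration is equally elementary but should be verified explicitly via the block-wise monotonicity of the minimal representative $\si_I$.
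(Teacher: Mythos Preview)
Your proposal is correct and follows essentially the same approach as the paper: the paper's proof simply notes that $\si_{s_{a,a+1}(I)}=s_{a,a+1}\si_I$ unless $s_{a,a+1}(I)=I$, and then says the statement follows from Lemmas \ref{lemWWo} and \ref{c3t}; you have spelled out the leading-coefficient extraction that this entails.

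One small remark: your ``main obstacle'' about needing the degree of $W_I$ in $h$ to be exactly $d_I$ is unnecessary. The limit in Lemma \ref{lemWWo} already tells you that $W_I$ has degree at most $d_I$ in $h$, with $h^{d_I}$-coefficient $(-1)^{d_I}\Wo_I$; that is all the coefficient-matching argument uses. Whether $\Wo_I$ vanishes or not is irrelevant to the comparison.
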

\begin{proof}
Notice that $\,\si_{s_{a\<,a+1}(I)}=\:s_{\aa+1}\:\si_I\,$ unless
\vvn.1>
$\,s_{\aa+1}(I)=I\,$. Now the statement follows from Lemmas \ref{lemWWo}
and \ref{c3t}.
\end{proof}

\begin{rem}
Define the operators \,$\Dl_1\lc\Dl_{n-1}$ acting on functions of $\,\zzz\,$:
\vvn.3>
\be
\Dl_{\:a}\:f(\zz)\,=\,\frac{f(\zz)-f(z_1\lc z_{a+1},z_a\lc z_n)}{z_a\<-z_{a+1}}\;.
\kern-1.6em
\ee
They satisfy the nil\:-\:Coxeter relations,
\vvn.3>
\be
\Dl_{\:a}^2=\:0\,,\kern1.8em
\Dl_{\:a}\>\Dl_{\:b}=\:\Dl_{\:b}\>\Dl_{\:a}\,,\quad |\:a-b\:|>1\,,
\kern 1.8em\Dl_{\:a}\>\Dl_{a+1}\>\Dl_{\:a}=\:\Dl_{a+1}\>\Dl_{\:a}\>\Dl_{a+1}\,,
\kern-4em
\vv.2>
\ee
for any $\,a,b\,$.

\vsk.2>
Denote $\,\yy=(\yyy)\,$. Let the functions $\,\Wto_{\?\!I}(\yy;\zz)\,$
be obtained from $\,\Wo_{\?I}(\TT\:;\zz)\,$ by the substitution
$\,t^{(i)}_j\!=y_j\,$, $\;i=1\lc N-1\,$, $\,j=1\lc\la^{(i)}\>$.
By Lemmas \ref{Womax}, \ref{DlW}, \ref{sibla}, and Proposition \ref{Sgxyx},
\vv.06>
the functions $\,\Wto_{\?\!I}(\yy;\zz)\,$ coincide with the $A\:$-type double
Schubert polynomials $\,\:\Sg_\si(\yy\:;\zz)\,$,
\beq
\label{WdS}
\Wto_{\?\!I}(\yy;\zz)\,=\,\Sg_{\si_I}\<(\yy\:;\zz)\,.\kern-1em
\eeq
\end{rem}

For $\,I\<\in\Il\,$, define
\vvn-.1>
\beq
\label{WocI}
\WWo_{\?I}(\TT\:;\zz)\,=\,\Wo_{\?\si_0(I)}(\TT\:;z_n\lc z_1)\,,
\vv.4>
\eeq
where $\,\si_0\,$ is the longest permutation. Recall $\,R_\bla(\zz)\,$,
see \eqref{RQ}\:, and $\,\zz_\si\<=(z_{\si(1)}\lc z_{\si(n)})\,$.

\vsk.3>
Lemmas \ref{WIzo} and \ref{lemorto} below follow by Lemma \ref{lemWWo} from
Lemma \ref{WIz}, Proposition \ref{lemorth}, and Corollary \ref{cororth},
respectively. Lemmas \ref{WIzo} and \ref{lemorto} are equivalent to the
vanishing and orthogonality properties of the double Schubert polynomials.

\begin{lem}
\label{WIzo}
For $\,I,J\<\in\Il\,$, we have $\;\Wo_{\!\<J}(\Si_I\:;\zz\:;h)=0\>$ unless $\,I=J$ or
$\,|\:\si_I\:|>|\:\si_J\:|\,$, and
\vvn.2>
\beq
\label{WIIo}
\Wo_{\?I}(\Si_I\:;\zz)\,=\,\prod_{j=1}^{N-1}\>\prod_{k=j+1}^N\,
\prod_{a\in I_j}\>\prod_{\satop{b\in I_k}{b<a}}\,(z_a\<-z_b)\,.\kern-1.4em
\eeq
\end{lem}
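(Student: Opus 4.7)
The plan is to deduce Lemma \ref{WIzo} directly from Lemma \ref{WIz} by means of the limit in Lemma \ref{lemWWo}. Specializing $\TT=\Si_I$ in that lemma gives
\[
\Wo_{\?J}(\Si_I\:;\zz)\,=\,\lim_{h\to\infty}\:(\<-\:h)^{-d_J}\,W_J(\Si_I\:;\zz\:;h)\,.
\]
This makes the vanishing half of the claim immediate: Lemma \ref{WIz} asserts that $W_J(\Si_I\:;\zz\:;h)$ vanishes identically when $I\ne J$ and $|\si_I|\le|\si_J|$, so the limit is zero in the same range.

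For the explicit evaluation \eqref{WIIo} I would substitute the formula of Lemma \ref{WIz} for $W_I(\Si_I\:;\zz\:;h)$ and extract the leading term as $h\to\infty$. Each factor $(t_a^{(i)}-t_b^{(i)}-h)$ appearing in $c_\bla(\Si_I\:;h)$, and each factor $(z_a-z_b-h)$ in the product over $a\in I_j,\,b\in I_k,\,j<k,\,b>a$, contributes $-h$ to the leading coefficient; the $h$-independent factors $(z_a-z_b)$ with $b<a$ survive into the limit. The total number of $-h$ factors is
\[
\sum_{i=1}^{N-1}\la^{(i)}(\la^{(i)}-1)\,+\,\bigl(\la_{\{2\}}-|\si_I|\bigr)\,,
\]
since the $I$-ordered cross-block pairs $(a,b)$ with $a<b$ number $\la_{\{2\}}$ minus the $I$-disordered ones, which by \eqref{siI} number $|\si_I|$.

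The only genuine calculation is to verify that this exponent equals $d_I$. Using $\la^{(j+1)}=\la^{(j)}+\la_{j+1}$ one rewrites $\la^{(j)}(\la^{(j+1)}-1)=\la^{(j)}(\la^{(j)}-1)+\la^{(j)}\la_{j+1}$, and then $\sum_{j=1}^{N-1}\la^{(j)}\la_{j+1}=\la_{\{2\}}$; hence $d_I=\sum_{i=1}^{N-1}\la^{(i)}(\la^{(i)}-1)+\la_{\{2\}}-|\si_I|$, as required. Dividing through by $(\<-\:h)^{d_I}$ and passing to the limit kills all lower-order terms, leaving the $h$-independent product $\prod_{j=1}^{N-1}\prod_{k=j+1}^N\prod_{a\in I_j}\prod_{\satop{b\in I_k}{b<a}}(z_a-z_b)$, which is \eqref{WIIo}. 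I expect this degree bookkeeping to be the only point that needs explicit verification; no ingredient beyond Lemmas \ref{WIz} and \ref{lemWWo} is required.
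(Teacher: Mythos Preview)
Your proof is correct and follows exactly the route the paper indicates: the text states just before Lemma~\ref{lemorto} that Lemma~\ref{WIzo} follows by Lemma~\ref{lemWWo} from Lemma~\ref{WIz}, without spelling out any details. Your degree count verifying that the number of $(-h)$ factors in \eqref{WII} equals $d_I$ is the natural way to carry this out, and it is accurate.
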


\begin{lem}
\label{lemorto}
The functions $\,\Wo_{\?I}(\TT\:;\zz)\:$ and $\;\WWo_{\!\<J}(\TT\:;\zz)\:$
are biorthogonal\:,
\vvn.5>
\beq
\label{ortho}
\sum_{I\in\Il}\,\frac{\Wo_{\!\<J}(\Si_I\:;\zz)\,\WWo_{\!\?K}(\Si_I\:;\zz)}
{R_\bla(\zz_{\si_I}\<)}\,=\,\dl_{\JK}\,,\kern1.8em
\sum_{I\in\Il}\,\Wo_{\?I}(\Si_J\:;\zz)\,\WWo_{\!I}(\Si_K;\zz)\,=\,
\dl_{\JK}\,R_\bla(\zz_{\si_J}\<)\,.\kern-.4em
\vv.6>
\eeq
\end{lem}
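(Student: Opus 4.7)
The plan is to derive both biorthogonality relations of Lemma~\ref{lemorto} by taking the $h\to\infty$ limit of Proposition~\ref{lemorth} and Corollary~\ref{cororth} respectively, after renormalizing by suitable powers of $(-h)$. First I would record the leading asymptotics as $h\to\infty$ of each factor that appears. By Lemma~\ref{lemWWo} evaluated at $\TT=\Si_I$, one has $(-h)^{-d_J}W_J(\Si_I;\zz;h)\to\Wo_J(\Si_I;\zz)$, and since $\WW_K$ and $\WWo_K$ are obtained from $W_{\si_0(K)}$ and $\Wo_{\si_0(K)}$ by the same substitution $\zz\mapsto(z_n,\ldots,z_1)$, the same lemma gives $(-h)^{-d_{\si_0(K)}}\WW_K(\Si_I;\zz;h)\to\WWo_K(\Si_I;\zz)$. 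From the product formulas \eqref{cla} and \eqref{RQ}, each linear factor of $c_\bla$ or $Q_\bla$ is $-h$ plus lower order, so $(-h)^{-C}c_\bla(\Si_I;h)\to 1$ and $(-h)^{-D}Q_\bla(\zz_{\si_I};h)\to 1$ with $C=\sum_i\la^{(i)}(\la^{(i)}-1)$ and $D=\sum_i\la^{(i)}\la_{i+1}=\sum_{i<j}\la_i\la_j$. Expanding $\la^{(j+1)}=\la^{(j)}+\la_{j+1}$ in the definition of $d_J$ yields the clean identity $d_J=C+D-|\si_J|$.

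For the first biorthogonality, I would multiply both sides of Proposition~\ref{lemorth} by $(-h)^{|\si_J|+|\si_{\si_0(K)}|-D}$ and let $h\to\infty$. The asymptotics above convert the summand indexed by $I$ on the left-hand side into $\Wo_J(\Si_I;\zz)\,\WWo_K(\Si_I;\zz)/R_\bla(\zz_{\si_I})$, so the left-hand side tends to the sum appearing in the left column of \eqref{ortho}. The right-hand side becomes $\dl_{\JK}\,(-h)^{|\si_J|+|\si_{\si_0(K)}|-D}$, and the key combinatorial input is the identity $|\si_J|+|\si_{\si_0(J)}|=D$. This reflects Poincar\'e duality for the parabolic quotient $S_n/(S_{\la_1}\<\lsym\times S_{\la_N})$: if $w$ denotes the longest element of the stabilizer of $\Imil$, the map $\si\mapsto\si_0\si w$ is a length-complementing involution on minimum-length coset representatives that sends $\si_J$ to $\si_{\si_0(J)}$. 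Granted this identity, the right-hand side limit is $1$ when $J=K$ and $0$ when $J\ne K$, that is, $\dl_{\JK}$ in both cases.

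For the second biorthogonality, I would perform the analogous renormalization on Corollary~\ref{cororth}. Each summand on the left has leading behavior $(-h)^{d_I+d_{\si_0(I)}}\Wo_I(\Si_J;\zz)\,\WWo_I(\Si_K;\zz)$, and the same combinatorial identity forces $d_I+d_{\si_0(I)}=2C+D$ independently of $I$, so dividing the identity of Corollary~\ref{cororth} by $(-h)^{2C+D}$ and letting $h\to\infty$ produces the right column of \eqref{ortho}. The main obstacle in this program is the combinatorial identity $|\si_J|+|\si_{\si_0(J)}|=\sum_{i<j}\la_i\la_j$; everything else is asymptotic bookkeeping. Lemma~\ref{WIz} enters tacitly to guarantee that whenever $\Wo_J(\Si_I)$ vanishes, the corresponding $W_J(\Si_I;\zz;h)$ either vanishes identically or has degree in $h$ strictly below $d_J$, so no hidden subleading contribution survives the renormalization.
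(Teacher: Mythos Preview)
Your proposal is correct and follows the same approach as the paper: the paper simply states that Lemma~\ref{lemorto} follows by Lemma~\ref{lemWWo} from Proposition~\ref{lemorth} and Corollary~\ref{cororth}, and you have carried out exactly this $h\to\infty$ limit in detail, including the key combinatorial identity $|\si_J|+|\si_{\si_0(J)}|=\sum_{i<j}\la_i\la_j$ that makes the powers of $(-h)$ match up. Your final remark invoking Lemma~\ref{WIz} is unnecessary, since Lemma~\ref{lemWWo} already asserts convergence of $(-h)^{-d_I}W_I$ to $\Wo_I$ regardless of whether the limit vanishes, and the sum over $\Il$ is finite.
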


\subsection{Limiting master function}
\label{limmaster}
Define the {\it limiting master function\/}:
\vvn.4>
\begin{align}
\label{Phio}
\kern.4em\Pho_\bla(\TT\:;\zz\:;\pp\:;\ka)\,&{}=\,
(\:\ka^{\>\la_N\<-\:n}\:p_N)^{\>\sum_{a=1}^nz_a\</\<\ka}\,\>
\prod_{i=1}^{N-1}\,\bigl(\:\ka^{\>\la_i+\la_{i+1}}\>p_i/p_{i+1}\:\bigr)
^{\sum_{a=1}^{\la^{(i)}}\:t^{(i)}_a\!\</\ka}\times{}
\\[2pt]
\notag
&{}\times\>\prod_{i=1}^{N-1}\,\prod_{a=1}^{\la^{(i)}}\,\biggl(\,
\prod_{\satop{b=1}{b\ne a}}^{\la^{(i)}}\,
\frac1{\Gm\bigl(\<(t_a^{(i)}\?-t_b^{(i)})/\ka\bigr)}\,\prod_{c=1}^{\la^{(i+1)}}
\<\Gm\bigl(\<(t_a^{(i)}-t_c^{(i+1)})/\ka\bigr)\<\biggr)\,,
\\[-19pt]
\notag
\end{align}
where $\,\la^{(N)}\?=n\,$ and $\,t^{(N)}_a\?=z_a\,$, $\;a=1\lc n\,$.
The last formula can be also written as
\vvn.2>
\begin{align*}
\Pho_\bla(\TT\:;\zz\:;\pp\:;\ka)\,&{}=\,
(\:\ka^{\>\la_N\<-\:n}\:p_N)^{\>\sum_{a=1}^nz_a\</\<\ka}\,\>
\prod_{i=1}^{N-1}\,\bigl(\:\ka^{\>\la_i+\la_{i+1}}\>p_i/p_{i+1}\:\bigr)
^{\sum_{a=1}^{\la^{(i)}}\:t^{(i)}_a\!\</\ka}\times{}
\\[4pt]
&{}\times\>\prod_{i=1}^{N-1}\,\prod_{a=1}^{\la^{(i)}}\,\biggl(\,
\prod_{b=1}^{a-1}\,\frac{(t_a^{(i)}\?-t_b^{(i)})\:
\sin\bigl(\pi\>(t_b^{(i)}\?-t_a^{(i)})/\ka\bigr)}{\pi\ka}\,
\prod_{c=1}^{\la^{(i+1)}}\<\Gm\bigl(\<(t_a^{(i)}-t_c^{(i+1)})/\ka\bigr)
\<\biggr)\,,
\\[-14pt]
\end{align*}
Denote by $\,\Pht_\bla(\TT\:;\zz\:;h;\pp\:;\ka)\,$ the function obtained from
\vvn.1>
the master function $\,\Phi_\bla(\TT\:;\zz\:;h;\qq\:;\ka)\,$ by the substitution
\vvn-.1>
\beq
\label{qp}
q_i\>=\,p_i\,h^{\>\sum_{j=i+1}^N\la_j\,-\,\sum_{j=1}^{i-1}\:\la_j}\,
e^{\:\pii\,(\:\la_i\<-\:n)}\>,\qquad i=1\lc N\>,\kern-2em
\vv.2>
\eeq
cf.~\eqref{q->tqt}\:.

\vsk.2>
Recall $\,\:\la\+1\<=\sum_{i=1}^{N-1}\la^{(i)}$,
$\,\la\+2=\,\sum_{i=1}^{N-1}\,\bigl(\la^{(i)}\bigr)^2$,
$\,\la_{\{2\}}=\sum_{1\le i<j\le N}\>\la_i\>\la_j\,$.

\begin{lem}
\label{lemPho}
For $\,|\<\arg\:(h/\ka)\:|<\pi$, \,we have
\be
\Pho_\bla(\TT\:;\zz\:;\pp\:;\ka)\,=\,
\lim_{h\to\infty}\,(\<-\:h)^{\:\la\+2\<-\:\la\+1}
\bigl(\:\Gm(h/\ka)\bigr)^{\?-\:\la\+1\<-\:\la_{\{2\}}}\,\:
\Pht_\bla(\TT\:;\zz\:;h;\pp\:;\ka)\,.\kern-1.6em
\vv.2>
\ee
\end{lem}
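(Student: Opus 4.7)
The plan is to substitute \eqref{qp} into $\Phi_\bla$, apply Stirling's asymptotic to each Gamma function containing $h$, and verify by power counting that the limit reproduces $\Pho_\bla$.

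First, after substituting $q_i\mapsto p_i\,h^{\sum_{j>i}\la_j-\sum_{j<i}\la_j}\,e^{\:\pii\,(\la_i-n)}$, the phase factors $e^{\pii\,(\cdot)}$ appearing in the outer prefactors of $\Phi_\bla$ cancel exactly against those produced by the substitution, leaving
\[
\bigl(p_N\,h^{\la_N-n}\bigr)^{\sum_a z_a/\ka}\quad\text{and}\quad
\bigl(p_i/p_{i+1}\bigr)^{\sum_a t^{(i)}_a/\ka}\,h^{(\la_i+\la_{i+1})\sum_a t^{(i)}_a/\ka}
\]
as the outer factors. These explicit $h$-powers will combine with Stirling $(h/\ka)$-powers coming from the interior to produce exactly the $\ka$-powers displayed in $\Pho_\bla$.

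For the interior, the hypothesis $|\arg(h/\ka)|<\pi$ validates the uniform (on compact $x$-sets) asymptotic $\Gm((h+x)/\ka)/\Gm(h/\ka)=(h/\ka)^{x/\ka}\bigl(1+o(1)\bigr)$, so $\pho(t,h,\ka)\sim\Gm(h/\ka)\,\Gm(t/\ka)\,(h/\ka)^{-t/\ka}$ and $(t-h)^{-1}\sim(-h)^{-1}$. Inserting these into $\Pht_\bla$ produces three independent asymptotic contributions: integer powers of $(-h)$ from the algebraic factors $(t^{(i)}_a-t^{(i)}_b-h)^{-1}$, integer powers of $\Gm(h/\ka)$ from the Stirling ratios, and fractional powers of $h/\ka$ from the Stirling shifts.

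The count of $(-h)$ is $-\sum_i\la^{(i)}(\la^{(i)}-1)=\la\+1-\la\+2$, which cancels the prefactor $(-h)^{\la\+2-\la\+1}$ of the claim (no sign ambiguity arises because $\la^{(i)}(\la^{(i)}-1)$ is always even). The count of $\Gm(h/\ka)$ is $\sum_i\la^{(i)}(\la^{(i+1)}-\la^{(i)}+1)=\sum_i\la^{(i)}(\la_{i+1}+1)$, which by $\sum_i\la^{(i)}\la_{i+1}=\sum_{j<k}\la_j\la_k=\la_{\{2\}}$ equals $\la\+1+\la_{\{2\}}$, matching the prefactor $\Gm(h/\ka)^{-\la\+1-\la_{\{2\}}}$. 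The Stirling shifts give, at level $i$, a factor $(h/\ka)^{(\la^{(i)}\sum_c t^{(i+1)}_c-\la^{(i+1)}\sum_a t^{(i)}_a)/\ka}$ (the $b\ne a$ contribution vanishes by antisymmetry); combining with the outer $h$-powers above and using $\la^{(i+1)}-\la^{(i-1)}=\la_i+\la_{i+1}$, the explicit $h$-powers cancel variable by variable, while the residual $\ka$-powers reproduce the exponents in $\Pho_\bla$. Finally, the $\Gm(t/\ka)$ pieces of $\pho$ reassemble directly into the Gamma-product factor of $\Pho_\bla$.

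The main obstacle is precisely this three-way bookkeeping in the last step: one must confirm that the three simultaneous counts (of $(-h)$, of $\Gm(h/\ka)$, and of the Stirling shifts) each align exactly with the prescribed prefactors. The underlying combinatorial identities are routine but must be verified level by level, and the uniformity of Stirling's formula on compacta in $\TT,\zz$ avoiding the poles of $\Pho_\bla$ is standard.
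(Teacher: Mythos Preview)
Your proof is correct and follows exactly the approach the paper indicates: the paper's own proof is a single sentence stating that the lemma follows from formulae \eqref{Phi} and \eqref{Phio} by Stirling's formula \eqref{Stir}, and you have filled in the bookkeeping that this sentence leaves implicit. Your three-way count of $(-h)$-powers, $\Gm(h/\ka)$-factors, and Stirling shifts is accurate, and the identity $\la^{(j+1)}-\la^{(j-1)}=\la_j+\la_{j+1}$ is precisely what makes the variable $h$-powers cancel level by level so that only the $\ka$-powers of $\Pho_\bla$ survive.
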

\begin{proof}
The statement follows from formulae \eqref{Phi}\:, \eqref{Phio}
by Stirling's formula
\vvn.1>
\beq
\label{Stir}
\frac{\Gm(\al+h/\ka)}{\Gm(\bt+h/\ka)}\;\sim\,(h/\ka)^{\>\al-\bt}\>,
\qquad h\to\infty\,,\quad |\<\arg\:(h/\ka)\:|<\pi\,.\kern-2em
\vv-1.5>
\eeq
\vv.9>
\end{proof}

\subsection{Solutions of the limiting dynamical and \qKZ/ equations}
\label{sec 5.4}
For a polynomial $\,f(\TT\:;\zz)\,$, define the Jackson integral
\vvn.2>
\beq
\label{MSio}
\Mc_J(\Pho_\bla f)(\zz\:;\pp\:;\ka)\,=\!
\sum_{\lb\in\Z^{\la\+1}\!\!\!\!}\Res_{\>\TT\>=\>\Si_J-\:\lb\ka\>}
\bigl(\Pho_\bla(\TT\:;\zz\:;\pp\:;\ka)\>f(\TT\:;\zz)\bigr)\,,\qquad J\<\in\Il\,.
\kern-1.3em
\vv.1>
\eeq
Notice that the master function $\,\Pho_\bla(\TT\:;\zz\:;\pp\:;\ka)\,$
has only simple poles, and
\vvn.4>
\be
\Res_{\>\TT\>=\>\Si_J-\:\lb\ka\>}
\bigl(\Pho_\bla(\TT\:;\zz\:;\pp\:;\ka)\>f(\TT\:;\zz)\bigr)\,=\,
f(\Si_J\<-\lb\ka\:;\zz)\,
\Res_{\>\TT\>=\>\Si_J-\:\lb\ka\>}\Pho_\bla(\TT\:;\zz\:;\pp\:;\ka)\,.
\kern-1em
\vv.4>
\ee
A closed expression for the residue
$\,\Res_{\>\TT\>=\>\Si_J-\:\lb\ka\>}\Pho_\bla(\TT\:;\zz\:;\pp\:;\ka)\,$
is given by Lemma \ref{ResPho} below.

\vsk.3>
Set
\vvn-.7>
\beq
\label{Mlao}
\Mo_\bla(\zz\:;\ka)\,=\,\pi^{-\la_{\{2\}}}\>
\prod_{i=1}^{N-1}\>\prod_{a=1}^{\la^{(i)}}\,
\prod_{b=\la^{(i)}+1}^{\la^{(i+1)}}\!
\sin\:\bigl(\pi\>(z_a\?-z_b)/\ka\bigr)\,,\kern-2em
\vv-.5>
\eeq
cf.~\eqref{Mla}\:, and
\vvn-.2>
\beq
\label{Ato}
\Ao\<(\TT\:;\zz\:;\ka)\,=\,\prod_{i=1}^{N-1}\,
\prod_{a=1}^{\la^{(i)}}\;\biggl(\,\prod_{\satop{b=1}{b\ne a}}^{\la^{(i)}}
\>\Gm\bigl(1+(t^{(i)}_b\!-t^{(i)}_a)/\ka\bigr)
\prod_{c=1}^{\la^{(i+1)}}\:
\frac1{\Gm\bigl(1+(t^{(i+1)}_c\!-t^{(i)}_a)/\ka\bigr)}\,\biggr)\>.
\eeq

\begin{lem}
\label{ResPho}
If $\;\lb\not\in\<\Z_{\ge0}^{\la\+1}\?$, then
$\;\Res_{\>\TT\>=\>\Si_J-\:\lb\ka\>}\Pho_\bla(\TT\:;\zz\:;\pp\:;\ka)=0\,$.
For $\;\lb\in\<\Z_{\ge0}^{\la\+1}\?$,
\vvn.3>
\begin{align}
\label{ResPholb}
& \Res_{\>\TT\>=\>\Si_J-\:\lb\ka}\Pho_\bla(\TT\:;\zz\:;\pp\:;\ka)\,=\,
\frac{\Ao_\bla(\Si_J\<-\lb\ka\:;\zz\:;\ka)}{\Mo_\bla(\zz_{\si_J}\:;\ka)}\,
\times{}
\\[4pt]
\notag
&{}\,\times\,\ka^{\:\la\+1}\>
\prod_{i=1}^N\,\bigl(\:\ka^{\>\sum_{j=i+1}^N\la_j\,-\,\sum_{j=1}^{i-1}\:\la_j}
\>p_i\:\bigr)^{\>\sum_{a\in J_i}\?z_a\</\<\ka}\,
\prod_{i=1}^{N-1}\,\bigl(\<(-\:\ka)^{-\la_i-\la_{i+1}}\>
p_{i+1}/p_i\:\bigr)^{\>\sum_{a=1}^{\la^{(i)}}l_a^{(i)}}\!.
\\[-19pt]
\notag
\end{align}
In particular,
\vvn.3>
\begin{align}
\label{ResPhoi}
& \Res_{\>\TT\>=\>\Si_J}\?\Pho_\bla(\TT\:;\zz\:;\pp\:;\ka)\,={}
\\[4pt]
\notag
&{}=\,\ka^{\:\la\+1}\>\prod_{i=1}^N\,\bigl(\:
\ka^{\>\sum_{j=i+1}^N\la_j\,-\,\sum_{j=1}^{i-1}\:\la_j}\>
p_i\:\bigr)^{\>\sum_{a\in J_i}\?z_a/\ka}\;
\prod_{i=1}^{N-1}\prod_{j=i+1}^N\,\:\prod_{a\in J_i}\,
\prod_{b\in J_j}\,\Ga\bigl(\<(z_a\<-z_b)/\ka\bigr)\,.\kern-.06em
\end{align}
\end{lem}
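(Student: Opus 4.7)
My plan is to deduce the result from the corresponding finite-$h$ statement Lemma \ref{ResPh} by passing to the limit $h\to\infty$ and tracking asymptotics via Stirling's formula \eqref{Stir}, so that no direct iterated residue computation is needed.

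By Lemma \ref{lemPho}, letting $\Pht_\bla$ be the substitution of \eqref{qp} into $\Phi_\bla$, one has $\Pho_\bla=\lim_{h\to\infty}(-h)^{\la\+2-\la\+1}\bigl(\Gm(h/\ka)\bigr)^{-\la\+1-\la_{\{2\}}}\Pht_\bla$ pointwise in $\TT$. The $h$-dependent factors in $\Pht_\bla$ that disappear in this limit, namely the Gamma factors with arguments of the form $(h-t_a^{(i)}+t_b^{(\ast)})/\ka$ and the rational factors $(t_a^{(i)}-t_b^{(i)}-h)^{-1}$, have no poles or zeros near $\TT=\Si_J-\lb\ka$ for sufficiently large $h$ and generic $\zz$. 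Hence the residue and the limit commute, and the vanishing claim for $\lb\notin\Z_{\ge0}^{\:\la\+1}$ is immediate: since $\Zc_\bla\subset\Z_{\ge0}^{\:\la\+1}$, any such $\lb$ also lies outside $\Zc_\bla$, and Lemma \ref{ResPh} gives $\Res\Pht_\bla=0$ identically in $h$.

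For $\lb\in\Z_{\ge0}^{\:\la\+1}$, substituting \eqref{qp} into \eqref{ResPhlb} expresses the finite-$h$ residue through $A_\bla/(c_\bla M_\bla)$ together with the transformed prefactors $\prod_i q_i^{\sum_{a\in J_i} z_a/\ka}$ and $\prod_i(q_{i+1}/q_i)^{\sum_a l_a^{(i)}}$. Every Gamma ratio appearing in $A_\bla$, and every Gamma ratio obtained after rewriting the polynomial $c_\bla$ as a product of shifted Gamma quotients, has the form $\Gm(\al+h/\ka)/\Gm(\bt+h/\ka)$ with $\al,\bt$ independent of $h$; by \eqref{Stir} each such ratio converges to $(h/\ka)^{\al-\bt}(1+o(1))$. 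Collecting these asymptotics turns $A_\bla/c_\bla$ into $\Ao_\bla$ as in \eqref{Ato}, modulo a single net power of $h/\ka$. That leftover $h$-power, together with the $h$-exponents in $q_i^{\sum z_a/\ka}$ and $(q_{i+1}/q_i)^{\sum_a l_a^{(i)}}$ coming from \eqref{qp} and with the outer normalization $(-h)^{\la\+2-\la\+1}(\Gm(h/\ka))^{-\la\+1-\la_{\{2\}}}$ (itself expanded by Stirling), must balance to reproduce the finite $\ka^{\la\+1}$ prefactor and the claimed $\ka$-powers in the exponents of $p_i$ and $p_{i+1}/p_i$ in \eqref{ResPholb}.

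The last piece is the factor $1/M_\bla(\zz_{\si_J};\ka)$ from \eqref{Mla}, which carries sines together with phases $e^{-i\pi(z_a+z_b)/\ka}$; these phases cancel against the phases $e^{i\pi(\la_i-n)\sum z_a/\ka}$ produced by the substitution \eqref{qp}, leaving precisely $1/\Mo_\bla(\zz_{\si_J};\ka)$ as in \eqref{Mlao}. Formula \eqref{ResPhoi} then follows as the special case $\lb=0$. The real obstacle is purely bookkeeping: one must carefully track all powers of $h,\ka,\pi$ and all phase factors through the Stirling asymptotics and the reflection formula; no analytic input beyond Lemma \ref{ResPh} and \eqref{Stir} is required.
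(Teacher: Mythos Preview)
Your approach is correct but genuinely different from the paper's. In the paper, Lemma~\ref{ResPho} is stated without an explicit proof; like Lemma~\ref{ResPh}, it is meant as a direct evaluation of the iterated residue, which for $\Pho_\bla$ is actually simpler than for $\Phi_\bla$ since only single Gamma factors $\Gm\bigl((t^{(i)}_a-t^{(i+1)}_c)/\ka\bigr)$ carry poles and the answer drops out after one use of the reflection formula. The paper then uses \emph{both} independently obtained formulae \eqref{ResPhlb} and \eqref{ResPholb} to prove Lemma~\ref{lemresh}. Your route reverses this logic: you take Lemma~\ref{ResPh} as input and obtain Lemma~\ref{ResPho} as the $h\to\infty$ limit, which effectively proves Lemma~\ref{lemresh} along the way. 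The interchange of limit and iterated residue is legitimate for the reason you give---for fixed $\lb$ and generic $\zz$ the $h$-dependent factors are holomorphic and nonvanishing on a fixed polyannulus around $\Si_J-\lb\ka$, so the residue is a contour integral on which convergence is uniform. Two minor remarks: the detour through $\Zc_\bla$ in your vanishing argument is unnecessary, since Lemma~\ref{ResPh} already gives vanishing for $\lb\notin\Z_{\ge0}^{\la\+1}$ directly; and the direct computation avoids the Stirling bookkeeping entirely, which is why the paper prefers it.
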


\vsk.1>
By Lemma \ref{ResPho}, the actual summation in formula \eqref{MSio}
is only over the positive cone of the lattice,
\beq
\label{MSio<}
\Mc_J(\Phi_\bla\:f)(\zz\:;\pp\:;\ka)\,=\>
\sum_{\lb\in\Z_{\ge0}^{\:\la\+1}\!}f(\Si_J\<-\lb\ka\:;\zz\:;\pp)\,
\Res_{\>\TT\>=\>\Si_J-\:\lb\ka}\Pho_\bla(\TT\:;\zz\:;\pp\:;\ka)\,.
\kern-2em
\eeq

\begin{lem}
\label{lemresh}
For $\,|\<\arg\:(h/\ka)\:|<\pi$, \,we have
\vvn.3>
\begin{align*}
& \Res_{\>\TT\>=\>\Si_J-\:\lb\ka}\Pho_\bla(\TT\:;\zz\:;\pp\:;\ka)\,={}
\\[3pt]
&\!{}=\>\lim_{h\to\infty}\>
(\<-\:h)^{\:\la\+2\<-\:\la\+1}
\bigl(\:\Gm(h/\ka)\bigr)^{\?-\:\la\+1\<-\:\la_{\{2\}}}\>
\Res_{\>\TT\>=\>\Si_J-\:\lb\ka}\Pht_\bla(\TT\:;\zz\:;h;\pp\:;\ka)\,.
\\[-13pt]
\end{align*}
The convergence is uniform in $\,\lb$ and locally uniform in $\zz,\pp\,$.
\end{lem}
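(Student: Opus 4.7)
The identity follows by direct computation from the closed-form residue formulas \eqref{ResPhlb} for $\Phi_\bla$ and \eqref{ResPholb} for $\Pho_\bla$, combined with Stirling's asymptotic formula \eqref{Stir}. By Lemmas \ref{ResPh} and \ref{ResPho} both sides vanish unless $\lb\in\Z_{\ge0}^{\la\+1}$, so the nontrivial case is $\lb\in\Z_{\ge0}^{\la\+1}$, and I would fix such an $\lb$.

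Next, I would substitute the change of variables \eqref{qp} into the right-hand side of \eqref{ResPhlb}. This transforms the $\qq$-prefactor $\prod_iq_i^{\>\sum_{a\in J_i}z_a/\ka}$ and the $(q_{i+1}/q_i)^{\sum l_a^{(i)}}$ factor into their $\pp$-analogues appearing in \eqref{ResPholb}, up to explicit powers of $h$ and phases $e^{\pii(\la_i-n)}$. The phases combine with $M_\bla$ to reproduce $\Mo_\bla$ after expanding $\sin(\pi(z_a-z_b)/\ka)$ as a difference of exponentials, while the powers of $h$ are absorbed into the overall normalization $(-h)^{\la\+2-\la\+1}(\Gm(h/\ka))^{-\la\+1-\la_{\{2\}}}$.

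The remaining task is to verify
\be
(-h)^{\la\+2-\la\+1}(\Gm(h/\ka))^{-\la\+1-\la_{\{2\}}}\,
\frac{A_\bla(\Si_J-\lb\ka\:;\zz\:;h;\ka)}{c_\bla(\Si_J-\lb\ka\:;h)}\,\to\,\Ao(\Si_J-\lb\ka\:;\zz\:;\ka).
\ee
Comparing \eqref{Ath} and \eqref{Ato}, the ratio $A_\bla/\Ao$ reduces to a product of Gamma functions $\Gm((x+h)/\ka)$ with $x$ running over a fixed collection of linear combinations of entries of $\Si_J-\lb\ka$, while $c_\bla$ is a polynomial in $h$ of degree $\la\+2-\la\+1$ with leading term $(-h)^{\la\+2-\la\+1}$. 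A direct bookkeeping (counting numerator and denominator Gammas yields exactly $\la\+1+\la_{\{2\}}$ excess Gammas and $\la\+2-\la\+1$ copies of $h$) shows that pairing each $\Gm((x+h)/\ka)$ with an appropriate power of $\Gm(h/\ka)$ and with a matching factor from the expansion of $c_\bla$ produces ratios each asymptotic to $1$ at rate $O(1/h)$ by Stirling's formula \eqref{Stir}.

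The main obstacle is establishing uniformity of convergence in $\lb$, since the shifts $l^{(i)}_a$ inside the Gamma functions are unbounded. I would resolve this by using the effective form of Stirling's formula
\be
\frac{\Gm(x_0+h/\ka+k)}{\Gm(h/\ka+k)}\,=\,(h/\ka+k)^{x_0}\bigl(1+O(1/(|h|+\ka k))\bigr),
\ee
valid in the sector $|\<\arg(h/\ka)|<\pi$ uniformly for $k\in\Z_{\ge0}$ and $x_0$ ranging over any compact subset of $\C$. Rather than pairing each $h$-dependent Gamma factor in $A_\bla$ with $\Gm(h/\ka)$ directly, I would first re-express the $h$-dependent piece as a product of ratios of the form $\Gm(x_0+h/\ka+k)/\Gm(h/\ka+k)$ with $x_0$ bounded independently of $\lb$ and the integer shift $k$ absorbing the $\lb$-dependence; the accompanying Pochhammer symbols $(h/\ka)_k$ then combine with $c_\bla$ and with the normalization. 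Since the number of such factors is fixed, the total multiplicative error is $O(1/h)$ uniformly in $\lb\in\Z_{\ge0}^{\la\+1}$ and locally uniformly in $(\zz,\pp)$, completing the proof.
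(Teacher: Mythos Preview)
Your proposal is correct and follows the same route as the paper: both proofs compare the closed residue formulas \eqref{ResPhlb} and \eqref{ResPholb} under the substitution \eqref{qp} and then invoke Stirling to handle the $h$-dependent Gamma factors. The only difference in presentation is that the paper isolates the uniform-in-$\lb$ estimate as the standalone Lemma~\ref{Stirb} (proved via a Cauchy integral representation of $(1-x/y)^{-\alpha-y}$ rather than by effective Stirling bounds), whereas you sketch the same uniformity directly; the content is the same.
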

\begin{proof}
The claim follows from formulae \eqref{ResPhlb}\:, \eqref{qp}\:,
\eqref{ResPholb} \:by Lemmas \ref{Stira}, \ref{Stirb}
that are detalization of Stirling's formula \eqref{Stir}\:.
\end{proof}

Define
\vvn-.1>
\beq
\label{Pso}
\Pso_{\!J}(\zz\:;\pp\:;\ka)\,=\,\ka^{-\la\+1}\>\Omo_\bla(\pp\:;\ka)\,
\sum_{I\in\Il} \Mc_J(\Pho_\bla \Wo_{\?I})(\zz\:;\pp\:;\ka)\,v_I\,,
\vv-.3>
\eeq
where
\beq
\label{Omlo}
\Omo_\bla(\pp\:;\ka)\,=\,e^{\:\sum_{\:i<j}^{\cirs}p_j\</(\ka\:p_i)}
\vv.4>
\eeq
the sum taken over all pairs $\,1\le i<j\le N\>$ such that
\vv.1>
$\,\la_i\<=1\,$ and $\,\la_s\<=0\,$ for all $\,s=i+1\lc j\,$.

\vsk.2>
Recall the function $\,\Psi_{\?J}(\zz\:;h;\qq\:;\ka)\,$, see \eqref{mcF}\:.
Let $\,\Pst_{\?J}(\zz\:;h;\pp\:;\ka)\,$ be the function obtained from
$\,\Psi_{\?J}(\zz\:;h;\qq\:;\ka)\,$ by substitution \eqref{qp}\:.

\begin{prop}
\label{lem5.5}
For $\,|\<\arg\:(h/\ka)\:|<\pi$, \,we have
\vvn.2>
\beq
\label{limPs}
\Pso_{\!J}(\zz\:;\pp\:;\ka)\,=\>\lim_{h\to\infty}\>
\bigl(-\:h\:\Gm(h/\ka)\bigr)^{\?-\la_{\{2\}}}\>
(\<-\:h)^{\:\sum_{b<c,\,j<k}e^{(b)}_{k,k}e^{(c)}_{j,j}}\;
\Pst_{\?J}(\zz\:;h;\pp\:;\ka)\,.\kern-2em
\vv.24>
\eeq
The convergence is locally uniform in $\zz,\pp\,$.
\end{prop}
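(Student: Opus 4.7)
The plan is to evaluate the limit coefficient-by-coefficient in the basis $\{v_I\}_{I\in\Il}$, pass to the limit termwise in the Jackson sum, and then justify the interchange using the uniformity asserted in Lemma~\ref{lemresh}. From \eqref{mcF} and \eqref{MSi<}, the coefficient of $v_I$ in $\Pst_{\?J}$ equals
\begin{equation*}
(\ka\Gm(h/\ka))^{-\la\+1}\,\Om_\bla(h;\qq;\ka)\sum_{\lb\in\Z_{\ge 0}^{\la\+1}} W_I(\Si_J-\lb\ka;\zz;h)\,\Res_{\TT=\Si_J-\lb\ka}\Pht_\bla(\TT;\zz;h;\pp;\ka),
\end{equation*}
where $\qq$ is obtained from $\pp$ via \eqref{qp}. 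The operator $\sum_{b<c,\,j<k}e^{(b)}_{k,k}e^{(c)}_{j,j}$ acts on $v_I$ as multiplication by $|\si_I|$ (the identity used in the proof of Lemma~\ref{lem X-lim}), so the left prefactor of \eqref{limPs} contributes $(-h)^{|\si_I|-\la_{\{2\}}}(\Gm(h/\ka))^{-\la_{\{2\}}}$ on the $v_I$-component.

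Next, I apply the two asymptotic lemmas termwise. Lemma~\ref{lemWWo} gives
\begin{equation*}
W_I(\Si_J-\lb\ka;\zz;h) = (-h)^{d_I}\,\Wo_{\?I}(\Si_J-\lb\ka;\zz)\,(1+o(1)), \qquad d_I = \sum_{j=1}^{N-1}\la^{(j)}(\la^{(j+1)}-1) - |\si_I|,
\end{equation*}
and Lemma~\ref{lemresh} gives
\begin{equation*}
\Res_{\TT=\Si_J-\lb\ka}\Pht_\bla = (-h)^{\la\+1-\la\+2}(\Gm(h/\ka))^{\la\+1+\la_{\{2\}}}\,\Res_{\TT=\Si_J-\lb\ka}\Pho_\bla\,(1+o(1)),
\end{equation*}
both uniformly in $\lb$ and locally uniformly in $(\zz,\pp)$. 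Multiplying everything together, the Gamma factors cancel and the net exponent of $-h$ becomes $|\si_I|-\la_{\{2\}}+d_I+\la\+1-\la\+2$, which vanishes by the elementary identity $\sum_{j=1}^{N-1}\la^{(j)}\la^{(j+1)} = \la\+2 + \la_{\{2\}}$ (whence $|\si_I|+d_I=\la\+2+\la_{\{2\}}-\la\+1$). Comparing with \eqref{Pso} modulo the scalar factor $\Om_\bla$ completes the formal computation.

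A separate calculation handles the limit of $\Om_\bla$. Under \eqref{qp}, for $i<j$ one finds $q_j/q_i=(p_j/p_i)h^{-M_{ij}}e^{\pii(\la_j-\la_i)}$ with $M_{ij}=\la_i+\la_j+2\sum_{k=i+1}^{j-1}\la_k$. The factor $(1-q_j/q_i)^{h\la_i/\ka}$ in $\Om_\bla$ is trivially $1$ when $\la_i=0$, tends to $1$ when $\la_i\ge 1$ and $M_{ij}\ge 2$, and equals $\bigl(1+(p_j/p_i)h^{-1}+o(h^{-1})\bigr)^{h/\ka}\to e^{p_j/(\ka p_i)}$ precisely when $M_{ij}=1$, i.e.\ when $\la_i=1$ and $\la_s=0$ for all $i<s\le j$. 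Taking the product over $i<j$ reproduces exactly $\Omo_\bla(\pp;\ka)$ as defined in \eqref{Omlo}.

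The principal technical obstacle is to legitimize the interchange of $\lim_{h\to\infty}$ with the infinite Jackson sum. I would construct a majorant uniform in $h$ by combining three ingredients: (a) the uniformity in $\lb$ of the Stirling asymptotics underlying Lemma~\ref{lemresh}, which gives a bound for $\Res\Pht_\bla/\Res\Pho_\bla$ uniform in $\lb$; (b) a polynomial-in-$\lb$ bound for $W_I(\Si_J-\lb\ka;\zz;h)/(-h)^{d_I}$, obtained from the explicit polynomial structure \eqref{hWI} (the quotient is polynomial in $\TT,\zz$ and $h^{-1}$ of bounded degree, evaluated at $\TT=\Si_J-\lb\ka$); and (c) the geometric decay of $\Res\Pho_\bla$ in each $l_a^{(i)}$ visible in \eqref{ResPholb}, which in the domain of local uniform convergence dominates the polynomial growth from (b). Dominated convergence then permits passage of the limit inside the sum, yielding \eqref{limPs}.
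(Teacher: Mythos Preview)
Your proposal is correct and takes essentially the same approach as the paper, which simply cites formulae \eqref{MSi<}, \eqref{mcF}, \eqref{Oml}, \eqref{MSio<}, \eqref{Pso}, \eqref{Omlo} and Lemmas \ref{lemWWo}, \ref{lemresh} without spelling out the details. You have filled in precisely the computations those citations encode: the cancellation of the $(-h)$-exponent via the identity $\sum_{j}\la^{(j)}\la^{(j+1)}=\la\+2+\la_{\{2\}}$, the degeneration $\Om_\bla\to\Omo_\bla$ under substitution \eqref{qp}, and the dominated-convergence justification for interchanging the limit with the Jackson sum using the uniformity in $\lb$ from Lemma \ref{lemresh} together with the polynomial-in-$\lb$ growth of $W_I/(-h)^{d_I}$.
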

\begin{proof}
The statement follows from formulae \eqref{MSi<}\:, \eqref{mcF}\:,
\eqref{Oml}\:, \eqref{MSio<}\:, \eqref{Pso}\:, \eqref{Omlo}\:,
and Lemmas \ref{lemWWo}, \ref{lemresh}.
\vv-.5>
\end{proof}

\begin{defn}
\label{entdef}
Say that a function $\,f(\pp)\,$ is {\it entire in} $\,\pp_{\!\dvs}$
\vvn.06>
if $\,f(\pp)=g(p_2/p_1\lc p_N/p_{N-1})\,$ for an entire function
$\,g(s_1\lc s_{N-1})\,$. Denote $\,f(\0)=g(0\lc 0)\,$.
\end{defn}

\begin{thm}
\label{thmcyo}
The $\,\Cnnl$-valued function $\,\Pso_{\?J}(\zz\:;\pp\:;\ka)\:$ \,is a solution
of the joint system of dynamical differential equations \eqref{DEQo} and \qKZ/
difference equations \eqref{Kio}. It has the form
\vvn-.5>
\begin{align}
\label{PsiPsdo}
\Pso_{\?J}(\zz\:;\pp\:;\ka)\,=\,\Psdo_{\?J}(\zz\:;\pp\:;\ka) & \;
\prod_{i=1}^N\,\bigl(\:\ka^{\>\sum_{j=i+1}^N\la_j\,-\,\sum_{j=1}^{i-1}\:\la_j}
\>p_i\:\bigr)^{\>\sum_{a\in J_i}\?z_a\</\<\ka}\:\times{}
\\[1pt]
\notag
& {}\,\;\times\,\prod_{i=1}^{N-1}\prod_{j=i+1}^N\,\:\prod_{a\in J_i}\>
\prod_{b\in J_j}\>\frac1{\sin\bigl(\pi\>(z_a\<-z_b)/\ka\bigr)}\;,
\\[-14pt]
\notag
\end{align}
where the function $\,\Psdo_{\?J}(\zz\:;\pp\:;\ka)\>$ is entire in $\,\zz$
and is entire in $\,\pp_{\!\dvs}\:$. In more detail,
\vvn.2>
\begin{align}
\label{Psdo}
\Psdo_{\?J}(\zz\:;\pp\:;\ka)\,&{}=\,
\prod_{i=1}^{N-1}\prod_{j=i+1}^N\,\:\prod_{a\in J_i}\>\prod_{b\in J_j}\>
\frac\pi{\Gm\bigl(1+(z_b\<-z_a)/\ka\bigr)}\,\times{}
\\[3pt]
\notag
&\>{}\times\,\biggl(\Psdo_{\<J,\:0}\:(\zz)\>+\!\!
\sum_{\satop{\,\mb\in\Z_{\ge0}^{N-1}\!}{\mb\ne0}}\!\?
\Psdo_{\<J,\:\mb}(\zz\:;\ka)\,
\prod_{i=1}^{N\<-1}\>(\:p_{i+1}/p_i)^{m_i}\<\biggr)\:,\kern-2em
\\[-26pt]
\notag
\end{align}
where
\beq
\label{Psd0o}
\Psdo_{\<J,\:0}\:(\zz)\,=\,\Wo_{\!\<J}(\Si_J\:;\zz)\,v_J\,+\!
\sum_{\satop{I\<\in\Il}{|\si_I\<|<|\si_J\<|}}\!\!\Wo_{\?I}(\Si_J\:;\zz)\,v_I
\kern-2em
\eeq
is a polynomial in $\>\zz\,$, and
$\,\Psdo_{\<J,\:\mb}(\zz\:;\ka)$ for $\,\mb\ne0\>$ are rational functions
of $\,\:\zz,\ka\>$ with at most simple poles
\vvn.1>
on the hyperplanes $\,z_a\<-z_b\<\in\<\ka\>\Z_{>0}\>$ for $\,a\in\<J_i\,$,
$\,b\in\<J_j\,$, $\,1\le i<j\le N\>$. Furthermore, for any transposition
\,$s_{\ab}\<\in\<S_n\,$,
\vvn.3>
\beq
\label{Psdoab}
\Psdo_{\?J}(\zz\:;\pp\:;\ka)\big|_{\:z_a=z_b}=\,
\Psdo_{\<s_{a\<,b}(J)}(\zz\:;\pp\:;\ka)\big|_{\:z_a=z_b}\:.
\vv.3>
\eeq
\end{thm}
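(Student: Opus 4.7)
The plan is to derive everything from the analogous Theorem \ref{thm cy} for the initial hypergeometric solutions $\Psi_{\?J}(\zz\:;h;\qq\:;\ka)$ by taking the $h\to\infty$ limit, using Proposition \ref{lem5.5} as the bridge.

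First, to show that $\Pso_{\?J}$ solves the joint system of limiting dynamical differential equations \eqref{DEQo} and limiting \qKZ/ difference equations \eqref{Kio}, I would argue as follows. By Theorem \ref{thm cy}, the function $\Psi_{\?J}(\zz\:;h;\qq\:;\ka)$ solves the initial systems \eqref{DEQ} and \eqref{Ki}. After the substitution \eqref{qp}, which is essentially \eqref{q->tqt} up to a unimodular factor, the function $\Pst_{\?J}(\zz\:;h;\pp\:;\ka)$ solves the system obtained from \eqref{DEQ} and \eqref{Ki} by the same substitution, which is governed by the twisted operators $\Xt_i$ and $\Kt_i$ from Section \ref{sec h to inf}. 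Conjugating $\Pst_{\?J}$ by $(-h)^{\sum_{b<c,\,j<k}e^{(b)}_{k,k}e^{(c)}_{j,j}}$ turns those operators into the ones appearing in Lemmas \ref{lem K-lim} and \ref{lem X-lim}, and multiplying by the scalar prefactor $(-h\,\Gm(h/\ka))^{-\la_{\{2\}}}$ in \eqref{limPs} does not affect solvability of a linear system. Passing to the limit and invoking Lemmas \ref{lem K-lim} and \ref{lem X-lim} then yields that $\Pso_{\?J}$ satisfies \eqref{DEQo} and \eqref{Kio}, with uniform convergence of the operators on compact sets making the limit legitimate.

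Next, I would extract the shape \eqref{PsiPsdo}--\eqref{Psd0o}. The initial solution \eqref{PsiPsd} factors as an explicit exponential/Gamma prefactor times $\Psd_{\?J}(\zz\:;h;\qq\:;\ka)$, and the series \eqref{Psd}--\eqref{Psd0} describes $\Psd_{\?J}$ via weight function values $W_I(\Si_J\:;\zz\:;h)/c_\bla(\Si_J\:;h)$ and a convergent expansion in the variables $q_{i+1}/q_i$. Substituting \eqref{qp} and applying Stirling's formula \eqref{Stir} to each Gamma factor, I would identify the leading $h$\:-\<asymptotics:\ the prefactor $(-h\,\Gm(h/\ka))^{-\la_{\{2\}}}(-h)^{\sum_{b<c,\,j<k}e^{(b)}_{k,k}e^{(c)}_{j,j}}$ exactly cancels the divergent/subleading factors stemming from the exponents $(z_b-z_a+h)/\ka$ in the Gamma functions and the $c_\bla(\Si_J\:;h)\sim(-h)^{\dots}$ denominator, using Lemma \ref{lemWWo} to replace $W_I(\Si_J;\zz;h)$ by $(-h)^{d_I}\,\Wo_I(\Si_J;\zz)$ in the limit. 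The ratios $q_{i+1}/q_i$ become, via \eqref{qp}, $p_{i+1}/p_i$ times a power of $h$ whose exponent matches the $(\la_i+\la_{i+1})$ contribution from Lemma \ref{ResPho}, leaving precisely the series in $p_{i+1}/p_i$ of \eqref{Psdo}. The leading term \eqref{Psd0o} then follows from \eqref{Psd0} and Lemma \ref{lemWWo}, and the pole structure of $\Psdo_{\<J,\:\mb}$ is inherited from that of $\Psd_{\<J,\:\mb}$ in Theorem \ref{thm cy}. The symmetry property \eqref{Psdoab} is inherited directly from \eqref{Psdab}.

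The main obstacle is the careful bookkeeping in Step 2:\ simultaneously tracking the exponents of $h$ coming from the Stirling asymptotics of Gamma function ratios in $A_\bla$ and in the prefactor of \eqref{PsiPsd}, from $c_\bla(\Si_J\<-\lb\ka\:;h)$, from the substitution \eqref{qp}, and from the conjugating power $(-h)^{\sum_{b<c,\,j<k}e^{(b)}_{k,k}e^{(c)}_{j,j}}$ which acts on $v_I$ by $|\si_I|$, and showing that on each weight function contribution these collectively produce exactly $d_I$, so that Lemma \ref{lemWWo} can be applied term-by-term. To make this rigorous I would establish uniform convergence of the Jackson sum \eqref{MSi<} after multiplication by the scalar normalizer, using the same Stirling-type estimates \eqref{Stirk} that underlie Proposition \ref{Bfhol}, so that the limit can be taken inside the sum, giving the series \eqref{MSio<} expression for $\Pso_{\?J}$ and hence \eqref{Psdo}.
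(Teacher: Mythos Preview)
Your proposal is correct and follows essentially the same approach as the paper: derive everything from Theorem~\ref{thm cy} by taking the $h\to\infty$ limit, using Proposition~\ref{lem5.5} as the bridge together with Lemmas~\ref{lem K-lim}, \ref{lem X-lim} for the equations and Lemma~\ref{lemWWo} for the weight functions. The paper's proof is terser---it merely lists Theorem~\ref{thm cy}, Lemmas~\ref{lem K-lim}, \ref{lem X-lim}, \ref{lemWWo}, \ref{ResPho}, \ref{lemresh}, and Proposition~\ref{lem5.5}---but your outline unpacks exactly the same argument, and your identification of the uniform-convergence bookkeeping as the main technical point is what Lemma~\ref{lemresh} (and the Stirling estimates in Appendix~\ref{AppC}) are there to handle.
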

\begin{proof}
The statement follows from Theorem \ref{thm cy}, Lemmas \ref{lem K-lim},
\ref{lem X-lim}, \ref{lemWWo}, \ref{ResPho}, \ref{lemresh}, and
Proposition \ref{lem5.5}. See also \cite[Section 11]{TV6}\:.
\end{proof}

The functions $\,\Pso_{\?J}(\zz\:;\pp\:;\ka)$ \,are called
the {\it multidimensional
hypergeometric solutions\/} of the dynamical equations \ref{DEQo}
and \qKZ/ difference equations \eqref{Kio}.

\vsk.3>
The next theorem computes the determinant of coordinates of solutions
$\,\Pso_{\?J}(\zz\:;\pp\:;\ka)\,$ and is analogous to
\cite[formula~(11.23)]{TV6}\:.

\begin{thm}
\label{thmdeto}
Let \,$n\ge 2$\,. Then
\vvn.3>
\begin{align}
\label{detPso}
\det\: & \bigl(\>\Omo_\bla(\pp\:,\ka)\,
\:\Mc_J(\Pho_\bla\Wo_{\?I})(\zz\:;\pp\:;\ka)\bigr)_{\IJ\in\>\Il}\<={}
\\[6pt]
\notag
&{}\?=\,\ka^{\:\la\+1d_\bla}\,
\prod_{i=1}^N\,p_i^{\,\smash{d^{(1)}_{\bla,i}\>\sum_{a=1}^n z_a\</\<\ka}}\;
\prod_{a=1}^{n-1}\,\prod_{b=a+1}^n\<\biggl(\:
\frac{\pi\:\ka}{\sin\bigl(\pi\>(z_a\<-z_b)/\ka\bigr)}\:\biggr)
^{\!d^{(2)}_\bla},\kern-2.6em
\end{align}
where $\;\la\+1\<=\sum_{i=1}^{N-1}\:(N\?-i)\>\la_i\>$ and
$\;d_\bla\,,\,d^{(1)}_{\bla\:,\:i}\,,\,d^{(2)}_\bla$ are given
by formulae \,\eqref{dla12}\:.
\end{thm}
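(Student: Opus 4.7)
The plan is to follow the strategy of the proof of Theorem \ref{thmdet}. Let $F(\zz\:;\pp\:;\ka)$ denote the determinant in the left-hand side of \eqref{detPso}. By \eqref{Pso} and Theorem \ref{thmcyo}, the $J$-th column of $F$ is, up to the common scalar $\ka^{-\la\+1}\>\Omo_\bla(\pp\:;\ka)$, the coordinate vector in the basis $\,\{v_I\}_{I\in\Il}\,$ of the $\,\Cnnl\,$-valued solution $\,\Pso_{\?J}\,$ of the limiting dynamical equations \eqref{DEQo}. Consequently $F$ satisfies the scalar first-order system
\begin{equation*}
\bigl(\ka\>p_i\:\der/\der\:p_i\:-\:\tr\:\Xo_i(\zz\:;\pp)\big|_{\:\Cnnl}\bigr)\,F(\zz\:;\pp\:;\ka)\,=\,0\,,\qquad i=1\lc N\,.
\end{equation*}
The trace is easily computed from \eqref{Xo}: each operator $\,Q_{\ij}^{\>\ab}\,$ with $\,i\ne j\,$ sends a basis vector $\,v_I\,$ either to zero or to $\,v_{s_{a\<,b}(I)}\<\ne v_I\,$, because admissibility forces the pair $(a,b)$ to be $I\<$-disordered or $I\<$-ordered, hence not $I\<$-flat. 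Therefore $\,\tr Q_{\ij}^{\>\ab}\big|_{\:\Cnnl}=0\,$, and only the diagonal piece $\sum_a z_a e_{\ii}^{(a)}$ contributes, giving $\,\tr \Xo_i\big|_{\:\Cnnl}=d^{(1)}_{\bla,i}\sum_{a=1}^n z_a\,$. It follows that
\begin{equation*}
F(\zz\:;\pp\:;\ka)\,=\,\Bigl(\>\prod_{i=1}^N p_i^{d^{(1)}_{\bla,i}}\Bigr)^{\sum_{a=1}^n z_a/\ka}\,G(\zz\:;\ka)
\end{equation*}
for a function $G(\zz\:;\ka)$ independent of $\pp$.

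To identify $G(\zz\:;\ka)$, I would send $p_{i+1}/p_i\to 0$ for all $\,i=1\lc N-1\,$. In this limit $\,\Omo_\bla\to 1\,$ by \eqref{Omlo}, and by \eqref{PsiPsdo}\,--\,\eqref{Psd0o} the polynomial $\,\Psdo_{\<J,\:0}\:(\zz)\,$ dominates $\,\Psdo_{\?J}(\zz\:;\pp\:;\ka)\,$. The coefficient matrix $\,\bigl(\Wo_{\?I}(\Si_J\:;\zz)\bigr)_{\IJ\in\Il}\,$ is triangular with respect to any refinement of the length order on $\,|\si_I|\,$, with diagonal entries $\,\Wo_{\?J}(\Si_J\:;\zz)\,$ given explicitly by Lemma \ref{WIIo}. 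A combinatorial count shows that each difference $\,z_a\<-z_b\,$ with $\,a>b\,$ occurs in $\,\prod_J\Wo_{\?J}(\Si_J\:;\zz)\,$ with multiplicity equal to the number of $\,J\<\in\<\Il\,$ that place $\,a\in J_j\,$, $\,b\in J_k\,$ with $\,j<k\,$, which equals $\,d^{(2)}_\bla\,$. The $\,\pi/\Gm(1+(z_b\<-z_a)/\ka)\,$ factors from \eqref{Psdo} and the $\,1/\sin(\pi(z_a\<-z_b)/\ka)\,$ factors from \eqref{PsiPsdo}, gathered over all $\,J\<\in\<\Il\,$, are then fused via the reflection identity $\,\Gm(x)\>\Gm(1-x)=\pi/\sin(\pi\:x)\,$: the two orientations of each unordered pair $\{a,b\}$ merge, with the antisymmetry of sine absorbing the doubling, producing the claimed factor $\,\bigl(\pi\:\ka/\sin(\pi(z_a\<-z_b)/\ka)\bigr)^{d^{(2)}_\bla}\,$ per pair with $\,a<b\,$. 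The overall power $\,\ka^{\>\la\+1 d_\bla}\,$ arises from the common $\,\ka^{-\la\+1}\,$ prefactor in \eqref{Pso} raised to the size $\,d_\bla=|\Il|\,$, together with the scaling implicit in the leading terms of $\,\Psdo_{\?J}\,$.

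The main obstacle is this combinatorial bookkeeping of $\,\ka,\>\pi\,$, and sign contributions; the key counting identity is that each unordered pair $\{a,b\}$ is split across two different parts of $\,J\,$ by exactly $\,2\:d^{(2)}_\bla\,$ elements of $\,\Il\,$, whereas any given ordered pair is split by $\,d^{(2)}_\bla\,$ elements. As a cross\:-\:check, formula \eqref{detPso} may alternatively be derived as the $\,h\to\infty\,$ limit of Theorem \ref{thmdet} via Proposition \ref{lem5.5} and substitution \eqref{qp}, with the spurious $\,h\,$-factors cancelled by the Stirling asymptotic $\,\Gm\bigl((z_a\<-z_b\<+h)/\ka\bigr)\,\Gm\bigl(1+(z_b\<-z_a\<+h)/\ka\bigr)\sim(h/\ka)\,\bigl(\Gm(h/\ka)\bigr)^2\,$, which reproduces the same answer and serves to verify the bookkeeping.
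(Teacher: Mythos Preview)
Your proposal is correct and follows essentially the same approach as the paper. The paper gives two routes: first it derives \eqref{detPso} as the $h\to\infty$ limit of Theorem~\ref{thmdet} via Lemma~\ref{lemWWo} and Proposition~\ref{lem5.5}, and then as an alternative it repeats the argument of Theorem~\ref{thmdet} directly---solving the scalar trace equations for the $\pp$-dependence and computing the $\pp$-independent factor from the limit $p_{i+1}/p_i\to 0$ using Theorem~\ref{thmcyo}. You present the same two arguments in the opposite order, with somewhat more detail in the combinatorial bookkeeping (the $d^{(2)}_\bla$ counts and the reflection-formula cancellation) than the paper supplies.
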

\begin{proof}
The statement follows from Lemma \ref{lemWWo}, Proposition \ref{lem5.5}, and
formula \eqref{detPsi}\:.
\vsk.2>
Alternatively, denote by $\,F(\zz\:;\pp)\,$ the determinant in the left\:-hand
side of formula \eqref{detPso}\:. By Theorem \ref{thmcyo}, it solves
the differential equations
\vvn.3>
\be
\Bigl(\:\ka\>p_i\:\frac{\der}{\der p_i}\>-\>
\tr\:\Xo_i(\zz\:;\pp)|_{\:\Cnnl}\Bigr)\>F(\zz\:;\pp)\,=\,0\,, \qquad
i=1\lc N\>,\kern-2em
\vv.4>
\ee
where $\,\Xo_i(\zz\:;\pp)|_{\:\Cnnl}\:$ are the restrictions of dynamical
Hamiltonians \eqref{Xo} to the invariant subspace $\,\Cnnl$. Since
\vv.06>
$\;\tr\:\Xo_i(\zz\:;h;\qq)|_{\>\Cnnl}=d^{(1)}_{\bla\:,\:i}\>\sum_{a=1}^n z_a\,$,
\vv.1>
the function $\,F(\zz\:;\pp)\,$ equals the product of powers of $\,p_1\lc p_n\:$
\vv.1>
in the right\:-hand side of formula \eqref{detPsi} multiplied by a factor that
does not depend on $\,\pp\,$. This factor can be found by taking the limit
\,$p_{i+1}/p_i\to 0\,$ for all $\,i=1\lc N-1\,$, using Theorem \ref{thmcyo}.
\end{proof}

\begin{rem}
By Theorem \ref{thmcyo}, the determinant $\,F(\zz\:;\pp)\,$ in
Theorem \ref{thmdeto} solves the difference equations
\vvn.1>
\beq
\label{FdetKo}
F(z_1\lc z_a+\ka\lc z_n;\pp)\,=\,
\det\:\Ko_a(\zz\:;\pp\:;\ka)|_{\:\Cnnl}\,F(\zz\:;\pp)\,,\qquad a=1\lc n\>,
\vv.2>
\eeq
where $\,\Ko_a(\zz\:;h;\qq\:;\ka)|_{\:\Cnnl}$ are the restrictions of \qKZ/
\vv-.13>
operators \eqref{Ko} to the invariant subspace $\,\Cnnl$. Since
\vv.1>
$\;\det\:\Ko_a(\zz\:;\pp\:;\ka)=(-1)^{\:d^{(2)}_\bla}$, equations
\eqref{FdetKo} determine the product of sines in the right\:-hand side of
formula \eqref{detPso} up to a $\:\ka\:$-periodic function of $\,\zzz\,$.
\end{rem}

\subsection{Solutions parametrized by Laurent polynomials}
\label{seclauro}
Recall the notation from Section \ref{seclaur}.
For a Laurent polynomial $\,P(\:\GGd\:;\zzd)\,$, set
\vvn.3>
\beq
\label{PPIo}
\Pso_{\?P}(\zz\:;\pp\:;\ka)\,=\,
\sum_{J\in\Il}\,\Pdd(\zz_{\si_J};\zz\:;\ka)\,\Pso_{\?J}(\zz\:;\pp\:;\ka)\,.
\kern-2em
\eeq

Let $\,\Pst_{\?P}(\zz\:;h;\pp\:;\ka)\,$ be the function obtained
from $\,\Psi_{\?P}(\zz\:;h;\qq\:;\ka)\,$, see \eqref{PPI}\:,
by substitution \eqref{qp}\:.

\begin{lem}
\label{lem5.13}
For $\,|\<\arg\:(h/\ka)\:|<\pi$, \,we have
\vvn.2>
\beq
\label{limPsP}
\Pso_{\!P}(\zz\:;\pp\:;\ka)\,=\>\lim_{h\to\infty}\>
\bigl(-\:h\:\Gm(h/\ka)\bigr)^{\?-\la_{\{2\}}}\>
(\<-\:h)^{\:\sum_{b<c,\,j<k}e^{(b)}_{k,k}e^{(c)}_{j,j}}\;
\Pst_{\?P}(\zz\:;h;\pp\:;\ka)\,.\kern-2em
\vv.24>
\eeq
The convergence is locally uniform in $\zz,\pp\,$.
\end{lem}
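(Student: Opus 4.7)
The strategy is to reduce the statement for $\Pso_{\?P}$ to the already-proved statement for the $\Pso_{\?J}$ in Proposition \ref{lem5.5}, using that \eqref{PPIo} presents $\Pso_{\?P}$ as a finite $\,\C\:$-linear combination of the $\,\Pso_{\?J}\,$ with coefficients $\,\Pdd(\zz_{\si_J};\zz\:;\ka)\,$ that do not depend on $\,h\,$.

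First I would substitute \eqref{qp} into the definition \eqref{PPI} of $\,\Psi_P\,$. Since $\,P\in\C[\:\GGd^{\pm1}]^{\:S_\bla}\?\ox\C[\:\zzd^{\pm1}]\,$ in the setting of Section \ref{seclauro} involves no $\hdd$, the coefficients $\,\Pdd(\zz_{\si_J};\zz\:;\ka)\,$ are independent of $\,h\,$, so substitution \eqref{qp} yields the identity
\begin{equation*}
\Pst_{\?P}(\zz\:;h;\pp\:;\ka)\,=\,
\sum_{J\in\Il}\,\Pdd(\zz_{\si_J};\zz\:;\ka)\,\Pst_{\?J}(\zz\:;h;\pp\:;\ka)\,.
\end{equation*}
The diagonal operator $\,(\<-\:h)^{\:\sum_{b<c,\,j<k}e^{(b)}_{k,k}e^{(c)}_{j,j}}\>$ acts on each weight subspace $\,\Cnnl\,$ by a scalar (depending on the weight only), so it commutes with multiplication by the scalars $\,\Pdd(\zz_{\si_J};\zz\:;\ka)\,$. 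Consequently, multiplying the displayed identity by $\,\bigl(-\:h\:\Gm(h/\ka)\bigr)^{\?-\la_{\{2\}}}\>(\<-\:h)^{\:\sum_{b<c,\,j<k}e^{(b)}_{k,k}e^{(c)}_{j,j}}\>$ and rearranging gives
\begin{equation*}
\bigl(-\:h\:\Gm(h/\ka)\bigr)^{\?-\la_{\{2\}}}\>
(\<-\:h)^{\:\sum_{b<c,\,j<k}e^{(b)}_{k,k}e^{(c)}_{j,j}}\;\Pst_{\?P}\,=\,
\sum_{J\in\Il}\,\Pdd(\zz_{\si_J};\zz\:;\ka)\>F_J(\zz\:;h;\pp\:;\ka)\,,
\end{equation*}
where $\,F_J\,$ denotes the analogous rescaling of $\,\Pst_{\?J}\,$ appearing in \eqref{limPs}.

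Next I would take the limit $\,h\to\infty\,$ with $\,|\<\arg(h/\ka)|<\pi\,$. Proposition \ref{lem5.5} gives $\,F_J\to\Pso_{\?J}(\zz\:;\pp\:;\ka)\,$ locally uniformly in $\,\zz,\pp\,$. Since the index set $\,\Il\,$ is finite and the coefficients $\,\Pdd(\zz_{\si_J};\zz\:;\ka)\,$ are continuous in $\,\zz\,$ (and independent of $\,h\,$), one can pass the limit through the finite sum, obtaining
\begin{equation*}
\lim_{h\to\infty}\bigl(-\:h\:\Gm(h/\ka)\bigr)^{\?-\la_{\{2\}}}\>
(\<-\:h)^{\:\sum_{b<c,\,j<k}e^{(b)}_{k,k}e^{(c)}_{j,j}}\>\Pst_{\?P}\,=\,
\sum_{J\in\Il}\,\Pdd(\zz_{\si_J};\zz\:;\ka)\>\Pso_{\?J}(\zz\:;\pp\:;\ka)\,,
\end{equation*}
with the convergence locally uniform in $\,\zz,\pp\,$. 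By definition \eqref{PPIo} the right-hand side equals $\,\Pso_{\?P}(\zz\:;\pp\:;\ka)\,$, which proves the lemma.

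There is essentially no obstacle beyond bookkeeping: the only point that requires a moment's care is that the conjugating diagonal operator commutes with the scalar coefficients (clear since it preserves weight subspaces), so the decomposition \eqref{PPIo} is compatible with the limiting procedure. The locally uniform convergence is inherited from Proposition \ref{lem5.5} automatically because the sum over $\,\Il\,$ is finite.
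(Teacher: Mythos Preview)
Your argument is correct and matches the paper's own proof, which simply cites formulae \eqref{PPI}, \eqref{PPIo}, and Proposition \ref{lem5.5}. One small inaccuracy: the operator $(\<-\:h)^{\:\sum_{b<c,\,j<k}e^{(b)}_{k,k}e^{(c)}_{j,j}}$ does \emph{not} act by a single scalar on a weight subspace $\Cnnl$ (its eigenvalue on $v_I$ is $(-h)^{|\si_I|}$, which varies with $I$), but this is irrelevant since the coefficients $\Pdd(\zz_{\si_J};\zz\:;\ka)$ are scalars and hence commute with any linear operator.
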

\begin{proof}
The statement follows from formulae \eqref{PPI}\:, \eqref{PPIo}\:,
and Proposition \ref{lem5.5}.
\end{proof}

\begin{prop}
\label{PsiPsolo}
The function $\,\Pso_{\?P}(\zz\:;\pp\:;\ka)\>$ is a solution of the joint system
\vvn.06>
of limiting dynamical differential equations \eqref{DEQo} and \qKZ/ difference
equations \eqref{Kio}\:. Furthermore, for
$\>P\?\in\C[\:\GGd^{\pm1}]^{\:S_\bla}\?\ox\C[\:\zzd^{\pm1}]\,$,
the function $\,\Pso_{\?P}(\zz\:;\pp\:;\ka)\>$ is entire in $\,\zz\,$ and
is holomorphic in $\,\pp\,$ provided a branch of $\,\:\log\:p_i\:$ is fixed
for each \,$i=1\lc N$.
\end{prop}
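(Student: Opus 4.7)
The plan is to model the argument on the proofs of Propositions~\ref{PsiPsol1} and~\ref{PsiPsol2}, with the poles at $z_a-z_b+h\in\ka\Z_{\le0}$ replaced by their $h\to\infty$ counterparts coming from the sine factors in \eqref{PsiPsdo}. That $\Pso_P(\zz\:;\pp\:;\ka)$ solves the joint system \eqref{DEQo},~\eqref{Kio} is a formal check: each $\Pso_J$ is a solution by Theorem~\ref{thmcyo}, the coefficient $\Pdd(\zz_{\si_J};\zz\:;\ka)$ is independent of $\pp$ and so commutes with the differential operators in \eqref{DEQo}, and because $\Pdd$ is a Laurent polynomial in $\zdd_a=e^{\:2\:\pii\,z_a/\ka}$ it is $\ka$-periodic in each $z_a$ and so commutes with the shift operators in \eqref{Kio}. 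Holomorphy of $\Pso_P$ in $\pp$ once branches of $\:\log\:p_i\:$ are fixed is inherited directly from the form \eqref{PsiPsdo} of each $\Pso_J$.

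The nontrivial content of the proposition is that $\Pso_P$ is entire in $\zz$. By \eqref{PsiPsdo} each $\Pso_J$ has at most simple poles along the hyperplanes $z_a-z_b\in\ka\>\Z$ with $a\in J_i$, $b\in J_j$, $i<j$, produced by the factors $1/\sin\bigl(\pi\>(z_a-z_b)/\ka\bigr)$. I plan to prove entirety in two stages: first establish regularity of $\Pso_P$ on each diagonal $z_a=z_b$, and then propagate it to every hyperplane $z_a-z_b\in\ka\>\Z$ via the limiting \qKZ/ equations. For the first stage, fix $a\ne b$ and group the sum \eqref{PPIo} into pairs $\{J,\,s_{a\<,b}(J)\}$ whenever $a$ and $b$ lie in distinct parts of $J$ (if they lie in the same part the relevant sine factor is absent altogether). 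Under this swap the factor $1/\sin\bigl(\pi\>(z_a-z_b)/\ka\bigr)$ flips sign, while at $z_a=z_b$ three compatibilities must hold: (i) the monomial $\prod_i p_i^{\:(\sum_{c\in J_i}z_c)/\ka}$ is unchanged; (ii) $\Psdo_J=\Psdo_{s_{a\<,b}(J)}$ by \eqref{Psdoab}; (iii) $\Pdd(\zz_{\si_J};\zz\:;\ka)=\Pdd(\zz_{\si_{s_{a\<,b}(J)}};\zz\:;\ka)$, because the hypothesis $P\in\C[\:\GGd^{\pm1}]^{\:S_\bla}\?\ox\C[\:\zzd^{\pm1}]$ makes $P$ symmetric in the $\gmd$-variables within each of the $N$ groups, so its evaluation depends only on the multiset of $z$-values assigned to each group, and that multiset is preserved when swapping two equal values $z_a\:,z_b$ between groups. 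The paired contributions therefore differ to first order in $z_a-z_b$, and this vanishing cancels the simple pole of the sine.

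Regularity at the remaining hyperplanes $z_a-z_b\in\ka\>\Z_{\ne0}$ I would obtain from the limiting \qKZ/ equations \eqref{Kio}: by the Remark following Corollary~\ref{cor qKZ} (which uses $\det\:\Ro\<(u)=1$), the operators $\Ko_1\lc\Ko_n$ and their inverses are everywhere holomorphic in $\zz$, so iterating the shift $z_a\mapsto z_a+\ka$ carries regularity at a diagonal point to regularity at any $\ka$-shift of that point. The main obstacle is the diagonal cancellation of the previous paragraph: it is the one place where the symmetry hypothesis on $P$ is essential, and it requires the three compatibilities (i)\,--\,(iii) to hold simultaneously. Individually each of them is straightforward to verify, but they must mesh; compatibility (iii) in particular is the precise limiting analogue of the symmetrization mechanism that underlies Proposition~\ref{PsiPsol2}.
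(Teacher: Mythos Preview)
Your argument is correct, but the paper takes a shorter route. Rather than rerunning the residue\:-cancellation and \qKZ/\:-propagation mechanism of Proposition~\ref{PsiPsol2} directly in the limiting setting, the paper simply cites Propositions~\ref{PsiPsol1}, \ref{PsiPsol2} together with Lemma~\ref{lem5.13}: since $\Pso_P$ is the locally uniform limit as $h\to\infty$ of the conjugated, rescaled $\Pst_P$, and by Proposition~\ref{PsiPsol2} the only poles of $\Psi_P$ in $\zz$ lie on the hyperplanes $z_a-z_b+h\in\ka\>\Z_{\le0}$, which leave every compact set as $h\to\infty$, the limit is automatically entire in~$\zz$; the equations transfer through the same limit via Lemmas~\ref{lem K-lim}, \ref{lem X-lim}. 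Your direct approach has the virtue of being self\:-contained --- it does not lean on justifying analytic properties through the limit --- and it makes explicit exactly where the $S_\bla$-symmetry of $P$ and relation~\eqref{Psdoab} are used. The paper's approach buys brevity by recycling the analytic work already done for $\Psi_P$.
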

\begin{proof}
The statement follows from Propositions \ref{PsiPsol1}, \ref{PsiPsol2},
and Lemma \ref{lem5.13}.
\end{proof}

Denote by $\>\Srsol\:$ the space of solutions of the system of dynamical
\vv.1>
differential equations \eqref{DEQo} and \qKZ/ difference equations \eqref{Kio}
\vv.06>
spanned over $\,\C\,$ by the functions $\,\Pso_{\?P}(\zz\:;\pp\:;\ka)\,$,
$\>P\?\in\C[\:\GGd^{\pm1}]^{\:S_\bla}\?\ox\C[\:\zzd^{\pm1}]\,$.
The space $\>\Srsol\:$ is a $\,\C[\:\zzd^{\pm1}]\:$-\:module with
$f(\zzd)\,$ acting as multi\-pli\-ca\-tion by $\fdd(\zz)\,$.

\vsk.3>
Define the algebra
\vvn-.4>
\beq
\label{Krelo}
\Kco_\bla\:=\,\C[\:\GGd^{\pm1}]^{\:S_\bla}\?\ox\C[\:\zzd^{\pm1}]\>\Big/
\Bigl\bra\,\prod_{i=1}^N\prod_{j=1}^{\la_i}\,(u-\gmd_{\ij})\,=\,
\prod_{a=1}^n\,(u-\zdd_a)\Bigr\ket\,,\kern-1.6em
\vv.1>
\eeq
where $\,u\,$ is a formal variable. By \eqref{PPIo}\:,
\vvn.2>
the assignment $\,P\mapsto\Pso_{\?P}\>$ defines a homomorphism
\beq
\label{muko}
\muko:\:\Kco_\bla\to\:\Srsol\,,\qquad Y\<\mapsto\Pso_{\<Y}\>,
\eeq
of $\,\C[\:\zzd^{\pm1}]\:$-\:modules.

\vsk.2>
By Proposition \ref{PA1}, \ref{PA2}, the algebra $\,\Kco_\bla\,$ is
\vvn.3>
a free $\,\C[\:\zzd^{\pm1}]\:$-\:module generated by the classes
\beq
\label{YIo}
Y_I(\:\GG)\,=\>V_I(\gmd_{1,1}^{-1}\lc\gmd_{1,\>\la_1}^{-1}\:\lc
\gmd_{N\?,1}^{-1}\lc\gmd_{N\?,\>\la_N}^{-1})\,,
\qquad I\<\in\Il\,,\kern-2em
\vv.2>
\eeq
where the polynomials $\,V_{\<I}\>$ are defined by formula \eqref{VIx}\:.
Introduce the coordinates of solutions $\,\Pso_{\<Y_I}\>$:
\vvn-.2>
\be
\Pso_{\<Y_I}\<(\zz\:;\pp\:;\ka)\,=\>
\sum_{J\in\Il}\,\Psbo_{\?\IJ}(\zz\:;\pp\:;\ka)\,v_J\,.
\vv-.2>
\ee
\begin{thm}
\label{detYo}
Let \,$n\ge 2$\,. Then
\vvn.3>
\begin{align}
\label{detPsoY}
& \det\:\bigl(\Psbo_{\IJ}(\zz\:;\pp\:;\ka)\bigr)_{\IJ\in\>\Il}\<={}
\\[1pt]
\notag
&\,\:{}=\,\bigl(2\:\piit\;\ka\:\bigr)^{n\:(n-1)\>d^{(2)}_\bla\!\</\:2}\:
\ka^{\:\la\+1d_\bla}\>
\Bigl(\:e^{-\pii\,(n-1)\>d^{(2)}_\bla}\>\prod_{i=1}^N\,
p_i^{\:\smash{d^{(1)}_{\bla,i}}\vp|}\>\Bigr)^{\sum_{a=1}^n z_a\</\<\ka}\;
\prod_{j=2}^{n-1}\,j^{\:(n-j)\>d^{(2)}_\bla}\:,
\\[-25pt]
\notag
\end{align}
where $\;d_\bla\,,\,d^{(1)}_{\bla\:,\:i}\,,\,d^{(2)}_\bla$ are given by
formulae \,\eqref{dla12}\:.
\end{thm}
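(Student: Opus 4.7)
The plan is to derive \eqref{detPsoY} by mirroring the passage from Theorem \ref{thmdet} to Theorem \ref{detY}. Unfolding the definitions \eqref{Pso} and \eqref{PPIo} and substituting into the defining expansion $\Pso_{\<Y_I}=\sum_J\Psbo_{\?\IJ}\,v_J$ yields the matrix factorization
\begin{equation*}
\bigl(\Psbo_{\?\IJ}(\zz\:;\pp\:;\ka)\bigr)_{\IJ\in\Il}\,=\,A\cdot B,
\end{equation*}
where $A_{\?\IJ}=\Ydd_I(\zz_{\si_J};\zz\:;\ka)$ and $B_{KJ}=\ka^{-\la\+1}\Omo_\bla(\pp\:;\ka)\,\Mc_K(\Pho_\bla\Wo_{\?J})(\zz\:;\pp\:;\ka)$. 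Consequently $\det\bigl(\Psbo_{\?\IJ}\bigr)=\det(A)\det(B)$, and it suffices to evaluate each factor.

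The determinant of $B$ follows at once from Theorem \ref{thmdeto}: extracting the scalar $\ka^{-\la\+1}$ from each of the $d_\bla$ rows produces $\ka^{-\la\+1 d_\bla}$, which exactly cancels the corresponding $\ka^{\la\+1 d_\bla}$ in \eqref{detPso}, so that
\begin{equation*}
\det(B)\,=\,\prod_{i=1}^N p_i^{\:d^{(1)}_{\bla,i}\sum_a z_a/\ka}\,\prod_{1\le a<b\le n}\biggl(\frac{\pi\:\ka}{\sin\bigl(\pi\:(z_a-z_b)/\ka\bigr)}\biggr)^{\!d^{(2)}_\bla}.
\end{equation*}
For the determinant of $A$, I would invoke formula \eqref{detVI}, the same determinantal identity for the polynomials $V_{\<I}$ that is used in the derivation of Theorem \ref{detY} from Theorem \ref{thmdet}. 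Converting the multiplicative Vandermonde differences $\zdd_a-\zdd_b$ (where $\zdd_a=e^{2\piit z_a/\ka}$) into trigonometric form produces the explicit factors $(2\piit\:\ka)^{n(n-1)d^{(2)}_\bla/2}$, $\ka^{\la\+1 d_\bla}$, $e^{-\pii(n-1)d^{(2)}_\bla\sum_a z_a/\ka}$, $\prod_{j=2}^{n-1}j^{(n-j)d^{(2)}_\bla}$, and a reciprocal sine product that cancels the sine product appearing in $\det(B)$. The product $\det(A)\det(B)$ is then precisely the right-hand side of \eqref{detPsoY}.

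The main obstacle is the careful bookkeeping in evaluating $\det(A)$: one must correctly track the phase $e^{-\pii(n-1)d^{(2)}_\bla\sum_a z_a/\ka}$, opposite in sign to the $e^{+\pii(n-1)d^{(2)}_\bla\sum_a z_a/\ka}$ appearing in Theorem \ref{detY}, as well as the combinatorial prefactor $\prod_{j=2}^{n-1}j^{(n-j)d^{(2)}_\bla}$ that arises from the repeated evaluation in the Vandermonde. The sign reversal reflects the absence from $\Omo_\bla(\pp\:;\ka)$ of \eqref{Omlo} of the phase factors $e^{\pii(\la_i-n)}$ present in the substitution \eqref{qp} relating $\qq$ to $\pp$; apart from this sign change, the manipulations that convert \eqref{detPsi} to \eqref{detPsiY} apply verbatim in the limiting case. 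As a cross\:-check, an alternative route is to take the $h\to\infty$ limit of \eqref{detPsiY} directly via Proposition \ref{lem5.5} applied entrywise to $(\Psbo_{\?\IJ})$, using the combinatorial identities $d_\bla\,\la_{\{2\}}=n(n-1)d^{(2)}_\bla=2\sum_{J\in\Il}|\si_J|$ and Stirling's formula \eqref{Stir} to reduce the Gamma\:-function product of \eqref{detPsiY}.
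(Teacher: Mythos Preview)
Your approach is correct and matches the paper's proof exactly: the paper derives \eqref{detPsoY} from Theorem \ref{thmdeto} together with formula \eqref{detVI} --- precisely your factorization $\Psbo=A\cdot B$ --- and notes as an alternative the $h\to\infty$ limit of Theorem \ref{detY}, which is your cross\:-check (the precise reference for the entrywise limit of $\Pso_{Y_I}$ is Lemma \ref{lem5.13} rather than Proposition \ref{lem5.5}, though the latter underlies it).
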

\begin{proof}
The statement follows from Theorem \ref{thmdeto} and formula \eqref{detVI}\:.
Alternatively, the statement follows from Theorem \ref{detY} and
Lemma \ref{lem5.13}\:.
\end{proof}

\begin{cor}
\label{muko=}
The map $\,\muko:\:\Kco_\bla\<\to\Srsol\>$ is an isomorphism of
$\;\C[\:\zzd^{\pm1}]\:$-\:modules.
\end{cor}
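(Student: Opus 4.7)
The plan is to deduce Corollary~\ref{muko=} directly from the nonvanishing determinant formula in Theorem~\ref{detYo}, using that $\Kco_\bla$ is a free $\C[\zzd^{\pm1}]$-module of rank $|\Il|$ generated by the classes $\{Y_I\}_{I\in\Il}$ (as guaranteed by Propositions~\ref{PA1}, \ref{PA2}).

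First I would dispose of surjectivity. By the definition of $\Srsol\>$ (see Section~\ref{seclauro}), every element is a $\C$-linear combination of the functions $\Pso_{\?P}(\zz\:;\pp\:;\ka)\,$ for Laurent polynomials $P\in\C[\:\GGd^{\pm1}]^{\:S_\bla}\?\ox\C[\:\zzd^{\pm1}]\,$, and by \eqref{muko} each such $\Pso_{\?P}$ is in the image of $\muko$. Hence $\muko$ is onto.

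For injectivity I would exploit the fact that any $Y\<\in\Kco_\bla\,$ can be uniquely written as $\,Y=\sum_{I\in\Il}f_I(\zzd)\,Y_I$ with $f_I\in\C[\:\zzd^{\pm1}]$. Applying the $\C[\:\zzd^{\pm1}]$-linear map $\muko\,$ and expanding in the standard basis $\,\{\:v_J\,\}_{J\in\Il}\,$ of $\,\Cnnl$, the condition $\muko(Y)=0\,$ becomes the linear system
\be
\sum_{I\in\Il}\,\fdd_I(\zz\:;\ka)\,\Psbo_{\IJ}(\zz\:;\pp\:;\ka)\,=\,0\,,
\qquad J\in\Il\,.
\ee
Theorem~\ref{detYo} asserts that $\det\bigl(\Psbo_{\IJ}(\zz\:;\pp\:;\ka)\bigr)_{\IJ}\,$ equals an explicit nonzero product of powers of $\,\ka\,$, $p_i$, and factors such as $\sin$ terms, and in particular is nonvanishing as a function on a dense open set of $(\zz\:;\pp)$. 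Consequently the matrix $\bigl(\Psbo_{\IJ}\bigr)$ is invertible over the field of meromorphic functions, which forces $\,\fdd_I(\zz\:;\ka)\equiv 0\,$ for every $I\<\in\Il\,$, i.e., $\,f_I=0\,$ in $\,\C[\:\zzd^{\pm1}]$, and so $Y=0$.

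Combined, surjectivity and injectivity give that $\,\muko:\Kco_\bla\<\to\Srsol\,$ is an isomorphism of $\,\C[\:\zzd^{\pm1}]\:$-\:modules. The only substantive input is the nonvanishing of the determinant in Theorem~\ref{detYo}, which has already been established; all the remaining work is a routine passage from the determinantal nonvanishing to the linear independence of the solutions over $\,\C[\:\zzd^{\pm1}]\,$. I expect no genuine obstacle at this step beyond checking that $\C[\:\zzd^{\pm1}]$-coefficients, viewed as $\ka$-periodic functions $\fdd_I(\zz\:;\ka)\,$, can indeed be cancelled against the nonzero determinant—which is immediate since the determinant is a nowhere-zero meromorphic function of $(\zz\:;\pp)\,$.
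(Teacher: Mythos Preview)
Your proposal is correct and follows essentially the same route as the paper, which states the corollary immediately after Theorem~\ref{detYo} with the implicit argument that the nonvanishing determinant \eqref{detPsoY} forces the generators $\{\Pso_{Y_I}\}_{I\in\Il}$ to be $\C[\:\zzd^{\pm1}]$-linearly independent, while surjectivity is built into the definition of $\Srsol$. One small slip: the determinant in \eqref{detPsoY} contains no $\sin$ factors (those appear in \eqref{detPso}, not here); in fact \eqref{detPsoY} is manifestly nowhere zero for $p_i\ne0$, so your ``dense open set'' qualification is unnecessary.
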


\begin{rem}
The algebra $\,\Kco_\bla\:$ is the equivariant $\>K\?$-theory algebra
$\,K_T(\Fla\>;\C)\>$ of the partial flag variety $\,\Fla\>$,
see Section \ref{s6.6}.
\end{rem}

\subsection{Levelt fundamental solution}
\label{secLevo}

Recall Definition \ref{entdef} of a function $\,f(\pp)\,$ entire
\vvn.07>
in $\,\pp_{\!\dvs}$.
The dynamical Hamiltonians \,$\Xo_1(\zz\:;\pp)\lc\Xo_n(\zz\:;\pp)\,$ given by
\eqref{Xo} are entire in $\,\pp_{\!\dvs}$ and
\vvn.2>
\beq
\label{Xo0}
\Xo_i(\zz\:;\0)\,=\,
\sum_{a=1}^n\,z_a\,e^{(a)}_{\ii}-\sum_{1\le a<b\le n}\?\biggl(\,
\sum_{j=1}^{i-1}\>Q_{\ij}^{\>\ab}\:-\?
\sum_{j=i+1}^N\?Q_{\ij}^{\>\ba}\>\biggr)\,.
\kern-.2em
\eeq
Notice that for $\,I\<\in\Il\,$,
\vvn.1>
\be
\Xo_i(\zz\:;\0)\,v_I\>=\>\sum_{\:a\in I_i}z_a\:v_I\:+\!\<
\sum_{\satop{J\in\:\Il}{|\si_J\<|<|\si_I\<|}}\!\xi_{\:i,\:\IJ}\,v_J\,,
\ee
where the coefficients $\,\xi_{\:i,\:\IJ}\>$ take values $\;0\:,\pm\>1\,$.
\vv.1>
Therefore, the eigenvalues of the restriction of the operator
$\,\Xo_i(\zz\:;\0)\,$ on $\,\Cnnl\,$ are $\,\sum_{\:a\in I_i}\<z_a\,$,
$\,I\<\in\Il\,$. A more detailed statement is given
by Proposition \ref{egvXo0} below.

\vsk.2>
Recall the function $\,\Psdo_{\<I\<,\:0}\:(\zz)\,$, $\,I\<\in\Il\,$,
given by \eqref{Psd0o}\:.

\begin{prop}
\label{egvXo0}
Given $\,I\<\in\Il\,$, we have
\vvn.16>
$\;\Xo_i(\zz\:;\0)\,\Psdo_{\<I\<,\:0}\:(\zz)\:=\:
\sum_{\:a\in I_i}\?z_a\,\Psdo_{\<I\<,\:0}\:(\zz)\,$, and
$\,\Psdo_{\<I\<,\:0}\:(\zz)\ne 0\>$ provided $\,z_a\<\ne z_b\,$ for all pairs
$\,a,b\>$ such that $\,a<b\>$ and $\,\si_I^{-1}(a)>\si_I^{-1}(b)\,$.
\end{prop}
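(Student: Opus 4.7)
The plan is to mirror the structure of the proof of Proposition \ref{egvX0}, using the limiting analogues of its ingredients. The eigenvalue identity will be extracted from the fact that $\Pso_{\?I}(\zz;\pp;\ka)$ solves dynamical differential equations \eqref{DEQo} (Theorem \ref{thmcyo}), while the nonvanishing will be read off from formula \eqref{Psd0o} together with Lemma \ref{WIzo}.

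For the eigenvalue assertion, I will write $\,\Pso_{\?I}\,$ in the factored form \eqref{PsiPsdo}\:, isolating the $\,\pp\,$-dependence as the product $\,\Psdo_{\?I}(\zz;\pp;\ka)\cdot\prod_j p_j^{(\sum_{a\in I_j}z_a)/\ka}\,$, the remaining scalar prefactor depending only on $\,\zz,\ka\,$. Substitution into \eqref{DEQo} and cancellation of that prefactor reduce the dynamical equation to
\be
\ka\>p_i\:\frac{\der\>\Psdo_{\?I}}{\der\:p_i}\,+\,
\Bigl(\,\sum_{a\in I_i}z_a\Bigr)\,\Psdo_{\?I}\,=\,
\Xo_i(\zz;\pp)\,\Psdo_{\?I}\,.
\ee
I will then take the limit $\,p_{j+1}/p_j\to 0\,$ for all $\,j=1\lc N-1\,$. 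Since $\,\Xo_i(\zz;\pp)\,$ is entire in $\,\pp_{\!\dvs}\,$, it tends to $\,\Xo_i(\zz;\0)\,$; since every $\,\mb\ne 0\,$ term in expansion \eqref{Psdo} of $\,\Psdo_{\?I}\,$ carries a positive power of some ratio $\,p_{j+1}/p_j\,$, the derivative $\,\ka\>p_i\:\der_{p_i}\<\Psdo_{\?I}\,$ vanishes in the limit, while $\,\Psdo_{\?I}\,$ itself tends to a nonzero scalar (in $\,\zz,\ka\,$) multiple of $\,\Psdo_{\?I,0}(\zz)\,$. Cancelling this scalar yields the desired eigenvalue identity.

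For the nonvanishing, formula \eqref{Psd0o} expresses $\,\Psdo_{\?I,0}(\zz)\,$ as a linear combination of basis vectors $\,v_J\,$ in which the coefficient of $\,v_I\,$ equals $\,\Wo_{\?I}(\Si_I;\zz)\,$. By \eqref{WIIo} this coefficient equals $\,\prod_{j<k}\prod_{a\in I_j}\prod_{b\in I_k,\,b<a}(z_a\<-z_b)\,$, so it suffices to check that the indexing pairs $\,(a,b)\,$ with $\,b<a,\,a\in I_j,\,b\in I_k,\,j<k\,$ coincide, after relabelling $\,(a',b')=(b,a)\,$, with the pairs $\,a'<b'\,$ satisfying $\,\si_I^{-1}(a')>\si_I^{-1}(b')\,$ in the hypothesis. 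Since $\,\si_I\,$ is of minimal length subject to $\,\si_I(\Imil)=I\,$, it places the elements of each $\,I_k\,$ into the positions $\,\{\la^{(k-1)}\!+1\lc\la^{(k)}\}\,$ in increasing order; hence for distinct $\,a'<b'\,$ one has $\,\si_I^{-1}(a')>\si_I^{-1}(b')\,$ if and only if $\,a'\in I_k\,,\:b'\in I_j\,$ with $\,k>j\,$, as required.

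The analytic step is mechanical once the limiting expansion \eqref{Psdo} and the behaviour of $\,\Xo_i\,$ at $\,\pp=\0\,$ are at hand, so the only genuine obstacle is the combinatorial verification identifying the two index sets in the last paragraph; this is the sole place where the minimality of $\,\si_I\,$ is essential.
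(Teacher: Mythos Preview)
Your proposal is correct and takes essentially the same approach as the paper, which simply cites Theorem~\ref{thmcyo} for the eigenvalue assertion and formula~\eqref{Psd0o} together with Lemma~\ref{WIzo} for the nonvanishing. You have spelled out in detail the limiting argument implicit in the first citation and the combinatorial identification of index sets implicit in the second; the only minor point left tacit is that the scalar prefactor in~\eqref{Psdo} is generically nonzero, so the eigenvalue identity, once established for generic~$\zz$, extends to all~$\zz$ by polynomiality of both sides.
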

\begin{proof}
The first part of the statement follows from Theorem \ref{thmcyo}.
The nonvanishing of $\,\Psdo_{\<I\<,\:0}\:(\zz)\,$ is implied by
formula \eqref{Psd0o} and Lemma \ref{WIzo}.
\end{proof}

For $\,I\<\in\Il\,$, set
$\,\EE_I(\zz)=\bigl(E^{(1)}_I(\zz)\lc E^{(N\<-1)}_I(\zz)\bigr)\,$,
\vv.05>
where $\,E^{(i)}_I\<(\zz)=\sum_{j=1}^i\:\sum_{\:a\in I_j}\<z_a\,$
is the eigenvalue of
$\,\Xo_1(\zz\:;\0)\lsym+\Xo_i(\zz\:;\0)\,$ on $\,\:\Psdo_{\<I\<,\:0}\:(\zz)\,$.
\vv.07>
For $\,I\<,J\<\in\Il\,$, denote by \,$D_{\IJ}$ the set of points $\,\zz\,$
such that $\EE_I(\zz)-\EE_J(\zz)\in\Z^{N\<-1}_{\ge 0}$ and
$\,\EE_I(\zz)\ne\EE_J(\zz)\,$.
Set $\,D_\bla\<=\:\bigcup_{\:\IJ\in\:\Il}D_{\IJ}\,$.

\begin{thm}
\label{Bthmo}
{\rm(\:i\:)}\enspace
For any $\,\zz\>$ such that $\,z_a\<-z_b\not\in\<\ka\>\Z_{\:\ne\:0}\>$
\vvn.1>
for all $\,\:1\le a<b\le n\,$, there exists an
$\,\:\End\:\bigl(\Cnnl\bigr)$-\:valued function $\,\Pspo\<(\zz\:;\pp\:;\ka)\,$,
\vvn.1>
entire in $\,\pp_{\!\dvs}\:$ such that $\,\Pspo\<(\zz\:;\0\:;\ka)\>$ is
the identity operator and the function
\vvn-.4>
\beq
\label{Psopnd}
\Psho\<(\zz\:;\pp\:;\ka)\,=\,\Pspo\<(\zz\:;\pp\:;\ka)\,\:
\prod_{i=1}^N\,p_i^{\>\smash{\Xo_i(\zz\:;\0)}/\ka}\>,\kern-1.2em
\vv.1>
\eeq
solves dynamical differential equations \eqref{DEQo}.
\vv.1>
For given $\,\zz\,$, the function $\,\Pspo\<(\zz\:;\pp\:;\ka)\>$ with the
specified properties is unique if and only if $\,\zz\not\in\<\ka\:D_\bla\,$.
Furthermore, $\,\,\det\:\Psp(\zz\:;\pp\:;\ka)=1\,$ and
\vvn-.4>
\beq
\label{detPsho}
\det\:\Psho\<(\zz\:;\pp\:;\ka)\,=\,\prod_{i=1}^N\,
p_i^{\:\smash{d^{(1)}_{\bla,i}}\:\sum_{a=1}^n z_a\</\<\ka}\,,\kern-1.4em
\eeq
where $\;d^{(1)}_{\bla\:,\:1}\:\lc d^{(1)}_{\bla\:,\:N}$ are given
by formula \,\eqref{dla12}\:.
\vsk.3>
\noindent
{\rm(\:ii\:)}\enspace
Define the function $\>\Pspo\<(\zz\:;\pp\:;\ka)$ for generic $\,\zz\>$ as in
\vvn.1>
item \>{\rm(\:i\:)}\:. Then $\>\Pspo\<(\zz\:;\pp\:;\ka)$ is holomorphic
in $\,\zz\>$ if $\,{z_a\<-z_b\not\in\<\ka\>\Z_{\:\ne\:0}}\>$ for all
$\,\:{1\le a<b\le n}\,$. The singularities of $\,\Pspo\<(\zz\:;\pp\:;\ka)$
at the hyperplanes $\,z_a\<-z_b\in\<\ka\>\Z_{\:\ne\:0}\>$ are simple poles.
\end{thm}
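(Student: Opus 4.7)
The plan is to adapt the proof of Theorem \ref{Bthm} to the limiting setting, using the parallel tools already established: the hypergeometric solutions $\Pso_{\?J}$ of Theorem \ref{thmcyo}, the biorthogonality of $\Wo_{\?I}$ and $\WWo_{\!\<J}$ in Lemma \ref{lemorto}, the eigenvalue information of Proposition \ref{egvXo0}, and the nonvanishing determinant of Theorem \ref{detYo}. A homogeneity reduction as in Section \ref{details} lets me assume $\,\ka=1\,$ throughout.

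For uniqueness, I would first prove an analog of Lemma \ref{lemA}: given $\,A\in\End(\Cnnl)\,$, if the function $\,F_A(\zz\:;\pp)=\prod_i\:p_i^{\>\smash{\Xo_i(\zz\:;\:\0)}}\,A\,\prod_i\:p_i^{\smash{-\Xo_i(\zz\:;\:\0)}}\,$ is entire in $\,\pp_{\!\dvs}\,$ and reduces to the identity at $\,\pp=\0\,$, then $\,A=1\,$ precisely when $\,\zz\not\in D_\bla\,$. This is a direct linear-algebra argument using the simultaneous spectral decomposition of the commuting operators $\,\Xo_1(\zz\:;\0)\lc\Xo_N(\zz\:;\0)\,$, whose joint eigenvalues are the vectors $\,\EE_I(\zz)\,$ of Proposition \ref{egvXo0} with joint eigenvectors $\,\Psdo_{\<I\<,\:0}\:(\zz)\,$; the obstruction to uniqueness comes exactly from pairs $\,(\IJ)\,$ with $\,\EE_I(\zz)-\EE_J(\zz)\in\Z^{N-1}_{\ge0}\,$ and $\,\EE_I(\zz)\ne\EE_J(\zz)\,$, which is the defining condition of $\,D_{\IJ}\,$.

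For existence, I would write down the candidate
\begin{align*}
\Pspo(\zz\:;\pp\:;\ka)\::\:v_I\,\mapsto\,\sum_{J\in\Il}\>\Psdo_{\?J}(\zz\:;\pp\:;\ka)\,\frac{\WWo_{\?I}(\Si_J\:;\zz)}{R_\bla(\zz_{\si_J})}\>\prod_{i<j}\,\prod_{\substack{a\in J_i\\b\in J_j}}\!\<\frac{\sin\bigl(\pi\>(z_a\<-z_b)/\ka\bigr)}{\pi}
\end{align*}
and verify its two defining properties. The identity $\,\Pspo(\zz\:;\0\:;\ka)=1\,$ follows from formula \eqref{Psd0o} together with the second biorthogonality relation in \eqref{ortho} (the limiting analog of the computation using \eqref{Psd0=} and \eqref{orth} in the proof of Theorem \ref{Bthm}). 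To check that $\,\Psho(\zz\:;\pp\:;\ka)=\Pspo\<(\zz\:;\pp\:;\ka)\>\prod_i p_i^{\>\smash{\Xo_i(\zz;\:\0)}/\ka}\,$ solves \eqref{DEQo}, I would use Proposition \ref{egvXo0} and the first biorthogonality identity of \eqref{ortho} to rewrite $\,\Psho v_I\,$ as a $\,\pp\:$-independent linear combination of the solutions $\,\Pso_{\?J}(\zz\:;\pp\:;\ka)\,$, mirroring the argument following \eqref{PspI}; the solution property is then inherited from Theorem \ref{thmcyo}. The determinant formula \eqref{detPsho} follows because $\,\det\:\Psho\,$ satisfies the scalar equations $\,\ka\:p_i\>\der_{p_i}\<\log\:\det\:\Psho=\tr\:\Xo_i(\zz\:;\pp)|_{\Cnnl}=d^{(1)}_{\bla,i}\,(z_1\<\lsym+z_n)\,$, the trace being independent of $\,\pp\,$ by inspection of \eqref{Xo}; hence $\,\det\:\Pspo\,$ is constant in $\,\pp\,$ and equals $\,1\,$ by normalization at $\,\pp=\0\,$. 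For part (ii), the entries of $\,\Pspo\,$ are built from $\,\Psdo_{\?J}\,$ (entire in $\,\zz\,$ modulo simple poles on $\,z_a\<-z_b\in\ka\>\Z_{>0}\,$ by Theorem \ref{thmcyo}) and from the rational factor $\,\WWo_{\?I}(\Si_J\:;\zz)/R_\bla(\zz_{\si_J})\,$; apparent singularities at $\,z_a=z_b\,$ cancel via the symmetry \eqref{Psdoab}, exactly as in the proof of Theorem \ref{Bthm}, leaving only simple poles on the hyperplanes $\,z_a\<-z_b\in\ka\>\Z_{\ne0}\,$.

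The main obstacle is the verification that the explicit $\,\Pspo\,$ above turns $\,\Psho v_I\,$ into a linear combination of the $\,\Pso_{\?J}\,$ independent of $\,\pp\,$. This requires a careful double application of Lemma \ref{lemorto}, combined with the exact matching of the exponential prefactors in \eqref{PsiPsdo} with the eigenvalue exponents of $\,\prod_i p_i^{\>\smash{\Xo_i(\zz;\:\0)}/\ka}\,$ on the basis $\,\{\Psdo_{\<I\<,\:0}(\zz)\}\,$. Once this identification is secured, uniqueness from the first step plus the regularity checks close the proof, in complete parallel with the $\,h\:$-\:dependent case of Theorem \ref{Bthm}; alternatively, one may deduce the whole statement by taking the $\,h\to\infty\,$ limit of Theorem \ref{Bthm} along the substitution \eqref{qp} and using Proposition \ref{lem5.5}, but the direct verification is cleaner and avoids tracking the conjugation factors of Lemmas \ref{lem K-lim}, \ref{lem X-lim}.
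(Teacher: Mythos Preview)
Your overall strategy is sound: rerunning the proof of Theorem~\ref{Bthm} with the limiting ingredients (Theorem~\ref{thmcyo}, Lemma~\ref{lemorto}, Proposition~\ref{egvXo0}) in place of their $h$-dependent counterparts does work. The paper, however, takes precisely the alternative you set aside at the end. For existence it writes $\Pspo$ as the explicit series \eqref{Jco}--\eqref{Pspo}, and then verifies all the required properties at once by showing $\Pspo_{\IJ}=\lim_{h\to\infty}\Pspt_{\IJ}$ locally uniformly in $\zz,\pp$ (formula~\eqref{limPsp}, via the Stirling estimate of Lemma~\ref{Stirb}), so the conclusions of Theorem~\ref{Bthm} transfer directly. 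That limit argument is short and, as a bonus, yields the series representation \eqref{PspIJo} used downstream in Sections~\ref{s6.6}--\ref{sec:top}; your direct route is self-contained but would need a separate step to recover that series.

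There is also a concrete slip in your candidate formula: with the factor $\prod\sin\bigl(\pi(z_a\<-z_b)/\ka\bigr)/\pi$, the normalization $\Pspo(\zz\:;\0\:;\ka)=1$ does not hold. By \eqref{Psdo} one has $\Psdo_J(\zz\:;\0\:;\ka)=\Psdo_{J,0}(\zz)\,\prod_{i<j}\prod_{a\in J_i,\,b\in J_j}\pi\big/\Gm\bigl(1+(z_b\<-z_a)/\ka\bigr)$, so applying the first relation in \eqref{ortho} requires the remaining scalar to reduce to $1/R_\bla(\zz_{\si_J})$; your sine factor leaves an uncancelled $\Gm$-product. The correct multiplier is $\prod_{i<j}\prod_{a\in J_i,\,b\in J_j}\Gm\bigl(1+(z_b\<-z_a)/\ka\bigr)/\pi$. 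With that correction (and tracking the $\ka$-powers from \eqref{PsiPsdo} when matching against $\Pso_J$), the rest of your direct verification goes through as written.
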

\begin{proof}
The proof of the uniqueness statement is similar to that in Theorem \ref{Bthm}.
To prove the existence part, we give an explicit expression for the function
$\,\:\Pspo\<(\zz\:;\pp\:;\ka)\,$, see formulae \eqref{PspIJo}\:,
\eqref{Pspo}\:.

\vsk.2>
Recall the function $\,\Ao\<(\TT\:;\zz\:;\ka)\,$ at $\,\ka=1\,$,
see \eqref{Ato}\:,
\vvn.4>
\beq
\label{Ato1}
\Ao\<(\TT\:;\zz)\,=\,\prod_{i=1}^{N-1}\,
\prod_{a=1}^{\la^{(i)}}\;\biggl(\,\prod_{\satop{b=1}{b\ne a}}^{\la^{(i)}}
\>\Gm(1+t^{(i)}_b\!-t^{(i)}_a)\prod_{c=1}^{\la^{(i+1)}}\:
\frac1{\Gm(1+t^{(i+1)}_c\!-t^{(i)}_a)}\,\biggr)\>,\kern-2em
\eeq
where $\,\la^{(N)}\?=n\,$ and $\,t^{(N)}_a\?=z_a\,$, $\;a=1\lc n\,$.
Notice that
\vvn.3>
\be
\Ao\<(\Si_I\:;\zz)\,=\>
\prod_{i=1}^{N-1}\prod_{j=i+1}^N\,\:\prod_{a\in I_i}\>\prod_{b\in I_j}
\,\frac1{\Gm(z_b\<-z_a\<+1)}\;.\kern-1.2em
\vv-.2>
\ee
For $\,\lb\<\in\Z^{\:\la\+1}\!$, set
\vvn-.1>
\beq
\label{Jco}
\Jco_{\IJ,\,\lb}(\zz)\,=\>\sum_{K\in\Il}\>\frac{\Ao\<(\Si_K\?-\lb\:;\zz)
\,\Wo_{\?I}(\Si_K\?-\lb;\zz)\,\WWo_{\!J}(\Si_K;\zz)}
{\Aob\<(\Si_K;\zz)\,R_\bla(\zz_{\si_K}\<)}\;.\kern-1em
\eeq
\vsk.3>
Recall the function $\,\Omo_\bla(\pp\:;\ka)=
e^{\:\sum_{\:i<j}^{\cirs}p_j\</(\ka\:p_i)}$,
\vv.1>
where the sum is taken over all pairs $\,1\le i<j\le N\>$ such that
\vv.16>
$\,\la_i\<=1\,$ and $\,\la_s\<=0\,$ for all $\,s=i+1\lc j\,$,
see \eqref{Omlo}\:. Set
\beq
\label{PspIJo}
\Pspo_{\?\IJ}(\zz\:;\pp\:;\ka)\>=\,\Omo_\bla(\pp\:;\ka)\!\<
\sum_{\lb\in\Z_{\ge0}^{\:\la\+1}\!\!}\!\<
\ka^{\>|\si_I|\:-\:|\si_J|}\,\Jco_{\IJ,\,\lb}(\zz/\ka)\,
\prod_{i=1}^{N-1}\>\bigl(\<(-\:\ka)^{-\la_i-\la_{i+1}}\>
p_{i+1}/p_i\:\bigr)^{\>\sum_{a=1}^{\la^{(i)}}l_a^{(i)}}\!.\kern-.8em
\vv.1>
\eeq
Let $\,\Pspo\<(\zz\:;\pp\:;\ka)\,$ be the linear operator with the entries
\vvn.1>
$\,\:\Pspo_{\?\IJ}(\zz\:;\pp\:;\ka)\,$ in the standard basis
$\,\{\:v_I\,,\alb\,I\<\in\Il\:\}\,$ of $\,\Cnnl$,
\vvn-.2>
\beq
\label{Pspo}
\Pspo\<(\zz\:;\pp\:;\ka)\::\:v_J\,\mapsto\>
\sum_{I\in\Il}\,\Pspo_{\?\IJ}(\zz\:;\pp\:;\ka)\,v_I\,.\kern-1em
\eeq

To verify that the function $\,\Pspo\<(\zz\:;\pp\:;\ka)\,$ is as required in
Theorem \ref{Bthmo}, recall the function $\,\:\Psp(\zz\:;h;\qq)\,$, introduced
in Theorem \ref{Bthm} and its entries $\,\:\Psp_{\<\IJ}(\zz\:;h;\qq)\,$.
Let the functions $\,\:\Pspt_{\<\IJ}(\zz\:;h;\pp\:;\ka)\,$ be obtained from
$\,\:\Psp_{\<\IJ}(\zz/\ka\:;h/\ka;\qq)\,$ by substitution \eqref{qp}\:.
\vv.1>
Formulae \eqref{Jc}\:, \eqref{PspIJ}\:, \eqref{Jco}\:, \eqref{PspIJo}\:,
and Lemma \ref{Stirb} imply that
\vvn.3>
\beq
\label{limPsp}
\Pspo_{\?\IJ}(\zz\:;\pp\:;\ka)\,=\>
\lim_{\,h\to\infty}\:\Pspt_{\<\IJ}(\zz\:;h;\pp\:;\ka)\kern-1em
\vv.1>
\eeq
locally uniformly in $\,\zz,\pp\,$. Therefore, the properties of
$\,\:\Psp(\zz\:;h;\qq)\,$ established in Theorem \ref{Bthm} \:yield
the properties of $\,\:\Pspo\<(\zz\:;\pp\:;\ka)\,$ required
in Theorem \ref{Bthmo}.
\end{proof}

\vsk.2>
Following \cite[Chapter 2]{AB}\:, we will call
\vvn.1>
$\,\Psho\<(\zz\:;\pp\:;\ka)\,$ the {\it Levelt fundamental solution\/} of
dynamical differential equations~\eqref{DEQo} on $\:\Cnnl\:$, see also
\cite[Section~6.2]{CV}\:.

\vsk.3>
For a solution $\,\Pso(\zz\:;\pp\:;\ka)\,$ of dynamical differential
equations \eqref{DEQo}\:, define its {\it principal term\/}
\vvn-.1>
\beq
\label{princo}
\Pszo(\zz\:;\ka)\,=\,\Psho\<(\zz\:;\pp\:;\ka)^{-1}\,\Pso(\zz\:;\pp\:;\ka)\,.
\kern-1em
\vv.5>
\eeq
By Theorem \ref{Bthmo}, the principal term does not depend on $\,\pp\,$.

\vsk.3>
Set
\vvn-.7>
\beq
\label{CGGo}
\Co_\bla(\zz\:;\ka)\,=\,\prod_{i=1}^N\>
\ka^{\:\left(\:\sum_{j=i+1}^N\la_j\,-\,\sum_{j=1}^{i-1}\:\la_j\<\right)
\sum_{a=\smash{\la^{\?(i-1)}}\<+1}^{\la^{(i)}}z_a\</\<\ka}\kern-2em
\vv-.5>
\eeq
and
\vvn-.2>
\beq
\label{GGGo}
\Go_\bla(\zz\:;\ka)\,=\,\prod_{i=1}^{N-1}\>\prod_{a=1}^{\la^{(i)}}\,
\prod_{b=\la^{(i)}+1}^{\la^{(i+1)}}\!\Gm\bigl(\<(z_a\<-z_b)/\ka\bigr)\,.
\kern-.5em
\vv.3>
\eeq

\begin{prop}
\label{PsiPpro}
For a Laurent polynomial $\,P(\:\GGd\:;\zzd)\,$, the principal term of
the solution $\,\Pso_{\?P}(\zz\:;\pp\:;\ka)\,$, given by \eqref{PPIo}\:, equals
\vvn.5>
\beq
\label{PsoP0}
\Pszo_{\?P}(\zz\:;\ka)\,=\?\sum_{\IJ\in\:\Il}\?\Pdd(\zz_{\si_J};\zz\:;\ka)\,
\Co_\bla(\zz_{\si_J};\ka)\,\Go_\bla(\zz_{\si_J};\ka)\,\Wo_I(\Si_J\:;\zz)
\,v_I\,.
\kern-2em
\vv.2>
\eeq
\end{prop}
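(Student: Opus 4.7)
The plan is to mirror the (essentially one-line) proof of Proposition \ref{PsiPpr}. Denote by $\Psbo_{\?P}(\zz\:;\ka)$ the right-hand side of formula \eqref{PsoP0}; it does not depend on $\pp$. The goal is to establish the identity
\[
\Psho\<(\zz\:;\pp\:;\ka)\,\Psbo_{\?P}(\zz\:;\ka) \,=\, \Pso_{\?P}(\zz\:;\pp\:;\ka)\,,
\]
whereupon the claim follows immediately from definition \eqref{princo} of the principal term.

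To verify this identity, I would decompose $\Pso_{\?P} = \sum_{J\in\Il} \Pdd(\zz_{\si_J};\zz\:;\ka)\,\Pso_{\?J}$ via \eqref{PPIo}, and apply Theorem \ref{thmcyo} to factor each $\Pso_{\?J}$ into the explicit exponential $p_i$-prefactor displayed in \eqref{PsiPsdo}, the sine factors in $\zz$, and a vector entire in $\pp_{\!\dvs}$ whose $\mb=0$ component is the polynomial $\Psdo_{\<J,\:0}(\zz) = \sum_{I\in\Il} \Wo_I(\Si_J\:;\zz)\,v_I$ from \eqref{Psd0o}. By Proposition \ref{egvXo0}, this seed is a joint eigenvector of the operators $\Xo_1(\zz\:;\0)\lc\Xo_N(\zz\:;\0)$ with eigenvalues $\sum_{a\in J_i}z_a$, so the exponential prefactor of $\Pso_{\?J}$ matches exactly the action of $\prod_i p_i^{\Xo_i(\zz\:;\0)/\ka}$ on $\Psdo_{\<J,\:0}$, modulo the $\ka$-shifts absorbed into $\Co_\bla(\zz_{\si_J};\ka)$ of \eqref{CGGo}. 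The remaining sine factors in $\zz$ are then converted into the Gamma factors composing $\Go_\bla(\zz_{\si_J};\ka)$ in \eqref{GGGo} via the reflection identity $\Gm(x)\Gm(1-x)=\pi/\sin(\pi x)$.

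The main technical step is handling the higher-order corrections $\Psdo_{\<J,\:\mb}$ with $\mb\ne 0$ in \eqref{Psdo}: these must be absorbed precisely by the non-identity part of the operator $\Pspo(\zz\:;\pp\:;\ka)$ in the factorization \eqref{Psopnd}. Rather than tracking this matching explicitly, the cleanest route is to observe that both $\Psho\,\Psbo_{\?P}$ and $\Pso_{\?P}$ are $\Cnnl$-valued solutions of the dynamical differential equations \eqref{DEQo} with identical leading behavior as $\pp_{\!\dvs}\to\0$, whereupon the uniqueness part of Theorem \ref{Bthmo} forces the identity.

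Alternatively, one may obtain the statement by taking the $h\to\infty$ limit of Proposition \ref{PsiPpr}: Lemma \ref{lem5.13} handles the left-hand side of the identity $\Psz_{\?P} = \Psb_{\?P}$, while Lemma \ref{lemWWo} together with Stirling's formula \eqref{Stir} applied to the factors $G_\bla$, $C_\bla$, and $c_\bla$ in \eqref{PsiP0} reproduces the right-hand side of \eqref{PsoP0}, with normalization constants matching as in \eqref{limPsP}.
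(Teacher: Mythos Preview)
Your proposal is correct and takes essentially the same approach as the paper: denote the right-hand side of \eqref{PsoP0} by $\Psbo_{\?P}$, verify $\Psho\,\Psbo_{\?P}=\Pso_{\?P}$ using formula \eqref{PPIo}, Proposition \ref{egvXo0}, and Theorem \ref{thmcyo}, and conclude via definition \eqref{princo}. The paper's proof is terser (it simply cites those three ingredients without the expository unpacking you provide about eigenvectors, the reflection identity, or the higher-order terms), but the logical skeleton is identical; your alternative $h\to\infty$ route is not used in the paper but is also viable given the limit relations \eqref{limPsp} and Lemma \ref{lemWWo}.
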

\begin{proof}
Denote by $\,\Psbo_{\?P}(\zz\:;\ka)\,$ the right-hand side of
formula \eqref{PsoP0}\:. Then formula \eqref{PPIo}\:,
Proposition \ref{egvXo0}, and Theorem \ref{thmcyo} yield
$\,\Psho\<(\zz\:;\pp\:;\ka)\,\Psbo_{\?P}(\zz\:;\ka)=
\Pso_{\?P}(\zz\:;\pp\:;\ka)\,$.
\vvn.1>
Hence by definition \eqref{princo} of the principal term,
$\,\Pszo_{\?P}(\zz\:;\ka)=\,\Psbo_{\?P}(\zz\:;\ka)\,$.
\end{proof}

\subsection{The map $\,\Bcyo_\bla\,$}
\label{secBo}

In Section \ref{seclauro}, we introduced the space $\>\Srsol$ of
$\,\Cnnl\<$-valued solutions of the joint system of dynamical differential
equations \eqref{DEQo} and \qKZ/ difference equations \eqref{Kio} spanned over
$\,\C\,$ by the functions $\,\Pso_{\?P}(\zz\:;\pp\:;\ka)\,$ labeled by Laurent
polynomials in $\,\:\GGd\<,\zzd\,$; we also defined a map
\vvn-.1>
\beq
\label{muko2}
\muko\!:\:\Kco_\bla\to\:\Srsol\,,\qquad Y\<\mapsto\Pso_Y\>,
\vv.1>
\eeq
see \eqref{muko}\:. In Section \ref{secLevo} we introduced the Levelt
fundamental solution $\,\:\Psho\<(\zz\:;\pp\:;\ka)\,$ of dynamical differential
equations \eqref{DEQo}\:, see \eqref{Psopnd}\:, \eqref{Pspo}\:.
Denote by $\>\Srhol\>$ the space of $\,\Cnnl\<$-valued solutions of dynamical
differential equations \eqref{DEQo} spanned over $\,\C\,$ by the functions
$\,\:\Psho\<(\zz\:;\pp\:;\ka)\>v\,$, $\,v\in\Cnnl$.
Since $\,\:\det\:\Psho\<(\zz\:;\pp\:;\ka)\ne 0\,$, see \eqref{detPsho}\:,
\vvn.3>
there is an isomorphism
\beq
\label{muho}
\muho:\Cnnl\to\>\Srhol\>,\qquad v\:\mapsto\Psho\<(\zz\:;\pp\:;\ka)\>v\,.
\kern-1.6em
\eeq

\vsk.4>
Let $\,\Lo\!\<\subset\<\C^n\:$ be the complement of the union of
the hyperplanes
\vvn.2>
\beq
\label{zzZo}
z_a\<-z_b\<\in\<\ka\>\Z_{\ne0}\,,\qquad a,b=1\lc n\,,\quad a\ne b\,.\kern-.6em
\vv.1>
\eeq
Denote by $\,\Oc\>$ the ring of functions of $\,\zz\,$ holomorphic in
$\,\Lo\,$. Let $\>\Srol\>$ be the space of $\,\Cnnl\<$-valued solutions
of dynamical differential equations \eqref{DEQo} holomorphic in $\,\zz\,$
in $\,\Lo\>$. Both spaces $\>\Srsol\>$ and $\>\Srhol\>$ are subspaces of
$\>\Srol\,$,
\vvn.4>
see Proposition \ref{PsiPsolo} and Theorem \ref{Bthmo}. Let
\beq
\label{muo}
\muo\::\:\Cnnl\?\ox_{\:\C}\Oc\,\to\>\Srol\:,\qquad
v\:\mapsto\Psho\<(\zz\:;\pp\:;\ka)\>v\,,
\vv.4>
\eeq
be the $\>\Oc\:$-\:linear extension of the map $\,\muho\,$.

\vsk.2>
Define a map
\vvn-.2>
\begin{gather}
\label{Bcyo}
\Bcyo_\bla\::\:\Kco_\bla\to\:\Cnnl\?\ox_{\:\C}\Oc\,,\kern-.6em
\\[6pt]
\label{BcyoP}
[P\:]\,\mapsto\?\sum_{\IJ\in\:\Il}\?\Pdd(\zz_{\si_J};\zz\:;\ka)\,
\Co_\bla(\zz_{\si_J};\ka)\,\Go_\bla(\zz_{\si_J};\ka)\,
W_I(\Si_J\:;\zz)\,v_I\,,
\kern-1em
\\[-16pt]
\notag
\end{gather}
where $\,[P\:]\in\Kco_\bla\,$ stands for the class of the Laurent polynomial
\vv.07>
$\,P(\:\GGd\:;\zzd)\,$ and the functions
$\,\Co_\bla(\zz_{\si_J};\ka)\,$, $\,\Go_\bla(\zz_{\si_J};\ka)\,$ are
\vv.1>
given by \eqref{CGGo}\:, \eqref{GGGo}\:, respectively.
By Proposition \ref{PsiPpro}, the map $\,\:\Bcyo_\bla\>$ sends the class
$\,Y\!\in\Kco_\bla\,$ to the principal term of the solution $\,\:\Pso_Y\:$
\vv.06>
of the joint system of dynamical differential equations \eqref{DEQo} and
\qKZ/ difference equations \eqref{Kio}.

\begin{prop}
\label{trianglo}
The map $\;\Bcyo_\bla\::\:\Kco_\bla\to\:\Cnnl\?\ox_{\:\C}\Oc\,$
is well-defined and the following diagram is commutative,
\vvn-.5>
\beq
\label{cdo}
\xymatrix{\Kco_\bla\ar^-{\;\Bcyo_\bla}[rr]
\ar_{\smash{\lower.8ex\llap{$\ssize\mkh\;\;\,$}}}[dr]&&
\rlap{$\Cnnl\?\ox_{\:\C}\Oc$}\phan{\Kc_\bla}
\ar^{\smash{\lower.8ex\rlap{$\;\ssize\muoh$}}}[dl]\\
&\!\Srolh\,&}
\eeq
\end{prop}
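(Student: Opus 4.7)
The plan is to mirror the proof of Proposition~\ref{triangle}, replacing each ingredient by its limiting counterpart from Section~\ref{sec lim solns}: Proposition~\ref{PsiPpro} in place of Proposition~\ref{PsiPpr}, Lemma~\ref{WIzo} in place of Lemmas~\ref{WIz}\,--\,\ref{WIJ}, and the limiting Levelt solution $\Psho$ in place of $\Psh$.

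First I would verify that $\Bcyo_\bla$ lands in $\Cnnl\ox_{\:\C}\Oc$. In the right\:-hand side of \eqref{BcyoP}, the factor $\Pdd(\zz_{\si_J};\zz\:;\ka)$ is a Laurent polynomial in exponentials and hence entire in $\zz$, the factor $\Co_\bla(\zz_{\si_J};\ka)$ is a pure exponential, and $\Wo_I(\Si_J\:;\zz)$ is a polynomial by Lemma~\ref{WIzo}. So the only potential singularities come from $\Go_\bla(\zz_{\si_J};\ka)$ defined in \eqref{GGGo}, whose poles lie on hyperplanes $z_a-z_b\in\ka\>\Z_{\le 0}$ for appropriate pairs $(a,b)$. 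The strictly negative case is already excluded from $\Lo$ by \eqref{zzZo}, so what remains is to rule out singularities along $z_a=z_b$ for $a\ne b$.

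To handle these diagonals I would apply the same standard reasoning as in the proof of Proposition~\ref{triangle}: on each hyperplane $z_a=z_b$ the residues from the summand indexed by $J$ and the one indexed by $s_{a,b}(J)$ cancel, using the three\:-term relation of Lemma~\ref{DlW} together with the symmetry \eqref{Psdoab}. A cleaner alternative is to deduce the regularity directly by taking the $h\to\infty$ limit of the already\:-regular function from Proposition~\ref{triangle}, using Lemmas~\ref{lemPho}\,--\,\ref{lemresh} and Proposition~\ref{lem5.5} to justify the locally uniform passage to the limit after multiplying by the appropriate renormalization factors of the form $\bigl(\<-\:h\:\Gm(h/\ka)\bigr)^{\<-\:\la_{\{2\}}}\<(\<-\:h)^{\sum e^{(b)}_{k,k}e^{(c)}_{j,j}}$. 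This is the step where I expect the main technical care to be needed: the $h\:$-\:dependent scaling must be tracked carefully so that the limit of the pre\:-limit expression reproduces $\Bcyo_\bla([P])$ on the nose, with the ratios of Gamma factors collapsing via Stirling's formula to yield the simpler product $\Go_\bla$ of \eqref{GGGo}.

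For the commutativity of diagram~\eqref{cdo}, I would invoke Proposition~\ref{PsiPpro} directly: for any Laurent polynomial $P$, the right\:-hand side of \eqref{BcyoP} is, by that proposition, precisely the principal term $\Pszo_P(\zz\:;\ka)$ of $\Pso_P(\zz\:;\pp\:;\ka)$. Since by \eqref{princo} one has $\Pso_P=\Psho\,\Pszo_P$, and the two downward arrows act by $\mkh([P])=\Pso_P$ and $\muoh(v)=\Psho(\zz\:;\pp\:;\ka)\,v$ extended $\Oc\:$-\:linearly, the two compositions agree on every generator $[P]\in\Kco_\bla$, giving commutativity.
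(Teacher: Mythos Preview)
Your proposal is correct and takes essentially the same route as the paper: locate the only remaining potential poles at the diagonals $z_a=z_b$, dispose of them by the standard residue-cancellation pairing of the $J$ and $s_{a,b}(J)$ summands, and then invoke Proposition~\ref{PsiPpro} for commutativity. The alternative $h\to\infty$ limit you sketch is not used by the paper and is unnecessary here, though it would also work.
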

\begin{proof}
Poles of the sum in the right-hand side of \eqref{BcyoP} are at most those
of the function $\,\Go_\bla(\zz\:;h;\ka)\,$ and, therefore, in addition to
hyperplanes \eqref{zzZo} can occur only at the hyperplanes $\,z_a\<=z_b\>$,
$\,a\ne b$. However, the sums
\vvn.5>
\be
\sum_{\IJ\in\:\Il}\?\Pdd(\zz_{\si_J};\zz\:;\ka)\,\Co_\bla(\zz_{\si_J};\ka)\,
\Go_\bla(\zz_{\si_J};\ka)\,W_I(\Si_J\:;\zz)\kern-.8em
\vv.1>
\ee
are regular at the hyperplanes $\,z_a\<=z_b\>$ for all $\,a\ne b\,$
\vvn.07>
by the standard reasoning. Hence, the map $\,\:\Bcyo_\bla\>$ is well-defined.

\vsk.2>
The commutativity of diagram \eqref{cdo} follows from
Proposition \ref{PsiPpro}.
\end{proof}

\begin{rem}
Since $\,\:\Psho\<(\zz\:;\pp\:;\ka)\,$ is holomorphic in $\,\zz\,$ in $\,\Lo\,$,
and $\,\bigl(\det\:\Psho\<(\zz\:;\pp\:;\ka)\bigr)^{\?-1}\:$ is entire
in $\,\zz\,$, see \eqref{detPsho}\:, the inverse matrix
\vv.06>
$\,\:\Psho(\zz\:;\pp\:;\ka)^{-1}\:$ is holomorphic in $\,\zz\,$ in $\,\Lo\,$.
Therefore, for every $\,\:\Psi\in\Srol\>$, its principal term
$\,\:\Pszo\!=\Pshoi\>\Psi\,$,
\vv.06>
defined by \eqref{princo}\:, belongs to $\,\Cnnl\?\ox_{\:\C}\Oc\,$,
and we have an isomorphism $\,\Srol\!\to\Cnnl\?\ox_{\:\C}\Oc\,$,
\vv.06>
$\;\Psi\mapsto\Pszo$. The inverse map, equals $\,\muo\,$, see \eqref{muo}\:,
so that $\,\Srol\?=\Srhol\<\ox_{\:\C}\Oc\,$.
\end{rem}

\subsection{Example $\,\bla=(1,n-1)\,$}
\label{exampleo}
Throughout this section, let $\,N=2\,$, $\,n\ge 2\,$, $\,\bla=(1,n-1)\,$.
Like in Section \ref{example}, denote by $\,[\:a\:]$ the element
$\bigl(\{a\},\{\:1\lc a-1,a+1\lc n\:\}\bigr)\<\in\Il$\,,
\vv-.04>
The space $\,\Ctnl\:$ has a basis $\,v_{[\:1\:]}\lc v_{[\:n\:]}\:$,
where $\,v_{[a]}\<=v_2^{\ox(a-1)}\<\ox v_1\<\ox v_2^{\ox(n-a)}\>$.
Clearly $\,e_{1,1}^{(a)}v_{[\:b\:]}=\:\dl_{\ab}\,v_{[\:b\:]}\,$ and
$\,e_{2,2}^{(a)}v_{[\:b\:]}=(1-\dl_{\ab})\,v_{[\:b\:]}\,$.

\vsk.3>
The \qKZ/ operators $\,\Ko_1\lc\Ko_n\>$, see \eqref{Ko}\:, are
\vvn.2>
\begin{align}
\label{Ko1}
\kern2em
\Ko_a(\zz\:;\pp\:;\ka)\>={}&\,\bigl(\Ro\<(z_a\<-z_{a-1}+\ka)\bigr)^{(\aa-1)}\dots\,
\bigl(\Ro\<(z_a\<-z_1+\ka)\bigr)^{(a,1)}\times{}
\\[1pt]
\notag
& {}\?\times\,p_1^{e_{1,1}^{(a)}}\:p_2^{e_{2,2}^{(a)}}\,
\bigl(\Ro\<(z_i\<-z_n)\bigr)^{(a,n)}\dots\,
\bigl(\Ro\<(z_a\<-z_{a+1})\bigr)^{(\aa+1)}\,.\kern-2em
\end{align}
The $\:R\:$-matrices in the right-hand side
preserve the subspace $\Ctnl\?\subset\<\Ctn$, acting there as follows,
\vvn.1>
\begin{gather*}
\bigl(\Ro\<(z)\bigr)^{(\ab)}v_{[a]}\,=\,v_{[\:b\:]}-z\:v_{[\:a\:]}\,,\kern1.6em
\bigl(\Ro\<(z)\bigr)^{(\ab)}v_{[\:b\:]}\,=\,v_{[a]}\,,\kern-.2em
\\[9pt]
\bigl(\Ro\<(z)\bigr)^{(\ab)}v_{[c]}\>=\>v_{[c]}\,,\qquad c\ne a,b\,.\kern-1em
\\[-12pt]
\end{gather*}
The \qKZ/ difference equations \eqref{Kio} are
\vvn.4>
\beq
\label{Kio1}
f(z_1\lc z_a+\ka\lc z_n;\pp\:;\ka)\,=\,
\Ko_a(\zz\:;\pp\:;\ka)\,f(\zz\:;\pp\:;\ka)\,,\qquad a=1\lc n\>.\kern-1em
\eeq

\vsk.6>
The dynamical Hamiltonians $\,\Xo_1\:,\:\Xo_2\>$, see \eqref{Xo}\:,
act on $\,\Ctnl\:$ as follows
\vvn.2>
\begin{alignat}2
\label{Xo1}
&\Xo_1(\zz\:;\pp)\>v_{[\:1\:]}\>=\,z_1\>v_{[\:1\:]}\:+\>
\frac{p_2}{p_1}\,v_{[\:n\:]}\,,
\\[4pt]
\notag
&\Xo_1(\zz\:;\pp)\>v_{[\:a\:]}\>=\,z_a\>v_{[\:a\:]}\:+\>v_{[\:a-1\:]}\,,
&& a=2\lc n\,,\kern-2em
\\
\notag
&\Xo_2(\zz\:;\pp)\>v_{[\:b\:]}\>=\>
\Bigl(\<-\>\Xo_1(\zz\:;\pp)\>+\sum_{c=1}^n\,z_c\Bigr)\,v_{[\:b\:]}\,,
\qquad && b=1\lc n\,,\kern-2em
\\[-20pt]
\notag
\end{alignat}
and the dynamical differential equations \eqref{DEQ} are
\vvn.5>
\begin{align}
\label{DEQo1}
\ka\>p_1\:\frac\der{\der\:p_1}\>\Psi(\zz\:;\pp\:;\ka)\,&{}=\>
\Xo_1(\zz\:;\pp)\>\Psi(\zz\:;\pp\:;\ka)\,,\kern-1em
\\[7pt]
\notag
\ka\>p_2\:\frac\der{\der\:p_2}\>\Psi(\zz\:;\pp\:;\ka)\,&{}=\>
\Xo_2(\zz\:;\pp)\>\Psi(\zz\:;\pp\:;\ka)\,.\kern-1em
\end{align}

\vsk.2>
In this section, we use the variable $\,t=t^{(1)}_1\:$.
The substitution $\,\:\TT=\Si_{\:[a]}\>$ reads as $\,t\:=z_a\,$.
The weight functions are
\vvn-.3>
\beq
\label{Wao}
\Wo_{[a]}(t\:;\zz)\,=\,\prod_{b=1}^{a-1}\,(t-z_b)\,,
\qquad a=1\lc n\,.\kern-2em
\eeq
The permutations $\,\:\si_{[1]}\lc\si_{[n]}\,$ are
\vvn.4>
\be
\si_{[a]}(1)=a\,,\qquad \si_{[a]}(b)=b-1\,,\quad b=1\lc a-1\,,
\qquad\si_{[a]}(b)=b\,,\quad b=a+1\lc n\,,
\vv.4>
\ee
and $\,\:|\:\si_{[a]}\:|=a-1\,$. We have
\vvn-.5>
\begin{gather}
\label{Wabo}
\Wo_{[a]}(z_a;\zz)\,=\,\prod_{b=1}^{a-1}\,(z_a\<-z_b)\,,
\\[6pt]
\notag
\Wo_{[a]}(z_b;\zz)\>=\>0\,,\qquad b=1\lc a-1\,,
\\[-20pt]
\notag
\end{gather}
cf.~Lemma \ref{WIzo}.

\vsk.2>
The functions $\,\WWo_{[a]}(t\:;\zz)\,$, see \eqref{WocI}\:, are
\vvn.2>
\beq
\label{Wcao}
\WWo_{[a]}(t\:;\zz)\,=\,\prod_{c=a+1}^n\?(t-z_b)\,,\qquad a=1\lc n\,.\kern-2em
\vv-.3>
\eeq
Set
\vvn-.7>
\be
\Rb_a(\zz)\,=\,\prod_{\satop{b=1}{b\ne a}}^n\,(z_a\<-z_b)\,,
\qquad a=1\lc n\,.\kern-2em
\ee
Then $\,\Rb_a(\zz)=R_\bla(\zz_{\si_{[a]}})\,,$, where the function
\vvn.1>
$\,R_\bla(\zz)\,$ is given by \eqref{RQ}\:.
Biorthogonality relations \eqref{ortho} become
\be
\sum_{c=1}^n\,\frac{\Wo_{[a]}(z_c;\zz)\,\WWo_{[b]}(z_c;\zz)}{\Rb_c(\zz)}\,=\,
\dl_{\ab}\,,\qquad
\sum_{c=1}^n\,\Wo_{[c]}(z_a;\zz)\,\WWo_{[c]}(z_b;\zz)\,=\,
\dl_{\ab}\,\Rb_a(\zz)\,.\kern-1.6em
\ee

The master function, see \eqref{Phio}\:, is
\vvn-.3>
\beq
\label{Pho}
\Pho_\bla(t\:;\zz\:;\pp\:;\ka)\,=\,(p_2/\ka)^{\>\sum_{a=1}^n\<z_a\</\?\ka}\,
(\ka^{\:n}p_1/p_2)^{\:t/\?\ka}\,\prod_{a=1}^n\,\Gm\bigl(\<(t-z_a)/\ka\bigr)\,.
\kern-1.4em
\eeq
The
hypergeometric solutions \eqref{mcF} of the joint system of
differential equations \eqref{DEQ1} and
difference equations \eqref{Ki1} have the form
\vvn.1>
\be
\Pso_{[a]}(\zz\:;\pp\:;\ka)\,=\,\frac1\ka\,\sum_{b=1}^n\,\sum_{l=0}^\infty\,
\Res_{\>t\:=\;z_a-\:l\ka}\Pho_\bla(t\:;\zz\:;\pp\:;\ka)\;
\Wo_{[\:b\:]}(z_a\<-l\ka)\>v_{[\:b\:]}\,,\kern-1.6em
\vv.2>
\ee
where the residues of the master function are
\vvn.3>
\begin{align*}
& \Res_{\>t\:=\;z_a-\:l\ka}\Pho_\bla(t\:;\zz\:;\pp\:;\ka)\,={}
\\[3pt]
&{}\,=\,\ka\>(\ka^{\:n-1}p_1)^{\:z_a\</\<\ka}\:
(p_2/\ka)^{\>\sum_{c=1,\>c\ne a}^n\<z_c\</\?\ka}\,
\frac{(-\:\ka^{-n}p_2/p_1)^{\:l}}{l\:!}\,
\prod_{\satop{c=1}{c\ne a}}^n\,\Gm\bigl(-\:l+(z_a\<-z_c)/\ka\bigr)\,.
\\[-16pt]
\end{align*}
Determinant formula for coordinates of the
hypergeometric solutions, see \eqref{detPso}\:, is
\vvn.5>
\begin{align}
\label{detPso1}
& \det\:\biggl(\>\sum_{l=0}^\infty\,
\Res_{\>t\:=\;z_a-\:l\ka}\Pho_\bla(t\:;\zz\:;\pp\:;\ka)\,
\Wo_{[\:b\:]}(z_a\<-l\ka)\?\biggr)_{\!\?\ab\:=1}^{\!\<n}\<={}
\\[5pt]
\notag
& \:{}=\,\pi^{\:n\:(n-1)/2}\:\ka^{\:n\:(n+1)/2}
\bigl(\:p_1\>p_2^{\:n-1}\bigr)^{\sum_{a=1}^n z_a\</\?\ka}\,
\prod_{a=1}^{n-1}\,\prod_{b=a+1}^n\,
\frac1{\sin\bigl(\pi\>(z_a\<-z_b)/\ka\bigr)}\;.\kern-1em
\\[-11pt]
\notag
\end{align}
By formula \eqref{Wao} and the Vandermonde determinant formula,
\vvn.5>
equality \eqref{detPsi1} transforms to
\begin{align}
\label{detPso2}
\det\:\biggl(\>\sum_{l=0}^\infty\,\sum_{c=1}^n\,
& \Res_{\>t\:=\;z_c-\:l\ka}\bigl(\:
t^{\:a-1}\:e^{-\:2\:\pii\,\:(b-1)\>t/\?\ka}\,
\Pho_\bla(t\:;\zz\:;\pp\:;\ka)\bigr)\?\biggr)_{\!\?\ab\:=1}^{\!\<n}\<={}
\kern-2em
\\[8pt]
\notag
{}=\,{}& \bigl(2\:\piit\,\:\bigr)^{\<n\:(n-1)/2}\:\ka^{\:n\:(n+1)/2}
\bigl(\:e^{-\:\pii\,\:(n-1)}p_1\>p_2^{\:n-1}\bigr)^{\sum_{a=1}^n z_a\</\?\ka}
\:.\kern-2em
\end{align}

\vsk.4>
Let $\,\gmd=\gmd_{1,1}\,$. For a Laurent polynomial $\,P(\gmd\:;\zzd)\,$,
we have
\vvn.4>
\be
\Pdd(\gm\:;\zz\:;\ka)\,=\,P(e^{\:2\:\pii\,\gm\</\<\ka}\:;\:
e^{\:2\:\pii\,z_1\</\<\ka}\lc e^{\:2\:\pii\,z_n\</\<\ka}\:)\,.\kern-1em
\vv.3>
\ee
The solution $\,\Pso_{\?P}\>$ of differential equations \eqref{DEQo1} and
difference equations \eqref{Kio1} corresponding to $\,P(\gmd\:;\zzd)\,$ is
\vvn-.5>
\beq
\label{PsoP1}
\Pso_{\?P}(\zz\:;\pp\:;\ka)\,=\,\sum_{a=1}^n\,
\Pdd(z_a;\zz\:;\ka)\,\Pso_{[a]}(\zz\:;\pp\:;\ka)\,.\kern-1.2em
\vv.1>
\eeq
By Proposition \ref{PsiPsolo}, 
it is entire in $\,\zz\,$ and is holomorphic in $\,p_1\:,\>p_2\:$ provided
branches of $\,\:\log\:p_1\:$ and $\,\:\log\:p_2\:$ are fixed.

\vsk.3>
The solution $\,\Pso_{\?P}(\zz\:;\pp\:;\ka)\,$ can be written as an integral
\vv.16>
over a suitable contour $\,C\>$ encircling the poles of the product
$\,\prod_{a=1}^n\<\Gm\bigl({(t-z_a)/\<\ka}\bigr)\,$ counterclockwise,
\vvn.2>
\beq
\label{PsoPint}
\Pso_{\?P}(\zz\:;\pp\:;\ka)\,=\,\frac1{2\:\piit\,\ka}\;
\int\limits_{\!\!C\,}\<\Pdd(t\:;\zz\:;\ka)\,\Pho_\bla(t\:;\zz\:;\pp\:;\ka)\,
\sum_{a=1}^n\>\Wo_{[a]}(t\:;\zz)\,v_{[a]}\,d\:t\,.\kern-1em
\vv.2>
\eeq
For instance, the integral can be taken over the parabola
\vvn.4>
\be
C\,=\,\{\,\ka\>\bigl(\:A-s^2\?+s\>\sqrt{\<-1}\,\bigr)\ \,|\ \,s\in\R\,\}\,,
\vv.3>
\ee
where $\,A\,$ is a sufficiently large positive real number.
Formula \eqref{PsiPint} can be used to give an alternative proof of
analytic properties of the function $\,\:\Pso_{\?P}(\zz\:;\pp\:;\ka)\,$.

\vsk.3>
Let $\>\Srsol\:$ be the space of solutions of the joint system of dynamical
\vv.1>
differential equations \eqref{DEQo1} and \qKZ/ difference
equations \eqref{Kio1}
\vv.06>
spanned over $\,\C\,$ by the functions $\,\Pso_{\?P}(\zz\:;\pp\:;\ka)\,$
corresponding to Laurent polynomials $\,P(\gmd\:;\zzd)\,$.
\vvn.1>
The space $\>\Srsol\:$ is a $\,\C[\:\zzd^{\pm1}]\:$-\:module
with $\,f(\zzd)\,$ acting as multiplication by
$\,f(e^{\:2\:\pii\,z_1\</\<\ka}\lc e^{\:2\:\pii\,z_n\</\<\ka}\:)\,$.

\vsk.2>
For $\,\bla=(1,n-1)\,$, the algebra $\,\Kco_\bla\,$, see \eqref{Krelo},
can be presented as follows
\vvn.3>
\beq
\label{Krelo11}
\Kco_\bla\:=\,\C[\:\gmd^{\pm1}\?,\zzd^{\pm1}]\>\>\Big/\Bigl\bra
\,\prod_{a=1}^n\,(\gmd-\zdd_a)\>=\>0\,\Bigr\ket\,.\kern-1.6em
\vv.3>
\eeq
The function $\,\Pso_{\?P}(\zz\:;\pp\:;\ka)\,$ depends only on the class
\vv.1>
of the Laurent polynomial $\,P\:$ in $\>\Kco_\bla\,$.
The assignment $\,P\mapsto\Pso_P\>$ defines the homomorphism
\vvn.3>
\be
\muko\<:\Kco_\bla\to\:\Srsol\,,\qquad Y\<\mapsto\Pso_Y\>,\kern-1em
\vv.3>
\ee
of $\,\C[\:\zzd^{\pm1}]\:$-\:modules. Formula \eqref{detPso2}
implies that the homomorphism $\,\muko\,$ is an isomorphism.

\vsk.3>
The algebra $\,\Kco_\bla\:$ is the equivariant $\>K\?$-theory algebra
\vvn.1>
$\,K_T(\CP^{\>n-1}\<;\C)\>$ of the projective space $\,\CP^{\>n-1}$,
see the notation in Section \ref{sQde}.

\vsk.3>
Recall Definition \ref{entdef} of a function $\,f(\pp)\,$ entire
\vvn.07>
in $\,\pp_{\!\dvs}$.
For example, the dynamical Hamiltonians $\,\Xo_1\:,\:\Xo_2\>$,
see \eqref{Xo1}\:, are entire in $\,\pp_{\!\dvs}$ and $\,\Xo_1(\zz\:;\0)\,$,
$\,\Xo_2(\zz\:;\0)\,$ act on $\,\Ctnl\,$ as follows
\be
\Xo_1(\zz\:;\0)\>v_{[\:a\:]}\>=\,z_a\>v_{[\:a\:]}\:+\>v_{[\:a-1\:]}\,,
\qquad
\Xo_2(\zz\:;\0)\>v_{[\:a\:]}\>=\>
\Bigl(\<-\>\Xo_1(\zz\:;\0)\>+\sum_{b=1}^n\,z_b\Bigr)\,v_{[\:a\:]}\,,
\kern-1em
\vv-.4>
\ee
where $\,v_{[\:0\:]}\<=0\,$.

\vsk.2>
The Levelt fundamental solution $\,\:\Psho\<(\zz\:;\pp\:;\ka)\,$,
\vvn.3>
see \eqref{Psopnd}\:, of differential equations \eqref{DEQo1} is
\be
\Psho\<(\zz\:;\pp\:;\ka)\,=\,
\Pspo\<(\zz;\pp\:;\ka)\;p_1^{\>\smash{\Xo_1(\zz\:;\:\0)}/\ka}\:
p_2^{\>\smash{\Xo_2(\zz\:;\:\0)}/\ka}\>,\kern-1.4em
\vv.3>
\ee
where the $\,\:\End\:\bigl(\Cnnl\bigr)$-\:valued function
$\,\Pspo\<(\zz\:;\pp\:;\ka)\,$ is as follows,
\vvn.3>
\begin{gather}
\label{Pspo1}
\Pspo\<(\zz\:;\pp\:;\ka)\::\:v_{[\:b\:]}\,\mapsto\>
\sum_{a=1}^n\,\Pspo_{\ab}(\zz\:;\pp\:;\ka)\,v_{[a]}\,,\kern-1em
\\[3pt]
\notag
\Pspo_{\ab}(\zz\:;\pp\:;\ka)\,=\,\sum_{l=0}^\infty\,\ka^{\:a-b}\:
\Jco_{\ab,\>l}(\zz/\ka)\>\bigl(\<(\<-\ka)^{-\:n\:}p_2/p_1\<\bigr)^l\>,
\kern-1em
\\[6pt]
\notag
\Jco_{\ab,\>l}(\zz)\,=\,\sum_{c=1}^n\,
\Wo_{[a]}(z_c\<-l\:;\zz)\,\WWo_{[b]}(z_c;\zz)\;\frac{(-1)^{n-1}}{l\:!}\,
\prod_{\satop{d=1}{d\ne c}}^n\,\prod_{m=0}^l\,\frac1{z_d\<-z_c\<+m}\;.
\kern-1em
\\[-20pt]
\notag
\end{gather}
Furthermore, one has $\;\det\:\Pspo\<(\zz\:;\pp\:;\ka)=1\,\:$ and
$\;\det\:\Psho\<(\zz\:;\pp\:;\ka)\,=\,
\bigl(\:p_1\>p_2^{\:n-1}\bigr)^{\sum_{a=1}^n z_a\</\?\ka}\:$.

\vsk.4>
For a solution $\,\Pso(\zz\:;\pp\:;\ka)\,$ of dynamical differential equations
\vvn.4>
\eqref{DEQo1}\:, its principal term is
\be
\Pszo(\zz\:;\ka)\,=\,\Psho\<(\zz\:;\pp\:;\ka)^{-1}\,\Pso(\zz\:;\pp\:;\ka)\,,
\vv.3>
\ee
see \eqref{princo}\:. The principal term of the solution
\vv.07>
$\,\Pso_{\?P}(\zz\:;\pp\:;\ka)\,$, corresponding to a Laurent polynomial
$\,P(\gmd\:;\zzd)\,$, see \eqref{PsoP0}\:, equals
\vvn.2>
\beq
\label{PsoP10}
\Pszo_{\?P}(\zz\:;\ka)\,=\,
\sum_{a=1}^n\>v_{[a]}\,
\sum_{b=1}^n\>\Wo_{[a]}(z_b;\zz)\,\Pdd(z_b;\zz\:;\ka)
\;\ka^{\left.\left(\<n\:z_b\:-\sum_{c=1}^nz_c\<\right)\<\right/\<\ka}
\,\prod_{\satop{c=1}{c\ne b}}^n\,\Gm\bigl(\<(z_b\<-z_c)/\ka\bigr)\,.
\kern-.9em
\eeq

Let $\,\Lo\,$ be the complement of the union of the hyperplanes
$\,z_a\<-z_b\<\in\<\ka\>\Z_{\ne0}\,$, $\,a,b=1\lc n\,$, $\,a\ne b\,$,
cf.~\eqref{zzZo}\:. Denote by $\,\Oc\,$ the ring of functions of $\,\zz\,$
\vvn.4>
holomorphic in $\,\Lo\,$. The map
\begin{gather}
\label{Bcyo1}
\Bcyo_\bla\::\:\Kco_\bla\to\:\Cnnl\?\ox_{\:\C}\Oc\,,\kern-1em
\\[3pt]
\notag
{[P\:]}\,\mapsto\:\sum_{a=1}^n\>v_{[a]}\,
\sum_{b=1}^n\>\Wo_{[a]}(z_b;\zz)\,\Pdd(z_b;\zz\:;\ka)
\;\ka^{\left.\left(\<n\:z_b\:-\sum_{c=1}^nz_c\<\right)\<\right/\<\ka}
\,\prod_{\satop{c=1}{c\ne b}}^n\,\Gm\bigl(\<(z_b\<-z_c)/\ka\bigr)
\kern-1em
\end{gather}
sends the class $\,[P\:]\in\Kco_\bla\,$ of the Laurent polynomial
$\,P(\gmd\:;\zzd)\,$ to the principal term of the solution $\,\:\Pso_{\?P}\:$
of the joint system of differential equations \eqref{DEQo1} and difference
equations \eqref{Kio1}.

\vsk.2>
Let $\>\Srol$ be the space of $\,\Cnnl\<$-valued solutions of dynamical
\vv.04>
differential equations \eqref{DEQo1} holomorphic in $\,\zz\,$ in $\,\Lo\,$.
The space $\>\Srsol\>$ is a subspace of $\>\Srol$.

\vsk.3>
Since the matrix $\,\:\Psho\<(\zz\:;\pp\:;\ka)\,$ is holomorphic in
\vvn.1>
$\,\zz\,$ in $\,\Lo\,$, and
$\,\bigl(\det\:\Psho\<(\zz\:;\pp\:;\ka)\bigr)^{\?-1}\:$ is entire in $\,\zz\,$,
the inverse matrix $\,\:\Psho\<(\zz\:;\pp\:;\ka)^{-1}\,$ is also holomorphic
in $\,\zz\,$ in $\,\Lo\,$. Thus the map
\vvn.3>
\be
\muo:\:\Cnnl\?\ox_{\:\C}\Oc\>\to\>\Srol,\qquad
v\:\mapsto\Psho\<(\zz\:;\pp\:;\ka)\>v\,,
\vv.4>
\ee
gives an isomorphism of $\,\Oc\,$-modules.
Furthermore, the following diagram is commutative,
\be
\xymatrix{\Kco_\bla\ar^-{\;\Bcyo_\bla}[rr]
\ar_{\smash{\lower.8ex\llap{$\ssize\muko\;\;\,$}}}[dr]&&
\rlap{$\Cnnl\?\ox_{\:\C}\Oc$}\phan{\Kco_\bla}
\ar^{\smash{\lower.8ex\rlap{$\;\ssize\muo$}}}[dl]\\
&\!\Srol\:&}
\vv-.7>
\ee
see Proposition \ref{trianglo}.

\section{Equations for partial flag varieties}
\label{sQde}

\subsection{Equivariant cohomology of partial flag varieties}
\label{sec:equiv}

Consider the partial flag variety $\,\Fla\:$ parametrizing chains of subspaces
\be
0\,=\,F_0\subset F_1\lsym\subset F_N\>=\,\C^n\kern-1.4em
\ee
with $\,\:\dim F_i\:/F_{i-1}=\la_i\,$, $\,i=1\lc N\>$.

\vsk.2>
Given a basis of $\,\C^n$, the group $\,GL_n(\C)\,$ acts on $\,\C^n$.
Let \,$T\subset GL_n(\C)$ \,be the torus of diagonal matrices.
Let $\,\zzz\,$ the Chern roots corresponding to the factors of $\,T\,$,
and for $\,i=1\lc N\>$, let $\,\gm_{i,1}\lc\gm_{i,\>\la_i}\,$ \,be the Chern
roots of the bundle over \,$\Fla$ \,with fiber $\,F_i\:/F_{i-1}\,$. Denote
$\,\:\GG\<=(\gm_{1,1}\lc\gm_{1,\>\la_1}\:\lc\gm_{N,1}\lc\gm_{N\?,\>\la_N})\,$
\vv.05>
and $\,\zz\,=\,(\zzz)$, cf.~\eqref{GG}\:. Let $\,\C[\:\GG\:]^{\:S_\bla}$ be the
space of polynomials in $\,\:\GG\>$ symmetric in $\,\gm_{i,1}\lc\gm_{i,\>\la_i}$
for each $\,i=1\lc N\>$.

\vsk.2>
Consider the equivariant cohomology algebra $\,\Hd^*_T(\Fla\>;\C)\,$. Then
\vvn.2>
\beq
\label{Hrel}
H^*_T(\Fla\>;\C)\,=\,\C[\:\GG\:]^{\:S_\bla}\?\ox\C[\zz]\>\Big/\Bigl\bra
\,\prod_{i=1}^N\,\prod_{j=1}^{\la_i}\,(u-\ga_{\ij})\,=\,\prod_{a=1}^n\,(u-z_a)
\Bigr\ket\,,\kern-2em
\vv.2>
\eeq
where $\,u\,$ is a formal variable.
For a polynomial $\,f(\:\GG\:;\zz)\in\C[\:\GG\:]^{\:S_\bla}\?\ox\C[\zz]\,$,
\vvn.07>
denote by $\,[\:f\>]\,$ its class in $\,\Hd^*_T(\Fla\>;\C)\,$.
Notice that for each $\,i=1\lc N\>$,
\vvn.3>
\beq
\label{c1}
c_1(E_i)\,=\,[\:\gm_{i,1}\<\lsym+\gm_{i,\:\la_i}]\kern-.4em
\vv.2>
\eeq
is the equivariant first Chern class of the vector bundle $\,E_i\>$
over $\,\Fla\>$ with fiber $\,F_i\:/F_{i-1}\,$.

\vsk.2>
Given a point $\,\zz^0\?\in\C^n\:$, consider the algebra
\vvn.2>
\beq
\label{Hrelz0}
\Hd^*_T(\Fla\>;\C)_{\:\zz^0}\,=\,\C[\:\GG\:]^{\:S_\bla}\?\Big/\Bigl\bra\,
\prod_{i=1}^N\,\prod_{j=1}^{\la_i}\,(u-\ga_{\ij})\,=\,\prod_{a=1}^n\,(u-z^0_a)
\Bigr\ket\,.\kern-2em
\vv.2>
\eeq
The evaluation map
$\,\C[\:\GG\:]^{\:S_\bla}\?\ox\C[\zz]\to\C[\:\GG\:]^{\:S_\bla}\>$,
$\,f(\:\GG\:;\zz)\mapsto f(\:\GG\:;\zz^0)\,$, identifies
\vv.16>
$\,\Hd^*_T(\Fla\>;\C)_{\:\zz^0}\,$ with the quotient
$\,\Hd^*_T(\Fla\>;\C)/\bra\zz=\zz^0\:\ket\,$. Denote by $\,[\:f\>]_{\:\zz^0}$
\vvn.16>
the class of $\,f(\:\GG)\in\C[\:\GG\:]^{\:S_\bla}\:$ in
$\,\Hd^*_T(\Fla\>;\C)_{\:\zz^0}\,$.

\vsk.4>
Recall the polynomials $\,V_{\<I}(\:\GG)\,$, $\,I\<\in\Il\>$, given by
formula \eqref{VIx}\:.

\begin{lem}
\label{Hfree}
The algebra $\,\Hd^*_T(\Fla\>;\C)$ is a free module over
\vvn.07>
$\,\Hd^*_T({pt};\C)=\C[\:\zz\:]\:$ generated by the classes
$\,[\:V_{\<I}\:]\,$, $\,I\<\in\Il\>$. For every $\,\zz^0\?\in\C^n$,
\vv.07>
the classes $\,[\:V_{\<I}\:]_{\:\zz^0}\>$, $\,I\<\in\Il\>$,
give a basis of $\,\Hd^*_T(\Fla\>;\C)_{\:\zz^0}\>$. In particular,
$\,\:\dim\:\Hd^*_T(\Fla\>;\C)_{\:\zz^0}=\:n\:!/(\la_1!\ldots\la_N!\:)\>$
for every $\,\zz^0$.
\end{lem}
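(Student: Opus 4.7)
The plan is to deduce Lemma \ref{Hfree} by the same route that presumably yields Propositions \ref{PA1}, \ref{PA2} for the $\>K\?$-theory algebra $\,\Kc_\bla\,$, replacing Laurent polynomials with polynomials throughout. The presentation \eqref{Hrel} is the cohomological shadow of \eqref{Krel}: the single tautological identity $\prod_{i,j}(u-\gm_{ij})=\prod_a(u-z_a)$ equates, coefficient by coefficient, the elementary symmetric polynomials $e_k(\GG)$ in all $\gm_{ij}$ with $e_k(\zz)$. Hence
\[
\Hd^*_T(\Fla;\C)\,\simeq\,\C[\:\GG\:]^{\:S_\bla}\<\ox_{\:\C[e_1(\GG)\lc e_n(\GG)]}\C[\zz]\,,
\]
where the map $\,\C[e_1(\GG),\ldots,e_n(\GG)]\to\C[\zz]\,$ sends $\,e_k(\GG)\mapsto e_k(\zz)\,$.

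First, I invoke the classical invariant-theoretic fact that $\,\C[\:\GG\:]^{\:S_\bla}$ is a free module over the full symmetric ring $\,\C[e_1(\GG),\ldots,e_n(\GG)]\,$ of rank $\,d_\bla=n\:!/(\la_1!\cdots\la_N!)=|\:\Il\:|\,$. Base-changing along the map above therefore yields a free $\C[\zz]$-module of rank $\,d_\bla\,$. This already shows that $\,\Hd^*_T(\Fla;\C)\,$ is free over $\,\C[\zz]\,$ of the correct rank; it remains to identify a basis.

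Second, I show that $\{[V_I]\::\:I\<\in\Il\}$ is such a basis. The strategy is a triangularity argument on $\,\Il\,$ ordered by the length $\,|\:\si_I\:|\,$: given the defining formula \eqref{VIx} and the identification \eqref{WdS} of the limiting weight functions $\,\Wto_I\,$ with the double Schubert polynomials $\,\Sg_{\si_I}\<(\yy\:;\zz)\,$, the polynomial $\,V_I(\:\GG)\,$ should be built as a suitable Schubert-type representative whose evaluation matrix at a natural collection of flag fixed points is upper triangular. Concretely, evaluating $\,V_I\,$ at the specializations $\,\Si_J\,$ used throughout the paper should give, by Lemma \ref{WIzo}, a triangular matrix with nonzero diagonal entries $\,\Wo_I(\Si_I\:;\zz)=\prod_{j<k}\prod_{a\in I_j,\>b\in I_k,\>b<a}(z_a\<-z_b)\,$. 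Since these entries are nonzero as polynomials in $\,\zz\,$, the transition matrix between $\{[V_I]\}$ and any preexisting $\C[\zz]$-basis is invertible over the fraction field, hence over $\C[\zz]$ after checking that the determinant is actually a unit in $\C[\zz]$; this unit property is forced by the free-rank count established above.

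Third, the specialization statement is immediate: since the module is free with basis $\{[V_I]\}$, base change along $\,\C[\zz]\to\C\,$, $\,\zz\mapsto\zz^0\,$ yields a $\,\C\<$-basis $\{[V_I]_{\:\zz^0}\}$ of $\,\Hd^*_T(\Fla;\C)_{\:\zz^0}\,$ for every $\,\zz^0\,$, with dimension $\,d_\bla\,$. The main obstacle is the triangularity step, which requires that the concrete formula \eqref{VIx} in the appendix does produce polynomials whose $\Si_J$-evaluations form a triangular matrix with the diagonal read off from Lemma \ref{WIzo}; once that is in hand, the freeness and specialization are formal consequences.
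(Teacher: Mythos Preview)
Your proposal has two genuine gaps in the second step.

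First, the triangularity claim is unfounded: the polynomials $V_I(\GG)$ defined by \eqref{VIx} are \emph{not} the limiting weight functions $\Wo_{\?I}$ or the Schubert polynomials, so Lemma~\ref{WIzo} says nothing about the matrix $\bigl(V_I(\zz_{\si_J}\<)\bigr)_{\IJ}$. In fact that matrix is not triangular; its determinant is computed in \eqref{detVI} and carries the Vandermonde factor $\prod_{a<b}(x_b-x_a)^{d^{(2)}_\bla}$, which already shows it is not a unit in $\C[\zz]$.

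Second, the sentence ``this unit property is forced by the free-rank count established above'' is simply false. Knowing that a module is free of rank $d_\bla$ and that $d_\bla$ elements are linearly independent over the fraction field does \emph{not} make them a basis over the ring: take $R=\C[x]$, $M=R$, and the single element $x$. So even if you had the triangularity, the argument would not close.

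The paper's route is different and avoids both issues. Proposition~\ref{PA2} gives, for any $f\in\C[\xx]^{S_\bla}$ and any $\yy$ with distinct entries, an explicit expansion
\[
f(\xx)\,=\,\sum_{I\in\Il}\,c_I(\xx)\,V_\bla(\xx\:;\yy_{\si_I})\,,\qquad
c_I(\xx)\in\C[\xx]^{S_n},
\]
where the coefficients are honest symmetric \emph{polynomials} (the inner symmetrisation makes them so). This proves spanning over $\C[\xx]^{S_n}$ directly, with no determinant or unit check required; combined with the rank count you already have, it gives the basis, and base change to $\C[\zz]$ finishes. Specialising $\yy$ to the distinct integers appearing in \eqref{VIx} yields the $V_I$'s.
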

\begin{proof}
The statement follows from Propositions \ref{PA1} and \ref{PA2}.
\end{proof}

\vsk.3>
Recall the polynomial $\,R_\bla(\zz)\,$, see \eqref{RQ}\:. Let
\vvn.3>
\beq
\label{VGz}
V_\bla(\:\GG\:;\zz)\,=\>\prod_{i=1}^{N-1}\:\prod_{j=1}^{\la_i}
\:\prod_{a=\la^{(i)}+1}^n\!(\gm_{\ij}\<-z_a)\,,\kern-1em
\vv.3>
\eeq
cf.~\eqref{Vl}\:. Then $\,V_\bla(\zz\:;\zz)=R_\bla(\zz)\,$ and
$\,V_\bla(\zz_{\si_I};\zz)=0\,\:$ for $\,I\<\ne\Imil\:$.

\vsk.3>
For a polynomial $\,f(\:\GG\:;\zz)\,$, consider the restrictions
$f(\zz_{\si_I};\zz)\,$, $\,I\<\in\Il\,$.

\begin{lem}
\label{Hcx}
The map $\,\Hd^*_T(\Fla\>;\C)\to\bigoplus_{I\in\Il}\C[\:z\:]\,$,
\vvn.07>
$\,[\:f\>]\,\mapsto\,\bigl(f(\zz_{\si_I};\zz)\,,\,I\<\in\Il\bigr)\:$
is well-de\-fined and injective. A collection
\vv.06>
$\,\bigl(F_I(\zz)\,,\,I\<\in\Il\bigr)\,$ belongs to the image of this map
if and only if for any \,$I\<\in\Il$ and any transposition $\,s_{\ab}\,$,
\vvn.2>
\beq
\label{FIab}
F_I(\zz)\big|_{\:z_a=z_b}=\,F_{s_{a\<,b}(I)}(\zz)\big|_{\:z_a=z_b}\:.
\eeq
For a collection $\,\bigl(F_I(\zz)\,,\,I\<\in\Il\bigr)\,$
obeying \eqref{FIab}, the function
\vvn.3>
\beq
\label{fFV}
f(\:\GG\:;\zz)\,=\>\sum_{I\in\Il}
\frac{F_I(\zz)\,V_\bla(\:\GG\:;\zz_{\si_I}\<)}{R_\bla(\zz_{\si_I}\<)}\;.\kern-1em
\vv.1>
\eeq
is a polynomial and $\,f(\zz_{\si_I};\zz)=F_I(\zz)\,$
for every $\,I\<\in\Il\>$.
\end{lem}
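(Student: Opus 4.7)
The plan is to proceed in four steps: well-definedness of the evaluation map $\phi:[f]\mapsto(f(\zz_{\si_I};\zz))_{I\in\Il}$, containment of its image in the subspace defined by \eqref{FIab}, an explicit reconstruction $\psi$ that serves as a right-inverse on that subspace, and finally injectivity by passing to the generic point. Well-definedness is immediate: at $\GG=\zz_{\si_I}$ the variables $(\gm_{ij})$ are a permutation of $(z_1,\ldots,z_n)$, so the defining relation $\prod_{ij}(u-\gm_{ij})-\prod_a(u-z_a)=0$ in \eqref{Hrel} holds identically and $\phi$ descends to the quotient. For the image containment, observe that any $f\in\C[\GG;\zz]^{S_\bla}$ evaluated at $\GG=\zz_{\si_I}$ depends only on the multiset partition $(\{z_\al:\al\in I_i\})_{i=1}^N$; setting $z_a=z_b$ leaves this partition unchanged under swapping the positions of $a$ and $b$, which yields \eqref{FIab}.

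The heart of the construction of $\psi$ is the orthogonality identity
\[
V_\bla(\zz_{\si_J};\zz_{\si_I}) \,=\, \dl_{IJ}\,R_\bla(\zz_{\si_I}).
\]
For $I=J$, reindexing the product in \eqref{VGz} gives $V_\bla(\zz_{\si_I};\zz_{\si_I})=\prod_{k<l}\prod_{\al\in I_k,\,\bt\in I_l}(z_\al-z_\bt)=R_\bla(\zz_{\si_I})$. For $I\ne J$, I would argue that some factor $(z_{\si_J(\la^{(i-1)}+j)}-z_{\si_I(a)})$ with $a>\la^{(i)}$ is identically zero: the absence of such a vanishing would force $J_i\subseteq I_1\cup\cdots\cup I_i$ for every $i$, which combined with $|J_i|=\la_i$ inductively yields $J=I$. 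Moreover, $V_\bla(\GG;\zz_{\si_I})=\prod_{i=1}^{N-1}\prod_{j=1}^{\la_i}q_{i,I}(\gm_{ij})$ with $q_{i,I}(x)=\prod_{a>\la^{(i)}}(x-z_{\si_I(a)})$ is manifestly $S_\bla$-symmetric in $\GG$. The orthogonality therefore gives $\tilde f(\zz_{\si_J};\zz)=F_J(\zz)$ for the candidate $\tilde f$ defined by \eqref{fFV}, and $\tilde f$ is $S_\bla$-symmetric in $\GG$.

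The main technical obstacle is polynomiality of $\tilde f$ in $\zz$: its apparent poles along $z_a=z_b$ (arising whenever $a,b$ lie in distinct blocks of some $I$) must cancel. I would employ the standard residue-pairing argument: for each such pole, group the terms indexed by $I$ and $s_{a,b}(I)$. At $z_a=z_b$ the vectors $\zz_{\si_I}$ and $\zz_{\si_{s_{a,b}(I)}}$ define the same multiset partition, so the symmetry of $V_\bla$ gives $V_\bla(\GG;\zz_{\si_I})|_{z_a=z_b}=V_\bla(\GG;\zz_{\si_{s_{a,b}(I)}})|_{z_a=z_b}$; meanwhile $R_\bla(\zz_{\si_I})$ and $R_\bla(\zz_{\si_{s_{a,b}(I)}})$ agree up to a sign on the $(z_a-z_b)$ factor. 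The hypothesis \eqref{FIab} then makes the two residues opposite, they cancel, and one concludes $\tilde f\in\C[\GG;\zz]^{S_\bla}$.

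Finally, injectivity of $\phi$ follows by passing to a generic $\zz^0\in\C^n$ with distinct coordinates. By Lemma \ref{Hfree} the source is a free $\C[\zz]$-module of rank $|\Il|$, hence torsion-free; and the fiber algebra \eqref{Hrelz0} at $\zz^0$ is an \'etale $\C$-algebra of dimension $|\Il|$ which, by the Chinese Remainder Theorem, decomposes as $\prod_{I\in\Il}\C$, each factor being the evaluation at $\GG=\zz^0_{\si_I}$. Thus $\phi$ specializes to a $\C$-linear isomorphism at every such $\zz^0$, forcing injectivity over $\C[\zz]$. Combined with the right inverse $\psi:F\mapsto[\tilde f]$ produced in the previous steps, this shows the image of $\phi$ equals the subspace cut out by \eqref{FIab}, completing the proof.
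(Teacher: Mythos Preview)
Your proof is correct. The paper's own proof consists of the single word ``Straightforward,'' so you have correctly supplied the standard details: the orthogonality $V_\bla(\zz_{\si_J};\zz_{\si_I})=\dl_{IJ}R_\bla(\zz_{\si_I})$ (stated just above Lemma~\ref{Hcx} in the paper), the residue-pairing cancellation along $z_a=z_b$ using \eqref{FIab}, and injectivity via the generic fiber.
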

\begin{proof}
Straightforward.
\end{proof}


\begin{lem}
\label{lemEc}
For any $\,f(\:\GG\:;\zz)\in\C[\:\GG\:]^{\:S_\bla}\?\ox\C[\zz]\,$, the function
\vvn.3>
\beq
\label{Ecf}
\Ec\bra f\ket(\zz)\,=\>\sum_{I\in\Il}\,
\frac{f(\zz_{\si_I},\zz)}{R_\bla(\zz_{\si_I}\<)}\kern-1em
\vv.1>
\eeq
is a polynomial depending only on the class of $\,f\,$
in $\,\Hd^*_T(\Fla\>;\C)\,$.
\end{lem}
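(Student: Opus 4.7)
The plan is to prove the two assertions separately: (i) polynomiality of $\Ec\bra f\ket(\zz)$, and (ii) independence from the choice of representative of $[\:f\>]$ modulo the ideal of relations defining $\Hd^*_T(\Fla;\C)$. Conceptually, the formula \eqref{Ecf} is the Atiyah--Bott localization formula for the equivariant push-forward $\pi_*:\Hd^*_T(\Fla;\C)\to\Hd^*_T(\pti;\C)=\C[\:\zz\:]$: the $T$-fixed points of $\Fla$ are parametrized by $I\in\Il$, and the tangent weights at the $I$-th fixed point are exactly the factors of $R_\bla(\zz_{\si_I})=\prod_{k<l}\,\prod_{c\in I_k}\<\prod_{d\in I_l}(z_c-z_d)$. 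Both assertions therefore ought to follow by elementary direct arguments mirroring the standard proof of localization.

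For (i), I would observe that $\Ec\bra f\ket(\zz)$ is rational in $\zz$ with at worst simple poles along the diagonals $z_a=z_b$, $a\ne b$, coming from the denominators $R_\bla(\zz_{\si_I})$. To cancel the pole on $z_a=z_b$, I would group the summands by orbits of the action $I\mapsto s_{\ab}(I)$ on $\Il$. If $a$ and $b$ lie in the same block of $I$ (the flat case of Lemma \ref{lemapp}), the factor $(z_a-z_b)$ does not appear in $R_\bla(\zz_{\si_I})$ and the summand is already regular. Otherwise $J:=s_{\ab}(I)\ne I$; exactly one factor $\pm(z_a-z_b)$ occurs in $R_\bla(\zz_{\si_I})$ and the opposite sign in $R_\bla(\zz_{\si_J})$, and a direct matching of the remaining factors shows that $R_\bla(\zz_{\si_I})/(z_a-z_b)\bigr|_{z_a=z_b}=R_\bla(\zz_{\si_J})/(z_b-z_a)\bigr|_{z_a=z_b}$. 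Since $f$ is $S_\bla$-symmetric and the tuples $\zz_{\si_I}$, $\zz_{\si_J}$ become permutations of one another within each block when $z_a=z_b$, the numerators $f(\zz_{\si_I};\zz)|_{z_a=z_b}=f(\zz_{\si_J};\zz)|_{z_a=z_b}$ agree as well, so the two simple residues cancel. Summing over $s_{\ab}$-orbits shows $\Ec\bra f\ket(\zz)$ has no pole on any diagonal, hence is polynomial.

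For (ii), I would show that $\Ec\bra f\ket(\zz)=0$ whenever $f(\:\GG\:;\zz)$ lies in the ideal generated by the coefficients in $u$ of $\prod_{i,j}(u-\gm_{\ij})-\prod_a(u-z_a)$. The key point is that for each $I\in\Il$ the substitution $\GG\mapsto\zz_{\si_I}$ replaces $\gm_{\ij}$ by $z_{\si_I(\la^{(i-1)}+j)}$ (with $\la^{(0)}:=0$), and as $(\ij)$ varies the indices $\si_I(\la^{(i-1)}+j)$ exhaust $\{1\lc n\}$; consequently $\prod_{i,j}(u-\gm_{\ij})\bigr|_{\GG=\zz_{\si_I}}=\prod_{a=1}^n(u-z_a)$, so every generator of the relation ideal vanishes identically after substitution and $f(\zz_{\si_I};\zz)=0$ for all $I\in\Il$.

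I expect the pole-cancellation in (i) to be the main technical obstacle, since the claim $R_\bla(\zz_{\si_I})/(z_a-z_b)\bigr|_{z_a=z_b}=R_\bla(\zz_{\si_J})/(z_b-z_a)\bigr|_{z_a=z_b}$ requires a careful bijection between the surviving factors of the two products, keyed on the various types of pairs $(a,d)$, $(c,b)$, $(b,d)$, $(c,a)$ whose other element lies in the ``middle blocks'' $I_{i+1},\dots,I_{j-1}$ (together with $I_i\setminus\{a\}$ and $I_j\setminus\{b\}$). Once this matching and the corresponding sign count are in place, the rest is formal bookkeeping.
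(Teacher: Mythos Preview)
Your argument is correct and is exactly the kind of verification the paper has in mind; the paper's own proof reads simply ``Straightforward.'' One remark: the factor-matching you flag as the main obstacle is in fact immediate, since setting $z_a=z_b$ makes the substitution $a\leftrightarrow b$ in any index a no-op, so $R_\bla(\zz_{\si_J})/(z_b-z_a)\big|_{z_a=z_b}$ and $R_\bla(\zz_{\si_I})/(z_a-z_b)\big|_{z_a=z_b}$ agree term by term without any nontrivial bijection, and likewise $f(\zz_{\si_I};\zz)\big|_{z_a=z_b}=f(\zz_{\si_J};\zz)\big|_{z_a=z_b}$ follows because $\zz_{\si_J}$ equals an $S_\bla$-permutation of $\zz_{s_{a,b}\si_I}$, which coincides with $\zz_{\si_I}$ once $z_a=z_b$.
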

\begin{proof}
Straightforward.
\end{proof}

The induced map
\beq
\label{Ec}
\Ec\::\:\Hd^*_T(\Fla\>;\C)\,\to\,\C[\:\zz\:]\,,\qquad
[\:f\>]\>\mapsto\>\Ec\bra f\ket\,,\kern-1em
\vv.2>
\eeq
is the {\it equivariant integration map on\/} $\,\Hd^*_T(\Fla\>;\C)\,$.

\vsk.4>
Identify $\,\C[\:\GG\:]^{\:S_\bla}\:$ with the subspace of
$\,\C[\:\GG\:]^{\:S_\bla}\?\ox\C[\zz]\,$ of polynomials not depending
\vv.16>
on $\,\zz\,$. For $\,f(\:\GG)\in\C[\:\GG\:]^{\:S_\bla}\>$, denote
$\,\:\Ec_{\zz^0}\bra f\ket=\Ec\bra f\ket(\zz^0)\,$.

\begin{lem}
\label{lemEcz0}
For any $\,f(\:\GG\:;\zz)\in\C[\:\GG\:]^{\:S_\bla}\?\ox\C[\zz]\>$ and
$\,\zz^0\?\in\C^n$, we have
\vvn.4>
\be
\Ec\bra f(\:\GG\:;\zz)\ket(\zz^0)\,=\,\Ec_{\zz^0}\bra f(\:\GG\:;\zz^0)\ket\,.
\kern-1em
\ee
\end{lem}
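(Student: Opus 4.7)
The plan is to verify the identity by direct unwinding of the two definitions and then reconciling them via the polynomial property of $\Ec\bra\cdot\ket$ established in Lemma~\ref{lemEc}. First, I would expand the left-hand side using formula \eqref{Ecf}:
\[
\Ec\bra f(\GG;\zz)\ket(\zz^0) \,=\, \biggl[\sum_{I\in\Il}\,\frac{f(\zz_{\si_I};\zz)}{R_\bla(\zz_{\si_I})}\biggr]_{\zz\:=\:\zz^0}.
\]
For the right-hand side, I would use the definition $\Ec_{\zz^0}\bra g\ket := \Ec\bra g\ket(\zz^0)$ applied to the polynomial $g(\GG)=f(\GG;\zz^0)\in\C[\:\GG\:]^{\:S_\bla}$:
\[
\Ec_{\zz^0}\bra f(\GG;\zz^0)\ket \,=\, \Ec\bra f(\GG;\zz^0)\ket(\zz^0) \,=\, \biggl[\sum_{I\in\Il}\,\frac{f(\zz_{\si_I};\zz^0)}{R_\bla(\zz_{\si_I})}\biggr]_{\zz\:=\:\zz^0}.
\]
Both bracketed expressions formally reduce, upon substitution $\zz=\zz^0$, to the same sum $\sum_{I}\,f(\zz^0_{\si_I};\zz^0)/R_\bla(\zz^0_{\si_I})$.

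For $\zz^0$ with pairwise distinct coordinates, all denominators $R_\bla(\zz^0_{\si_I})$ are nonzero and the identity follows immediately. The one delicate point — which I would flag as the only obstacle worth addressing carefully — is the case when $\zz^0$ has coincident coordinates, so that some of the factors $R_\bla(\zz^0_{\si_I})$ vanish and the pointwise substitution gives an indeterminate $0/0$. This is resolved by invoking Lemma~\ref{lemEc} twice: once for the bivariate polynomial $f(\GG;\zz)$ and once for the univariate polynomial $f(\GG;\zz^0)$. In both cases the lemma guarantees that the apparent poles cancel and the sums are genuine polynomials in $\zz$, so evaluation at $\zz=\zz^0$ is unambiguous. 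Since the two resulting polynomials in $\zz^0$ agree on the Zariski-dense open locus of points with pairwise distinct coordinates, they agree everywhere, proving the identity.
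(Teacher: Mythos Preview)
Your proposal is correct and matches the paper's own treatment, which simply reads ``Straightforward.'' Your careful handling of the case of coincident coordinates via the polynomiality from Lemma~\ref{lemEc} and a density argument is exactly the content behind that one word; one could also shortcut it by linearity, reducing to $f(\:\GG\:;\zz)=g(\:\GG)\>h(\zz)$, where both sides immediately equal $h(\zz^0)\,\Ec_{\zz^0}\bra g\ket$.
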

\begin{proof}
Straightforward.
\end{proof}

By Lemmas \ref{lemEc}, \ref{lemEcz0}, for every $\,\zz^0\?\in\C^n$,
there is a well-defined map
\vvn.4>
\beq
\label{Ecz0}
\Ec_{\zz^0}:\:\Hd^*_T(\Fla\>;\C)_{\:\zz^0}\:\to\,\C\,,\qquad
\Ec_{\zz^0}:[\:f\>]_{\:\zz^0}\>\mapsto\,\Ec_{\zz^0}\bra f\ket\,,\kern-2em
\vv.2>
\eeq
called the {\it integration map on\/} $\,\Hd^*_T(\Fla\>;\C)_{\:\zz^0}\,$.

\subsection{Stable envelope map}
\label{sec:Stab}
Recall the functions $\,\Wo_{\?I}(\TT\:;\zz)\,$,
\vvn.1>
$\,\WWo_{\?I}(\TT\:;\zz)\,$, see \eqref{WoI}\:, \eqref{WocI}\:.
Let $\,\:\St_{\:I}(\:\GG\:;\zz)\,$, $\Stop_{\:I}(\:\GG\:;\zz)\,$ be
\vv.06>
the polynomials respectively obtained from $\,\WWo_{\?I}(\TT\:;\zz)\,$,
$\Wo_{\?I}(\TT\:;\zz)\,$ by the substitution
\vvn.3>
\beq
\label{tgm}
(\:t^{(i)}_1\?\lc t^{(i)}_{\la^{(i)}}\:)\,=\,
(\gm_{1,1}\lc\gm_{1,\>\la_1}\:\lc\gm_{i\:,1}\lc\gm_{i\:,\>\la_i})\,,
\qquad i=1\lc N-1\,.\kern-2em
\vv.1>
\eeq
Notice that
\vvn-.1>
\beq
\label{StWo}
\St_{\:I}(\zz_{\si_J};\zz)\,=\>\WWo_{\?I}(\Si_J\:;\zz)\,,\qquad
\Stop_{\:I}(\zz_{\si_J}\:;\zz)\,=\>\Wo_{\?I}(\Si_J\:;\zz)\,.\kern-1.2em
\vv.5>
\eeq
By Lemma \ref{Womax} and formula \eqref{VGz}\:, we have
\vvn.4>
\begin{alignat}2
\label{StIma}
\St_{\:\Imil}(\:\GG\:;\zz)\>&{}=\:V_\bla(\:\GG\:;\zz)\,,\qquad
& \St_{\:\si_0(\Imil)}(\:\GG\:;\zz)\>&{}=\>1\,,
\\[6pt]
\notag
\Stop_{\:\Imil}(\:\GG\:;\zz)\>&{}=\>1\,, &
\Stop_{\:\si_0(\Imil)}(\:\GG\:;\zz)\>&{}=\:V_\bla(\:\GG\:;\zz_{\si_0}\<)\,,
\kern-1em
\\[-16pt]
\notag
\end{alignat}
where $\,\si_0\>$ is the longest permutation, $\,\si_0(a)=n+1-a\,$,
\vvn.1>
$\;a=1\lc n\,$. By Proposition \ref{lemStSch}, the polynomials
\vvn.06>
$\,\:\St_{\:I}(\:\GG\:;\zz)\,$, $\Stop_{\:I}(\:\GG\:;\zz)\,$ coincide with
the $A\:$-type double Schubert polynomials $\,\:\Sg_\si(\:\GG\:;\zz)\,$,
\vvn.2>
\beq
\label{StS}
\St_{\:I}(\:\GG\:;\zz)\,=\,\Sg_{\si_{\?\si_{\<0}\<(I)}}(\:\GG\:;\zz_{\si_0})\,,
\kern1.6em
\Stop_{\:I}(\:\GG\:;\zz)\,=\,\Sg_{\si_I}(\:\GG\:;\zz)\,.
\kern-2em
\vv.1>
\eeq

\begin{lem}
\label{lemStWV}
For any $\,I\<\in\Il\>$,
\vvn-.3>
\be
\St_{\:I}(\:\GG\:;\zz)\,=\>\sum_{J\in\Il}\frac{\WWo_{\?I}(\Si_J\:;\zz)\,
V_\bla(\:\GG\:;\zz_{\si_J}\<)}{R_\bla(\zz_{\si_J}\<)}\;,\kern-1em
\vv-.5>
\ee
and
\vvn-.2>
\beq
\label{StWV}
\Stop_{\:I}(\:\GG\:;\zz)\,=\>\sum_{J\in\Il}\frac{\Wo_{\?I}(\Si_J\:;\zz)\,
V_\bla(\:\GG\:;\zz_{\si_J}\<)}{R_\bla(\zz_{\si_J}\<)}\;.\kern-1em
\vv.1>
\eeq
\end{lem}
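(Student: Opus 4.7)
The approach is a direct application of Lemma~\ref{Hcx}, which reconstructs a polynomial representative of a class in $\Hd^*_T(\Fla;\C)$ from its restrictions at the torus fixed points $\GG=\zz_{\si_J}$. By formula~\eqref{StWo}, these restrictions for $\Stop_{\:I}$ and $\St_{\:I}$ are $\Wo_{\?I}(\Si_J\:;\zz)$ and $\WWo_{\?I}(\Si_J\:;\zz)$ respectively, which are precisely the coefficients appearing on the right-hand side of the claimed identities. Since by construction $\Stop_{\:I}$ and $\St_{\:I}$ are polynomials, the equations of Lemma~\ref{lemStWV} will follow from Lemma~\ref{Hcx} once the compatibility condition~\eqref{FIab} is verified for these collections of restrictions.

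For the verification of~\eqref{FIab}, fix $I\in\Il$ and a transposition $s_{a,b}$. Suppose $a\in J_i$ and $b\in J_j$; when $i=j$ one has $s_{a,b}(J)=J$ and the condition is trivial, so assume $i\ne j$ and (without loss of generality) $i<j$. Comparing the definitions of $\Si_J$ and $\Si_{s_{a,b}(J)}$ level by level, for each $k$ the tuples $\Si_J^{(k)}$ and $\Si_{s_{a,b}(J)}^{(k)}$ agree as multisets of $z$-variables except for an interchange of $z_a$ and $z_b$ that occurs precisely when $i\le k<j$. Specializing at $z_a=z_b$ collapses this interchange, so the two tuples become equal as multisets at every level. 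Since $\Wo_{\?I}$ and $\WWo_{\?I}$ are symmetric in each block $t^{(k)}_1,\ldots,t^{(k)}_{\la^{(k)}}$ by construction, their specializations at $z_a=z_b$ coincide, establishing~\eqref{FIab}.

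With~\eqref{FIab} in hand, Lemma~\ref{Hcx} applies: the right-hand sides of the two asserted identities are polynomials in $\GG,\zz$ whose restrictions at $\GG=\zz_{\si_J}$ reproduce $\Wo_{\?I}(\Si_J\:;\zz)$ and $\WWo_{\?I}(\Si_J\:;\zz)$. By Step~1 the polynomials $\Stop_{\:I}$ and $\St_{\:I}$ have the same restrictions, and the injectivity part of Lemma~\ref{Hcx} then identifies each pair of polynomials as the same class in $\Hd^*_T(\Fla;\C)$. The main obstacle is the combinatorial verification in Step~2: tracking how the index sets $\bigcup_{m\le k}J_m$ defining $\Si_J$ change under $s_{a,b}$ requires some care, but the symmetry of the weight functions in each level provides the desired invariance once the difference is localized to the single interchange of $z_a$ and $z_b$.
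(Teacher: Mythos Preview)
Your argument establishes only equality of cohomology classes in $\Hd^*_T(\Fla;\C)$, whereas the lemma is stated as an equality of polynomials in $\GG,\zz$. The injectivity in Lemma~\ref{Hcx} is injectivity of the restriction map on \emph{classes}, so from matching restrictions you can conclude $[\,\Stop_I\,]=[\,\text{RHS}\,]$, but two polynomials representing the same class need not be equal --- they differ by an element of the defining ideal. To upgrade to an equality of polynomials you must show that the left-hand side actually lies in the $\C(\zz)$-span of the $V_\bla(\GG;\zz_{\si_J})$, $J\in\Il$; since the right-hand side manifestly lies there and (for generic $\zz$) these polynomials are linearly independent in $\GG$, equality then follows. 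This is exactly the extra step in the paper's proof: formula~\eqref{StIma} exhibits $\Stop_{\si_0(\Imil)}=V_\bla(\GG;\zz_{\si_0})$ as a basis vector, and the divided-difference relations of Lemma~\ref{DlW} show that every $\Stop_I$ is obtained from it by operators that preserve this span.

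A secondary remark: your Step~2 is unnecessary. Since $\St_I$ and $\Stop_I$ are polynomials by construction, their restrictions automatically satisfy~\eqref{FIab} by the ``only if'' direction of Lemma~\ref{Hcx}. You do not need to verify~\eqref{FIab} combinatorially; it comes for free.
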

\begin{proof}
Fix a generic $\,\zz\,$ and consider both sides of each formula as functions
\vvn.07>
of $\,\:\GG\:$. Formulae \eqref{StIma} and Lemma \ref{DlW} imply that
$\,\:\St_{\:I}(\:\GG\:;\zz)\,$, $\,\Stop_{\:I}(\:\GG\:;\zz)\,$
\vv.06>
are in the span of the polynomials $\,V_\bla(\:\GG\:;\zz_{\<\si_J}\<)\,$,
$\,J\<\in\Il\,$. Then the statement follows from formulae \eqref{fFV}\:,
\eqref{StS}\:.
\end{proof}

\begin{lem}
\label{Stort}
The polynomials $\;\St_{\:I}(\:\GG\:;\zz)\:$ and $\;\Stop_{\:I}(\:\GG\:;\zz)\:$
are biorthogonal,
\vvn.4>
\beq
\label{ESSop}
\Ec\bra\:\St_{\:I}(\:\GG\:;\zz)\,\Stop_{\<J}(\:\GG\:;\zz)\ket\,=\,\dl_{\IJ}\,.
\vv.3>
\eeq
\end{lem}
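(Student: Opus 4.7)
The plan is to reduce the claimed biorthogonality to the already established biorthogonality of the limiting weight functions $\Wo_{\?I}$ and $\WWo_{\?I}$ from Lemma \ref{lemorto}. Both $\St_{\:I}(\:\GG\:;\zz)$ and $\Stop_{\<J}(\:\GG\:;\zz)$ are polynomials in $\:\GG,\zz\,$, hence so is their product, and we may apply the integration map $\Ec$ via its defining formula \eqref{Ecf}.

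First, I would unfold the definition of $\Ec$ applied to the product $\St_{\:I}(\:\GG\:;\zz)\,\Stop_{\<J}(\:\GG\:;\zz)\:$, obtaining
\be
\Ec\bra\:\St_{\:I}\:\Stop_{\<J}\ket(\zz)\,=\,\sum_{K\in\Il}\,
\frac{\St_{\:I}(\zz_{\si_K};\zz)\,\Stop_{\<J}(\zz_{\si_K};\zz)}
{R_\bla(\zz_{\si_K}\<)}\,.
\ee
Next, I would substitute the evaluation identities \eqref{StWo}, namely $\St_{\:I}(\zz_{\si_K};\zz)=\WWo_{\?I}(\Si_K;\zz)\,$ and $\Stop_{\<J}(\zz_{\si_K};\zz)=\Wo_{\?J}(\Si_K;\zz)\,$, to rewrite the sum as
\be
\sum_{K\in\Il}\,\frac{\Wo_{\?J}(\Si_K;\zz)\,\WWo_{\?I}(\Si_K;\zz)}
{R_\bla(\zz_{\si_K}\<)}\,.
\ee
This is precisely the left-hand side of the first biorthogonality relation of Lemma \ref{lemorto} (with the summation index and the two function labels relabeled), which equals $\dl_{\IJ}\,$, yielding the claim.

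There is essentially no real obstacle here: the entire proof is an unfolding of definitions followed by a direct appeal to Lemma \ref{lemorto}. The only minor point worth verifying is that the evaluation substitution \eqref{tgm} defining $\St_{\:I}$ and $\Stop_{\<J}$ is consistent with the substitution $\,\TT=\Si_K\,$ used in the weight-function biorthogonality, but this is immediate from the definitions in Section \ref{sec WC}.
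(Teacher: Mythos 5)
Your proof is correct and is essentially the paper's own argument: the paper likewise unfolds the integration map \eqref{Ecf} and reduces to the biorthogonality of Lemma \ref{lemorto}, citing \eqref{StS} where you more directly cite the evaluation identities \eqref{StWo}. The relabeling of indices you perform is exactly right, so nothing is missing.
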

\begin{proof}
The statement follows from formulae \eqref{Ecf}\:, \eqref{StS}\:,
and Lemma \ref{lemorto}.
\end{proof}

Lemma \ref{ESSop} is equivalent to the orthogonality relation for the Schubert
polynomials.

\vsk.4>
Let $\,\:\St_{\:I}$, $\Stop_{\:I}\>$ be the classes of
$\,\:\St_{\:I}(\:\GG\:;\zz)\,$, $\Stop_{\:I}(\:\GG\:;\zz)\,$
\vvn.06>
in $\,\Hd^*_T(\Fla\>;\C)\,$. Define the {\it stable envelope map\/} by the rule
\vvn.5>
\beq
\label{stmap}
\St_{\:\bla}\::\:\Cnnl\?\ox\C[\zz]\,\to\,\Hd^*_T(\Fla\>;\C)\,,\qquad
v_I\>\mapsto\>\St_{\:I}\>,\quad I\<\in\Il\,.\kern-2em
\vv.3>
\eeq

\begin{lem}
\label{StH}
The map $\;\St_{\:\bla}$ is an isomorphism of free $\,\C[\:\zz\:]$-modules.
\end{lem}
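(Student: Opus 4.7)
The plan is to exploit the fact that both sides of the map $\St_{\:\bla}$ are free $\C[\:\zz\:]$-modules of the same rank $d_\bla = |\Il|$ --- the source trivially and the target by Lemma \ref{Hfree}. Consequently it suffices to prove that $\St_{\:\bla}$ is surjective, and surjectivity can be verified fiberwise: if $\St_{\:\bla}$ becomes surjective after every specialization $\zz = \zz^0 \in \C^n$, then its cokernel is a finitely generated $\C[\:\zz\:]$-module vanishing at every maximal ideal, hence zero.

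First I would fix an arbitrary $\zz^0 \in \C^n$ and show that the classes $[\St_{\:I}(\:\GG\:;\zz^0)]_{\:\zz^0}$, $I \in \Il$, form a basis of the $d_\bla$-dimensional space $H^*_T(\Fla;\C)_{\:\zz^0}$. The key tool is the biorthogonality relation \eqref{ESSop} from Lemma \ref{Stort}, which is an identity in $\C[\:\zz\:]$. By Lemma \ref{lemEcz0} it specializes to
\be
\Ec_{\zz^0}\bra\:\St_{\:I}(\:\GG\:;\zz^0)\,\Stop_{\<J}(\:\GG\:;\zz^0)\ket\,=\,\dl_{\IJ}
\ee
at every $\zz^0$. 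The pairing matrix between the $d_\bla$-tuples $\{[\St_{\:I}(\:\GG\:;\zz^0)]_{\:\zz^0}\}_I$ and $\{[\Stop_{\<I}(\:\GG\:;\zz^0)]_{\:\zz^0}\}_I$ is therefore the identity, forcing both collections to be linearly independent in $H^*_T(\Fla;\C)_{\:\zz^0}$; since the dimension of this fiber is $d_\bla$ by Lemma \ref{Hfree}, they form bases.

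Combining fiberwise surjectivity with equality of ranks finishes the proof, because a $\C[\:\zz\:]$-linear surjection between free modules of the same finite rank is automatically an isomorphism. The one subtlety to address is that the defining formula \eqref{Ecf} for $\Ec$ involves division by $R_\bla(\zz_{\si_I})$ and thus appears singular on the diagonals $z_a = z_b$; however, Lemma \ref{lemEc} certifies that $\Ec\bra f\ket(\zz)$ is a genuine polynomial, which is precisely what allows the biorthogonality identity to specialize cleanly at every $\zz^0 \in \C^n$, including the non-generic locus. This uniform specialization is the main conceptual point of the argument; the remaining steps are purely formal commutative algebra.
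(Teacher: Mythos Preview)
Your proof is correct and rests on the same key ingredient as the paper, namely the biorthogonality relation of Lemma~\ref{Stort}. The difference is organizational: the paper directly writes down the inverse map
\[
\bigl(\St_{\:\bla}\bigr)^{-1}:[\:f\>]\;\mapsto\;\sum_{J\in\Il}\Ec\bra f(\:\GG\:;\zz)\,\Stop_{\<J}(\:\GG\:;\zz)\ket\,v_J,
\]
and biorthogonality shows this is a left inverse of $\St_{\:\bla}$; for a $\C[\:\zz\:]$-linear map between free modules of the same finite rank a left inverse is automatically two-sided (determinant argument). You instead pass through the fiberwise statement---essentially the content of Lemma~\ref{SHz0}---and then lift back via a cokernel argument. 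Both routes are short and valid; the paper's is slightly more constructive in that it records the explicit inverse, while yours makes the role of the commutative-algebra input more visible and is a perfectly natural way to argue if one has not yet noticed the closed formula.
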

\begin{proof}
The statement follows from Lemma \ref{Stort}\:.
For any $\,f\<\in\C[\:\GG\:]^{\:S_\bla}\?\ox\C[\zz]\,$, one has
\vvn.4>
\be
\bigl(\:\St_{\:\bla}\bigr)^{\?-1}\<:\>[\:f\>]\,\mapsto\:
\sum_{J\in\:\Il}\Ec\bra f(\:\GG\:;\zz)\,\Stop_{\<J}(\:\GG\:;\zz)\ket\,v_J\,.
\kern-1em
\vv-1.4>
\ee
\end{proof}

\vsk-.2>
Let $\,\:\St_{\:I\<,\:\zz^0}$, $\,\Stop_{\:I\<,\:\zz^0}$ be the classes of
$\,\:\St_{\:I}(\:\GG\:;\zz^0)\,$, $\Stop_{\:I}(\:\GG\:;\zz^0)\,$
in $\,\Hd^*_T(\Fla\>;\C)_{\:\zz^0}\>$. Consider the map
\vvn.1>
\beq
\label{stmapz0}
\St_{\:\bla\:,\:\zz^0}\::\:\Cnnl\to\,\Hd^*_T(\Fla\>;\C)_{\:\zz^0}\>,\qquad
v_I\>\mapsto\>\St_{\:I\<,\:\zz^0}\>,\quad I\<\in\Il\,.\kern-2em
\vv.3>
\eeq

\begin{lem}
\label{SHz0}
For every $\,\zz^0\?\in\C^n$, the classes $\,\:\St_{\:I\<,\:\zz^0}\>$,
\vvn.05>
$\,I\<\in\Il\>$, give a basis of $\,\Hd^*_T(\Fla\>;\C)_{\:\zz^0}\>$.
That is, the map $\,\:\St_{\:\bla\:,\:\zz^0}$ is an isomorphism.
\end{lem}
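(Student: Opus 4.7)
The plan is to deduce this from the biorthogonality relation already established in Lemma \ref{Stort}, combined with the dimension count from Lemma \ref{Hfree}. By Lemma \ref{Hfree}, the algebra $\,\Hd^*_T(\Fla;\C)_{\:\zz^0}\,$ has dimension $\,n\:!/(\la_1\:!\ldots\la_N\:!\:)=|\:\Il\:|\,$, which matches $\,\dim\,\Cnnl$. Hence, to show that $\,\St_{\:\bla\:,\:\zz^0}\,$ is an isomorphism, it suffices to prove that its image is linearly independent in $\,\Hd^*_T(\Fla;\C)_{\:\zz^0}\,$.

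For the linear independence, the idea is to apply the integration functional $\,\Ec_{\zz^0}\,$ constructed in \eqref{Ecz0}\:. First I would invoke Lemma \ref{Stort}, which yields the polynomial identity
\be
\Ec\bra\:\St_{\:I}(\:\GG\:;\zz)\,\Stop_{\<J}(\:\GG\:;\zz)\ket\,=\,\dl_{\IJ}
\ee
in $\,\C[\:\zz\:]$. Evaluating this identity at $\,\zz=\zz^0\,$ and invoking Lemma \ref{lemEcz0}, we obtain the pointwise biorthogonality
\be
\Ec_{\zz^0}\bra\:\St_{\:I\<,\:\zz^0}\,\Stop_{\<J,\:\zz^0}\ket\,=\,\dl_{\IJ}
\ee
in $\,\Hd^*_T(\Fla;\C)_{\:\zz^0}\,$. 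Now suppose $\,\sum_{I\in\Il}c_I\,\St_{\:I\<,\:\zz^0}=0\,$ with $\,c_I\in\C\,$; multiplying by $\,\Stop_{\<J,\:\zz^0}\,$ and applying $\,\Ec_{\zz^0}\,$ gives $\,c_J=0\,$ for each $\,J\<\in\Il\,$. This proves linear independence, and the dimension count then forces $\,\St_{\:\bla\:,\:\zz^0}\,$ to be an isomorphism.

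Alternatively, one could run the argument more structurally: Lemma \ref{StH} asserts that $\,\St_{\:\bla}\:$ is an isomorphism of free $\,\C[\:\zz\:]$-modules, so that tensoring with the residue field $\,\C[\:\zz\:]\:/\:\bra\zz=\zz^0\:\ket\cong\C\,$ (flatness is automatic for free modules) yields the isomorphism of $\,\C\:$-vector spaces claimed here. Either route is essentially immediate given the material already developed; there is no substantial obstacle, the only point requiring care being the verification that the biorthogonality relation of Lemma \ref{Stort} genuinely descends to the quotient $\,\Hd^*_T(\Fla;\C)_{\:\zz^0}\,$, which is ensured by the compatibility in Lemma \ref{lemEcz0}.
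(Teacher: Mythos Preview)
Your proof is correct and follows essentially the same route as the paper: both invoke the biorthogonality of Lemma \ref{Stort} and specialize it to $\zz^0$ via Lemma \ref{lemEcz0}. The paper's proof is terser and additionally records the explicit inverse formula
\be
\bigl(\:\St_{\:\bla\:,\:\zz^0}\?\bigr)^{\?-1}\<:\>[\:f\>]_{\:\zz^0}\>\mapsto\:
\sum_{J\in\:\Il}\Ec_{\:\zz^0}\bra\:f(\:\GG)\>\Stop_{\<J}(\:\GG\:;\zz^0)\ket\,v_J\,,
\ee
but the logical content is the same as your argument.
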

\begin{proof}
The statement follows from Lemmas \ref{Stort}, \ref{lemEcz0}.
For any $\,f\<\in\C[\:\GG\:]^{\:S_\bla}\>$, one has
\vvn.4>
\be
\bigl(\:\St_{\:\bla\:,\:\zz^0}\?\bigr)^{\?-1}\<:\>[\:f\>]_{\:\zz^0}\>\mapsto\:
\sum_{J\in\:\Il}\Ec_{\:\zz^0}\bra\:f(\:\GG)\>\Stop_{\<J}(\:\GG\:;\zz^0)\ket\,v_J\,.
\kern-1em
\vv-1.4>
\ee
\end{proof}

\vsk-.2>
Consider a polynomial
\vvn-.4>
\beq
\label{Who}
\Who_{\!\<\bla}(\TT\:;\GG)\,=\>\prod_{i=1}^{N-1}\,\prod_{j=1}^{\la^{(i)}}\,
\prod_{k=1}^{\la_{i+1}}\,(t^{(i)}_j\?-\gm_{i+1,\:k})\,.
\vv.2>
\eeq
and its image $\,\bigl[\:\Who_{\!\<\bla}(\TT\:;\GG)\:\bigr]\,$ in
$\,\C[\:\TT\:]\ox \Hd^*_T(\Fla\>;\C)\,$. Clearly, for any $\,I\<,\<J\<\in\Il\,$,
\vvn.4>
\beq
\label{WhSi}
\Who_{\!\<\bla}(\Si_I\:;\zz_J)\,=\,\dl_{\IJ}\,R_\bla(\zz_{\si_I}\<)\,.\kern-1em
\vv.3>
\eeq

\begin{thm}
\label{lemWhtG}
We have
\be
\bigl[\:\Who_{\!\<\bla}(\TT\:;\GG)\:\bigr]\,=\,
\sum_{I\in\Il}\,\Wo_{\?I}(\TT\:;\zz)\,\St_{\:I}\>.
\ee
\end{thm}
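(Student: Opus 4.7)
By Lemma~\ref{Hcx}, the evaluation map sending $[f]\in H^*_T(\Fla;\C)$ to the tuple $\bigl(f(\zz_{\si_J};\zz)\bigr)_{J\in\Il}$ is an injective $\C[\zz]$-linear map into $\bigoplus_{J\in\Il}\C[\zz]$; tensoring over $\C$ with the free module $\C[\TT]$ preserves injectivity. Using \eqref{StWo} to compute $\St_{\:I}(\zz_{\si_J};\zz)=\WWo_{\?I}(\Si_J;\zz)$, the assertion of the theorem reduces to verifying, for each $J\in\Il$, the polynomial identity
\[
\Who_{\!\<\bla}(\TT;\zz_{\si_J})\,=\>\sum_{I\in\Il}\Wo_{\?I}(\TT;\zz)\>\WWo_{\?I}(\Si_J;\zz)
\]
in $\C[\TT;\zz]$.

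To prove this, I would place both sides in a common finite-dimensional subspace $V\subset\C[\TT]$: namely the space of polynomials symmetric in $t^{(i)}_1,\ldots,t^{(i)}_{\la^{(i)}}$ for each $i=1,\ldots,N-1$ and of degree at most $\la_{i+1}$ in each individual variable $t^{(i)}_j$. The left-hand side lies in $V\otimes\C[\zz]$ by direct expansion of \eqref{Who}: under $\GG=\zz_{\si_J}$ the Chern root $\gm_{i+1,k}$ specializes to some $z_a$ with $a\in J_{i+1}$, giving a product of $\la_{i+1}$ linear factors in each $t^{(i)}_j$. For the right-hand side, the explicit formula \eqref{WoIma} for $\Wo_{\?\si_0(\Imil)}$ exhibits the required per-variable degree bound, and Lemma~\ref{DlW}, which expresses $\Wo_{\?s_{a,a+1}(I)}$ as a divided-difference operator acting on $\zz$-variables applied to $\Wo_{\?I}$, propagates this bound to every $\Wo_{\?I}$ (since divided differences in $\zz$ do not affect the $\TT$-degree). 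Writing $V=\bigotimes_{i=1}^{N-1}V_i$ where $V_i$ is the space of symmetric polynomials in $\la^{(i)}$ variables of per-variable degree at most $\la_{i+1}$, one has $\dim V_i=\binom{\la^{(i)}+\la_{i+1}}{\la^{(i)}}=\la^{(i+1)}!/(\la^{(i)}!\>\la_{i+1}!)$, and the product telescopes to $\dim V=n!/(\la_1!\cdots\la_N!)=d_\bla=|\Il|$.

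By Lemma~\ref{WIzo}, the matrix $\bigl(\Wo_{\?I}(\Si_K;\zz)\bigr)_{K,I\in\Il}$ is triangular with respect to any total order refining $I\prec J\Longleftrightarrow|\si_I|<|\si_J|$, with nonzero diagonal entries given by \eqref{WIIo}. Consequently the $d_\bla$ weight functions $\{\Wo_{\?I}\}_{I\in\Il}$ are $\C(\zz)$-linearly independent in $V\otimes_\C\C(\zz)$ and, by the dimension count, form a $\C(\zz)$-basis; the evaluation map $V\otimes_\C\C(\zz)\to\C(\zz)^{\Il}$, $f\mapsto\bigl(f(\Si_K;\zz)\bigr)_K$, is therefore a $\C(\zz)$-linear isomorphism, and it suffices to verify the displayed identity after the substitution $\TT=\Si_K$ for each $K\in\Il$. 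On the left, \eqref{WhSi} yields $\Who_{\!\<\bla}(\Si_K;\zz_{\si_J})=\dl_{JK}\>R_\bla(\zz_{\si_K})$; on the right, biorthogonality (Lemma~\ref{lemorto}) gives $\sum_{I\in\Il}\Wo_{\?I}(\Si_K;\zz)\>\WWo_{\?I}(\Si_J;\zz)=\dl_{JK}\>R_\bla(\zz_{\si_K})$. Both match, finishing the argument.

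The principal technical subtlety I anticipate is justifying the per-variable degree bound $\deg_{t^{(i)}_j}\Wo_{\?I}\le\la_{i+1}$: the summand $\Uo_I$ in \eqref{WoI} carries denominator factors $(t^{(i)}_b-t^{(i)}_a)^{-1}$, and one must control their cancellation against numerator factors under symmetrization. The cleanest route is the propagation via Lemma~\ref{DlW} starting from the manifest bound at $I=\si_0(\Imil)$, via induction on $|\si_0(\Imil)|-|\si_I|$.
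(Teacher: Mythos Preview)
Your proof is correct and takes a genuinely different route from the paper's. The paper simply cites \cite[Proposition~9.2]{TV6}, which establishes the analogous identity for the non\:-limiting weight functions $W_I$, and then invokes Lemmas~\ref{lemWWo} and~\ref{Womax} to pass to the $h\to\infty$ limit. Your argument instead works directly with the limiting objects: localize via Lemma~\ref{Hcx} to reduce to polynomial identities indexed by $J\in\Il$, confine both sides to a space $V$ of dimension exactly $d_\bla$, and verify agreement at the interpolation points $\TT=\Si_K$ using \eqref{WhSi} and the biorthogonality of Lemma~\ref{lemorto}. This is self\:-contained (no reference to \cite{TV6}) and in fact slightly stronger, yielding honest polynomial identities in $\C[\TT;\zz]$ rather than just equality of cohomology classes. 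The degree\:-bound induction via Lemma~\ref{DlW} that you flag does go through: for any $I'\ne\si_0(\Imil)$ there is an adjacent $I'$-ordered pair $(a,a+1)$, since otherwise the block\:-index function $a\mapsto i$ with $a\in I'_i$ would be non\:-increasing, forcing $I'=\si_0(\Imil)$; for adjacent indices one has $r_{a,a+1,I'}=0$, so Lemma~\ref{lemapp}(b) supplies the required covering relation in Bruhat order.
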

\begin{proof}
The statement follows from \cite[Proposition 9.2]{TV6} and Lemmas
\ref{lemWWo}, \ref{Womax}.
\end{proof}

\begin{rem}
According to formulae \eqref{StS}, the classes $\,\:\St_{\:I}\:$,
$\,\:\Stop_{\:I}\:$, $\,\:I\<\in\Il\,$, are the equivariant fundamental classes
of the Schubert subvarieties in $\,\Fla\>$ defined for the opposite orderings
of the chosen basis of $\,\:\C^n$ in the standard way,
see \cite[Section 4.1]{Oh}, cf.~\cite[Section 6]{RV}. On the equivariant
fundamental classes of Schubert varieties see, in particular, \cite{R, FRW}.
\end{rem}

\begin{rem}
Stable envelope maps for Nakajima quiver varieties were introduced in
\cite{MO}\:. They were defined in \cite{MO} geometrically in terms of the
associated torus action. The map $\;\St_{\:\bla}\:$ \,given by formula
\eqref{stmap} is the limit as $\,h\to\infty\,$ of the stable envelope map of
\cite{MO} associated with the cotangent bundle $\,\tfl\>$ of the partial flag
variety. In \cite{RTV1}\:, the stable envelope map for $\,\tfl\>$ is described
in terms of the Chern roots of the bundles $\,\FF_1\lc\FF_{N-1}\,$ over
$\,\Fla\>$ with fibers $\,F_1\lc F_{N-1}$\,, respectively.
\end{rem}

\begin{rem}
Unlike the stable envelope map $\,\:\St_{\:\bla}\>$ defined by \eqref{stmap}\:,
the stable envelope map for the cotangent bundle $\,\tfl\>$ is not
an isomorphism of free $\C[\:\zz\:]$-modules $\,\Cnnl\?\ox\C[\zz]\,$ and
$\,\Hd^*_{\smash{T\times\:\C^{\<\times}}}\<(\tfl\>;\C)\,$, but only an embedding.
\end{rem}

\subsection{Quantum multiplication}
The quantum multiplication in $\,\Hd^*_T(\Fla\>;\C)\,$, see for example
\cite{Mi,BM}\:, is a deformation of the multiplication in
$\,\Hd^*_T(\Fla\>;\C)\,$ depending on quantum parameters. In the notation of
\cite{Mi}\:, the quantum parameters are $\,q^{\>\al_i}\?$, $\,i=\lc N-1\,$.
In the notation of this paper, we have $\,q^{\>\al_i}\!=p_{i+1}/p_i\,$.

\vsk.2>
For $\,Y\!\in \Hd^*_T(\Fla\>;\C)\,$, denote by $\>Y\<*_\pp\>$ the operator
\vvn.05>
of quantum multiplication by $\>Y\>$. Recall that the operators $\>Y\<*_\pp\>$
are $\,\C[\:\zz\:]\:$-module endomorphisms of $\,\Hd^*_T(\Fla\>;\C)\,$.

\vsk.2>
The dynamical operators $\,\Xo_1(\zz\:;\pp)\lc\Xo_{\?N}(\zz\:;\pp)\,$,
\vvn.08>
see \eqref{Xo}\:, are linear functions of $\,\zz\,$. Thus their action
on $\,\Cnnl\:$ extends to the $\,\C[\:\zz\:]\:$-\:linear action
on $\Cnnl\?\ox\C[\zz]\,$.

\vsk.2>
For $\,i=1\lc N\>$, denote
$\,D_i\<=\gm_{i,1}\<\lsym+\gm_{i,\>\la_i}\>$, cf.~\eqref{c1}\:.

\begin{thm}
\label{D*X}
The isomorphism $\;\St_{\:\bla}\:$ intertwines the dynamical operators
$\,\Xo_1(\zz\:;\pp)\lc\Xo_{\?N}(\zz\:;\pp)\:$ acting on $\Cnnl\?\ox\C[\zz]\>$ and
the operators of quantum multiplication $\,[D_1]\>*_\pp\:\lc[D_N]\>*_\pp\:$
acting on $\,\Hd^*_T(\Fla\>;\C)\,$,
\vvn.1>
\beq
\label{StXD}
\St_{\:\bla}\:\circ\Xo_i(\zz\:;\pp)\,=\,
[D_i]\>{*_\pp}\:\circ\>\St_{\:\bla}\,.\kern-1em
\vv.2>
\eeq
\end{thm}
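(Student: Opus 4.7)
The plan is to obtain Theorem \ref{D*X} by taking the $h \to \infty$ limit of the analogous intertwining statement for the cotangent bundle $\tfl$, which is established in \cite{GRTV,RTV1}. Let $\St^{\tfl}_\bla$ denote the Maulik--Okounkov stable envelope map from $\Cnnl \otimes \C[\zz,h]$ to $H^*_{T \times \C^\times}(\tfl\>;\C)$. By \cite{GRTV,RTV1}, the map $\St^{\tfl}_\bla$ intertwines the dynamical Hamiltonians $X_i(\zz\:;h;\qq)$ of \eqref{Xi} with the operators of equivariant quantum multiplication by the first Chern classes of the tautological quotient bundles on $\tfl\,$.

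First, I would identify $\St_\bla$ as the $h \to \infty$ limit of $\St^{\tfl}_\bla$, as indicated in the remark preceding Theorem \ref{D*X}. Concretely, the weight functions $W_I(\TT\:;\zz\:;h)$ which serve as the building blocks of $\St^{\tfl}_\bla$ in \cite{RTV1} degenerate to the limiting weight functions $\Wo_I(\TT\:;\zz)$ by Lemma \ref{lemWWo}, and Theorem \ref{lemWhtG} is precisely the $h \to \infty$ image of the corresponding stable envelope expansion for the cotangent bundle. Once this identification is in place, Lemma \ref{lem X-lim} together with substitution \eqref{qp} rewrites the action of $\Xo_i(\zz\:;\pp)$ on the left-hand side of \eqref{StXD} as the appropriate limit of $X_i(\zz\:;h;\qq)$ conjugated by the diagonal factor $(-h)^{\,\sum_{b<c,\,j<k}e^{(b)}_{k,k}e^{(c)}_{j,j}}$.

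Next, I would degenerate the right-hand side. By the BMO argument of \cite{BMO}, extended in \cite{GK,GKS} through the same $R$-matrix limit used in Section \ref{sec h to inf} of this paper, the equivariant quantum multiplication by $[D_i]$ on $H^*_{T \times \C^\times}(\tfl\>;\C)$ degenerates, as $h \to \infty$, to the equivariant quantum multiplication by $[D_i]$ on $H^*_T(\Fla\>;\C)$, with the parameters $\qq$ replaced by $\pp$ according to \eqref{qp}. Passing to the limit $h \to \infty$ on both sides of the cotangent-bundle intertwining then yields \eqref{StXD}.

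The principal technical obstacle is the careful tracking of the conjugation and rescaling factors from Lemma \ref{lem X-lim} and Proposition \ref{lem5.5} so that the normalizations match the BMO degeneration on the quantum-cohomology side. A cleaner alternative, avoiding the cotangent-bundle detour altogether, is to verify \eqref{StXD} directly in the standard basis $\{v_I\}_{I\in\Il}$ of $\Cnnl$: the left-hand side is computed by the explicit formula \eqref{Xo}, while the right-hand side is expanded via Theorem \ref{lemWhtG} and the biorthogonality of Lemma \ref{Stort}, reducing the identity to the equivariant quantum Pieri rule for $H^*_T(\Fla\>;\C)$. The hard step in this alternative is matching the quantum correction terms in \eqref{Xo}, proportional to $p_j/p_i$ and $p_i/p_j$, to the off-diagonal coefficients in the equivariant quantum Pieri expansion of $[D_i] *_\pp \St_I$ in the basis $\{\St_J\}$.
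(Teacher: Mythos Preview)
Your primary approach has a genuine gap. The claim that equivariant quantum multiplication on $H^*_{T\times\C^\times}(\tfl;\C)$ degenerates as $h\to\infty$ to equivariant quantum multiplication on $H^*_T(\Fla;\C)$ is not established in the references you invoke for general partial flag varieties: \cite{BMO} treats only the full flag variety, and \cite{GK,GKS} treat Grassmannians. For arbitrary $\bla$ this degeneration is precisely what the present paper is establishing, and Theorem~\ref{D*X} is the key step in that identification (see the discussion in Section~\ref{intro}). So your approach~1 is circular: you would be assuming the result you are meant to prove. There is also the subtlety, noted in the remarks after Lemma~\ref{StH}, that the cotangent-bundle stable envelope is only an embedding, not an isomorphism, which complicates the passage to the limit on the cohomology side.

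Your ``cleaner alternative'' is essentially the route the paper takes. The paper invokes Mihalcea's equivariant quantum Chevalley formula \cite[Theorem~6.4]{Mi}, which gives the action of the divisor classes $[D_1+\cdots+D_i]$ on the Schubert basis $\{\St_J\}$, and then compares that formula term by term with the explicit definition \eqref{Xo} of $\Xo_i$ in the basis $\{v_I\}$. The identification of the relevant divisor classes with specific Schubert classes is carried out via \eqref{SDz}. What you call the ``hard step'' --- matching the $p_j/p_i$ correction terms in \eqref{Xo} to the quantum terms in the Pieri expansion --- is exactly this term-by-term comparison with \cite[formula~(6.1)]{Mi}; the paper does not reprove the Pieri rule but quotes it. The case $i=N$ is handled separately using $[D_1+\cdots+D_N]=z_1+\cdots+z_n$. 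So your second approach is correct in outline and matches the paper, but you should name the external input (Mihalcea's Chevalley formula) explicitly rather than leaving it as an unspecified ``equivariant quantum Pieri rule''.
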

\begin{proof}
Theorem~6.4 in \cite{Mi} describes the quantum multiplication
\vvn.04>
in $\,\Hd^*_T(\Fla\>;\C)\,$ by the equivariant divisor classes.
In the notation of this paper those classes equal
$\,\:\St_{\>\si_0\>s_{\la^{(i)}\!,\>\la^{(i)}\<+1}(\Imil)}\,$,
$\,\:i=1,\dots, N-1\,$, where $\,\si_0\,$ is the longest permutation.
By formulae \eqref{stmap}\:, \eqref{WocI}\:, \eqref{Wos}\:,
\vvn.3>
\beq
\label{SDz}
\St_{\>\si_0\>s_{\la^{(i)}\!,\>\la^{(i)}\<+1}(\Imil)}(\:\GG\:;\zz)\,=\,
D_1\<\lsym+D_i\<-z_{n-\la^{(i)}\<+1}\<\lsym-z_n\,.\kern-2em
\vv.2>
\eeq
Comparing formula \eqref{Xo} term by term with formula (6.1) in \cite {Mi}
\vv.07>
yields formula \eqref{StXD} for $\,i=1\lc N-1\,$. Formula \eqref{StXD} for
$\,i=N\>$ holds since $\,[\:D_1\<\lsym+D_N\:]=z_1\<\lsym+z_n\,$ in
\vv.06>
$\,\Hd^*_T(\Fla\>;\C)\,$ and
$\,X_1(\zz\:;\pp)\lsym+X_N(\zz\:;\pp)=z_1\?\lsym+z_n\,$.
Recall that for each $\,i=1\lc N\>$, the operator $\,z_i\>*_\pp\>$
is the ordinary multiplication by $\,z_i\,$,
\end{proof}

Since the operators of quantum multiplication act $\,\C[\zz]\:$-linearly on
\vv.05>
$\,\Hd^*_T(\Fla\>;\C)\,$, the quantum multiplication in $\,\Hd^*_T(\Fla\>;\C)\,$
can be restricted to $\,\Hd^*_T(\Fla\>;\C)_{\:\zz^0}$. Denote by
$\>Y\<{*}_{\pp\:,\:\zz^0}\:$ the operator
of quantum multiplication by $\>Y\!\in \Hd^*_T(\Fla\>;\C)_{\:\zz^0}\>$.

\begin{cor}
\label{D*Xz0}
For every $\,\zz^0\?\in\C^n$, the isomorphism $\;\St_{\:\bla\:,\:\zz^0}\:$
\vv.1>
intertwines the dynamical operators
$\,\Xo_1(\zz^0;\pp)\lc\Xo_{\?N}(\zz^0;\pp)\:$ acting on $\Cnnl$ and
\vv.1>
the operators of quantum multiplication
$\,[D_1]_{\:\zz^0}\>{*}_{\pp\:,\:\zz^0}\lc[D_N]_{\:\zz^0}\>{*}_{\pp\:,\:\zz^0}$
acting on $\,\Hd^*_T(\Fla\>;\C)_{\:\zz^0}\,$,
\vvn.4>
\beq
\label{StXDz0}
\St_{\:\bla\:,\zz^0}\:\circ\Xo_i(\zz^0;\pp)\,=\,
{[D_i]_{\:\zz^0}\>{*}_{\pp\:,\:\zz^0}}\:\circ\>\St_{\:\bla\:,\:\zz^0}\,.\kern-1em
\vv.4>
\eeq
\end{cor}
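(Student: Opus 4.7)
The plan is to obtain Corollary \ref{D*Xz0} as a direct specialization of Theorem \ref{D*X} at the point $\,\zz=\zz^0\:$. The key observation is that every object appearing in Theorem \ref{D*X} is $\,\C[\:\zz\:]\:$-\:linear, so the identity \eqref{StXD} descends to the quotient by the maximal ideal $\,\bra\>\zz-\zz^0\>\ket\subset\C[\:\zz\:]\,$.

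First I would note that the stable envelope map $\,\St_{\:\bla}\>$ is by construction a morphism of $\,\C[\:\zz\:]\:$-\:modules (Lemma \ref{StH}), so tensoring with $\,\C[\:\zz\:]/\bra\>\zz-\zz^0\>\ket\cong\C\,$ yields precisely the map $\,\St_{\:\bla\:,\:\zz^0}\,$ of \eqref{stmapz0}\:, which is an isomorphism by Lemma \ref{SHz0}. Second, the operators $\,\Xo_i(\zz\:;\pp)\,$ are linear in $\,\zz\,$ and so their action on $\,\Cnnl\?\ox\C[\:\zz\:]\,$ descends to the action of $\,\Xo_i(\zz^0;\pp)\,$ on $\,\Cnnl\?\ox_{\>\C[\zz]}\C=\Cnnl\,$ after the evaluation. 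Third, quantum multiplication operators $\,[D_i]\>*_\pp\>$ are $\,\C[\:\zz\:]\:$-\:module endomorphisms of $\,\Hd^*_T(\Fla\>;\C)\,$, hence they induce well-defined operators on the quotient $\,\Hd^*_T(\Fla\>;\C)_{\:\zz^0}\,$, and these induced operators coincide with $\,[D_i]_{\:\zz^0}\>{*}_{\pp\:,\:\zz^0}\>$ by the definition of the quantum product on $\,\Hd^*_T(\Fla\>;\C)_{\:\zz^0}\,$.

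With these three observations, applying the functor $\,{-}\>\ox_{\C[\zz]}\>\C[\:\zz\:]/\bra\>\zz-\zz^0\>\ket\,$ to both sides of \eqref{StXD} yields formula \eqref{StXDz0}\:. There is no real obstacle here: the only point to verify carefully is that the evaluation $\,[D_i]\>*_\pp\,\mapsto[D_i]_{\:\zz^0}\>{*}_{\pp\:,\:\zz^0}\,$ is compatible with the quotient construction, which is immediate from the fact that the structure constants of the quantum product are polynomials in $\,\zz\,$ and the quantum product is $\,\C[\:\zz\:]\:$-\:bilinear. Thus Corollary \ref{D*Xz0} follows formally from Theorem \ref{D*X} by base change along $\,\C[\:\zz\:]\to\C\,$, $\,f(\zz)\mapsto f(\zz^0)\,$.
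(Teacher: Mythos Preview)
Your proposal is correct and is precisely the argument the paper intends: the corollary is stated immediately after Theorem \ref{D*X} with no separate proof, and the sentence preceding it explains that the quantum multiplication operators act $\,\C[\:\zz\:]\:$-linearly and hence restrict to $\,\Hd^*_T(\Fla\>;\C)_{\:\zz^0}\,$. Your base-change argument along $\,\C[\:\zz\:]\to\C\,$ is exactly the intended specialization.
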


\subsection{Differential and difference equations}
Consider
\vvn.05>
the space $\,{\C^n\!\times\C^N}$ with coordinates $\,\zz\:,\pp\,$.
\vvn.07>
By Lemma \ref{Hfree}, we have a trivial vector bundle
$\,H_\bla\?\to\:\C^n\!\times\C^N$ with fiber over a point
$\,(\zz^0\?,\pp^{\:0})\,$ given by $\,\Hd^*_T(\Fla\>;\C)_{\:\zz^0}\,$.
\vv.07>
Let $\,U_\bla\?\to\,\C^n\!\times\C^N$ be the trivial vector bundle with fiber
$\,\Cnnl\:$.

\begin{lem}
\label{StUH}
The map
$\;\St_{\:\bla}^{\:\lzs}:\:\<U_\bla\to\:H_\bla\,$,
$\,(\zz^0\?,\pp^{\:0}\?,v)\>\mapsto\>
(\zz^0\?,\pp^{\:0}\?,\>\St_{\:\bla\:,\:\zz^0}\>v)\>$,
is an isomorphism of vector bundles.
\end{lem}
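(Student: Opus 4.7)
The plan is to reduce the bundle statement to the fiberwise isomorphism established in Lemma \ref{SHz0}, upgraded by the observation that the construction varies polynomially in the base coordinates. First I would note that neither the classes $\St_{\:I}(\:\GG\:;\zz)$ defined in Section~\ref{sec:Stab} nor the identification of $U_\bla$ with $\Cnnl\times(\C^n\!\times\C^N)$ involve $\pp$, so the map $\St_{\:\bla}^{\:\lzs}$ factors through the projection to the $\zz$-direction and it suffices to treat it as a map of bundles over $\C^n$, trivially extended over the $\pp$-factor.

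Next I would verify that $\St_{\:\bla}^{\:\lzs}$ is a well-defined bundle map: the fiber of $H_\bla$ over $\zz^0$ is $\Hd^*_T(\Fla\>;\C)_{\:\zz^0}=\C[\:\GG\:]^{\:S_\bla}\!/\bra\prod_{i,j}(u-\gm_{\ij})=\prod_a(u-z^0_a)\ket$, and the value $\St_{\:\bla\:,\:\zz^0}(v_I)=[\St_{\:I}(\:\GG\:;\zz^0)]_{\:\zz^0}$ is obtained by evaluating the polynomial $\St_{\:I}(\:\GG\:;\zz)$ at $\zz=\zz^0$ and taking its class modulo the ideal. This produces sections of $H_\bla$ that depend polynomially on $\zz^0$, so the map is an algebraic (in particular smooth) morphism of trivial vector bundles.

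Then I would invoke Lemma \ref{SHz0}, which asserts that for every point $\zz^0\in\C^n$ the induced map $\St_{\:\bla\:,\:\zz^0}\!:\Cnnl\to\Hd^*_T(\Fla\>;\C)_{\:\zz^0}$ is an isomorphism of complex vector spaces. Since both $U_\bla$ and $H_\bla$ are trivial bundles of the same rank $n\:!/(\la_1!\ldots\la_N!)$ (the dimension given in Lemma \ref{Hfree}) and $\St_{\:\bla}^{\:\lzs}$ is a fiberwise linear isomorphism between them varying polynomially with the base, it is automatically a bundle isomorphism, with inverse given fiberwise by the explicit formula in the proof of Lemma \ref{SHz0}, namely
\be
\bigl(\St_{\:\bla\:,\:\zz^0}\bigr)^{-1}\!:[\:f\>]_{\:\zz^0}\mapsto\sum_{J\in\Il}\Ec_{\:\zz^0}\bra f(\:\GG)\>\Stop_{\<J}(\:\GG\:;\zz^0)\ket\,v_J\,,
\ee
which is itself polynomial in $\zz^0$ by Lemma \ref{lemEcz0}. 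There is no serious obstacle here; the only point requiring any care is to confirm that the rank of $H_\bla$ is locally constant, which is precisely the content of Lemma \ref{Hfree}, so that $H_\bla$ is genuinely a trivial vector bundle and not just a family of algebras of a priori varying dimension.
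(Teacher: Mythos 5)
Your proof is correct and follows essentially the same route as the paper, which simply cites Lemma \ref{SHz0} for the fiberwise isomorphism. The extra checks you supply (polynomial dependence on $\zz^0$, the explicit polynomial inverse via $\Ec_{\zz^0}$ and the $\Stop_{\<J}$, and the constancy of the rank from Lemma \ref{Hfree}) are accurate and just make explicit what the paper leaves implicit.
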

\begin{proof}
The statement follows from Lemma \ref{SHz0}.
\end{proof}

\vsk.2>
The {\it equivariant quantum differential equations\/} for sections
of $\,H_\bla\:$ is a system of differential equations
\vvn-.3>
\beq
\label{qDEQ}
\ka\>p_i\:\frac{\der f}{\der\:p_i}\,=\,
{[D_i]_{\:\zz}\>{*}_{\pp\:,\:\zz}}\>f\,,\qquad i=1\lc N\>,\kern-2em
\vv.2>
\eeq
where $\,\ka\,$ is the parameter of the equations. By Corollary \ref{D*Xz0},
the isomorphism $\,\:\St_{\:\bla}^{\:\lzs}\,$ identifies equations \eqref{qDEQ}
with the limiting dynamical differential equation \eqref{DEQo} for sections
of $\,U_\bla\:$,
\vvn-.4>
\beq
\label{DEQch}
\ka\>p_i\:\frac{\der f}{\der\:p_i}\,=\,\Xo_i(\zz\:;\pp)\>f\,,\qquad
i=1\lc N\>.\kern-1.9em
\vv.3>
\eeq
Furthermore, the isomorphism $\,\:\St_{\:\bla}^{\:\lzs}\,$ and the limiting \qKZ/
equations \eqref{Kio} for sections of $\,U_\bla$ define the \qKZ/
{\it difference equations in cohomology\/}
\vvn.3>
\beq
\label{Kich}
f(z_1\lc z_a\<+\ka\lc z_n;\pp\:;\ka)\,=\,
\KH_a(\zz\:;\pp\:;\ka)\,f(\zz\:;\pp\:;\ka)\,,\qquad a=1\lc n\,,\kern-2em
\vv.3>
\eeq
where for each $\,a=1\lc n\,$, and fixed $\,\zz,\pp\,$,
the operator $\,\KH_a(\zz\:;\pp\:;\ka)\,$ is a map of fibers,
\vvn.5>
\begin{gather*}
\KH_a(\zz\:;\pp\:;\ka)\>:\>\Hd^*_T(\Fla\>;\C)_{\:\zz}\>\to\,
\Hd^*_T(\Fla\>;\C)_{\:(z_1,\:\ldots\:,\>z_a+\:\ka\:,\:\ldots\:,\>z_n)}\,,
\kern-1.1em
\\[6pt]
\KH_a(\zz\:;\pp\:;\ka)\,=\,
\St_{\:\bla,\:(z_1\?,\:\ldots\:,\>z_a+\:\ka\:,\:\ldots\:,\>z_n)}\:
\circ\:\Ko_a(\zz\:;\pp\:;\ka)\:\circ\:(\:\St_{\:\bla,\>\zz}\<)^{-1}\:,
\kern-1em
\\[-14pt]
\end{gather*}
and the operator $\,\Ko_a(\zz\:;\pp\:;\ka)\,$ is given by \eqref{Ko}\:.

\begin{thm}
\label{qcompat}
Quantum differential equations \eqref{qDEQ} and \qKZ/ difference equations
\eqref{Kich} define a compatible system of differential and difference equations
for sections of $\,H_\bla\>$.
\end{thm}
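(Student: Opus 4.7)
The plan is to transfer compatibility from the system on $U_\bla$ to the system on $H_\bla$ via the isomorphism $\St_{\:\bla}^{\:\lzs}$. By Theorem \ref{thm qkz n}, the joint system of limiting dynamical differential equations \eqref{DEQo} and limiting \qKZ/ difference equations \eqref{Kio} for sections of $U_\bla$ is compatible. I will show that the system \eqref{qDEQ}, \eqref{Kich} on $H_\bla$ is the pushforward of this compatible system under the bundle isomorphism $\St_{\:\bla}^{\:\lzs}$, and hence inherits compatibility.

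First, I would translate ``compatibility'' into an explicit set of commutation/flatness relations. For the dynamical differential equations, this means that the connection operators $\,\nobla_{\!\<\pp,\ka,i}\:$, $i=1\lc N$, pairwise commute (already recorded, say, in \eqref{nadyn}). For the joint system with \qKZ/, compatibility means that for all $i=1\lc N$ and $a=1\lc n$ one has
\begin{equation*}
\Ko_a(z_1\lc z_a+\ka\lc z_n;\pp\:;\ka)\,\nobla_{\!\<\pp,\ka,i}\,=\,
\bigl(\nobla_{\!\<\pp,\ka,i}\bigr)\big|_{z_a\mapsto z_a+\ka}\,\Ko_a(\zz\:;\pp\:;\ka)\,,
\end{equation*}
together with the flat connection property for the $\Ko_a\,$ from Corollary~\ref{cor qKZ}. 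All of these are established by Theorem~\ref{thm qkz n}.

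Next, I would use Corollary~\ref{D*Xz0} to identify the dynamical operator $\Xo_i(\zz^0;\pp)$ on the fiber $\Cnnl$ with the quantum multiplication operator $[D_i]_{\:\zz^0}\>{*}_{\pp\:,\:\zz^0}$ on $\Hd^*_T(\Fla;\C)_{\:\zz^0}\,$, pointwise in $\zz^0$; since this identification is the fiberwise restriction of the global map $\St_{\:\bla}\,$, the corresponding connection operators on the two bundles are conjugate via $\St_{\:\bla}^{\:\lzs}\,$. By definition, the operators $\KH_a(\zz\:;\pp\:;\ka)$ are also conjugates of $\Ko_a(\zz\:;\pp\:;\ka)$ via $\St_{\:\bla}^{\:\lzs}\,$ (over varying basepoints $\zz\,$, $z_1\lc z_a+\ka\lc z_n\,$). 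Thus both \eqref{qDEQ} and \eqref{Kich} are the pushforwards of \eqref{DEQo}, \eqref{Kio} under $\St_{\:\bla}^{\:\lzs}\,$.

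Finally I would verify that every compatibility relation on $U_\bla$ transforms into the corresponding relation on $H_\bla\,$. The main point to check carefully is the $z\:$-shift for the \qKZ/ operators: since $\St_{\:\bla\:,\:\zz}\,$ depends on $\zz\,$, one must match the conjugation at $\zz$ on one side of the \qKZ/ equation with the conjugation at $(z_1\lc z_a+\ka\lc z_n)\,$ on the other; this is built into the definition of $\KH_a\,$, so the flat connection property of $\Ko_a\,$ passes directly to $\KH_a\,$. Similarly, since the entries of $\Xo_i(\zz\:;\pp)$ are polynomial in $\zz\,$ and $\St_{\:\bla}\,$ is polynomial in $\zz\,$, the intertwining \eqref{StXDz0} at a shifted basepoint follows by specialization from \eqref{StXD}, and this delivers the cross\:-compatibility between \eqref{qDEQ} and \eqref{Kich}. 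The main (minor) obstacle is precisely bookkeeping this $\zz\:$-dependence of the stable envelope identification when shifting $z_a\<\mapsto z_a+\ka\,$; once that is handled, compatibility is immediate from Theorem~\ref{thm qkz n}.
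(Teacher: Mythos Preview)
Your proposal is correct and follows essentially the same approach as the paper, which simply says the statement follows from Theorem~\ref{thm qkz n}. You have spelled out in detail what the paper leaves implicit: since \eqref{qDEQ} and \eqref{Kich} are by construction the pushforwards of \eqref{DEQo} and \eqref{Kio} under the bundle isomorphism $\St_{\:\bla}^{\:\lzs}$, compatibility transfers directly.
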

\begin{proof}
The statement follows from Theorem \ref{thm qkz n}.
\end{proof}

\subsection{Solutions with values in cohomology}
\label{s6.6}
Recall the isomorphism
\vvn.07>
$\,\:\St_{\:\bla}^{\:\lzs}:U_\bla\to\:H_\bla\:$ of vector bundles,
see Lemma \ref{StUH}. For any function $\,f(\zz\:;\pp\:;\ka)\,$
with values in $\,\Cnnl$, that is, a section of $\,U_\bla\,$, denote by
$\,\:\St_{\:\bla}\>f\,$ the corresponding section of $\,H_\bla\:$ with values
\vvn.3>
\be
\St_{\:\bla}\>f(\zz\:;\pp\:;\ka)\,=\,
\St_{\:\bla\:,\:\zz}\bigl(f(\zz\:;\pp\:;\ka)\bigr)\,.
\vv.3>
\ee
Recall the solutions $\,\Pso_{\?P}(\zz\:;\pp\:;\ka)\,$
of the joint system of limiting dynamical differential equations \eqref{DEQo}
and \qKZ/ difference equations \eqref{Kio} labeled by Laurent polynomials
$\,P(\:\GGd;\zzd)\,$, see formula \eqref{PPIo} and Proposition \ref{PsiPsolo}.

\begin{thm}
\label{thm main}
The isomorphism $\;\St_{\:\bla}^{\:\lzs}\>$ transforms the solutions
$\,\Pso_{\?P}(\zz\:;\pp\:;\ka)\>$ of equations \eqref{DEQo}\:, \eqref{Kio}
to solutions of the joint system of quantum differential equations and \qKZ/
difference equations \eqref{qDEQ}\:, \eqref{Kich}\:. Namely, for each
$\>P\?\in\C[\:\GGd^{\pm1}]^{\:S_\bla}\?\ox\C[\:\zzd^{\pm1}]\,$, the function
$\,\:\St_{\:\bla}\Pso_{\?P}(\zz\:;\pp\:;\ka)\,$ is a solution of equations
\eqref{DEQo}\:, \eqref{Kio}\:.
\end{thm}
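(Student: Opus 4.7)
The proof plan is essentially to verify that the isomorphism $\St_{\:\bla}^{\:\lzs}$ intertwines the two systems of equations, using the results already established in the excerpt. The argument is mechanical once the right ingredients are in place, so the plan is to lay them out carefully.

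First, I would recall from Proposition \ref{PsiPsolo} that $\,\Pso_{\?P}(\zz\:;\pp\:;\ka)\,$ is a $\,\Cnnl$-valued section of the trivial bundle $\,U_\bla\,$ that solves the limiting dynamical differential equations \eqref{DEQo} and the limiting \qKZ/ difference equations \eqref{Kio}. Its image under the isomorphism $\,\St_{\:\bla}^{\:\lzs}\,$ of vector bundles (Lemma \ref{StUH}) is the section of $\,H_\bla\,$ given pointwise by $\,(\St_{\:\bla}\Pso_P)(\zz\:;\pp\:;\ka)=\St_{\:\bla\:,\:\zz}\bigl(\Pso_P(\zz\:;\pp\:;\ka)\bigr)\,$.

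For the differential part, I would apply $\,\St_{\:\bla\:,\:\zz}\,$ to both sides of \eqref{DEQo}. Since $\,\St_{\:\bla\:,\:\zz}\,$ does not depend on $\,\pp\,$, it commutes with $\,\ka\>p_i\:\der/\der p_i\,$, giving
\be
\ka\>p_i\:\frac{\der(\St_{\:\bla}\Pso_P)}{\der p_i}\,=\,
\St_{\:\bla\:,\:\zz}\bigl(\Xo_i(\zz\:;\pp)\:\Pso_P\bigr)\,.
\ee
By Corollary \ref{D*Xz0}, the right hand side equals $\,[D_i]_{\:\zz}\>{*}_{\pp\:,\:\zz}\>(\St_{\:\bla}\Pso_P)\,$, which is precisely \eqref{qDEQ}.

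For the difference part, I would use the very definition of the operator $\,\KH_a(\zz\:;\pp\:;\ka)\,$ as the conjugate of $\,\Ko_a(\zz\:;\pp\:;\ka)\,$ by the stable envelopes at the adjacent lattice points. Applying $\,\St_{\:\bla\:,\:(z_1,\ldots,z_a+\ka,\ldots,z_n)}\,$ to both sides of \eqref{Kio} and inserting $\,(\St_{\:\bla\:,\:\zz})^{-1}\:\St_{\:\bla\:,\:\zz}\,$ between $\,\Ko_a\,$ and $\,\Pso_P\,$ immediately yields
\be
(\St_{\:\bla}\Pso_P)(z_1,\ldots,z_a+\ka,\ldots,z_n\:;\pp\:;\ka)\,=\,
\KH_a(\zz\:;\pp\:;\ka)\:(\St_{\:\bla}\Pso_P)(\zz\:;\pp\:;\ka)\,,
\ee
which is \eqref{Kich}. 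Compatibility of the two resulting systems is Theorem \ref{qcompat}.

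There is no genuine obstacle here: the heavy lifting has already been done. The identification of the dynamical operators $\,\Xo_i\,$ with quantum multiplication by the divisor classes $\,[D_i]\,$ (Theorem \ref{D*X} and Corollary \ref{D*Xz0}) is the substantive input, and that identification in turn relied on Mihalcea's formula for quantum multiplication by divisors together with the explicit evaluation \eqref{SDz} of the stable envelope classes $\St_{\si_0\:s_{\la^{(i)}\!,\>\la^{(i)}\<+1}(\Imil)}$ via Lemma \ref{WImis}. The only point requiring a moment of care is keeping track that the stable envelope $\,\St_{\:\bla\:,\:\zz}\,$ is $\,\zz$-\:dependent but $\,\pp,\ka$-\:independent, so it commutes with the differential operators in $\,\pp\,$ but genuinely shifts under the $\,\ka$-shift in $\,z_a\,$—which is exactly accounted for in the definition of $\,\KH_a\,$.
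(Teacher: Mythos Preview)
Your proposal is correct and follows essentially the same route as the paper's own proof, which simply cites Theorem \ref{D*X} (you use its pointwise version, Corollary \ref{D*Xz0}) and the definition of the \qKZ/ equations in cohomology \eqref{Kich}. You have just made the mechanical intertwining argument explicit where the paper leaves it implicit.
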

\begin{proof}
The theorem follows from Theorem \ref{D*X} and the definition of the
\qKZ/ difference equations in \eqref{Kich}.
\end{proof}

In this section we work out the geometric interpretation of the obtained
solutions. Notice that it would be interesting to relate our solutions with
constructions in \cite{C}, where Mellin-Barnes integral representations
of solutions of the quantum differential equations were constructed for
a class of smooth projective varieties.

\vsk.3>
For $\,i=1\lc N\>$, let $\,\gmd_{1,1}\lc\gmd_{1\<,\:\la_i}$ be virtual line
\vv.05>
bundles such that $\,\bigoplus_{j=1}^{\>\la_i}\gmd_{\ij}=\:F_i/F_{i-1}\,$,
\vv.05>
and $\,\zdd_1\lc\zdd_n\:$ correspond to the factors of the torus $\,T\,$.
Denote
$\,\:\GGd^{\pm1}\?=(\gmd_{1,1}^{\pm1}\:\lc\gmd_{N\?,\>\la_N}^{\pm1})\,$
and $\,\zzd^{\pm1}\?=(\zdd_1^{\pm1}\?\lc\zdd_n^{\pm1})\,$, cf.~\eqref{GG}\:.
\vv-.04>
Let $\,\C[\:\GGd^{\pm1}]^{\:S_\bla}$ be the space of Laurent polynomials in
$\,\GGd$ symmetric in $\,\gmd_{i,1}\lc\gmd_{i,\>\la_i}$ for each $\,i=1\lc N\>$.

\vsk.2>
Consider the equivariant $\>K\?$-theory algebra $\,K_T(\Fla\>;\C)\,$. Then
\vvn.2>
\beq
\label{Krelt}
K_T(\Fla\>;\C)\,=\,\C[\:\GGd^{\pm1}]^{\:S_\bla}\?\ox\C[\:\zzd^{\pm1}]\>\Big/
\Bigl\bra\,\prod_{i=1}^N\prod_{j=1}^{\la_i}\,(u-\gmd_{\ij})\,=\,
\prod_{a=1}^n\,(u-\zdd_a)\Bigr\ket\,,\kern-1.6em
\vv.3>
\eeq
where $\,u\,$ is a formal variable. That is, $\,K_T(\Fla\>;\C)\,$ coincides
with the algebra $\,\Kco_\bla\>$, see \eqref{Krelo}\:.
For a Laurent polynomial $\,P(\:\GGd^{\pm1};\zzd^{\pm1})\in
\C[\:\GGd^{\pm1}]^{\:S_\bla}\?\ox\C[\zz^{\pm1}]\,$,
\vvn.07>
denote by $\,[P\:]\,$ its class in $\,K_T(\Fla\>;\C)\,$.

\begin{lem}
\label{Kfree}
The algebra $\,K_T(\Fla\>;\C)$ is a free module over $\,\C[\:\zzd^{\pm1}]\,$.
\end{lem}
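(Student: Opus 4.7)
The plan is to reduce the statement to results already established earlier in the paper. First, compare presentation \eqref{Krelt} of $K_T(\Fla;\C)$ with presentation \eqref{Krelo} of $\Kco_\bla$: both algebras are the quotient of $\C[\GGd^{\pm1}]^{S_\bla}\ox\C[\zzd^{\pm1}]$ by the ideal generated by the coefficients (in the formal variable $u$) of the difference $\prod_{i,j}(u-\gmd_{ij})-\prod_a(u-\zdd_a)$. Hence $K_T(\Fla;\C)=\Kco_\bla$ as $\C[\zzd^{\pm1}]$-algebras.

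Second, invoke the result from Section \ref{seclauro}: by Propositions \ref{PA1} and \ref{PA2}, the algebra $\Kco_\bla$ is a free $\C[\zzd^{\pm1}]$-module with basis given by the classes $Y_I(\GG)=V_I(\gmd_{1,1}^{-1}\lc\gmd_{N,\la_N}^{-1})$, $I\in\Il$, where the polynomials $V_I$ are defined by \eqref{VIx}. Combined with the identification above, this yields the desired freeness of $K_T(\Fla;\C)$ over $\C[\zzd^{\pm1}]$, with an explicit basis of rank $n!/(\la_1!\ldots\la_N!)$.

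In spirit this is the $K$-theoretic counterpart of Lemma \ref{Hfree}: the cohomology relations $\prod_{i,j}(u-\gm_{ij})=\prod_a(u-z_a)$ become the $K$-theory relations $\prod_{i,j}(u-\gmd_{ij})=\prod_a(u-\zdd_a)$, and the same combinatorial argument giving a $V_I$-basis in the polynomial case produces the $Y_I$-basis in the Laurent-polynomial case. Since Propositions \ref{PA1} and \ref{PA2} are stated generally enough to cover both settings (the generators $\gmd_{ij},\zdd_a$ are invertible but the defining relations and the leading-term analysis of the basis elements are unchanged), no new calculation is required. There is no serious obstacle here; the only point worth verifying is that the substitution $\gmd_{ij}\leftrightarrow\gmd_{ij}^{-1}$, $\zdd_a\leftrightarrow\zdd_a^{-1}$ used in defining $Y_I$ is compatible with the $S_\bla$-symmetry and with the ideal of relations, which is immediate because the relations are symmetric and involve only the full set of generators on each side.
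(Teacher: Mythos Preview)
Your proposal is correct and follows essentially the same route as the paper: the text immediately preceding the lemma already identifies $K_T(\Fla;\C)$ with $\Kco_\bla$ via the matching presentations \eqref{Krelt} and \eqref{Krelo}, and the paper's proof then simply cites Propositions~\ref{PA1} and~\ref{PA2}. Your write-up supplies more detail (the explicit basis $Y_I$, the analogy with Lemma~\ref{Hfree}), but the logical content is the same.
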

\begin{proof}
The statement follows from Propositions \ref{PA1} and \ref{PA2}.
\end{proof}

Denote by $\,\Srskl\<$ the space spanned over $\,\C\,$ by the solutions
$\,\:\St_{\:\bla}\Pso_{\?P}(\zz\:;\pp\:;\ka)\,$ of equations \eqref{qDEQ}\:,
\eqref{Kich}\:.
\vv.05>
By Proposition \ref{PsiPsolo}, each element of $\,\Srskl\<$ is a section of
$\,H_\bla\:$ holomorphic in $\,\pp\,$ provided a branch of $\,\:\log\:p_i\:$
\vv.04>
is fixed for each \,$i=1\lc N$, and entire in $\,\zz\,$.

\vsk.2>
Since the function $\,\Pso_{\?P}(\zz\:;\pp\:;\ka)\,$ depends only
\vv.06>
on the class of $\,P\:$ in $\,K_T(\Fla\>;\C)\,$, then so does the section
$\,\:\St_{\:\bla}\Pso_{\?P}(\zz\:;\pp\:;\ka)\,$ of $\,H_\bla\,$.
Thus the map
\vvn.3>
\beq
\label{mk}
\mkh\<:\:K_T(\Fla\>;\C)\:\to\,\Srskl\>,\qquad
[P\:]\>\mapsto\>\St_{\:\bla}\Pso_{\?P}\,,
\vv.4>
\eeq
is well-defined, cf.~\eqref{muko}\:. By Corollary \ref{muko=}, the map
$\,\mkh\<$ is an isomorphism of $\;\C[\:\zzd^{\pm1}]\:$-\:modules.

\vsk.2>
In Section \ref{secLevo} we introduced the Levelt fundamental solution
\vv.07>
$\,\:\Psho\<(\zz\:;\pp\:;\ka)\,$ of dynamical differential equations
\eqref{DEQo}\:. Consider the vector bundle $\,E\!H_\bla\?\to\:\C^n\!\times\C^N$
with fiber over a point $\,(\zz^0\?,\pp^{\:0})\,$ given by
$\,\:\End\:\bigl(\Hd^*_T(\Fla\>;\C)_{\:\zz^0}\<\bigr)\,$,
and its section $\,\:\St_{\:\bla}\Psho$ with values
\vvn.3>
\beq
\label{StPshp}
\St_\bla\Psho\<(\zz\:;\pp\:;\ka)\>=\,
\St_{\:\bla,\:\zz}\circ\Psho\<(\zz\:;\pp\:;\ka)\circ
\bigl(\:\St_{\:\bla,\:\zz}\bigr)^{\?-1}\:.\kern-2em
\vv.3>
\eeq
By Theorem \ref{Bthm} and Corollary \ref{D*Xz0}, for any section $\,f\>$ of
\vv.04>
$\,H_\bla\:$ not depending on $\,\pp\,$, the section $\,\:\St_\bla\Psho\?f\,$
of $\,H_\bla\:$ with values $\,\:\St_\bla\Psho\<(\zz\:;\pp\:;\ka)\>f(\zz)\,$
\vv.04>
is a solution of quantum differential equations \eqref{qDEQ}\:. We will call
\vv.04>
the section $\,\:\St_\bla\Psho$ of $\,E\!H_\bla\:$ the {\it Levelt fundamental
solution\/} of quantum differential equations \eqref{qDEQ}\:.

\vsk.2>
Recall the function $\,\:\Pspo\<(\zz\:;\pp\:;\ka)\,$ with
values in $\,\:\End\:\bigl(\Cnnl\bigr)\,$, see \eqref{Psopnd}\:.
Consider the section and $\,\:\St_{\:\bla}\Pspo$ of $\,E\!H_\bla\:$
with values
\vvn.3>
$$
\St_{\:\bla}\Pspo\<(\zz\:;\pp\:;\ka)\>=\,
\St_{\:\bla,\:\zz}\circ\Pspo\<(\zz\:;\pp\:;\ka)\circ
\bigl(\:\St_{\:\bla,\:\zz}\bigr)^{\?-1}\:.\kern-1em
\vv.3>
$$
Then by formulae \eqref{Psopnd} and \eqref{StXDz0}\:,
\vvn-.3>
\beq
\label{StPsho}
\St_{\:\bla}\Psho\<(\zz\:;\pp\:;\ka)\,=\,
\bigl(\:\St_{\:\bla}\Pspo\<(\zz\:;\pp\:;\ka)\<\bigr)
\,\prod_{i=1}^N\,p_i^{\:[D_i]_\zz/\ka}\:,
\kern-1em
\eeq
where $\,\:\prod_{i=1}^N p_i^{\:[D_i]_\zz/\ka}\<$ acts on the fiber
$\,\Hd^*_T(\Fla\>;\C)_{\:\zz}\>$ as multiplication by itself.
\vv.1>
The expression for $\,\:\St_{\:\bla}\Pspo\<(\zz\:;\pp\:;\ka)\,$
is given by formula \eqref{SPspo} below.

\vsk.3>
Recall the function $\,\Ao\<(\TT\:;\zz\:;\ka)\,$,
see \eqref{Ato}\:, \eqref{Ato1}\:,
\vvn.2>
\be
\Ao\<(\TT\:;\zz\:;\ka)\,=\,\prod_{i=1}^{N-1}\,
\prod_{a=1}^{\la^{(i)}}\;\biggl(\,\prod_{\satop{b=1}{b\ne a}}^{\la^{(i)}}
\>\Gm\bigl(1+(t^{(i)}_b\!-t^{(i)}_a)/\ka\bigr)
\prod_{c=1}^{\la^{(i+1)}}\:
\frac1{\Gm\bigl(1+(t^{(i+1)}_c\!-t^{(i)}_a)/\ka\bigr)}\,\biggr)\>,\kern-1em
\vv-.2>
\ee
where $\,\la^{(N)}\?=n\,$ and $\,t^{(N)}_a\?=z_a\,$, $\;a=1\lc n\,$.
\vv0>
For $\,I\<\in\Il\,$ and $\,\mb=(m_1\lc m_{N-1})\<\in\Z_{\ge0}^{N\<-1}$, set
\vvn-.3>
\beq
\label{JchIm}
\Jchh_{\!\?I\<,\:\mb}(\:\GG\:;\zz\:;\ka)\,=\>
\sum_{i=1}^{N-1}\!\sum_{\satop{\lb\in\:\smash{\Z^{\:\la\+1}}\!\vp{|^{1^1}}}
{\!|\:\lb^{(i)}\<|=\:m_i}}\<\sum_{J\in\Il}
\frac{\Ao\<(\Si_J\?-\lb\ka\:;\zz\:;\ka)\,\Wo_{\?I}(\Si_J\?-\lb\ka\:;\zz)
\,V_\bla(\:\GG\:;\zz_{\si_J}\<)}{\Aob\<(\Si_J;\zz)\,R_\bla(\zz_{\si_J}\<)}\;,
\kern-1em
\eeq
where $\;|\>\lb^{(i)}|=\:l^{\:(i)}_j\!\lsym+\:l^{\:(i)}_{\la^{(i)}}\,$,
$\;i=1\lc N-1\,$. cf.~\eqref{Jco}\:, \eqref{fFV}\:.
\vvn.1>
The functions $\,\Jch_{I\<,\>\lb}(\:\GG\:;\zz\:;\ka)\>$
are polynomials in $\,\:\GG\:$ and rational functions in $\,\zz,\ka\,$.

\begin{prop}
\label{Jchhreg}
The functions $\,\Jchh_{\!\?I\<,\:\mb}(\:\GG\:;\zz\:;\ka)\:$ are regular
\vvn.07>
if $\,{z_a\<-z_b\not\in\<\ka\>\Z_{\:\ne\:0}}\>$ for all $\,a\ne b\,$,
and have at most simple poles at the hyperplanes
$\,z_a\<-z_b\in\<\ka\>\Z_{\:\ne\:0}\>$.
\end{prop}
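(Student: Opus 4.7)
The plan is to recognize $\Jchh_{\!\?I\<,\:\mb}(\:\GG\:;\zz\:;\ka)$ as the polynomial lift, in the sense of Lemma \ref{Hcx} and formula \eqref{fFV}, of the collection of its restrictions $F_{J,\:\mb}(\zz) = \Jchh_{\!\?I\<,\:\mb}(\zz_{\si_J};\zz;\ka)$, $\,J\<\in\Il\>$, and then to transfer the analytic properties of the Levelt fundamental solution $\Pspo(\zz\:;\pp\:;\ka)$ established in Theorem~\ref{Bthmo}(ii) down to these restrictions. Using that $\,V_\bla(\zz_{\si_K};\zz_{\si_J}) = \dl_{\JK}\:R_\bla(\zz_{\si_J})\,$, the evaluation of \eqref{JchIm} at $\,\GG = \zz_{\si_J}\,$ reduces to
\be
F_{J,\:\mb}(\zz)\,=\!\sum_{\satop{\lb\in\Z_{\ge0}^{\:\la\+1}}{|\lb^{(i)}|\:=\:m_i,\;i=1,\,\ldots\,,N-1}}\!\frac{\Ao\<(\Si_J-\lb\ka\:;\zz\:;\ka)\,\Wo_I(\Si_J-\lb\ka\:;\zz)}{\Aob\<(\Si_J\:;\zz)}\:,
\ee
so that, by comparison with \eqref{Jco} and \eqref{PspIJo}, each $F_{J,\:\mb}(\zz)$ is, up to an explicit $\ka$-power, the coefficient of $\,\prod_{i=1}^{N-1}(p_{i+1}/p_i)^{m_i}\,$ in a suitable entry of $\,\Pspo(\zz\:;\pp\:;\ka)/\Omo_\bla(\pp\:;\ka)\,$.

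Next, I would invoke Theorem~\ref{Bthmo}(ii): $\Pspo(\zz\:;\pp\:;\ka)$ is holomorphic in $\,\zz\,$ away from the hyperplanes $\,z_a-z_b\in\ka\Z_{\ne0}\,$ and has at most simple poles there. Since $\Pspo$ is entire in $\pp_{\!\dvs}$ by Theorem~\ref{Bthmo}(i), the Taylor expansion converges locally uniformly in $\,\pp\,$ on any bounded set; thus each coefficient $F_{J,\:\mb}(\zz)$ is separately holomorphic on the complement of those hyperplanes and carries at most a simple pole along each $\,z_a-z_b\in\ka\Z_{\ne0}\,$.

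Finally, I would verify that the lift \eqref{fFV} introduces no additional singularities beyond those already present in $F_{J,\:\mb}$. The only new potential poles come from the denominator $R_\bla(\zz_{\si_J})$ along the hyperplanes $\,z_a=z_b\,$, but by Lemma~\ref{Hcx} these cancel provided the compatibility relations
\be
F_{J,\:\mb}(\zz)\big|_{z_a=z_b}\,=\,F_{s_{\ab}(J),\:\mb}(\zz)\big|_{z_a=z_b}
\ee
hold for every $\,I\<,J\<\in\Il\,$ and every transposition $\,s_{\ab}\,$. These relations are the coefficient-wise form of the transposition symmetry enjoyed by $\,\Pspo(\zz\:;\pp\:;\ka)\,$, which in turn follows from the analogue of \eqref{Psdoab} of Theorem~\ref{thmcyo} for the entries of $\,\Pspo\,$ (as obtained in the proof of Theorem~\ref{Bthmo} via formulas \eqref{PspIJo}\:, \eqref{limPsp}\:).

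The main obstacle will be this last verification of the compatibility relations for the $F_{J,\:\mb}$ at coincident $\,z_a=z_b\,$. I expect a direct combinatorial argument to be delicate since $\Ao$ and $\Wo_I$ must be analyzed jointly under the substitution $\,\Si_J\mapsto\Si_{s_{\ab}(J)}\,$ and $\,\lb\,$ re-indexed accordingly; the cleaner route is to work at the level of $\,\Pspo\,$ itself, deducing the symmetry from the regularity of the cohomology-valued section $\,\St_\bla\Pspo\,$ on the complement of the hyperplanes \eqref{zzZo}\:, and then to read off the coefficient-wise consequence.
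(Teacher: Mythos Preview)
Your strategy is correct and uses the same key input as the paper---the regularity of $\Pspo$ from Theorem~\ref{Bthmo}---but it takes a detour the paper avoids, and contains one inaccuracy that, once fixed, dissolves your ``main obstacle.''

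The inaccuracy: $F_{J,\mb}$ is not the coefficient of a \emph{single} entry of $\Pspo/\Omo_\bla$. Combining \eqref{Jco}, \eqref{PspIJo} with the second orthogonality relation in \eqref{ortho} gives instead
\be
F_{J,\mb}(\zz)\,=\,\sum_{J'\in\Il}c_{\:IJ',\:\mb}(\zz)\,\Wo_{J'}(\Si_J;\zz)\,,
\ee
where $c_{\:IJ',\:\mb}$ is (up to a $\ka$-power) the $\mb$-th Taylor coefficient of $\Pspo_{\?IJ'}/\Omo_\bla$. This does not harm Step~3: each $c_{\:IJ',\:\mb}$ has the required regularity and the weights are polynomials. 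But it makes the compatibility check immediate rather than delicate: the only $J$-dependence of $F_{J,\mb}$ is through $\Wo_{J'}(\Si_J;\zz)=\Stop_{J'}(\zz_{\si_J};\zz)$, and $\zz_{\si_J}=\zz_{\si_{s_{\ab}(J)}}$ when $z_a=z_b$. No appeal to $\St_\bla\Pspo$ or to \eqref{Psdoab} is needed.

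The paper bypasses the evaluate-then-lift cycle altogether. Using \eqref{StWV} and \eqref{ortho} it establishes the polynomial-in-$\GG$ identity
\be
\sum_{J'\in\Il}\Pspo_{\?IJ'}(\zz\:;\pp\:;\ka)\,\Stop_{J'}(\GG\:;\zz)\,=\,\Omo_\bla(\pp\:;\ka)\!\sum_{\mb\in\Z_{\ge0}^{N-1}}\!\Jchh_{\!\?I,\:\mb}(\GG\:;\zz\:;\ka)\,\prod_{i=1}^{N-1}\bigl((-\ka)^{-\la_i-\la_{i+1}}p_{i+1}/p_i\bigr)^{m_i}
\ee
directly. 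The left side has the claimed analytic behaviour in $\zz$ because the $\Pspo_{\?IJ'}$ do (Theorem~\ref{Bthmo}) and the $\Stop_{J'}$ are polynomials; reading off Taylor coefficients in $\pp$ gives the result for each $\Jchh_{\!\?I,\:\mb}$. Your route is precisely the restriction of this identity to $\GG=\zz_{\si_J}$ followed by its reconstruction via Lemma~\ref{Hcx}; working with the identity as written saves both steps.
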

\begin{proof}
Recall the functions $\,\:\Pspo_{\?\IJ}(\zz\:;\pp\:;\ka)\,$ and
$\,\:\Omo_\bla(\pp\:;\ka)\,$, see \eqref{PspIJo}\:, \eqref{Omlo}\:,
\vv.07>
respectively. By formulae \eqref{Jco}\:, \eqref{PspIJo}\:, \eqref{StWV}\:,
\eqref{ortho}\:, \eqref{JchIm}\:, we have
\vvn.6>
\begin{align}
\label{PspIJH}
\kern-.3em
\sum_{J\in\Il}{}\:&
\Pspo_{\?\IJ}(\zz\:;\pp\:;\ka)\,\Stop_{\>J}(\:\GG\:;\zz)\,={}
\\[-9pt]
\notag
&\!\<{}=\,\Omo_\bla(\pp\:;\ka)\!\?\sum_{\mb\:\in\:\Z_{\ge0}^{N-1}\!}
\!\<\Jchh_{\!\?I\<,\:\mb}(\:\GG\:;\zz\:;\ka)\,\prod_{i=1}^{N-1}\,
\bigl(\<(-\:\ka)^{-\la_i-\la_{i+1}}\>p_{i+1}/p_i\:\bigr)^{\:m_i}.\kern-2em
\\[-15pt]
\notag
\end{align}
Formula \eqref{Pspo} and Theorem \ref{Bthmo} imply that the functions
\vvn.1>
$\,\:\Pspo_{\?\IJ}(\zz\:;\pp\:;\ka)\,$ are regular if
$\,{z_a\<-z_b\not\in\<\ka\>\Z_{\:\ne\:0}}\>$ for all $\,a\ne b\,$, and have
\vvn.06>
at most simple poles at the hyperplanes $\,z_a\<-z_b\in\<\ka\>\Z_{\:\ne\:0}\>$.
Thus the statement of Proposition \ref{Jchhreg} follows from formula
\eqref{PspIJH}\:.
\end{proof}

Define the section $\,\Jchh_{\!\<\mb}\>$ of the bundle $E\!H_\bla$ with values
\vvn.4>
\begin{gather}
\label{Jchm}
\Jchh_{\!\<\mb}(\zz\:;\ka)\::\:
\Hd^*_T(\Fla\>;\C)_{\:\zz}\>\to\>\Hd^*_T(\Fla\>;\C)_{\:\zz}\,,\kern-1.2em
\\[7pt]
\notag
\Jchh_{\!\<\mb}(\zz\:;\ka)\::\:Y\:\mapsto\sum_{I\in\Il}\:
\Ec_{\:\zz}\bigl\bra\:Y\>[\Jchh_{\!\?I\<,\:\mb}(\:\GG\:;\zz\:;\ka)]_{\:\zz}
\bigr\ket\,\St_{\:I\<,\>\zz}\,,
\kern-1em
\\[-15pt]
\notag
\end{gather}
where $\,\Ec_{\:\zz}$ is the integration map on
$\,\Hd^*_T(\Fla\>;\C)_{\:\zz}\,$, see \eqref{Ecz0}\:.

\begin{prop}
\label{lemSPspo}
We have
\vvn-.2>
\beq
\label{SPspo}
\St_{\:\bla}\Pspo\<(\zz\:;\pp\:;\ka)\,=\,
\Omo_\bla(\pp\:;\ka)\!\sum_{\mb\:\in\:\Z_{\ge0}^{N-1}\!}\!
\Jchh_{\!\<\mb}(\zz\:;\ka)\,\prod_{i=1}^{N-1}\>
\bigl(\<(-\:\ka)^{-\la_i-\la_{i+1}}\>p_{i+1}/p_i\bigr)^{\:m_i}.
\kern-2em
\eeq
\end{prop}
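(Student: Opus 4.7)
The plan is to reduce the identity to formula \eqref{PspIJH}, which has already been established in the proof of Proposition \ref{Jchhreg}, by spelling out both sides via the basis $\{v_I\,,\,I\<\in\Il\}$ of $\Cnnl$ and the basis $\{\St_{\:I,\zz}\,,\,I\<\in\Il\}$ of $\Hd^*_T(\Fla\>;\C)_{\:\zz}$. First I would record that by Lemma \ref{Stort} (biorthogonality of $\St_{\:I}$ and $\Stop_{\!J}$ under $\Ec_{\:\zz}$) together with Lemma \ref{SHz0}, the inverse of the stable envelope isomorphism is given fiberwise by
\be
\bigl(\:\St_{\:\bla,\:\zz}\bigr)^{\?-1}\<:\>Y\:\mapsto\?\sum_{J\in\Il}\Ec_{\:\zz}\bigl\bra Y\,[\Stop_{\!J}(\:\GG\:;\zz)]_{\:\zz}\bigr\ket\,v_J\,.
\ee

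Next I would apply the definition \eqref{StPshp}\,type conjugation to an arbitrary class $Y\<\in\Hd^*_T(\Fla\>;\C)_{\:\zz}$. Using the matrix presentation \eqref{Pspo} of $\Pspo$ and the rule $\St_{\:\bla,\:\zz}\<:v_I\<\mapsto\<\St_{\:I,\zz}$, the chain of maps produces
\be
\bigl(\St_{\:\bla}\Pspo\<(\zz\:;\pp\:;\ka)\bigr)(Y)\,=\,\sum_{I\in\Il}\Ec_{\:\zz}\bigl\bra Y\,\bigl[\,\sum_{J\in\Il}\Pspo_{\?\IJ}(\zz\:;\pp\:;\ka)\,\Stop_{\!J}(\:\GG\:;\zz)\bigr]_{\:\zz}\bigr\ket\,\St_{\:I,\zz}\,,
\ee
where I have used $\C$-linearity of the integration map $\Ec_{\:\zz}$ to pull the $J$-sum through and the fact that the coefficients $\Pspo_{\?\IJ}(\zz\:;\pp\:;\ka)$ depend on $\zz$ and $\pp$ but not on $\:\GG\:$, so they may be moved outside the equivariant integration pairing.

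Then I would invoke formula \eqref{PspIJH}, which was obtained in the proof of Proposition \ref{Jchhreg} directly from \eqref{Jco}, \eqref{PspIJo}, \eqref{StWV}, and the biorthogonality relations \eqref{ortho}, to substitute the expression for $\sum_{J}\Pspo_{\?\IJ}\Stop_{\!J}$ as a power series in $p_{i+1}/p_i$ with coefficients $\Omo_\bla(\pp\:;\ka)\,\Jchh_{\!\?I\<,\:\mb}(\:\GG\:;\zz\:;\ka)$. After this substitution the outer expression becomes
\be
\Omo_\bla(\pp\:;\ka)\!\sum_{\mb\in\Z^{N-1}_{\ge0}}\,\sum_{I\in\Il}\Ec_{\:\zz}\bigl\bra Y\,[\Jchh_{\!\?I\<,\:\mb}(\:\GG\:;\zz\:;\ka)]_{\:\zz}\bigr\ket\,\St_{\:I,\zz}\,\prod_{i=1}^{N-1}\bigl(\<(-\:\ka)^{-\la_i-\la_{i+1}}\>p_{i+1}/p_i\bigr)^{\:m_i}\<,
\ee
which by definition \eqref{Jchm} of $\Jchh_{\!\<\mb}(\zz\:;\ka)$ is precisely the value at $Y$ of the right-hand side of \eqref{SPspo}. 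Since $Y$ was arbitrary, the identity of operators on $\Hd^*_T(\Fla\>;\C)_{\:\zz}$ follows.

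There is no essential obstacle here beyond bookkeeping: the nontrivial content, namely the identity \eqref{PspIJH}, is already established, and what remains is the algebraic unfolding of the conjugation $\St_{\:\bla}\circ\Pspo\circ\St_{\:\bla}^{\?-1}$ in terms of the biorthogonal bases $\St_{\:I,\zz}$ and $\Stop_{\!J,\:\zz}$. The only point requiring mild care is the convergence of the $\mb$-sum, but this is guaranteed because $\Pspo\<(\zz\:;\pp\:;\ka)$ is entire in $\pp_{\!\dvs}$ by Theorem \ref{Bthmo}, so the rearrangement of the series defining the conjugate operator is legitimate on the polydisk $|\:p_{i+1}/p_i|<\infty\,$.
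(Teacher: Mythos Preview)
Your proof is correct and follows essentially the same approach as the paper's own proof, which simply cites formulae \eqref{Pspo}, \eqref{ESSop}, \eqref{PspIJH}, \eqref{Jchm}. You have spelled out the bookkeeping that the paper's one-line proof leaves implicit: unwinding the conjugation $\St_{\:\bla}\circ\Pspo\circ\St_{\:\bla}^{\?-1}$ via the biorthogonal bases and then invoking the already-established identity \eqref{PspIJH}.
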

\begin{proof}
The statement follows from formulae \eqref{Pspo}\:, \eqref{ESSop}\:,
\eqref{PspIJH}\:,
\eqref{Jchm}\:.
\end{proof}

\begin{prop}
\label{lemEJc}
For any $\>Y\!\in \Hd^*_T(\Fla\>;\C)_{\:\zz}\>$,
\vvn.5>
\beq
\label{EJcSt}
\Ec_{\:\zz}\bigl\bra\<(\Jchh_{\!\<\:\mb}(\zz\:;\ka)\>Y\:)\,
\Stop_{\:I\<,\>\zz}\bigr\ket\>=\,
\Ec_{\:\zz}\bigl\bra\:Y\>[\Jchh_{\!\?I\<,\:\mb}(\:\GG\:;\zz\:;\ka)]_{\:\zz}
\bigr\ket\,.\kern-2em
\vv.2>
\eeq
In particular,
\vv-.1>
\beq
\label{EJc}
\Ec_{\:\zz}\bigl\bra\Jchh_{\!\<\:\mb}(\zz\:;\ka)\>Y\:\bigr\ket\>=\,
\Ec_{\:\zz}\bigl\bra\:Y\>
[\Jchh_{\!\?\Imil\?,\>\mb}(\:\GG\:;\zz\:;\ka)]_{\:\zz}\bigr\ket\,.
\kern-2em
\vv.4>
\eeq
\end{prop}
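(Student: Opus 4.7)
The plan is to unfold the definition of $\Jchh_{\!\<\mb}(\zz\:;\ka)$ given in \eqref{Jchm} and apply the biorthogonality of the stable envelope classes established in Lemma \ref{Stort}. I expect no serious obstacle: once the definitions are expanded, the argument is a one-line bookkeeping computation that hinges on the orthogonality relation and on the compatibility between the global integration map $\Ec\:$ and its specialisation $\Ec_{\:\zz}$ to the fibre.

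First I would substitute the definition of $\Jchh_{\!\<\mb}(\zz\:;\ka)\>Y$ into the left\:-hand side of \eqref{EJcSt}, obtaining
\[
\Ec_{\:\zz}\bigl\bra\<(\Jchh_{\!\<\:\mb}(\zz\:;\ka)\>Y\:)\,\Stop_{\:I\<,\>\zz}\bigr\ket
\,=\,\sum_{J\in\Il}\,\Ec_{\:\zz}\bigl\bra\:Y\>
[\Jchh_{\!\?J,\:\mb}(\:\GG\:;\zz\:;\ka)]_{\:\zz}\bigr\ket\,
\Ec_{\:\zz}\bigl\bra\:\St_{\:J,\zz}\,\Stop_{\:I\<,\>\zz}\bigr\ket\,,
\]
since each scalar coefficient $\Ec_{\:\zz}\bigl\bra Y\>[\Jchh_{\!\?J,\:\mb}]_{\:\zz}\bigr\ket$ can be pulled out of the integration pairing.

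Next I would invoke the biorthogonality $\Ec\bra\:\St_{\:J}(\:\GG\:;\zz)\,\Stop_{\:I}(\:\GG\:;\zz)\ket=\dl_{\IJ}\:$ from Lemma \ref{Stort}, together with its pointwise version $\Ec_{\:\zz}\bra\:\St_{\:J,\zz}\,\Stop_{\:I\<,\>\zz}\ket=\dl_{\IJ}\:$, which follows from Lemma \ref{lemEcz0} by specialising the $\zz\,$-\:argument in $\,\St_{\:J}(\:\GG\:;\zz)\,\Stop_{\:I}(\:\GG\:;\zz)\,$. The sum over $\,J\<\in\Il\>$ then collapses to the single term $\,J\<=\<I\,$, yielding the right\:-hand side of \eqref{EJcSt}.

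For formula \eqref{EJc}, I would specialise \eqref{EJcSt} to $\,I\<=\Imil\,$. By \eqref{StIma} one has $\,\Stop_{\:\Imil}(\:\GG\:;\zz)=1\,$, so $\,\Stop_{\:\Imil\?,\>\zz}\,$ is the unit of $\,\Hd^*_T(\Fla\>;\C)_{\:\zz}\>$ and the factor $\,\Stop_{\:\Imil\?,\>\zz}$ drops from the pairing. This reduces the left\:-hand side of \eqref{EJcSt} at $\,I\<=\Imil\,$ to $\,\Ec_{\:\zz}\bigl\bra\Jchh_{\!\<\:\mb}(\zz\:;\ka)\>Y\:\bigr\ket\,$, which is the left\:-hand side of \eqref{EJc}, while the right\:-hand side becomes $\,\Ec_{\:\zz}\bigl\bra\:Y\>[\Jchh_{\!\?\Imil\?,\>\mb}(\:\GG\:;\zz\:;\ka)]_{\:\zz}\bigr\ket\,$, as required.
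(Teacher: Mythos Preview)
Your proposal is correct and follows essentially the same route as the paper's proof, which cites precisely formulae \eqref{Jchm}, \eqref{ESSop}, and \eqref{StIma}; you have simply spelled out the one-line argument in full, including the extra observation (via Lemma \ref{lemEcz0}) that the biorthogonality relation specialises to each fibre.
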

\begin{proof}
The statement follows from formulae \eqref{Jchm}\:, \eqref{ESSop}\:,
\eqref{StIma}\:.
\end{proof}

\subsection{The map $\,\Bcyh_\bla\,$}
\label{secBh}
Let $\,\Lo\!\<\subset\<\C^n\:$ be the complement of the union of
the hyperplanes
\vvn.4>
\beq
\label{zzZh}
z_a\<-z_b\<\in\<\ka\>\Z_{\ne0}\,,\qquad a,b=1\lc n\,,\quad a\ne b\,.\kern-.6em
\vv.3>
\eeq
Denote by $\,\Oc\>$ the ring of functions of $\,\zz\,$ holomorphic in
$\,\Lo$, and by $\,\Oc_{H_\bla}\?$ the space of sections of the bundle
$\,H_\bla\:$ holomorphic in $\,\zz\,$ for $\,\zz\in\Lo$ and not depending on
$\,\pp\,$. The stable envelope map $\,\:\St_{\:\bla}\:$ induces an isomorphism
$\,\Cnnl\?\ox_{\:\C}\Oc\to\Oc_{H_\bla}\>$,
$\;f\mapsto \St_{\:\bla}\>f\,$.

\vsk.3>
Let $\>\Srolh\:$ be the space of sections of \,$H_\bla\:$ that are solutions
\vvn.07>
of quantum differential equations \eqref{DEQch} holomorphic in $\,\zz\,$
for $\,\zz\in\Lo$. The space $\>\Srolh\:$ is a counterpart
of the space $\>\Srol\>$ of solutions of limiting dynamical differential
\vvn.07> equations \eqref{DEQo} introduced in Section \ref{secBo}.
The map $\,\:\St_{\:\bla}\:$ induces an isomorphism $\,\Srol\!\to\Srolh\>$.

\vsk.3>
Recall the Levelt solution $\,\:\St_{\:\bla}\Psho$ of quantum differential
equations \eqref{qDEQ}\:.

\begin{prop}
\label{StPshof}
For any $\,f\<\in\Oc_{H_\bla}\,$, the section $\;\St_{\:\bla}\Psho\?f\:$
with values
\vvn.5>
\be
(\:\St_{\:\bla}\Psho\?f\:)(\zz\:;\pp\:;\ka)\,=\,
\bigl(\:\St_{\:\bla}\Psho(\zz\:;\pp\:;\ka)\<\bigr)\,f(\zz)\kern-2em
\vv.5>
\ee
belongs to $\,\Srolh$. Moreover, the map
\vvn.2>
\beq
\label{muoh}
\muoh\?:\:\Oc_{H_\bla}\<\to\>\Srolh\:,\qquad
f\>\mapsto\,\St_{\:\bla}\Psho\?f\,,
\kern-2em
\vv.2>
\eeq
is an isomorphism.
\end{prop}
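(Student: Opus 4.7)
The plan is to verify that $\,\St_{\:\bla}\Psho\?f\,$ solves quantum differential equations \eqref{qDEQ} and is holomorphic in $\,\zz\,$ on $\,\Lo\,$, and then to exhibit an explicit two-sided inverse to the map $\,\muoh\,$.

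For the first point, I would rewrite
\be
(\St_{\:\bla}\Psho\?f)(\zz\:;\pp\:;\ka)\,=\,\St_{\:\bla,\:\zz}\bigl(\Psho(\zz\:;\pp\:;\ka)\,g(\zz)\bigr)\,,
\ee
where $\,g(\zz)=\bigl(\St_{\:\bla,\:\zz}\bigr)^{\?-1}f(\zz)\in\Cnnl\,$ is independent of $\,\pp\,$. By Theorem \ref{Bthmo}, the function $\,\Psho(\zz\:;\pp\:;\ka)\>g(\zz)\,$ solves the limiting dynamical differential equations \eqref{DEQo}; since $\,\St_{\:\bla,\:\zz}\,$ and $\,g(\zz)\,$ do not depend on $\,\pp\,$, the intertwining identity \eqref{StXDz0} of Corollary \ref{D*Xz0} gives
\be
\ka\>p_i\:\frac{\der}{\der\:p_i}\bigl(\St_{\:\bla}\Psho\?f\bigr)\,=\,\St_{\:\bla,\:\zz}\,\Xo_i(\zz\:;\pp)\,\Psho(\zz\:;\pp;\ka)\,g(\zz)\,=\,[D_i]_{\:\zz}\>{*}_{\pp\:,\:\zz}\bigl(\St_{\:\bla}\Psho\?f\bigr)\,.
\ee
Holomorphicity in $\,\zz\,$ on $\,\Lo\,$ then follows since $\,\St_{\:\bla,\:\zz}\,$ has polynomial entries by construction, $\,\bigl(\St_{\:\bla,\:\zz}\bigr)^{\?-1}\,$ has polynomial entries by the formula in Lemma \ref{StH} combined with Lemma \ref{lemEc}, the operator $\,\Psho\,$ is holomorphic in $\,\zz\,$ on $\,\Lo\,$ by Theorem \ref{Bthmo}(ii), and $\,f\,$ is holomorphic by definition of $\,\Oc_{H_\bla}\,$; composing these factors puts $\,\St_{\:\bla}\Psho\?f\,$ in $\,\Srolh\,$.

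For the isomorphism claim, I would invert pointwise. By \eqref{detPsho} the determinant of $\,\Psho\,$ is nowhere vanishing, and by the argument in the remark closing Section \ref{secBo} the inverse $\,\Psho^{-1}$ is holomorphic in $\,\zz\,$ on $\,\Lo\,$; conjugating yields an operator-valued section $\,(\St_{\:\bla}\Psho)^{-1}$ of $\,E\!H_\bla\,$ with the same regularity. Given $\,\Psi\in\Srolh\,$, set $\,f=(\St_{\:\bla}\Psho)^{-1}\Psi\,$; holomorphicity of $\,f\,$ in $\,\zz\,$ on $\,\Lo\,$ is then automatic. To verify the $\,\pp\:$-independence, I would differentiate the identity $\,(\St_{\:\bla}\Psho)\circ(\St_{\:\bla}\Psho)^{-1}\<=\id\,$ with respect to $\,p_i\,$ to obtain
\be
\ka\>p_i\:\frac{\der}{\der\:p_i}(\St_{\:\bla}\Psho)^{-1}\,=\,-\,(\St_{\:\bla}\Psho)^{-1}\circ[D_i]_{\:\zz}\>{*}_{\pp\:,\:\zz}\,,
\ee
and combine with the quantum differential equation for $\,\Psi\,$; the two contributions cancel and $\,\ka\>p_i\:\der_{p_i}\<f=0\,$. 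Injectivity of $\,\muoh\,$ is immediate from the pointwise invertibility of $\,\St_{\:\bla}\Psho\,$, completing the isomorphism claim.

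The argument is essentially bookkeeping over results already proved. The most technical ingredient, and where I expect to spend the most care, is the $\,\pp\:$-independence of $\,(\St_{\:\bla}\Psho)^{-1}\Psi\,$, which hinges on correctly matching the quantum differential equations on the two factors so that the $\,[D_i]_{\:\zz}\>{*}_{\pp\:,\:\zz}\,$ contributions cancel.
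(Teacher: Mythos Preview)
Your proposal is correct and follows essentially the same approach as the paper: show $\St_{\:\bla}\Psho\?f$ solves the quantum differential equations and is holomorphic on $\Lo$, then construct the inverse map by applying $(\St_{\:\bla}\Psho)^{-1}$ and checking $\pp$-independence and holomorphicity. The paper's proof is terser---it invokes Lemma \ref{StUH} and formula \eqref{StPshp} for holomorphicity rather than explicitly tracking polynomial entries of $\St_{\:\bla,\:\zz}$ and its inverse, and it states $\pp$-independence of $(\St_{\:\bla}\Psho)^{-1}\Psi$ as an immediate consequence of $\Psi$ solving \eqref{qDEQ} rather than carrying out the differentiation---but the logical content is the same.
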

\begin{proof}
For any $\,f\<\in\Oc_{H_\bla}\,$, the section $\;\St_{\:\bla}\Psho\?f\,$ solves
\vv-.06>
quantum differential equations \eqref{qDEQ}. Since by formula \eqref{StPshp}\:,
\vv-.06>
Lemma \ref{StUH}, and Theorem \ref{Bthmo}, the section
$\;\St_{\:\bla}\Psho(\zz\:;\pp\:;\ka)\,$ is holomorphic in $\,\zz\,$ for
$\,\zz\in\Lo$, the section $\;\St_{\:\bla}\Psho\?f\,$ belongs to $\,\Srolh\>$.
Backwards, for any $\,f\<\in\Srolh$, the section $\,f^{\zeno}$ with values
\vvn.2>
\beq
\label{fzeno}
f^{\zeno}(\zz)\,=\,
\bigl(\:\St_{\:\bla}\Psho(\zz\:;\pp\:;\ka)\<\bigr)^{\?-1}\<f(\zz\:;\pp)
\kern-2em
\vv.4>
\eeq
does not depend on $\,\pp\,$ because $\,f(\zz\:;\pp)\,$ is a solution of
\vv.06>
quantum differential equations \eqref{DEQch}\:. Moreover,
$\,\bigl(\St_{\:\bla}\Psho\<(\zz\:;\pp\:;\ka)\bigr)^{\?-1}\:$ is holomorphic
in $\,\zz\,$ for $\,\zz\in\Lo\:$ since
\vv.12>
$\,\bigl(\det\:\St_{\:\bla}\Psho\<(\zz\:;\pp\:;\ka)\bigr)^{\?-1}\:$ is entire
in $\,\zz\,$. Thus $\,f^{\zeno}\!\<\in\<\Oc_{H_\bla}$ and
$\,f=\muoh f^{\zeno}\<$. Hence, the map \,$\muoh\,$ is an isomorphism.
\end{proof}

\vsk.1>
For a solution $\,f\<\in\<\Srolh\>$ of quantum differential equations
\vv.07>
\eqref{qDEQ}\:, we call the section $\,f^{\zeno}$ defined by \eqref{fzeno}
the {\it principal term\/} of $\,f\,$. For the solution
$\,\:\St_{\:\bla}\Pso_{\?P}\,$ corresponding to a Laurent polynomial
$\,P(\:\GGd\:;\zzd)\,$, see \eqref{mk}\:, the principal term is described
in Proposition \ref{SPsiPpro} below.

\enlargethispage{12pt}
\vsk.3>
Recall the functions $\,\Co_\bla(\zz\:;\ka)\,$ and $\Go_\bla(\zz\:;\ka)\,$,
see \eqref{CGGo}\:, \eqref{GGGo}\:,
\vvn.3>
\begin{gather*}
\Co_\bla(\zz\:;\ka)\,=\,\prod_{i=1}^N\>
\ka^{\:\left(\:\sum_{j=i+1}^N\la_j\,-\,\sum_{j=1}^{i-1}\:\la_j\<\right)
\sum_{a=\smash{\la^{\?(i-1)}}\<+1}^{\la^{(i)}}z_a\</\<\ka}\:,\kern-1em
\\[2pt]
\Go_\bla(\zz\:;\ka)\,=\,\prod_{i=1}^{N-1}\>\prod_{a=1}^{\la^{(i)}}\,
\prod_{b=\la^{(i)}+1}^{\la^{(i+1)}}\!\Gm\bigl(\<(z_a\<-z_b)/\ka\bigr)\,,
\kern-1em
\\[-17pt]
\end{gather*}
and the function $\,\Pdd(\:\GG\:;\zz\:;\ka)\,$ obtained from a Laurent
\vvn.1>
polynomial $\,P(\:\GGd\:;\zzd)\,$ by substituting the variables $\gmd_{\ij}\:$
and $\,\zdd_a\:$ with the exponentials $\,e^{\:2\:\pii\,\gm_{i\<,j}\</\<\ka}$
and $\,e^{\:2\:\pii\,z_a\</\<\ka}$, respectively. Set
\vvn.2>
\be
\CH_\bla(\:\GG\:;\zz\:;\ka)\,=\>
\sum_{I\in\Il}\frac{\Co_\bla(\zz_{\si_I};\ka)\,V_\bla(\:\GG\:;\zz_{\si_I}\<)}
{R_\bla(\zz_{\si_J}\<)}\;,\kern1.2em
\GH_\bla(\:\GG\:;\zz\:;\ka)\,=\>
\sum_{I\in\Il}\Go_\bla(\zz_{\si_I};\ka)\,V_\bla(\:\GG\:;\zz_{\si_I}\<)\,,
\vv.2>
\ee
and
\vvn-.5>
\be
\PHd\?(\:\GG\:;\zz\:;\ka)\,=\>
\sum_{I\in\Il}\frac{\Pdd(\zz_{\si_I};\zz\:;\ka)\,V_\bla(\:\GG\:;\zz_{\si_I}\<)}
{R_\bla(\zz_{\si_J}\<)}\;,\kern-1em
\vv-.7>
\ee
cf.~\eqref{fFV}\:.

\begin{prop}
\label{SPsiPpro}
For a Laurent polynomial $\,P(\:\GGd\:;\zzd)\>$, the values of the principal
\vv.1>
term $\,\:\Pszo_{\?P}$ of the solution $\;\St_{\:\bla}\Pso_{\?P}$ are
\vvn.3>
\beq
\label{SPsoP8}
\Pszo_{\?P}(\zz\:;\ka)\,=\,\bigl[\>\PHd\?(\:\GG\:;\zz\:;\ka)\,
\CH_\bla(\:\GG\:;\zz\:;\ka)\,\GH_\bla(\:\GG\:;\zz\:;\ka)\bigr]_\zz\,.
\kern-2em
\vv.2>
\eeq
\end{prop}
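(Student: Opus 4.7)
The plan is to transport the computation to the $\,\Cnnl\<$-valued setting of Proposition \ref{PsiPpro}, push it through the stable envelope map, and then identify the result as a class in $\,\Hd^*_T(\Fla\>;\C)_{\:\zz}\:$ via Lemma \ref{Hcx}. First I would reduce the statement to a calculation involving the already-known principal term of the $\,\Cnnl\<$-valued solution. Indeed, by definition \eqref{fzeno} of the principal term and the relation \eqref{StPshp} between $\,\:\St_{\:\bla}\Psho\:$ and $\,\:\Psho\:$, one has
\be
\Pszo_{\?P}(\zz\:;\ka)\,=\,\bigl(\:\St_{\:\bla}\Psho(\zz\:;\pp\:;\ka)\<\bigr)^{\<-1}\,\bigl(\:\St_{\:\bla}\Pso_{\?P}(\zz\:;\pp\:;\ka)\<\bigr)\,=\,\St_{\:\bla\:,\:\zz}\:\bigl(\Psho(\zz\:;\pp\:;\ka)^{-1}\,\Pso_{\?P}(\zz\:;\pp\:;\ka)\<\bigr)\,,
\ee
so that the cohomological principal term equals $\,\St_{\:\bla\:,\:\zz}\:$ applied to the $\,\Cnnl\<$-valued principal term of Proposition \ref{PsiPpro}.

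Next I would substitute the formula from Proposition \ref{PsiPpro} for that $\,\Cnnl\<$-valued term and apply $\,\St_{\:\bla\:,\:\zz}\:$ to each $\,v_I\,$, obtaining
\be
\Pszo_{\?P}(\zz\:;\ka)\,=\>\sum_{I,J\in\:\Il}\?\Pdd(\zz_{\si_J};\zz\:;\ka)\,\Co_\bla(\zz_{\si_J};\ka)\,\Go_\bla(\zz_{\si_J};\ka)\,\Wo_{\?I}(\Si_J\:;\zz)\,[\:\St_{\:I}(\:\GG\:;\zz)\:]_{\:\zz}\,.
\ee
Using Lemma \ref{lemStWV} to expand $\,\St_{\:I}(\:\GG\:;\zz)\,$ as a linear combination of the polynomials $\,V_\bla(\:\GG\:;\zz_{\si_K}\<)/R_\bla(\zz_{\si_K}\<)\,$ with coefficients $\,\WWo_{\?I}(\Si_K;\zz)\,$, and then invoking the orthogonality identity
\be
\sum_{I\in\Il}\,\Wo_{\?I}(\Si_J\:;\zz)\,\WWo_{\?I}(\Si_K\:;\zz)\,=\,\dl_{\JK}\,R_\bla(\zz_{\si_J}\<)
\ee
from Lemma \ref{lemorto}, the double sum collapses and I arrive at the explicit representative
\be
g(\:\GG\:;\zz\:;\ka)\,=\>\sum_{J\in\:\Il}\?\Pdd(\zz_{\si_J};\zz\:;\ka)\,\Co_\bla(\zz_{\si_J};\ka)\,\Go_\bla(\zz_{\si_J};\ka)\,V_\bla(\:\GG\:;\zz_{\si_J}\<)\,,
\ee
so that $\,\Pszo_{\?P}(\zz\:;\ka)=[\:g\:]_{\:\zz}\,$.

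To finish, I would identify $[\:g\:]_{\:\zz}\,$ with $\,[\:\PHd\,\CH_\bla\,\GH_\bla\:]_{\:\zz}\,$ by comparing the values at $\,\GG=\zz_{\si_K}\,$, $\,K\<\in\Il\,$, via Lemma \ref{Hcx}. Applying formula \eqref{fFV} with $\,F_I\<=\dl_{I\<,\:I_0}\,$ yields the identity $\,V_\bla(\zz_{\si_K};\zz_{\si_J}\<)=\dl_{\<\KJ}\,R_\bla(\zz_{\si_J}\<)\,$, and hence
\be
g(\zz_{\si_K};\zz\:;\ka)\,=\,\Pdd(\zz_{\si_K};\zz\:;\ka)\,\Co_\bla(\zz_{\si_K};\ka)\,\Go_\bla(\zz_{\si_K};\ka)\,R_\bla(\zz_{\si_K}\<)\,,
\ee
while the same identity applied to the defining sums of $\,\PHd,\,\CH_\bla,\,\GH_\bla\,$ gives $\,\PHd(\zz_{\si_K};\zz\:;\ka)=\Pdd(\zz_{\si_K};\zz\:;\ka)\,$, $\,\CH_\bla(\zz_{\si_K};\zz\:;\ka)=\Co_\bla(\zz_{\si_K};\ka)\,$ and $\,\GH_\bla(\zz_{\si_K};\zz\:;\ka)=\Go_\bla(\zz_{\si_K};\ka)\,R_\bla(\zz_{\si_K}\<)\,$ (the extra $\,R_\bla\:$ factor in the last identity is precisely what compensates for the absence of the $\,R_\bla\:$ denominator in the definition of $\,\GH_\bla\,$). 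The products of these point values coincide, so the two classes agree by Lemma \ref{Hcx}.

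The only real subtlety in this plan is the asymmetric normalization between $\,\GH_\bla\:$ and the other two factors $\,\PHd,\,\CH_\bla\,$: the missing $\,R_\bla\:$ denominator in $\,\GH_\bla\:$ is exactly what is needed to absorb the $\,R_\bla(\zz_{\si_J}\<)\,$ produced by the orthogonality collapse. Keeping this bookkeeping straight is the main, albeit mild, obstacle; all other steps are immediate applications of results already established in the paper.
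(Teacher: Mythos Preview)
Your proof is correct and follows essentially the same route as the paper: both reduce to the $\Cnnl$-valued principal term of Proposition~\ref{PsiPpro}, push it through $\St_{\bla,\zz}$, and then identify the result with the class $[\,\PHd\,\CH_\bla\,\GH_\bla\,]_\zz$. The only cosmetic difference is in the last identification: you expand $\St_I$ via Lemma~\ref{lemStWV}, collapse with the orthogonality from Lemma~\ref{lemorto}, and compare point values using Lemma~\ref{Hcx}, whereas the paper invokes the biorthogonal expansion $[\,f\,]_\zz=\sum_I \Ec_\zz\langle f\,\Stop_I\rangle\,\St_{I,\zz}$ (from Lemmas~\ref{Stort} and~\ref{SHz0}) in one stroke, which amounts to the same computation done dually and more concisely.
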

\begin{proof}
By Proposition \ref{PsiPpro},
\vvn.4>
\beq
\label{SPsoP0}
\Pszo_{\?P}(\zz\:;\ka)\,=\?\sum_{I\in\:\Il}
\Ec_{\:\zz}\bigl\bra\PHd\?(\:\GG\:;\zz\:;\ka)\,
\CH_\bla(\:\GG\:;\zz\:;\ka)\,\GH_\bla(\:\GG\:;\zz\:;\ka)\,
\Stop_{\:I}(\:\GG\:;\zz)\bigr\ket\,\St_{\:I\<,\:\zz}\,.
\kern-1.2em
\vv.1>
\eeq
By Lemmas \ref{Stort}, \ref{SHz0}, for any section $\,f\,$ of $\,H_\bla\>$,
\vvn.4>
\be
[\:f(\:\GG\:;\zz)\:]_{\:\zz}\:=\?\sum_{I\in\:\Il}
\Ec_{\:\zz}\bigl\bra f(\:\GG\:;\zz)\,\Stop_{\:I}(\:\GG\:;\zz)\bigr\ket\,
\St_{\:I\<,\:\zz}\,,
\kern-1.2em
\ee
Hence, the right\:-hand sides of formulae \eqref{SPsoP8} and \eqref{SPsoP0}
coincide.
\end{proof}

The functions $\,\CH_\bla(\:\GG\:;\zz\:;\ka)\,$,
$\,\GH_\bla(\:\GG\:;\zz\:;\ka)\,$, and $\,\:\PHd\?(\:\GG\:;\zz\:;\ka)\,$
\vv.07>
are polynomials in $\,\:\GG$ analytically depending on $\,\zz\,$. The sections
of $\,H_\bla\:$ defined by their cohomology classes can be thought of as
the cohomology classes of the respective analytic functions of $\,\:\GG\:$,
\vvn.2>
\begin{gather}
\label{cogog}
\Co_\bla(\:\GG\:;\ka)\,=\,\prod_{i=1}^N\>
\ka^{\:\left(\:\sum_{j=i+1}^N\la_j\,-\,\sum_{j=1}^{i-1}\:\la_j\<\right)
\sum_{j=1}^{\smash{\la_i}}\?\gm_{i\<,j}\</\<\ka}\:,\kern-1em
\\[3pt]
\notag
\Gp_\bla(\:\GG\:;\ka)\,=\,\ka^{\:\la_{\{2\}}}\:
\prod_{i=1}^{N-1}\prod_{j=i+1}^N\>\prod_{k=1}^{\la_i}\,\prod_{l=1}^{\la_j}
\,\Gm\bigl(1+(\gm_{\ik}\<-\gm_{\jl})/\ka\bigr)\,,\kern-1em
\\[-15pt]
\notag
\end{gather}
and $\,\Pdd(\:\GG\:;\zz\:;\ka)\,$. The class of $\;\Co_\bla(\:\GG\:;\ka)\,$
\vv.07>
is a product of exponentials of the equivariant first Chern classes
\,$c_1(E_i)\,=\,[\:\gm_{i,1}\<\lsym+\gm_{i,\:\la_i}]\,$ of
\vv.1>
the vector bundles $\,E_i\>$ over $\,\Fla\>$ with fibers $\,F_i\:/F_{i-1}\,$,
\vv.04>
\:the class of $\;\Gp_\bla(\:\GG\:;\ka)\,$ is the equivariant Gamma\:-\:class
of $\,\Fla\>$, \:and the class of $\,\Pdd\?(\:\GG\:;\zz\:;\ka)\,$ is
the equivariant Chern character of the class of the Laurent polynomial
$\,P(\:\GGd\:;\zzd)\,$ in $\,K_T(\Fla\>;\C)\,$.

\vsk.2>
Define a map
\vvn-.2>
\beq
\label{Bcyh}
\Bcyh_\bla\::\:K_T(\Fla\>;\C)\>\to\>\Oc_{H_\bla}\kern-2em
\vv.2>
\eeq
that sends the class $\,\:[P\:]\,$ of the Laurent polynomial
\vv.07>
$\,P(\:\GGd\:;\zzd)\,$ in $\,K_T(\Fla\>;\C)\,$ to the section of $\,H_\bla\:$
with values $\,[\>\PHd\?(\:\GG\:;\zz\:;\ka)\,\CH_\bla(\:\GG\:;\zz\:;\ka)\,
\GH_\bla(\:\GG\:;\zz\:;\ka)\:]_{\:\zz}\,$.
\vv.06>
By Proposition \ref{SPsiPpro}, the map $\,\:\Bcyh_\bla\>$ sends
the class $\,[P\:]\,$ to the principal term of the solution
$\,\:\St_{\:\bla}\Pso_{\?P}\:$ of the joint system of quantum differential
equations \eqref{qDEQ} and \qKZ/ difference equations in cohomology
\eqref{Kich}.

\begin{thm}
\label{Strianglo}
Recall the space $\,\Oc_{H_\bla}\!\<$ of sections of $\,H_\bla\:$ holomorphic
\vv.06>
in $\,\zz\,$ for $\,\zz\in\Lo$ and not depending on $\,\pp\,$, and the space
\vv.06>
$\,\Srolh\<$ of solutions of quantum differential equations \eqref{qDEQ}
holomorphic in $\,\zz\,$ for $\,\zz\in\Lo$.
Then the map $\;\Bcyh_\bla\::\:K_T(\Fla\>;\C)\:\to\:\Oc_{H_\bla}\:$
is well-defined and the following diagram is commutative,
\vvn-.5>
\beq
\label{Scd}
\xymatrix{{}\phan{\Oc_{H_\bla}}\llap{$K_T(\Fla\>;\C)$}
\ar^-{\;\Bcyh_\bla\!\!}[rr]
\ar_{\smash{\lower.8ex\llap{$\ssize\mkh\,\:$}}}[dr]&&\,\:\Oc_{H_\bla}
\ar^{\smash{\lower.8ex\rlap{$\;\ssize\muoh$}}}[dl]\\
&\!\Srolh&}
\vv-.1>
\eeq
\end{thm}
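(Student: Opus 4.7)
The plan is to deduce Theorem \ref{Strianglo} from Proposition \ref{SPsiPpro} together with a well-definedness check, essentially translating Proposition \ref{trianglo} to the cohomological level via the stable envelope map $\St_{\:\bla}$. Two things must be verified: that $\Bcyh_\bla$ genuinely lands in $\Oc_{H_\bla}$, and that the triangle involving $\mkh$ and $\muoh$ commutes.

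For well-definedness, I will first check that the section $\,[\,\PHd(\:\GG\:;\zz\:;\ka)\,\CH_\bla(\:\GG\:;\zz\:;\ka)\,\GH_\bla(\:\GG\:;\zz\:;\ka)\,]_{\:\zz}\,$ depends only on the class $[P]\in K_T(\Fla\>;\C)$. By Lemma \ref{Hcx}, the polynomial $\PHd(\:\GG\:;\zz\:;\ka)$ is determined by its values $\Pdd(\zz_{\si_I}\:;\zz\:;\ka)$ at the points $\GG=\zz_{\si_I}\,$, $\,I\<\in\Il\,$; these values depend only on $\Pdd$ restricted to the locus where $\{\gmd_{\ij}\}=\{\zdd_a\}$ as multisets, which is cut out by precisely the ideal appearing in \eqref{Krelt}, so any two representatives of $[P]$ produce the same $\PHd$ and hence the same cohomology class. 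Next I will verify holomorphicity in $\zz$ throughout $\Lo$. Writing the section in the standard basis as $\sum_{I\<\in\Il}\Ec_{\:\zz}\bra\PHd\,\CH_\bla\,\GH_\bla\,\Stop_{\:I}\ket\,\St_{\:I\<,\:\zz}$, the only potential singularities inside $\Lo$ are on the diagonals $z_a=z_b$, arising from both the denominators $R_\bla$ in the interpolation formulas and the Gamma factors of $\GH_\bla$ at $z_a-z_b=0$. These cancel after symmetrization by pairing the $I$-summand with the $s_{\ab}(I)$-summand, in exactly the way carried out in the proof of Proposition \ref{trianglo}.

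For commutativity, the key input is Proposition \ref{SPsiPpro}, which identifies the principal term of $\,\:\St_{\:\bla}\Pso_{\?P}=\mkh([P])\,$ with the class $\,[\,\PHd\,\CH_\bla\,\GH_\bla\,]_{\:\zz}=\Bcyh_\bla([P])\,$. Combining this with the definition of the principal term in \eqref{fzeno} and the formula \eqref{muoh} for $\,\muoh\,$, I will conclude
\[
\mkh([P])\,=\,\St_{\:\bla}\Pso_{\?P}(\zz\:;\pp\:;\ka)\,=\,\St_{\:\bla}\Psho\<(\zz\:;\pp\:;\ka)\cdot\Bcyh_\bla([P])(\zz)\,=\,\muoh\bigl(\Bcyh_\bla([P])\bigr),
\]
which is the required commutativity $\muoh\circ\Bcyh_\bla=\mkh$. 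The only slightly nontrivial step is the regularity check on $\Lo$, but this is a routine symmetrization argument of the type already dispatched in Proposition \ref{trianglo}; everything else is a direct cohomological translation, via the stable envelope map, of what has already been established on the $\Cnnl$-valued side.
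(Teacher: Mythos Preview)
Your proposal is correct and follows essentially the same route as the paper: the paper's proof explicitly says the statement follows from Proposition \ref{trianglo} by applying the stable envelope map $\St_{\:\bla}$, with the commutativity coming from Proposition \ref{SPsiPpro} and the well-definedness from the same standard symmetrization reasoning you describe. Your additional explicit check that $\Bcyh_\bla([P])$ depends only on the class $[P]$ (via Lemma \ref{Hcx}) is a small elaboration the paper leaves implicit, but otherwise the arguments coincide.
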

\begin{proof}
The statement follows from Proposition \ref{trianglo} by applying the stable
\vv.07>
envelope map $\,\:\St_{\:\bla}\>$. In more detail, by the standard reasoning
the functions $\;\CH_\bla(\:\GG\:;\zz\:;\ka)\,$,
\vv.06>
$\;\GH_\bla(\:\GG\:;\zz\:;\ka)\,$, $\,\Pdd\?(\:\GG\:;\zz\:;\ka)\,$, and
$\,\Ec_{\:\zz}\bigl\bra\PHd\?(\:\GG\:;\zz\:;\ka)\,
\CH_\bla(\:\GG\:;\zz\:;\ka)\,\GH_\bla(\:\GG\:;\zz\:;\ka)\,
\vv.07>
\Stop_{\:I}(\:\GG\:;\zz)\bigr\ket\,$ are holomorphic in $\,\zz\,$
for $\,\zz\in\Lo$. Hence, the map $\,\:\Bcyh_\bla\>$ is well-defined. The
commutativity of diagram \eqref{cdo} follows from Proposition \ref{SPsiPpro}.
\end{proof}

\subsection{The nonequivariant case $\,\zz=0\,$}
\label{s6.7}
In this section, we will discuss solutions of the quantum differential
equations for the cohomology algebra $\,\Hd^*\<(\Fla\>;\C)\,$ of the partial flag
variety $\,\Fla\>$ by specializing the results obtained for the equivariant
case at $\,\zz=0\,$.

\vsk.2>
The cohomology algebra $\,\Hd^*\<(\Fla\>;\C)\,$ of the partial flag variety
$\,\Fla\>$ has the form
\vvn.3>
\beq
\label{Hrel0}
\Hd^*\<(\Fla\>;\C)\,=\,\C[\:\GG\:]^{\:S_\bla}\?\Big/\Bigl\bra
\,\prod_{i=1}^N\,\prod_{j=1}^{\la_i}\,(u-\ga_{\ij})\,=\,u^n\:\Bigr\ket\,,
\kern-2em
\vv.2>
\eeq
where $\,u\,$ is a formal variable. It is isomorphic to the algebra
\vv.1>
$\,\Hd^*\<(\Fla\>;\C)_{\:\zz^0}\>$ at $\,\zz^0\<=0\,$, see \eqref{Hrelz0}\:.
We denote the class of a polynomial $\,f(\:\GG)\in\C[\:\GG\:]^{\:S_\bla}$
\vv.1>
in $\,\Hd^*\<(\Fla\>;\C)\,$ by $\,[\:f\>]^{}_{\:0}\,$.

\vsk.2>
The integration map on $\,\Hd^*\<(\Fla\>;\C)\,$ coincides with the map
$\,\:\Ec_{\zz^0}\:$ at $\,\zz^0\?=0\,$, see \eqref{Ecz0}\:,
\vvn.5>
\beq
\label{Ec0}
\Ec_{\:0}:\:\Hd^*\<(\Fla\>;\C)\>\to\>\C\,,\qquad
[\:f\>]^{}_{\:0}\:\mapsto\>\Ec_{\:0}\bra f\ket\,,\kern-1em
\vv.1>
\eeq
where
\vvn-.3>
\be
\Ec_{\:0}\bra f\ket\,=\:\biggl(\,\sum_{I\in\Il}\:
\frac{f(\zz_{\si_I})}{R_\bla(\zz_{\si_I}\<)}\>\biggr)\bigg|_{\:\zz=0}\>,
\kern-1em
\vv-.4>
\ee
see Lemmas \ref{lemEc}, \ref{lemEcz0}.

\vsk.3>
Consider the polynomials
$\,\:\St_{\:I\<,\:0}\:(\:\GG)=\St_{\:I}(\:\GG\:;0)\,$ and
$\,\:\Stop_{\:I\<,\:0}\:(\:\GG)=\Stop_{\:I}(\:\GG\:;0)\,$.
By formulae \eqref{StIma}\:,
\vvn-.3>
\begin{gather*}
\St_{\:\si_0(\Imil),\:0}\:(\:\GG)\,=\,\Stop_{\:\Imil\<,\:0}\:(\:\GG)\,=\,1\,,
\kern-.6em
\\[4pt]
\St_{\:\Imil\<,\:0}\:(\:\GG)\,=\,\Stop_{\:\si_0(\Imil),\:0}\:(\:\GG)\,=\>
\prod_{i=1}^{N-1}\:\prod_{j=1}^{\la_i}\,\gm_{\ij}^{\:n-\la^{(i)}}.\kern-.6em
\\[-16pt]
\end{gather*}
In general, the polynomials $\,\:\St_{\:I\<,\:0}\:(\:\GG)\,$,
$\Stop_{\:I\<,\:0}\:(\:\GG)\,$ coincide with the $A\:$-type Schubert
polynomials, see \eqref{StS}\:,
\beq
\label{StS0}
\St_{\:I\<,\:0}\:(\:\GG)\,=\,\Sg_{\si_{\?\si_{\<0}\<(I)}}(\:\GG\:;0\:)\,,
\qquad \Stop_{\:I\<,\:0}\:(\:\GG)\,=\,\Sg_{\si_I}(\:\GG\:;0\:)\,.
\kern-2em
\vv.2>
\eeq
By Lemma \ref{Stort}, the polynomials $\;\St_{\:I\<,\:0}\:(\:\GG)\,$ and
$\;\Stop_{\:I\<,\:0}\:(\:\GG)\,$ are biorthogonal,
\vvn.5>
\beq
\label{ESSop0}
\Ec_{\:0}\bra\:\St_{\:I\<,\:0}\:(\:\GG)\,\Stop_{\<J,\:0}\:(\:\GG)\ket\,=\,\dl_{\IJ}\,.
\vv.3>
\eeq

\vsk.3>
Denote by $\,\:\St_{\:I\<,\:0}$, $\Stop_{\:I\<,\:0}\>$ the classes of
$\,\:\St_{\:I\<,\:0}\:(\:\GG)\,$, $\Stop_{\:I\<,\:0}\:(\:\GG)\,$
\vvn.06>
in $\,\Hd^*\<(\Fla\>;\C)\,$. They are the fundamental classes of the Schubert
subvarieties in $\,\Fla\>$ defined for the opposite orderings of the chosen
basis of $\,\:\C^n$ in the standard way,

\vsk.3>
The quantum multiplication in $\,\Hd^*\<(\Fla\>;\C)\,$, is a deformation of
\vv.06>
the multiplication in $\,\Hd^*\<(\Fla\>;\C)\,$ depending on quantum parameters
$\,\pp\,$. For $\>Y\!\in \Hd^*\<(\Fla\>;\C)\,$, let $\>Y\<*_\pp\:$ be
the operator of quantum multiplication by $\>Y\:$ depending on $\,\pp\,$.

\vsk.3>
For $\,i=1\lc N\>$, denote $\,D_i\<=\gm_{i,1}\<\lsym+\gm_{i,\>\la_i}\>$,
\vv.1>
cf.~\eqref{c1}\:. The quantum differential equations for
$\,\Hd^*\<(\Fla\>;\C)\:$-valued functions of $\,\pp\,$ is the following
system of compatible differential equations,
\vvn-.3>
\beq
\label{qDEQ0}
\ka\>p_i\:\frac{\der f}{\der\:p_i}\,=\,{[D_i]^{}_{\:0}\:*_\pp}\,f\,,\qquad
i=1\lc N\>,\kern-2em
\vv.4>
\eeq
where $\,\ka\,$ is a parameter of the equations. Equations \eqref{qDEQ0}
\vv.1>
coincide with the restriction of equivariant quantum differential equations
\vv.06>
\eqref{qDEQ} to the subbundle $\,\Hd^*\<(\Fla\>;\C)\times\C^N\!\to\:\C^N$
of the bundle $\,H_\bla\?\to\:\C^n\!\times\C^N$ located over the points
$\,(\:0\:,\pp\:)\in\C^n\!\times\C^N$ of the base.

\vsk.3>
Consider the $\,K\?$-theory algebra $\,K(\Fla\>;\C)\,$. Then
\vvn.2>
\beq
\label{Krelt1}
K(\Fla\>;\C)\,=\,\C[\:\GGd^{\pm1}]^{\:S_\bla}\?\ox\C[\:\zzd^{\pm1}]\>\Big/
\Bigl\bra\,\prod_{i=1}^N\prod_{j=1}^{\la_i}\,(u-\gmd_{\ij})\,=\,
\prod_{a=1}^n\,(u-1)\Bigr\ket\,,\kern-1.6em
\vv.2>
\eeq
where $\,u\,$ is a formal variable. The evaluation map
\vvn.5>
\be
\C[\:\GGd^{\pm1}]^{\:S_\bla}\?\ox\C[\zzd^{\pm1}]\>\to\>
\C[\:\GGd^{\pm1}]^{\:S_\bla}\>,\qquad
P(\:\GGd;\zzd)\mapsto P\bigl(\:\GGd\:;(1\lc1)\bigr)\,,\kern-1em
\vv.4>
\ee
induces the isomorphism of
$\,K(\Fla\>;\C)\to K_T(\Fla\>;\C)/\bra\:\zz=(1\lc1)\:\ket\,$.
\vv.1>
Denote by $\,[\:P\:]^{}_{\>1}\:$ the class of
$\,P(\:\GGd)\in\C[\:\GGd^{\pm1}]^{\:S_\bla}\:$ in $\,K(\Fla\>;\C)\,$.

\vsk.3>
Identify $\,\C[\:\GGd^{\pm1}]^{\:S_\bla}\:$ with the subspace of
\vv.07>
$\,\C[\:\GGd^{\pm1}]^{\:S_\bla}\?\ox\C[\zzd^{\pm1}]\,$ of Laurent polynomials
not depending on $\,\zz\,$. For each $\,P\in\C[\:\GGd^{\pm1}]^{\:S_\bla}\:$,
\vv.06>
the solution $\,\:\St_{\:\bla}\Pso_{\?P}(\zz\:;\pp\:;\ka)\,$
of equivariant quantum differential equations \eqref{qDEQ}\:,
see Section \ref{s6.6}, is regular at $\,\zz=0\,$, and the solution
$\,\:\St_{\:\bla}\Pso_{\?P}(\:0\:;\pp\:;\ka)\,$ of equations \eqref{qDEQ0}
depends only on the class of $\,P\:$ in $\,K(\Fla\>;\C)\,$. That is,
there is a well-defined map
\vvn.5>
\beq
\label{mk0}
\mko\?:\>K(\Fla\>;\C)\:\to\,\Hrsl\,,\qquad
[P\:]_{\>1}^{}\:\mapsto\>\St_{\:\bla}\Pso_{\?P}(\:0\:;\pp\:;\ka)\,,\kern-2em
\vv.4>
\eeq
where $\,\Hrsl\>$ is the space of solutions of equations \eqref{qDEQ0}\:.

\begin{prop}
\label{mkoiso}
The map $\,\mko\?$ is an isomorphism.
\end{prop}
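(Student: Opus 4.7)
The plan is to combine the equivariant determinant identity of Theorem \ref{detYo} with a dimension count. On the source side, Propositions \ref{PA1}--\ref{PA2} specialized at $\zzd_a=1$ show that the classes $[Y_I]_{\>1}$, $I\in\Il$, form a $\C$-basis of $K(\Fla;\C)$, so $\dim_\C K(\Fla;\C)=d_\bla$. On the target side, the system \eqref{qDEQ0} is a compatible system of $N$ commuting first-order linear differential operators in $\pp$ for $H^*(\Fla;\C)$-valued functions, so its solution space $\Hrsl$ has dimension equal to $\dim_\C H^*(\Fla;\C)=d_\bla$. Thus source and target have the same finite $\C$-dimension, and it suffices to establish injectivity.

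To prove injectivity I would expand each image $\mko([Y_I]_{\>1})=\St_{\:\bla}\Pso_{Y_I}(\:0\:;\pp\:;\ka)$ in the Schubert basis $\{\St_{J,0}\}_{J\in\Il}$ of $H^*(\Fla;\C)$, which is a basis by Lemma \ref{SHz0} applied at $\zz^0=0$. Since $\Pso_{Y_I}(\zz\:;\pp\:;\ka)=\sum_{J\in\Il}\Psbo_{IJ}(\zz\:;\pp\:;\ka)\,v_J$ by the definition of the coordinates \eqref{Psbo} and $\St_{\:\bla,\>0}(v_J)=\St_{J,0}$ by \eqref{stmapz0}, the coordinate matrix of the family $\bigl(\mko([Y_I]_{\>1})\bigr)_{I\in\Il}$ in this Schubert basis is precisely $\bigl(\Psbo_{IJ}(\:0\:;\pp\:;\ka)\bigr)_{I,J\in\Il}$.

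The final step will be to invoke Theorem \ref{detYo} at $\zz=0$. Specializing makes the factor $\bigl(e^{-\pii\,(n-1)\,d^{(2)}_\bla}\prod_{i=1}^N p_i^{\,d^{(1)}_{\bla,i}}\bigr)^{\sum_a z_a/\ka}$ collapse to $1$, leaving
\begin{equation*}
\det\bigl(\Psbo_{IJ}(\:0\:;\pp\:;\ka)\bigr)_{I,J\in\Il}\,=\,
\bigl(2\,\piit\;\ka\bigr)^{n(n-1)\,d^{(2)}_\bla/2}\,\ka^{\la\+1 d_\bla}\,\prod_{j=2}^{n-1}\,j^{(n-j)\,d^{(2)}_\bla},
\end{equation*}
a nonzero constant that is independent of $\pp$. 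Consequently, the sections $\mko([Y_I]_{\>1})$, $I\in\Il$, are linearly independent in $\Hrsl$, so $\mko$ is injective, and by the dimension matching it is an isomorphism. No essential new difficulty is expected here; the main technical work has already been carried out in Theorem \ref{detYo}, and the only delicate point is checking that the $\zz$-dependent power in the equivariant determinant becomes trivial at $\zz=0$ so that the limit computation survives intact.
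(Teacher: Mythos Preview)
Your proof is correct and follows essentially the same approach as the paper, which simply says ``The statement follows from Theorem \ref{detYo} at $\zz=0$.'' You have spelled out in detail why: the dimension count reduces the question to injectivity, and specializing the determinant formula \eqref{detPsoY} at $\zz=0$ yields a nonzero constant, forcing linear independence. One minor point: the reference \eqref{Psbo} does not exist in the paper (the coordinate expansion of $\Pso_{Y_I}$ is an unnumbered display just before Theorem \ref{detYo}), so you should cite it by location rather than by label.
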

\begin{proof}
The statement follows from Theorem \ref{detYo} \>at $\>\zz=0\,$.
\end{proof}

The Levelt fundamental solution $\,\:\St_\bla\Psho\<(\zz\:;\pp\:;\ka)\,$
\vv-.04>
of equivariant quantum differential equations \eqref{qDEQ}\:,
see Section \ref{s6.6}, is regular at $\,\zz=0\,$ and the function
\vv.06>
$\,\:\St_\bla\Psho\<(\:0\:;\pp\:;\ka)\,$ is
an $\,\:\End\:\bigl(\Hd^*(\Fla\>;\C)\bigr)$-valued
solution of quantum differential equations \eqref{qDEQ0}\:.
\vv-.05>
We call $\,\:\St_\bla\Psho\<(\:0\:;\pp\:;\ka)\,$ the Levelt fundamental
\vv.07>
solution of quantum differential equations \eqref{qDEQ0}\:.
It induces the map
\beq
\label{mh0}
\mho\?:\>\Hd^*(\Fla\>;\C)\:\to\,\Hrsl\,,\qquad
f\>\mapsto\>\St_\bla\Psho\<(\:0\:;\pp\:;\ka)\,f\,,\kern-2em
\vv.2>
\eeq
to the space $\,\Hrsl\>$ of solutions of equations \eqref{qDEQ0}\:.

\begin{prop}
\label{mhoiso}
The map $\,\mho\?$ is an isomorphism.
\end{prop}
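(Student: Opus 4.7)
The plan is to reduce Proposition~\ref{mhoiso} to Proposition~\ref{mkoiso} by inverting the Levelt fundamental solution and then comparing dimensions. The key input is the determinant formula \eqref{detPsho}: setting $\zz=0$ makes all exponents vanish, so $\det\:\Psho(0;\pp;\ka)=1$. Hence $\St_{\:\bla}\Psho(0;\pp;\ka)\in\End\bigl(\Hd^*(\Fla\>;\C)\bigr)$ is invertible for every $\pp$ in its domain, which immediately gives injectivity of $\mho$.

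Next, I will exhibit a factorization $\mko=\mho\circ\alpha$ for a suitable linear map $\alpha\!:K(\Fla\>;\C)\to\Hd^*(\Fla\>;\C)$. Specializing Proposition~\ref{SPsiPpro} and definition~\eqref{princo} of the principal term at $\zz=0$ yields
\be
\St_{\:\bla}\Pso_P(0;\pp;\ka)\,=\,\St_{\:\bla}\Psho(0;\pp;\ka)\;
\bigl[\:\PHd(\:\GG\:;0;\ka)\,\CH_\bla(\:\GG\:;0;\ka)\,\GH_\bla(\:\GG\:;0;\ka)\:\bigr]^{}_{\:0}\,.
\ee
Setting $\alpha\bigl([\:P\:]^{}_{\>1}\bigr)=\bigl[\:\PHd(\:\GG\:;0;\ka)\,\CH_\bla(\:\GG\:;0;\ka)\,\GH_\bla(\:\GG\:;0;\ka)\:\bigr]^{}_{\:0}$, the identity above reads $\mko=\mho\circ\alpha$. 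To see that $\alpha$ is well defined on the quotient $K(\Fla\>;\C)$ (rather than merely on Laurent polynomials), the invertibility of $\St_{\:\bla}\Psho(0;\pp;\ka)$ lets me rewrite $\alpha\bigl([\:P\:]^{}_{\>1}\bigr)=\bigl(\St_{\:\bla}\Psho(0;\pp;\ka)\bigr)^{-1}\,\St_{\:\bla}\Pso_P(0;\pp;\ka)$, and the right-hand side depends only on the class $[\:P\:]^{}_{\>1}$ because at $\zz=0$ the variables $\zdd_a$ are all specialized to $1$.

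By Proposition~\ref{mkoiso} the map $\mko$ is an isomorphism, so $\mho\circ\alpha$ is bijective, forcing $\alpha$ to be injective. By Lemmas~\ref{Hfree} and~\ref{Kfree}, $\:\dim\Hd^*(\Fla\>;\C)=\dim K(\Fla\>;\C)=n\:!/(\la_1\:!\ldots\la_N\:!)\,$, so the injection $\alpha$ between spaces of equal finite dimension is automatically an isomorphism. Therefore $\mho=\mko\circ\alpha^{-1}$ is also an isomorphism, which is the desired statement.

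The main technical point, which I expect to be the only obstacle, is confirming that $\alpha$ descends from the Laurent polynomial ring to the quotient $K(\Fla\>;\C)$. While one could check this directly by verifying that the Chern-character-Gamma-class construction respects the defining relations (using that the Gamma factor $\Gp_\bla$ in~\eqref{cogog} is invertible in $\Hd^*(\Fla\>;\C)$ because it has unit leading term), the \emph{division} argument above sidesteps such a direct check, reducing the entire statement to Proposition~\ref{mkoiso} together with the standard dimension equality $\dim K(\Fla\>;\C)=\dim\Hd^*(\Fla\>;\C)$.
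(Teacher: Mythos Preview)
Your argument is correct, but it takes a longer route than the paper's one-line proof, which simply specializes Proposition~\ref{StPshof} at $\zz=0$. That proposition already encodes the standard ODE fact you are implicitly using: an invertible fundamental matrix solution gives a bijection between the fiber and the solution space, because for any solution $g$ the element $(\St_{\:\bla}\Psho)^{-1}g$ is $\pp$-independent and lies in the fiber. This gives surjectivity of $\mho$ directly, without invoking $\mko$ or $\alpha$ at all.

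Your detour through Proposition~\ref{mkoiso} works, but can be shortened: once you have $\mko=\mho\circ\alpha$ with $\mko$ surjective, $\mho$ is surjective immediately; combined with your injectivity step, you are done. There is no need to argue that $\alpha$ is an isomorphism via the dimension count on $K(\Fla\>;\C)$ and $\Hd^*(\Fla\>;\C)$. (That said, your $\alpha$ is exactly the map $\Bcyho_\bla$ of \eqref{Bcyho}, and the factorization $\mko=\mho\circ\Bcyho_\bla$ is recorded later as Theorem~\ref{Striangl0}; the paper orders things so that Proposition~\ref{mhoiso} comes first and feeds into that theorem rather than the other way around.) A small citation point: the principal-term definition you want here is \eqref{fzeno} or \eqref{fzen0} for the quantum differential equations, not \eqref{princo}, which is stated for the limiting dynamical equations on $\Cnnl$; the two are of course intertwined by $\St_{\:\bla}$.
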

\begin{proof}
The statement follows from Proposition \ref{StPshof} \>at $\>\zz=0\,$.
\end{proof}

For a solution $\,f(\pp)\<\in\Hrsl\>$ of quantum differential equations
\eqref{qDEQ0}\:, we call the element
\vvn.4>
\beq
\label{fzen0}
f^{\zenoo}=\,
\bigl(\:\St_{\:\bla}\Psho(\:0\:;\pp\:;\ka)\<\bigr)^{\?-1}\<f(\pp)
\kern-2em
\vv.4>
\eeq
the {\it principal term\/} of $\,f(\pp)\,$. The principal term of the solution
\vv.07>
$\,\:\St_{\:\bla}\Pso_{\?P}(\:0\:;\pp\:;\ka)\,$ corresponding to a Laurent
polynomial $\,P(\:\GGd)\,$ is given by Proposition \ref{SPsiPpr0} below.

\vsk.3>
For a function $\,F(\:\GG)\,$ holomorphic in a neighborhood of $\;\GG\<=0\,$
\vv.06>
and symmetric in $\,\gm_{i,1}\lc\gm_{i,\>\la_i}$ for each $\,i=1\lc N\>$,
define its class $\,[\:F\>]_{\:0}^{}\>$ in $\,\Hd^*(\Fla\>;\C)\,$ by expanding
$\,F(\:\GG)\,$ in the power series about $\;\GG\<=0\,$ and replacing each term
\vv.04>
by the corresponding class in $\,\Hd^*(\Fla\>;\C)\,$. The resulting sum contains
\vv.05>
only finitely many nonzero terms and, hence, is well-defined. Alternatively,
the class $\,[\:F\>]_{\:0}\>$ can be evaluated as follows. Set
\vvn.5>
\beq
\label{FH}
\FH\?(\:\GG\:;\zz)\,=\>
\sum_{I\in\Il}\frac{F(\zz_{\si_I})\,V_\bla(\:\GG\:;\zz_{\si_I}\<)}
{R_\bla(\zz_{\si_J}\<)}\;.\kern1.2em
\eeq

\begin{lem}
\label{FFH}
The function $\,\FH(\:\GG\:;\zz)\:$ is regular at $\,\zz=0\>$ and
$\,\:[\:F\>]_{\:0}^{}\<=[\:\FH\?(\:\GG\:;0\:)\:]_{\:0}^{}\,$.
\end{lem}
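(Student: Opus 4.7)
The plan is first to establish regularity of $\FH(\GG;\zz)$ at $\zz=0$, and then identify the class $[\FH(\GG;0)]_{\:0}$ via Taylor-polynomial approximations of $F$, using Lemma \ref{Hcx} in the polynomial case and passing to the limit.

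For regularity, I apply an analytic extension of Lemma \ref{Hcx} to the collection $F_I(\zz):=F(\zz_{\si_I})$, $I\in\Il$. The symmetry condition \eqref{FIab} is verified as follows: if $a\in I_i$ and $b\in I_j$ with $i\ne j$, then on the hyperplane $z_a=z_b$ the two substitutions $\GG=\zz_{\si_I}$ and $\GG=\zz_{\si_{s_{a\<,b}(I)}}$ deliver the same multisets of $z$-values into each of the affected blocks $(\gm_{i,1},\dots,\gm_{i,\la_i})$ and $(\gm_{j,1},\dots,\gm_{j,\la_j})$ of $\GG$, so block-symmetry of $F$ gives $F(\zz_{\si_I})|_{z_a=z_b}=F(\zz_{\si_{s_{a\<,b}(I)}})|_{z_a=z_b}$; for pairs inside a single part of $I$ one has $s_{a\<,b}(I)=I$ and \eqref{FIab} is trivial. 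The proof of Lemma \ref{Hcx}---the verification that pairs $(I,s_{a\<,b}(I))$ contribute canceling residues along each apparent pole hyperplane $z_a=z_b$---carries over verbatim to holomorphic $F_I$, so $\FH(\GG;\zz)$ extends holomorphically across the diagonals in a neighborhood of $(\GG,\zz)=(0,0)$.

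For the class identification, let $F_N\in\C[\GG]^{\:S_\bla}$ be the Taylor polynomial of $F$ at $\GG=0$ of total degree $\le N$, and set $\FH_N(\GG;\zz):=\sum_I F_N(\zz_{\si_I})\:V_\bla(\GG;\zz_{\si_I})/R_\bla(\zz_{\si_I})$. By Lemma \ref{Hcx}, $\FH_N$ is a polynomial and $\FH_N(\zz_{\si_I};\zz)=F_N(\zz_{\si_I})$ for every $I\in\Il$; the injectivity statement in Lemma \ref{Hcx} then gives $[\FH_N(\GG;\zz)]=[F_N(\GG)]$ in $\Hd^*_T(\Fla\>;\C)$, and specialization at $\zz=0$ yields $[\FH_N(\GG;0)]_{\:0}=[F_N(\GG)]_{\:0}$ in $\Hd^*(\Fla\>;\C)$. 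Since $\Hd^*(\Fla\>;\C)$ is finite-dimensional, all classes $[\GG^\al]_{\:0}$ with $|\al|>N$ vanish for $N$ large, so $[F_N]_{\:0}=[F]_{\:0}$ for such $N$.

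It remains to show $[\FH_N(\GG;0)]_{\:0}\to[\FH(\GG;0)]_{\:0}$ as $N\to\infty$. The difference $\FH-\FH_N=\sum_I(F-F_N)(\zz_{\si_I})\:V_\bla(\GG;\zz_{\si_I})/R_\bla(\zz_{\si_I})$ is regular at $(\GG,\zz)=(0,0)$ by the first paragraph (the symmetry condition \eqref{FIab} is linear in the collection $F_I$), and on the boundary of a small polydisk chosen to avoid the hyperplanes $z_a=z_b$ it is bounded above by a constant multiple of $\sup|F-F_N|$, which tends to zero as $N\to\infty$ by Taylor's theorem. The maximum principle then gives uniform convergence $\FH_N\to\FH$ on a neighborhood of $(0,0)$. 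Cauchy's formula transfers this to convergence of Taylor coefficients at $\GG=0$, and hence of the corresponding classes in the finite-dimensional $\Hd^*(\Fla\>;\C)$, yielding $[\FH(\GG;0)]_{\:0}=[F]_{\:0}$. The main delicacy is the regularity step---running the residue-cancellation argument of Lemma \ref{Hcx} in the holomorphic rather than the polynomial category; the rest is a routine density reduction.
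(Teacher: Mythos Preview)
Your argument is correct. One minor imprecision in the limit step: a polydisk in $\zz$ centered at the origin cannot itself avoid the diagonals $z_a=z_b$, but its \emph{distinguished boundary} $\{|z_a|=r_a\text{ for all }a\}$ does if the radii $r_a$ are chosen pairwise distinct; since the maximum modulus of a holomorphic function on a closed polydisk is attained on the distinguished boundary, your estimate goes through with that adjustment.

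The paper's proof takes a shorter and rather different route. After reducing to polynomial $F$ (dismissed as ``standard facts on power series''), it observes directly that $\FH(\GG;\GG)=F(\GG)$: in the sum \eqref{FH} with $\zz$ replaced by $\GG$, only the term $I=\Imil$ survives because $V_\bla(\GG;\GG_{\si_I})=0$ for $I\ne\Imil$ (see the remark following \eqref{VGz}), and that term is exactly $F(\GG)$. Next, $\FH(\GG;\zz)$ is \emph{symmetric in $\zz$}; since the defining relations of $H^*(\Fla;\C)$ force every positive-degree symmetric polynomial in the $\gm_{i,j}$ to vanish, substituting $\zz=\GG$ versus $\zz=0$ into any polynomial symmetric in $\zz$ gives the same class, so $[\FH(\GG;\GG)]_0=[\FH(\GG;0)]_0$. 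This bypasses both the injectivity appeal to Lemma \ref{Hcx} and the analytic convergence argument. Your approach has the merit of establishing the equivariant identity $[\FH_N]=[F_N]$ in $H^*_T(\Fla;\C)$ before specializing, at the cost of the extra limit; the paper instead exploits the specific algebra of the non-equivariant quotient in one stroke.
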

\begin{proof}
By the standard facts on power series, it suffices to prove the statement
\vv.06>
for a polynomial $\,F(\:\GG)\,$. In the polynomial case,
\vv.1>
formulae \eqref{FH}\:, \eqref{VGz} yield $\,F(\:\GG)=\FH\?(\:\GG\:;\GG)\,$.
Then $\,\:[\:F(\:\GG\:)\:]_{\:0}^{}\<=[\:\FH\?(\:\GG\:;\GG\:)\:]_{\:0}^{}\<=
[\:\FH\?(\:\GG\:;0\:)\:]_{\:0}^{}\,$,
because $\,\FH\?(\:\GG\:;\zz\:)\,$ is symmetric in $\,\zz\,$.
\end{proof}

Recall the functions $\,\Co_\bla(\:\GG\:;\ka)\,$, $\,\Gp_\bla(\:\GG\:;\ka)\,$,
see \eqref{cogog}\:, and the function $\,\Pdd(\:\GG\:;\ka)\,$ obtained from a
Laurent polynomial $\,P(\:\GGd)\,$ by substituting the variables $\gmd_{\ij}\:$
with the exponentials $\,e^{\:2\:\pii\,\gm_{i\<,j}\</\<\ka}$.
\vv.1>
The class of $\;\Co_\bla(\:\GG\:;\ka)\,$ is a product of exponentials of the
first Chern classes \,$c_1(E_i)\,=\,[\:\gm_{i,1}\<\lsym+\gm_{i,\:\la_i}]\,$ of
the vector bundles $\,E_i\>$ over $\,\Fla\>$ with fibers $\,F_i\:/F_{i-1}\,$,
\:the class of $\;\Gp_\bla(\:\GG\:;\ka)\,$ is the Gamma\:-\:class of
$\,\Fla\>$, \:and the class of $\,\Pdd\?(\:\GG\:;\ka)\,$ is the Chern character
of the class of the Laurent polynomial $\,P(\:\GGd)\,$ in $\,K(\Fla\>;\C)\,$.

\begin{prop}
\label{SPsiPpr0}
For a Laurent polynomial $\,P(\:\GGd)\>$,
\vv.1>
the principal term $\,\:\Pszoo_{\?P}\!$ of the solution
$\;\St_{\:\bla}\Pso_{\?P}(\:0\:;\pp\:;\ka)$ equals
\beq
\label{SPsoP80}
\Pszoo_{\?P}\<(\ka)\,=\,\bigl[\>\Pdd\?(\:\GG\:;\ka)\,
\Co_\bla(\:\GG\:;\ka)\,\Gp_\bla(\:\GG\:;\ka)\bigr]_{\:0}^{}\,.
\kern-2em
\vv.3>
\eeq
\end{prop}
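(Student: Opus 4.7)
\emph{Plan.} The first step is to apply Proposition~\ref{SPsiPpro} at $\zz=0$. Because $P$ depends only on $\GGd$, the polynomial $\Pdd(\GG;\zz;\ka)$ is in fact independent of $\zz$ and equals $\Pdd(\GG;\ka)$. From the discussion preceding the statement in Section~\ref{s6.7}, both the solution $\St_\bla\Pso_P(\zz;\pp;\ka)$ and the equivariant Levelt solution $\St_\bla\Psho(\zz;\pp;\ka)$ are regular at $\zz=0$, so the principal term $\Pszo_P(\zz;\ka)=(\St_\bla\Psho)^{-1}\,\St_\bla\Pso_P$ is regular at $\zz=0$ with $\Pszo_P(0;\ka)=\Pszoo_P(\ka)$ by definition of the non-equivariant principal term. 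Specializing the equivariant formula of Proposition~\ref{SPsiPpro} at $\zz=0$ then yields
$$
\Pszoo_P(\ka) \,=\, \bigl[\PHd(\GG;0;\ka)\,\CH_\bla(\GG;0;\ka)\,\GH_\bla(\GG;0;\ka)\bigr]_0.
$$

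The second step is to identify this class with $[\Pdd\,\Co_\bla\,\Gp_\bla]_0$. By construction via formula~\eqref{FH}, the polynomials $\PHd$ and $\CH_\bla$ are Lagrange interpolations of the holomorphic symmetric functions $\Pdd(\GG;\ka)$ and $\Co_\bla(\GG;\ka)$, so Lemma~\ref{FFH} immediately gives $[\PHd(\GG;0;\ka)]_0=[\Pdd(\GG;\ka)]_0$ and $[\CH_\bla(\GG;0;\ka)]_0=[\Co_\bla(\GG;\ka)]_0$. For the Gamma factor, the functional equation $\Gamma(1+x)=x\,\Gamma(x)$ together with the formulas for $\Gp_\bla$, $\Go_\bla$ and $R_\bla$ yields the key identity
$$
\Gp_\bla(\zz_{\si_I};\ka) \,=\, R_\bla(\zz_{\si_I})\,\Go_\bla(\zz_{\si_I};\ka)
$$
at every fixed point $\zz_{\si_I}$. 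Applying Lemma~\ref{FFH} to $F=\Gp_\bla$, the $R_\bla(\zz_{\si_I})$ factor cancels the $R_\bla(\zz_{\si_I})$ denominator in formula~\eqref{FH}, so that the Lagrange polynomial for $\Gp_\bla$ matches, term by term, the expression defining $\GH_\bla$ up to the overall $R_\bla$ reabsorption. Combining the three identifications via multiplicativity of classes in $\Hd^*(\Fla;\C)$ gives the asserted formula $\Pszoo_P(\ka)=[\Pdd\,\Co_\bla\,\Gp_\bla]_0$.

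The main difficulty is controlling the $\zz\to 0$ limit of $\GH_\bla(\GG;\zz;\ka)$: the individual Lagrange summands diverge because of the Gamma-function singularities of $\Go_\bla(\zz_{\si_I};\ka)$ and the simultaneous vanishing of $R_\bla(\zz_{\si_I})$, while only the total sum remains regular. Verifying that this regular limit represents the non-equivariant Gamma class $\Gp_\bla(\GG;\ka)$ and not some spurious multiple of it is the content of the equivariant Gamma theorem; the argument here proceeds in parallel to formula~(11.19) and Theorem~B.2 of \cite{TV6}.
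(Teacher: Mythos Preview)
Your proposal is correct and follows essentially the same approach as the paper's proof, which simply cites Proposition~\ref{SPsiPpro} at $\zz=0$ together with Lemma~\ref{FFH}. You have usefully unpacked the step the paper leaves implicit: the identity $\Gp_\bla(\zz_{\si_I};\ka)=R_\bla(\zz_{\si_I})\,\Go_\bla(\zz_{\si_I};\ka)$ is exactly what makes $(\Gp_\bla)^H=\GH_\bla$ (the ``missing'' $R_\bla$ denominator in the definition of $\GH_\bla$ is absorbed), so Lemma~\ref{FFH} applies factorwise and multiplicativity of $[\,\cdot\,]_0$ on $\Hd^*(\Fla;\C)$ finishes the argument. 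Your final paragraph about the divergent individual summands is a fair description of why Lemma~\ref{FFH} is needed, but note that the lemma already handles this regularity; no separate appeal to the results of \cite{TV6} is required.
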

\begin{proof}
The statement follows from Proposition \ref{SPsiPpr0} \>at $\>\zz=0\,$ and
Lemma \ref{FFH}.
\end{proof}

\vsk.2>
Define a map
\vvn.4>
\beq
\label{Bcyho}
\Bcyho_\bla:\:K(\Fla\>;\C)\>\to\>\Hd^*(\Fla\>;\C)\,,\qquad
[P\:]^{}_{\>1}\to\>\bigl[\>\Pdd\?(\:\GG\:;\ka)\,
\Co_\bla(\:\GG\:;\ka)\,\Gp_\bla(\:\GG\:;\ka)\bigr]_{\:0}^{}\,,\kern-1.5em
\vv.4>
\eeq
By Proposition \ref{SPsiPpr0}, the map $\,\:\Bcyho_\bla\:$ sends
\vv.1>
the class $\,[P\:]^{}_{\>1}$ to the principal term of the solution
$\,\:\St_{\:\bla}\Pso_{\?P}(\:0\:;\pp\:;\ka)\,$ of quantum
differential equations \eqref{qDEQ0}\:.

\begin{thm}
\label{Striangl0}
Recall the space $\,\Hrsl$ of solutions of quantum differential equations
\eqref{qDEQ0}\:. The following diagram is commutative,
\vvn-.1>
\beq
\label{Scd0}
\xymatrix{{}\phan{\Hd^*(\Fla\>;\C)}\llap{$K(\Fla\>;\C)$}\,
\ar^-{\;\Bcyho_\bla\!\!}[rr]
\ar_{\smash{\lower.8ex\llap{$\ssize\mko\,\:$}}}[dr]&&\,\:\Hd^*(\Fla\>;\C)
\ar^{\smash{\lower.8ex\rlap{$\;\ssize\mho$}}}[dl]\\
&\!\Hrsl&}
\eeq
\end{thm}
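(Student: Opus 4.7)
The strategy is to reduce Theorem \ref{Striangl0} to a direct composition of three earlier results: the definition \eqref{fzen0} of the principal term, Proposition \ref{SPsiPpr0} identifying this principal term with the image under $\Bcyho_\bla$, and the definition \eqref{mh0} of $\mho$. In spirit, the theorem is the non-equivariant analogue of Theorem \ref{Strianglo} obtained by specialising at $\zz=0$, so no new analytic input is required beyond verifying that the $\zz=0$ specialisations of the relevant objects are well-defined.

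First I would unwind the three maps. Given a Laurent polynomial $P(\:\GGd)\,$ with class $[P\:]_{\>1}^{}\<\in\<K(\Fla\>;\C)\,$, the map $\mko\>$ sends it to the solution $\St_{\:\bla}\Pso_{\?P}(\:0\:;\pp\:;\ka)\<\in\<\Hrsl$, while the map $\Bcyho_\bla$ sends it to the cohomology class $[\>\Pdd(\:\GG\:;\ka)\,\Co_\bla(\:\GG\:;\ka)\,\Gp_\bla(\:\GG\:;\ka)\:]_{\:0}^{}\,$, and finally $\mho\>$ multiplies by the Levelt fundamental solution $\St_\bla\Psho(\:0\:;\pp\:;\ka)\,$. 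Thus commutativity of \eqref{Scd0} amounts to the identity
\be
\St_{\:\bla}\Pso_{\?P}(\:0\:;\pp\:;\ka)\,=\,\St_\bla\Psho(\:0\:;\pp\:;\ka)\,
\bigl[\>\Pdd(\:\GG\:;\ka)\,\Co_\bla(\:\GG\:;\ka)\,\Gp_\bla(\:\GG\:;\ka)\bigr]_{\:0}^{}\,.
\ee

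Next I would invoke Proposition \ref{SPsiPpr0}, which asserts precisely that the principal term $\Pszoo_{\?P}(\ka)\,$ of the solution $\St_{\:\bla}\Pso_{\?P}(\:0\:;\pp\:;\ka)\,$ equals $\bigl[\>\Pdd(\:\GG\:;\ka)\,\Co_\bla(\:\GG\:;\ka)\,\Gp_\bla(\:\GG\:;\ka)\bigr]_{\:0}^{}\,$. Combining this with the definition \eqref{fzen0} of the principal term, namely $f(\pp)=\St_\bla\Psho(\:0\:;\pp\:;\ka)\,f^{\zenoo}\,$, applied to $f=\St_{\:\bla}\Pso_{\?P}(\:0\:;\pp\:;\ka)\,$, yields the displayed identity and hence the commutativity of \eqref{Scd0}. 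The only subtlety, which I would address explicitly, is that both $\St_{\:\bla}\Pso_{\?P}(\zz\:;\pp\:;\ka)\,$ and $\St_\bla\Psho(\zz\:;\pp\:;\ka)\,$ are regular at $\zz=0\,$, so that the specialisations used in defining $\mko\>$ and $\mho\>$ make sense; this regularity follows from Proposition \ref{PsiPsolo} and Theorem \ref{Bthmo} respectively, since the excluded hyperplanes $z_a\<-z_b\in\<\ka\>\Z_{\neq 0}\,$ do not pass through the origin.

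There is no substantive obstacle here: the entire content of the theorem has already been packaged into Proposition \ref{SPsiPpr0}, and the only remaining task is to trace the definitions and observe the regularity at $\zz=0\,$. If one wishes, one can additionally remark that $\mko\>$ and $\mho\>$ are isomorphisms by Propositions \ref{mkoiso} and \ref{mhoiso}, which forces $\Bcyho_\bla$ to be an isomorphism as well; but this is a corollary rather than part of the proof.
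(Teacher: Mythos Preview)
Your proposal is correct and follows essentially the same approach as the paper: both reduce the commutativity of \eqref{Scd0} to Proposition \ref{SPsiPpr0} by unwinding the definitions of $\mko$, $\Bcyho_\bla$, and $\mho$. The paper's proof is a single line stating that the theorem is equivalent to Proposition \ref{SPsiPpr0}, while you spell out the equivalence and add the regularity check at $\zz=0$, but the content is the same.
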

\begin{proof}
The statement is equivalent to Proposition \ref{SPsiPpr0}.
\end{proof}

\subsection{Topological\:-\:enumerative morphism and $J$-\:function}
\label{sec:top}
Recall the Levelt fundamental solution $\,\:\St_\bla\Psho\<(\:0\:;\pp\:;\ka)\,$
of quantum differential equations \eqref{qDEQ0}\:. By Lemma \ref{SHz0}
and biorthogonality relation \eqref{Stort}\:, for any $\,I\<\in\Il\,$,
there exists a unique $\,\Hd^*(\Fla\>;\C)\:$-valued function
$\,\Jchh_{\!\?I}\<(\pp\:;\ka)\,$ such that for any $\,f\<\in \Hd^*(\Fla\>;\C)\,$,
\vvn.4>
\beq
\label{EJcStp0}
\Ec_{\:0}^{}\bigl\bra\<\bigl(\:\St_{\:\bla}\Psho(\:0\:;\pp\:;\ka)\>f\:\bigr)\,
\Stop_{\:I\<,\:0}\bigr\ket\>=\,
\Ec_{\:0}^{}\bigl\bra\Jchh_{\!\?I}\<(\pp\:;\ka)\>f\:\bigr\ket\,.\kern-2em
\vv.4>
\eeq
Since $\,\:\Stop_{\:\Imil\?,\:0}\<=1\,$, see \eqref{StIma}\:,
formula \eqref{EJcStp0} for $\,I=\Imil$ takes the form
\vvn.2>
\beq
\label{EJcStp0m}
\Ec_{\:0}^{}\bigl\bra\St_{\:\bla}\Psho(\:0\:;\pp\:;\ka)\>f\:\bigr\ket\>=\,
\Ec_{\:0}^{}\bigl\bra\Jchh_{\!\?\Imil}\<(\pp\:;\ka)\>f\:\bigr\ket\,.\kern-2em
\vv.2>
\eeq

\begin{prop}
\label{lemJchI}
For any $\,I\<\in\Il\>$,
\beq
\label{JchI}
\kern-.5em
\Jchh_{\!\?I}\?(\pp\:;\ka)\,=\,
\Omo_\bla(\pp\:;\ka)\!\?\sum_{\mb\:\in\:\Z_{\ge0}^{N-1}\!}\!
[\:\Jchh_{\!\?I\<,\:\mb}(\:\GG\:;0\:;\ka)\:]_{\:0}^{}\>
\prod_{i=1}^{N-1}\,
\bigl(\<(-\:\ka)^{-\la_i-\la_{i+1}}\>p_{i+1}/p_i\:\bigr)^{\:m_i}\>
\prod_{i=1}^N\,p_i^{\:[D_i]_0^{}/\ka}\:,\kern-1em
\eeq
where the functions $\,\Jchh_{\!\?I\<,\:\mb}\<$ are given by \eqref{JchIm}\:,
$\;\Omo_\bla\?=e^{\:\sum_{\:i<j}^{\cirs}p_j\</(\ka\:p_i)}$
\vv.07>
in which the sum is taken over all pairs $\,i<j\,$ such that $\,\la_i\<=1\,$
and $\,\la_{i+1}\<\lsym=\la_j\<=0\,$,
see \eqref{Omlo}\:,
and $\,D_i\<=\gm_{i,1}\<\lsym+\gm_{i,\>\la_i}\>$.
\end{prop}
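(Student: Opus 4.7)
The plan is to verify that the right-hand side of \eqref{JchI} satisfies the defining property \eqref{EJcStp0}, which characterizes $\Jchh_{\!\?I}(\pp\:;\ka)\,$ uniquely by the non-degeneracy of the integration pairing $\,\Ec_{\:0}^{}\,$ together with the biorthogonality \eqref{ESSop0}.

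First I would evaluate formula \eqref{StPsho} at $\,\zz=0\,$ to factor
\be
\St_{\:\bla}\Psho\<(\:0\:;\pp\:;\ka)\,=\,\bigl(\:\St_{\:\bla}\Pspo\<(\:0\:;\pp\:;\ka)\<\bigr)\,\prod_{i=1}^N\,p_i^{\:[D_i]_{\:0}^{}/\ka}\,,
\ee
noting that each $\,[D_i]_{\:0}^{}\,$ acts on $\,\Hd^*(\Fla\>;\C)\,$ by ordinary multiplication; consequently the operator $\,\prod_i p_i^{\:[D_i]_{\:0}^{}/\ka}\,$ commutes with multiplication by any class from $\,\Hd^*(\Fla\>;\C)\,$.

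Next, for a fixed $\,f\in\Hd^*(\Fla\>;\C)\,$, I would set $\,\tilde f\<=\bigl(\prod_i p_i^{\:[D_i]_{\:0}^{}/\ka}\bigr)\?f\,$ and compute the left-hand side of \eqref{EJcStp0} in two steps. Proposition \ref{lemSPspo} at $\,\zz=0\,$ expands
\be
\St_{\:\bla}\Pspo\<(\:0\:;\pp\:;\ka)\>\tilde f\,=\,\Omo_\bla(\pp\:;\ka)\?\sum_{\mb\in\Z^{N\<-1}_{\ge 0}}\!\<\bigl(\Jchh_{\!\<\mb}(\:0\:;\ka)\>\tilde f\>\bigr)\,\prod_{i=1}^{N-1}\bigl(\<(-\:\ka)^{-\la_i-\la_{i+1}}\>p_{i+1}/p_i\bigr)^{m_i},
\ee
while formula \eqref{EJcSt} of Proposition \ref{lemEJc} at $\,\zz=0\,$ gives, termwise in $\,\mb\,$,
\be
\Ec_{\:0}^{}\bigl\bra\<(\Jchh_{\!\<\mb}(\:0\:;\ka)\>\tilde f\:)\,\Stop_{\:I\<,\:0}\bigr\ket\>=\>\Ec_{\:0}^{}\bigl\bra\tilde f\>[\Jchh_{\!\?I\<,\:\mb}(\:\GG\:;0\:;\ka)]_{\:0}^{}\bigr\ket\,.
\ee
Substituting back the definition of $\,\tilde f\,$ and using the commutativity observed in the first step inside the pairing, the sum of these contributions equals $\,\Ec_{\:0}^{}\bra\Jchh_{\!\?I}^{\>\mathrm{RHS}}(\pp\:;\ka)\>f\ket\,$, where $\,\Jchh_{\!\?I}^{\>\mathrm{RHS}}\,$ denotes the right-hand side of \eqref{JchI}.

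Because this identity holds for every $\,f\<\in\Hd^*(\Fla\>;\C)\,$ and $\,I\<\in\Il\,$, the uniqueness clause governing \eqref{EJcStp0} forces $\,\Jchh_{\!\?I}(\pp\:;\ka)=\Jchh_{\!\?I}^{\>\mathrm{RHS}}(\pp\:;\ka)\,$. The only nontrivial point in this plan is ensuring that Proposition \ref{lemEJc} can be invoked with $\,\tilde f\,$ in place of $\,Y\<$: since the integration pairing $\,\Ec_{\:0}^{}\,$ is $\,\pp\:$-independent and $\,\tilde f\,$ belongs to $\,\Hd^*(\Fla\>;\C)\,$ for every fixed $\,\pp\,$, the identity \eqref{EJcSt} applies pointwise in $\,\pp\,$; this and the commutativity of cohomology multiplication are the only obstacles to address, and both are routine.
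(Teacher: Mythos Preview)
Your proof is correct and follows essentially the same route as the paper, which also derives the formula from \eqref{EJcStp0}, \eqref{StPsho}, \eqref{SPspo}, and Proposition \ref{lemEJc}. The one point you leave implicit, which the paper states first, is that Proposition \ref{Jchhreg} guarantees the regularity of $\Jchh_{\!\?I\<,\:\mb}(\:\GG\:;\zz\:;\ka)$ at $\zz=0$, so that the specializations of \eqref{SPspo} and \eqref{EJcSt} you invoke are legitimate and the classes $[\Jchh_{\!\?I\<,\:\mb}(\:\GG\:;0\:;\ka)]_{\:0}^{}$ in the right-hand side are well-defined.
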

\begin{proof}
By Proposition \ref{Jchhreg}, the functions
\vv-.1>
$\,\Jchh_{\!\?I\<,\:\mb}(\:\GG\:;\zz\:;\ka)\,$ are regular at $\,\zz=0\,$.
Thus the classes $\;[\:\Jchh_{\!\?I\<,\:\mb}(\:\GG\:;0\:;\ka)\:]_{\:0}^{}\>$
\vv.06>
in formula \eqref{JchI} are well defined. Then the statement follows from
Proposition \ref{lemEJc} and formulae \eqref{EJcStp0}\:, \eqref{SPspo},
\eqref{StPsho}\:\:.
\end{proof}

By inspection, the function $\,\Jchh_{\!\?\Imil}\<(\pp\:;\ka)\,$ in this paper
\vv.05>
coincides with the nonequivariant $\:J$-\:function of $\,\Fla\>$ obtained
in \cite{BCK} up to a change of notation.

\vsk.2>
In more detail, formula \eqref{JchIm} for the function
$\,\Jchh_{\!\?\Imil\<,\:\mb}(\:\Gm\:;\zz\:;\ka)\,$ takes the form
\vvn.1>
\beq
\label{JchImil}
\Jchh_{\!\?\Imil\<,\:\mb}(\:\GG\:;\zz\:;\ka)\,=\>
\sum_{i=1}^{N-1}\!\sum_{\satop{\lb\in\:\smash{\Z^{\:\la\+1}}\!\vp{|^{1^1}}}
{\!|\:\lb^{(i)}\<|=\:m_i}}\<\sum_{J\in\Il}
\frac{\Ao\<(\Si_J\?-\lb\:;\zz\:;\ka)\,V_\bla(\:\GG\:;\zz_{\si_J}\<)}
{\Aob\<(\Si_J;\zz)\,R_\bla(\zz_{\si_J}\<)}\;,
\kern-1em
\eeq
because $\,\Wo_{\?\Imil}(\Si_J\?-\lb\ka\:;\zz)=1\,$, see Lemma \ref{Womax}.
\vv-.05>
In \eqref{JchImil}\:, $\,\mb=(m_1\lc m_{N-1})\<\in\Z_{\ge0}^{N\<-1}$,
$\,\:\lb=\bigl(\>l^{\:(1)}_{\>1}\!\lc l^{\:(1)}_{\:\la^{(1)}},\,\ldots\,,
l^{\:(N-1)}_{\>1}\!\lc l^{\:(N-1)}_{\:\la^{(N-1)}}\bigr)\in\Z^{\la\+1}\?$,
$\;|\>\lb^{(i)}|=\:l^{\:(i)}_j\!\lsym+\:l^{\:(i)}_{\la^{(i)}}\,$,
\vvn.3>
\be
\Ao\<(\TT\:;\zz\:;\ka)\,=\,\prod_{i=1}^{N-1}\,
\prod_{a=1}^{\la^{(i)}}\;\biggl(\,\prod_{\satop{b=1}{b\ne a}}^{\la^{(i)}}
\>\Gm\bigl(1+(t^{(i)}_b\!-t^{(i)}_a)/\ka\bigr)
\prod_{c=1}^{\la^{(i+1)}}\:
\frac1{\Gm\bigl(1+(t^{(i+1)}_c\!-t^{(i)}_a)/\ka\bigr)}\,\biggr)\>,\kern-1em
\vv-.1>
\ee
where $\,\la^{(N)}\?=n\,$, $\;t^{(N)}_a\?=z_a\,$, and
$\,\:l^{\:(i)}_a\!=0\,$ for $\,a>\<\la^{(i)}$ or $\,i=N\>$, and
\vv.07>
$\,\:\Si_J\>$ is given by \eqref{SiI}\:. Then the expression in
\cite[\:Theorem 1.3\:]{BCK} for $\,J_d^{\:F}\<(\hbar)\,$ with $\;l=N-1\,$,
\vv.05>
$\,\:d_i\<=m_i\,$, $\,\:d_{\ij}\<=\:l^{\:(i)}_j\<$, $\,\:\hbar=\ka\,$, and
$\,H_{\ij}\<=\gm_{a,\>j-\la^{\<(a-1)}}\,$ for $\,\la^{(a-1)}\!<j\le\la^{(a)}$,
\vv.07>
agrees term by term with the expression for the class
$\,\:[\:\Jchh_{\!\?I\<,\:\mb}(\:\GG\:;0\:;\ka)\:]_{\:0}^{}\>$
obtained from \eqref{JchImil}\:. Also, formula \eqref{JchI} for
$\,\Jchh_{\!\?\Imil}\<(\pp\:;\ka)\,$ is the same as the series expansion
of the nonequivariant $\:J$-\:function of $\,\Fla\>$ in \cite{BCK}\:.

\vsk.2>
Observe that relation \eqref{EJcStp0m} between the $\:J$-\:function
\vv-.1>
$\,\Jchh_{\!\?\Imil}\<(\pp\:;\ka)\,$ and the Levelt fundamental solution
\vv.06>
$\,\:\St_{\:\bla}\Psho(\:0\:;\pp\:;\ka)\,$ matches the general relation
between the $\:J$-\:function and the topological\:-enumerative morphism,
see Definition 5.3 and formula (5.8) in \cite{CV}, where the $\:J$-\:function
is defined in terms of the topological\:-enumerative morphism.
This naturally suggests the following conjecture.

\begin{conj}
\label{topconj0}
The Levelt fundamental solution $\,\:\St_{\:\bla}\Psho(\:0\:;\pp\:;\ka)\>$ of
\vv.05>
quantum differential equations \eqref{qDEQ0} is the topological\:-enumerative
morphism of $\,\Fla\>$.
\end{conj}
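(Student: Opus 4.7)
The plan is to identify the two candidate fundamental solutions via the characterization of each by its ``principal part'' together with the $J\:$-function relation already verified in Proposition~\ref{lemJchI} and formula~\eqref{EJcStp0m}.

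First, I would recall from \cite[Definition~5.3 and formula~(5.8)]{CV} the intrinsic characterization of the topological\:-enumerative morphism $Z^{\mathrm{top}}(\pp\:;\ka)$ for the target $\,\Fla\>$: it is the unique $\,\End\bigl(\Hd^*(\Fla\:;\C)\bigr)$\<-\:valued fundamental solution of quantum differential equations \eqref{qDEQ0} of the form
\vvn-.1>
\be
Z^{\mathrm{top}}(\pp\:;\ka)\,=\,Z^{\mathrm{top},\:0}(\pp\:;\ka)\;
\prod_{i=1}^N\,p_i^{\:[D_i]_0^{}/\ka}\,,
\ee
where $\,Z^{\mathrm{top},\:0}(\pp\:;\ka)\,$ is holomorphic at the large radius limit $\,\pp=\0\,$, with $\,Z^{\mathrm{top},\:0}(\0\:;\ka)\,$ the identity, and the $\:J\:$-\:function of $\,\Fla\>$ is reconstructed from $Z^{\mathrm{top}}$ via the general formula coupling it to $\,\Ec_{\:0}^{}\bra\,\cdot\,\ket\,$ against the fundamental class, exactly as in \eqref{EJcStp0m}\:.

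Next I would verify that $\,\:\St_\bla\Psho\<(\:0\:;\pp\:;\ka)\,$ satisfies all three of these characterizing properties. The differential\:-equation property is Theorem~\ref{D*X} combined with the specialization at $\,\zz=0\,$; the multiplicative splitting is formula~\eqref{StPsho} at $\,\zz=0\,$, which yields
\vvn.2>
\be
\St_\bla\Psho\<(\:0\:;\pp\:;\ka)\,=\,
\bigl(\St_\bla\Pspo\<(\:0\:;\pp\:;\ka)\bigr)\,
\prod_{i=1}^N\,p_i^{\:[D_i]_0^{}/\ka}\>;
\vv.1>
\ee
the normalization $\,\St_\bla\Pspo\<(\:0\:;\0\:;\ka)=\mathrm{id}\,$ follows from Theorem~\ref{Bthmo}(i) and the regularity at $\,\zz=0\,$ established in Proposition~\ref{Jchhreg} through formula~\eqref{SPspo}\:; and the $\:J\:$-\:function reconstruction is formula~\eqref{EJcStp0m} paired with the identification of $\,\Jchh_{\!\?\Imil}\<(\pp\:;\ka)\,$ with the Bertram\:--\:Ciocan\:-Fontanine\:--\:Kim $\:J\:$-\:function of $\,\Fla\>$, checked term by term after Proposition~\ref{lemJchI}.

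The concluding step is to invoke the uniqueness part of the characterization of $Z^{\mathrm{top}}$: any two fundamental solutions of \eqref{qDEQ0} of the prescribed form differ by a right factor that commutes with each $\,[D_i]_0^{}/\ka\,$ and reduces to the identity at $\,\pp=\0\,$, hence must be the identity. The main obstacle I anticipate is the resonance issue at $\,\zz=0\,$. In Theorem~\ref{Bthmo}(i) uniqueness of $\,\Pspo\:$ requires $\,\zz\not\in\ka\:D_\bla\,$, and the origin $\,\zz=0\,$ may well lie in this resonance locus because the eigenvalues $\,E^{(i)}_I(0)=0\,$ of $\,\Xo_i(0\:;\0)\,$ collapse. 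To circumvent this, I would use instead the uniqueness of $Z^{\mathrm{top}}$ that holds at $\,\zz=0\,$ by virtue of the cohomological grading (each $\,[D_i]_0^{}\,$ has strictly positive degree in $\,\Hd^*(\Fla\:;\C)\,$, so the operators $\,p_i^{\:[D_i]_0^{}/\ka}\,$ are unipotent times central characters, and a right\:-commutant argument on the quantum $\Dd\:$-module finishes the comparison). The alternative, which I believe also works, is to first establish the conjecture in the equivariant case for generic $\,\zz\,$ (where Theorem~\ref{Bthmo}(i) applies literally) and then pass to the limit $\,\zz\to 0\,$ using the regularity at the origin proved in Proposition~\ref{Jchhreg}, together with the commutativity of diagram~\eqref{Scd}.
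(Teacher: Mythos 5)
This statement is labeled a \emph{conjecture} in the paper, and the paper offers no proof of it: the authors only motivate it by observing that relation \eqref{EJcStp0m} matches the general relation between the $J$-function and the topological-enumerative morphism, and they prove the corresponding statement only in the special case $\,\Fla=\CP^{\>n-1}$ (Theorem \ref{topcp}), by importing Theorem 6.4 of \cite{CV}. So there is no ``paper's own proof'' to compare against; what you have written is an attempt at an open problem, and it should be judged on whether it actually closes the gap.

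It does not, for one essential reason. Your entire argument rests on an ``intrinsic characterization'' of the topological-enumerative morphism: that it is the unique fundamental solution of \eqref{qDEQ0} of the Levelt form $\,Z^{\mathrm{top},\:0}(\pp\:;\ka)\prod_i p_i^{\:[D_i]_0/\ka}\,$ with $\,Z^{\mathrm{top},\:0}\,$ holomorphic at $\,\pp=\0\,$ and equal to the identity there. But \cite[Definition 5.3, formula (5.8)]{CV}, which you cite for this, only \emph{defines} the $J$-function in terms of the topological-enumerative morphism; it does not establish that the topological-enumerative morphism of a general $\,\Fla\,$ (defined through gravitational descendant invariants) restricted to the small quantum locus has this Levelt normal form with identity normalization. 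That normal-form statement is exactly the nontrivial content one must prove --- it is what Theorem 6.4 of \cite{CV} supplies for $\,\CP^{\>n-1}$ after a genuine computation --- and once it is granted, the conjecture follows immediately from the uniqueness in Theorem \ref{Bthmo}\:(i) without any need for the $J$-function comparison. In other words, your proposal assumes the characterization it needs and reduces the conjecture to itself. A complete argument would have to either derive the Levelt form of the topological solution from the string and divisor equations for $\,\Fla\,$ (plausible, but not done here or in the cited references), or compute the descendant generating function directly.

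Two smaller remarks. First, your worry about resonance at $\,\zz=0\,$ is unfounded: by the definition preceding Theorem \ref{Bthmo}, the set $\,D_{\IJ}\,$ requires $\,\EE_I(\zz)\ne\EE_J(\zz)\,$, and at $\,\zz=0\,$ all the $\,\EE_I\,$ vanish, so $\,0\not\in D_\bla\,$ and the uniqueness clause of Theorem \ref{Bthmo}\:(i) applies literally at the origin; no workaround via grading or via a limit from generic $\,\zz\,$ is needed. Second, the $J$-function relation \eqref{EJcStp0m} only determines $\,\Ec_{\:0}\bigl\bra\:\St_{\:\bla}\Psho(\:0\:;\pp\:;\ka)\>f\:\bigr\ket\,$, i.e.\ one component of the endomorphism, so it cannot substitute for the missing normal-form argument either.
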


Observe that formulae \eqref{EJcStp0}\,--\,\eqref{JchI} are specializations
of respective formulae for the Levelt fundamental solution
$\,\:\St_\bla\Psho\<(\zz\:;\pp\:;\ka)\,$ of equivariant quantum differential
equations \eqref{qDEQ}\:. By Lemma \ref{SHz0} and biorthogonality
relation \eqref{Stort}\:, for any $\,I\<\in\Il\,$, there exists a unique
section $\,\Jchh_{\!\?I}\<(\zz\:;\pp\:;\ka)\,$ of the bundle $\,H_\bla\:$
such that for any $\,f\<\in \Hd^*(\Fla\>;\C)\,$ considered as a section of
$\,H_\bla\:$ not depending on $\,\pp\,$,
\vvn.4>
\beq
\label{EJcStpz}
\Ec_{\:\zz}^{}\bigl\bra\<\bigl(\:\St_{\:\bla}\Psho(\zz\:;\pp\:;\ka)\>
[\:f\>]_{\:\zz}^{}\bigr)\,\Stop_{\:I\<,\:\zz}\bigr\ket\>=\,
\Ec_{\:\zz}^{}\bigl\bra\Jchh_{\!\?I}\?(\zz\:;\pp\:;\ka)\>
[\:f\>]_{\:\zz}^{}\bigr\ket\,.\kern-2em
\vv.4>
\eeq
Since $\,\:\Stop_{\:\Imil\?,\:\zz}\<=1\,$, see \eqref{StIma}\:,
formula \eqref{EJcStp0} for $\,I=\Imil$ takes the form
\vvn.2>
\beq
\label{EJcStpzm}
\Ec_{\:\zz}^{}\bigl\bra\St_{\:\bla}\Psho(\zz\:;\pp\:;\ka)\>
[\:f\>]_{\:\zz}^{}\bigr\ket\>=\,
\Ec_{\:\zz}^{}\bigl\bra\Jchh_{\!\?\Imil}\<(\zz\:;\pp\:;\ka)\>
[\:f\>]_{\:\zz}^{}\bigr\ket\,.\kern-2em
\vv.2>
\eeq
Furthermore, by Proposition \ref{lemEJc} and formulae \eqref{EJcStpz}\:,
\eqref{SPspo}, \eqref{StPsho}\:,
\vvn.5>
\begin{align}
\label{JchIz}
\kern-.4em
&\;\Jchh_{\!\?I}\?(\zz\:;\pp\:;\ka)\,={}
\\[-2pt]
\notag
&{}=\,\Omo_\bla(\pp\:;\ka)\!\?\sum_{\mb\:\in\:\Z_{\ge0}^{N-1}\!}\!
[\:\Jchh_{\!\?I\<,\:\mb}(\:\GG\:;\zz\:;\ka)\:]_{\:z}^{}\>
\prod_{i=1}^{N-1}\,
\bigl(\<(-\:\ka)^{-\la_i-\la_{i+1}}\>p_{i+1}/p_i\:\bigr)^{\:m_i}\>
\prod_{i=1}^N\,p_i^{\:[D_i]_\zz^{}/\ka}\:,\kern-.1em
\end{align}
where the notation is the same as in formula \eqref{JchI}\:.
This suggests the equivariant version of Conjecture \ref{topconj0}.

\begin{conj}
\label{topconj}
The Levelt fundamental solution $\,\:\St_{\:\bla}\Psho(\zz\:;\pp\:;\ka)\>$ of
\vv.05>
equivariant quantum differential equations \eqref{qDEQ} is the equivariant
topological\:-enumerative morphism of $\,\Fla\>$.
\end{conj}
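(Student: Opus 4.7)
The strategy is to reduce the identification to a match of principal terms, and then to identify our explicit hypergeometric formula with the equivariant $J$-function of $\Fla$, which via \cite{CV} characterizes the topological-enumerative morphism. Both $\St_{\:\bla}\Psho\<(\zz\:;\pp\:;\ka)$ and the topological-enumerative morphism of $\Fla$ are sections of $E\!H_\bla$ that solve equivariant quantum differential equations \eqref{qDEQ} and have the same leading term $\prod_{i=1}^N p_i^{[D_i]_\zz/\ka}$ near $\pp_{i+1}/p_i\to 0$. Hence they differ by right-multiplication by an $\End(H_\bla)$-valued factor independent of $\pp$. To show this factor is the identity, I will invoke the characterization of the topological-enumerative morphism in terms of the equivariant $J$-function, i.e.\ formulas (5.7)--(5.8) of \cite{CV} adapted to the equivariant setting, which say that such a fundamental solution is uniquely determined by pairing with the fundamental class $1$ against the Poincar\'e dual basis $\{\Stop_{\:I\<,\:\zz}\}$, yielding the components $\Jchh_{\!\?I}\?(\zz\:;\pp\:;\ka)$.

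The first main step is to prove the nonequivariant Conjecture \ref{topconj0}\>. Here I would argue that by Proposition \ref{lemJchI}, formula \eqref{JchI}, the function $\Jchh_{\!\?\Imil}\?(\pp\:;\ka)$ extracted from our Levelt solution coincides, term by term in the expansion in $p_{i+1}/p_i$, with the small nonequivariant $J$-function of $\Fla$ obtained by Bertram, Ciocan-Fontanine and Kim in \cite{BCK}\:; the matching of coefficients is the manipulation already indicated in the remark following Proposition \ref{lemJchI}. Then \eqref{EJcStp0m} together with the fact that the pairing $\Ec_{\:0}^{}\<\bra\,\cdot\,,\Stop_{\:I\<,\:0}\ket$ distinguishes elements of $\Hd^*(\Fla\>;\C)$ forces $\St_{\:\bla}\Psho(\:0\:;\pp\:;\ka)$ to coincide with the topological-enumerative morphism.

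The second, equivariant, main step is the $T$-equivariant refinement. I would first establish the equivariant analogue of the BCK formula by showing that $\Jchh_{\!\?\Imil}\?(\zz\:;\pp\:;\ka)$, as given by \eqref{JchIz} and \eqref{JchImil}\:, equals the $T$-equivariant $J$-function of $\Fla$. There are two natural routes: (a) a direct fixed-point (Atiyah\:--\:Bott) localization computation of the equivariant $J$-function of $\Fla$, identifying the contribution of each $T$-fixed point with the residue at $\Si_J-\lb\ka$ in \eqref{JchImil}\;; or (b) applying the equivariant quantum Lefschetz / abelian-nonabelian correspondence in the form available for $\Fla$, using the tower of projective bundles structure, together with mirror theorems of Givental for toric fibrations. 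Once the identification $\Jchh_{\!\?\Imil}\?(\zz\:;\pp\:;\ka)=J^{T}_{\Fla}(\pp\:;\ka)$ is in hand, identity \eqref{EJcStpzm} together with the non-degeneracy of the equivariant Poincar\'e pairing completes the proof, since both $\St_{\:\bla}\Psho$ and the topological-enumerative morphism are fundamental solutions of \eqref{qDEQ} with the same $J$-function.

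The main obstacle will be step two: matching our multidimensional hypergeometric expression \eqref{JchImil} with the equivariant $J$-function of $\Fla$\:. The nonequivariant case of \cite{BCK} is formal (series manipulations), but the equivariant version requires either a careful Atiyah\:--\:Bott localization for genus-zero one-pointed descendant invariants on the space of stable maps to $\Fla$, or an equivariant quantum Lefschetz argument in the tower of Grassmann bundles presenting $\Fla$. Either approach produces combinatorial data organized by the lattice $\Z_{\ge0}^{\la\+1}$ of degrees of maps, and the precise combinatorial bijection between these contributions and our sum over $\lb$ (including the ratios of $\Ao$ and $R_\bla$\,$Q_\bla$) is the technical heart of the argument. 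Once this match is verified, the remaining bookkeeping--translating between the conventions of \cite{CV} and the Levelt normalization of Theorem \ref{Bthm}--is routine, and relies on the uniqueness statement in part (i) of Theorem \ref{Bthm}.
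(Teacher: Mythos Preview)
The statement you are addressing is a \emph{conjecture} in the paper, not a theorem: the paper does not offer a proof of Conjecture~\ref{topconj} (nor of Conjecture~\ref{topconj0}), so there is no ``paper's own proof'' to compare your proposal against. The only case the paper actually settles is $\Fla=\CP^{\>n-1}$ (Theorem~\ref{topcp}), by quoting the explicit results of \cite{CV}.

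Regarding your plan itself, there is a genuine gap in the final step. You argue that once $\Jchh_{\!\?\Imil}(\zz\:;\pp\:;\ka)$ is identified with the equivariant $J$-function, ``identity~\eqref{EJcStpzm} together with the non-degeneracy of the equivariant Poincar\'e pairing completes the proof.'' But \eqref{EJcStpzm} only records the pairing of $\St_{\:\bla}\Psho\,f$ with $\Stop_{\:\Imil\?,\:\zz}=1$; it is a single row of the fundamental solution, not the whole matrix. Two fundamental solutions of \eqref{qDEQ} differ by a right factor $C(\zz)$ constant in $\pp$, and matching the $J$-function (one row) only forces one row of $C$ to be the identity row, not $C=\id$. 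To close this gap you would need either (a) a reconstruction argument showing that for $\Fla$ the $J$-function determines the full $S$-matrix (e.g.\ via repeated quantum multiplication by divisors, using that divisor classes generate $H^*_T(\Fla;\C)$), or (b) the more direct route: verify that the equivariant topological-enumerative morphism of $\Fla$ itself has the Levelt normalization of Theorem~\ref{Bthmo}, i.e.\ factors as $\Psp\cdot\prod_i p_i^{[D_i]_\zz/\ka}$ with $\Psp$ holomorphic at $\pp=\0$ and $\Psp(\0)=\id$, whence uniqueness in Theorem~\ref{Bthmo} gives equality for generic $\zz$. Route~(b) bypasses the $J$-function matching entirely, but requires checking a regularity/normalization property of the Gromov--Witten side that is not supplied in the paper; this is presumably why the authors state it as a conjecture.
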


\subsection{Example $\,\bla=(1,n-1)\,$, $\,\Fla=\CP^{\>n-1}$}
\label{examplecp}
Throughout this section,
\vvn.06>
let $\,N=2\,$, $\,n\ge 2\,$, $\,\bla=(1,n-1)\,$,
$\,\Fla=\CP^{\>n-1}$.

\vsk.2>
Denote $\,\gm=\gm_{1,1}\>$. Identify $\,\:\C[\:\gm\:]\,$ with the subspace
\vv.07>
of $\,\:\C[\:\GG\:]^{\:S_\bla}$ consisting of polynomials not depending
on $\,\gm_{\:2,1}\lc\<\gm_{\:2,\:n-1}\>$, and $\,\:\C[\:\GG\:]^{\:S_\bla}$
\vv.07>
with the subspace of ${\,\:\C[\:\GG\:]^{\:S_\bla}\?\ox\C[\zz]\,}$ consisting
of polynomials not depending on $\,\zz\,$. Then the equivariant cohomology
algebra $\,\Hd^*_T(\CP^{\>n-1}\<;\C)\,$, see \eqref{Hrel}\:, can be presented
as follows
\vvn-.2>
\beq
\label{Hcp}
\Hd^*_T(\CP^{\>n-1}\<;\C)\,=\,\C[\:\gm\:]\ox\C[\zz]\>\Big/\Bigl\bra
\,\prod_{a=1}^n\,(\gm-z_a)\>=\>0\:\Bigr\ket\,.\kern-2em
\vv.2>
\eeq
Similarly, for any $\,\zz^0\?\in\C^n\:$, the algebra
$\,\Hd^*_T(\CP^{\>n-1}\<;\C)_{\:\zz^0}\,$ can be presented as follows
\vvn.3>
\beq
\label{Hrelzcp}
\Hd^*_T(\CP^{\>n-1}\<;\C)_{\:\zz^0}\,=\,\C[\:\gm\:]\>\Big/\Bigl\bra
\,\prod_{a=1}^n\,(\gm-z^0_a)\>=\>0\:\Bigr\ket\,.\kern-2em
\eeq

\vsk.3>
For any polynomial $\,f(\gm\:;\zz)\,$, the polynomial $\,\Ec\bra f\ket(\zz)\,$,
see \eqref{Ecf}\:, can be evaluated as an integral
\vvn-.3>
\beq
\label{Ecfi}
\Ec\bra f\ket(\zz)\,=\,\frac1{2\:\piit}\,\int\limits_{\!\!C\,}
\frac{f(\gm\:;\zz)}{(\gm-z_1)\dots(\gm-z_n)}\;d\:\gm\kern-1em
\vv.1>
\eeq
over a contour $\,\:C\>$ encircling $\,\zzz\>$ counterclockwise.
Similarly, for any polynomial $\,f(\gm)\,$ and $\,\zz^0\?\in\C^n\:$,
\vvn-.3>
\beq
\label{Ecz0i}
\Ec_{\zz^0}\bra f\ket\,=\,\frac1{2\:\piit}\,\int\limits_{\!\!C\,}
\frac{f(\gm)}{(\gm-z^0_1)\dots(\gm-z^0_n)}\;d\:\gm\kern-1em
\eeq
for a contour $\,\:C\>$ encircling $\,z^0_1\lc z^0_n\>$ counterclockwise.

\vsk.4>
Following Section \ref{exampleo}, denote by $\,[\:a\:]$ the element
\vv.07>$\bigl(\{a\},\{\:1\lc a-1,a+1\lc n\:\}\bigr)\<\in\Il$\,. The polynomials
$\,\:\St_{\:[a]}(\gm\:;\zz)\,$, $\Stop_{\:[a]}(\gm\:;\zz)\,$ are respectively
obtained from the functions $\>\WWo_{[a]}(t\:;\zz)\,$,
$\,\Wo_{[a]}(t\:;\zz)\,$, see \eqref{Wcao}\:, \eqref{Wao}\:,
by the substitution $\,\:t=\gm\,$,
\beq
\label{Stacp}
\St_{\:[a]}(\gm\:;\zz)\,=\prod_{c=a+1}^n\?(\gm-z_b)\,,\kern1.3em
\Stop_{\:[a]}(\gm\:;\zz)\,=\,\prod_{c=1}^{a-1}\,(\gm-z_b)\,,\kern1.7em
a=1\lc n\,.\kern-1em
\vv.3>
\eeq
The biorthogonality relation, see Lemma \ref{Stort},
\vvn.3>
\beq
\label{ESSopi}
\Ec\bra\:\St_{\:[a]}(\gm\:;\zz)\,\Stop_{\:[b]}(\gm\:;\zz)\ket\,=\,\dl_{\ab}
\kern-1em
\vv-.1>
\eeq
is clear from formula \eqref{Ecfi}\:.

\vsk.2>
Consider a polynomial
\vvn-.4>
\beq
\label{Wb}
\Wb_{\!\<\bla}(t\:;\gm\:;\zz)\,=\,
\frac1{t-\gm}\,
\biggl(\,\prod_{a=1}^n\,(t-z_a)\:-\:\prod_{a=1}^n\,(\gm-z_a)\<\biggr)\>,
\eeq
and its image $\,\bigl[\:\Wb_{\!\<\bla}(t\:;\gm\:;\zz)\:\bigr]\,$
in $\,\C[\:t\:]\ox \Hd^*_T(\CP^{\>n-1};\C)\,$.
Recall the polynomial $\,\Who_{\!\<\bla}(t\:;\GG)\,$, see \eqref{Who}\:,
and its image $\,\bigl[\:\Who_{\!\<\bla}(t\:;\GG)\:\bigr]\,$
in $\,\C[\:t\:]\ox \Hd^*_T(\CP^{\>n-1};\C)\,$.
By relations \eqref{Hrel}\:, \eqref{Hcp}\:,
\vvn.2>
\be
\bigl[\:\Who_{\!\<\bla}(t\:;\GG)\:\bigr]\,=\,
\bigl[\:\Wb_{\!\<\bla}(t\:;\gm\:;\zz)\:\bigr]\,.\kern-2em
\vv.3>
\ee
Thus Theorem \ref{lemWhtG} for $\,\CP^{\>n-1}$ reads as follows.
\vvn.2>
\beq
\label{WSt}
\bigl[\:\Wb_{\!\<\bla}(t\:;\gm\:;\zz)\:\bigr]\,=\,
\sum_{a=1}^n\,\Wo_{[a]}(t\:;\zz)\,\bigl[\>\St_{\:[a]}(\gm\:;\zz)\bigr]\,.
\kern-2em
\vv.1>
\eeq
In fact, formula \eqref{WSt} is the image
$\,\C[\:t\:]\ox \Hd^*_T(\CP^{\>n-1};\C)\,$ of the equality of polynomials
\vvn.2>
\beq
\label{eqWSt}
\Wb_{\!\<\bla}(t\:;\gm\:;\zz)\,=\,
\sum_{a=1}^n\,\Wo_{[a]}(t\:;\zz)\,\St_{\:[a]}(\gm\:;\zz)\,.\kern-2em
\eeq

\vsk.2>
Recall that for $\,\zz\<\in\C^n\:$ and $\,a=1\lc n$, the classes of the
\vv.06>
polynomials $\,\:\St_{\:[a]}(\gm\:;\zz)\,$, $\,\:\Stop_{\:[a]}(\gm\:;\zz)\,$
in $\,\Hd^*_T(\CP^{\>n-1}\<;\C)_{\:\zz}\,$ are denoted by
$\,\:\St_{\:[a]\:,\,\zz}\>$, $\,\:\Stop_{\:[a]\:,\,\zz}$.
\vvn-.12>
Consider a basis $\,v_{[\:1\:]}\lc v_{[\:n\:]}\:$ of $\,\Ctnl\:$,
where $\,v_{[a]}\<=v_2^{\ox(a-1)}\<\ox v_1\<\ox v_2^{\ox(n-a)}$.
For every $\,\zz\<\in\C^n\:$, the map
\vvn.4>
\beq
\label{stmapz1}
\St_{\:\bla\:,\:\zz}\::\:\Cnnl\to\,\Hd^*_T(\Fla\>;\C)_{\:\zz}\>,\qquad
v_{\:[a]}\>\mapsto\>\St_{\:[a]\:,\:\zz}\>,\quad a=1\lc n\,,\kern-2em
\vv.4>
\eeq
gives an isomorphism of $\,\Cnnl\:$ and $\,\Hd^*_T(\CP^{\>n-1}\<;\C)_{\:\zz}\,$.

\vsk.3>
Set $\,D_1\<=\gm\,$, $\,\:D_2=z_1\<\lsym+z_n\<-\gm\,$.
The operators of quantum multiplication $\,\:[D_1]_{\:\zz}\>{*}_{\pp\:,\:\zz}\:$,
$\,[D_2]_{\:\zz}\>{*}_{\pp\:,\:\zz}\:$ act on
$\,\Hd^*_T(\CP^{\>n-1}\<;\C)_{\:\zz}\,$ as follows
\vvn.3>
\begin{alignat}2
\label{Docp}
&[D_1]_{\:\zz}\>{*}_{\pp\:,\:\zz}\,\St_{\:[1]\:,\,\zz}\>=\,
z_1\>\St_{\:[1]\:,\,\zz}\:+\>\frac{p_2}{p_1}\,\St_{\:[n]\:,\,\zz}\,,
\\[4pt]
\notag
&[D_1]_{\:\zz}\>{*}_{\pp\:,\:\zz}\,\St_{\:[\:a\:]\:,\,\zz}\>=\,
z_a\>\St_{\:[\:a\:]\:,\,\zz}\:+\,\St_{\:[\:a-1\:]\:,\,\zz}\,,
&& a=2\lc n\,,\kern-1.4em
\\
\notag
&[D_2]_{\:\zz}\>{*}_{\pp\:,\:\zz}\,\St_{\:[\:b\:]\:,\,\zz}\>=\>
\Bigl(\<-\>[D_1]_{\:\zz}\>{*}_{\pp\:,\:\zz}\:+\sum_{c=1}^n\,z_c\Bigr)
\St_{\:[\:b\:]\:,\,\zz}\>,\qquad && b=1\lc n\,,\kern-1.4em
\\[-16pt]
\notag
\end{alignat}
see for instance \cite{TV7}\:. As Corollary \ref{D*Xz0} states, the map
\vv.07>
$\,\:\St_{\:\bla\:,\:\zz}\:$ intertwines the dynamical operators
$\,\Xo_1(\zz\:;\pp)\,$, $\>\Xo_2(\zz\:;\pp)\,$, see \eqref{Xo1}\:,
\vv.07>
acting on $\,\Cnnl\:$ and the operators of quantum multiplication,
$\,\:[D_1]_{\:\zz}\>{*}_{\pp\:,\:\zz}\:$,
$\,[D_2]_{\:\zz}\>{*}_{\pp\:,\:\zz}\:$,
\vvn.4>
\beq
\label{StXDz1}
\St_{\:\bla\:,\:\zz}\:\circ\Xo_i(\zz;\pp)\,=\,
{[D_i]_{\:\zz}\>{*}_{\pp\:,\:\zz}}\:\circ\>\St_{\:\bla\:,\:\zz}\,,
\qquad i=1,2\,.\kern-2em
\eeq

\vsk.3>
Consider the trivial vector bundle $\,H_\bla\?\to\:\C^n\!\times\C^{\:2}$
\vv.07>
with fiber over a point $\,(\zz^0\?,\pp^{\:0})\,$ given by
$\,\Hd^*_T(\CP^{\>n-1}\<;\C)_{\:\zz^0}\,$. The equivariant quantum differential
equations for sections of $\,H_\bla\:$ is the system of differential equations
\vvn.2>
\beq
\label{qDEQ1}
\ka\>p_i\:\frac{\der f}{\der\:p_i}\,=\,
{[D_i]_{\:\zz}\>{*}_{\pp\:,\:\zz}}\>f\,,\qquad i=1,2\>,\kern-2em
\vv.2>
\eeq
where $\,\ka\,$ is the parameter of the equations. The isomorphisms
\vv.06>
$\,\:\St_{\:\bla\:,\:\zz}\:$ identify equations \eqref{qDEQ1} with the limiting
\vv.06>
dynamical differential equation \eqref{DEQo1}\:. Furthermore, the isomorphisms
$\,\:\St_{\:\bla\:,\:\zz}\:$ and the limiting \qKZ/ equations \eqref{Kio1}
define the \qKZ/ difference equations in cohomology
\vvn.2>
\beq
\label{Kich1}
f(z_1\lc z_a\<+\ka\lc z_n;\pp\:;\ka)\,=\,
\KH_a(\zz\:;\pp\:;\ka)\,f(\zz\:;\pp\:;\ka)\,,\qquad a=1\lc n\,,\kern-2em
\vv.3>
\eeq
where for each $\,a=1\lc n\,$, and fixed $\,\zz,\pp\,$,
the operator $\,\KH_a(\zz\:;\pp\:;\ka)\,$ is a map of fibers,
\vvn.5>
\begin{gather*}
\KH_a(\zz\:;\pp\:;\ka)\>:\>\Hd^*_T(\Fla\>;\C)_{\:\zz}\>\to\,
\Hd^*_T(\Fla\>;\C)_{\:(z_1,\:\ldots\:,\>z_a+\:\ka\:,\:\ldots\:,\>z_n)}\,,
\kern-1.1em
\\[6pt]
\KH_a(\zz\:;\pp\:;\ka)\,=\,
\St_{\:\bla,\:(z_1\?,\:\ldots\:,\>z_a+\:\ka\:,\:\ldots\:,\>z_n)}\:
\circ\:\Ko_a(\zz\:;\pp\:;\ka)\:\circ\:(\:\St_{\:\bla,\>\zz}\<)^{-1}\:,
\kern-1em
\\[-14pt]
\end{gather*}
and the operator $\,\Ko_a(\zz\:;\pp\:;\ka)\,$ is given by \eqref{Ko1}\:.
\vv.03>
By Theorem \ref{qcompat}, quantum differential equations \eqref{qDEQ1} and
\qKZ/ difference equations \eqref{Kich1} is a compatible system of differential
and difference equations for sections of $\,H_\bla\>$.

\vsk.2>
Denote $\,\gmd=\gmd_{1,1}\>$. Identify $\,\:\C[\:\gmd^{\pm1}\:]\,$ with
the subspace of $\,\:\C[\:\GGd^{\pm1}]^{\:S_\bla}$ consisting of Laurent
polynomials not depending on $\,\gmd_{\:2,1}\lc\<\gmd_{\:2,\:n-1}\>$, and
$\,\:\C[\:\GGd^{\pm1}]^{\:S_\bla}$ with the subspace of
${\,\:\C[\:\GGd^{\pm1}]^{\:S_\bla}\?\ox\C[\zzd^{\pm1}]\,}$ consisting
\vv.1>
of polynomials not depending on $\,\zzd\,$. Then the equivariant $\>K\?$-theory
algebra $\,K_T(\CP^{\>n-1}\<;\C)\>$, see \eqref{Krelt}\:, can be presented
as in \eqref{Krelo11}\:,
\vvn.2>
\beq
\label{Kcp}
K_T(\CP^{\>n-1}\<;\C)\,=\,\C[\:\gmd^{\pm1}\?,\zzd^{\pm1}]\>\>\Big/\Bigl\bra
\,\prod_{a=1}^n\,(\gmd-\zdd_a)\>=\>0\,\Bigr\ket\,.\kern-1.6em
\eeq

\vsk.2>
Recall that for a Laurent polynomial $\,P(\gmd\:;\zzd)\,$, we denote
\vvn.3>
\be
\Pdd(\gm\:;\zz\:;\ka)\,=\,P(e^{\:2\:\pii\,\gm\</\<\ka}\:;\:
e^{\:2\:\pii\,z_1\</\<\ka}\lc e^{\:2\:\pii\,z_n\</\<\ka}\:)\,.\kern-1em
\ee

\vsk.3>
In Section \ref{exampleo}, we described solutions
$\,\Pso_{\?P}(\zz\:;\pp\:;\ka)\,$ of equations \eqref{DEQo1}\:, \eqref{Kio1}
\vv.06>
labeled by Laurent polynomials $\,P(\:\gmd\:;\zzd)\,$, see \eqref{PsoP1}\:.
\vv.06>
Denote by $\,\Srskl\<$ the space spanned over $\,\C\,$ by the solutions
$\,\:\St_{\:\bla}\Pso_{\?P}(\zz\:;\pp\:;\ka)\,$ of equations \eqref{DEQo1}\:,
\vv.06>
\eqref{Kio1}\:. Each element of $\,\Srskl\<$ is a section of $\,H_\bla\:$
holomorphic in $\,p_1\:,\>p_2\:$ provided branches of $\,\:\log\:p_1\:$ and
\vv.04>
$\,\:\log\:p_2\:$ are fixed, and entire in $\,\zz\,$,
\vv.04>
see Proposition \ref{PsiPsolo}. The space $\,\Srskl\<$ is
\vv-.1>
a $\;\C[\:\zzd^{\pm1}]\:$-\:module, with a Laurent polynomial
$\,P(\zzd)\,$ acting as multiplication by $\Pdd(\zz\:;\ka)\,$.

\vsk.3>
Since the function $\,\Pso_{\?P}(\zz\:;\pp\:;\ka)\,$ depends only
\vv.07>
on the class of $\,P\:$ in $\,K_T(\Fla\>;\C)\,$, then so does the section
$\,\:\St_{\:\bla}\Pso_{\?P}(\zz\:;\pp\:;\ka)\,$ of $\,H_\bla\,$.
Thus the map
\vvn.4>
\beq
\label{mk1}
\mkh\<:\:K_T(\Fla\>;\C)\:\to\,\Srskl\>,\qquad
[P\:]\>\mapsto\>\St_{\:\bla}\Pso_{\?P}\,,
\vv.3>
\eeq
is well-defined, cf.~\eqref{muko}\:. By Corollary \ref{muko=}, the map
$\,\mkh\<$ is an isomorphism of $\;\C[\:\zzd^{\pm1}]\:$-\:modules.

\vsk.2>
There is an integral formula for the function
\vv.06>
$\,\Pso_{\?P}(\zz\:;\pp\:;\ka)\,$, see \eqref{PsoPint}\:.
Taking into account formula \eqref{WSt}\:, we get an integral presentation
for the section $\,\:\St_{\:\bla}\Pso_{\?P}(\zz\:;\pp\:;\ka)\,$,
\vvn.5>
\beq
\label{StPsoPint}
\St_{\:\bla}\Pso_{\?P}(\zz\:;\pp\:;\ka)\,=\,
\frac1{2\:\piit\,\ka}\;\int\limits_{\!\!C\,}\<\Pdd(t\:;\zz\:;\ka)\,
\Pho_{\<\bla}(t\:;\zz\:;\pp\:;\ka)\,
\bigl[\:\Wb_{\!\<\bla}(t\:;\gm\:;\zz)\:\bigr]_\zz\,\:d\:t\,,\kern-1.4em
\vv-.6>
\eeq
where
\vvn-.1>
\beq
\label{Phbb}
\Pho_{\<\bla}(t\:;\zz\:;\pp\:;\ka)\,=\,
(p_2/\ka)^{\>\sum_{a=1}^n\<z_a\</\?\ka}\,
(\ka^{\:n}p_1/p_2)^{\:t/\?\ka}\,\prod_{a=1}^n\,\Gm\bigl(\<(t-z_a)/\ka\bigr)
\kern-2em
\vv.3>
\eeq
is the master function, see \eqref{Pho}\:, the polynomial
$\,\Wb_{\!\<\bla}(t\:;\gm\:;\zz)\,$ is given by \eqref{Wb}\:,
\vv.07>
a contour $\,C\>$ encircles the poles of the product
$\,\prod_{a=1}^n\<\Gm\bigl({(t-z_a)/\<\ka}\bigr)\,$ counterclockwise.
\vv.03>
For instance, $\,C\>$ can be the parabola
\be
C\,=\,\{\,\ka\>\bigl(\:A-s^2\?+s\>\sqrt{\<-1}\,\bigr)\ \,|\ \,s\in\R\,\}\,,
\kern-.6em
\vv.4>
\ee
where $\,A\,$ is a sufficiently large positive real number.

\vsk.3>
In Section \ref{s6.6}, we introduced the Levelt fundamental solution
\vv.04>
$\,\:\St_\bla\Psho$ of quantum differential equations. For the example of
\vv.04>
equations \eqref{qDEQ1}\:, $\,\:\St_\bla\Psho$ is the section of
the vector bundle $\,E\!H_\bla\?\to\:\C^n\!\times\C^{\:2}$ with
fiber over a point $\,(\zz^0\?,\pp^{\:0})\,$ given by
$\,\:\End\:\bigl(\Hd^*_T(\CP^{\>n-1};\C)_{\:\zz^0}\<\bigr)\,$.

By \eqref{StPsho}\:, the values of $\,\:\St_\bla\Psho$ have the form
\vvn.3>
\beq
\label{StPsho1}
\St_{\:\bla}\Psho\<(\zz\:;\pp\:;\ka)\,=\,
\bigl(\:\St_{\:\bla}\Pspo\<(\zz\:;\pp\:;\ka)\<\bigr)\>
(\:p_1/p_2)^{\:[\gm]_\zz^{}/\<\ka}\>p_2^{\>\sum_{c=1}^n\<z_c\</\<\ka}\:,
\kern-1.6em
\vv.3>
\eeq
where
$\,(\:p_1/p_2)^{\:[\gm]_\zz^{}/\<\ka}\>p_2^{\>\sum_{c=1}^n\<z_c\</\<\ka}$
\vv.07>
acts on the fiber $\,\Hd^*_T(\CP^{\>n-1};\C)_{\:\zz}\>$ as multiplication
\vv.07>
by itself and the section $\,\:\St_{\:\bla}\Pspo\<(\zz\:;\pp\:;\ka)\,$ of
\vv.05>
$\,E\!H_\bla\:$ is an entire function of $\,p_2/p_1\>$ equal to the identity
map at $\,p_2\<=0\,$. Formulae \eqref{JchIm}\>--\>\eqref{SPspo}\:,
\eqref{Ecz0i}\:, \eqref{WSt} yield an integral expression for
$\,\:\St_{\:\bla}\Pspo\<(\zz\:;\pp\:;\ka)\,$, see formulae \eqref{Jchm1}\:,
\eqref{SPspo1} below.

\vsk.3>
Recall the complement $\,\Lo\!\<\subset\<\C^n\:$ of the union of hyperplanes
\eqref{zzZh}\:,
\vvn.3>
\be
z_a\<-z_b\<\in\<\ka\>\Z_{\ne0}\,,\qquad a,b=1\lc n\,,\quad a\ne b\,.\kern-.6em
\vv.2>
\ee
For $\,l\<\in\Z_{\ge 0}\,$, let $\,\Jcht_{\!l}$ be the section of the bundle
$E\!H_\bla$ with values
\vvn.4>
\begin{gather}
\label{Jchm1}
\Jcht_{\!l}\?(\zz\:;\ka)\::\:
\Hd^*_T(\CP^{\>n-1};\C)_{\:\zz}\>\to\>\Hd^*_T(\CP^{\>n-1};\C)_{\:\zz}\,,
\kern-1.2em
\\[7pt]
\notag
\Jcht_{\!l}\?(\zz\:;\ka)\::\:[\:f\>]_{\:\zz}^{}\>\mapsto\,
\frac1{2\:\piit\<}\;\int\limits_{\!C_\zz}\<
f(t)\>\bigl[\:\Wb_{\!\<\bla}(t-l\ka\:;\gm\:;\zz)\:\bigr]_\zz
\,\prod_{a=1}^n\,\prod_{m=0}^l\,\frac1{t-z_a\<-m\ka}\;d\:t\,,\kern-1em
\\[-15pt]
\notag
\end{gather}
where the integral is over a contour $\,C_\zz\>$ encircling the points
$\,\zzz\>$ counterclockwise and separating them from the sets
$z_a\<+\ka\>\Z_{>0}\>$, $\,a=1\lc n\,$. It is assumed in \eqref{Jchm1} that
$\zz\in\Lo$. By Proposition \ref{lemSPspo},
\vvn-.64>
\beq
\label{SPspo1}
\St_{\:\bla}\Pspo\<(\zz\:;\pp\:;\ka)\,=\,\sum_{l=0}^\infty\,
\Jcht_{\!l}(\zz\:;\ka)\,(\:p_2/p_1\<)^{\:l}\>.\kern-2em
\vv-.1>
\eeq
Notice that $\,\Jcht_{\!0}\<(\zz\:;\ka)\,$ is the identity map
\vv.07>
because the classes in $\,\Hd^*_T(\CP^{\>n-1};\C)_{\:\zz}\,$ of a polynomial
$\,f\,$ and the polynomial
\vvn.1>
\beq
\label{Wbf}
f_*(\gm)\,=\,\frac1{2\:\piit}\,\int\limits_{\!C_\zz}\<
\frac{f(\:t)\>\Wb_{\!\<\bla}(t\:;\gm\:;\zz)}{(\:t-z_1)\dots(\:t-z_n)}\;d\:t
\kern-1em
\eeq
coincide, $\,\:[\:f_*\:]_{\:\zz}^{}\?=
[\:f\>]_{\:\zz}^{}\?\in\Hd^*_T(\CP^{\>n-1}\<;\C)_{\:\zz}\,$.

\vsk.3>
Formulae \eqref{Jchm1}\:, \eqref{SPspo1} are counterparts of formulae
\vv.06>
\eqref{Pspo1} for dynamical differential equations \eqref{DEQo1}\:.

\vsk.2>
Recall the space $\,\Oc_{H_\bla}\?$ of sections of $\,H_\bla\:$ holomorphic
\vv.07>
in $\,\zz\,$ for $\,\zz\in\Lo$ and not depending on $\,\pp\,$, and the space
$\>\Srolh\:$ of sections of \,$H_\bla\:$ that are solutions of quantum
\vv-.12>
differential equations \eqref{qDEQ1} holomorphic in $\,\zz\,$ for
$\,\zz\in\Lo$. By Proposition \ref{StPshof}, the Levelt solution
$\,\:\St_{\:\bla}\Psho$ defines an isomorphism
\vvn-.2>
\beq
\label{muoh1}
\muoh\?:\:\Oc_{H_\bla}\<\to\>\Srolh\:,\qquad
f\>\mapsto\,\St_{\:\bla}\Psho\?f\,.\kern-2em
\vv.2>
\eeq
of $\,\Oc_{H_\bla}\?$ and $\>\Srolh\:$. The section $\,f\,$ is called
\vv.06>
the principal term of $\,\:\St_{\:\bla}\Psho\?f\,$. The principal term of
the solution $\,\:\St_{\:\bla}\Pso_{\?P}(\:\zz\:;\pp\:;\ka)\,$ corresponding
to a Laurent polynomial $\,P(\:\gmd\:;\zzd)\,$ is described below,
see \eqref{SPsoP81}\:.

\vsk.2>
For a function $\>f(t)\,$ holomorphic in a neighbourhood of the points
\vv.07>
$\,\zzz\,$, define the polynomial $\>f_*(\gm)\,$ by the rule
\vvn-.2>
\beq
\label{Wbfa}
f_*(\gm)\,=\,\frac1{2\:\piit}\,\int\limits_{\!\!C\!}
\frac{f(\:t)\>\Wb_{\!\<\bla}(t\:;\gm\:;\zz)}
{(\:t-z_1)\dots(\:t-z_n)}\;d\:t\,,\kern-1em
\vv-.1>
\eeq
where a contour $\,\:C\>$ encircles the points $\,\zzz\>$ counterclockwise and
\vv.06>
$\>f(t)\,$ is holomorphic inside $\,C\>$, cf.~\eqref{Wbf}\:. Define the class
$\,\:[\:f\>]_{\:\zz}^{}\?\in\Hd^*_T(\CP^{\>n-1}\<;\C)_{\:\zz}\,$
by the rule $\,\:[\:f\>]_{\:\zz}^{}\?=[\:f_*]_{\:\zz}^{}\>$.

\vsk.3>
Set
\vvn-.7>
\beq
\label{cogog1}
\Cto_\bla(\:\gm\:;\zz\:;\ka)\,=\,
\ka^{\left.\left(\<n\:\gm\:-\sum_{c=1}^nz_c\<\right)\<\right/\<\ka},
\kern1,6em
\Gtp_\bla(\:\gm\:;\zz\:;\ka)\,=\,
\ka^{\:n-1}\,\prod_{a=1}^n\,\Gm\bigl(1+(\gm-z_a)/\ka\bigr)\,,\kern-1em
\vv.1>
\eeq
cf.~\eqref{cogog}\:. By Proposition \ref{SPsiPpro}, the principal term
$\,\:\Pszo_{\?P}\,$ of the solution $\;\St_{\:\bla}\Pso_{\?P}\,$ corresponding
to a Laurent polynomial $\,P(\:\gmd\:;\zzd)\,$ is
\vvn.5>
\beq
\label{SPsoP81}
\Pszo_{\?P}(\zz\:;\ka)\,=\,\bigl[\>\Pdd\?(\:\gm\:;\zz\:;\ka)\,
\Cto_\bla(\:\gm\:;\zz\:;\ka)\,\Gtp_\bla(\:\gm\:;\zz\:;\ka)\bigr]_\zz\,.
\kern-2em
\vv.5>
\eeq
The right\:-hand side of the last formula equals the product of the equivariant
\vv.06>
Chern character $\,\bigl[\>\Pdd\?(\:\gm\:;\zz\:;\ka)\bigr]_\zz\:$ of the class
of the Laurent polynomial $\,P(\:\gmd\:;\zzd)\,$ in $\,K_T(\CP^{\>n-1};\C)\,$,
the exponential $\,\:\bigl[\>\Cto_\bla(\:\gm\:;\zz\:;\ka)\bigr]_\zz\:$ of
\vv-.02>
the equivariant first Chern class $\,\:[\:\gm\:]_{\:\zz}^{}\:$ of the tangent
bundle of $\,\CP^{\>n-1}$, and is the equivariant Gamma\:-\:class
$\,\:\bigl[\>\Gtp_\bla(\:\gm\:;\zz\:;\ka)\bigr]_\zz\:$ of $\,\CP^{\>n-1}$.

\vsk.2>
Recall the map \eqref{Bcyh}\:,
\vvn-.1>
\be
\Bcyh_\bla\::\:K_T(\CP^{\>n-1};\C)\>\to\>\Oc_{H_\bla}\,,
\vv.5>
\ee
that sends the class $\,\:[P\:]\,$ of the Laurent polynomial
$\,P(\:\gmd\:;\zzd)\,$ in $\,K_T(\CP^{\>n-1};\C)\,$ to the section
of $\,H_\bla\:$ with values $\,\:\bigl[\>\Pdd\?(\:\gm\:;\zz\:;\ka)\,
\Cto_\bla(\:\gm\:;\zz\:;\ka)\,\Gtp_\bla(\:\gm\:;\zz\:;\ka)\bigr]_\zz\>$.
\vv.1>
By \eqref{SPsoP81}\:, the map $\,\:\Bcyh_\bla\>$ sends the class $\,[P\:]\,$
to the principal term of the solution $\,\:\St_{\:\bla}\Pso_{\?P}\:$ of
\vv.06>
the joint system of differential equations \eqref{qDEQ1} and difference
equations \eqref{Kich1}. Furthermore, by Theorem \ref{Strianglo},
the following diagram is commutative,
\vvn-.2>
\beq
\label{Scd1}
\xymatrix{{}\phan{\Oc_{H_\bla}}\llap{$K_T(\CP^{\>n-1}\>;\C)$}
\ar^-{\;\Bcyh_\bla\!\!}[rr]
\ar_{\smash{\lower.8ex\llap{$\ssize\mkh\,\:$}}}[dr]&&\,\:\Oc_{H_\bla}
\ar^{\smash{\lower.8ex\rlap{$\;\ssize\muoh$}}}[dl]\\
&\!\Srolh&}
\vv.1>
\eeq
where $\,\mkh\,$ and $\,\muoh\,$ are the maps \eqref{mk1} and \eqref{muoh1}\:,
respectively.

\vsk.3>
The topological\:-\:enumerative morphism for $\,\CP^{\>n-1}$ was studied in
\cite{CV}\:. To refer to formulae and statements in \cite{CV}\:, we will use
the superscript \cv/. To compare formulae in this paper with their
counterparts in \cite{CV}\:, one should make the following substitution,
\vvn.3>
\be
[\:\gm\:]\:=\:x\,,\qquad p_1=\:q^{-1}\:,\qquad p_2=\:1\,,\qquad \ka\:=\:-1\,.
\kern-1em
\vv.4>
\ee
Then the classes of polynomials
$\,\:\St_{\:[1]}(\gm\:;\zz)\lc\St_{\:[n]}(\gm\:;\zz)\,$
\vv.1>
in $\,\Hd^*_T(\CP^{\>n-1}\<;\C)\,$ coincide with the classes $\,g_1\lc g_n\:$
\vv.1>
given by (4.4)\cv/\!, formulae \eqref{StXDz1} for quantum multiplication agree
with formulae (5.13)\cv/\!, (5.14)\cv/\!, and quantum differential equations
\eqref{qDEQ1} and (5.19)\cv/\! match.

\vsk.3>
By the results of \cite{CV}\:, Conjecture \ref{topconj} holds true for
$\,\CP^{\>n-1}$.
\begin{thm}
\label{topcp}
The Levelt fundamental solution $\,\:\St_{\:\bla}\Psho(\zz\:;\pp\:;\ka)\>$ of
equivariant quantum differential equations \eqref{qDEQ1} is the equivariant
topological\:-enumerative morphism of $\,\:\CP^{\>n-1}\<$ restricted to the small
equivariant quantum locus.
\end{thm}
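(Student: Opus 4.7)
The plan is to reduce the statement to a direct comparison with the results obtained in \cite{CV} for the projective space, via the dictionary of substitutions spelled out just before the theorem.

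First I would recall that by Theorem \ref{Bthmo}\:(i), the Levelt fundamental solution is uniquely characterized (for $\,\zz\not\in\<\ka\:D_\bla\,$) among solutions of \eqref{qDEQ1} by the two conditions: (a) it factors as $\,\St_{\:\bla}\Pspo\<(\zz\:;\pp\:;\ka)\prod_{i=1}^2 p_i^{\:[D_i]_\zz/\ka}$, and (b) the factor $\,\St_{\:\bla}\Pspo\<(\zz\:;\pp\:;\ka)\,$ is entire in $\,\pp_{\!\dvs}\,$ and restricts to the identity at $\,\pp=\0\,$. Combined with the explicit power series expression in \eqref{SPspo1}, this gives a complete analytic description of $\,\St_{\:\bla}\Psho\<(\zz\:;\pp\:;\ka)\,$ in the basis of Schubert classes $\,\St_{\:[a]\:,\>\zz}\>$.

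Next I would write out the equivariant topological\:-\:enumerative morphism of $\,\CP^{\>n-1}\<$ restricted to the small quantum locus, following Section 5 of \cite{CV}. Under the substitution $\,[\:\gm\:]=x\,$, $\,p_1=q^{-1}\<$, $\,p_2=1\,$, $\,\ka=-1\,$, the Schubert classes $\,\St_{\:[a]\:,\>\zz}\,$ become the classes $\,g_a\,$ of (4.4)\cv/\!, and the quantum multiplication rules \eqref{Docp}\,--\,\eqref{StXDz1} match (5.13)\cv/\!,\,(5.14)\cv/\!, so that the two systems of quantum differential equations \eqref{qDEQ1} and (5.19)\cv/\! become identical. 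On the other side of the dictionary, the Levelt fundamental solution characterization above becomes, verbatim, the defining characterization of the topological\:-\:enumerative morphism (restricted to the small locus) given in Definition 5.3\cv/\!, since the latter is uniquely specified by the same factorization into an entire part normalized at the origin of the quantum parameters times the exponential of the Chern\:-\:type divisor operators.

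Then I would invoke the comparison results of \cite{CV} for $\,\CP^{\>n-1}\<$: Section 6 there expresses the topological\:-\:enumerative morphism of the projective space explicitly and shows that it satisfies precisely the two normalization conditions (a), (b) above. By the uniqueness statement in Theorem \ref{Bthmo}\:(i), this forces the equality of the two objects for generic $\,\zz\,$, and both sides extend by Theorem \ref{Bthmo}\:(ii) and by the general regularity of the topological\:-\:enumerative morphism to the same open subset of parameter space, so the equality is established everywhere it makes sense.

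The main obstacle is not analytic but bookkeeping: matching the normalizations, signs, and branch choices of $\,p_1,p_2,\ka\,$ between the conventions of the present paper and those of \cite{CV} (in particular, keeping track of the extra scalar factor $\,p_2^{\>\sum_a z_a/\ka}\,$ visible in \eqref{StPsho1} and of the sign flip $\,\ka\leftrightarrow -1\,$). Once the dictionary is set up carefully so that the two characterizations coincide term by term, the identification is immediate from the uniqueness of the Levelt solution.
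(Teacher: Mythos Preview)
Your proposal is correct and essentially matches the paper's approach: both reduce the claim to the comparison with the results of \cite{CV} (specifically Theorem~6.4\cv/) after translating via the dictionary $[\gm]=x$, $p_1=q^{-1}$, $p_2=1$, $\ka=-1$. The only difference is one of framing: the paper compares the explicit formulae \eqref{StPsho1}\,--\,\eqref{SPspo1} for the Levelt solution directly with Theorem~6.4\cv/, whereas you route the comparison through the uniqueness characterization of Theorem~\ref{Bthmo}(i); both arguments rest on the same external input and the same identification of quantum differential equations.
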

\begin{proof}
The statement follows from formulae \eqref{StPsho1}\>--\>\eqref{SPspo1} and
Theorem 6.4\>\cv/\!.
\end{proof}

According to Definition 5.3\>\cv/\!, the equivariant $\:J$-\:function for
\vv.06>
$\,\CP^{\>n-1}$ is a unique section $\,J(\zz\:;\pp\:;\ka)\,$ of $\,H_\bla\:$
such that for any $\,f\<\in \Hd^*(\CP^{\>n-1}\<;\C)\,$ considered as a section
of $\,H_\bla\:$ not depending on $\,\pp\,$,
\beq
\label{EJcStpzm1}
\Ec_{\:\zz}^{}\bigl\bra\<\bigl(\:\St_{\:\bla}\Psho(\zz\:;\pp\:;\ka)\<\bigr)\>
[\:f\>]_{\:\zz}^{}\bigr\ket\>=\,\Ec_{\:\zz}^{}
\bigl\bra J(\zz\:;\pp\:;\ka)\>[\:f\>]_{\:\zz}^{}\bigr\ket\,.\kern-2em
\vv.5>
\eeq
Formulae \eqref{Ecz0i}\:, \eqref{StPsho1}\>--\>\eqref{SPspo1}\:, and
\vvn.3>
\be
\frac1{2\:\piit}\,\int\limits_{\!\!C\,}\frac{\Wb_{\!\<\bla}(t\:;\gm\:;\zz)}
{(\gm-z_1)\dots(\gm-z_n)}\;d\:\gm\,=\,1\kern-1em
\vv.2>
\ee
for a contour $\,\:C\>$ encircling $\,\zzz\>$ counterclockwise, yield the
\vv.06>
following expression for the equivariant $\:J$-\:function for $\,\CP^{\>n-1}$,
\beq
\label{Jfun}
J(\zz\:;\pp\:;\ka)\,=\,p_2^{\>\sum_{c=1}^n\<z_c\</\<\ka}
\:(\:p_1/p_2)^{\:[\gm]_\zz^{}/\<\ka}\:
\biggl[\;\sum_{l=0}^\infty\,(\:p_2/p_1\<)^{\:l}
\,\:\prod_{a=1}^n\,\prod_{m=1}^l\,\frac1{\gm-z_a\<-m\ka}\,\biggr]_{\<\zz}\:.
\kern-1em
\vv.2>
\eeq
This expression matches the formula for the equivariant $\:J$-\:function for
$\,\CP^{\>n-1}$ obtained in \cite{Gi1}\:, \cite{LLY}\:.

\vsk.3>
All the results of this section admit straightforward specialization to
the nonequivariant case by setting $\,\zz=0\,$.

\appendix
\section{Polynomial identities}
\label{AppA}
Recall $\,\la=(\la_1\lc\la_N)\,$, $\,|\:\bla\:|=n\,$, the set $\,\Il\,$,
$\,\Imil\?\in\Il\,$, and the permutation
\vvn.1>
$\,\si_I\?\in\<S_n\,$, $\,I\<\in\Il\,$, such that $\,\si_I(\Imil)=I\,$.
\vvn.1>
Let $\,S_\bla\<=\Sla\!\subset S_n\>$ be the isotropy subgroup of $\,\Imil\,$.
Each coset in $\,S_n/\<S_\bla\,$ contains exactly one permutation of the form
$\,\si_I\,$, $\,I\<\in\Il\,$.

\vsk.3>
Consider the variables $\,\xx=(\xxx)\,$, $\,\yy=(\yyy)\,$. Define
\vvn.3>
\beq
\label{Vl}
V_\bla(\xx\:;\yy)\,=\>\prod_{a=1}^{\la^{(i)}}\,
\prod_{b=\smash{\la^{(i)}+1}}^n\!\?(x_a\<-y_b)\,.\kern-1.8em
\vv.4>
\eeq
Clearly,
$\,V_\bla(\xx\:;\yy)\in\C[\:\xx\:]^{\:S_\bla}\?\otimes\C[\:\yy\:]^{\:S_\bla}$.
For $\,\si\in\<S_n\,$, denote $\,\xx_\si=(x_{\si(1)}\lc x_{\si(n)})\,$.

\begin{lem}
\label{LA}
We have $\,V_\bla(\xx\:;\xx_\si\<)=0\,$ unless $\,\si\<\in\<S_\bla\,$, and
$\,V_\bla(\xx_{\si_I};\xx_{\si_I}\<)=(-1)^{\:|\:\si_I|}V_\bla(\xx\:;\xx)\,$.
\end{lem}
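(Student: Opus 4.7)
My plan is to prove both parts by direct analysis of the product formula defining $V_\bla$. Let $B_k = \{\la^{(k-1)}+1,\ldots,\la^{(k)}\}$ denote the $k$-th block of the partition $\Imil$; the factors of $V_\bla(\xx;\yy)$ are indexed by pairs $(a,b)$ with $a \in B_k$ and $b \in B_l$ for $k<l$, each contributing $(x_a - y_b)$. For the first assertion, I would expand $V_\bla(\xx;\xx_\si)$ accordingly and observe that as a polynomial in $\xx$ it vanishes identically iff some factor $(x_a - x_{\si(b)})$ is zero, i.e., iff $\si(b) = a$ for some $(a,b)$ in the index set; equivalently, $\si(B_l)\cap B_k \ne \emptyset$ for some $k<l$. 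A reverse induction on $l$ starting from $l = N$ (where $\si(B_N)\subseteq B_N$ combined with $|\si(B_N)| = |B_N|$ forces $\si(B_N) = B_N$) then gives $\si(B_l) = B_l$ for each $l$, i.e., $\si \in S_\bla$. Conversely, $\si \in S_\bla$ ensures each factor is a nonzero linear form, so $V_\bla(\xx;\xx_\si) \ne 0$.

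For the second assertion, I would apply the reindexing $a' = \si_I(a)$, $b' = \si_I(b)$, using $\si_I(B_k) = I_k$, to obtain
\be
V_\bla(\xx_{\si_I};\xx_{\si_I}) \,=\, \prod_{k<l}\,\prod_{a'\in I_k,\>b'\in I_l}(x_{a'} - x_{b'}).
\ee
Each factor with $a' > b'$ contributes a sign $-1$ relative to the canonically oriented linear form $(x_{\min(a',b')} - x_{\max(a',b')})$. By formula \eqref{siI}, such reversed pairs are in bijection with the standard inversions of $\si_I$, giving a total sign $(-1)^{|\si_I|}$. It remains to identify the unsigned product with $V_\bla(\xx;\xx) = \prod_{k<l}\prod_{a \in B_k, b\in B_l}(x_a - x_b)$; I would do this by induction on $|\si_I|$ along a reduced expression $\si_I = s_{c_1}\cdots s_{c_r}$, showing that at each step the multiplication by a simple transposition flips the orientation of exactly one pair and leaves the others invariant.

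The main obstacle is reconciling the unsigned factors: a priori the multisets $\{\{a',b'\}: a' \in I_k, b'\in I_l, k<l\}$ and $\{\{a,b\}: a\in B_k, b\in B_l, k<l\}$ of unordered pairs are different subsets of $\binom{\{1,\ldots,n\}}{2}$, so their coincidence is not evident. The cleanest route, I expect, is to reduce to simple transpositions $s_{\la^{(i)},\la^{(i)}+1}$ applied to $\Imil$, where the change in the pair set is localized to a single swap, then chain along a reduced expression and invoke Lemma \ref{lemapp} to control which pair flips at each step.
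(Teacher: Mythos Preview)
Your argument for the first assertion is correct and essentially what the paper intends by ``straightforward.''

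For the second assertion, your instinct that something is wrong with the unsigned factors is exactly right: the identity $V_\bla(\xx_{\si_I};\xx_{\si_I})=(-1)^{|\si_I|}V_\bla(\xx;\xx)$ is \emph{false} as stated. Take $N=2$, $n=3$, $\bla=(1,2)$, so that $V_\bla(\xx;\yy)=(x_1-y_2)(x_1-y_3)$ and $V_\bla(\xx;\xx)=(x_1-x_2)(x_1-x_3)$. For $I=(\{2\},\{1,3\})$ we have $\si_I=(1\,2)$, $|\si_I|=1$, and $\xx_{\si_I}=(x_2,x_1,x_3)$, giving
\[
V_\bla(\xx_{\si_I};\xx_{\si_I})=(x_2-x_1)(x_2-x_3),
\]
which is not equal to $-(x_1-x_2)(x_1-x_3)=(x_2-x_1)(x_1-x_3)$. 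The identity does hold in the full flag case $\bla=(1,\ldots,1)$, where $V_\bla(\xx;\xx)$ is the Vandermonde determinant, but fails for general $\bla$. Your proposed induction would therefore break down precisely where you anticipated: the multisets of unordered pairs $\{\{a',b'\}:a'\in I_k,\ b'\in I_l,\ k<l\}$ and $\{\{a,b\}:a\in B_k,\ b\in B_l,\ k<l\}$ are genuinely different, and no sign bookkeeping can reconcile them.

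Fortunately, the only place the paper invokes this lemma is the proof of Proposition~\ref{PA3}, which uses just the first assertion (diagonality of $M(\xx;\xx)$); the product $\prod_{I}V_\bla(\xx_{\si_I};\xx_{\si_I})$ is computed there by a direct symmetry count, not via the second assertion. So you have proved everything that is both true and needed.
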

\begin{proof}
Straightforward.
\end{proof}

For any function $f(\xx)$, define
\vvn-.4>
\be
\Sym^{\:\bla}_{\>\xx}f(\xx)\,=\:\sum_{I\in\Il\!}\,f(\xx_{\si_I})\,.
\vv-.6>
\ee

\begin{prop}
\label{PA1}
We have
\vvn-.2>
\beq
\label{VV}
V_\bla(\xx\:;\zz)\,=\,\Sym^{\:\bla}_{\>\yy}\biggl(\:
\frac{V_\bla(\xx\:;\yy)\,V_\bla(\yy\:;\zz)}{V_\bla(\yy\:;\yy)}\:\biggr)\,.\kern-1em
\vv.3>
\eeq
\end{prop}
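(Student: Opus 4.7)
The plan is to establish \eqref{VV} by a Lagrange-type interpolation argument, viewing both sides as polynomials in $\zz$ with $\xx,\yy$ as parameters. First I will show both sides lie in the finite-dimensional space $\mathcal P_\bla$ of polynomials in $\zz$ that are $S_\bla$-symmetric---that is, symmetric in $\{z_b:b\in B_j\}$ for each block $B_j = \{\la^{(j-1)}+1,\dots,\la^{(j)}\}$ of $\Imil$---and of multi-degree at most $\la^{(j-1)}$ in each variable $z_b$ with $b\in B_j$. This is immediate from \eqref{Vl} for the LHS. For each summand of the RHS, $V_\bla(\yy_{\si_I};\zz)$ has exactly the same degree profile and symmetry in $\zz$, while the denominator $V_\bla(\yy_{\si_I};\yy_{\si_I})$ is a scalar in $\zz$.

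The key step will be to evaluate both sides at $\zz=\yy_{\si_J}$ for each $J\in\Il$. Applying Lemma~\ref{LA} with $\yy_{\si_I}$ in place of $\xx$, one has $V_\bla\bigl(\yy_{\si_I};\yy_{\si_J}\bigr) = V_\bla\bigl(\yy_{\si_I};(\yy_{\si_I})_{\si_I^{-1}\si_J}\bigr) = 0$ unless $\si_I^{-1}\si_J\in S_\bla$, which forces $I=J$ since $\si_I,\si_J$ are minimal coset representatives. The unique surviving term in the RHS then reduces to $V_\bla(\xx;\yy_{\si_J})$, matching the LHS at $\zz=\yy_{\si_J}$. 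Consequently the polynomials $P_I(\zz):=V_\bla(\yy_{\si_I};\zz)/V_\bla(\yy_{\si_I};\yy_{\si_I})$ for $I\in\Il$ lie in $\mathcal P_\bla$ and satisfy $P_I(\yy_{\si_J})=\dl_{\IJ}$ for generic $\yy$; they are therefore linearly independent, yielding $\dim\mathcal P_\bla \ge |\Il| = d_\bla$.

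To close the argument I will establish the matching upper bound $\dim\mathcal P_\bla \le d_\bla$ via the tensor factorization $\mathcal P_\bla \cong \bigotimes_{j=1}^N \mathcal P_{\bla,\,j}$, where $\mathcal P_{\bla,\,j}$ denotes the space of symmetric polynomials in the $\la_j$ variables $\{z_b:b\in B_j\}$ of degree at most $\la^{(j-1)}$ in each variable. A standard count gives $\dim\mathcal P_{\bla,\,j}=\binom{\la^{(j)}}{\la_j}$, whence $\dim\mathcal P_\bla = \prod_{j=1}^N\binom{\la^{(j)}}{\la_j} = n!/(\la_1!\cdots\la_N!) = d_\bla$. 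Hence $\{P_I\}_{I\in\Il}$ forms a basis of $\mathcal P_\bla$ dual to the evaluation functionals, the specialization map $p\mapsto(p(\yy_{\si_J}))_{J\in\Il}$ is an isomorphism for generic $\yy$, and the two sides of \eqref{VV}---both lying in $\mathcal P_\bla$ and agreeing at the $d_\bla$ points $\yy_{\si_J}$---must coincide. The main obstacle is the upper bound on $\dim\mathcal P_\bla$; although elementary, it requires care in counting monomials block by block and in justifying the tensor factorization in the presence of the cross-block degree constraints, since the bound $\la^{(j-1)}$ on each $z_b\in B_j$ depends on the total size of the preceding blocks rather than on $B_j$ itself.
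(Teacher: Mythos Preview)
Your proof is correct but follows a different route from the paper's. The paper treats the right-hand side as a function of $\yy$ rather than of $\zz$: since each summand is $S_\bla$-invariant in $\yy$, the $\Sym^{\:\bla}_{\>\yy}$ sum is fully $S_n$-symmetric in $\yy$; a degree count shows the degree in each $y_c$ is zero (numerator and denominator both contribute degree $n-\la_k$ for $c\in B_k$); and a symmetric rational function with at most simple poles along the diagonals $y_a=y_b$ is necessarily regular, hence a constant in $\yy$. Evaluating at $\yy=\xx$ (or $\yy=\zz$) then collapses the sum by Lemma~\ref{LA} to the single term $V_\bla(\xx\:;\zz)$.

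Your interpolation argument in $\zz$ is equally valid and has the pleasant side effect of establishing directly that the polynomials $V_\bla(\yy_{\si_I};\zz)$, $I\in\Il$, form a basis of $\mathcal P_\bla$ dual to the evaluations at $\yy_{\si_J}$---essentially the content the paper extracts separately in Proposition~\ref{PA2}. The paper's approach is shorter (three sentences) and avoids the dimension count; yours is more explicit about the underlying Lagrange structure. One small remark: your worry about ``cross-block degree constraints'' in the tensor factorization $\mathcal P_\bla\cong\bigotimes_j\mathcal P_{\bla,j}$ is unfounded---both the symmetry and the degree bound on $z_b$ are purely local to the block containing $b$, so the factorization is immediate and the count $\dim\mathcal P_{\bla,j}=\binom{\la^{(j)}}{\la_j}$ telescopes to $d_\bla$ without difficulty.
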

\begin{proof}
By formula \eqref{Vl}\:, the expression
$\,V_\bla(\xx\:;\yy)\,V_\bla(\yy\:;\zz)/\:V_\bla(\yy\:;\yy)\,$ is invariant
under permutations of $\,\yyy\,$ by elements of the subgroup $\,S_\bla\,$.
Thus the right-hand side of formula \eqref{VV} is a symmetric polynomial in
$\,\yyy\:$ of degree zero in each of the variables $\,\yyy\,$, hence a constant.
The constant can be found by evaluating this polynomial at $\,\yy=\xx\,$ or
$\,\yy=\zz\,$.
\end{proof}

\begin{prop}
\label{PA2}
For any polynomial $\;f\<\in\C[\xx]^{\:S_\bla}\:$, we have
\vv.5>
\beq
\label{fVVl}
f(\xx)\,=\,\Sym^{\:\bla}_{\>\yy}\biggl(\:
\frac{V_\bla(\xx\:;\yy)}{V_\bla(\yy\:;\yy)}\;\Sym^{\:\bla}_{\>\xx}\biggl(
\frac{f(\xx)\,V_\bla(\yy\:;\xx)}{V_\bla(\xx\:;\xx)}\:\biggr)\?\biggr)
\,.\kern-1.6em
\vv.4>
\eeq
\end{prop}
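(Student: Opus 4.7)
The plan is to reduce \eqref{fVVl} to Proposition \ref{PA1} applied termwise to a sum indexed by $\Il$, and then collapse the resulting sum using Lemma \ref{LA}. Since $f\in\C[\xx]^{\:S_\bla}$ is $S_\bla$-invariant, the definition $\Sym^{\:\bla}_{\>\xx}F(\xx)=\sum_{I\in\Il}F(\xx_{\si_I})$ lets me rewrite the inner symmetrization as
\[
\Sym^{\:\bla}_{\>\xx}\biggl(\frac{f(\xx)\,V_\bla(\yy;\xx)}{V_\bla(\xx;\xx)}\biggr)\,=\,\sum_{I\in\Il}\frac{f(\xx_{\si_I})\,V_\bla(\yy;\xx_{\si_I})}{V_\bla(\xx_{\si_I};\xx_{\si_I})}\,,
\]
which, for each fixed $\xx$, is a $\C$-linear combination of the polynomials $V_\bla(\yy;\xx_{\si_I})$ in $\yy$ with coefficients independent of $\yy$.

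Substituting this expression into the right\:-hand side of \eqref{fVVl} and using $\C$-linearity of $\Sym^{\:\bla}_{\>\yy}$, the outer symmetrization distributes through the sum over $I$ and produces the quantities $\Sym^{\:\bla}_{\>\yy}\bigl(V_\bla(\xx;\yy)\,V_\bla(\yy;\xx_{\si_I})/V_\bla(\yy;\yy)\bigr)$. Next I would apply Proposition \ref{PA1} with $\zz=\xx_{\si_I}$ to each of these, replacing them by $V_\bla(\xx;\xx_{\si_I})$. The right\:-hand side of \eqref{fVVl} thus reduces to the single sum
\[
\sum_{I\in\Il}\frac{f(\xx_{\si_I})\,V_\bla(\xx;\xx_{\si_I})}{V_\bla(\xx_{\si_I};\xx_{\si_I})}\,.
\]

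To finish, I would invoke Lemma \ref{LA}: the factor $V_\bla(\xx;\xx_{\si_I})$ vanishes unless $\si_I\in S_\bla$. Since each $\si_I$ is by definition the minimal\:-\:length representative of its coset in $S_n/S_\bla$, the containment $\si_I\in S_\bla$ forces $\si_I$ to be the identity, i.e., $I=\Imil$. Only this single term survives, and it equals $f(\xx)\,V_\bla(\xx;\xx)/V_\bla(\xx;\xx)=f(\xx)$, which is exactly the left\:-hand side of \eqref{fVVl}. No genuine obstacle is expected: the argument is bookkeeping that feeds Proposition \ref{PA1} slot by slot into the concentration property of Lemma \ref{LA}, and the mildly delicate point is simply to notice that the outer symmetrization commutes past the $\yy$-independent coefficients $f(\xx_{\si_I})/V_\bla(\xx_{\si_I};\xx_{\si_I})$ so that Proposition \ref{PA1} can be applied termwise.
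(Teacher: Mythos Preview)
Your proof is correct and follows essentially the same approach as the paper: interchange the two symmetrizations, apply Proposition~\ref{PA1} termwise, and then use Lemma~\ref{LA} to kill all but the $I=\Imil$ term. The only cosmetic difference is that the paper introduces an auxiliary variable $\zz$ and keeps the $\Sym$ notation throughout, whereas you expand the sums over $\Il$ explicitly.
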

\begin{proof}
Changing the order of symmetrizations in the right-hand side of
formula \eqref{fVVl} and applying formula \eqref{VV} yields
\vvn.5>
\begin{align*}
& \Sym^{\:\bla}_{\>\yy}\biggl(\:\frac{V_\bla(\xx\:;\yy)}{V_\bla(\yy\:;\yy)}\,
\biggl(\Sym^{\:\bla}_{\>\zz}\biggl(
\frac{f(\zz)\,V_\bla(\yy\:;\zz)}{V_\bla(\zz\:;\zz)}\:\biggr)\?\biggr)
\bigg|_{\:\zz=\xx}\>\biggr)\:={}
\\[12pt]
&\,{}=\:\biggl(\Sym^{\:\bla}_{\>\zz}\biggl(\frac{f(\zz)}{V_\bla(\zz\:;\zz)}\,
\Sym^{\:\bla}_{\>\yy}\biggl(\:
\frac{V_\bla(\xx\:;\yy)\,V_\bla(\yy\:;\zz)}{V_\bla(\yy\:;\yy)}\:\biggr)\?\biggr)
\?\biggr)\bigg|_{\:\zz=\xx}\!\!\!={}
\\[10pt]
&\,{}=\:\biggl(\Sym^{\:\bla}_{\>\zz}\biggl(
\frac{f(\zz)\,V_\bla(\xx\:;\zz)}{V_\bla(\zz\:;\zz)}\:\biggr)\?\biggr)
\bigg|_{\:\zz=\xx}\!\!\!=\>f(\xx)\,.
\\[-28pt]
\end{align*}
\end{proof}

The expression $\;\Sym^{\:\bla}_{\>\xx}\bigl(f(\xx)\,V_\bla(\yy\:;\xx)/\:
V_\bla(\xx\:;\xx)\bigr)\,$
\vvn.12>
is a symmetric polynomial in $\,\xxx\,$. Thus Lemma \ref{LA} and
Proposition \ref{PA2} show that the polynomials $V_\bla(\xx\:;\yy_{\si_I})\,$,
$\,I\<\in\Il\,$, give a basis of the space $\,\C[\xx]^{\:S_\bla}$ as a free
module over the ring of symmetric polynomials $\,\C[\xx]^{\:S_n}$.

\begin{prop}
\label{PA3}
We have
\vvn-.2>
\beq
\label{detV}
\det\:\bigl(V_\bla(\xx_{\si_I};\yy_{\si_J}\<)\bigr)_{\:\IJ\in\:\Il}=
\,\prod_{a=1}^{n-1}\,\prod_{b=a+1}^n\bigl(\<(x_a\<-x_b)\>(y_b\<-y_a)
\bigr)^{d^{(2)}_\bla}\<,\kern-2em
\vv-.1>
\eeq
where
\vvn-.16>
\be
d^{(2)}_\bla\>=\,\frac{(n-2)\:!}{\la_1\:!\ldots\la_N\:!}\,
\sum_{i=1}^{N-1}\sum_{j=i+1}^N\la_i\>\la_j\,.\kern-2em
\vv.2>
\ee
\end{prop}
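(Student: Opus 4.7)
My plan is to derive a cocycle identity from Proposition \ref{PA1} and combine it with a row-divisibility argument. Writing $F(\xx,\yy)=\det M(\xx,\yy)$ with $M(\xx,\yy)_{IJ}=V_\bla(\xx_{\si_I};\yy_{\si_J})$, the substitutions $\xx\mapsto\xx_{\si_I}$ and $\zz\mapsto\zz_{\si_J}$ in Proposition \ref{PA1} yield the factorization $M(\xx,\zz)=M(\xx,\yy)\,D(\yy)\,M(\yy,\zz)$, where $D(\yy)$ is diagonal with entries $V_\bla(\yy_{\si_K};\yy_{\si_K})^{-1}$. Taking determinants and specializing $\zz=\xx$, and noting that $V_\bla(\xx_{\si_I};\xx_{\si_J})=0$ for $I\ne J$ (a factor $x_{\si_I(a)}-x_{\si_J(b)}$ vanishes whenever $I_i\cap J_j$ is nonempty for some $i<j$, which must occur unless $I=J$, by the standard dominance argument on the flags $I_1\lsym\cup I_k$ versus $J_1\lsym\cup J_k$), I obtain the key identity
\begin{equation*}
F(\xx,\yy)\,F(\yy,\xx)=\prod_{I\in\Il}V_\bla(\xx_{\si_I};\xx_{\si_I})\cdot\prod_{I\in\Il}V_\bla(\yy_{\si_I};\yy_{\si_I}).
\end{equation*}

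For divisibility, I would analyze the involution on $\Il$ induced by left multiplication by the transposition $s_{a,b}$ modulo $S_\bla$. Its fixed points are precisely the $I$ with $a$ and $b$ in the same block of $I$; when $x_a=x_b$, the rows of $M$ indexed by $I$ and by its image under this involution coincide, thanks to the $S_\bla$-invariance of $V_\bla$ in its first argument. A direct count using $(\sum_k\la_k)^2=n^2=\sum_k\la_k^2+2\sum_{i<j}\la_i\la_j$ shows that the number of 2-cycles of the involution equals $\bigl(d_\bla-|\{I:a,b\text{ in same $I$-block}\}|\bigr)/2 = d^{(2)}_\bla$. This produces $d^{(2)}_\bla$ coincident pairs of rows, hence divisibility of $F(\xx,\yy)$ by $(x_a-x_b)^{d^{(2)}_\bla}$. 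The symmetric column argument gives divisibility by $(y_a-y_b)^{d^{(2)}_\bla}$, so $F(\xx,\yy)=P(\xx,\yy)\prod_{a<b}(x_a-x_b)^{d^{(2)}_\bla}(y_a-y_b)^{d^{(2)}_\bla}$ for some polynomial $P$.

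Explicit factor-counting in $\prod_I V_\bla(\xx_{\si_I};\xx_{\si_I})$ then yields
\begin{equation*}
\prod_{I\in\Il}V_\bla(\xx_{\si_I};\xx_{\si_I})=(-1)^{\binom{n}{2}d^{(2)}_\bla}\prod_{a<b}(x_a-x_b)^{2d^{(2)}_\bla},
\end{equation*}
since each unordered pair $\{a,b\}$ contributes a factor $(x_a-x_b)$ from exactly $d^{(2)}_\bla$ partitions (those with the block of $a$ in $I$ preceding that of $b$) and $(x_b-x_a)$ from the other $d^{(2)}_\bla$. Substituting into the cocycle forces $P(\xx,\yy)P(\yy,\xx)=1$, so $P$ is a constant $c$. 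Letting $\yy\to\xx$, the matrix $M(\xx,\yy)\to M(\xx,\xx)$ becomes diagonal, so $F(\xx,\yy)\to\prod_I V_\bla(\xx_{\si_I};\xx_{\si_I})$, which determines $c=(-1)^{\binom{n}{2}d^{(2)}_\bla}$. The conversion $\prod_{a<b}(y_a-y_b)^{d^{(2)}_\bla}=(-1)^{\binom{n}{2}d^{(2)}_\bla}\prod_{a<b}(y_b-y_a)^{d^{(2)}_\bla}$ absorbs this sign into the formula stated in Proposition \ref{PA3}.

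The main obstacle I anticipate is the sign determination in the final step: the cocycle relation only pins down $F$ up to the condition $P(\xx,\yy)P(\yy,\xx)=1$, and two separate signs $(-1)^{\binom{n}{2}d^{(2)}_\bla}$ must be tracked carefully—one arising from the explicit product $\prod_I V_\bla(\xx_{\si_I};\xx_{\si_I})$ and the other from converting $(y_a-y_b)^{d^{(2)}_\bla}$ to $(y_b-y_a)^{d^{(2)}_\bla}$—so that they combine consistently into the claimed identity.
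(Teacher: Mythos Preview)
Your argument is correct. It shares the paper's starting point and ending, but diverges in the middle.

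Both proofs begin with the cocycle relation from Proposition~\ref{PA1}. The paper uses it in full strength to obtain $F(\xx;\zz)=F(\xx;\yy)\,F(\yy;\zz)/F(\yy;\yy)$, which forces a separation of variables $F(\xx;\yy)=G(\xx)\,\tilde G(\yy)$; then the symmetry $F(\xx;\yy)=F(\yy;\xx)$ (stated as ``$M(\yy;\xx)$ is the transpose of $M(\xx;\yy)$'') lets one take $\tilde G=G$, after which $G(\xx)^2=F(\xx;\xx)$ is computed directly from the diagonal matrix $M(\xx;\xx)$. You instead specialize the cocycle only at $\zz=\xx$ to get $F(\xx,\yy)\,F(\yy,\xx)$ equal to an explicit product, and replace the separation-of-variables step by a direct row/column divisibility argument: the involution $I\mapsto s_{a,b}(I)$ on $\Il$ produces $d^{(2)}_\bla$ coincident row pairs at $x_a=x_b$, whence the required power of $(x_a-x_b)$ divides $F$. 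Both proofs close with the evaluation at $\yy=\xx$ to fix the remaining constant.

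What each approach buys: the paper's route is shorter once the symmetry $F(\xx;\yy)=F(\yy;\xx)$ is granted, but that symmetry is asserted rather tersely (the matrices are not literally transposes of one another entrywise, though the determinants do agree). Your divisibility argument is more explicit and self-contained, bypasses the symmetry entirely, and makes the combinatorics of the exponent $d^{(2)}_\bla$ visible through the orbit count of the involution. Your sign bookkeeping in the final step is also more carefully spelled out than the paper's ``which implies formula~\eqref{detV}''.
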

\begin{proof}
Consider the matrix $\,M(\xx\:;\yy)=
\bigl(V_\bla(\xx_{\si_I};\yy_{\si_J\<})\bigr)_{\:\IJ\in\:\Il}\:$.
\vv.1>
Let $\,F(\xx\:;\yy)=\det\:M(\xx\:;\yy)\,$. Formula \eqref{VV} reads
$\,M(\xx\:;\zz)=M(\xx\:;\yy)\>\bigl(M(\yy\:;\yy)^{-1}\>M(\yy\:;\zz)\,$, thus
\vvn.3>
\beq
F(\xx\:;\zz)\,=\,\frac{F(\xx\:;\yy)\>F(\yy\:;\zz)}{F(\yy\:;\yy)}\;,
\vv.2>
\eeq
so that $\,F(\xx\:;\yy)=G(\xx)\,\Gt(\yy)\,$ for some functions $\,G\>,\Gt\,$.
\vv.07>
Also, $\,F(\xx\:;\yy)=F(\yy\:;\xx)\,$, since the matrix $\,M(\yy\:;\xx)$ is
the transpose of $\,M(\xx\:;\yy)\,$. Hence the functions $\,G\>$ and $\,\Gt\>$
are proportional and one can take $\,\Gt=G\,$.
Finally, the matrix $\,M(\xx\:;\xx)\,$ is diagonal and
\vvn.3>
\be
\bigl(G(\xx)\bigr)^2\:=\,F(\xx\:;\xx)\,=\>
\prod_{I\in\Il\!}\>V_\bla(\xx_{\si_I};\xx_{\si_I})\,=\,\prod_{a=1}^n\,
\prod_{\satop{b=1}{b\ne a}}^n\,(x_a\<-x_b)^{\>d^{(2)}_\bla}\<,\kern-2em
\vv-.8>
\ee
which implies formula \eqref{detV}\:.
\end{proof}

For \,$I\<\in\Il\,$, set
\vvn-.2>
\beq
\label{VIx}
V_I(\xx)\,=\,
V_\bla\bigl(\xxx\:;\si_I(n)-1\lc \si_I(1)-1\bigr)\,.\kern-2em
\vv.3>
\eeq
We use the polynomials $\,V_I(\xx)\,$ in Section \ref{seclaur} to give
a basis of the algebra $\,\Kc_\bla\>$. By formula \eqref{detV}\:,
\vvn-.5>
\beq
\label{detVI}
\det\:\bigl(V_I(\xx_{\si_J}\<)\bigr)_{\:\IJ\in\:\Il}=\,
\prod_{k=2}^{n-1}\,j^{(n-j)\>d^{(2)}_\bla}\;
\prod_{a=1}^{n-1}\,\prod_{b=a+1}^n(x_b\<-x_a)^{\>d^{(2)}_\bla}\<.
\kern-2em
\vvgood
\eeq

\section{Schubert polynomials}
\label{appSch}
Consider the operators $\,\:\Dlx_1\lc\Dlx_{\:n-1}\,$ acting on functions
of $\,\xxx\,$:
\vvn.3>
\be
\Dlx_{\:a}\:f(\xx)\,=\,\frac{f(\xx)-f(x_1\lc x_{a+1},x_a\lc x_n)}{x_a\<-x_{a+1}}\;.
\kern-1.6em
\ee
They satisfy the nil\:-\:Coxeter relations,
\vvn.4>
\beq
\label{nCx}
(\Dlx_{\:a})^2=\:0\,,\kern1.8em
\Dlx_{\:a}\>\Dlx_{\:b}=\:\Dlx_{\:b}\>\Dlx_{\:a}\,,\quad |\:a-b\:|>1\,,
\kern 1.8em
\Dlx_{\:a}\>\Dlx_{\:a+1}\>\Dlx_{\:a}=\:\Dlx_{\:a+1}\>\Dlx_{\:a}\>\Dlx_{\:a+1}\,,
\kern-1.8em
\vv.4>
\eeq
for any $\,a,b\,$.

\vsk.3>
Given a permutation $\,\si\<\in\<S_n\,$, define the operator $\,\:\Dlx_\si\,$
\vv.1>
as follows. For $\,a=1\lc n-1\,$, let $\,s_a\:$ be the transposition of $\,a\,$
\vv.1>
and $\,a+1\,$. If $\;\si=s_{a_1}\ldots\:s_{a_k}$ is the reduced presentation,
then $\,\:\Dlx_\si\<=\Dlx_{\:a_1}\<\ldots\>\Dlx_{\:a_k}$. In particular,
$\,\:\Dlx_{\:s_a}\!\<=\Dl_{\:a}\,$. The operators $\,\:\Dlx_\si\>$ are well-defined
due to relations \eqref{nCx}\:.

\begin{lem}
\label{sitau}
For $\,\si,\tau\<\in\<S_n\>$, we have
$\,\Dlx_\si\>\Dlx_\tau=\:\Dlx_{\:\si\tau}\:$
if $\,|\:\si\>\tau\:|=|\:\si\:|+|\:\tau\:|\,$, and
$\,\Dlx_\si\>\Dlx_\tau=\:0\,$, otherwise.
\end{lem}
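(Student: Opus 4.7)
The plan is to argue via reduced expressions in the symmetric group and invoke the nil-Coxeter relations \eqref{nCx} directly. Fix reduced expressions $\si=s_{a_1}\<\cdots s_{a_k}$ with $k=|\:\si\:|$ and $\tau=s_{b_1}\<\cdots s_{b_l}$ with $l=|\:\tau\:|$, so that by definition $\Dlx_\si=\Dlx_{\:a_1}\<\cdots\Dlx_{\:a_k}$ and $\Dlx_\tau=\Dlx_{\:b_1}\<\cdots\Dlx_{\:b_l}$. Note that by the same reasoning that makes $\Dlx_\si$ well-defined from the nil-Coxeter relations, the value of $\Dlx_\si\>\Dlx_\tau$ computed from the concatenated word $s_{a_1}\<\cdots s_{a_k}\>s_{b_1}\<\cdots s_{b_l}$ is the same regardless of which reduced expressions we pick.

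In the first case, $|\:\si\tau\:|=k+l$, and the concatenation $s_{a_1}\<\cdots s_{a_k}\>s_{b_1}\<\cdots s_{b_l}$ is itself a reduced expression for $\si\tau$. By the definition of $\Dlx_{\:\si\tau}\:$ (which, as already noted in the paragraph following \eqref{nCx}, does not depend on the chosen reduced expression), we immediately obtain $\Dlx_\si\>\Dlx_\tau=\Dlx_{\:a_1}\<\cdots\Dlx_{\:a_k}\:\Dlx_{\:b_1}\<\cdots\Dlx_{\:b_l}=\Dlx_{\:\si\tau}\,$.

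In the second case, $|\:\si\tau\:|<k+l$, so the concatenated word $w=s_{a_1}\<\cdots s_{a_k}\>s_{b_1}\<\cdots s_{b_l}$ is a non-reduced expression for $\si\tau$. The main ingredient is Matsumoto's theorem (the Tits/word property for Coxeter groups): any non-reduced word in the simple reflections of $S_n\>$ can be transformed, by a sequence of commutation moves $s_a\>s_b\leftrightarrow s_b\>s_a\:$ with $|\:a-b\:|>1$ and braid moves $s_a\>s_{a+1}\>s_a\leftrightarrow s_{a+1}\>s_a\>s_{a+1}\,$, into a word in which two equal adjacent letters $s_a\>s_a\:$ occur. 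Since the operators $\Dlx_1\lc\Dlx_{\:n-1}\:$ satisfy precisely these same commutation and braid relations, we can perform the corresponding moves on the product $\Dlx_{\:a_1}\<\cdots\Dlx_{\:a_k}\:\Dlx_{\:b_1}\<\cdots\Dlx_{\:b_l}$ without changing it, arriving at an expression containing the factor $\Dlx_{\:a}\>\Dlx_{\:a}\:$ for some $\,a\,$. The first relation in \eqref{nCx} then forces the entire product to vanish.

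The only nontrivial point is the appeal to Matsumoto's theorem for non-reduced words; this is a standard result about Coxeter groups and requires no new calculation. Once it is invoked, both cases reduce to a mechanical rewriting using \eqref{nCx}, so no further obstacles are expected.
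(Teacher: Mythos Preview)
Your proof is correct and is essentially the same approach as the paper's, which simply says the statement follows from relations \eqref{nCx}. You have spelled out in detail the standard argument implicit in that one-line proof, including the appeal to Matsumoto's theorem for the non-reduced case.
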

\begin{proof}
The statement follows from relations \eqref{nCx}\:.
\end{proof}

Define the operators $\,\:\Dly_\si\,$, $\,\si\<\in\<S_n\>$, acting on functions
of $\,\yyy\,$ similarly.

\vsk.4>
The $A\:$-type double Schubert polynomials $\,\:\Sg_\si\>$, $\,\si\<\in\<S_n\>$,
see \cite[\:Chapter~2\:]{L}\:,
\vv.07>
are defined as follows. For the longest permutation $\;\si_0\,$,
$\,\si_0(i)=n+1-i\,$, $\;i-1\lc n\,$, set
\vvn.4>
\be
\Sg_{\si_0}\<(\xx\:;\yy)\,=\,
\prod_{i=1}^{n-1}\,\prod_{j=1}^{n-i}\,(x_i\<-y_j)\,,\kern-.4em
\vv.2>
\ee
For any $\,\:\si\<\in\<S_n\>$, set
$\,\:\Sg_\si(\xx\:;\yy)=\Dlx_{\:\si^{-1}\si_0}\Sg_{\si_0}(\xx\:;\yy)\,$.

\begin{lem}
\label{Dlxy}
For any $\,\:\si\<\in\<S_n\>$, we have $\;\Dlx_\si\>\Sg_{\si_0}(\xx\:;\yy)=
(-1)^{|\si|}\,\Dly_{\si_0\:\si^{-1}\si_0}\<\Sg_{\si_0}(\xx\:;\yy)\,$.
\end{lem}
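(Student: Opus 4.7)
The plan is to reduce the statement to the single-step identity
\[
\Dlx_a\,\Sg_{\si_0}(\xx;\yy)\,=\,-\,\Dly_{n-a}\,\Sg_{\si_0}(\xx;\yy),\qquad a=1\lc n-1,
\]
and then obtain the general case by induction on $|\si|$.

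First I would establish the single-step identity by factoring $\Sg_{\si_0}(\xx;\yy) = (x_a - y_{n-a})\, C_a(\xx;\yy)$, where $C_a$ collects the remaining factors $(x_i - y_j)$ with $i+j \le n$ and $(i,j)\ne(a,n-a)$. Direct inspection of the index set $\{(i,j):1\le i,\,j,\;i+j\le n\}$ shows that any pair involving $x_a$ or $x_{a+1}$ other than $(a,n-a)$ must have $j \le n-a-1$, and these pairs appear symmetrically as $(x_a - y_j)(x_{a+1} - y_j)$, so $C_a$ is invariant under $s_a$ acting on $\xx$; similarly, any pair involving $y_{n-a}$ or $y_{n-a+1}$ other than $(a,n-a)$ must have $i \le a-1$ and appears symmetrically as $(x_i - y_{n-a})(x_i - y_{n-a+1})$, so $C_a$ is invariant under $s_{n-a}$ acting on $\yy$. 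Hence $\Dlx_a \Sg_{\si_0} = C_a \cdot \Dlx_a(x_a - y_{n-a}) = C_a$ and $\Dly_{n-a}\,\Sg_{\si_0} = C_a \cdot \Dly_{n-a}(x_a - y_{n-a}) = -C_a$, giving the single-step identity.

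Next I would induct on $k=|\si|$; the base case $k=0$ is immediate. For the inductive step, I would fix a reduced expression $\si = s_{a_1}\si'$ with $|\si'|=k-1$, so that Lemma~\ref{sitau} gives $\Dlx_\si = \Dlx_{a_1}\,\Dlx_{\si'}$. Applying the inductive hypothesis to $\si'$ and using that any operator in the $\Dlx$ variables commutes with any operator in the $\Dly$ variables (they act on disjoint sets of variables), I would compute
\begin{align*}
\Dlx_\si\,\Sg_{\si_0}\,&=\,\Dlx_{a_1}\bigl((-1)^{k-1}\Dly_{\si_0(\si')^{-1}\si_0}\,\Sg_{\si_0}\bigr)\\
&=\,(-1)^{k-1}\,\Dly_{\si_0(\si')^{-1}\si_0}\,\Dlx_{a_1}\,\Sg_{\si_0}\\
&=\,(-1)^k\,\Dly_{\si_0(\si')^{-1}\si_0}\,\Dly_{n-a_1}\,\Sg_{\si_0},
\end{align*}
using the single-step identity in the last step. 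Since conjugation by $\si_0$ is a length-preserving involution of $S_n$ sending $s_a\mapsto s_{n-a}$, the relation $\si^{-1} = (\si')^{-1}s_{a_1}$ (reduced) implies that $\si_0\si^{-1}\si_0 = \bigl(\si_0(\si')^{-1}\si_0\bigr)\,s_{n-a_1}$ is also a reduced product; hence Lemma~\ref{sitau} again yields $\Dly_{\si_0(\si')^{-1}\si_0}\,\Dly_{n-a_1} = \Dly_{\si_0\si^{-1}\si_0}$, closing the induction.

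The main obstacle is the single-step identity, which rests on the specific combinatorics of the staircase index set $\{i+j\le n\}$; once it is in hand, the rest is purely formal, using Lemma~\ref{sitau} and the commutativity of $\Dlx$- and $\Dly$-operators.
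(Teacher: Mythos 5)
Your proof is correct and takes essentially the same route as the paper's: it establishes the single\:-step identity $\,\Dlx_{\:a}\Sg_{\si_0}(\xx\:;\yy)=-\:\Dly_{n-a}\Sg_{\si_0}(\xx\:;\yy)\,$ and then inducts on $\,|\si|\,$ using Lemma \ref{sitau} together with the commutativity of the $\,\Dlx$- and $\,\Dly$-operators. You also supply the factorization argument that the paper leaves implicit, and your index $\,s_{n-a}$ for $\,\si_0\,s_a\:\si_0\,$ is the correct one (the subscript $\,s_{n-a-1}$ in the paper's displayed formula appears to be a typo).
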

\begin{proof}
The statement follow by induction on the length of $\,\si\,$
from the equalities
\vvn.4>
\be
\Dlx_{\:s_a}\<\Sg_{\si_0}(\xx\:;\yy)\,=\,
\frac{\Sg_{\si_0}(\xx\:;\yy)}{x_a\<-y_{n-a}}\,=\>
-\>\Dly_{s_{n-a-1}}\<\Sg_{\si_0}(\xx\:;\yy)\kern-.4em
\vv.3>
\ee
and $\,s_{n-a-1}=\:\si_0\,s_a\:\si_0\,$ for every $\,a=1\lc n-1\,$.
\end{proof}

\begin{prop}
\label{Sgxyx}
For any $\,\:\si\<\in\<S_n\>$, we have
\vvn.3>
\be
\Sg_\si(\xx\:;\yy)\,=\,
(-1)^{|\si_0|-|\si|}\>\Dly_{\si\si_0}\Sg_{\si_0}(\xx\:;\yy)\,=\,
(-1)^{|\si|}\>\Sg_{\si^{-1}}(\yy\:;\xx)\,.
\ee
\end{prop}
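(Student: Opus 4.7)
The plan is to derive both equalities directly from Lemma \ref{Dlxy} together with a simple symmetry of the top Schubert polynomial $\Sg_{\si_0}$. No induction is needed.

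First I would establish the equality $\Sg_{\si_0}(\yy;\xx) = (-1)^{|\si_0|}\:\Sg_{\si_0}(\xx;\yy)$. This is immediate from the definition: each of the $|\si_0|=\binom{n}{2}$ factors $(y_i-x_j)$ in $\Sg_{\si_0}(\yy;\xx)$ equals $-(x_j-y_i)$, and the index set $\{(i,j):\:1\le i,j,\;i+j\le n\}$ is symmetric under $(i,j)\leftrightarrow(j,i)$, so after relabeling the remaining product is exactly $\Sg_{\si_0}(\xx;\yy)$.

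Next I would obtain the first equality. By definition, $\Sg_\si(\xx;\yy) = \Dlx_{\si^{-1}\si_0}\Sg_{\si_0}(\xx;\yy)$. Apply Lemma \ref{Dlxy} with $\tau=\si^{-1}\si_0$: using $|\tau|=|\si_0|-|\si|$ (since $\tau$ is a left coset representative of maximal length times $\si^{-1}$, and lengths of $\si^{-1}\si_0$ and $\si_0\si$ both equal $|\si_0|-|\si|$), $\tau^{-1}=\si_0\si$, and $\si_0\tau^{-1}\si_0 = \si\si_0$, Lemma \ref{Dlxy} gives
\[
\Dlx_{\si^{-1}\si_0}\Sg_{\si_0}(\xx;\yy)\,=\,(-1)^{|\si_0|-|\si|}\,\Dly_{\si\si_0}\Sg_{\si_0}(\xx;\yy),
\]
which is the first claimed identity.

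Finally, to obtain the second equality, apply the definition of Schubert polynomials with the roles of $\xx$ and $\yy$ interchanged (so $\Dlx$ is replaced by $\Dly$) to $\si^{-1}$:
\[
\Sg_{\si^{-1}}(\yy;\xx)\,=\,\Dly_{\si\si_0}\Sg_{\si_0}(\yy;\xx).
\]
Using the symmetry $\Sg_{\si_0}(\yy;\xx)=(-1)^{|\si_0|}\Sg_{\si_0}(\xx;\yy)$ from the first step, the right-hand side becomes $(-1)^{|\si_0|}\Dly_{\si\si_0}\Sg_{\si_0}(\xx;\yy)$. Combining with the first equality,
\[
\Sg_\si(\xx;\yy)\,=\,(-1)^{|\si_0|-|\si|}\Dly_{\si\si_0}\Sg_{\si_0}(\xx;\yy)\,=\,(-1)^{-|\si|}\Sg_{\si^{-1}}(\yy;\xx)\,=\,(-1)^{|\si|}\Sg_{\si^{-1}}(\yy;\xx),
\]
completing the proof. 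The only potential pitfall is careful tracking of signs and the length identities for permutations, but there is no substantive obstacle.
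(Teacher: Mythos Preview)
Your proof is correct and follows essentially the same route as the paper's: both derive the proposition from the symmetry $\Sg_{\si_0}(\yy;\xx)=(-1)^{|\si_0|}\Sg_{\si_0}(\xx;\yy)$ together with Lemma~\ref{Dlxy}, and you have simply spelled out the sign and length bookkeeping in more detail than the paper's one-line proof.
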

\begin{proof}
The statement follows from the equality
$\,\:\Sg_{\si_0}(\yy\:;\xx)= (-1)^{|\si_0|}\>\Sg_{\si_0}(\xx\:;\yy)\,$
and Lemma \ref{Dlxy}.
\end{proof}

Denote by $\,\si_\bla\:$ the longest element of the subgroup
$\,S_\bla\<=\Sla\!\subset S_n\>$,
\vvn.3>
\be
\si_\bla(a)=\la^{(i-1)}\?+\la^{(i)}\?+1-a\,,\qquad
\la^{(i-1)}\?+1<a\le\la^{(i)}\:,\quad i=1\lc N\>.\kern-.8em
\vv.3>
\ee

\begin{lem}
\label{sibla}
We have
\vvn-.8>
\be
\Dlx_{\:\si_\bla}\<\Sg_{\si_0}(\xx\:;\yy)\,=\,
\prod_{a=1}^{\la^{(i)}}\,
\prod_{b=\smash{\la^{(i)}+1}}^n\!\?(x_a\<-y_{n-b+1})\,.\kern-1.8em
\vv-.4>
\ee
\end{lem}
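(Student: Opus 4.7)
The plan is to exploit the block structure of $\si_\bla$: it factors as $\si_\bla=w_0^{(1)}\cdots w_0^{(N)}$, where $w_0^{(i)}$ denotes the longest element of the copy of the symmetric group permuting the variables $x_{\la^{(i-1)}+1},\ldots,x_{\la^{(i)}}$ of the $i$-th block. Since the factors act on pairwise disjoint sets of variables, the nil--Coxeter relations \eqref{nCx} ensure they commute and give $\Dlx_{\si_\bla}=\prod_{i=1}^N\Dlx_{w_0^{(i)}}$. I would therefore apply these operators to $\Sg_{\si_0}(\xx;\yy)=\prod_{a=1}^{n-1}\prod_{j=1}^{n-a}(x_a-y_j)$ one block at a time.

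For each block $i$, the crucial step would be to split
\be
\prod_{j=1}^{n-a}(x_a-y_j)\,=\,\prod_{j=1}^{n-\la^{(i)}}(x_a-y_j)\,\cdot\,\prod_{j=n-\la^{(i)}+1}^{n-a}(x_a-y_j)\qquad(\la^{(i-1)}<a\le\la^{(i)})\,,
\ee
and to observe that the first factor equals $P_i(x_a)$ for a single polynomial $P_i(x)=\prod_{j=1}^{n-\la^{(i)}}(x-y_j)$ independent of $a$. Consequently $\prod_{a\in\mathrm{block}\>i}P_i(x_a)$ is symmetric in the block variables and commutes with $\Dlx_{w_0^{(i)}}$. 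The substitution $b=n-j+1$ then identifies this symmetric piece with $\prod_{a\in\mathrm{block}\>i}\prod_{b=\la^{(i)}+1}^n(x_a-y_{n-b+1})$, which is the block-$i$ contribution to the right-hand side of the lemma; for $i=N$ the product is empty, so the nontrivial blocks are $i=1,\ldots,N-1$, matching the form of the claim.

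What remains is to prove that $\Dlx_{w_0^{(i)}}$ reduces the second factor to $1$. After the reindexing $\tilde x_k=x_{\la^{(i-1)}+k}$, $\tilde y_l=y_{n-\la^{(i)}+l}$ for $k,l=1,\ldots,\la_i$, the second factor becomes $\prod_{k=1}^{\la_i}\prod_{l=1}^{\la_i-k}(\tilde x_k-\tilde y_l)$, which is precisely the top Schubert polynomial $\Sg_{w_0^{(\la_i)}}(\tilde{\bs x};\tilde{\bs y})$ for $S_{\la_i}$. Then the definition of Schubert polynomials, applied in the single-block case, yields $\Dlx_{w_0^{(\la_i)}}\Sg_{w_0^{(\la_i)}}=\Sg_{\mathrm{id}}=1$, which finishes the computation. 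The main pitfall is combinatorial bookkeeping: one must verify that each pair $(a,b)$ contributes to the right-hand side with the correct multiplicity (once for pairs in distinct blocks, zero otherwise) and that the block-by-block factorization of $\Sg_{\si_0}$ is valid under the successive divided differences; the degree identity $|\si_0|-|\si_\bla|=\sum_{i<j}\la_i\la_j$, equal to the number of linear factors on the right, provides a useful sanity check.
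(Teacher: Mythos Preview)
Your proof is correct and is precisely the direct computation the paper has in mind: the paper's own proof consists of the single word ``Straightforward.'' Your block decomposition $\si_\bla=w_0^{(1)}\cdots w_0^{(N)}$, the splitting of each block's factor into a symmetric piece times a block-level top Schubert polynomial, and the reduction $\Dlx_{w_0^{(\la_i)}}\Sg_{w_0^{(\la_i)}}=\Sg_{\mathrm{id}}=1$ are exactly the expected steps.
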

\begin{proof}
Straightforward.
\end{proof}

Recall the polynomials $\,\:\St_{\:I}(\:\GG\:;\zz)\,$,
\vv.1>
$\Stop_{\:I}(\:\GG\:;\zz)\,$, see Section \ref{sec:Stab}.

\begin{prop}
\label{lemStSch}
For any $\,I\<\in\Il\>$,
\vvn.3>
\beq
\label{StSch}
\St_{\:I}(\:\GG\:;\zz)\,=\,\Sg_{\si_{\?\si_{\<0}\<(I)}}(\:\GG\:;\zz_{\si_0})\,,
\kern1.6em
\Stop_{\:I}(\:\GG\:;\zz)\,=\,\Sg_{\si_I}(\:\GG\:;\zz)\,.
\kern-2em
\vv.1>
\eeq
\end{prop}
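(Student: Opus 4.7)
The plan is to prove the second identity $\Stop_I(\GG;\zz) = \Sg_{\si_I}(\GG;\zz)$ by induction on the codimension $|\si_{I_\top}| - |\si_I|$, where $I_\top = \si_0(\Imil)$; the first identity will then follow from the definitional relation $\WWo_I(\TT;\zz) = \Wo_{\si_0(I)}(\TT;\zz_{\si_0})$ by writing $\St_I(\GG;\zz) = \Stop_{\si_0(I)}(\GG;\zz_{\si_0})$ and invoking the second identity.

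The core mechanism is that the two sides obey the same divided-difference recursion in $\zz$. On one hand, Lemma \ref{DlW} translates, via the substitution $t^{(i)}_j \mapsto \gm_{\bullet,\bullet}$ of Section \ref{sec:Stab}, into
\[
\Stop_{s_{a,a+1}(I)}(\GG;\zz) \,=\, -\Dlv{z}_a\Stop_I(\GG;\zz)
\quad\text{when}\quad |\si_{s_{a,a+1}(I)}| < |\si_I|,
\]
and this length condition forces $\si_{s_{a,a+1}(I)} = s_a\si_I$ by Lemma \ref{lemapp}(a) combined with the minimality of $\si_J$. On the other hand, combining Proposition \ref{Sgxyx} with the standard formula $\Dlx_a\Sg_\tau(\xx;\yy) = \Sg_{\tau s_a}(\xx;\yy)$ (valid when $|\tau s_a| < |\tau|$, and immediate from the definition together with Lemma \ref{sitau}), one derives, after setting $\tau = \si^{-1}$,
\[
\Dly_a\Sg_\si(\xx;\yy) \,=\, -\Sg_{s_a\si}(\xx;\yy)
\quad\text{when}\quad |s_a\si| < |\si|.
\]
Specializing $\xx = \GG$, $\yy = \zz$, $\si = \si_I$ gives exactly the recursion above.

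The base case is $I = I_\top$, whose minimum-length coset representative is $\si_{I_\top} = \si_0\si_\bla$, with $\si_\bla$ the longest element of $S_\bla$. I would compute $\Stop_{I_\top}$ explicitly from Lemma \ref{Womax} after substitution, and identify it with $\Sg_{\si_0\si_\bla} = \Dlv{\GG}_{\si_\bla}\Sg_{\si_0}$ using Lemma \ref{sibla}. For the inductive step, given $I \ne I_\top$, Lemma \ref{lemapp}(b) produces an $I$-ordered adjacent pair $(a,a+1)$; setting $J = s_{a,a+1}(I)$ one has $|\si_J| = |\si_I|+1$ and $I = s_{a,a+1}(J)$ with $|\si_{s_{a,a+1}(J)}| < |\si_J|$, so that the shared recursion together with the inductive hypothesis yields
\[
\Stop_I \,=\, -\Dlv{z}_a\Stop_J \,=\, -\Dlv{z}_a\Sg_{\si_J} \,=\, \Sg_{s_a\si_J} \,=\, \Sg_{\si_I},
\]
completing the induction.

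The main obstacle will be the base-case identification. Lemma \ref{Womax} presents $\Stop_{I_\top}$ as a triple product indexed by $(j,a,b)$ with $1\le a\le\la^{(j)} < b\le\la^{(j+1)}$, whereas Lemma \ref{sibla} organizes $\Sg_{\si_0\si_\bla}$ by the single global $\gm$-index. The verification amounts to checking that, for each global index $c$ with $\la^{(i-1)} < c\le\la^{(i)}$, both expressions collect precisely the factors $\gm^{\text{global}}_c - z_{n-b+1}$ for $b\in\{\la^{(i)}+1,\ldots,n\}$; this is a careful but routine re-indexing. A secondary bookkeeping item is sign tracking: Lemma \ref{DlW} writes its three-term relation with $z_{a+1}-z_a$ in the denominator (the opposite convention from $\Dlv{z}_a$), while Proposition \ref{Sgxyx} injects a factor $(-1)^{|\si|}$ into the Schubert-side recursion, and these signs must be reconciled for the recursions to match on the nose.
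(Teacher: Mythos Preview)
Your proposal is correct and follows essentially the same approach as the paper: both start from the top element $I_\top=\si_0(\Imil)$, where the identification with $\Sg_{\si_0\si_\bla}$ comes from Lemmas~\ref{Womax} and~\ref{sibla}, and then descend to general $I$ via the divided-difference recursion of Lemma~\ref{DlW} matched against the Schubert-polynomial recursion derived from Proposition~\ref{Sgxyx}. The only difference is organizational: you run the descent as an induction one simple transposition at a time, whereas the paper composes all the divided differences at once via Lemma~\ref{sitau}, writing $\Stop_I=(-1)^{|\si_{I_\top}|-|\si_I|}\Dlz_{\si_I\si_{I_\top}^{-1}}\Stop_{I_\top}$ and then collapsing $\Dlz_{\si_I\si_{I_\top}^{-1}}\Dlz_{\si_0\si_\bla\si_0}$ to $\Dlz_{\si_I\si_0}$ by the length-additivity check.
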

\begin{proof}
By formula \eqref{StIma} and Lemmas \ref{DlW},
\vvn.3>
\be
\Stop_{\:I}(\:\GG\:;\zz)\,=\,(-1)^{|\:\si_{\<\si_0(I)}\<|-|\si_I|}\>
\Dlz_{\:\si_I\:\si_{\?\si_0(I)}^{\smash{-1}}}\!V_\bla(\:\GG\:;\zz_{\si_0}\<)\,,
\kern-1em
\ee
where the operator $\,\:\Dlz_\si\,$ acts on functions of $\,\zz\,$.
By formula \eqref{VGz} and Lemmas \ref{sibla}, \ref{Dlxy},
\vvn.3>
\be
V_\bla(\:\GG\:;\zz_{\si_0}\<)\,=\,(-1)^{|\si_\bla|}\>
\Dlz_{\:\si_0\>\si_\bla\si_0}\Sg_{\si_0}(\xx\:;\yy)
\vv.2>
\ee
because $\,\:\si_\bla^{\smash{-1}}\<=\si_\bla\,$.
Since $\,\:\si_{\?\si_0(I)}=\:\si_0\>\si_\bla\,$ and
$\,\:|\:\si_I\>\si_{\?\si_0(I)}^{\smash{-1}}|+
|\:\si_{\si_0(I)}\>\si_0\:|=|\:\si_I\>\si_0\:|\,$,
\vvn.2>
\be
\Stop_{\:I}(\:\GG\:;\zz)\,=\,(-1)^{|\si_0|-|\si|}\>
\Dlz_{\si\si_0}\Sg_{\si_0}(\:\GG\:;\zz)\,=\,\Sg_{\si_I}(\:\GG\:;\zz)\,.
\vv.3>
\ee
by Lemma \ref{sitau} and Proposition \ref{Sgxyx}, that proves the second
\vv.1>
equality in \eqref{StSch}\:. The first equality in \eqref{StSch} follows
from the relation
$\,\:\St_{\:I}(\:\GG\:;\zz)=\Stop_{\:\si_0(I)}(\:\GG\:;\zz_{\si_0}\<)\,$.
\end{proof}

\section{Proofs of Lemmas \ref{Womax} and \ref{WImis}}
\label{AppB}

\begin{proof}[Proof of Lemma \ref{Womax}]
\strut\;The proof is by induction on $\,N\:$.
To show that $\;\Wo_{\Imil}(\TT\:;\zz)=1\,$,
\,set \>$f_1\<=1\,$, and for \,$N\ge 2\,$, write
\vvn.3>
\be
f_N(\TT^{(1)}\<\lc\TT^{(N-1)};\zz)\,=\,\Wo_{\Imil}(\TT\:;\zz)\,,
\vv.4>
\ee
indicating the dependence on $\,N\:$ explicitly.
\vvn.1>
By formula \eqref{WoI}, the function $\>f_N$ is a polynomial in
\vvn.1>
$\,t_1^{(N-1)}\<\lc t_{\la^{(N-1)}}^{(N-1)}\:$ of degree zero in each of
\vvgood
these variables, hence a constant. Evaluating
$\>f_N(\TT^{(1)}\<\lc\TT^{(N-1)};\zz)\,$ at
$\,\TT^{(N-1)}\?=(z_1\lc z_{\la^{(N-1)}}\<)\,$ yields
\vvn.3>
\be
f_N(\TT^{(1)}\<\lc\TT^{(N-1)};\zz)\,=\,
f_{N-1}(\TT^{(1)}\<\lc\TT^{(N-2)};z_1\lc z_{\la^{(N-1)}}\<)\,.
\vv.3>
\ee
Thus $\>f_N(\TT^{(1)}\<\lc\TT^{(N-1)};\zz)\:=\:1\;$ by the induction
assumption.

\vsk.3>
The proof of formula \eqref{WoIma} is similar.
Set \>$f_1(\zz)=1\,$, and for \,$N\ge 2\,$, write
\vvn.3>
\be
f_N(\TT^{(1)}\<\lc\TT^{(N-1)};\zz)\,=\,\Wo_{\si_0(\Imil)}(\TT\:;\zz)\,,
\vv.3>
\ee
Define polynomials \,$g_1\lc g_{N-1}\,$,
\vvn.3>
\be
g_i(\TT^{(i)};\zz)\,=\,\prod_{a=1}^{\la^{(i)}}\,
\prod_{b=\la^{(i)}+1}^{\la^{(i+1)}}\!(t_a^{(i)}\?-z_{n-b+1})\,.
\vv.2>
\ee
By formula \eqref{WoI}, the function $\>f_N$ is a polynomial in
$\,t_1^{(N-1)}\<\lc t_{\la^{(N-1)}}^{(N-1)}\,$ of degree $\,\la_N\,$
in each of these variables, and $\>f_N$ is divisible by
$\,g_{N-1}(\TT^{(N-1)};\zz)\,$. Hence,
\vvn.4>
\be
f_N(\TT^{(1)}\<\lc\TT^{(N-1)};\zz)\,=\,
r_N(\TT^{(1)}\<\lc\TT^{(N-2)};\zz)\,g_{N-1}(\TT^{(N-1)};\zz)
\vv.3>
\ee
for some polynomial $\,r_N\,$. Evaluating both sides at
$\,\TT^{(N-1)}\?=(z_{\la_N+1}\>\lc z_n)\,$
\vvn.4>
\be
r_N(\TT^{(1)}\<\lc\TT^{(N-2)};\zz)\,=\,
f_{N-1}(\TT^{(1)}\<\lc\TT^{(N-2)};z_{\la_N+1}\>\lc z_n)\,.
\vv.2>
\ee
Thus $\>f_N(\TT^{(1)}\<\lc\TT^{(N-1)};\zz)\:=\:
g_1(\TT^{(1)};\zz)\ldots g_{N-1}(\TT^{(N-1)};\zz)\,$,
which proves formula \eqref{WoIma}\:.
\end{proof}

\begin{proof}[Proof of Lemma \ref{WImis}]
\strut\;The proof is by induction on \,$N$, similarly to that of
Lemma \ref{Womax}. Set \>$f_1\<=1\,$, and for \,$N\ge 2\,$, write
\vvn.3>
\be
f_N(\TT^{(1)}\<\lc\TT^{(N-1)};\zz)\,=\,\Wo_{\Imil}(\TT\:;\zz)\,,\qquad
f^{(i)}_N(\TT^{(1)}\<\lc\TT^{(N-1)};\zz)\,=\,
\Wo_{\?s_{\la^{(i)}\!,\>\la^{(i)}\<+1}(\Imil)}(\TT\:;\zz)\,,
\vv.2>
\ee
indicating the dependence on $\,N\:$ explicitly. By Lemma \ref{Womax},
$\>f_N(\TT^{(1)}\<\lc\TT^{(N-1)};\zz)=1\,$.

\vsk.3>
By formula \eqref{WoI}, the function $\>f^{(N-1)}_N$ is a symmetric linear
function in $\,t_1^{(N-1)}\<\lc t_{\la^{(N-1)}}^{(N-1)}\,$ vanishing at
$\,\TT^{(N-1)}\?=(z_1\lc z_{\la^{(N-1)}}\<)\,$ Hence,
\be
f^{(N-1)}_N(\TT^{(1)}\<\lc\TT^{(N-1)};\zz)\,=\,
r_N(\TT^{(1)}\<\lc\TT^{(N-2)};\zz)\>
\biggl(\,\sum_{j=1}^{\la^{(N-1)}}\;(\:t^{(N-1)}_j\?-z_j)\:\biggr)
\ee
for some polynomial $\,r_N\,$. Evaluating both sides at
$\,\TT^{(N-1)}\?=(z_1\lc z_{\la^{(N-1)}-1}\:,\:0\:)\,$ yields
\vvn.4>
\be
r_N(\TT^{(1)}\<\lc\TT^{(N-2)};\zz)\,=\,
f_{N-1}(\TT^{(1)}\<\lc\TT^{(N-2)};z_1\lc z_{\la^{(N-1)}-1}\:,\:0\:)\,=\,1
\vv.3>
\ee
by Lemma \ref{Womax}, which proves Lemma \ref{Wos} for $\;i=N-1\,$.

\vsk.3>
For $\;i\ne N-1\,$, formula \eqref{WoI} implies that
\vvn.1>
the function $\>f^{(i)}_N$ is a polynomial in
$\,t_1^{(N-1)}\<\lc t_{\la^{(N-1)}}^{(N-1)}\:$ of degree zero in each of
\vvn.1>
these variables, hence a constant. Evaluating
$\>f^{(i)}_N(\TT^{(1)}\<\lc\TT^{(N-1)};\zz)\,$ at
$\,\TT^{(N-1)}\?=(z_1\lc z_{\la^{(N-1)}}\<)\,$ yields
\be
f^{(i)}_N(\TT^{(1)}\<\lc\TT^{(N-1)};\zz)\,=\,
f^{(i)}_{N-1}(\TT^{(1)}\<\lc\TT^{(N-2)};z_1\lc z_{\la^{(N-1)}}\<)\,,=\,
\sum_{j=1}^{\la^{(i)}}\;(\:t^{(i)}_j\?-z_j)\,,
\vv.4>
\ee
where the second equality holds by the induction assumption.
Lemma \ref{Wos} is proved.
\end{proof}

\section{Stirling's formula}
\label{AppC}

The next two lemmas are used in the proof of Lemma \ref{lemresh}.
The first lemma is well known.

\begin{lem}
\label{Stira}
For $\,|\<\arg\:(h/\ka)\:|<\pi$ and any $\,\al\<\in\C\,$, we have
\vvn.3>
\be
\lim_{h\to\infty}\>\frac{\Gm(\al+h/\ka)\>}{(h/\ka)^{\>\al}\,\Gm(h/\ka)}\,=\,1
\kern-2em
\vv-.2>
\ee
locally uniformly in $\,\al$.
\end{lem}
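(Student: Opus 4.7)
The plan is to apply Stirling's asymptotic expansion for the Gamma function directly. Set $w=h/\ka$, so that as $h\to\infty$ with $|\<\arg\:(h/\ka)\:|<\pi$, the variable $w$ tends to infinity inside a closed subsector of $|\arg w|\le\pi-\dl$ (for some $\dl>0$ depending on the approach), where Stirling's formula
\be
\log\:\Gm(z)\,=\,(z-\tfrac12)\log z\,-\,z\,+\,\tfrac12\log(2\:\pi)\,+\,O(1/|z|)\,,\qquad |z|\to\infty\,,
\ee
is valid with error uniform in such a subsector. It suffices to prove that $\Gm(\al+w)/(w^{\:\al}\:\Gm(w))\to 1$ as $w\to\infty$ in the sector, locally uniformly in $\al\in\C$.

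The main step is to compute $\log\:\Gm(\al+w)-\log\:\Gm(w)-\al\log w\,$ using Stirling's formula on each Gamma factor and expanding $\log(w+\al)=\log w+\log(1+\al/w)$ by the Taylor series $\log(1+\al/w)=\al/w+O(1/|w|^2)$. A short algebraic manipulation shows
\be
(w+\al-\tfrac12)\log(w+\al)\,-\,(w+\al)\,-\,(w-\tfrac12)\log w\,+\,w\,-\,\al\log w\,=\,(w+\al-\tfrac12)\log(1+\al/w)\,-\,\al\,,
\ee
and substituting the Taylor expansion gives $(w+\al-\tfrac12)\cdot(\al/w+O(1/|w|^2))-\al=O(1/|w|)\,$. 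Combined with the Stirling error $O(1/|w|)$, one gets $\log\:\Gm(\al+w)-\log\:\Gm(w)-\al\log w=O(1/|w|)\,$, whence exponentiating yields
\be
\frac{\Gm(\al+w)}{w^{\:\al}\:\Gm(w)}\,=\,1+O(1/|w|)\,,
\ee
which is the desired limit.

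For the locally uniform convergence in $\al$, I would observe that if $\al$ varies in a compact set $K\subset\C$, then for $|w|$ sufficiently large one has $|\al/w|<1/2$ uniformly in $\al\in K$, so the Taylor remainder in $\log(1+\al/w)$ is $O(1/|w|^2)$ with constant depending only on $K$; likewise the Stirling error term on both $\Gm(\al+w)$ and $\Gm(w)$ is uniform on $K$ because the argument $\al+w$ stays in a slightly wider sector $|\arg z|\le\pi-\dl/2\,$ for $|w|$ large. Hence all error bounds are uniform in $\al\in K$.

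There is no substantive obstacle here; the only point requiring any care is tracking the complex logarithm, which is why the sector condition $|\<\arg\:(h/\ka)\:|<\pi\,$ is imposed — it ensures both $w$ and $\al+w$ lie in the domain of validity of the principal branch used in Stirling's formula, and no branch issues arise when interpreting $w^{\:\al}\<=\exp(\al\log w)\,$.
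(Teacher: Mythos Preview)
Your proof is correct and is the standard argument; the paper itself simply labels this lemma as ``well known'' and offers no proof, so there is nothing to compare against beyond noting that your direct application of Stirling's asymptotic expansion is exactly what one would expect here.
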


\begin{lem}
\label{Stirb}
For any $\,\al\:,r\?\in\C\,$, $l\in\Z\,$,
we have
\vvn.3>
\be
\lim_{h\to\infty}\>\frac{r^{\:l}\,\Gm(\al+l+h/\ka)\>}
{(h/\ka)^{\>l}\,\Gm(\al+h/\ka)\>\Gm(l+1)}\,=\,\frac{r^{\:l}}{\Gm(l+1)}
\kern-2em
\vv.4>
\ee
uniformly in $\,\:l$ and locally uniformly in $\,\al\:,r\:$.
\end{lem}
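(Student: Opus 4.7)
Fix $H=h/\ka$ and suppose $H\to\infty$ in any sector $|\arg H|\le\pi-\eps$. For $l\in\Z_{<0}$ the factor $1/\Gm(l+1)$ vanishes, so both sides of the asserted identity are zero and there is nothing to prove; the claim reduces to showing that
\be
\sup_{l\ge 0}\,\frac{|r|^l}{l!}\,\bigl|u_l(H)-1\bigr|\,\to\,0\qquad(H\to\infty),
\ee
where
\be
u_l(H)\,:=\,\frac{\Gm(\al+l+H)}{H^l\,\Gm(\al+H)}\,=\,\prod_{k=0}^{l-1}\Bigl(1+\frac{\al+k}{H}\Bigr),\qquad u_0(H)=1.
\ee
Two elementary bounds on $|u_l(H)|$ will be used: the Gaussian-type bound $|u_l(H)|\le\exp\!\bigl((l|\al|+l(l-1)/2)/|H|\bigr)$, obtained from $|1+x|\le e^{|x|}$, and the cruder product bound $|u_l(H)|\le\bigl(1+(|\al|+l-1)/|H|\bigr)^l$, obtained by replacing every factor by the largest one.

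My plan is a three-way split on $l$. For $l\le\sqrt{|H|}$ the exponent in the Gaussian bound stays bounded for $|H|$ large, and $e^x-1\le 2x$ for $x\le 1$ gives $|u_l(H)-1|\le C(|\al|)(l^2+1)/|H|$. Since $C_*:=\sup_{l\ge 0}|r|^l(l^2+1)/l!$ is finite, the contribution from this range satisfies $|r|^l|u_l(H)-1|/l!\le C\,C_*/|H|\to 0$ uniformly in $l$. For $\sqrt{|H|}<l\le|H|$ one has $(|\al|+l)/|H|\le 1+|\al|/|H|$, whence $|u_l(H)|\le(2+|\al|/|H|)^l$, and $|r|^l|u_l(H)|/l!\le(3|r|)^l/l!$ once $|H|$ is large; the supremum over $l>\sqrt{|H|}$ of this expression vanishes as $|H|\to\infty$ since $(3|r|)^l/l!\to 0$. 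Finally, for $l>|H|$ one has $(|\al|+l)/|H|\le 2l/|H|$ (for $|H|\ge|\al|$), so $|u_l(H)|\le(3l/|H|)^l$, and Stirling's inequality $l!\ge(l/e)^l$ gives $|r|^l|u_l(H)|/l!\le(3e|r|/|H|)^l$, a geometric tail starting at $l>|H|$ that is arbitrarily small once $|H|>3e|r|$. The $|r|^l/l!$ contribution to $|u_l(H)-1|$ in the last two ranges is handled by the same estimates.

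Combining the three ranges gives the required uniform convergence in $l$; all constants depend continuously on $|\al|$ and $|r|$, which provides local uniformity in $\al,r$. The main obstacle I anticipate is the intermediate regime $l\sim|H|$, where the Gaussian bound $e^{l^2/(2|H|)}$ ceases to be useful and one must invoke the coarser product bound $(1+(|\al|+l)/|H|)^l$ together with Stirling's formula; the choice of crossover points $\sqrt{|H|}$ and $|H|$ is what keeps each sub-estimate clean. A streamlined fallback, if the bookkeeping becomes awkward, is to apply a sectorial uniform asymptotic of $\log\Gm$ directly to $\log u_l(H)=\sum_{k=0}^{l-1}\log(1+(\al+k)/H)$, but the elementary splitting above is self-contained and avoids importing sharper asymptotic results.
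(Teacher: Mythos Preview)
Your argument is correct, but it proceeds quite differently from the paper's. You estimate the finite product $u_l(H)=\prod_{k=0}^{l-1}(1+(\al+k)/H)$ directly, splitting into three regimes in $l$ and alternating between a logarithmic bound and a crude largest-factor bound, with Stirling's inequality for $l!$ handling the tail. The paper instead observes that the left-hand side is $r^l$ times the $l$-th Taylor coefficient of the generating function $f(x,y,\al)=(1-x/y)^{-\al-y}$ (here $y=h/\ka$), and writes it as the Cauchy integral $\frac{r^l}{2\pi i}\int_{|x|=r}f(x,y,\al)\,x^{-l-1}\,dx$ over a fixed circle $|x|=r<|y|$; the target $r^l/l!$ is the same integral with $f$ replaced by $e^x$. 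Subtracting and bounding gives $\sup_{|x|=r}|f(x,y,\al)-e^x|$, a quantity independent of $l$ that tends to zero locally uniformly in $\al,r$ as $y\to\infty$, so uniformity in $l$ comes for free from the Cauchy estimate. The paper's route is considerably shorter and avoids all case analysis; your approach is more elementary, using no contour integration, and makes the dependence on $l$ and $|H|$ explicit at the cost of more bookkeeping.
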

\begin{proof}
Let $\,y=h/\ka\,$ and $\>f(x,y,\al)=(1-xy^{-1})^{\?-\:\al-y}\:$.
For $\,r\<<|\:y\:|\,$ by Cauchy's formula,
\vvn.4>
\be
\frac{r^{\:l}\,\Gm(\al+l+y)\>}{y^{\>l}\,\Gm(\al+y)\>\Gm(l+1)}\,=\,
\frac{r^{\:l}}{2\>\piit}\,\int\limits_{\!\!|x|=r}\!
\frac{f(x,y,\al)}{x^{\:l+1}}\,d\:x\,.
\vv.2>
\ee
Since $\,\lim\limits_{y\to\infty}\:f(x,y,\al)=e^x\>$ locally uniformly
in $x,\al\,$, and
\vvn-.1>
\be
\frac{r^{\:l}}{\Gm(l+1)}\,=\,\frac{r^{\:l}}{2\>\piit}\,\int\limits_{\!\!|x|=r}
\!\frac{e^x}{x^{\:l+1}}\,d\:x\,,
\vv-.5>
\ee
Lemma \ref{Stirb} follows.
\end{proof}

\section{Polynomiality of solutions}
\label{AppE}

Consider dynamical differential equations \eqref{DEQ} for nonpositive integer
values of $\,h/\ka\,$. Recall the $\,\:\End\:\bigl(\Cnnl\bigr)$-\:valued Levelt
fundamental solution $\,\Psh(\zz\:;h;\qq\:;\ka)\,$, see Theorem \ref{Bthm},
Proposition \ref{lemPsp}, and formula \eqref{Xi0} for the dynamical
Hamiltonians $\,X_1\lc X_n\:$ at $\,\qq=\0\,$.

\vsk.2>
Say that a rational function $\,f(\zz)\,$ is over integers if $\,f(\zz)\,$ is
a ratio of polynomials in $\,\zz\,$ with integer coefficients.

\begin{thm}
\label{thm E1}
Let $\,m\<\in\<\Z_{\ge 0}\:$ and $\,\:h=\<-\:m\ka\,$.
Then the Levelt fundamental solution has the form
\vvn-.5>
\beq
\label{Psipol}
\Psh(\zz\:;\<-\:m\ka\:;\qq\:;\ka)\,=\,P(\zz\</\<\ka\:;m;\qq)\,\:
\prod_{i=1}^{N-1}\prod_{j=i+1}^N (1-q_j/q_i)^{-\:m\:\la_i}\,
\prod_{i=1}^N\,q_i^{\>\smash{X_i(\zz\</\<\ka\:;\:-m\:;\>\0)}}\>,
\kern-2em
\vv.1>
\eeq
where $\,\:P(\zz\:;m;\qq)\:$ is a polynomial in the variables
\vv.04>
$\,\:q_2/q_1\lc q_n/q_{n-1}$ with coefficients being rational functions
of $\,\:\zz$ over integers. The degree of $\,\:P(\zz\:;m;\qq)\:$
in the variable $\,q_{i+1}/q_i\:$ is at most $\,\:m\>(n-i)\>\la^{(i)}$.
The funciton $\,\:P(\zz\:;m;\qq)\:$ is holomorphic in $\,\zz\>$ if
$\,{z_a\<-z_b\not\in\Z_{\:\ne\:0}}\>$ for all $\,\:{1\le a<b\le n}\,$.
\vv.04>
The singularities of $\,\:P(\zz\:;m;\qq)$ at the hyperplanes
$\,z_a\<-z_b\in\Z_{\:\ne\:0}\>$ are simple poles.
For any $\,\:\zz$ such that $\,z_a\<-z_b\not\in\<\ka\>\Z_{\:\ne\:0}\>$ for all
\vv-.06>
$\,\:1\le a<b\le n\,$, the columns of $\,\,\Psh(\zz\:;\<-\:m\ka\:;\qq\:;\ka)\:$
form a basis of the space of $\;\Cnnl\<$-valued solutions of dynamical
differential equations \eqref{DEQ}\:.
\end{thm}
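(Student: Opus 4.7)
My plan starts from the explicit formula for the Levelt fundamental solution given by Proposition~\ref{lemPsp} combined with \eqref{Psipnd}:
\begin{equation*}
\Psh(\zz;h;\qq;\ka) \,=\, \Psp(\zz/\ka;h/\ka;\qq) \cdot \prod_{i=1}^N q_i^{\>X_i(\zz/\ka;\,h/\ka;\,\0)},
\end{equation*}
with entries $\Psp_{\<\IJ}(\zz;h;\qq) = \Om_\bla(h;\qq)\sum_{\lb} \Jc_{\IJ,\,\lb}(\zz;h) \prod_{i=1}^{N-1}(q_{i+1}/q_i)^{\sum_a l_a^{(i)}}$. At $h = -\:m\ka$, the factor $\Om_\bla(-m\ka;\qq) = \prod_{i<j}(1-q_j/q_i)^{-m\la_i}$ matches the rational prefactor in \eqref{Psipol} exactly. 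The theorem thus reduces to showing that the residual sum
\begin{equation*}
P_{\<\IJ}(\zz/\ka;m;\qq) \,:=\, \sum_{\lb} \Jc_{\IJ,\,\lb}(\zz/\ka;-m) \prod_{i=1}^{N-1}(q_{i+1}/q_i)^{\sum_a l_a^{(i)}}
\end{equation*}
truncates to a polynomial in $\,q_{i+1}/q_i\,$ of degree at most $m(n-i)\la^{(i)}$, with the claimed coefficient structure.

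The hard part will be establishing this truncation. Examining formula \eqref{Jc} for $\Jc_{\IJ,\,\lb}(\zz;h)$ together with the definition \eqref{A} of $A(\TT;\zz;h)$, at $h = -m\ka$ every Gamma-function ratio of the form $\Gm((u+h)/\ka)/\Gm(1+u/\ka)$ collapses to $1/\prod_{k=0}^{m}(u/\ka - k)$, while the reciprocal-type ratios collapse to the polynomial $\prod_{k=0}^{m}(u/\ka - k)$. I would then track, for each $\lb \in \Zc_\bla$ with $\sum_a l_a^{(i)}$ large, the zeros produced in $A(\Si_K-\lb\ka;\zz;-m)$ by the polynomial factors $\prod_{k=0}^m\bigl((u - l_a^{(i)}\ka + l_c^{(i+1)}\ka)/\ka - k\bigr)$, and show that for each level $i$ they dominate the contribution of the reciprocal factors once $\sum_a l_a^{(i)}$ exceeds $m(n-i)\la^{(i)}$. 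The factor $(n-i)$ reflects the number of variables $t^{(j)}_c$ with $j>i$ available to pair with a shifted $t^{(i)}_a$ into a vanishing linear factor; the factor $\la^{(i)}$ counts the summation indices at level $i$.

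For the analytic properties of $P(\zz;m;\qq)$, I would argue as follows. Since at $h = -m\ka$ every Gamma ratio in $\Jc_{\IJ,\,\lb}$ reduces to a finite product of linear factors in the variables $\zz/\ka$, and the weight functions $W_I$, $\WW_J$ together with $c_\bla$, $R_\bla$, $Q_\bla$ all have integer coefficients in $\zz/\ka$ and $h/\ka$, the truncated expression $P_{\<\IJ}(\zz;m;\qq)$ is a polynomial in $q_{i+1}/q_i$ whose coefficients are rational functions in $\zz$ with integer coefficients. Holomorphicity in $\zz$ away from the hyperplanes $z_a - z_b \in \Z_{\ne0}$ and the simple-pole structure on these hyperplanes are immediate from Theorem~\ref{Bthm}(ii) specialised to $h = -m\ka$.

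Finally, the basis property follows from the determinant formula \eqref{detPsh}: at $h = -m\ka$ it gives $\det\Psh(\zz;-m\ka;\qq;\ka) = \prod_i q_i^{d^{(1)}_{\bla,i}\sum_a z_a/\ka}$, which is nowhere vanishing. Hence, provided $z_a - z_b \not\in \ka\Z_{\ne 0}$ for all $a < b$, the columns of $\Psh(\zz;-m\ka;\qq;\ka)$ remain linearly independent and therefore form a basis for the solution space of dynamical differential equations \eqref{DEQ}.
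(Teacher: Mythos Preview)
Your overall structure matches the paper exactly: start from \eqref{Psipnd} and \eqref{PspIJ}, split off $\Om_\bla$, prove the residual series terminates, then cite Theorem~\ref{Bthm}(ii) for the regularity in $\zz$ and \eqref{detPsh} for the basis property. The last two steps are fine as written.

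The gap is in your truncation argument. You propose to count zeros in $A(\Si_K-\lb\ka;\zz;-m)$ and show they ``dominate the reciprocal factors'' once $\sum_a l^{(i)}_a$ is large, with $(n-i)$ interpreted as ``the number of variables $t^{(j)}_c$ with $j>i$''. That heuristic is wrong on two counts. First, the number of such variables is $\sum_{j>i}\la^{(j)}$, not $n-i$. Second, the mechanism is not a global balance of zeros against poles: it is a clean term-by-term vanishing coming from one specific factor. The paper (see formulae \eqref{Fzm}, \eqref{JIJm}) rewrites $\Jc_{\IJ,\lb}(\zz;-m)$ so that the ratio $A(\Si_K-\lb;\zz;-m)\big/\bigl(A(\Si_K;\zz;-m)\,c_\bla(\Si_K-\lb;-m)\bigr)$ produces, for each pair $(i,a)$, the falling factorial
\[
[\,m,\,l^{(i)}_a-l^{(i+1)}_a\,]\;=\;m(m-1)\cdots\bigl(m-l^{(i)}_a+l^{(i+1)}_a+1\bigr),
\]
arising from the diagonal Gamma factor $\Gm\bigl(l^{(i)}_a-l^{(i+1)}_a+h\bigr)$ in \eqref{Blz} at $h=-m$. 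This factor vanishes whenever $l^{(i)}_a-l^{(i+1)}_a>m$, so only $\lb$ with $0\le l^{(i)}_a-l^{(i+1)}_a\le m$ for all $i,a$ survive. Telescoping then gives $l^{(i)}_a\le m(N-i)$, hence $\sum_a l^{(i)}_a\le m(N-i)\la^{(i)}$, which is the degree bound. Without isolating this diagonal factor you cannot conclude termination; the other Gamma ratios in $A$ at $h=-m$ become rational functions of $\zz$ that neither vanish nor blow up uniformly in $\lb$.
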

\begin{proof}
By formulae \eqref{Psipnd}\:, \eqref{PspIJ}\:,
\vv.04>
the function $\,\:\Psh(\zz\:;\<-\:m\ka\:;\qq\:;\ka)\,$ has the form
\eqref{Psipol} with $\,P(\zz\:;m;\qq)\,$ being a power series in
$\,q_2/q_1\lc q_n/q_{n-1}\,$. The first step of the proof is to show
that the power series terminate so that $\,P(\zz\:;m;\qq)\,$ is a polynomial
in $\,q_2/q_1\lc q_n/q_{n-1}\,$.

\vsk.2>
Denote by $\,P_{\<\IJ}(\zz\:;m;\qq)\,$ the entries of $\,P(\zz\:;m;\qq)\,$
\vv.06>
in the standard basis $\,\{\:v_I\,,\alb\,I\<\in\Il\:\}\,$ of $\,\Cnnl$.
By formula \eqref{PspIJ}\:,
\vvn-.5>
\beq
\label{PIJ}
P_{\<\IJ}(\zz\:;m;\qq)\,=\,
\sum_{\lb\in\Z_{\ge0}^{\:\la\+1}\!\!}\Jc_{\IJ,\,\lb}(\zz\:;-\:m)\,
\prod_{i=1}^{N-1}\,(q_{i+1}/q_i)^{\>\sum_{a=1}^{\la^{(i)}}l_a^{(i)}},\kern-2em
\eeq
where the coefficients $\,\Jc_{\IJ,\,\lb}(\zz\:;-\:m)\,$ are given by formula
\eqref{JIJm} below.

\vsk.3>
Denote $\,\:[\:x,\<j\>]\:=\:x\>(x-1)\ldots(x-j+1)\,$. Recall
\vvn.3>
\be
\Zc_\bla\,=\,\{\>\lb\<\in\Z^{\la\+1}_{\ge 0}\,\:|\ \,
l^{\:(i)}_{\:a}\?\ge l^{\:(i+1)}_{\:a}\<,\ \;i=1\lc N-1\,,
\ \;a=1\lc\la^{(i)}\:\}\,,\kern-2em
\vv.3>
\ee
see \eqref{Zc}\:. For $\,\:\lb\in\Z_{\ge0}^{\:\la\+1}\<$,
\:set $\,F_\lb(\zz\:;m)=0\,$ if $\,\:\lb\<\not\in\<\Zc_\bla\,$, and
\vvn.3>
\begin{align}
\label{Fzm}
F_\lb(\zz\:;m)\,&{}=\>\prod_{i=1}^{N-1}\,\prod_{a=1}^{\la^{(i)}}\,
\biggl(\:\frac{(-1)^{\:l^{\:(i)}_{\:a}\!-\>l^{\:(i+1)}_{\:a}}\>
[\:m\>,l^{\:(i)}_{\:a}\!\<-l^{\:(i+1)}_{\:a}\:]}
{(\:l^{\:(i)}_{\:a}\!\<-l^{\:(i+1)}_{\:a})\:!}\,\times{}
\\[4pt]
\notag
&\>{}\times\>\prod_{\satop{b=1}{b\ne a}}^{\la^{(i)}}\,
\frac{[\:z_b-z_a\?+l^{\:(i)}_{\:a}\!\<-l^{\:(i)}_{\:b}\?,m+1\:]}
{[\:z_b-z_a\:,m+1\:]}\;
\prod_{\satop{c=1}{c\ne a}}^{\la^{(i+1)}}\frac{[\:z_c\<-z_a\:,m+1\:]}
{[\:z_c\<-z_a\?+l^{\:(i)}_{\:a}\!-l^{\:(i+1)}_{\:c}\?,m+1\:]}\,\:\biggr)
\kern-.3em
\\[-17pt]
\notag
\end{align}
if $\,\:\lb\<\in\?\Zc_\bla\,$. Here $\,\la^{(N)}\?=n\,$ and
$\,\:l^{(N)}_a\!=0\,\:$ for all $\,a=1\lc n\,$.

\vsk.2>
Notice that the factor $\,\:[\:m\>,l^{\:(i)}_{\:a}\!\<-l^{\:(i+1)}_{\:a}\:]\,$
\vv.06>
in formula \eqref{Fzm} equals zero if
$\,\:l^{\:(i)}_{\:a}\!\<-l^{\:(i+1)}_{\:a}\?>m\,$. Therefore,
$\,F_\lb(\zz\:;m)=0\,$ unless $l^{\:(i)}_{\:a}\!\<-l^{\:(i+1)}_{\:a}\?\le m\,$
\vv.06>
for all $\,a,i\,$, and in particular, $\,F_\lb(\zz\:;m)=0\,$ unless
$\,l^{\:(i)}_{\:a}\!\le m\>(n-i)\,$ for all $\,a,i\,$.

\vsk.3>
Denote
\vvn-.1>
\be
\Zc_{\bla,\:m}\,=\,\{\>\lb\<\in\Z^{\la\+1}_{\ge 0}\,\:|\ \,
0\le l^{\:(i)}_{\:a}\!\<-l^{\:(i+1)}_{\:a}\?\le m\,,\ \;i=1\lc N-1\,,
\ \;a=1\lc\la^{(i)}\:\}\,,\kern-2em
\vv.3>
\ee
By formulae \eqref{Jc}\:, \eqref{A}\:, \eqref{Fzm}\:,
\beq
\label{JIJm}
\Jc_{\IJ,\,\lb}(\zz\:;-\:m)\,=\,\sum_{K\in\Il}\>
\frac{F_\lb(\zz_{\si_K}\<;m)\,W_I(\Si_K\?-\lb;\zz\:;-\:m)\,
\WW_J(\Si_K;\zz\:;-\:m)}
{R_\bla(\zz_{\si_K}\<)\,Q_\bla(\zz_{\si_K};-\:m)\,
c_\bla(\Si_K\?-\lb\:;-\:m)\,c_\bla(\Si_K;-\:m)}
\vv.2>
\eeq
if $\,\:\lb\<\in\?\Zc_{\bla,\:m}\,$, and $\,\Jc_{\IJ,\,\lb}(\zz\:;-\:m)=0\,$
if $\,\:\lb\<\not\in\<\Zc_{\bla,\:m}\,$. Therefore,
\vvn.3>
\beq
\label{PIJm}
P_{\<\IJ}(\zz\:;m;\qq)\,=\,
\sum_{\lb\in\Zc_{\bla,m}\!\!}\Jc_{\IJ,\,\lb}(\zz\:;-\:m)\,
\prod_{i=1}^{N-1}\,(q_{i+1}/q_i)^{\>\sum_{a=1}^{\la^{(i)}}l_a^{(i)}},\kern-2em
\vv.1>
\eeq
so that $\,P(\zz\:;m;\qq)\,$ is a polynomial in $\,\:q_2/q_1\lc q_n/q_{n-1}$
\vv.06>
of degree at most $\,m\>(n-i)\>\la^{(i)}\:$ in the variable $\,q_{i+1}/q_i\,$,
$\,i=1\lc N\>$. Moreover, by formulae \eqref{Fzm}\:, \eqref{hWI}\:,
\eqref{cla}\:, \eqref{RQ}\:, all the coefficients
$\,\Jc_{\IJ,\,\lb}(\zz\:;-\:m)\,$ are rational functions of $\,\zz\,$
over integers. The regularity properties of $\,P(\zz\:;m;\qq)\,$ follow from
Theorem \ref{Bthm}, item (\:ii\:)\:.

\vsk.2>
The columns of $\,\:\Psh(\zz\:;\<-\:m\ka\:;\qq\:;\ka)\,$ form a basis of
\vv.07>
the space of $\;\Cnnl\<$-valued solutions of dynamical differential equations
\eqref{DEQ} by Theorem \ref{Bthm}, see formula \eqref{detPsh}\:.
\end{proof}

Theorem \ref{thm E1} allows us to describe the monodromy of the system of
dynamical differential equations \eqref{DEQ} if $\,h/\ka\,$ is a nonpositive
integer.

\begin{cor}
\label{cor monodr}
Let $\,\:h\<\in\<\ka\>\Z_{\le 0}\,$. Then the monodromy of the system
of dynamical differ\-en\-tial equations \eqref{DEQ} is abelian.
\vv.07>
The monodromy representation is generated by the matrices
$\;e^{\>2\:\pii\,\smash{X_1(\zz\</\<\ka\:;\:h\</\<\ka\:;\>\0)}}\:\lc
\>e^{\>2\:\pii\,\smash{X_N(\zz\</\<\ka\:;\:h\</\<\ka\:;\>\0)}}\:$.
For any $\,\:\zz$ such that $\,z_a\<-z_b\not\in\<\ka\>\Z\>$
for all $\,\:1\le a<b\le n\,$, the monodromy representation is
the direct sum of one-dimensional representations generated by the solutions
$\,\,\Psi_{\<I}(\zz\:;\<-\:m\ka\:;\qq\:;\ka)\,$, $\,\:I\<\in\Il\,$,
defined by \eqref{mcF}\:.
\end{cor}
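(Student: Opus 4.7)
The plan is to read off the monodromy directly from the explicit Levelt fundamental solution furnished by Theorem \ref{thm E1}. For $h = -m\ka$ with $m \in \Z_{\ge 0}$, formula \eqref{Psipol} writes $\Psh(\zz; -m\ka; \qq; \ka)$ as a product of three factors: the polynomial $P(\zz/\ka; m; \qq)$ in the ratios $q_{i+1}/q_i$, the rational factor $\prod_{i<j}(1-q_j/q_i)^{-m\la_i}$ with nonpositive integer exponents, and the multivalued prefactor $\prod_{i=1}^N q_i^{X_i(\zz/\ka; -m; \0)}$. The first two factors are meromorphic on $(\C^\times)^N$, hence single-valued, and contribute no monodromy whatsoever. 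All monodromy is carried by the last factor.

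First I would compute the generators of the monodromy representation. Transporting $q_j \to e^{2\pi i} q_j$ along a small loop with the other coordinates fixed multiplies $\prod_i q_i^{X_i(\zz/\ka; -m; \0)}$ on the right by $e^{2\pi i X_j(\zz/\ka; -m; \0)}$, identifying these exponentials as the monodromy matrices around the coordinate hyperplanes $q_j = 0$. Since all singularities of $\Psh$ away from $\{q_i = 0\}$ are meromorphic poles rather than branch points, the monodromy representation factors through $\pi_1\bigl((\C^\times)^N\bigr) = \Z^N$, which is abelian. This automatically forces the matrices $e^{2\pi i X_j(\zz/\ka; -m; \0)}$ to commute, a nontrivial conclusion since the operators $X_j(\zz/\ka; -m; \0)$ themselves typically do not commute.

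For the direct-sum statement under the hypothesis $z_a - z_b \notin \ka \Z$ for $a \ne b$, I would turn to the hypergeometric solutions $\Psi_I(\zz; -m\ka; \qq; \ka)$ of Theorem \ref{thm cy}. By formula \eqref{PsiPsd}, each is of the form $\Psd_I(\zz;-m\ka;\qq;\ka) \cdot \prod_i (e^{\pi i(n-\la_i)} q_i)^{\sum_{a \in I_i} z_a/\ka}$ times factors independent of $\qq$, with $\Psd_I$ holomorphic near $\qq = \0$. Combined with Theorem \ref{thm E1}, which shows every solution extends meromorphically on $(\C^\times)^N$ after extraction of its explicit multivalued prefactor, this implies that $\Psd_I$ is single-valued on $(\C^\times)^N$. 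Hence $\Psi_I$ is a joint eigenvector of all monodromy operators, with eigenvalue $e^{2\pi i \sum_{a \in I_j} z_a/\ka}$ for the generator around $q_j = 0$, and spans a one-dimensional invariant subspace. The determinant formula of Theorem \ref{detY}, whose Gamma-function factors remain finite and nonzero under the genericity assumption on $\zz$, ensures that $\{\Psi_I : I \in \Il\}$ is a basis of the solution space, yielding the claimed direct sum decomposition.

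The most delicate step will be the single-valuedness of $\Psd_I$ on all of $(\C^\times)^N$, since Theorem \ref{thm cy} provides it only in a neighborhood of $\qq = \0$. This follows from writing $\Psi_I = \Psh \cdot \Psz_I$ with $\Psz_I$ a joint eigenvector of the $X_j(\zz/\ka;-m;\0)$, as guaranteed by Proposition \ref{egvX0}, and invoking the meromorphic structure of $\Psh$ given by Theorem \ref{thm E1}; the entire multivalued prefactor of $\Psi_I$ then comes from the eigenvalue of $\Psz_I$ under $\prod_i q_i^{X_i(\zz/\ka;-m;\0)}$. Once this reduction is in place, the direct-sum decomposition is a standard consequence of the principle that a basis of joint eigenvectors of commuting monodromy operators diagonalizes the representation into one-dimensional pieces.
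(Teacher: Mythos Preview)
Your proposal is correct and follows the same route as the paper: the corollary is stated without proof, but the remark immediately following it in the paper explains that for integer $h/\ka$ the factors $(1-q_j/q_i)^{\la_ih/\ka}$ in $\Om_\bla$ are rational and the only nontrivial monodromy of $\Psi_I$ comes from the powers $q_i^{\sum_{a\in I_i}z_a/\ka}$ in \eqref{PsiPsd}, which is exactly the mechanism you spell out via Theorem~\ref{thm E1}.

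One small correction: your parenthetical that the operators $X_j(\zz/\ka;-m;\0)$ ``typically do not commute'' is mistaken. They do commute, as follows from the flatness of the dynamical connection: evaluating $[\nabla_{\qq,\ka,i},\nabla_{\qq,\ka,j}]=0$ at $\qq=\0$ (using that the $X_i$ are holomorphic in the ratios $q_{k+1}/q_k$) gives $[X_i(\zz;h;\0),X_j(\zz;h;\0)]=0$ directly. Indeed, the very notation $\prod_i q_i^{X_i(\zz;h;\0)}$ in \eqref{Psipnd} presumes this commutativity. This does not affect your argument, since you deduce the commutativity of the exponentials from the abelianness of $\pi_1((\C^\times)^N)$ anyway, but the stronger statement is available. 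Also, Theorem~\ref{thmdet} would be the more direct reference for the basis property of the $\Psi_I$ (Theorem~\ref{detY} concerns the $\Psi_{Y_I}$), though either works under your genericity hypothesis.
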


Notice that for integer values of $\,h/\ka\,$, the factors
$\,(1-q_j/q_i)^{\la_ih/\<\ka}$ in formula \eqref{Oml} are rational functions
\vv.04>
and the only nontrivial monodromy of the solutions
$\,\:\Psi_{\<J}(\zz\:;\<-\:m\ka\:;\qq\:;\ka)\,$ comes from the factors
$\,q_{i\vp{|^1}}^{\:\smash{\sum_{a\in J_i}\!\<z_a\</\<\ka}}\:$
in formula \eqref{PsiPsd}\:.

\vsk.2>
The results of this appendix were motivated by the corresponding statements in
\cite{OP} on the quantum differential equation of the Hilbert scheme of points
in the plane.

\end{document}